\title{A Dichotomy for the Generalized Model Counting Problem for Unions of Conjunctive Queries}
\author{Batya Kenig}
\affiliation{
	\institution{University of Washington}
}
\author{Dan Suciu}
\affiliation{
	\institution{University of Washington}
}
\keywords{Model counting, Tuple-Independent Databases, \#P-hardness} 
\definecolor{mygreen}{rgb}{0,0.6,0}
\definecolor{mygray}{rgb}{0.5,0.5,0.5}
\definecolor{mymauve}{rgb}{0.58,0,0.82}
\ttfamily\color{mygreen}\bfseries,
\newcommand{\pr}{{\tt \mathrm{Pr}}}
\newcommand{\ignore}[1]{}
\newcommand{\calR}{\mathcal R}
\newcommand{\Z}{\mathbb Z} 
\newcommand{\N}{\mathbb N} 
\newcommand{\Q}{\mathbb Q} 
\newcommand{\R}{\mathbb R} 
\newcommand{\functionname}[1]{\text{\sf #1}}
\newcommand{\Dom}{\functionname{Dom}}
\newcommand{\Tup}{\functionname{Tup}}
\newcommand{\batya}[1]{{\texttt{\color{blue} Batya: [{#1}]}}}
\newcommand{\blue}[1]{{\color{black}{#1}}}
\newcommand{\dan}[1]{{\texttt{\color{red} Dan: [{#1}]}}}
\newcommand{\true}{{\tt true}}
\newcommand{\false}{{\tt false}}
\newcommand*{\rom}[1]{\expandafter\@slowromancap\romannumeral #1@}
\newcommand{\RNum}[1]{\uppercase\expandafter{\romannumeral #1\relax}}
\newcommand{\mb}[1]{{\mathbf{#1}}}
\newcommand{\sel}[1]{{\sigma}}
\newcommand{\cut}[1]{}
\newcommand{\eat}[1]{}
\newcommand{\defeq}{\stackrel{\text{def}}{=}}
\newcommand{\setof}[2]{\{{#1}\mid{#2}\}}        
\def\set#1{\mathord{\{#1\}}}
\def\eqdef{\mathrel{\stackrel{\textsf{\tiny def}}{=}}}
\def\J{\mathcal{J}}
\def\e#1{\emph{#1}}
\def\implies{\Rightarrow}
\newenvironment{repeatresult}[2]
{\vskip0.5em\par\textsc{#1} #2.\em}
{\vskip1em}
\def\appendix{\par
	\section*{APPENDIX}
	\setcounter{section}{0}
	\setcounter{subsection}{0}
	\def\thesection{\Alph{section}} }
\def\eqdef{\mathrel{\stackrel{\textsf{\tiny def}}{=}}}
\def\J{\mathcal{J}}
\def\B{\mathcal{B}}
\def\e#1{\emph{#1}}
\def\b{\boldsymbol{b}}
\def\c{\boldsymbol{c}}
\newcommand{\algname}[1]{{\sf #1}}
\def\myrulewidth{3.25in}
\def\therule{\rule{\myrulewidth}{0.2pt}}
\def\myrulewidthwide{4in}
\def\therulewide{\rule{\myrulewidthwide}{0.2pt}}
\newenvironment{insidecode}[3]
{
	\begin{tabular}{p{\myrulewidth}}
		\multicolumn{1}{c}{\rule{0mm}{3mm}{\bf #3} $\algname{#1}(\mbox{#2})$\vspace{-0.6em}}\\
		\therule\vskip-0.8em\therule
		\vspace{-1em}
		\begin{algorithmic}[1]}
		{\end{algorithmic}
		\vskip-0.4em\therule
\end{tabular}}
\newenvironment{insidecodewide}[3]
{
	\begin{tabular}{p{\myrulewidthwide}}
		\multicolumn{1}{c}{\rule{0mm}{3mm}{\bf #3} $\algname{#1}(\mbox{#2})$\vspace{-0.6em}}\\
		\therulewide\vskip-0.8em\therulewide
		\vspace{-1em}
		\begin{algorithmic}[1]}
		{\end{algorithmic}
		\vskip-0.3em\therulewide
\end{tabular}}
\newcommand{\vars}{\mathbf{Vars}}
\def\M{\mathcal{M}}
\newcommand{\edges}{\texttt{edges}}
\newcommand{\nodes}{\texttt{nodes}}
\newcommand{\gmc}{\texttt{GFOMC}}
\newcommand{\fomc}{\texttt{FOMC}}
\newcommand{\pqe}{\texttt{PQE}}
\newcommand{\symb}{\texttt{Symb}}
\newcommand{\pdb}{{\mb{\Delta}}}  
\newcommand{\bk}{\boldsymbol{k}}
\newcommand{\bu}{\boldsymbol{u}}
\newcommand{\p}{p}
\newcommand{\var}{{\tt \mathrm{var}}}
\newcommand{\veck}{\#\boldsymbol{k}}
\newcommand{\typea}{\texttt{Type-I}}
\newcommand{\fTypebb}{\texttt{f-Type-II-II}}
\newcommand{\lin}[2]{\mbox{$\Phi_{#1}(#2)$}}
\newcommand{\minus}{\scalebox{0.75}[1.0]{$-$}}
\newcommand{\asn}{{:\mathrel{\minus}}}
\newcommand{\bp}{\textbf{p}}
\newcommand{\zg}[1]{\texttt{zg}(#1)}
\newcommand{\ccp}{\texttt{CCP}}
\newcommand{\diag}{\textbf{diag}}
\newcommand{\minor}{{\tt \mathrm{\bf min}}}
\newtheorem{claim}{Claim}
\begin{abstract}
  We study the {\em generalized model counting} problem, defined as
  follows: given a database, and a set of deterministic tuples, count
  the number of subsets of the database that include all deterministic
  tuples and satisfy the query. This problem is computationally
  equivalent to the evaluation of the query over a tuple-independent
  probabilistic database where all tuples have probabilities in
  $\set{0,\frac{1}{2},1}$. Previous work has established a dichotomy
  for Unions of Conjunctive Queries (UCQ) when the probabilities are
  arbitrary rational numbers, showing that, for each query, its
  complexity is either in polynomial time or \#P-hard.  The query is
  called {\em safe} in the first case, and {\em unsafe} in the second
  case.  Here, we strengthen the hardness proof, by proving that an
  unsafe UCQ query remains \#P-hard even if the probabilities are
  restricted to $\set{0,\frac{1}{2},1}$.  This requires a complete
  redesign of the hardness proof, using new techniques.  A related
  problem is the {\em model counting problem}, which asks for the
  probability of the query when the input probabilities are restricted
  to $\set{0,\frac{1}{2}}$.  While our result does not extend to model
  counting for all unsafe UCQs, we prove that model counting is
  \#P-hard for a class of unsafe queries called Type-I forbidden
  queries.
  
 %
%
%
\end{abstract}
\begin{document}
\fancyhead{}
	
\maketitle

\section{Introduction}\label{sec:intro}
Fix a First Order (FO) sentence $Q$.  The {\em generalized model
  counting problem} for $Q$ is the following: given a database $DB$,
and a subset of tuples $D_1\subseteq DB$, count the number of models
of $Q$ that are subsets of $DB$ and include all tuples in $D_1$.  In
the {\em model counting problem}, we set $D_1 = \emptyset$, and the
problem is to count the number of models of $Q$ that are subsets of
$DB$.


An equivalent formulation to the generalized model counting problem is
to state it as a special case of the query evaluation problem on
Tuple-Independent probabilistic Database
(TID)~\cite{DBLP:series/synthesis/2011Suciu}.  In that setting, each
tuple in the domain has an associated probability, and the problem is
to compute the probability that a query $Q$ is true over a possible
world obtained by including randomly and independently each tuple in
the domain.  It is not hard to see that the generalized model counting
problem corresponds to restricting the probabilities to
$\set{0,1/2,1}$, while the model counting problem corresponds to the
restricting them to $\set{0,1/2}$.

In this paper we will restrict the query $Q$ to be a Union of
Conjunctive Queries or, equivalently, to a dual of a UCQ (see below).
The following dichotomy is known~\cite{DBLP:journals/jacm/DalviS12}:
either $Q$ is computable in PTIME over any TID, or the query
evaluation problem is provably \#P-hard over arbitrary TIDs.  In the
first case the query is called {\em safe}, in the second case it is
called {\em unsafe}.  Moreover, one can decide by static analysis over
the expression of the query $Q$ whether it is safe or unsafe.  In this
paper we strengthen that result by proving a dichotomy theorem for the
generalized model counting problem: for any UCQ $Q$, if $Q$ is safe
then the generalized model counting problem is in PTIME, and if $Q$ is
unsafe then the problem is \#P-hard.  The syntactic classification
into safe/unsafe queries remains the same as for arbitrary TIDs.

One side of this result is trivial.  If the query $Q$ is safe, then it
can be evaluated in PTIME over any TID, even if all probabilities are
in $\set{0,1/2,1}$.  This paper is about the other side of the proof:
if $Q$ is unsafe, then we show that the query evaluation problem is
\#P-hard even if the TID is restricted to have probabilities in
$\set{0,1/2,1}$.  As we explain below, some parts of the hardness
proof in~\cite{DBLP:journals/jacm/DalviS12} (namely Sections 6 and 7)
continue to hold even when the probabilities are restricted to
$\set{0,1/2,1}$, but the most difficult part (Section 8) requires an
entirely new proof.  The reason is that the previous
proof~\cite{DBLP:journals/jacm/DalviS12} relies on multiple distinct
probabilities in $(0,1)$, a number that depends on the size of the
database.  In our paper we develop entirely new proof techniques for
this most critical piece of the hardness proof.  Before we present the
technical material, we comment on several aspects of our
contributions.

\subsection{Significance}

If $Q$ is an unsafe query, then the evaluation problem over arbitrary
TIDs is \#P-hard.  But this problem may become tractable if one
restricts the input TID.  For example, Amarilli et
al.~\cite{DBLP:conf/icalp/AmarilliBS15} prove that {\em any} query $Q$
can be evaluated in PTIME if the input TID has bounded tree-width, and
Van den Broeck et
al.~\cite{DBLP:conf/kr/BroeckMD14,DBLP:conf/pods/BeameBGS15} prove
that any query in $FO^2$ can be evaluated in PTIME if the TID is {\em
  symmetric}.  This leads to a natural question: could the query
evaluation problem become easier if we restrict what probabilities can
appear in the TID?  Our result answers this negatively: query
evaluation remains hard even if the probabilities are restricted to
$\set{0,1/2,1}$.  In fact, it remains hard even if the probabilities
are restricted to $\set{0,c,1}$, where $c \in (0,1)$ is any fixed
constant. The only property needed in our proof is the following
simple fact:

\begin{lemma} \label{lemma:three:values} Let $f(x_1, \ldots, x_n)$ be
  a multivariate polynomial, not identically 0, where each variable
  has degree $\leq 2$.  Let $c_1, c_2, c_3 \in \R$ be three distinct
  constants.  Then there exists an assignment $\theta$ of the
  variables $x_1, \ldots, x_n$ with values in $\set{c_1,c_2,c_3}$,
  such that $f[\theta]\neq 0$.
\end{lemma}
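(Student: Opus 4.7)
The plan is to prove this by induction on the number of variables $n$, using the fact that a univariate polynomial of degree $\leq 2$ has at most two roots, so it must be nonzero at some point of any three-element set.

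For the base case $n=1$, the polynomial $f(x_1)$ is a nonzero univariate polynomial of degree $\leq 2$ and therefore has at most two real roots. Since $\set{c_1, c_2, c_3}$ has three distinct elements, at least one $c_i$ satisfies $f(c_i)\neq 0$.

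For the inductive step, I would write the polynomial as
\begin{equation*}
  f(x_1, \ldots, x_n) = g_0(x_2, \ldots, x_n) + x_1\, g_1(x_2, \ldots, x_n) + x_1^2\, g_2(x_2, \ldots, x_n),
\end{equation*}
where each $g_j$ is a polynomial in $x_2, \ldots, x_n$ in which every variable still has degree $\leq 2$. Since $f$ is not identically zero, at least one of $g_0, g_1, g_2$ is not identically zero. Pick such a $g_j$; by the induction hypothesis, there is an assignment $\theta'$ of $x_2, \ldots, x_n$ to values in $\set{c_1, c_2, c_3}$ with $g_j[\theta'] \neq 0$. Substituting $\theta'$ into $f$ yields a nonzero univariate polynomial $f(x_1, \theta')$ of degree $\leq 2$ in $x_1$, and by the base case there is a value $c_i \in \set{c_1, c_2, c_3}$ with $f(c_i, \theta') \neq 0$. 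Extending $\theta'$ by $x_1 \mapsto c_i$ gives the desired assignment.

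There is no real obstacle: the only subtle point is making sure that after restricting to a non-identically-zero coefficient polynomial $g_j$, the resulting univariate polynomial in $x_1$ is itself nonzero, which is immediate since the coefficient of $x_1^j$ is exactly $g_j[\theta'] \neq 0$. This is essentially a degree-$2$ per variable version of the Schwartz--Zippel argument, but phrased existentially rather than probabilistically, which is all that is needed in the sequel.
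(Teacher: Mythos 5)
Your proof is correct and follows essentially the same route as the paper: induction on the number of variables, decomposing $f$ as a degree-$\leq 2$ polynomial in one variable whose coefficients are polynomials in the remaining variables, and using the fact that a nonzero univariate polynomial of degree $\leq 2$ cannot vanish on all three of $c_1,c_2,c_3$. The only cosmetic difference is that you isolate $x_1$ where the paper isolates $x_n$.
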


The proof is by induction on $n$: if $f = g x_n^2 + h x_n + k$, where
$g,h,k$ are multi-variate polynomials in the other variables, at least
one not identically 0, then there exists an assignment $\theta$
s.t. at least one of $g[\theta],h[\theta],k[\theta]$ is $\neq 0$;
since a degree 2 polynomial in $x_n$ has at most 2 roots, there exists
a ``non-root'' $c_i \in \set{c_1, c_2, c_3}$, thus
$f[\theta; x_n:=c_i]\neq 0$.  While our hardness proof is complex, the
only probabilities that we need to set in the proof are either 0 or 1
or non-roots of a degree 2 multivariate polynomial, where we use
Lemma~\ref{lemma:three:values}.  Thus, if $Q$ is unsafe, then it
remains \#P-hard even if all probabilities are in $\set{0,c,1}$ for
some fixed $c \in (0,1)$.  This ruins any hope of improving query
evaluation by restricting the probability values.

\subsection{Generalized- v.s. Standard Model Counting}

Our result states that, if a query is unsafe, then the evaluation
problem over TID's with probabilities in $\set{0,1/2,1}$ is \#P-hard.
But what if the probabilities were restricted to $\set{0,1/2}$?  This
corresponds to the model counting problem, and is a natural setting in
probabilistic databases, where tuples not present in the database have
probability 0, while those in the database could be associated with
probability $1/2$.  Amarilli and
Kimelfeld~\cite{DBLP:journals/corr/abs-1908-07093} study precisely
this problem and prove a dichotomy for the model counting problem for
conjunctive queries without self-joins.  We also prove \#P-hardness
for the model counting problem, but only for a restricted class of
queries called {\em final, type I} queries.  This complements the
result proven by Amarilli and Kimelfeld.  We leave open the question
whether any unsafe UCQ is hard for model counting.  Thus, with the exception
of forbidden type I queries, in this paper we study the {\em
  generalized} model counting problem for a query $Q$, which we denote
by $\gmc(Q)$.  We argue next that this is, in fact, a more natural
problem than the model counting problem.

\subsection{Dual Queries}

The {\em dual} of a First Order query $Q$ is obtained by switching
$\exists$ and $\forall$, and switching $\vee$ and $\wedge$.  Many
problems over Boolean formulas are closed under duals, for example the
satisfiability for a class of Boolean formulas is in PTIME iff the
validity for the class of duals is also in PTIME; similarly, model
counting has the same complexity for a class of Boolean formulas and
for its dual.  The same property holds for $\gmc$.  The query
evaluation problem for $Q$ on a TID is essentially the same as the
evaluation problem of its dual $Q'$ on the TID where each probability
$p$ is replaced by $p' \defeq 1-p$, because $\Pr(Q) = 1-\Pr'(Q')$.
Thus, $\gmc(Q)$ and $\gmc(Q')$ have the same complexity.  However,
duality does not preserve the complexity for model counting: if $Q$ is
hard on TIDs with probabilities $\set{0,1/2}$, it doesn't follow that
is dual $Q'$ is also hard on TIDs with probabilities in $\set{0,1/2}$.
For that reason, we find the {\em generalized}
model counting problem a more robust notion than the model counting
problem, and will focus on it in this paper.

Throughout the paper we will discuss duals of UCQs instead of UCQs.
We denote the class of sentences that are duals of UCQs by {\em
  $\forall$CNF}.  Since {\em model counting} for UCQs correspond to
restricting probabilities to $\set{0,1/2}$, we define {\em model
  counting} for $\forall$CNF to mean restricting the probabilities to
$\set{1/2,1}$. 


\subsection{Final Queries}

The hardness proofs in~\cite{DBLP:journals/jacm/DalviS12} follows a
simple high level structure.  First, show that if $Q$ is unsafe, then
it can be rewritten to a simpler query $Q'$, which is still unsafe, such the
computation problem for $\Pr(Q')$ can be reduced in polynomial time to
that for $\Pr(Q)$; in particular, hardness of $Q'$ implies hardness
of $Q$.  Second, provide a direct \#P-hardness proof for any unsafe
query $Q'$ in the simpler class.  Usually, the polynomial time
reduction consists of adding to the database tuples with probabilities
0 or 1, never different probability values (see~\cite[Definition
4.13]{DBLP:journals/jacm/DalviS12}).  For example, to prove that
$\forall x \forall y (R(x) \vee S(x,y) \vee T(y) \vee A(x))\wedge
\forall yB(y)$ is hard it suffices to set all probabilities in $A$ to
0 and all probabilities in $B$ to 1, and obtain the query
$\forall x \forall y(R(x) \vee S(x,y) \vee T(y))$ then prove that the
latter is hard.  A {\em forbidden query} is an unsafe query where no
more rewriting to a simpler unsafe query is possible.  

The first step of the proof in~\cite{DBLP:journals/jacm/DalviS12} also
applies to $\gmc$.  Thus, in order to prove that $\gmc(Q)$ is hard for
any unsafe query $Q$, it suffices to prove it for ``forbidden''
queries $Q$.  The definition of forbidden queries
in~\cite{DBLP:journals/jacm/DalviS12} is quite technical.  To prove hardness for Type I queries,
we do not need forbidden queries, but instead prove hardness for a
slightly larger, and easier to describe class of queries $Q$, called
{\em final} queries; we will return to forbidden queries in the \blue{full version of this} paper to prove hardness for Type II queries~\cite{DBLP:journals/corr/abs-2008-00896}.
\eat{(Sec.~\ref{sec:background:factorization}).}  All queries considered in
this paper are {\em bipartite queries}, meaning that they have only
two variables, $x,y$, and three kinds of atoms: two unary atoms
$R(x), T(y)$, and an arbitrary number of binary atoms $S_j(x,y)$,
$j=1,2,\ldots$.  A type I query contains both atoms $R(x)$ and $T(y)$,
and is {\em unsafe} iff these two atoms are connected by the clauses
of $Q$.  If setting any atom to $0$ or to $1$ makes the query safe,
then we call it {\em final}; if the query is not final, then we can
simply set that atom to 0 or 1 and prove hardness for the simpler
query $Q'$.  Our main result in this paper consists of proving that,
for every final query, $\gmc(Q)$ is \#P-hard.



\subsection{Reducing \#P2CNF to \gmc}

A positive 2CNF formula is
$\Phi = \bigwedge_{(i,j) \in E}(X_i \vee X_j)$.  The model counting
problem, denoted \#P2CNF is \#P-hard.  Provan and
Ball~\cite{DBLP:journals/siamcomp/ProvanB83} proved that it remains
\#P-hard even when the graph $E$ is bipartite, in which case the
problem is denoted \#PP2CNF.  We use reductions from these problems to
prove hardness for $\gmc(Q)$.

Our reduction is a polynomial-time reduction, also called {\em
  Cook-reduction} (as opposed to the many-one polynomial time
reduction, or {\em Karp-reduction}).  Specifically, we construct a
sequence of databases $DB_1, DB_2, \ldots$ with probabilities in
$\set{0,c,1}$ (i.e., where $c\in (0,1)$ is fixed), and show how to compute $\#\Phi$ from
$\Pr_1(Q), \Pr_2(Q), \ldots$ To compute $\#\Phi$, we need to solve a
linear system of equations, and the crux of the hardness proof consists
of showing that the matrix of this system is non-singular.  We call
this matrix the {\em big matrix}, since its size is polynomial in the
size of $\Phi$.


This is the place where our proof diverges from that
in~\cite{DBLP:journals/jacm/DalviS12}.  There, the databases
$DB_1, DB_2, \ldots$ were isomorphic, but used different
probabilities, hence the need to use very many distinct probability
values.  In contrast, we construct databases whose probabilities are
only among $\set{0,c,1}$, and, instead, we vary the number of
tuples.  Each database $DB_i$ consists of {\em disjoint paths} (formal
definition in Sec.~\ref{sec:BlockTID}) of lengths that depends on $i$.
Instead of varying the probability values in $(0,1)$ as
in~\cite{DBLP:journals/jacm/DalviS12}, here we fix the probability
values in $\set{0,c,1}$ and vary the length of the paths.

\subsection{Connecting Logic and Algebra}

We show that the non-singularity of the big matrix follows by proving
that a certain ``small matrix'', $A$, is non-singular.  The small
matrix is a $2\times 2$ matrix that describes the probabilities of a
single link in the path.  Intuitively, when the small matrix is
non-singular, then by increasing the path length $i$ in $DB_i$ we gain
more information about $\#\Phi$ from probability $\Pr(Q)$ on $DB_i$.
The small matrix depends only on the query $Q$, more precisely, on the arithmetization of its lineage $Y$ on one link of the path.  The
{\em arithmetization} of a Boolean formula $Y$ is a multilinear
polynomial $y$ that agrees with $Y$ at all points in $\set{0,1}^n$.
For example, if
$Q = \forall x \forall y (R(x) \vee S(x,y))\wedge (S(x,y) \vee
T(y))$, then the lineage is $Y = (R \vee S) \wedge (S \vee T)$, and
its arithmetization is $y(r,s,t) = rt + s - rst$. Equivalently, the
arithmetization is the formula for the probability of $Y$; if
$\pr(R)=\pr(S)=\pr(T)=1/2$, then $\pr(Y)=y(1/2,1/2,1/2)=5/8$.  The
small matrix $A$ is $\left(
\begin{array}{cc}
 y[r=0,t=0] & y[r=0,t=1] \\ y[r=1,t=0] & y[r=1,t=1]
\end{array}
\right)=
\left(
\begin{array}{cc}
s & s \\ s & 1
\end{array}
\right)
$.  At its core, our hardness proof relies on the following
connection between logic and algebra: the small matrix is non-singular
iff the bipartite query $Q$ \e{connects} the atoms $R$ and $S$.  We state
here the formal lemma that captures this connection.

\begin{lemma} \label{lemma:determinant:connected} Let $y$ be the
  arithmetization of a Boolean formula $Y$, and let $R, T$ be two
  Boolean variables.  Denote the following matrix of polynomials:
  \begin{align}
  \label{eq:smallMatrix}
    \mb y \defeq &
    \begin{bmatrix}
    y_{00} & y_{01}\\ y_{10} & y_{11}
    \end{bmatrix}
  \end{align}
  where $y_{00} \defeq y[r:=0; t := 0]$ is obtained by setting $r,t$
  to 0, and similarly for $y_{01}, y_{10}, y_{11}$.  Then the
  following are equivalent: (1) $Y$ disconnects $R,T$, i.e.
  $Y \equiv F \wedge G$ such that
  $R \in \vars(F), T \in \vars(G), \vars(F)\cap \vars(G) =\emptyset$.
  (2) $\det(\mb y) \equiv 0$.
\end{lemma}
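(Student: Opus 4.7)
My plan is to prove the two directions of the equivalence separately. The easy direction $(1) \Rightarrow (2)$ is a short computation: assuming $Y \equiv F \wedge G$ with arithmetizations $f, g$ living on disjoint variable sets, where $f$ uses $r$ and the variables $\vec{z}_1$ and $g$ uses $t$ and $\vec{z}_2$, the arithmetization of $Y$ is exactly the product $y = f \cdot g$, and the specializations factor as $y_{ij} = f_i g_j$ where $f_i \defeq f[r \asn i]$ and $g_j \defeq g[t \asn j]$. Substituting into $\det(\mb{y})$ gives $f_0 g_0 \cdot f_1 g_1 - f_0 g_1 \cdot f_1 g_0 = 0$.

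For the nontrivial direction $(2) \Rightarrow (1)$, I would begin by isolating $r$ and $t$. Since $y$ is multilinear, I can uniquely write it as $y = a + br + ct + drt$, where $a, b, c, d$ are multilinear polynomials in the remaining variables $\vec{z}$. Computing the four specializations yields $y_{00} = a$, $y_{10} = a+b$, $y_{01} = a+c$, $y_{11} = a+b+c+d$, so $\det(\mb{y}) = ad - bc$. Hence the assumption $\det(\mb{y}) \equiv 0$ becomes the polynomial identity $ad = bc$ in the UFD $\Q[\vec{z}]$.

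Next, I would extract a rank-one factorization from this identity using unique factorization in $\Q[\vec{z}]$: working with $\gcd(a,b)$ and $\gcd(c,d)$, the identity $ad = bc$ delivers polynomials $a_0, b_0, a_1, b_1 \in \Q[\vec{z}]$ satisfying $a = a_0 a_1$, $b = b_0 a_1$, $c = a_0 b_1$, $d = b_0 b_1$. Plugging these back into the expansion of $y$ produces the factorization $y = (a_0 + b_0 r)(a_1 + b_1 t)$; write $f \defeq a_0 + b_0 r$ and $g \defeq a_1 + b_1 t$, each multilinear.

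The main obstacle is to upgrade this polynomial factorization into a Boolean decomposition with \emph{disjoint} variable sets. The crucial observation is: if two nonzero multilinear polynomials share a variable $z_i$, then the coefficient of $z_i^2$ in their product is a nonzero element of the integral domain $\Q[\vec{z} \setminus \{z_i\}]$, so the product fails to be multilinear. Applied to our $fg = y$, which \emph{is} multilinear, this forces $\vars(f) \cap \vars(g) = \emptyset$ (the degenerate case $f \equiv 0$ or $g \equiv 0$ corresponds to $Y$ being the constant false and disconnects trivially). To close the argument, I would promote $f$ and $g$ to Boolean arithmetizations by rescaling: pick any Boolean assignment $\vec{v}_0$ to the variables of $g$ with $\beta \defeq g(\vec{v}_0) \neq 0$; because $y = fg$ takes values in $\{0,1\}$ on the Boolean cube and $f, g$ live on disjoint variables, varying their inputs independently forces $f \in \{0, 1/\beta\}$ and $g \in \{0, \beta\}$ pointwise on the cube, so $\beta f$ and $g/\beta$ are the arithmetizations of Boolean formulas $\mathbf{F}, \mathbf{G}$ with $Y = \mathbf{F} \wedge \mathbf{G}$ and $\vars(\mathbf{F}) \cap \vars(\mathbf{G}) = \emptyset$, as required.
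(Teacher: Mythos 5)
Your proof is correct and takes essentially the same route as the paper's: the easy direction is the identical computation, and for $(2)\Rightarrow(1)$ both arguments extract a rank-one (outer-product) factorization of a $2\times 2$ matrix of multilinear polynomials from the vanishing determinant via gcd's in the UFD $\Q[\vec z]$, then reassemble it into $y=f\cdot g$. The only differences are that you work in the basis $1,r,t,rt$ (so the hypothesis becomes $ad=bc$, a unimodular change of coordinates from the paper's direct use of $y_{00},\dots,y_{11}$) and that you spell out two steps the paper's inline proof leaves implicit but proves in its appendix, namely the disjointness of $\vars(f)$ and $\vars(g)$ (via the $z_i^2$-coefficient argument) and the rescaling of $f,g$ back to arithmetizations of Boolean formulas.
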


The direction (1) $\Rightarrow$ (2) is immediate, because $y$
factorizes as $y=f \cdot g$ where $r$ occurs only in $f$ and $t$ only
in $g$, thus $y_{ij} = f_i \cdot g_j$ and it follows that
$\det(\mb y) \equiv 0$.  For (2) $\Rightarrow$ (1) assume that
$y_{00} \cdot y_{11} \equiv y_{01} \cdot y_{10}$.  Assume\footnote{If
  $h \defeq \gcd(y_{00}, y_{01}, y_{10}, y_{11})\neq 1$, then we
  define $y'_{ij} \defeq y_{ij}/h$, obtain $y'_{ij} = f'_i \cdot g_j$,
  and define $f_i \defeq f'_ih$.}  w.l.o.g. that
$\gcd(y_{00}, y_{01}, y_{10}, y_{11})=1$, which also implies
$\gcd(y_{00}, y_{01}, y_{10})=1$ because $y_{00}$ is a multi-linear
polynomial\footnote{Assuming $p=\gcd(y_{00}, y_{01}, y_{10})$ then
  $p^2 | y_{00} \cdot y_{11} \equiv y_{01} \cdot y_{10}$, but
  $p \not| y_{11}$, implying $p^2 | y_{00}$, which is impossible since
  $y_{00}$ is multilinear.}.  Then define
$f_i \defeq \gcd(y_{i0},y_{i1})$ and $g_j \defeq\gcd(y_{0j},y_{1j})$, and
we have $y_{ij} = f_i\cdot g_j$ for all $i,j =0,1$, because
$\gcd(f_0, g_0) = \gcd(y_{00},y_{01},y_{10})=1$, and similarly for
$\gcd(f_i,g_j)=1$\footnote{To see this, consider $y_{00}$, and let $p$
  be an irreducible factor of $y_{00}$ (i.e.,
  $y_{00}=y'_{00}p$). Therefore, since
  $y_{00}y_{11}\equiv y_{01}y_{10}$ then $p|y_{01}$ or $p|y_{10}$. If
  it is the former, then $p\in\gcd(y_{00},y_{01})=f_0$, and if the
  latter then $p\in\gcd(y_{00},y_{10})=g_0$. Hence,
  $y_{00}\equiv f_0g_0$.}. The claim follows from
$y= (1-r)(1-t)y_{00} + (1-r)ty_{01} + r(1-t)y_{10}+rty_{11} =
((1-r)f_0+rf_1)((1-t)g_0+tg_1)\defeq f \cdot g$, which implies
condition (1) by defining $F,G$ the Boolean formulas associated to
$f,g$ respectively.

Finally, our proof falls into place by the following argument.  Let
$Q$ be a bipartite query.  If $Q$ connects the atoms $R(x), T(y)$,
then its lineage $Y$ is connected, and thus
$\det(\mb y) \not\equiv 0$.  By Lemma~\ref{lemma:three:values}, there
exist probability values in $\set{0,c,1}$ that ensures that the small
matrix $A$ is non-singular.  Then, the big matrix is non-singular, and
we can compute in polynomial time $\#\Phi$ from the probabilities
$\Pr_1(Q), \Pr_2(Q), \ldots$, completing the reduction.

\subsection{Type II Queries}

When a bipartite query $Q$ contains the atoms $R(x), T(y)$, then we
call it a type I query.  Otherwise, we call it a type II query.  Our
discussion so far has been restricted to type I queries; we prove
their hardness in the main body of the paper.  We discuss type II queries,
and prove their hardness in the full version of this paper~\cite{DBLP:journals/corr/abs-2008-00896}.  The proof for Type II
queries is more complex than for Type I queries.  While the two proofs
share many common techniques, they are sufficiently distinct to
justify a completely separate proof for Type II queries.  In
particular, the proof for type I queries is by reduction from \#P2CNF,
and that for type II queries by reduction from \#PP2CNF.

\subsection{Organization}

In Section~\ref{sec:probStatement} we provide background on final
queries, and formally state the problem and main result.  We prove the
hardness for queries of type I in Section~\ref{sec:BlockTID}, and
defer the proof for type II queries to the full version of this paper~\cite{DBLP:journals/corr/abs-2008-00896}.

\section{Problem Statement}\label{sec:probStatement}
\label{sec:problem:statement}

Fix a finite domain $\Dom$ and let $\Tup(\Dom)$ denote the set of ground
tuples consisting of relation names from some fixed vocabulary, and
constants from $\Dom$.  With some abuse, we write $\Tup$ instead of
$\Tup(\Dom)$ when the domain is clear from the context.

A {\em tuple-indepepndent probabilistic database} is a pair
$\Delta = (\Dom,p)$ where $p : \Tup \rightarrow [0,1]$ associates a
probability to each tuple.  With some abuse, we simply say that
$\Delta$ is a {\em probabilistic database}.  Given a Boolean query
$Q$, its probability is defined as
$\Pr(Q) \defeq \sum_{W \subseteq \Tup: W \models Q} \Pr(W)$, where the
probability of a world $W$ is
$\Pr(W) \defeq \prod_{t \in W} p(t) \times \prod_{t \in
  \Tup-W}(1-p(t))$.  The {\em probabilistic query evaluation problem},
$\pqe(Q)$, is the following: given a probabilistic database
$(\Dom, p)$, where $p$ takes rational values, compute $\Pr(Q)$.  The
following dichotomy theorem was shown
in~\cite{DBLP:journals/jacm/DalviS12}:

\begin{theorem}~\cite{DBLP:journals/jacm/DalviS12} \label{th:old:dichotomy}
  Let $Q$ be any UCQ,
  then one of the following holds: either $\pqe(Q)$ is in PTIME, or
  $\pqe(Q)$ is \#P-hard.  Moreover, there exists a syntactic condition
  on the query $Q$ called {\em safety} such that $\pqe(Q)$ is in PTIME
  when $Q$ is safe, and is \#P-hard when $Q$ is unsafe.
\end{theorem}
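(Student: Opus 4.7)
The plan is to establish the dichotomy in two directions, using the proof architecture foreshadowed throughout the introduction. For the PTIME (safe) side, I would define \emph{safety} inductively via a recursive syntactic decomposition and exhibit a \emph{safe-plan} algorithm whose correctness rests on independence of probabilistic events. Three rewriting rules drive the recursion: (i) \emph{independent join/union}, applicable when $Q$ splits as $Q_1\wedge Q_2$ (or $Q_1\vee Q_2$) with $Q_1,Q_2$ sharing no relational symbol, allowing $\Pr(Q)$ to factor; (ii) \emph{separator variable}, when some existential variable occurs in every atom of every clause, so that $\Pr(Q)=1-\prod_{a\in\Dom}(1-\Pr(Q[x:=a]))$; and (iii) \emph{inclusion--exclusion} over the clauses of $Q$, provided the resulting M\"obius sub-terms are themselves safe. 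Each rule strictly shrinks the query, so the algorithm runs in time polynomial in $|\Dom|$, and the syntactic condition ``safe'' is simply ``one of the three rules eventually applies all the way down.''

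For the \#P-hardness side, I would proceed in two stages. Stage~1: reduce (by Cook reduction) an arbitrary unsafe UCQ $Q$ to an unsafe query $Q'$ in a much simpler normal form---a \emph{forbidden} query---where the reduction freezes selected tuple probabilities to $0$ or $1$, as illustrated in the introduction. This stage is essentially a case analysis: whenever no safety rule applies to $Q$, one identifies the atoms witnessing each obstruction and kills or forces the rest, obtaining a minimal unsafe skeleton that is easier to analyze. Stage~2: for every forbidden $Q'$, give a direct reduction from \#P2CNF (or \#PP2CNF). The reduction constructs, from a Boolean formula $\Phi$, a family of tuple-independent databases $DB_1,DB_2,\ldots$ and recovers $\#\Phi$ from the values $\Pr_1(Q'),\Pr_2(Q'),\ldots$ by inverting a linear system whose right-hand sides are the probabilities and whose unknowns encode the counts of models of $\Phi$ with prescribed statistics.

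The main obstacle, and the technically deepest part of the argument, is proving that this linear system---the \emph{big matrix}---is nonsingular for every forbidden unsafe query. Here I would follow exactly the algebra--logic blueprint sketched in the introduction: factor the big matrix through a $2\times 2$ \emph{link matrix} $\mathbf{y}$ built from the arithmetization of the lineage of $Q'$ on one link of the construction, so that nonsingularity of the big matrix reduces to $\det(\mathbf{y})\not\equiv 0$ as a polynomial. Lemma~\ref{lemma:determinant:connected} then supplies the bridge: $\det(\mathbf{y})\not\equiv 0$ iff the lineage connects the two distinguished atoms, which is exactly what forbidden unsafe queries guarantee. Finally, since the entries of the big matrix are polynomials in the remaining tuple probabilities, Lemma~\ref{lemma:three:values} supplies a concrete rational instantiation making the determinant nonzero, completing the reduction. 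The technical weight lies in (a) the combinatorial bookkeeping of Stage~1 (enumerating the forbidden forms and verifying that each safety failure really yields such a form), and (b) showing rigorously that the big matrix factorizes through the small link matrix, so that the logic-to-algebra bridge of Lemma~\ref{lemma:determinant:connected} suffices.
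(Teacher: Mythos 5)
The paper does not actually prove Theorem~\ref{th:old:dichotomy}: it is imported wholesale from Dalvi and Suciu, so there is no in-paper argument to compare against. Your outline of the PTIME direction (independent join/union, separator variables, inclusion--exclusion, with ``safe'' defined as exhaustive applicability of these rules) matches the cited proof. Your two-stage hardness plan (rewrite an unsafe query down to a forbidden one, then invert a linear system built from oracle calls on a family of databases) is also the cited architecture. The divergence is in the nonsingularity mechanism: Dalvi and Suciu obtain a nonsingular, essentially Vandermonde, system by evaluating $\Pr(Q)$ on \emph{isomorphic} databases instantiated with many distinct probability values in $(0,1)$, whereas the factorization through a $2\times 2$ link matrix, Lemma~\ref{lemma:determinant:connected}, and Lemma~\ref{lemma:three:values} are the innovations of the \emph{present} paper, introduced precisely because the probability-varying trick is unavailable once probabilities are restricted to $\set{0,1/2,1}$. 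Your route is legitimate --- it targets the strictly stronger Theorem~\ref{th:new:dichotomy}, which implies the statement --- but it is the harder road; it buys the restricted-probability dichotomy at the cost of the block/path-length constructions of Section~3. Two caveats before this could be called a proof: the assertion that the big matrix ``factorizes through'' the small link matrix is where most of the technical weight sits (it is Theorem~\ref{th:non-singular} together with the block design of Section~3.3, not a formality), and Stage~1 --- the case analysis reducing every unsafe query to a forbidden one --- is a long combinatorial argument (Sections 6--8 of the cited paper) that your proposal only gestures at.
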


For the purpose of this paper we do not need the general definition of
safety, and will omit it, except for the special case of bipartite
queries defined below.  If $Q$ is not safe, then we call it {\em
  unsafe}.

The {\em generalized model counting problem} for $Q$, denote
$\gmc(Q)$, is the following restriction: given a probabilistic
database $(\Dom,p)$ where $p(t) \in \set{0,1/2,1}$ for every tuple
$t$, compute $\Pr(Q)$.  Obviously, if $\pqe(Q)$ is in PTIME, then so
is $\gmc(Q)$.  We prove that the converse holds too:

\begin{theorem} \label{th:new:dichotomy}
  For any unsafe UCQ $Q$, $\gmc(Q)$ is \#P-hard.  This result
  continues to hold even if the probability $1/2$ is replaced by some
  constant probability $c \in (0,1)$.
\end{theorem}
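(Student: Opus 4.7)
The plan is a two-stage reduction: first, reduce unsafe UCQs to a minimal class of queries using only probabilities in $\set{0,1}$, and then give direct combinatorial \#P-hardness proofs for this minimal class. By duality it suffices to prove \#P-hardness of $\gmc(Q)$ for unsafe $\forall$CNF queries $Q$. Iteratively applying the Dalvi--Suciu simplifications used in the proof of Theorem~\ref{th:old:dichotomy}, an unsafe $\forall$CNF can be rewritten to a simpler unsafe query by instantiating certain atoms to probability $0$ or to $1$. Since $\set{0,1}\subseteq\set{0,c,1}$, this step transfers to $\gmc$ without modification. After saturating the simplifications, we may assume $Q$ is a forbidden bipartite query; for our purposes it suffices to handle the slightly larger class of \emph{final} Type I queries and the class of forbidden Type II queries.

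For a final Type I query $Q$, I would Cook-reduce from \#P2CNF. Given a positive 2CNF $\Phi=\bigwedge_{(i,j)\in E}(X_i\vee X_j)$, I would construct a polynomial-size family of databases $DB_1,DB_2,\ldots$ over a common domain encoding the variables of $\Phi$, all using probabilities in $\set{0,c,1}$. The databases differ only in the length of the \emph{disjoint paths} of $S$-tuples connecting corresponding $R$- and $T$-facts: in $DB_i$ the path length is $i$. Collecting tuples by the endpoint types, $\Pr_i(Q)$ expands as a linear combination $\sum_\tau M_{i,\tau}\cdot N_\tau$, where the unknowns $N_\tau$ depend only on $\Phi$ and together determine $\#\Phi$. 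Hardness thus reduces to proving that the big matrix $M=(M_{i,\tau})$ is non-singular, since then the $N_\tau$ can be recovered by solving a polynomial-size linear system.

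The key step is that each link of each disjoint path contributes independently to the lineage of $Q$, which should force $M$ to tensor-factor through the $2\times 2$ small matrix $A$ obtained from the arithmetized lineage polynomial $y$ by evaluating $r,t\in\set{0,1}$ and instantiating the remaining variables at chosen probabilities in $\set{0,c,1}$. Non-singularity of $M$ then reduces to $\det(A)\neq 0$. Since $Q$ is a final Type I query, it connects the atoms $R(x)$ and $T(y)$, so by Lemma~\ref{lemma:determinant:connected} the polynomial $\det(\mathbf{y})$ is not identically zero as a polynomial in the remaining lineage variables. Because $y$ is multilinear, each such variable appears with degree at most $2$ in $\det(\mathbf{y})$, so by Lemma~\ref{lemma:three:values} there exists an assignment in $\set{0,c,1}$ making $\det(A)\neq 0$, completing the reduction. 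The Type II case, deferred to the full version, follows the same blueprint but reduces from \#PP2CNF via a bipartite encoding and a refined path construction.

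I expect the main obstacle to be proving the tensor factorization of $M$ through $A$: this requires a careful combinatorial analysis of how a length-$i$ path contributes to the satisfaction probability of $Q$, and is precisely where the present approach departs from Dalvi--Suciu, who varied the probabilities (requiring many distinct values depending on the database size) rather than varying path lengths at fixed probabilities in $\set{0,c,1}$.
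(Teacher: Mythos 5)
Your architecture matches the paper's: dualize to $\forall$CNF, use the Dalvi--Suciu $0/1$ simplifications to reach final bipartite queries, then Cook-reduce from \#P2CNF by varying path lengths at fixed probabilities and solving a linear system for signature counts. However, the crux of the argument --- why the big matrix is non-singular --- is exactly the step you flag as an ``obstacle'' and leave unproved, and the reduction you propose for it is not correct as stated. The big matrix does not tensor-factor through the $2\times 2$ small matrix $A$, and its non-singularity does \emph{not} follow from $\det(A)\neq 0$ alone. What is true is that a length-$p$ block has probability matrix $A^{(p)}=[A^{(1)}C]^{p-1}A^{(1)}$, so its entries have the form $a_i\lambda_1^p+b_i\lambda_2^p$ with $\lambda_1,\lambda_2$ the eigenvalues of $A^{(1)}C$; but turning this into non-singularity of the $(m+1)^2\times(m+1)^2$ system (rows indexed by pairs of path lengths, columns by signatures $(k_{00},k_{01,10},k_{11})$) requires (i) the extra conditions $\lambda_1\neq\pm\lambda_2$, $b_i\neq 0$, and $a_ib_j\neq a_jb_i$ for $i\neq j$ --- established via the boundary case $A^{(0)}=\mathbb{I}$ and monotonicity of the lineage --- and (ii) a genuinely separate linear-independence argument (Theorem~\ref{th:non-singular}), proved through Vandermonde/Kronecker products, a Jacobian computation via the Cauchy double alternant, and an iterative minor-expansion construction of the rows. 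None of this is a formal consequence of $\det(A)\neq 0$.

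Two further concrete gaps. First, your unknowns $N_\tau$ must record not only the edge counts $k_{ab}$ but also the counts $q_a$ of variables assigned each truth value (otherwise $\#\Phi$ cannot be recovered and the coefficients do not close into a square system); the paper handles this by adjoining a pendant edge $(u,u')$ to every variable node and taking a Kronecker product with an additional Vandermonde-type factor $\mathcal{N}$, which your construction omits. Second, invoking Lemma~\ref{lemma:three:values} to find \emph{some} assignment in $\set{0,c,1}$ with $\det(A)\neq 0$ is too weak: the eigenvalue decomposition needs all blocks assigned \emph{uniformly} (the constants $a_i,b_i,\lambda_1,\lambda_2$ must be independent of $p$), and the paper instead proves the stronger Theorem~\ref{thm:DetANon0} --- for a \emph{final} query, $f_A=\alpha\prod_i u_i(1-u_i)$ with $\alpha\neq 0$, so the determinant is nonzero at the uniform all-$c$ point. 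This is where finality (as opposed to mere connectedness) is actually used, and it is also what yields the stronger \#P-hardness of plain model counting ($\fomc$) for Type~I queries.
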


The high level structure of the proof is similar to that
in~\cite{DBLP:journals/jacm/DalviS12}.  Starting with an unsafe query
$Q$, first simplify it using simple rewrite rules, as long as the
query is still unsafe, until one reaches an unsafe query where every
further simplification is safe.
\eat{This last query is called a {\em  forbidden} query.  }
Second, prove that each such simplified query $Q$ (called
in~\cite{DBLP:journals/jacm/DalviS12} a {\em forbidden} query),
$\gmc(Q)$ is \#P-hard.  The first part of the proof is identical
to~\cite{DBLP:journals/jacm/DalviS12}; we will only state the main
result in Theorem~\ref{th:proof:step:1} below, and refer the reader
to~\cite{DBLP:journals/jacm/DalviS12} for the proof.  The second part
is novel and will be presented in the rest of the paper.

In this paper we will prove Theorem~\ref{th:new:dichotomy} for the
duals of UCQs, and denote their class $\forall$CNF.  We briefly review
$\forall$CNF here.  A {\em clause} is a disjunction of atoms,
$C \equiv R_1(\mb x_1) \vee R_2(\mb x_2) \vee \cdots$; we note that
this is the dual notion of a conjunctive query.  A {\em homomorphism}
$C\rightarrow C'$ is a function $\vars(C) \rightarrow \vars(C')$ that
maps every atom in $C$ to an atom in $C'$.  We say that $C$ is
minimized if every homomorphism $C \rightarrow C$ is a bijection.  It
is known that every clause is equivalent to a minimized
clause\footnote{This follows from the same property for conjunctive
  queries.}.  A $\forall$CNF formula is a conjunction of clauses,
$Q \equiv C_1 \wedge C_2 \wedge \cdots$; this is the dual of a UCQ.
If there exists a homomorphism $C_i \rightarrow C_j$ then $C_j$ is
redundant, and can be removed from $Q$.  In this paper we always
assume, unless otherwise stated, that all clauses are minimized, and
there is no redundant clause.  All logical variables are universally
quantified, and we will freely switch between prenex normal form of
the entire sentence, or of each clause individually, e.g.
$\forall x \forall y (C_1 \wedge C_2)$ v.s.
$\forall x \forall y C_1 \wedge \forall x \forall y C_2$.

The \e{lineage} of a $\forall$CNF query $Q$ on a tuple independent database $\pdb=(\Dom,p)$ is the propositional formula $\lin{\pdb}{Q}$ computed as usual, by induction\footnote{\begin{align*}
	\Phi_\pdb(\forall x Q){\defeq} & \bigwedge_{u \in  \Dom} \Phi_\pdb(Q[u/x]) && \\
	\Phi_\pdb(Q_1 {\wedge} Q_2){\defeq} & \Phi_\pdb(Q_1) {\wedge} \Phi_\pdb(Q_2) & \Phi_\pdb(Q_1 {\vee} Q_2){\defeq} & \Phi_\pdb(Q_1) {\vee} \Phi_\pdb(Q_2)
	\end{align*}} on the structure of $Q$. 
We assume that each tuple in the domain $\Dom$ is associated with a Boolean
variable, thus $\Phi_\pdb(Q)$ is a Boolean function over the tuples. We remark that $\lin{\pdb}{Q}$ is a Boolean CNF whose size is polynomial in the size of the domain $\Dom$.

We consider a restricted vocabulary consisting of two unary symbols
$R, T$, and one or more binary symbols $S_1, S_2,\ldots$. We call a
probabilistic database ${\pdb} = (\Dom, p)$ {\em bipartite} if $\Dom$ is
the disjoint union $\Dom = U \cup V$, and every tuple $t$ with
probability $p(t) \neq 1$ is either $t = R(u)$ or $t = S_j(u,v)$ or
$t = T(v)$ for $u \in U$ and $v \in V$.  We denote by
$\gmc_{\text{bi}}(Q)$ the $\gmc$ problem where the probabilistic
database is restricted to be bipartite.

We define next a bipartite query.  It has only two variables $x,y$,
its atoms are restricted to be $R(x)$, or $T(y)$, or $S_j(x,y)$ for
some $j=1,2,\ldots$  
Denote by $S_J(x,y) \defeq \bigvee_{j \in J} S_j(x,y)$.

\begin{definition}\label{def:types:of:clauses}
  We define the following types of sentences:
\begin{itemize}
	\item A {\em left clause of Type I} is $\forall x \forall y(R(x) \vee S_J(x,y))$, where
	$J \neq \emptyset$.
	\item A {\em left clause of Type II} is
	$\forall x \left(\bigvee_{\ell=1}^m (\forall y S_{J_\ell}(x,y))\right)$ where
	$\emptyset \neq J_1, \ldots, J_m \subseteq [p]$, and $m > 1$.
	\item A {\em middle clause} is $\forall x \forall y S_J(x,y)$ for $J \neq \emptyset$.
	\item A {\em right clause of Type I} is $\forall y \forall x(S_J(x,y) \vee T(y))$, where
	$J \neq \emptyset$.
	\item A {\em right clause of Type II} is
	$\forall y \left(\bigvee_{\ell=1}^n (\forall x S_{J_\ell}(x,y))\right)$ where
	$\emptyset \neq J_1, \ldots, J_n \subseteq [p]$, and $n > 1$.	
\end{itemize}
For every $A, B \in \set{\text{I}, \text{II}}$, we define a {\em
  bipartite query} of type $A-B$ to be a $\forall$CNF query
$Q \defeq Q_{\text{left}} \wedge Q_{\text{middle}} \wedge
Q_{\text{right}}$ where $Q_{\text{left}}$ is a conjunction of left
clauses of type $A$, $Q_{\text{middle}}$ is a conjunction of middle
clauses, and $Q_{\text{right}}$ is a conjunction of right clauses of
type $B$.
\end{definition}

An example of a left clause of type I is
$\forall x \forall y(R(x) \vee S_1(x,y) \vee S_2(x,y))$.  An example
of a left clause of type II is
$\forall x (\forall y S_1(x,y) \vee \forall y S_2(x,y))$.  For Type II
clauses the term ``clause'' is used with some abuse, since this
sentence is not in prenex normal form; the prenex normal form of our
example is
$\forall x \forall y_1 \forall y_2 (S_1(x,y_1) \vee S_2(x,y_2))$.

Definition 4.14 in~\cite{DBLP:journals/jacm/DalviS12} calls a UCQ
query {\em safe} if it satisfies a certain syntactic condition.  We
review below the safety definition only for the special case of
bipartite queries and, for the intuition behind this definition, we
make two observations.  First, let
$Q = Q_{\text{left}} \wedge Q_{\text{middle}}$ be a bipartite query
without any right clauses.  Then one can compute $\Pr(Q)$ in
polynomial time on a probabilistic database $\Delta=(\Dom,p)$, in
other words $Q$ is {\em safe}.  Indeed,
$Q \equiv \bigwedge_{a \in \Dom} Q[a/x]$, $x$ occurs in each atom of
$Q$, hence the events $Q[a_1/x], Q[a_2/x], \ldots$ are independent,
and therefore $\Pr(Q) = \prod_{a \in \Dom} \Pr(Q[a/x])$.  It is then
easy to check that each $\Pr(Q[a/x])$ can be computed in polynomial
time.\footnote{It has only unary atoms.}  Second, suppose $Q$ is a
bipartite query that can be partitioned into $Q \equiv Q' \wedge Q''$,
where $\symb(Q') \cap \symb(Q'') = \emptyset$, $Q'$ has no right
clauses, and $Q''$ has no left clauses.  Then $Q$ is again safe,
because $\Pr(Q) = \Pr(Q') \cdot \Pr(Q'')$.  This justifies:

\begin{definition}
	\label{def:unsafeQueries}
        A {\em bipartite query} is {\em unsafe} if it contains a left
        clause $C_0$ and a right clause $C_k$ that are connected by a
        path, i.e. there exists a sequence $C_0, C_1, \ldots, C_k$,
        called a {\em left-to-right path}, such that every two
        consecutive clauses share an atom.  The {\em length} of the
        unsafe query is the minimal $k$ for which there exists a
        left-to-right path of length $k$.
\end{definition}


Let $A, B$ be two problems.  A {\em polynomial time many-one
  reduction} from $A$ to $B$, in notation $A \leq^P_m B$, is a
polynomial time computable function $f$ such that, for every instance
$x$ of $A$, $f(x)$ is an instance of $B$ and the answer to problem $A$
on $x$ is the same as the answer to problem $B$ on $f(x)$.  A {\em
  polynomial time reduction} from $A$ to $B$, in notation
$A \leq^P B$, is a polynomial time algorithm for solving instances of
$A$ given an oracle for solving instances of $B$.  The first part of
the proof of Theorem~\ref{th:new:dichotomy} is given by the following
theorem, where
$H_0 \defeq \forall x \forall y (R(x) \vee S(x,y) \vee T(y))$.

\begin{theorem}\label{th:proof:step:1}
  Let $Q$ be a $\forall$CNF query that is {\em unsafe} (according to
  Definition 4.14 in~\cite{DBLP:journals/jacm/DalviS12}).  Then,
  either $\gmc_{\text{bi}}(H_0) \leq^P_m \gmc_{\text{bi}}(Q)$, or
  there exists a bipartite, unsafe query $Q'$ such that
  $\gmc_{\text{bi}}(Q') \leq^P_m \gmc_{\text{bi}}(Q)$.
\end{theorem}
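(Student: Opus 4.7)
The plan is to follow the simplification strategy of Dalvi and Suciu (\cite{DBLP:journals/jacm/DalviS12}, Section~4), while verifying that every rewrite rule used in their reduction can be realized by setting tuple probabilities to values in $\set{0,1}$, both of which lie in the allowed alphabet $\set{0,1/2,1}$ for $\gmc$. The key observation, already highlighted in the introduction of the present paper, is that the Dalvi--Suciu many-one reductions never introduce fresh fractional probabilities: they only add ``forced'' tuples (probability $1$) and ``erased'' tuples (probability $0$). Consequently, each step of their pipeline transfers verbatim to the $\gmc_{\text{bi}}$ setting.

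First I would put $Q$ into canonical form: minimize each clause up to homomorphism, delete clauses subsumed by other clauses via homomorphism, and expose all quantifiers as universal. I would then iterate the rewrite rules from \cite{DBLP:journals/jacm/DalviS12}, each of which transforms an unsafe query into a strictly smaller unsafe query: (a) \emph{atom forcing}, which sets an entire relation to probability $1$ and removes that atom from every clause containing it; (b) \emph{atom erasure}, which sets an entire relation to probability $0$ and deletes every clause that contains it; (c) \emph{domain restriction}, which zeroes out tuples that straddle two sub-domains and thereby identifies variable positions; and (d) \emph{clause removal}, available whenever a homomorphism between clauses makes one redundant. Each step witnesses a many-one reduction from the simpler query to $Q$ using only probabilities in $\set{0,1}$, which is precisely what $\gmc_{\text{bi}}$ allows.

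The iteration terminates because every rule strictly decreases some well-founded measure, for instance the lexicographic triple consisting of the number of relation symbols, the total atom count, and the maximum arity. At termination one obtains a minimal unsafe query $Q^*$ that cannot be further simplified while remaining unsafe. I would then argue, as in \cite{DBLP:journals/jacm/DalviS12}, that any such $Q^*$ either collapses to the canonical unsafe sentence $H_0 \defeq \forall x \forall y (R(x) \vee S(x,y) \vee T(y))$, in which case conclusion~(1) of the theorem holds, or retains the two-sided shape of Definition~\ref{def:types:of:clauses} together with a left-to-right path as in Definition~\ref{def:unsafeQueries}, in which case conclusion~(2) holds with $Q' \defeq Q^*$.

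The hard part is more bookkeeping than conceptual: one must walk case by case through Section~4 of \cite{DBLP:journals/jacm/DalviS12} and confirm that no rule secretly uses a rational probability other than $0$ or $1$. The delicate case is the domain restriction step, where an auxiliary domain is built to merge variables; here I would check that every tuple of the simpler query $Q'$ is mapped to a single tuple of $Q$ with its original probability preserved, and that only the newly introduced padding tuples receive probabilities in $\set{0,1}$. Since the construction in \cite{DBLP:journals/jacm/DalviS12} already uses padding exclusively with $\set{0,1}$ values, the pipeline goes through unchanged and yields Theorem~\ref{th:proof:step:1}.
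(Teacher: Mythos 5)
Your proposal is correct and follows essentially the same route as the paper: the paper's own proof is a short sketch that cites Theorems 6.3 and 7.3 of Dalvi--Suciu and observes that the only tool they use (Lemma 4.17, i.e.\ the rewritings of Definition 4.13) pads the database exclusively with probabilities in $\set{0,1}$ and hence survives the restriction to $\set{0,1/2,1}$. Your more detailed walk through the individual rewrite rules and the termination measure is a fleshed-out version of the same argument, with the same key observation doing all the work.
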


\begin{proof}
  (Sketch) The proof follows directly
  from~\cite{DBLP:journals/jacm/DalviS12}, more precisely from
  Theorems 6.3 and 7.3; note that a bipartite query is called a
  two-leveled query in~\cite{DBLP:journals/jacm/DalviS12}.  The only
  tool used in those proofs is Lemma 4.17, and it continues to hold if
  all probabilities are in $\set{0,1/2,1}$.
\end{proof}

\eat{
the rewriting that sets all tuples in the domain of $S$
to $1$, and by $S\rightarrow 0$ the rewriting that sets all tuples in
the domain of $S$ to $0$. Effectively, the rewriting $S\rightarrow 1$
removes all clauses of $Q$ that contain $S$. Observe that all remaining
clauses are non-redundant. The rewriting $S\rightarrow 0$ removes the
symbol $S$ from $Q$.
}

Thus, in order to prove Theorem~\ref{th:new:dichotomy}, it suffices to
show that $\gmc_{\text{bi}}(H_0)$ is \#P-hard, and $\gmc_{\text{bi}}(Q)$
is \#P-hard for any bipartite, unsafe query $Q$.  Hardness for $H_0$ is
already shown in~\cite{DBLP:journals/jacm/DalviS12}, because that proof
only uses probabilities in $\set{0,1/2,1}$; furthermore, Amarilli and
Kimelfeld~\cite{DBLP:journals/corr/abs-1908-07093} strengthened this
result by showing that model counting for $H_0$ is \#P-hard.  Thus, we
will not consider $H_0$ any further in this paper; we only consider
bipartite queries.

Next, we need a technical lemma, whose proof is in the full version of the paper~\cite{DBLP:journals/corr/abs-2008-00896}.
\def\lemmalong{Let $Q$ be a bipartite, unsafe query of type $A-B$,
  with $A,B \in \set{\text{I},\text{II}}$, and of length $k$.  Then
  there exists a bipartite, unsafe query $Q'$ of type $A-A$ of length
  $\geq 2k$ such that
  $\gmc_{\text{bi}}(Q') \leq_m^P \gmc_{\text{bi}}(Q)$.  }
\begin{lemma}\label{lemma:long}
  \lemmalong
\end{lemma}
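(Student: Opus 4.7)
The plan is a ``mirror--and--glue'' construction that doubles the length of $Q$. First, define the mirror $\hat Q$ of $Q$ by swapping the unary symbols $R\leftrightarrow T$ throughout and by swapping the two coordinates of every binary atom, $S_j(x,y)\mapsto S_j(y,x)$. A direct inspection of Definition~\ref{def:types:of:clauses} confirms that $\hat Q$ is a bipartite, unsafe query of length $k$, of type $B$-$A$ (mirroring turns left clauses into right clauses while preserving the I/II distinction), and that $\Pr_D(Q)=\Pr_{\hat D}(\hat Q)$ for the mirrored database $\hat D$ obtained from $D$ by reversing every tuple's coordinates and exchanging the roles of $U$ and $V$. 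Using fresh binary symbols $\hat S_1,\hat S_2,\ldots$ so that the binary signatures of $Q$ and $\hat Q$ are disjoint, assemble $Q'$ by conjoining $Q$ and $\hat Q$ and gluing them at their shared ``inner'' unary endpoint: identify $Q$'s $T$ with $\hat Q$'s mirror-of-$R$, and absorb the resulting inner unary atoms into bipartite middle clauses of the form $\forall x\forall y\,S_J(x,y)$. The left clauses of $Q'$ are exactly $Q$'s left clauses (type $A$) and the right clauses of $Q'$ are exactly $\hat Q$'s right clauses (type $A$), so $Q'$ is of type $A$-$A$. Because the two halves share no binary symbol, every left-to-right path in $Q'$ must pass through the glue, so the length of $Q'$ is at least $k+k=2k$.

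For the many-one reduction $\gmc_{\text{bi}}(Q')\leq^P_m \gmc_{\text{bi}}(Q)$, I would translate a given bipartite $Q'$-database $DB'$ into a bipartite $Q$-database $DB$ on a disjoint-union domain $U=U_1\sqcup U_2$, $V=V_1\sqcup V_2$. Block~1 $(U_1,V_1)$ carries the $Q$-relations of $DB'$ verbatim, and Block~2 $(U_2,V_2)$ carries the mirror-encoded $\hat Q$-relations of $DB'$: each tuple $\hat S_j(u,v)$ in $DB'$ becomes $S_j(v,u)$ in Block~2, with the same probability. All cross-block tuples are padded deterministically using values in $\set{0,1}$: set $R(u)=1$ for $u\in U_2$, $T(v)=1$ for $v\in V_1$, and $S_j(u,v)=1$ for every cross pair. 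Under this padding, every clause of $Q$ is trivially satisfied on cross-block instances, so $Q$ on $DB$ factors into the conjunction of block-wise $Q$-evaluations. Since $Q$ and $\hat Q$ use disjoint binary signatures inside $Q'$, the $Q$- and $\hat Q$-halves of $DB'$ are probabilistically independent, and $\Pr_{DB'}(Q')$ factors analogously. Combining with $\Pr_{D}(\hat Q)=\Pr_{\hat D}(Q)$ yields $\Pr_{DB}(Q)=\Pr_{DB'}(Q')$. All padding values lie in $\set{0,1}\subseteq\set{0,1/2,1}$, so the reduction respects the $\gmc_{\text{bi}}$ restriction.

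The hard part is Step~2 of the construction. When $Q$ has type-II right clauses (or $\hat Q$ has type-II left clauses), the naive gluing does not immediately produce middle clauses of the strict bipartite form $\forall x\forall y\,S_J(x,y)$, because type-II clauses carry a nested quantifier pattern $\forall y\bigvee_\ell\forall x\,S_{J_\ell}(x,y)$. The construction must be refined in these cases by introducing auxiliary binary symbols at the junction and distributing the nested quantifiers appropriately, so that $Q'$ is a bipartite query of type $A$-$A$ and every left-to-right path is still forced through the full $Q$-half and the full $\hat Q$-half. A case analysis over $(A,B)\in\set{\mathrm{I},\mathrm{II}}^2$ handles the four combinations, the diagonal cases being routine and the two off-diagonal ones being where the mirror-and-glue construction genuinely pays off; the reduction argument of the previous paragraph goes through verbatim in each case.
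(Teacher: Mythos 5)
Your high-level idea---fold $Q$ back on itself so that the left clause of $Q$ also supplies the right clause of $Q'$, and turn the original right end into glue---is the same idea the paper uses for its zig-zag query $\zg{Q}$. But two of your steps contain genuine gaps. First, the database side of the many-one reduction does not work as stated. After gluing, the ``inner'' unary symbol of $Q$ (its $T$) must become a \emph{binary} symbol of $Q'$ shared by both halves; otherwise $Q'$ is disconnected and has no left-to-right path at all, so it is not an unsafe query of length $2k$. Once that symbol is binary and shared, your claim that the two halves of $DB'$ are ``probabilistically independent'' and that $\Pr_{DB'}(Q')$ factors is false. Moreover, to realize a binary glue tuple $T'(u,v)$ of $DB'$ as a \emph{unary} $T$-tuple of a $Q$-database you need one fresh constant per pair $(u,v)$ (the paper's $e_{uv}$, with $p'(T(e_{uv}))\defeq p(T^{(12)}(u,v))$); a disjoint union of two blocks over copies of the original domains, with cross-block tuples set to $1$, has only $O(|U'|+|V'|)$ right-hand constants and cannot encode $|U'|\cdot|V'|$ independent glue tuples. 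The correct argument is a tuple-by-tuple lineage isomorphism, not a factorization.

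Second, you explicitly defer the case where the clauses at the junction are of type II, but that is where essentially all of the difficulty lies. A type-II right clause $\forall y\bigl(\bigvee_{i=1}^{\ell}\forall x\,S_{J_i}(x,y)\bigr)$ must be translated into middle clauses of the glued query (the clauses $C^{(\phi)}$ of Eq.~\eqref{eq:new:middle}), and without extra ``dead-end'' branches at each junction constant, \emph{every} such translated clause can become redundant after minimization---see Example~\ref{ex:why:we:need:dead:ends}---which destroys the left-to-right path and with it both unsafety and the length bound. This is why the paper takes $n\geq\max(3,\ell)$ copies rather than just a mirror image, and why it attaches $n-2$ dead-end constants $f^{(i)}_{uv}$ to each pair. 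Relatedly, your length bound ``$\geq k+k$'' and your claim that $Q'$ is unsafe both presuppose that the glued path survives minimization; that non-redundancy argument is the substantive content of the proof and is missing from your proposal.
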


We introduce now two simplification rules for queries (these are subsets
of the rules in Definition 4.13~\cite{DBLP:journals/jacm/DalviS12}).
Fix a bipartite query $Q$, and let $S$ be any relational symbol. We
denote by $Q[S:=\texttt{false}]$ the query obtained by replacing every
occurrence of the atom $S$ by $\texttt{false}$.  Similarly,
$Q[S:=\texttt{true}]$ is obtained by replacing $S$ by $\texttt{true}$.
We sometimes abbreviate these rewritings by $Q[S:=0]$, $Q[S:=1]$.  As
discussed earlier, we always assume that the rewritten query is reduced,
by minimizing its clauses and removing redundant clauses.

\begin{lemma} \label{lemma:supersimple}
  Let $Q$ be a bipartite query, and let $Q'$ be either $Q[S:=0]$ or
  $Q[S:=1]$.  Then (1)
  $\gmc_{\text{bi}}(Q') \leq^P_m \gmc_{\text{bi}}(Q)$, (2) $Q$ and $Q'$
  have the same types (I or II), (3) if $Q'$ is unsafe, then so is $Q$
  (but the converse does not hold in general), and (4) the length of
  $Q'$ is $\geq$ the length of $Q$.
\end{lemma}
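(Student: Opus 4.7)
The plan is to verify each of the four claims in turn. Claim (1) is the main content and is a direct many-one reduction that exploits the fact that the probability values $0$ and $1$ are allowed in the $\gmc$ problem. Claims (2)--(4) are structural observations about how the rewrites $S := 0$ and $S := 1$ affect the individual clauses of a bipartite query.

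For (1), I would construct the reduction explicitly. Given an instance $\Delta' = (\Dom, p')$ of $\gmc_{\text{bi}}(Q')$, define $\Delta = (\Dom, p)$ by keeping $p(t) = p'(t)$ for every tuple $t$ whose relation symbol is not $S$, and setting $p(t) = 0$ (respectively $p(t) = 1$) for every $S$-tuple $t$ when $Q' = Q[S:=0]$ (respectively $Q' = Q[S:=1]$). In every possible world drawn from $\Delta$, all $S$-tuples are absent (respectively present) with probability one, and on such worlds the truth value of $Q$ agrees with that of $Q'$. Therefore $\Pr_\Delta(Q) = \Pr_{\Delta'}(Q')$. The database $\Delta$ is bipartite, its probabilities remain in $\set{0, 1/2, 1}$, and the construction is polynomial time, so this is a valid many-one reduction.

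For (2)--(4), I would observe that the rewrites only delete atoms after simplification: $S := 0$ removes each $S$-atom from its enclosing disjunction, while $S := 1$ makes any clause containing an $S$-atom trivially true, deleting that clause from $Q$. Neither operation introduces a new inner $\forall y$ inside a left clause or a new inner $\forall x$ inside a right clause, so each surviving left clause of $Q$ keeps its Type (I or II) in $Q'$, and the same holds for right clauses; this establishes (2). For (3), if $Q'$ is unsafe and $C'_0, C'_1, \ldots, C'_k$ is a left-to-right path in $Q'$, then each $C'_i$ arose from a surviving clause $C_i$ of $Q$ by deleting atoms, so any atom shared by consecutive clauses $C'_i$ and $C'_{i+1}$ is already shared by $C_i$ and $C_{i+1}$; hence $C_0, C_1, \ldots, C_k$ is a left-to-right path in $Q$, and $Q$ is unsafe. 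The same lifting argument shows that any left-to-right path in $Q'$ yields a path of the same length in $Q$, so the minimum length in $Q$ is at most the minimum length in $Q'$, which is (4).

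I do not expect any serious obstacle; the only care needed is to handle degenerate cases (e.g.\ a left clause reducing to $\forall x R(x)$, which can be absorbed by fixing the relevant probabilities to $1$) without disrupting the type classification, and to be explicit that the constructed database uses only probabilities in $\set{0, 1/2, 1}$ so that the reduction stays inside the restricted class $\gmc_{\text{bi}}$.
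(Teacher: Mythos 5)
Your proof is correct and supplies exactly the argument the paper has in mind (the paper states only ``The proof is immediate and omitted''): part (1) is the standard reduction that freezes all $S$-tuples to probability $0$ or $1$, which stays within bipartite TIDs with probabilities in $\set{0,1/2,1}$, and parts (2)--(4) follow from your observation that the rewrites only delete atoms, subclauses, or whole clauses, so left-to-right paths in $Q'$ lift to paths of the same length in $Q$. Your closing caveat about degenerate clauses is the right one to flag, and it does not affect the claims since in those cases $Q'$ is safe and (3)--(4) are vacuous.
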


The proof is immediate and omitted.  This justifies:

\begin{definition}
\label{def:finalQueries}
A \e{final} query is a bipartite, unsafe query $Q$ with the following
property: for any symbol $S$ of $Q$, both $Q[S:=0]$ and $Q[S:=1]$ are
safe queries. 
\eat{For every $A,B \in \set{I,II}$ we denote by $\texttt{f-Type-A-B}$ a final query of type $A-B$.}
\end{definition}

Intuitively, if we want to prove that an unsafe query $Q$ is \#P-hard,
we can simplify it first to $Q' = Q[S:=0]$ or $Q' = Q[S:=1]$ and prove
that $Q'$ is \#P-hard. A {\em final} query is one where no further
simplifications are possible.  A ``forbidden query'' (Definition 7.2
in~\cite{DBLP:journals/jacm/DalviS12} \blue{and Definition~C.10 in the
  full version of this
  paper~\cite{DBLP:journals/corr/abs-2008-00896}}) is defined
similarly, but considers some additional simplifications, thus, every
forbidden query is final, but the converse does not hold.  In this
paper we only discuss final queries, but will return to forbidden
queries in the full version~\cite{DBLP:journals/corr/abs-2008-00896}.

We say that $Q$ is disconnected if $Q \equiv Q' \wedge Q''$, where
$Q', Q''$ use disjoint sets of symbols.  Every final query is connected.
Indeed, assuming the contrary, one of $Q',Q''$ must must be unsafe, and,
assuming $Q'$ is unsafe, we can set to $\texttt{true}$ all symbols $S$
occurring in $Q''$ and obtain the simpler query $Q'$ which is still
unsafe, contradicting the fact that $Q$ is final.


\eat{Let $G(V,E)$ be a graph where $V=\set{X_1,...,X_n}$.  The positive 2CNF
(P2CNF) associated with the graph is
$\Phi = \bigwedge_{(i,j) \in E}(X_i \vee X_j)$.  When $G$ is bipartite,
then we call $\Phi$ a {\em partitioned} P2CNF, in notation PP2CNF:
$\Phi = \bigwedge_{(i,j) \in E}(X_i \vee Y_j)$, where
$X_i, i=1,2,\ldots$ and $Y_j, j=1,2,\ldots$ are disjoint sets of
variables. The model counting problems associated with P2CNF and PP2CNF
are called \#P2CNF and \#PP2CNF respectively, and both are \#P-hard.}

Finally, we can now state the main technical result of this paper, which
immediately implies Theorem~\ref{th:new:dichotomy}.  Recall that the
model counting problem for UCQ's is defined as $\pqe(Q)$ where the
probabilities are restricted to $\set{0,1/2}$.  Since our discussion is for
$\forall$CNF rather than UCQ's, we define the {\em model counting
  problem}, $\fomc(Q)$, as $\pqe(Q)$ where the probabilities are
restricted to $\set{1/2,1}$.  In this paper we prove:

\begin{theorem} \label{th:dichotomy:main:lemma} (1) If $Q$ is a final
  query of type I, then $\fomc_{\text{bi}}(Q)$ is \#P-hard. (2) If $Q$
  is a final query of type II of length $\geq 5$, then
  $\gmc_{\text{bi}}(Q)$ is \#P-hard.
\end{theorem}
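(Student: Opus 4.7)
For part~(1), the plan is to follow the blueprint sketched in the introduction: a Cook-reduction from \#P2CNF to $\fomc_{\text{bi}}(Q)$. Fix a final, bipartite, type~I query $Q$, and an instance $\Phi = \bigwedge_{(i,j)\in E}(X_i \vee X_j)$ of \#P2CNF on $n$ variables. I will construct a family of bipartite probabilistic databases $DB_1,\ldots,DB_N$, with $N$ polynomial in $|\Phi|$, over a common domain encoding the graph; they differ only in the length $i$ of certain internal chains running between $R$- and $T$-labeled endpoints, and all tuple probabilities lie in $\set{1/2,1}$.

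Expanding $\Pr_i(Q)$ over the $2^n$ truth assignments $\alpha$ of the vertex variables yields
\begin{align*}
\Pr_i(Q) \;=\; \sum_{\alpha \in \set{0,1}^n} w(\alpha)\,F_i(\alpha),
\end{align*}
where $w(\alpha) \in \set{0,1}$ records whether $\alpha$ satisfies $\Phi$ and $F_i(\alpha)$ depends only on the \emph{profile} of $\alpha$, i.e., on how many edges of $E$ exhibit each of the four endpoint patterns $(r,t)\in\set{0,1}^2$. Grouping assignments by their profile collapses these equations over $i=1,\ldots,N$ into a polynomial-size linear system in the profile counts whose solution encodes $\#\Phi$. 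Its coefficient matrix — the \emph{big matrix} — is built from the entries of the $2\times 2$ \emph{small matrix} $\mb y$ of equation~(\ref{eq:smallMatrix}) associated with the per-link lineage, raised to powers determined by $i$.

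The heart of the argument is to show the big matrix is non-singular whenever $\det(\mb y)\neq 0$. Each independent chain of length $i$ between fixed endpoint values realizes the $i$-th iterate of a $2\times 2$ transition operator whose determinant equals $\det(\mb y)$ up to a non-vanishing monomial; a tensor-product/Vandermonde-type argument as $i$ varies then delivers invertibility. That $\det(\mb y) \not\equiv 0$ as a polynomial follows from Lemma~\ref{lemma:determinant:connected}: because $Q$ is final and unsafe, every relational symbol is essential and the lineage $Y$ of $Q$ on a single link genuinely connects $R$ with $T$. Lemma~\ref{lemma:three:values} then exhibits a numerical assignment of the remaining link probabilities within $\set{1/2,1}$ under which $\det(\mb y)\neq 0$, so the big matrix can be inverted in polynomial time and $\#\Phi$ recovered from $\Pr_1(Q),\ldots,\Pr_N(Q)$.

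The main obstacle I foresee is the non-singularity step for the big matrix: the chain must be engineered so that its probability under fixed endpoint values really factors as an iterate of $\mb y$, the multiple middle clauses and binary symbols $S_J$ must not couple distinct chains, and the resulting family of polynomial-in-$i$ coefficients must be shown to have full rank. For part~(2), the same skeleton applies but the reduction is from \#PP2CNF, whose bipartite graph compensates for the absent $R$ and $T$ atoms of a type~II query; the hypothesis that the length is $\geq 5$, combined with Lemma~\ref{lemma:long}, lets me pass to a type~II--II query with chains long enough for the four-corner analysis, while the extra freedom to set probabilities to $0$ in $\gmc$ (but not $\fomc$) is used to silence $S_J$-tuples that would otherwise connect distinct blocks — which is precisely why the stronger $\fomc$ conclusion is only available for type~I.
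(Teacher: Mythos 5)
Your overall architecture matches the paper's: a Cook reduction from \#P2CNF via a family of block-disjoint databases whose blocks are paths of varying length, a linear system in signature counts, and non-singularity of the big matrix reduced to non-vanishing of the $2\times 2$ determinant of Lemma~\ref{lemma:determinant:connected}. But there is a genuine gap at the step where you claim that ``Lemma~\ref{lemma:three:values} then exhibits a numerical assignment of the remaining link probabilities within $\set{1/2,1}$ under which $\det(\mb y)\neq 0$.'' Lemma~\ref{lemma:three:values} requires \emph{three} distinct values: a degree-$2$ polynomial in one variable can vanish at both of two prescribed points, so the lemma gives you an assignment in $\set{0,1/2,1}$, not in $\set{1/2,1}$. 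As written, your argument therefore proves \#P-hardness of $\gmc_{\text{bi}}(Q)$ but not of $\fomc_{\text{bi}}(Q)$, which is the actual claim of part~(1). The paper closes this gap with a separate structural argument that uses finality in an essential way: Lemma~\ref{lem:LemforbiddenLineage} shows that setting \emph{any} non-endpoint variable of the link lineage to $0$ or to $1$ disconnects it, whence the determinant polynomial $f_A$ is divisible by $u_i(1-u_i)$ for every variable and equals $\alpha\prod_i u_i(1-u_i)$ for a nonzero constant $\alpha$ (Corollary~\ref{corr:fADet}). Consequently $f_A$ is nonzero at \emph{every} interior point, in particular at the all-$1/2$ point (Theorem~\ref{thm:DetANon0}), and no value $0$ is ever needed. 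You invoke finality only to get connectedness of the lineage; that is too weak to reach the $\fomc$ conclusion.

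A second, smaller shortfall: you correctly identify the non-singularity of the big matrix as the main obstacle, but you leave it unresolved. In the paper this is not a routine Vandermonde computation: one must first show that the per-link probabilities admit the closed form $z_i(p)=a_i\lambda_1^p+b_i\lambda_2^p$ (by diagonalizing the transfer matrix $A^{(1)}C$), and then verify three arithmetic conditions --- $\lambda_1\neq\pm\lambda_2$ with both nonzero, $b_i\neq 0$, and $a_ib_j\neq a_jb_i$ for $i\neq j$ (Theorem~\ref{thm:designSection}) --- before the chain of lemmas culminating in Theorem~\ref{th:non-singular} (linear independence of the monomials, the Cauchy-type Jacobian of Lemma~\ref{lem:HInvertible}, and Lemma~\ref{lemma:another:nonsingular}) applies. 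Verifying these conditions again leans on finality and on the strict inequalities $z_{00}<z_{01}=z_{10}<z_{11}$; none of this is supplied by the general shape of your sketch. Your signature also omits the vertex counts $q_0,q_1$, which the paper handles with pendant edges $(u,u')$ and a Kronecker factor $\mathcal{N}$; this matters once the fixed probability $c$ is allowed to differ from $1/2$.
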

\eat{
\batya{We should add some discussion that the type-1 proof establishes hardness of FOMC (not GFOMC)}
}

In the rest of the paper we prove
Theorem~\ref{th:dichotomy:main:lemma} (1), and defer part (2) to the
full version~\cite{DBLP:journals/corr/abs-2008-00896}.  We end this
section by showing how this theorem proves our main result,
Theorem~\ref{th:new:dichotomy}, which we prove for $\forall$CNF
queries rather than UCQs. Let $Q$ be any unsafe $\forall$CNF query.
By Theorem~\ref{th:proof:step:1}, either
$\gmc_{\text{bi}}(H_0) \leq_m^P \gmc(Q)$, in which case the theorem
follows from the fact that $\gmc_{\text{bi}}(H_0)$ is \#P-hard, or
$\gmc_{\text{bi}}(Q') \leq_m^P \gmc(Q)$ for some bipartite, unsafe
query $Q'$ of some type $A-B$.  If $A$ is I, then by
Lemma~\ref{lemma:long}
$\gmc_{\text{bi}}(Q'') \leq^P_m \gmc_{\text{bi}}(Q')$ for some
bipartite, unsafe query $Q''$ of type I-I: w.l.o.g. we may assume that
$Q''$ is final (by Lemma~\ref{lemma:supersimple}), then the result
follows from Theorem~\ref{th:dichotomy:main:lemma} (1).  If $A$ is II,
then we apply Lemma~\ref{lemma:long} three times, to obtain a
bipartite, unsafe query $Q''$ of type II and of length $\geq 8$, such
that $\gmc_{\text{bi}}(Q'') \leq^P_m \gmc_{\text{bi}}(Q')$.  As before,
we can assume w.l.o.g. that $Q''$ is final, hence the result follows
from Theorem~\ref{th:dichotomy:main:lemma} (2).

\section{Hardness of Final Queries of Type-I}\label{sec:BlockTID}

In this section we prove the first item of
Theorem~\ref{th:dichotomy:main:lemma}.  A Positive 2CNF, or P2CNF, is
a formula $\Phi=\bigwedge_{(i,j)\in E}(X_i\vee X_j)$ with $n$
variables and $|E|=m$ clauses.  The problem ``given a P2CNF $\Phi$,
compute the number of satisfying assignments $\#\Phi$'' is denoted
\#P2CNF and is known to be \#P-hard.  In this section we prove:

\begin{theorem} \label{th:hard:type:1}
  For every final query $Q$ of type I, \#P2CNF $\leq^P$
  $\fomc_{\text{bi}}(Q)$.
\end{theorem}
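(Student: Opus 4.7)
The plan is a polynomial-time Cook reduction from $\#$P2CNF to $\fomc_{\text{bi}}(Q)$. Given a P2CNF instance $\Phi = \bigwedge_{(i,j) \in E}(X_i \vee X_j)$ with $n$ variables and $m$ edges, I will construct a family of bipartite databases $DB_1, \ldots, DB_L$ with all tuple-probabilities in $\set{1/2, 1}$ and recover $\#\Phi$ by solving the linear system formed by $\Pr_1(Q), \ldots, \Pr_L(Q)$.

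The construction uses \emph{disjoint-path} gadgets. I first pick a ``link'' sub-database $L_Q$ on distinguished boundary elements carrying the atoms $R(u)$ on one side and $T(v)$ on the other, such that the per-link lineage $Y$ of $Q$ connects the Boolean variables $R$ and $T$. The choice of $L_Q$ requires care: when $Q$ contains middle clauses that subsume its left or right clauses, a single pair $(u, v)$ may produce a lineage that trivially disconnects $R$ from $T$, and $L_Q$ must then use additional domain elements to restore connectivity. Once $L_Q$ is fixed, Lemma~\ref{lemma:determinant:connected} guarantees that the associated $2 \times 2$ small matrix $\mb y$ satisfies $\det(\mb y) \not\equiv 0$ as a polynomial in the internal $S$-probabilities $s_1, \ldots, s_p$.

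In $DB_\ell$, for each edge $(i,j) \in E$ I place two disjoint chains of $\ell$ copies of $L_Q$, one from $u_i \in U$ to $v_j \in V$ and one from $u_j$ to $v_i$. Boundary tuples $R(u_i)$ and $T(v_i)$ receive probability $1/2$ (each encoding one side of the P2CNF variable $X_i$), internal $S$-tuples receive the chosen values in $\set{1/2, 1}$, and probability-$1$ auxiliary tuples satisfy clauses outside the active paths. Conditioning on the $2n$ boundary-tuple values, the gadgets become independent, and each edge $(i, j)$ contributes $(\mb y^\ell)_{\sigma(u_i), \sigma(v_j)} (\mb y^\ell)_{\sigma(u_j), \sigma(v_i)}$. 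Collecting by the vector $\vec n = (n_{ab})$ of edge-pattern counts gives
\begin{align*}
\Pr_\ell(Q) = 2^{-2n} \sum_{\vec n} M_{\vec n} \prod_{a,b} \bigl((\mb y^\ell)_{ab}\bigr)^{n_{ab}},
\end{align*}
where $M_{\vec n}$ counts boundary-assignments with the prescribed pattern. Varying $\ell = 1, \ldots, L$ produces a Vandermonde-type ``big matrix'' that is nonsingular iff $\mb y$ is nonsingular with distinct eigenvalues; once the $M_{\vec n}$'s are recovered, $\#\Phi$ is extracted by projecting onto the consistent assignments with $R(u_i) = T(v_i)$ for all $i$.

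The main obstacle is ensuring that $\mb y$ is nonsingular for some $s_1, \ldots, s_p \in \set{1/2, 1}$. Lemma~\ref{lemma:determinant:connected} supplies only generic non-vanishing of $\det(\mb y)$, and Lemma~\ref{lemma:three:values} needs three distinct constants, whereas FOMC forbids probability $0$. I plan to strengthen Lemma~\ref{lemma:determinant:connected} using finality: because $Q[S_j{:=}c]$ is safe for every $S_j$ and each $c \in \set{0,1}$, a structural argument should show that fixing any single $s_j$ to $0$ or $1$ makes the per-link lineage disconnect $R$ from $T$, so $\det(\mb y)$ vanishes on every axis-aligned face of $\set{0, 1}^p$. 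Since the entries of $\mb y$ are multilinear in $s_1, \ldots, s_p$, the determinant has degree at most $2$ in each $s_j$, and the face-wise vanishing forces $\det(\mb y) = c \prod_j s_j(1-s_j)$ for a nonzero constant $c$, which evaluates to $c \cdot 4^{-p} \neq 0$ at the all-$1/2$ assignment. Carrying out the finality-to-disconnection step for a possibly non-trivial $L_Q$, together with the companion distinct-eigenvalues condition (nonvanishing discriminant of $\mb y$), will be the crux of the proof.
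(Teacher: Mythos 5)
Your core analytic idea---use finality to show that $\det(\mb y)$ vanishes whenever any internal variable is set to $0$ or $1$, conclude $\det(\mb y)=\alpha\prod_j s_j(1-s_j)$ for a nonzero constant $\alpha$, and hence that the small matrix is non-singular at the all-$1/2$ point---is exactly right, and it is how the paper proceeds. But the reduction you wrap around it has two genuine gaps.

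First, the encoding. You represent the P2CNF variable $X_i$ by \emph{two} independent probability-$1/2$ tuples, $R(u_i)$ and $T(v_i)$, and defer the requirement $R(u_i)=T(v_i)$ to a final ``projection onto consistent assignments.'' That projection cannot be performed: the only quantities your linear system recovers are the aggregate counts $M_{\vec n}$ of boundary assignments by edge-pattern, and these lump consistent and inconsistent assignments together. For instance, $\sum_{\vec n:\,n_{00}=0}M_{\vec n}$ is the model count of the doubled formula $\bigwedge_{(i,j)\in E}(\sigma_i\vee\tau_j)\wedge(\sigma_j\vee\tau_i)$ over $2n$ independent variables---the vertex-cover count of the bipartite double cover of the constraint graph---which for a triangle equals $18$ while $\#\Phi=4$; there is no way to recover $\#\Phi$ from the $M_{\vec n}$ alone. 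The paper sidesteps this entirely by building blocks whose \emph{both} endpoints are left endpoints (a zig-zag path $u-t_1-r_1-\cdots-t_p-v$ in which $u$ and $v$ both carry $R$-atoms), so each $X_i$ is the single tuple $R(u_i)$, the block is symmetric ($y_{01}=y_{10}$), and no consistency constraint ever arises.

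Second, the big matrix. With a single length parameter $\ell$ and three or more distinct link quantities, the coefficient matrix with entries $\prod_{ab}\bigl(y_{ab}(\ell)\bigr)^{n_{ab}}$ is not Vandermonde, and its non-singularity does not follow from ``$\mb y$ non-singular with distinct eigenvalues'': after factoring out $y_{00}(\ell)^{m}$ you are evaluating all monomials of degree $\le m$ in two ratios at points lying on a single curve parameterized by $\ell$, and such evaluation matrices can be singular. This is precisely why the paper gives each edge-block \emph{two} parallel chains between the same endpoints with independent lengths $(p_1,p_2)$, indexes the rows by $(p_1,p_2)\in[m+1]^2$ (plus extra per-node blocks for the $q_0,q_1$ counts), and proves non-singularity through a chain of lemmas: linear independence of the monomials $\prod_i y_i^{k_i}$, the inverse function theorem applied to the Cauchy-type map $y_i=\prod_j(c_i+z_j)$, and an iterative row-selection argument. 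That proof additionally needs two conditions on the coefficients in $y_{ab}(p)=a_{ab}\lambda_1^{p}+b_{ab}\lambda_2^{p}$, namely $b_{ab}\neq 0$ and $a_{ab}b_{a'b'}\neq a_{a'b'}b_{ab}$ for distinct index pairs, which your outline never surfaces and which require their own arguments (via $A^{(0)}=\mathbb{I}$ and strict monotonicity of the lineage). Until the gadget and the row-indexing are restructured along these lines, the claim that the system determines the unknown counts is unsupported.
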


Let $\theta:\set{X_1,\dots,X_n}\rightarrow \set{0,1}^n$ be an
assignment to $\Phi$'s variables. We define its \e{signature} to be
the mapping \blue{$\bk(\theta): \set{0,1}^n\rightarrow \set{0,\ldots,m}^4\times \set{0,\ldots,n}^2$:
$$\bk(\theta) \eqdef
(k_{00}(\theta),k_{01}(\theta),k_{10}(\theta),k_{11}(\theta),q_0(\theta),q_1(\theta))$$}
where
\begin{align*}
   k_{ab}(\theta)\eqdef|\set{(i,j)\in E : \theta(X_i)=a, \theta(X_j)=b}| && ab\in \set{0,1}^2 \\
   q_a(\theta)\eqdef |X_i\in \set{X_1,\dots,X_n} : \theta(X_i)=a| && a\in \set{0,1}
\end{align*}
Thus, $k_{00}$ is the number of clauses where both $X_i, X_j$ are
\false, $k_{11}$ the number of clauses where both are \true, \blue{ and $q_1(\theta)$ is the number of \true~variables $X_i$}.  We
assume that $E$ is a directed graph, and contains at most one of
$(i,j)$ or $(j,i)$ for all $i\neq j$.  Thus, $k_{01}$ and $k_{10}$ may
be different, and their sum $k_{01,10}\defeq k_{01}+k_{10}$ represents
the number of clauses with exactly one variable set to \true, and we
write $\bk'(\theta) \defeq (k_{00}, k_{01,10}, k_{11},q_0,q_1)$ for the
undirected signature.  
For any vector
$\bk=(k_{00},k_{01},k_{10},k_{11},q_0,q_1)$, its \e{count}, $\#\bk$, is the
number of assignments
$\theta:\set{X_1,\dots,X_n}\rightarrow \set{0,1}^n$ with signature
$\bk$, and $\#\bk'$ is the similar \e{undirected counts}:
\begin{align}
\forall \bk\in \set{0,\dots,m}^4\times \set{0,\ldots,n}^2: &&  \veck\eqdef |\set{\theta : \bk(\theta)=\bk}|\\
\forall \bk'\in \set{0,\dots,m}^3\times \set{0,\ldots,n}^2: &&  \veck'\eqdef |\set{\theta : \bk'(\theta)=\bk'}|
\end{align}
Thus, there are \blue{$(m+1)^4(n+1)^2$ counts, and $(m+1)^3(n+1)^2$ undirected counts}, of
which at most $(m+1)^2(n+1)$ are non-zero, because, for any signature,
$k_{00}+k_{01,10}+k_{11}=m$ \blue{ and $q_0+q_1=n$}.
\eat{
\batya{Likewise, For any vector
	$\bk=(k_{00},k_{01},k_{10},k_{11},q_0,q_1)$, its \e{probability}, $\Pr(\bk)$, is the
	total probability mass of all assignments 
	$\theta:\set{X_1,\dots,X_n}\rightarrow \set{0,1}^n$ with signature
	$\bk$, and $\Pr(\bk')$ is the similar \e{undirected probability}. Formally: $\forall \bk \in \set{0,\dots,m}^4\times \set{0,\dots,n}^2$:  $\Pr(\bk)\eqdef \sum_{\theta: \bk(\theta)=\bk}\Pr(\theta)$ and where $\Pr(\theta)=\prod_{i=1}^nc^{\theta(X_i)}(1-c)^{1-\theta(X_i)}$.}
}
To prove Theorem~\ref{th:hard:type:1}, we start from a P2CNF $\Phi$
with $m$ clauses, and construct \blue{$(m+1)^{2}(n+1)$} bipartite TIDs
$\pdb$. Then, we describe an algorithm that, using all probabilities
$\Pr_{\pdb}(Q)$, computes all \blue{$(m+1)^{2}(n+1)$} consistent undirected counts
$\#\mb{k}'$ (the others are $=0$).  The reduction from \#P2CNF
immediately follows by noting that
$\#\Phi=\sum_{\bk':k'_{00}=0}\veck'$.  This strategy requires
computing a polynomial number of counts. The crux of the proof
consists in constructing the databases $\pdb$ to make this computation
possible.  Each such database is a union of {\em blocks}; we describe
next how to compute $\Pr_{\pdb}(Q)$ when the TID $\pdb$ is a union of
blocks.  In Section~\ref{sec:reduction} we present the reduction from
\#P2CNF, and in Section~\ref{sec:blocks} we design the blocks such
that the answers $\Pr_{\pdb}(Q)$ allow us to compute the counts
$\#\mb{k}$.

\subsection{The Block TID}\label{sec:zigzagDB}

Throughout this section we fix a final query $Q$ of type-I:
\begin{equation}
\label{eq:fTypeIQ}
Q{=}\forall x \forall y (\bigwedge_{i=1}^\ell(R(x){\vee} S_{J_i}(x,y))){\wedge} C(x,y) {\wedge} (\bigwedge_{k=1}^r(T(y) {\vee} S_{J_k}(x,y)))
\end{equation}

Its vocabulary is
$\calR = \set{R(x),S_1(x,y), \ldots, S_p(x,y), T(y)}$.  Recall that a
bipartite TID $\pdb=(\Dom,p)$ has a bipartite domain $\Dom=U \cup V$.  
\begin{definition}
	\label{eq:BlockTID}
	A {\em block} $B(u,v)$ is a bipartite TID with two
        distinguished constants $u,v$.  We call $u,v$ the {\em end
          points} of $B(u,v)$, and call any other constant occurring
        in $B(u,v)$ an {\em internal point}.
\end{definition}

\blue{In what follows, }we assume that both end-points $u,v$ of a block are in
its left domain, i.e. there are atoms $R(u), R(v)$ but not
$T(u), T(v)$ (more precisely, the latter have probabilities 1); in the
\blue{full version of this
  paper~\cite{DBLP:journals/corr/abs-2008-00896}} we discuss Type II
queries and there we allow $u,v$ to be on the left or right.  Two blocks
$B(u,v)$ and $B(u',v')$ are called {\em disjoint} if they share at
most their end points:
$\Dom(B(u,v)) \cap \Dom(B(u',v')) \subseteq \set{u,v} \cap
\set{u',v'}$.
%
Therefore, when the blocks are disjoint, they can only share tuples
$R(u)$ or $R(v)$.


\begin{definition}
\label{def:linQ(u,v)}
We denote by $Y(u,v)$ the lineage of $Q$ over the bipartite TID
$B(u,v)$: $Y(u,v) \eqdef \lin{B(u,v)}{Q}$.
For any pair $(a,b)\in \set{0,1}^2$ we define
\begin{equation}
\label{eq:Yab}
Y_{ab}(u,v)\eqdef Y(u,v)[R(u)\asn a, R(v) \asn b]
\end{equation}
\end{definition}
In other words, $Y_{ab}$ is the lineage $Y(u,v)$ where we substitute the Boolean variables $R(u), R(v)$ with the values $a,b$ respectively. \blue{Using~\eqref{eq:Yab}, we define:
\begin{equation}
	\label{eq:Ya}
\eat{	Y_{a}(u)\eqdef Y(u,v)[R(u)\asn a, R(v) \asn 0]\vee Y(u,v)[R(u)\asn a, R(v) \asn 1]}
	Y_{a}(u)\eqdef Y_{a0}(u,v)\vee Y_{a1}(u,v)
\end{equation}
(we observe that due to monotonicity $Y_{a0}(u,v)\implies Y_{a1}(u,v)$, and hence $Y_{a}(u)\equiv Y_{a1}(u,v)$. In this paper, we are interested in counting models for the query, and hence the representation of~\eqref{eq:Ya}).
}

\eat{In what follows, we map every symbol $u$ in $\Dom=U \cup V$ to the unary atom $R(u)$. That is, every symbol $u\in U \cup V$ is associated with the Boolean RV $R(u)$.  In particular, we note that for any two disjoint blocks $B(u,v)$, $B(u',v')$ it holds that $\var(Y(u,v))\cap \var(Y(u',v'))\subseteq \set{R(u),R(v)}$.
We let $\theta: U\cup V \rightarrow \set{0,1}^n$ be an assignment to the Boolean RVs $\set{R(u) : u \in U\cup V}$. Since the tuples are independent, and are assigned a uniform probability of $\frac{1}{2}$, then $\Pr(\theta)=\frac{1}{2^n}$. }

\begin{theorem} \label{th:disjointBlocks} Let $U$ be a domain of size
  $n$, and let $\Delta = \bigcup_{u, v \in U} B(u,v)$ be a bipartite
  TID that is a disjoint union of blocks.  Assume its bipartite domain
  is $V_1 \cup V_2$ s.t. $U \subseteq V_1$, and that all tuples
  $\setof{R(u)}{u \in U}$ have probability $c$.  Then:
	%
	\begin{align}
	\label{eq:PrQpdb}
	\mathrm{\Pr_\pdb}(Q) = \sum_{\theta:U \rightarrow \set{0,1}}\Pr(\theta)\prod_{u, v \in U}Pr(Y_{\theta(u)\theta(v)}(u,v))
	\end{align}
\end{theorem}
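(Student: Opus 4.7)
The plan is to derive the identity in two stages: (i) factor the lineage of $Q$ over $\Delta$ as a conjunction of the per-block lineages $Y(u,v)$, and then (ii) apply the law of total probability by conditioning on the shared random variables $\{R(u) : u \in U\}$, which are the only atoms shared between distinct blocks under disjointness.

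First I would show that $\Phi_\Delta(Q) \equiv \bigwedge_{u,v\in U} Y(u,v)$. Since $Q$ is universally quantified, its lineage over $\Delta$ is the conjunction of all ground instantiations of the clauses. Any instantiation $(a,b)\in (V_1\cup V_2)^2$ with both $a$ and $b$ lying in a common block $B(u,v)$ contributes exactly the clauses making up $Y(u,v)$. For an instantiation that straddles two distinct blocks, every atom $S_j(a,b)$ is not a tuple of $\Delta$ and contributes nothing, so the resulting clauses reduce to constraints involving only the endpoint atoms $R(a), R(b)$ (which are already accounted for in the blocks that contain $a$ and $b$) or to trivially satisfied clauses under the block convention for out-of-block unary atoms. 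Hence the only nontrivial conjuncts come from within blocks, yielding the factorization.

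Next I would apply conditioning. Because the blocks are pairwise disjoint, the internal tuples of distinct blocks belong to disjoint sets. Consequently, after freezing the values of $\{R(u) : u\in U\}$ via an assignment $\theta: U\to\{0,1\}$, the conditional lineage $\Phi_\Delta(Q)\,|\,\theta$ becomes $\bigwedge_{u,v\in U} Y_{\theta(u)\theta(v)}(u,v)$, and these conjuncts depend on pairwise disjoint sets of random variables. By independence of the underlying tuples,
\[
\Pr_\Delta(Q \mid \theta) \;=\; \prod_{u,v\in U} \Pr\bigl(Y_{\theta(u)\theta(v)}(u,v)\bigr).
\]
Since the atoms $\{R(u)\}_{u\in U}$ are independent Bernoullis with parameter $c$, the law of total probability $\Pr_\Delta(Q)=\sum_\theta \Pr(\theta)\Pr_\Delta(Q\mid\theta)$ then yields exactly~\eqref{eq:PrQpdb}.

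The main obstacle is the factorization step: one has to verify that the ``straddling'' instantiations $(a,b)$ contribute no new constraints, which hinges on the middle clauses $C(x,y)$. A middle clause of the form $\forall x\forall y\, S_J(x,y)$ is potentially violated by an $(a,b)$ lying across blocks, so one must either justify that the block construction sets the relevant out-of-block tuples to probability $1$ (making such clauses trivially true) or carry the residual unary constraints back into the within-block lineages $Y(u,v)$. This is the bookkeeping point that has to be matched against the concrete definition of blocks in Section~\ref{sec:blocks}; once settled, the remaining independence and conditioning steps are routine.
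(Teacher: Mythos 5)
Your proposal is correct and follows essentially the same route as the paper's proof: factor the lineage as $\bigwedge_{u,v\in U} Y(u,v)$, condition on the shared variables $\{R(u)\}_{u\in U}$ via an assignment $\theta$, and use disjointness of the blocks to turn the conditional probability into a product. The "straddling instantiation" issue you flag is handled in the paper's setup exactly as you anticipate — cross-block tuples (and tuples of trivial blocks for non-edges) are given probability $1$, so those clauses vanish — and the paper's proof simply takes this factorization as given.
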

where $\Pr(\theta)=\prod_{u \in U}c^{\theta(u)}(1-c)^{1-\theta(u)}$.
\begin{proof}
  Since $\lin{\pdb}{Q}=\bigwedge_{u, v \in  U}Y(u,v)$ and any
  two Boolean formulas $Y(u,v)$, $Y(u',v')$ share at most the boolean
  variables $R(u)$ or $R(v)$ (when $u=u'$ or $v=v'$), we have:
\begin{flalign}
&\Pr_\pdb(Q)=\Pr(\lin{\pdb}{Q})=\Pr(\bigwedge_{u, v \in  U}Y(u,v))& \nonumber\\
&{=}\sum_{\theta:U{\rightarrow} \set{0,1}^n} \Pr(\theta)\Pr(\bigwedge_{u, v \in  U}Y(u,v)[R(u){\asn}\theta(u),R(v){\asn}\theta(v)])& \nonumber \\
&{=}\sum_{\theta:U{\rightarrow} \set{0,1}^n}\Pr(\theta) \Pr(\bigwedge_{u, v \in  U}Y_{\theta(u)\theta(v)}(u,v)) &\nonumber \\
&{=}\sum_{\theta:U{\rightarrow} \set{0,1}^n}\Pr(\theta) \prod_{u, v \in  U}\Pr(Y_{\theta(u)\theta(v)}(u,v))&
\label{eq:disjointBlocks1}
\end{flalign}
where~\eqref{eq:disjointBlocks1} follows because the Boolean functions
$Y_{\theta(u)\theta(v)}(u,v)$ and $Y_{\theta(u')\theta(v')}(u',v')$
are over disjoint sets of Boolean variables, and hence they are independent.
\end{proof}
\eat{
Since we seek to prove hardness for $\fomc(Q)$, then we  set the tuple probabilities of the block disjoint TID $\pdb$ to values in $\set{\frac{1}{2},1}$. 
In particular, every unary tuple in $R(x)$ is assigned a probability of $\frac{1}{2}$, thus $\Pr(\theta)=\frac{1}{2^n}$.
\dan{do we need this discussion here?  I would simply say that we set
  all probabilities of $R(u), T(v)$ to $1/2$, without making other
  claims.  Notice also that we have not introduced the notation
  $\fomc$: I don't think we need it.}
}
\subsubsection*{Block TID associated with a graph}
\blue{ Let $G(U,E)$ be the undirected graph associated with the P2CNF $\Phi=\bigwedge_{(u,v)\in E}(X_u\vee X_v)$, where
	$n=|U|$ and $m=|E|$.  To define the TID, we consider an isomorphic copy
	of the set $U$, $U' = \setof{u'}{u \in U}$, and define the graph
	$G'(U \cup U', E \cup E')$, where
	$E' = \setof{(u,u')}{u \in U, u' \in U'}$.  That is, $E'$ contains,
	for each node $u \in U$, one edge connecting it to its isomorphic
	copy $u'$ in $U'$.}
\eat{
Let \sout{$G(U,E)$} \blue{$G(U\cup U',E \cup E')$} be a directed graph where \blue{$U'=\set{u' : u\in U}$, $U\cap U'=\emptyset$, $|U|=|U|'=n$, $E \subseteq U \times U$, and $E'=\set{(u,u'): u\in U, u'\in U'}$}. For each \blue{pair of vertices $\set{u,v}\subseteq U\cup U'$ where $u\neq v$}
at most one of $(u,v), (v,u)$ is in $E\cup E'$.}  We associate
every edge $(u,v)\in E$ \blue{and $(u,u')\in E'$} with blocks $B(u,v)$ \blue{and $B(u,u')$ respectively}, and define the
block-disjoint TID\footnote{Not to be confused with Block-Disjoint-TIDs that refer to TIDs consisting of disjoint sets of mutual exclusive tuples~\cite{DBLP:series/synthesis/2011Suciu}.} associated with $G$:
\blue{
$\pdb = \bigcup_{(u,v) \in E} B(u,v) \cup \bigcup_{(u,u') \in E'} B(u,u')$}, where for every non-edge
$(a,b) \not\in E\cup E'$ we define $B(a,b)$ to be the trivial block
consisting of all tuples $S_1(a,b), \ldots, S_p(a,b)$ with probability
$1$. In this setting, by~\eqref{eq:PrQpdb} we have:
\blue{
\begin{align}
\Pr_\pdb(Q)&=\sum_{\theta:U{\rightarrow}\set{0,1}^n}\Pr(\theta)\prod_{(u,v){\in}E\cup E'}\Pr(Y_{\theta(u)\theta(v)}(u,v))\nonumber \\
&=\sum_{\theta:U{\rightarrow}\set{0,1}^n}\Pr(\theta)\prod_{(u,v){\in}E}\Pr(Y_{\theta(u)\theta(v)}(u,v))\nonumber \\
&\hspace{0.8cm}{\times}\prod_{(u,u'){\in}E'}\left(\Pr(Y_{\theta(u)0}(u,u')){+}\Pr(Y_{\theta(u)1}(u,u'))\right) \label{eq:prTIDQ1}\\
&=\sum_{\theta:U{\rightarrow}\set{0,1}^n}\Pr(\theta)\prod_{(u,v){\in}E}\Pr(Y_{\theta(u)\theta(v)}(u,v))\prod_{u\in U}\Pr(Y_{\theta(u)}(u)) \label{eq:prTIDQ2}\\
&=\sum_{\theta:U{\rightarrow}\set{0,1}^n}\prod_{(u,v){\in}E}\Pr(Y_{\theta(u)\theta(v)}(u,v))\nonumber \\
&\hspace{0.8cm}{\times}\prod_{u\in U}(c\Pr(Y_{1}(u)))^{\theta(u)}((1{-}c)\Pr(Y_{0}(u)))^{1{-}\theta(u)} \label{eq:prTIDQ3}
\end{align}
where the transition from~\eqref{eq:prTIDQ1} to~\eqref{eq:prTIDQ2} is by the definition of $Y_a(u)$ in~\eqref{eq:Ya}. The transition from~\eqref{eq:prTIDQ2} to~\eqref{eq:prTIDQ3} follows from the fact that $\Pr(\theta)=\prod_{u\in U}c^{\theta(u)}(1-c)^{1-\theta(u)}$.
}

\subsection{The Reduction from \#P2CNF to $\fomc(Q)$}
\label{sec:reduction}
In this section, we show that using an oracle to $\fomc(Q)$ over a block disjoint TID $\pdb$ allows us to construct a system of linear equations \blue{$\B$} whose solution allows us to solve \#P2CNF. We establish three conditions on the blocks of $\pdb$, which guarantee that \blue{$\B$} is non-singular, and thus has a unique solution. In section~\ref{sec:blocks} we show how to construct the blocks such that these conditions hold.

Fix a final query $Q$ of type-I, and an instance of \#P2CNF
$\Phi=\bigwedge_{(i,j)\in E}(X_i\vee X_j)$. We let $U=\set{u_1,\dots,u_n}$, \blue{$U'=\set{u'_1,\dots,u'_n}$,
$|E|=m$, and $E'=\set{(u_i,u_i')|u_i\in U, u_i'\in U')}$.}  We create $m$ blocks $B(u,v)$, for all $(u,v) \in E$ and \blue{$n$ blocks $B(u,u')$, for all $(u,u') \in E'$}, and
define $\pdb = \bigcup_{(u,v)\in E\cup E'} B(u,v)$.  The blocks will be
isomorphic, and therefore, the following quantities do not depend on
$u,v$:
\begin{align}
y_{ab} \defeq\Pr(Y_{ab}(u,v)) && (u,v)\in E \mbox{ and } ab\in \set{0,1}^2 \label{eq:yab}\\
y_a \eqdef \Pr(Y_a(u)) && u\in U \mbox{ and } a\in \set{0,1} \label{eq:ya}
\end{align}
A block $B(u,v)$ is \e{symmetric} if $\Pr(Y_{ab}(u,v))=\Pr(Y_{ba}(u,v))$. In our reduction, we construct symmetric blocks, and thus can assume that $y_{ab}=y_{ba}$.

Consider assignment $\theta$ occurring the sum $\sum_\theta$
of~\eqref{eq:prTIDQ3}, and let
\blue{$\bk(\theta)=\set{k_{00}(\theta),\dots,k_{11}(\theta),q_0(\theta),q_1(\theta)}$} be its
signature.  The factor $y_{ab}$ occurs precisely $k_{ab}$ times in the
product, hence its exponent is $k_{ab}$, i.e. the number of 
edges $(u,v)\in E$ where $\theta(u)=a$, and $\theta(v)=b$.
\blue{
Likewise, the factor $y_{a}$ occurs precisely $q_{a}$ times in the
product, hence its exponent is $q_{a}$, i.e. the number of 
edges $(u,u')\in E'$ where $\theta(u)=a$.
}
\eat{ . The factor $y_{aa}$ will occur precisely $k_{aa}$ times (i.e.,
  with exponent $k_{aa}$) where $k_{aa}$ is the number of edges
  $(u,v)\in E$ where $\theta(u)=\theta(v)=a$. The factor $y_{10}$ will
  occur precisely $k_{10}$ times (i.e., with exponent $k_{10}$) where
  $\theta(u)=1$ (and $\theta(v)=0$) or $\theta(v)=1$ (and
  $\theta(u)=0$).}
Therefore,~\eqref{eq:prTIDQ3} becomes:
\blue{
\begin{align}
\Pr_{\pdb}(Q){=}&	\sum_{\substack{k_{00}+k_{01}+k_{10}+k_{11}=m\\q_0+q_1=n}}\#\bk\left(y_{00}\right)^{k_{00}}\left(y_{01}\right)^{k_{01}}\left(y_{10}\right)^{k_{10}}\left(y_{11}\right)^{k_{11}}\nonumber \\
&\hspace{2.2cm}\cdot \left(cy_1\right)^{q_1}\left((1{-}c)y_0\right)^{q_0}
\end{align}
}
\eat{
\begin{equation}
\label{eq:probeqReuction}
\Pr_{\pdb}(Q){=}\frac{1}{2^n}\sum_{k_{00}+k_{01}+k_{10}+k_{11}=m}\veck\left(y_{00}\right)^{k_{00}}\left(y_{01}\right)^{k_{01}}\left(y_{10}\right)^{k_{10}}\left(y_{11}\right)^{k_{11}}
\end{equation}
}
Our unknowns are $\veck$; there is one unknown for every signature $\bk$. Since the blocks are symmetric, then $y_{01}{=}y_{10}$, \blue{ then we may write:
\begin{align}
\Pr_{\pdb}(Q){=}&\sum_{\substack{k_{00}+k_{01,10}+k_{11}=m\\q_0+q_1= n}}\#\bk\left(y_{00}\right)^{k_{00}}\left(y_{10}\right)^{k_{01,10}}\left(y_{11}\right)^{k_{11}}\nonumber \\
&\hspace{2.2cm}\cdot \left(cy_1\right)^{q_1}\left((1{-}c)y_0\right)^{q_0}
\label{eq:probeqReuctionSymmetric}
\end{align}
}
\eat{
\begin{align}
\Pr_{\pdb}(Q){=}&\frac{1}{2^n}\sum_{k_{00}+k_{01}+k_{10}+k_{11}=m}\veck\left(y_{00}\right)^{k_{00}}\left(y_{10}\right)^{k_{01}+k_{10}}\left(y_{11}\right)^{k_{11}}\nonumber\\
{=}&\frac{1}{2^n}\sum_{k_{00}+k_{1}+k_{11}=m}\veck'\left(y_{00}\right)^{k_{00}}\left(y_{10}\right)^{k_{01,10}}\left(y_{11}\right)^{k_{11}}\label{eq:probeqReuctionSymmetric}
\end{align}
}where $k_{01,10}\eqdef k_{01}+k_{10}$.
Eq.~\eqref{eq:probeqReuctionSymmetric} has $(n+1)(m+1)^2$ unknowns
$\veck$ because $k_{00}+k_{11}+k_{01,10}=m$  and $q_0+q_1=n$. The coefficient associated with $\veck$ is
$\left(y_{00}\right)^{k_{00}}\left(y_{11}\right)^{k_{11}}\left(y_{10}\right)^{k_{01,10}}\left(cy_1\right)^{q_1}\left((1{-}c)y_0\right)^{q_0}$. To construct a system of linear equations that will allow us
to solve for the $(n+1)(m+1)^2$ unknowns $\veck$, we need to create
$(n+1)(m+1)^2$ linearly independent equations corresponding to the coefficients.

\blue{
To that end, we construct $\B$ to be an $(n+1)(m+1)^2\times (n+1)(m+1)^2$ matrix whose rows correspond to $(n+1)(m+1)^2$ distinct, block disjoint TIDs, and whose columns correspond to the $(n+1)(m+1)^2$ signatures $\bk=k_{00},k_{01,10},k_{11},q_0,q_1$. 
Every such block-disjoint-TID is comprised of $m+n$ disjoint blocks, one for every edge in $E \cup E'$. \eat{As previously mentioned, the blocks are isomorphic, and hence are both interchangeable and disjoint once values are assigned to $R(u)$ for all $u\in U$ (see Figure~\ref{fig:doubleBuv}).} Every block corresponding to an edge $(u,v)\in E$ is parameterized by one of $(m+1)^2$ values $p\in \set{1,\dots,(m+1)^2}$, while every block corresponding to an edge $(u,u')\in E'$ is parameterized by one of $(n+1)$ values $q\in \set{1,\dots,n+1}$ (See Section~\ref{sec:blocks}). 
Taking all possible combinations of parameters, we arrive at $(n+1)(m+1)^2$ distinct, block-disjoint-TID, one for every combination of parameters in $\set{1,\dots,(m+1)^2}\times \set{1,\dots,n+1}$. Likewise, we observe that every coefficient $y_{00}^{k_{00}}y_{11}^{k_{11}}y_{10}^{k_{01,10}}(cy_1)^{q_1}((1-c)y_0)^{q_0}$ corresponds to a combination of two signatures, one for the edges in $E$ : $\bk_1=(k_{00},k_{01,10},k_{11})$, and one for the edges in $E'$: $\bk_2=(q_0,q_1)$. 

From this construction, along with the linear equations specified in~\eqref{eq:probeqReuctionSymmetric}, we observe that $\B$ is the Kronecker product of two matrices: an $(n+1)\times (n+1)$ matrix denoted $\mathcal{N}$, and  an $(m+1)^2\times (m+1)^2$ matrix denoted $\M$. Hence, showing that $\B$ is non-singular reduces to showing that both $\mathcal{N}$ and $\M$ are non-singular.
\eat{
We do this in two steps. First, we construct an $(n+1)\times(n+1)$ linear system $\mathcal{N}$ whose rows correspond to $n+1$ distinct, block disjoint TIDs, one for each edge $(u,u')\in E'$. These TIDs induce $n+1$ distinct probabilities $y_1^{(1)},\dots,y_1^{(n+1)}$. Its columns correspond to the value of $q_1\in \set{0,\dots,n}$. 
Second, we construct an $(m+1)^2\times(m+1)^2$ linear system $\M$ whose rows correspond to $(m+1)^2$ distinct, block-disjoint TIDs that induce distinct probabilities $y_{00}^{(p)},y_{10}^{(p)},y_{11}^{(p)}$ where $p\in \set{1,\dots,(m+1)^2}$, and whose columns correspond to the values $k_{00},k_{11}\in \set{0,\dots,m+1}^2$.
Then, we create the $(n+1)(m+1)^2\times (n+1)(m+1)^2$ system $\B$ by taking the Kronecker product: $\B \eqdef \mathcal{N}\bigotimes \M$.
To show that $\B$ is invertible, we prove that both $\mathcal{N}$ and $\M$ are invertible. Hence, $\B$ has a single solution.
}Since
the size of the system $\B$ is polynomial in the size of the data, then this allows us to solve for the counts $\veck$ in PTIME, thus proving hardness. 
\begin{proposition}
	\label{prop:NInvertible}
	The $(n+1)\times (n+1)$ system of linear equations $\mathcal{N}$ is invertible.
	\end{proposition}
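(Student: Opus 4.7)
The plan is to recognize $\mathcal{N}$ as a scaled Vandermonde matrix on $n+1$ nodes, and to reduce its non-singularity to the distinctness of $n+1$ ratios produced by the parameterized family of blocks $B(u,u')$ that Section~\ref{sec:blocks} will construct.

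First I would write $\mathcal{N}$ down explicitly. Index the row by the parameter $q \in \set{1,\ldots,n+1}$ that selects which block from the parameterized family is used for every edge $(u,u') \in E'$, and index the column by $q_1 \in \set{0,\ldots,n}$ (with $q_0 = n - q_1$). Writing $y_a^{(q)} \defeq \Pr(Y_a(u))$ for the $q$-th block and $a \in \set{0,1}$, the $E'$-factor of~\eqref{eq:probeqReuctionSymmetric} shows that the $(q,q_1)$-entry of $\mathcal{N}$ equals $(c\, y_1^{(q)})^{q_1}\bigl((1-c)\, y_0^{(q)}\bigr)^{n-q_1}$.

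Next I would factor $((1-c)\, y_0^{(q)})^{n}$ out of row $q$, which is legitimate provided $c \in (0,1)$ and $y_0^{(q)} \neq 0$; the row then becomes $\bigl(1,\, \alpha^{(q)},\, (\alpha^{(q)})^{2},\, \ldots,\, (\alpha^{(q)})^{n}\bigr)$ with $\alpha^{(q)} \defeq \frac{c\, y_1^{(q)}}{(1-c)\, y_0^{(q)}}$. The rescaled matrix is then a standard Vandermonde matrix on the nodes $\alpha^{(1)},\ldots,\alpha^{(n+1)}$, so it is non-singular iff these values are pairwise distinct. Hence the proposition reduces to exhibiting a parameterized family of blocks for which all $y_0^{(q)}$ are non-zero and the $n+1$ ratios $\alpha^{(q)}$ are pairwise distinct.

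Producing this distinctness is the main obstacle, and it is precisely what the block design in Section~\ref{sec:blocks} is tuned to deliver. I expect the parameter $q$ to control the length of a path-like block, each additional link acting on the pair $(y_0, y_1)$ via a $2\times 2$ ``small matrix'' of the type in Lemma~\ref{lemma:determinant:connected}. Because $Q$ is a final type-I query, its lineage over a single link connects the boundary atoms, so Lemma~\ref{lemma:determinant:connected} guarantees that the determinant of the small matrix is a non-zero polynomial, and Lemma~\ref{lemma:three:values} lets me fix the remaining tuple probabilities to values in $\set{0,c,1}$ so that this small matrix becomes numerically non-singular. Iterating a non-singular $2\times 2$ matrix on $(y_0, y_1)$ generates infinitely many distinct values of the ratio $y_1/y_0$, from which $n+1$ can be selected as the parameters $q = 1, \ldots, n+1$, closing the argument.
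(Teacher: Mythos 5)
Your reduction of $\mathcal{N}$ to a Vandermonde matrix --- writing the $(q,q_1)$-entry as $(c\,y_1^{(q)})^{q_1}((1-c)y_0^{(q)})^{n-q_1}$, factoring $((1-c)y_0^{(q)})^{n}$ out of each row, and reducing non-singularity to the pairwise distinctness of the ratios $\frac{c}{1-c}\cdot\frac{y_1^{(q)}}{y_0^{(q)}}$ together with $y_0^{(q)}\neq 0$ --- is exactly the paper's proof of this proposition, which likewise defers the positivity and distinctness claims to the block-design section (Lemma~\ref{lem:NVandermondeQuotient}).

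However, the justification you sketch for the deferred step contains a real error. It is not true that iterating a non-singular $2\times 2$ matrix on $(y_0,y_1)$ generates infinitely many distinct values of the ratio $y_1/y_0$: the induced action on the ratio is a M\"obius transformation, and a non-singular matrix can act with a finite orbit (e.g.\ $\begin{bsmallmatrix}0&-1\\1&0\end{bsmallmatrix}$ induces $z\mapsto -1/z$, which has period two), and even for a well-behaved matrix the orbit is a single point if the starting vector happens to be an eigenvector. So non-singularity of the small matrix alone does not close the argument. What the paper actually proves is the specific identity $y_1^{(t+d)}y_0^{(t)}-y_1^{(t)}y_0^{(t+d)}=(y_{11}^{(d)}-y_{00}^{(d)})(y_{00}^{(t)}y_{11}^{(t)}-y_{01}^{(t)}y_{10}^{(t)})$, obtained by expanding $y_a^{(t+d)}=y_{a0}^{(t+d)}+y_{a1}^{(t+d)}$ through block composition and using the symmetry $y_{01}=y_{10}$; the first factor is strictly positive by monotonicity and non-redundancy of the lineage (Proposition~\ref{prop:lineageProps}), and the second is $\det(A^{(t)})\neq 0$, which follows from Lemma~\ref{lem:MatrixPower} and Lemma~\ref{lem:DetBNon0} (Corollary~\ref{corr:DetApNonZero}). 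The distinctness therefore rests on positivity and monotonicity of the concrete block probabilities, not on an abstract dynamical fact about non-singular iteration; as written, your final step would need to be replaced by an argument of this kind.
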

\begin{proof}
	Let $t\in \set{1,\dots,n+1}$ denote $n+1$ distinct, block disjoint TIDs. Each such TID induces a pair of probability values $y_0^{(t)},y_1^{(t)}$ (The construction of these TIDs is deferred to Section~\ref{sec:blocks}).
	The columns of $\mathcal{N}$ represent the values $q\in \set{0,\dots,n}$. Therefore, $[\mathcal{N}]_{tq}=\left(y_1^{(t)}c\right)^q\left((1-c)y_0^{(t)}\right)^{n-q}$. Factoring out $\left(y_0^{(t)}(1-c)\right)^n$ for each row of $\mathcal{N}$, we get $[\mathcal{N}]_{tq}=\left(\frac{c}{1-c} \cdot \frac{y_1^{(t)}}{y_0^{(t)}}\right)^q$. Hence, the quotient of each row $t$ is $\frac{c}{1-c} \cdot \frac{y_1^{(t)}}{y_0^{(t)}}$. 
	To prove the claim we need to show that $y_a^{(t)}>0$ for all $t\geq 1$ and all $a\in \set{0,1}$. Further, we need to show that for every pair $t_2 > t_1 \geq 1$ it holds that $\frac{y_1^{(t_1)}}{y_0^{(t_1)}}\neq \frac{y_1^{(t_2)}}{y_0^{(t_2)}}$. 
	We prove this claim in Lemma~\ref{lem:NVandermondeQuotient} in Section~\ref{sec:blocks} because it relies on the structural properties of the blocks. 
\end{proof}
}

To prove that $\M$ is invertible, we show that it 
meets three
conditions that characterize non-singular matrices.
\eat{ allow us to prove that the
resulting system of linear equations, denoted $\M$, is invertible.
}In the rest of this section we present the three conditions on the
probabilities $y_{ab}$, and prove that if they are met then the
resulting system $\M$ of $(m+1)^2$ linear equations is non-singular.\eat{ Since
the size of the system is polynomial in the size of the data, then
this allows us to solve for the counts $\veck'$ in PTIME, thus proving
hardness.} In fact, we prove something more general required for \eat{ that we also need
in} proving hardness of Type-II queries.

Let $h \geq 1$ be a natural number.  Let
$\lambda_1, \lambda_2 \in \R$, and $a_i,b_i$, $i=0,\ldots, h$ be real
numbers satisfying the following conditions:
\begin{align}
\lambda_1\neq \pm \lambda_2 && \mbox{ and } && \lambda_1\neq 0, \lambda_2 \neq 0 \label{eq:conditionLambda}\\
b_i \neq 0 && && \forall i=0,h \label{eq:conditiononZero}\\
a_ib_j{\neq}a_jb_i&&  && i\neq j \label{eq:conditionCoefficients}
\end{align}
Let $\mb p= \set{p_1, \dots,p_h}$ be $h$ natural numbers where
$p_i \geq 1$.  For $i=0,h$ we define:
\begin{align}
  y_{i}(\mb p)=\prod_{j=1}^h\left(a_i\lambda_1^{p_j} + b_i\lambda_2^{p_j}\right) && \forall i\in\set{0,\ldots,h}
\label{eq:yi}
\end{align}
Finally, define the following $(m+1)^h \times (m+1)^h$ matrix $\M$:

\begin{align*}
  \M_{\mb p, \mb k} \defeq & \prod_{i=0,h} y_i^{k_i}(\mb p), & \mb p \in & \set{1,2,\ldots,m+1}^h \\
\mb k \in & \set{0,1,\ldots,m}^h && k_0 + \ldots + k_h = m
\end{align*}

\begin{example}
  For a simple example, assume $h=2$, $m=2$, then:
  \begin{align*}
    y_i(p_1,p_2) = & (a_1 \lambda_1^{p_1} + b_1 \lambda_2^{p_1})(a_1 \lambda_1^{p_2} + b_1 \lambda_2^{p_2})
\ \ \ i =  0,1,2 \\
\mb \M =& \left[
    \begin{array}{cccc}
      y_0^2(1,1)&y_0(1,1)y_1(1,1)&\ldots&y_2^2(1,1)\\
      y_0^2(1,2)&y_0(1,2)y_1(1,2)&\ldots&y_2^2(1,2)\\
               & \ldots & & \\
      y_0^2(3,3)&y_0(3,3)y_1(3,3)&\ldots&y_2^2(3,3)
    \end{array}
\right]
  \end{align*}
  Each row of $\mb \M$ has all products $y_0^{k_0}y_1^{k_1}y_2^{k_2}$
  where $k_0+k_1+k_2=2$.
\end{example}

\begin{theorem} \label{th:non-singular}
  The matrix $\M$ is non-singular
\end{theorem}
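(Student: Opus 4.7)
My plan is to convert the non-singularity of $\M$ into a linear-independence statement for a family of symmetric polynomials in $h$ variables, and then prove that independence via a kernel analysis of a natural polynomial-ring homomorphism.

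The matrix entries factor as $\M_{\mathbf{p}, \mathbf{k}} = \prod_j W_{\mathbf{k}}(p_j)$, where $W_{\mathbf{k}}(p) = \prod_i (a_i \lambda_1^p + b_i \lambda_2^p)^{k_i}$. Substituting $s_j := (\lambda_1/\lambda_2)^{p_j}$ and pulling out the row scalar $\lambda_2^{m(p_1 + \cdots + p_h)}$, which is nonzero by~\eqref{eq:conditionLambda}, each entry becomes $F_{\mathbf{k}}(\mathbf{s}) := \prod_{i=0}^h G_i(\mathbf{s})^{k_i}$ with $G_i(\mathbf{s}) := \prod_j (a_i s_j + b_i)$. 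Because $\lambda_1/\lambda_2 \notin \{-1, 0, 1\}$ (again by~\eqref{eq:conditionLambda}), the values $s^{(p)} := (\lambda_1/\lambda_2)^p$ are pairwise distinct for $p \in \{1, \ldots, m+1\}$. Each $F_{\mathbf{k}}$ is a symmetric polynomial of multidegree at most $(m, \ldots, m)$, and the one-dimensional Vandermonde argument applied separately in each coordinate shows that evaluation on the grid $\{s^{(1)}, \ldots, s^{(m+1)}\}^h$ is an $S_h$-equivariant isomorphism from the $\binom{m+h}{h}$-dimensional space of such symmetric polynomials onto the symmetric part of $\R^{(m+1)^h}$. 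Hence the column rank of $\M$ equals $\dim \mathrm{span}\{F_{\mathbf{k}} : |\mathbf{k}| = m\}$, and the theorem reduces to proving that these $\binom{m+h}{h}$ polynomials are linearly independent.

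Expanding $G_i = \sum_{l=0}^h a_i^l b_i^{h-l} e_l(\mathbf{s})$ in the elementary symmetric polynomials $e_l$ (with $e_0 = 1$), the coefficient matrix $[a_i^l b_i^{h-l}]_{i,l}$ has determinant $\prod_{i<j}(a_j b_i - a_i b_j)$, which is nonzero by~\eqref{eq:conditionCoefficients}. Therefore $G_0, \ldots, G_h$ form a basis of $V := \R \cdot 1 \oplus \bigoplus_{l=1}^h \R \cdot e_l$, and in particular $1 = \sum_i \gamma_i G_i$ for some scalars $\gamma_i$ (not all zero). Consider the ring homomorphism $\phi: \R[t_0, \ldots, t_h] \to \R[e_1, \ldots, e_h]$ sending $t_i \mapsto G_i$: it is surjective since its image contains $V$ and hence every $e_l$, and comparing the Krull dimensions $h+1$ and $h$ shows that $\ker\phi$ is a principal height-one prime ideal. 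Since the irreducible linear element $L - 1$ lies in the kernel, with $L := \sum_i \gamma_i t_i$, we conclude $\ker\phi = (L - 1)$.

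The crux of the argument, and the step I expect to be the main obstacle, is showing that the inhomogeneous generator $L - 1$ kills no nonzero homogeneous polynomial of degree $m$. Suppose $P = Q(L - 1)$ with $P$ homogeneous of degree $m$. Decomposing $Q = \sum_{j \geq 0} Q_j$ into its homogeneous components and matching degree-$k$ parts of the identity $QL - Q = P$ yields $Q_{k-1} L - Q_k = 0$ for $k \neq m$ and $Q_{m-1} L - Q_m = P$. Induction from $Q_{-1} = 0$ forces $Q_0 = Q_1 = \cdots = Q_{m-1} = 0$, whence $Q_m = -P$, and the same recursion propagated upward gives $Q_{m+j} = -P L^j$ for every $j \geq 1$. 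Since $Q$ is a polynomial, these terms must vanish for large $j$; but $L$ is nonzero in the integral domain $\R[t_0, \ldots, t_h]$, so $P = 0$. Thus $\phi$ is injective on the degree-$m$ homogeneous component, the monomials $\{\prod_i G_i^{k_i} : |\mathbf{k}| = m\}$ are linearly independent in $\R[e_1, \ldots, e_h]$, and the theorem follows.
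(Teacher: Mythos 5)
Your route to the key linear-independence statement is genuinely different from the paper's, and the algebra in it is sound. The paper first proves independence of the monomials $\prod_i y_i^{k_i}$ via a Kronecker product of Vandermonde matrices (Lemma~\ref{lem:independentPolynomials}), transfers it through the substitution $y_i=\prod_j(c_i+z_j)$ by showing the Jacobian is nonzero with the Cauchy double-alternant formula (Lemmas~\ref{lem:HInvertible} and~\ref{lem:fLInearlyIndependent}), and then --- because the resulting polynomials have per-variable degree exceeding $m$ --- it cannot use a grid-evaluation isomorphism and instead assembles a nonsingular matrix row by row via a greedy minor-expansion argument (Lemma~\ref{lemma:another:nonsingular}). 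Your replacement of all of this by the computation $\ker\phi=(L-1)$, together with the observation that an inhomogeneous degree-one generator cannot annihilate a nonzero homogeneous polynomial, is correct and arguably cleaner: the determinant identity $\det[a_i^{l}b_i^{h-l}]=\prod_{i<j}(a_jb_i-a_ib_j)$ is exactly right, and conditions \eqref{eq:conditionLambda}--\eqref{eq:conditionCoefficients} are each used where they must be.

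There is, however, a mismatch between the matrix you analyze and the one the reduction needs. You index the columns by the $\binom{m+h}{h}$ compositions $(k_0,\dots,k_h)$ of $m$ with all parts nonnegative, so your $\M$ is not square; the paper's $\M$ is $(m+1)^h\times(m+1)^h$, with columns indexed by a free tuple $(k_1,\dots,k_h)\in\{0,\dots,m\}^h$ and $k_0\defeq m-(k_1+\cdots+k_h)$ allowed to be negative (this is why the paper's own proof immediately divides each row by $y_0^m(\mb p)$ and works with $\prod_{i=1}^h(y_i/y_0)^{k_i}$ for free exponents). Your argument covers only the columns with $k_0\ge 0$, and extending it is not free: after clearing the denominator $G_0$, the relevant polynomials are homogeneous of degree $mh$, hence of degree up to $mh$ in each $s_j$, so the $(m+1)$-points-per-coordinate grid no longer gives an injective evaluation map and your identification of the rank of $\M$ with $\dim\mathrm{span}\{F_{\mb k}\}$ fails for those columns. (Your kernel argument itself extends verbatim to homogeneous degree $mh$; what is missing is the transfer from polynomial independence back to matrix rank, which is precisely the role of the greedy construction in Lemma~\ref{lemma:another:nonsingular}.) In fairness, the theorem's displayed definition and its worked example are themselves inconsistent on this point, and your version does suffice for the hardness reduction, since the genuinely nonzero unknowns $\#\mb k$ are exactly the simplex-indexed ones and full column rank determines them uniquely; but as a proof of non-singularity of the square matrix the paper actually constructs and solves by Gaussian elimination, the argument is incomplete.
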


Before we prove the theorem, we show how to use it to prove
Theorem~\ref{th:hard:type:1}.  We set $h=2$.  Given the P2CNF formula
defined by the graph $(U,E)$, we will construct a TID obtained as a
disjoint union of blocks $\pdb = \bigcup_{(u,v) \in E}B(u,v)$.  The
probability of $Q$ is given by Eq.~\eqref{eq:probeqReuctionSymmetric}.
Fix two numbers $p_1, p_2 \geq 1$.  We will describe in the next
section how to construct a symmetric block $B(u,v)$ such that its
probabilites are given by expresions similar to~\eqref{eq:yi}, more
precisely:
\begin{align*}
y_{00} = & \prod_{j=1,2}(a_{00} \lambda_1^{p_j} + b_{00}  \lambda_2^{p_j}) \\
y_{10} = & \prod_{j=1,2}(a_{10} \lambda_1^{p_j} + b_{10}  \lambda_2^{p_j}) \\
y_{11} = & \prod_{j=1,2}(a_{11} \lambda_1^{p_j} + b_{11}  \lambda_2^{p_j})
\end{align*}
where the coefficients satisfy conditions
~\eqref{eq:conditionLambda}-\eqref{eq:conditionCoefficients}.  We use
repeatedly the oracle for $\pr(\pdb)$, once for each $(m+1)^2$
combination of values $p_1, p_2 \in \set{1,2,\ldots,m+1}$, and obtain
a system of linear equations with unknowns $\#\mb k'$.  By
Theorem~\ref{th:non-singular} this system has a unique solution, which
can be computed in polynomial time using Gaussian elimination.  This
gives us all the undirected counts $\#\mb k'$, from which we extract
$\#\Phi = \sum_{\#\mb k': k_{00}=0} \#\mb k'$.

In the rest of this section we prove Theorem~\ref{th:non-singular}, by
proving a series of lemmas in calculus, of possible independent
interest.

\eat{
will allow us
to apply the same reduction for proving hardness of $\fTypebb$
queries. We observe that $\Pr_{\pdb}(Q)$ can actually be expressed as:
\begin{equation}
\label{eq:generalProb}
\Pr_\pdb(Q)=\frac{1}{2^n}\sum_{\substack{\bk(k_1,\dots,k_h):\\ k_i \leq m}}\veck \prod_{i=1}^h(y_i)^{k_i}
\end{equation}
where $h$ is the number of possible \e{edge labels} where $h=k+\binom{k}{2}$ for $k\geq 2$ \footnote{For example, if $k=3$ then there are $6$ labels $11,12,13,22,23,33$}.
Here, the indices $1,\dots,h$ refer to different labels of the edges induced by the assignment $\theta: U \rightarrow \set{1,\dots,k}$ to its endpoints, and $y_i$ is the probability of the lineage over a block whose endpoints are assigned according to the label $i$. In~\eqref{eq:probeqReuctionSymmetric} we had $k=2$ (e.g., $\set{0,1}$), and consequently, $h=\set{1,2,3}$ where $k_1\eqdef k_{00}, k_2\eqdef k_{10}$, and $k_3 \eqdef k_{11}$. 

As in the case where $k=2$, we show that if the block probabilities  $y_{1},\cdots,y_{h}$ meet three important conditions, then the resulting system of linear equations~\eqref{eq:generalProb} is invertible, and thus has a single solution. 
We assume that $h$ is constant, thus we can solve for the signature counts $\veck$ in time $(m+1)^h$ which is polynomial in the size of the data.

We start by taking a closer look at the system of linear equations $\M$ described by~\eqref{eq:generalProb}. Let $N=(m+1)^h$, and let $\overline{y}_1,\overline{y}_2,\dots,\overline{y}_N$ denote $N$ $h$-tuples. That is, $\overline{y}_r=\set{y_{r,1},\dots,y_{r,h}}$.
Then the system of linear equations associated with~\eqref{eq:generalProb} is $\M\boldsymbol{x}=\boldsymbol{P}$ where:
\begin{align}
\M_{rj}&=\prod_{i=1}^h\left(y_{r,i}\right)^{k_{j,i}} \label{eq:generalSystemOfEq} \\
\boldsymbol{x}&=\set{ \veck_r: r \in [1, (m{+}1)^h]} \nonumber \\
\boldsymbol{P}&=\set{\Pr_r(Q): r\in [1, (m+1)^h]} \nonumber
\end{align}
That is, every row $r$ of $\M$ is associated with a vector of probabilities $\overline{y}_r=\set{y_{r,1},\dots,y_{r,h}}$ and every column $j$ of $\M$ is associated with a signature $\bk_j=\set{k_{j,1},\dots, k_{j,h}}$.
}

\begin{lemma}\label{lem:independentPolynomials}
  For each $\mb k = (k_1,\dots,k_h){\in} \set{0,\dots,m}^h$, define
  the following polynomial in variables $\mb y = (y_1, \ldots, y_h)$:
  $g_{\mb k}(\mb y)\eqdef y_1^{k_1}\cdots y_h^{k_h}$.  Then, the
  polynomials $g_{\mb k}$, $\mb k \in \set{0,\dots,m}^h$ are linearly
  independent.
\end{lemma}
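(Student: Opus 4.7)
The statement is essentially the well-known fact that distinct monomials in the polynomial ring $\R[y_1,\dots,y_h]$ are linearly independent, so the proof should be short and standard. I would formalize it as follows.

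Suppose for contradiction that the polynomials are linearly dependent, so there exist constants $\{c_{\mb k}\}_{\mb k \in \set{0,\dots,m}^h}$, not all zero, such that
\begin{equation*}
\sum_{\mb k \in \set{0,\dots,m}^h} c_{\mb k}\, y_1^{k_1}\cdots y_h^{k_h} \equiv 0
\end{equation*}
holds as a polynomial identity in $\R[y_1,\dots,y_h]$. The plan is to isolate each coefficient $c_{\mb k^*}$ by applying a suitable differential operator and then evaluating at the origin, which is the cleanest route when the variables range over $\R$.

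Concretely, fix any target index $\mb k^* = (k_1^*, \dots, k_h^*) \in \set{0,\dots,m}^h$ and apply the operator $\partial^{k_1^* + \cdots + k_h^*} / (\partial y_1^{k_1^*} \cdots \partial y_h^{k_h^*})$ to both sides of the identity, then evaluate at $y_1 = \cdots = y_h = 0$. For a single monomial $y_i^{k_i}$ differentiated $k_i^*$ times and then set to $0$, the outcome is $k_i^*!$ if $k_i = k_i^*$, is $0$ if $k_i < k_i^*$ (the derivative vanishes identically), and is $0$ if $k_i > k_i^*$ (a positive power of $y_i$ survives and vanishes at $y_i = 0$). Multiplying over all $i$, the only surviving term in the sum is the one with $\mb k = \mb k^*$, yielding
\begin{equation*}
c_{\mb k^*} \prod_{i=1}^h k_i^*! \;=\; 0,
\end{equation*}
so $c_{\mb k^*} = 0$. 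Since $\mb k^*$ was arbitrary, every $c_{\mb k}$ is zero, contradicting our assumption and completing the proof.

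I do not expect any serious obstacle here; the only thing to be slightly careful about is making sure the derivative argument handles both the $k_i < k_i^*$ and $k_i > k_i^*$ cases cleanly. An equivalent, equally short alternative would be a straightforward induction on $h$: collect the polynomial identity as a polynomial in $y_h$ with coefficients in $\R[y_1,\dots,y_{h-1}]$, use the fact that a polynomial in one variable vanishing identically has all coefficients zero, and apply the inductive hypothesis to each coefficient. Either route gives a complete proof in a few lines.
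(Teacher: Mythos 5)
Your proof is correct, and the statement is indeed just the standard fact that distinct monomials of $\R[y_1,\dots,y_h]$ are linearly independent; both your differentiation-at-the-origin argument and your sketched induction on $h$ are complete and airtight. However, your route is genuinely different from the paper's. The paper instead evaluates the hypothetical dependence relation at all $(m+1)^h$ points of a product grid $S_1\times\cdots\times S_h$, where each $S_i$ consists of $m+1$ distinct reals, and observes that the resulting linear system in the coefficients has matrix $A_1\otimes\cdots\otimes A_h$, a Kronecker product of Vandermonde matrices, hence non-singular; this forces all coefficients to vanish. What the paper's approach buys is slightly more than the lemma states: it shows the monomials are linearly independent even as \emph{functions restricted to any such finite grid}, which is the form of the fact actually exploited downstream (cf.\ Lemma~\ref{lem:polynomialNonZero} and the grid-based construction in Lemma~\ref{lemma:another:nonsingular}), and it keeps the whole section in a uniform ``evaluate on a grid / Vandermonde'' idiom. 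What your approach buys is brevity and elementarity --- no auxiliary matrices are needed, and the identity-of-polynomials version of the claim (which is all the lemma literally asserts) follows in a few lines. Either proof is acceptable as a replacement for the paper's.
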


\begin{proof}
  Assume the contrary, that there exist coefficients
  $a_{k_1, \ldots, k_h}$ such that, denoting
  $f(\mb y) \defeq \sum_{\mb k} a_{\mb k} \prod_i y_i^{k_i}$, the
  polynomial $f$ is identically 0, $f\equiv 0$.  For each $i=1,h$, let
  $S_i \defeq \set{v_{i,0}, \ldots, v_{i,m}}$ be a set of $m+1$
  distinct values, and denote by $A_i$ the Vandermonde matrix defined
  by the set $S_i$: $(A_i)_{k\ell} = v_{i,\ell}^k$, for
  $0 \leq k,\ell \leq m$.  By our assumption that $f\equiv 0$, then for any combination of values $(u_1,\dots,u_h)\in S_1\times \dots \times S_h$ we have: \eat{For any values $u_i \in S_i$, the
  polynomial $f$ is 0, which implies:}
  \begin{align*}
     \sum_{\mb k} a_{\mb k} \prod_i u_i^{k_i} = & 0
  \end{align*}
  By using all $(m+1)^h$ combinations of values $u_1, \ldots, u_h$, we
  obtain a linear system of $(m+1)^h$ unknowns $a_{k_1, \ldots, k_h}$,
  whose matrix is the Kronecker product
  $A \defeq A_1 \otimes A_2 \otimes \cdots \otimes A_h$.  Since
  $\det(A_i)\neq 0$ for all $i$, it follows that $\det(A)\neq 0$.
  This implies that the system has a single solution,
  $a_{k_1, \ldots, k_h}=0$ for all $k_1, \ldots, k_h$, proving that
  the polynomials are linearly independent.
\eat{
  Let $y_1,\dots,y_{(m+1)h}$ denote $(m+1)h$ distinct values grouped
  into $h$ sets of $(m+1)$ values:
  $S_1\eqdef \set{y_{1,1},\dots,y_{1,m+1}},\dots, S_h\eqdef
  \set{y_{h,1},\dots,y_{h,m+1}}$. Let $N\eqdef(m+1)^h$.  Define the
  $N\times N$-dimensional matrix $A$ as follows: every row $r$ of $A$
  is associated with an $h$-tuple
  $\overline{y}_r \in S_1 \times \cdots \times S_h$, and every column
  $c$ of $A$ with an $h$-tuple $\overline{k}_c \in [0,m]^h$, then:
  $A[r,c]\eqdef g_{k_{c,1}\dots k_{c,h}}(\overline{y}_r)$. We claim
  that $A$ is non-singular.  Observe that
  $A=A_1 \otimes A_2 \otimes \cdots \otimes A_h$ where $\otimes$ is
  the Kronecker product, and the matrix $A_k$ is an
  $(m+1)\times (m+1)$ matrix defined as follows: $A_k[i,j]=y_{k,i}^j$.
  We further notice that by our assumption that the $y_{k,i}$s are
  distinct, every matrix $A_k$ ($k\leq h$) is a Vandermonde
  matrix. Since Vandermonde matrices are invertible, and the Kronecker
  product of invertible matrices is invertible, we get that $A$ is
  invertible. In particular, the rows and columns of $A$ are linearly
  independent.
	
	Next, we claim that the polynomials $g_{k_1,\dots,k_h}$ are linearly independent. Assume otherwise, and let $d_{i_1,\dots,i_h}: i_1,\dots,i_h \in \set{0,\dots,m}$ be a set of coefficients, not all of which are $0$, such that:
	\[
	\sum_{\substack{i_1,\dots,i_h\in \\ \set{0,\dots,m}^h}}d_{i_1,\dots,i_h}g_{i_1,\dots,i_h}(y_1,\dots,y_h)=0
	\]
	for every possible $h$-tuple $(y_1,\dots,y_h)\in S_1\times \cdots \times S_h$. In particular, this means that for every such $h$-tuple it holds that:
	\[
	g_{a_{1}\dots a_h}(y_1,\dots,y_h)=\minus\frac{1}{d_{a_1\dots a_h}}\sum_{\substack{i_1\dots i_h \neq \\ a_1 \dots a_h}} d_{i_1\dots i_h}g_{i_1\dots i_h}(y_1,\dots,y_h)
	\] 
	In other words, the column $a_{1}\dots a_h$ in $A$ is a linear combination of the other columns in the matrix, which brings us to a contradiction that $A$ is invertible.
}
\end{proof}

\begin{lemma}
\label{lem:polynomialNonZero}
Let $f(z_1,\dots,z_h)$ be a polynomial of degree $\leq m$ in every
$z_i$, such that $f\not\equiv 0$.  For all $i=1,h$, let
$A_i\subseteq \R$ be a set of $m+1$ distinct, real values.  Then there
exists values $u_i \in A_i$, for $i=1,h$, such that $f(u_1,\dots,u_h)\neq 0$.
\end{lemma}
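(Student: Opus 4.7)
The plan is to prove this lemma by induction on the number of variables $h$, using the fact that a nonzero univariate polynomial of degree $\leq m$ has at most $m$ roots.

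For the base case $h=1$, $f(z_1)$ is a nonzero polynomial of degree $\leq m$ in one variable, so it has at most $m$ roots in $\R$. Since $|A_1| = m+1$, the set $A_1$ cannot be contained entirely in the root set of $f$, and therefore some $u_1 \in A_1$ satisfies $f(u_1) \neq 0$.

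For the inductive step, assume the statement holds for $h-1$ variables, and consider $f(z_1,\dots,z_h) \not\equiv 0$ with degree $\leq m$ in each variable. Expand $f$ as a polynomial in $z_1$:
\begin{align*}
f(z_1,\dots,z_h) = \sum_{k=0}^m f_k(z_2,\dots,z_h)\, z_1^k,
\end{align*}
where each coefficient $f_k$ is a polynomial in $z_2,\dots,z_h$ of degree $\leq m$ in each variable. Since $f \not\equiv 0$, at least one coefficient, say $f_{k_0}$, is not identically zero. By the induction hypothesis applied to $f_{k_0}$ with the sets $A_2,\dots,A_h$, there exist $u_i \in A_i$ for $i=2,\dots,h$ such that $f_{k_0}(u_2,\dots,u_h) \neq 0$.

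Now define the univariate polynomial $g(z_1) \defeq f(z_1, u_2, \ldots, u_h)$. This polynomial has degree $\leq m$, and its coefficient of $z_1^{k_0}$ equals $f_{k_0}(u_2,\dots,u_h) \neq 0$, so $g \not\equiv 0$. Applying the base case to $g$ with the set $A_1$ yields some $u_1 \in A_1$ with $g(u_1) = f(u_1,u_2,\dots,u_h) \neq 0$, completing the induction. No step presents a serious obstacle; the argument is a routine Schwartz--Zippel-style induction, and the hypothesis $|A_i| = m+1$ exactly matches the maximum number of roots a degree-$m$ univariate polynomial can have, which is what makes the argument tight.
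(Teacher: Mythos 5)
Your proof is correct and takes essentially the same approach as the paper: induction on the number of variables combined with the fact that a nonzero univariate polynomial of degree $\leq m$ has at most $m$ roots. The only cosmetic difference is that you isolate a nonzero coefficient $f_{k_0}$ and recurse on it before handling the remaining variable, whereas the paper substitutes the $m+1$ values of $A_h$ directly and rules out the case that all substitutions annihilate $f$ via a divisibility argument; both variants are equally valid.
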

\begin{proof}
  By induction on $h$. When $h=1$, then $f$ has at most $m$ roots,
  hence there is at least one value $u_1\in A_1$ such that
  $f(u) \neq 0$. When $h > 1$, consider any value $a\in A_h$, and set
  $z_h=a$.  If $f[z_h := a]\not\equiv 0$, then by induction on $h$ we
  get values $u_1 \in A_1,\dots, u_{h-1} \in A_{h-1}$ for the other
  $h-1$ variables such that
  $f[z_1:= u_1, \dots, z_{h-1}:= u_{h-1}, z_h:= a]\neq 0$.  If
  $f[z_h:= a]\equiv 0$ for all $m+1$ values $a \in A_h$, then $f$ can
  be divided by the polynomial $\prod_{a \in A_h}(z_h-a)$, which has
  degree $m+1$, contradiction.
\end{proof}

\begin{lemma}
\label{lem:fLInearlyIndependent}
Let $g_1, g_2, \ldots, g_N$ be linearly independent multivariate
polynomials, in $h$ variables $y_1, \ldots, y_h$.  Let
$H:A (\subseteq \R^h)\rightarrow \R^h$ be a differentiable function
such that its Jacobian $\J(H)\neq 0$ at some interior point in $A$,
and define $f_i(z_1,\dots,z_h)\eqdef g_i(H(z_1,\dots,z_h))$ for every
$i$. Then the functions $f_1,\dots,f_N$ are linearly independent.
\end{lemma}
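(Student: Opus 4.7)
The plan is to proceed by contradiction, reducing the claim to the standard fact that a polynomial vanishing on an open subset of $\R^h$ must be identically zero.

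Suppose, for the sake of contradiction, that $f_1,\dots,f_N$ are linearly dependent. Then there exist constants $c_1,\dots,c_N$, not all zero, with $\sum_{i=1}^N c_i f_i(z_1,\dots,z_h)=0$ for every $(z_1,\dots,z_h)$ in the domain $A$. Define the combined polynomial
\[
G(y_1,\dots,y_h)\defeq \sum_{i=1}^N c_i\, g_i(y_1,\dots,y_h).
\]
Because the $g_i$'s are linearly independent as polynomials, $G\not\equiv 0$. However, by definition of the $f_i$'s we have $G(H(z))=0$ for every $z\in A$.

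Now I would invoke the inverse function theorem at the interior point $z_0\in A$ where the Jacobian $\J(H)(z_0)\neq 0$: there is an open neighborhood $U\subseteq A$ of $z_0$ such that $H(U)$ is an open subset of $\R^h$ (in fact $H$ restricts to a diffeomorphism $U\to H(U)$). Since $G\circ H \equiv 0$ on $U$, the polynomial $G$ vanishes on the open set $H(U)\subseteq \R^h$.

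The final step uses the standard fact that a nonzero multivariate polynomial in $h$ variables cannot vanish on any nonempty open subset of $\R^h$ (which follows, e.g., by induction on $h$: restricting to a line parallel to a coordinate axis gives a univariate polynomial with infinitely many roots, forcing it to be zero, and iterating). This contradicts $G\not\equiv 0$, completing the proof. The argument is short and the only delicate step is the appeal to the inverse function theorem to ensure that $H(U)$ actually contains an open set; this is precisely why the hypothesis on the Jacobian, rather than just differentiability, is needed.
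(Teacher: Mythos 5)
Your proposal is correct and follows essentially the same route as the paper: assume a dependency among the $f_i$, form $G=\sum_i c_i g_i$, use the inverse function theorem at the point of nonzero Jacobian to see that $G$ vanishes on the image of a neighborhood, and conclude $G\equiv 0$, contradicting linear independence of the $g_i$. The only cosmetic difference is that the paper finishes by evaluating $G$ on a finite grid of $(m+1)^h$ points inside that image and invoking its Lemma~\ref{lem:polynomialNonZero}, whereas you invoke the equivalent fact that a nonzero polynomial cannot vanish on a nonempty open set.
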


\begin{proof}
  Assume, by contradiction, that there exist constants $a_1,\dots,a_N$
  not all of which are $0$ such that:
  $F =a_1f_1 + a_2f_2 + \dots + a_Nf_N \equiv 0$. We prove that
  $G =a_1g_1 + a_2g_2 + \dots + a_Ng_N \equiv 0$, which is a
  contradiction.  Let $\mb v = (v_1, \ldots, v_h) \in A$ be a point
  where $\J(H) \neq 0$.  By the inverse function theorem, the function
  $H$ is invertible in some neighborhood of $H(\mb v)$.  Assume
  w.l.o.g. that this neighborhood is a product of open intervals,
  $\prod_i (\alpha_i, \beta_i)$, where $\alpha_i < \beta_i$, and let
  $A_i \subseteq (\alpha_i,\beta_i)$ be any finite set with $m+1$
  distinct values, for each $i=1,h$.  Since $H$ is invertible on
  $A_1 \times \cdots \times A_h$, for any combination of values
  $\mb u \in A_1 \times \cdots \times A_h$ there exists
  $\mb w \in \R^h$ such that $H(\mb w) = \mb u$.  By assumption,
  $F(\mb w)=0$, and this implies
  $G(\mb u) = G(H(\mb w)) = F(\mb w) =0$.
  Lemma~\ref{lem:polynomialNonZero} implies that $G \equiv 0$, which
  is a contradiction.
\eat{
Since $\J(H) \not\equiv 0$, there exists a point $(u_1,\dots,u_h)$ where the transformation $H$ is invertible in the neighborhood of $(u_1,\dots,u_h)$. Consider the image of the neighborhood of $(u_1,\dots,u_h)$ under $H$. 
That image contains a hypercube $[a_1,b_1]\times [a_2,b_2]\times  \cdots \times [a_h,b_h]$.  Choose $m+1$ distinct values in each of the $h$ intervals.
For every $h$-tuple $(y_1,\dots,y_h) \in [a_1,b_1]\times [a_2,b_2]\times  \cdots \times [a_h,b_h]$ we have that $G(y_1,\dots,y_h)=F(H(z_1,\dots,z_h))=0$, since the latter is, by assumption, identically $0$. Since the interval contains at least $h(m+1)$ values (where  $h\geq 1$), then by Lemma~\ref{lem:polynomialNonZero}, $G$ is identically $0$, and we arrive at a contradiction.}
\end{proof}

\begin{lemma}
\label{lem:HInvertible}
Let $c_1, \ldots, c_h$ be distinct real values, and let $H : \R^h
\rightarrow \R^h$ be the following function: $H(\mb z) = \mb y$,
where:
\begin{align}
y_i \eqdef & \prod_{j=1,h}(c_i + z_j) \label{eq:transformationH}
\end{align} 
Let $\mb u = (u_1, \ldots, u_h)$ be any point with distinct
coordinates, i.e. $u_i \neq u_j$, such that $u_i + c_j \neq 0$ for all
$i,j$.  Then the Jacobian of $H$ at $\mb u$ is nonzero,
$\J(H)(\mb u) \neq 0$.
\end{lemma}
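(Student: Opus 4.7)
The plan is to compute the Jacobian matrix explicitly, factor it as a diagonal matrix times a Cauchy-type matrix, and then invoke the classical Cauchy determinant formula to conclude non-vanishing.

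First, I would differentiate: since $y_i = \prod_{j=1}^h (c_i + z_j)$, we have
\[
\frac{\partial y_i}{\partial z_j} = \prod_{k \neq j}(c_i + z_k),
\]
so the Jacobian entry evaluated at $\mb u$ is $J_{ij} = \prod_{k \neq j}(c_i + u_k)$. Writing this as
\[
J_{ij} = \frac{1}{c_i + u_j} \prod_{k=1}^h (c_i + u_k),
\]
we see that $J = D \cdot C$, where $D$ is the diagonal matrix with entries $D_{ii} = \prod_{k=1}^h (c_i + u_k) = y_i(\mb u)$, and $C$ is the Cauchy matrix $C_{ij} = 1/(c_i + u_j)$.

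Next, I would apply the well-known Cauchy determinant identity. With the substitution $x_i = c_i$ and $w_j = -u_j$, we have $c_i + u_j = x_i - w_j$, hence
\[
\det(C) = \frac{\prod_{1 \leq i < k \leq h}(c_i - c_k)(u_i - u_k)}{\prod_{i,j=1}^h (c_i + u_j)}.
\]
By the hypothesis that the $c_i$ are distinct and the $u_i$ are distinct, the numerator is nonzero, and by $u_i + c_j \neq 0$ the denominator is nonzero, so $\det(C) \neq 0$. Likewise $\det(D) = \prod_i \prod_j (c_i + u_j) \neq 0$ for the same reason. Therefore
\[
\J(H)(\mb u) = \det(J) = \det(D)\det(C) \neq 0,
\]
which establishes the claim.

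The only non-routine step is recognizing the Cauchy determinant structure; once that is in hand, the non-vanishing of all three factors follows immediately from the distinctness and non-cancellation hypotheses. No obstacle beyond a careful computation is expected.
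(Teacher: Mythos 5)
Your proof is correct and follows essentially the same route as the paper's: both compute $\partial y_i/\partial z_j = \prod_{k\neq j}(c_i+z_k)$, factor out the row products $\prod_k(c_i+u_k)$ to reduce the Jacobian to the Cauchy matrix $\bigl(1/(c_i+u_j)\bigr)_{ij}$, and invoke the Cauchy determinant formula together with the distinctness and non-cancellation hypotheses. (As a minor aside, your denominator $\prod_{i,j=1}^h(c_i+u_j)$ in the Cauchy formula is the correct one; the paper's displayed version has a typo restricting the product to $i<j$.)
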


\begin{proof}
Recall the definition of the Jacobian:
$$\J(H) \eqdef 
\begin{bmatrix}
\frac{\partial y_{1}}{\partial z_1} & \cdots &  \frac{\partial y_{1}}{\partial z_h}
\\
\ddots& \ddots & \ddots
\\
\frac{\partial y_{h}}{\partial z_1} & \cdots &\frac{\partial y_{h}}{\partial z_h}
\end{bmatrix}
$$
Since $y_i=\prod_{j=1}^h(c_i+z_j)$ then $\frac{\partial y_i}{\partial z_k}=\prod_{j\neq k}(c_i+z_j)$. For each row $i\in [1,h]$, we factor out the product $\prod_{j=1}^h(c_i+z_j)$. This results in the following matrix:
$\J' \eqdef 
\begin{bmatrix}
\frac{1}{c_1+z_1} & \cdots &  \frac{1}{c_1+z_h}
\\
\ddots& \ddots & \ddots
\\
\frac{1}{c_h+z_1} & \cdots &\frac{1}{c_h+z_h}
\end{bmatrix}
$. 
We note that $\det(\J(H))=\det(\J')\cdot \prod_{i=1}^h \prod_{j=1}^h(c_i+z_j)$. Therefore, if  $\det(\J')\neq 0$ then $\det(\J)\neq 0 $.
The expression for $\det(\J')$ has a closed form~\cite{Krattenthaler}:
\begin{equation}\label{eq:CauchyDblAlternat}
\det(\J')=\frac{\prod_{1\leq i<j\leq h}(c_i-c_j)(z_i-z_j)}{\prod_{1\leq i<j\leq h}(c_i+z_j)}
\end{equation}
When $z_j= u_j$ for all $j$, then this value is $\neq 0$ because the
$c_i$'s are distinct, and the $u_i$'s are distinct.
\end{proof}

For the next two statements we fix $c_1, \ldots, c_h$ to distinct real
values and, for each $k_1,\dots,k_h{\in} \set{0,\dots,m}$, we define the
following polynomial in variables $\mb z = (z_1, \ldots, z_h)$:
\begin{align}
\label{eq:fPolynomials}
f_{k_1,\dots,k_h}(\mb z)\eqdef \prod_{i=1}^h\prod_{j=1}^h(c_i+z_j)^{k_i}
\end{align}

\begin{corollary}
\label{corr:LinearlyIndependent}
The polynomials $f_{k_1,\dots,k_h}$ of~\eqref{eq:fPolynomials} are
linearly independent.
\end{corollary}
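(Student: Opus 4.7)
The plan is to recognize that the polynomials $f_{k_1,\dots,k_h}$ are exactly the compositions $g_{\mathbf{k}} \circ H$, where $g_{\mathbf{k}}(\mathbf{y}) = y_1^{k_1}\cdots y_h^{k_h}$ are the monomial polynomials of Lemma~\ref{lem:independentPolynomials} and $H$ is the transformation $y_i = \prod_{j=1}^h(c_i + z_j)$ of Lemma~\ref{lem:HInvertible}. Indeed, substituting this definition of $y_i$ into $g_{\mathbf{k}}(\mathbf{y})$ gives
\begin{align*}
g_{\mathbf{k}}(H(\mathbf{z})) = \prod_{i=1}^h \left(\prod_{j=1}^h(c_i+z_j)\right)^{k_i} = \prod_{i=1}^h \prod_{j=1}^h (c_i+z_j)^{k_i} = f_{k_1,\dots,k_h}(\mathbf{z}),
\end{align*}
so the identification is immediate.

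Next, I would verify that the hypotheses of Lemma~\ref{lem:fLInearlyIndependent} hold. First, the family $\{g_{\mathbf{k}}\}_{\mathbf{k}\in\{0,\dots,m\}^h}$ is linearly independent by Lemma~\ref{lem:independentPolynomials}. Second, I must exhibit an interior point of the domain at which the Jacobian $\J(H)$ is nonzero. By Lemma~\ref{lem:HInvertible}, it suffices to choose any point $\mathbf{u} = (u_1,\ldots,u_h)$ whose coordinates $u_j$ are pairwise distinct and which satisfies $u_j + c_i \neq 0$ for all $i,j$. Since the $c_i$ are fixed and there are only finitely many forbidden values, such a point $\mathbf{u}$ clearly exists (in fact any generic point works), and it lies in the interior of $\mathbb{R}^h$.

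With both hypotheses verified, Lemma~\ref{lem:fLInearlyIndependent} applies and yields that the functions $f_{\mathbf{k}} = g_{\mathbf{k}} \circ H$ are linearly independent, which is the claim. I do not anticipate any real obstacle here: the entire proof is a one-line composition argument glueing together the three previously established lemmas, the only mild subtlety being to make sure that the chosen witness point $\mathbf{u}$ simultaneously has distinct coordinates and avoids the hyperplanes $z_j = -c_i$, which is trivial.
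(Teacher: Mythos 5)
Your proposal is correct and is essentially identical to the paper's own proof: both identify $f_{\mathbf{k}} = g_{\mathbf{k}}\circ H$ and invoke Lemma~\ref{lem:independentPolynomials}, Lemma~\ref{lem:HInvertible}, and Lemma~\ref{lem:fLInearlyIndependent} in the same way. The only implicit hypothesis is that the $c_i$ are pairwise distinct (needed for Lemma~\ref{lem:HInvertible}), which the paper fixes in the setup preceding~\eqref{eq:fPolynomials}, so there is no gap.
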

\begin{proof}
  By Lemma~\ref{lem:independentPolynomials}, the polynomials
  $g_{k_1,\dots,k_h}(\mb y) \defeq \prod_{i=1}^hy_i^{k_{i}}$ are
  linearly independent.  By Lemma~\ref{lem:HInvertible}, the
  transformation $\mb z \mapsto \mb y$ given by
  ~\eqref{eq:transformationH} has a non-zero Jacobian (at some point).
  Then, by Lemma~\ref{lem:fLInearlyIndependent} it follows that the
  polynomials $f_{k_1,\dots,k_h}(\mb z)$ in~\eqref{eq:fPolynomials}
  are also linearly independent.
\end{proof}

\begin{lemma} \label{lemma:another:nonsingular} For each $i=1,h$, let
  $A_i\subseteq \R$ be a set of $m+1$ distinct, real values.  Consider
  the following $(m+1)^h \times (m+1)^h$ matrix, whose rows are
  indexed by $\mb u \in A_1, \ldots, A_h$, and whose columns are
  indexed by $\mb k \in \set{0,1,2,\cdots,m}^h$:
  \begin{align*}
    M_{\mb u, \mb k} \defeq \prod_{i=1}^h\prod_{j=1}^h(c_i+u_j)^{k_i}
  \end{align*}
  Then $M$ is non-singular.
\end{lemma}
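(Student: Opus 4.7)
The plan is to argue by contradiction using the linear independence established in Corollary~\ref{corr:LinearlyIndependent}. Suppose $M$ is singular; pick a nonzero vector $\mb a = (a_{\mb k})_{\mb k \in \{0,\dots,m\}^h}$ with $M\mb a = 0$. Unpacking $M\mb a = 0$ using the definition of $M$ and~\eqref{eq:fPolynomials}, this says that the polynomial
\[
F(\mb z) \defeq \sum_{\mb k \in \{0,\dots,m\}^h} a_{\mb k}\, f_{\mb k}(\mb z) = \sum_{\mb k} a_{\mb k}\prod_{i=1}^{h}\prod_{j=1}^{h}(c_i+z_j)^{k_i}
\]
vanishes at every point $\mb u$ of the grid $A_1 \times \cdots \times A_h$. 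By Corollary~\ref{corr:LinearlyIndependent}, the polynomials $\{f_{\mb k}\}$ are linearly independent, so $F \not\equiv 0$. The goal is to contradict the assumed vanishing on the grid.

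To this end, I would first pass to the $\mb y$-variables introduced in Lemma~\ref{lem:HInvertible}, namely $y_i(\mb z) = \prod_{j=1}^{h}(c_i + z_j)$. Writing $F = G \circ H$, where $G(\mb y) = \sum_{\mb k} a_{\mb k}\prod_i y_i^{k_i}$, the monomial basis ensures (Lemma~\ref{lem:independentPolynomials}) that $G \not\equiv 0$, and crucially $G$ has degree at most $m$ in each individual variable $y_i$. This change of variables is the essential maneuver: it reduces a polynomial whose per-variable degree in $\mb z$ could reach $hm$ to one whose per-variable degree in $\mb y$ is at most $m$, which is precisely the regime in which Lemma~\ref{lem:polynomialNonZero} applies.

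The main obstacle is then to apply Lemma~\ref{lem:polynomialNonZero} to $G$ on a grid actually realized within $H(A_1 \times \cdots \times A_h)$. Since $H$ is not coordinate-separable---each $y_i$ is symmetric in and depends on all of $\mb u$---the image of the grid is not a free tensor product, so the reduction is not automatic. I would handle this by using Lemma~\ref{lem:HInvertible} to fix a base point in the grid at which $H$ is locally a diffeomorphism; then, by perturbing a single coordinate $u_j$ at a time over $A_j$, the restrictions $u_j \mapsto y_i(\mb u)$ are affine with nonzero slope, hence bijections onto $m+1$ distinct values. Selecting a suitable collection of preimage points in $A_1 \times \cdots \times A_h$ whose $H$-images form a product grid $B_1 \times \cdots \times B_h \subseteq \R^h$ with $|B_i| = m+1$, Lemma~\ref{lem:polynomialNonZero} produces a $\mb y^{\star} \in B_1 \times \cdots \times B_h$ with $G(\mb y^\star) \neq 0$. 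The corresponding preimage $\mb u^\star \in A_1 \times \cdots \times A_h$ then satisfies $F(\mb u^\star) = G(\mb y^\star) \neq 0$, contradicting $M \mb a = 0$ and completing the proof.
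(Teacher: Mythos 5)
Your overall strategy---reduce non-singularity to the claim that no nonzero combination $F=\sum_{\mb k} a_{\mb k} f_{\mb k}$ vanishes on the whole grid $A_1\times\cdots\times A_h$, with $F\not\equiv 0$ supplied by Corollary~\ref{corr:LinearlyIndependent}---is sound and matches the paper's starting point. The fatal step is the claim that one can select points of $A_1\times\cdots\times A_h$ whose $H$-images form a product grid $B_1\times\cdots\times B_h$ with $|B_i|=m+1$. Such a grid has $(m+1)^h$ points, so it would have to be the \emph{entire} image $H(A_1\times\cdots\times A_h)$, and that image is essentially never a Cartesian product: each coordinate $y_i=\prod_j(c_i+z_j)$ depends on \emph{all} of $z_1,\dots,z_h$, so varying a single $u_j$ moves all $h$ coordinates $y_1,\dots,y_h$ simultaneously and in a coupled way. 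Already for $h=2$, $m=1$ with $A_1=\{a,a'\}$, $A_2=\{b,b'\}$, the four $y_1$-values $(c_1+a)(c_1+b)$, $(c_1+a)(c_1+b')$, $(c_1+a')(c_1+b)$, $(c_1+a')(c_1+b')$ are generically four distinct numbers, whereas a $2\times 2$ product grid would allow only two. The local invertibility from Lemma~\ref{lem:HInvertible} gives preimages of \emph{arbitrary} points in a small box of $\mb y$-space---which is exactly how Lemma~\ref{lem:fLInearlyIndependent} uses it---but those preimages are free real points, not members of the prescribed grid; your affine-with-nonzero-slope observation controls each $y_i$ separately but not jointly.

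The paper avoids this by never leaving the $\mb z$-variables at the evaluation step: it builds the rows greedily, one grid point at a time, so that after $k$ steps the leading $k\times k$ minor is nonzero. Expanding the next $(k+1)\times(k+1)$ minor along its last row exhibits it as a linear combination of the polynomials $f_{\mb k}(\mb z)$ with at least one nonzero cofactor (the inductive minor), hence a polynomial $\not\equiv 0$ by Corollary~\ref{corr:LinearlyIndependent}; Lemma~\ref{lem:polynomialNonZero}, applied in the $\mb z$-variables on $A_1\times\cdots\times A_h$, then supplies the next row, and at the end the constructed matrix is a row permutation of $M$. Your degree worry is a fair reading of the lemma's statement in full generality, but in the place where it is used (Theorem~\ref{th:non-singular}) the exponents satisfy $k_1+\cdots+k_h\le m$, so each $f_{\mb k}$ has degree at most $m$ in every $z_j$ and the grid of $m+1$ values per coordinate suffices; no change of variables is needed for the evaluation. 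To repair your argument, drop the product-grid-in-$\mb y$ maneuver and apply Lemma~\ref{lem:polynomialNonZero} to $F$ (or, as the paper does, to the successive minor determinants) directly in $\mb z$-space.
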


\begin{proof}
  We notice that each row of the matrix $M$ consists of the
  polynomials $f_{k_1,\dots,k_h}(\mb z)$ applied to some point in
  $\mb u$.  We construct a matrix $M'$ such that $\det(M')\neq 0$ and
  $M'$ differs from $M$ only by permutations of rows and columns.  We
  construct, by induction on $k$, a $k \times (m+1)^h$ matrix $M_k$ such
  that each row consists of the values of the polynomials
  $f_{k_1,\dots,k_h}(\mb z)$ applied to some point in
  $\mb u \in A_1 \times \cdots \times A_h$, and such that the
  $k\times k$ minor consisting of the first $k$ columns in $M_k$ is
  non-singular.  When $k=1$ we choose any
  $\mb u \in A_1 \times \cdots \times A_h$, and the statement holds
  because $M_{\mb u, (0,0,\ldots,0)} = 1$ (i.e.  all entries of the
  column $k_1=\cdots =k_h=0$ are 1).  We show now how to construct
  $M_{k+1}$ by extending $M_k$.  First, extend $M_k$ with a row
  consisting of the polynomials $f_{k_1,\dots,k_h}(\mb z)$.  Let $D$
  be the $(k+1)\times (k+1)$ minor $M_{k+1}$ defined by the first
  $k+1$ columns.  $D$ is a linear combination of these polynomials,
  where the coefficient of each polynomial is the $k\times k$ minor
  consisting of the first $k+1$ columns except that containing the
  polynomial.  By induction, the coefficient given by the minor
  consisting of the first $k$ columns is non-zero.  By
  Corollary~\ref{corr:LinearlyIndependent}, $D(\mb z)$ (viewed as a
  polynomial in $\mb z$) is non-zero, hence by
  Lemma~\ref{lem:polynomialNonZero} there exists
  $\mb u \in A_1 \times \cdots \times A_h$ such that
  $D(\mb u) \neq 0$, proving the claim.  Thus, we obtain
  $M' \defeq M_{(m+1)^h}$, a matrix that is non-singular.  Since the
  matrix is non-singular, no two rows in $M'$ can use the same value
  $\mb u \in A_1 \times \cdots \times A_h$, and since both the number
  of rows in $M'$ and the cardinality of the set
  $A_1 \times \cdots \times A_h$ are the same, $(m+1)^h$, it follows
  that $M'$ contains precisely the same rows as $M$, up to a
  permutation, proving the lemma.
\end{proof}

Finally, we can now prove Theorem~\ref{th:non-singular}.  For that we
use the fact that $k_0 = m - (k_1+\cdots+k_h)$ and write:
\begin{align*}
  M_{\mb p, \mb k} = & y_0^m(\mb p) \prod_{i=1,h} \left(\frac{y_i(\mb p)}{y_0(\mb p)}\right)^{k_i}\\
\end{align*}
Since every row $\mb p$ in $M$ has the same factor $y_0^m(\mb p)$, it
suffices to prove that the matrix $M'$ without this factor is
non-singular:
\begin{align*}
  M'_{\mb p, \mb k} = &\prod_{i=1,h} \left(\frac{y_i(\mb p)}{y_0(\mb p)}\right)^{k_i}=  \prod_{i=1,h} \prod_{j=1,h}\left(\frac{a_i\lambda_1^{p_j} + b_i\lambda_2^{p_j}}{a_0\lambda_1^{p_j} + b_0\lambda_2^{p_j}}\right)^{k_i}\\
= & \prod_{i=1,h}\prod_{j=1,h} \left(\frac{a_i+ b_i\left(\frac{\lambda_2}{\lambda_1}\right)^{p_j} }{a_0+ b_0\left(\frac{\lambda_2}{\lambda_1}\right)^{p_j} }\right)^{k_i}\\
= & \prod_{i=1,h}\prod_{j=1,h} \left(\frac{b_i}{b_0} + \frac{a_i - \frac{a_0b_i}{b_0}}{a_0 + b_0\left(\frac{\lambda_2}{\lambda_1}\right)^{p_j}}\right)^{k_j}\\
\defeq & \prod_{i=1,h}\prod_{j=1,h} \left(c_i + \frac{d_i}{a_0 + b_0\left(\frac{\lambda_2}{\lambda_1}\right)^{p_j}}\right)^{k_j}
\end{align*}
We now use Lemma~\ref{lemma:another:nonsingular}.  Since
$\lambda_2/\lambda_1 \not\in \set{-1, 0, +1}$, the function
$p_j \mapsto z_j \defeq \frac{d_i}{a_0 +
  b_0\left(\frac{\lambda_2}{\lambda_1}\right)^{p_j}}$ is injective,
therefore the $m+1$ distinct values $p_j= 1,2,3, \ldots, m+1$ will
yield $m+1$ distinct values of $z_j$.  By
Lemma~\ref{lemma:another:nonsingular}, the matrix $M'$ is
non-singular, completing the proof of Theorem~\ref{th:non-singular}.


\eat{
Noting that the columns of $\M$ in~\eqref{eq:generalSystemOfEq} have the form of the polynomials in Lemma~\ref{lem:independentPolynomials}, we conclude that there exists a set of  $(m+1)^h$ $h$-tuples $\overline{y}_r=\set{y_{r,1},\dots,y_{r,h}}$, representing block probabilities, such that their substitution in~\eqref{eq:generalSystemOfEq} results in an invertible system of linear equations. In the rest of this section we show how to find such a set of $h$-tuples in PTIME. For this, we assume the following conditions on the probabilities $y_i$ (in the next section we design the block TIDs such that these conditions are met).

\dan{removed from here}
From~\eqref{eq:yi}, we can then express the probability $y_i$ as follows:
\begin{align}
y_i & = \prod_{j=1}^h (a_i \lambda_1^{p_j} + b_i \lambda_2^{p_j}) \\
&=  \prod_{j=1}^h \lambda_1^{p_j}b_i(\frac{a_i}{b_i}  + \frac{\lambda_2^{p_j}}{\lambda_1^{p_j}} ) \\
&=\prod_{j=1}^h \lambda_1^{p_j}b_i(c_i  + z_j ) \label{eq:yiAsFunction}
\end{align}
where $c_i \eqdef \frac{a_i}{b_i}$ and $z_j \eqdef \left(\frac{\lambda_2}{\lambda_1}\right)^{p_j}$.
From~\eqref{eq:generalProb} we get the following system of $(m+1)^h$ linear equations.
\begin{align}
\sum_{l=1}^{(m+1)^h}\prod_{i\in\set{1,\dots,h}}\left(y_i(\overline{p}_r)\right)^{k_{l,i}}\veck_l=\Pr_r(Q)&&  \forall r \in [1,(m+1)^h]
\end{align}
In matrix form: $\M \boldsymbol{x}{=}\boldsymbol{P}$ where:
\begin{equation}
\label{eq:Mrt}
\M_{rt}=\prod_{i=1}^h\left(y_i(\overline{p}_r)\right)^{k_{t,i}}=\prod_{i=1}^h\left(\prod_{j=1}^h \lambda_1^{k_{t,i}\cdot p_{r,j}}b_i^{k_{t,i}}(c_i+z_{r,j})^{k_{t,i}}\right)
\end{equation}
From~\eqref{eq:Mrt} we observe that every row $r$ in $\M$ is associated with an $h$-tuple  $\overline{p}=\set{p_1,\dots,p_h}$ of naturals.
Specifically, we associate the rows $r\in [1,(m+1)^h]$ of $\M$ with the $h$-tuples in the set $\set{1,\dots,m+1}^h$, and the columns $t\in [1,(m+1)^h]$  of $\M$ with the $h$-tuples in the set $\set{0 \dots,m}^h$, giving us an $(m+1)^h \times (m+1)^h$ system of linear equations.

Our goal is to prove that $\M$ is non-singular. We observe that $\M$ is non-singular if and only if $\M'$ is non-singular, where we define $y'_i(\overline{p}_r)=\prod_{j=1}^h(c_i + z_{r,j})$ for $i\in \set{1,\dots,h}$. To see why, note that $\M$ can be 
derived from $\M'$ by multiplying each row $r$ of $\M'$ by $\lambda_1^{m\sum_{j=1}^hp_{r,j}}$\footnote{The oracle returns a probability of $0$ to every signature $\overline{k}_t$ for which $\sum_{i=1}^hk_{t,i} \neq m$. Therefore, we assume that $\sum_{i=1}^hk_{t,i} = m$}, and each column $t$ of $\M'$ by $\prod_{i=1}^h(b_i)^{k_{t,i}}$. Therefore, under the condition that $b_i\neq 0$ for all $i$ (see~\eqref{eq:conditionCoefficients}), and that $\lambda_1 \neq 0$, then $\det(\M) \neq 0$ if and only if $\det(\M') \neq 0$. In the remainder of this section we prove that the matrix $\M'_{rt}\eqdef \prod_{i=1}^h\prod_{j=1}^h(c_i+z_{r,j})^{k_{t,i}}$ is non-singular. 
In particular, we prove the following:
\begin{theorem}
\label{thm:fromystops}
Let $\M$ be the $(m+1)^h\times (m+1)^h$ matrix defined in~\eqref{eq:Mrt} where the rows  of $\M$ are associated with the set of $h$-tuples $\overline{p} \in [1,(m+1)]^h$. Then $\M$ is non-singular.
\end{theorem}
To prove the Theorem, we require a sequence of simpler Lemmas.
\begin{lemma}
\label{lem:bijection}
The function $z(p)\eqdef \left(\frac{\lambda_2}{\lambda_1}\right)^p$ is a bijection.
\end{lemma}
\begin{proof}
This follows immediately from condition~\eqref{eq:conditionLambda}.
\end{proof}
From the previous discussion, we have that $y'_i(\overline{p}_r)=\prod_{j=1}^h(c_i+z_{r,j})$ where $c_i\eqdef\frac{a_i}{b_i}$ and $z_{j,r}\eqdef \left(\frac{a_i}{b_i}\right)^{p_{r,j}}$. Since there is a bijection between $z_{r,j}$ and $p_{r,j}$ (Lemma~\ref{lem:bijection}), we may view $y'_i$ as a function of $\overline{z}_r=\set{z_{r,1},\dots,z_{r,h}}$. Formally:
\begin{align}
\label{eq:yi'}
y_i'(\overline{z})\eqdef \prod_{j=1}^h(c_i+z_j)&& i\in [1,h]
\end{align}

\dan{removed from here...}

It follows from Lemma~\ref{lem:polynomialNonZero} that for a non-zero polynomial $f(z_1,\dots,z_h)$ of degree $m$, we can find in time $O(mh)$ an assignment to $z_1,\dots,z_h$ such that $f$ evaluates to a non-zero value. Let $\overline{z}=\set{z_1,\dots,z_h}$ be a vector of nonzero reals. We define:

\dan{removed from here...}
Note that since $z_j=\left(\frac{\lambda_2}{\lambda_1}\right)^{p_j}$ and that $\lambda_1,\lambda_2\neq0$, then $f_{k_1,\dots,k_h}\not\equiv 0$.

\dan{I removed from here...}

We now complete the proof of Theorem~\ref{thm:fromystops} by showing that associating every row of $\M$ with an $h$-tuple $\overline{p} \in [1,m+1]^h$ results in a system of linear equations $\M'$ is non-singular.

The proof is constructive, and works iteratively as follows. At each step $k\in [1,(m+1)^h]$, it constructs the $k$th row of the matrix $\M'$. We prove that the construction maintains the invariant:
at each step $k$, the $k\times k$ upper left matrix is non-singular. Thus, after $(m+1)^h$ steps we get the desired matrix $\M'$.

Initially: set $\overline{p}_1=\set{1,\dots,1}$ (e.g., $p_1=\dots=p_h=1$). Since $\lambda_1,\lambda_2 \neq 0$ the all entries in the first row are non-zero (obviously, e.g. from~\eqref{eq:fPolynomials}), including the first entry. 

By induction, we have assigned the $k$ lexicographically smallest $h$-tuples $(p_1,\dots,p_h)$ to the first $k$ rows, such that the top left $k \times k$ matrix is non-singular.  (The matrix has size $k \times N$  where $N = (m+1)^h$, and we guarantee that the first $k$ columns are linearly independent).  Now, we extend $\M'$ with row $k+1$.  We can view row $k+1$ as a row of polynomials in variables $z_1,\dots,z_h$, where $z_i=(\frac{\lambda_2}{\lambda_1})^{p_i}$. 
Consider the $(k+1)\times (k+1)$ matrix $\M'_{k+1}$ consisting of the first $k+1$ columns and all $k+1$ rows. Its determinant $\det(\M'_{k+1})$ is a polynomial in $z_1,\dots,z_h$ (from the last row), and we argue that $\det(\M'_{k+1})\not\equiv 0$. To see why this is so we observe that:
\begin{equation}
\label{eq:detM'}
\det(\M'_{k+1})=\sum_{i=1}^{k+1}\minor(k+1,i)f_i(z_1,\dots,z_h)
\end{equation}
where $\minor(k+1,i)$ is the minor of the matrix associated with row $k+1$ and column $i$.
By Corollary~\ref{corr:LinearlyIndependent} we have that the polynomials of~\eqref{eq:fPolynomials} are linearly independent. Therefore, $\det(\M'_{k+1})\equiv 0$ (as a polynomial in $z_1,\dots,z_h$) only if $\minor(k+1,i)=0$ for all $i\in [1,k+1]$. But by the induction hypothesis we know that $\minor(k+1,k+1)=\det(\M_{k})\neq 0$. Therefore, $\det(\M'_{k+1})\not\equiv 0$.

So, at this point we have that $\det(\M'_{k+1})\not\equiv 0$ is a non-zero polynomial in variables $z_1,\dots,z_h$ where the degree if every variable is at most $m$.
Therefore, to complete the proof, we need to find an $h$-tuple $\set{p_1,\dots,p_h}$ such that at the point $(z(p_1),\dots,z(p_h))$, the polynomial $\det(\M'_{k+1})$ does not evaluate to $0$.
By Lemma~\ref{lem:polynomialNonZero} such an $h$-tuple  can be found among the set $\set{z(1),\dots,z(m+1)}$ of distinct reals. This translates to finding an $h$-tuple in the set $\set{1,\dots,m+1}$. In particular, by Lemma~\ref{lem:polynomialNonZero}, this can be done in time $O(mh)$. So, we have shown a procedure for generating $k\in [1,(m+1)^h]$ linearly independent rows of the matrix by repeatedly choosing an $h$-tuple from the set $\set{1,\dots,m+1}$. In other words, associating every row of $\M$ with exactly one $h$-tuple from $[1,m+1]^h$ guarantees that $\M$ is non-singular.
This completes the proof of the Theorem.

}
\subsection{Designing the blocks $B(u,v)$}
\label{sec:blocks}
In this section, we design the block TID $B(u,v)$ such that the probability of the lineage of $Q$ over $B(u,v)$ can be expressed as in~\eqref{eq:yi}, and that it meets the conditions of~\eqref{eq:conditionLambda}-\eqref{eq:conditionCoefficients}. 

We design our blocks to prove hardness for final type-I queries, where every endpoint (i.e., $u$ and $v$) is assigned one of two values $\set{0,1}$. Since we design symmetric blocks, then $h=\set{00,10,11}$.
In our construction, each block is parameterized by a pair $\mb p=\set{p_1,p_2}$. The main focus of this section is a block that is parameterized by a single value $p$, denoted $B_p(u,v)$. We show how two blocks $B_{p_1}(u,v),B_{p_2}(u,v)$ can be combined in \e{parallel} to create a block $B_{\mb p}(u,v)$ that has the desired form and properties.

\subsubsection*{The Block $B_p(u,v)$.}
The bipartite domain of $B_\p(u,v)$ is $V_1(\p)\cup V_2(\p)$, and is defined as follows:
\begin{align}
V_1(\p)&=\set{u,v} \cup \set{r_k : k\in [1,\p-1]} \label{eq:V1} \\
V_2(\p)&=\set{t_k: k\in [1,\p]} \label{eq:V2}
\end{align}
where the $r_k$ and $t_k$ are fresh constants.
The tuple probabilities of $B_\p(u,v)$ are:
\begin{align*}
	\Pr(R(u))=\begin{cases}
		c & u\in V_1(p) \\
		1 & \mbox{ otherwise }
	\end{cases} 
	&&
	\Pr(T(v))=\begin{cases}
		c & v\in V_2(p) \\
		1 & \mbox{ otherwise }
	\end{cases}
\end{align*}
where $c\in (0,1)$ is a constant.
For every binary symbol $S \in \calR$ the probabilities are:
\begin{itemize}
	\item $\Pr(S(u,t_1))=c$.
	\item $\Pr(S(v,t_\p))=c$.
	\item $\Pr(S(r_k,t_k))=c$ for all $k\in [1,p-1]$.
	\item $\Pr(S(r_k,t_{k+1}))=c$ for all $k\in [1,p-1]$.
	\item Otherwise, $\Pr(S(a,b))=1$.
\end{itemize}
\begin{example} \label{ex:zig:zag}
	The graphical representation of $B_{p}(u,v)$ is:
	\begin{align*}
		u\defeq r_0 - t_1 - r_1 - \cdots - r_{p-1} - t_p - r_p\defeq v
	\end{align*}
	The binary atoms corresponding to the edges of the path have a probability of $c$, and atoms corresponding to non-edges (e.g., $S(t_1,t_3)$) have
	a probability of $1$. Also note that $r_0\eqdef u$, and $r_p\eqdef v$.
\end{example}

Let $p\geq 1$ be a natural number. The lineage of $Q$ over the block TID $B_p(u,v)$ is denoted $Y^{(p)}(u,v)$, and its arithmetization $y(p)$. That is, $y(p)$ is the multilinear polynomial representing the probability $\Pr(Y^{(p)}(u,v))$.
We associate the endpoints $u$ and $v$ with the unary atoms $R(u)$  and $R(v)$ respectively. Thus $Y_{ab}^{(p)}(u,v)\eqdef Y^{(p)}(u,v)[R(u) \asn a,R(v) \asn b]$, and $y_{ab}(p)\eqdef y(p)[R(u)\asn a, R(v)\asn b]$ is defined accordingly. 
\blue{
The following proposition follows directly from the way we construct the blocks.
\begin{proposition}
	\label{prop:symmetry}
	Let $Y^{(p)}(u,v)$ denote the lineage of $Q$ over a $p$-length block-TID $B_p(u,v)$. Then $Y^{(p)}(u,v)=Y^{(p)}(v,u)$.	
\end{proposition}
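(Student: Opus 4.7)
The plan is to exhibit a reflection automorphism of the block $B_p(u,v)$ that swaps the two endpoints $u$ and $v$, and then invoke the fact that the lineage of a first-order query is invariant under any automorphism of the underlying TID (modulo the induced renaming of Boolean variables).

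First I would define a map $\sigma$ on the constants of $V_1(p) \cup V_2(p)$ by setting $\sigma(u) = v$, $\sigma(v) = u$, $\sigma(r_k) = r_{p-k}$ for $1 \leq k \leq p-1$ (with the convention $r_0 = u$ and $r_p = v$ from Example~\ref{ex:zig:zag}), and $\sigma(t_k) = t_{p+1-k}$ for $1 \leq k \leq p$. The map respects the bipartition, since it sends $V_1(p)$ to $V_1(p)$ and $V_2(p)$ to $V_2(p)$, and it is an involution.

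Next I would check that $\sigma$ is an automorphism of the TID, i.e.\ that every atom of probability $c$ is sent to an atom of probability $c$. The unary atoms $R(r_k)$ and $T(t_k)$ are trivially permuted within their respective classes. For the binary atoms, the probability-$c$ edges are exactly those of the path $u - t_1 - r_1 - t_2 - \cdots - t_p - v$, and $\sigma$ reverses this path: the endpoint edge $S(u,t_1)$ maps to $S(v,t_p)$, and for $1 \leq k \leq p-1$ one verifies $S(r_k,t_k) \mapsto S(r_{p-k}, t_{p+1-k})$, which has the form $S(r_{k'}, t_{k'+1})$ with $k' = p-k$, and $S(r_k, t_{k+1}) \mapsto S(r_{p-k}, t_{p-k})$, which has the form $S(r_{k'}, t_{k'})$; all lie in the prescribed probability-$c$ families. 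Atoms off the path retain probability $1$ and are sent to atoms still off the path.

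Since $\sigma$ is an automorphism of $B_p(u,v)$, substituting every atom $A$ by $\sigma(A)$ in the lineage $Y^{(p)}(u,v)$ yields the lineage of $Q$ over $\sigma(B_p(u,v)) = B_p(u,v)$, which is $Y^{(p)}(u,v)$ itself. But because $\sigma$ swaps $R(u)$ with $R(v)$ and relabels the internal atoms via the path reflection, this same substitution is exactly what produces the lineage obtained by treating $v$ as the left endpoint and $u$ as the right endpoint of the same path, namely $Y^{(p)}(v,u)$. Hence $Y^{(p)}(u,v) = Y^{(p)}(v,u)$. The only step that requires any care is the case analysis for the binary edges in the previous paragraph; no deeper obstacle arises, since the construction of the zig-zag path is manifestly symmetric in $u$ and $v$.
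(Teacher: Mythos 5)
Your proof is correct, and it is essentially the explicit version of what the paper does: the paper simply asserts that the proposition ``follows directly from the way we construct the blocks,'' i.e.\ from the reflective symmetry of the zig-zag path, which is exactly the automorphism $\sigma(r_k)=r_{p-k}$, $\sigma(t_k)=t_{p+1-k}$ that you exhibit and verify. The index bookkeeping in your case analysis checks out, so no gap remains.
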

Since $Y^{(p)}(u,v)=Y^{(p)}(v,u)$, then $Y_{ab}^{(p)}(u,v)=Y_{ab}(v,u)=Y_{ba}(u,v)$. Accordingly, we have that $y_{ab}(p)=y_{ba}(p)$.
}

\eat{
Let $N$ be the number of Boolean variables in $Y_{ab}^{(p)}$.
Given an assignment $\bu:\var(y_{ab}^{(\p)})\rightarrow [0,1]^N$ we denote by $y^{(\p)}_{ab}(\bu)$ the valuation of the multi-linear polynomial $y^{(\p)}_{ab}$ at point $\bu$. In particular, we observe that  $y^{(\p)}_{ab}(\bu)=\Pr(Y_{ab}^{(\p)})$, the probability that $Y_{ab}^{(\p)}$ is satisfied  when the probabilities of its Boolean RVs are assigned according to $\bu$.
}

We define $z_{ab}(p)$ to be the valuation of $y_{ab}(p)$ when all of the RVs in $Y_{ab}^{(p)}(u,v)$ have a probability of $c$. For any natural $p \geq 1$:
\begin{equation}
\label{eq:zab}
z_{ab}(p)\eqdef y_{ab}(p)[c,\dots,c]
\end{equation}
In the rest of this section we design the block $B_p(u,v)$ and prove that it meets the following conditions. 
\begin{theorem}
\label{thm:designSection}
Let $B_{\p}(u,v)$ be a block TID, parameterized by a parameter $p\geq 1$.
For every $i\in \set{00,10,11}$ it holds that:
\begin{align}
z_{i}(p)=(a_i\lambda_1^{p} + b_i\lambda_2^{p})&\hfill&\forall i\in \set{00,10,11} \label{eq:yiDesign}
\end{align}
where $\lambda_1$, $\lambda_2$ and $a_i,b_i$ are constants, independent of $p$, satisfying:
\begin{align}
\lambda_1\neq \pm \lambda_2 && \mbox{ and } && \lambda_1\neq 0, \lambda_2 \neq 0 \label{eq:designyi}\\
b_i \neq 0 && && \forall i\in \set{00,10,11} \label{eq:nonZerocondition} \\ 
a_ib_j{\neq}a_jb_i&&  && i \neq j \label{eq:designcondition}
\end{align}
\end{theorem}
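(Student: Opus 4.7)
The plan is to cast $z_i(p)$ as a matrix power via a transfer-matrix analysis of the zigzag block, extract the closed form $a_i \lambda_1^p + b_i \lambda_2^p$ from diagonalization, and verify the three conditions on $\lambda_1, \lambda_2$ and on the coefficients by combining Lemma~\ref{lemma:determinant:connected} with polynomial non-vanishing arguments.

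First, I will observe that every binary atom $S_j(a,b)$ on a non-zigzag pair has probability $1$, so the only clauses of $Q$ that contribute non-trivially to the lineage are those whose $(x,y)$-instantiation lies on a zigzag edge. Grouping these clauses by the shared vertex $t_k$, the probability that the ``link'' clauses at $t_k$ are satisfied depends only on the boundary values $R(r_{k-1})=a, R(r_k)=b$ and on $c$; denote it $\mu_{ab}$. Summing over the internal values $R(r_1),\ldots,R(r_{p-1})$ (each Bernoulli$(c)$) yields
\begin{equation*}
 z_{ab}(p) = \bigl[\mu\,(D\mu)^{p-1}\bigr]_{ab}, \qquad D = \diag(1-c,\,c),
\end{equation*}
so the sequence $z_{ab}(p)$ satisfies a second-order linear recurrence whose characteristic polynomial is that of $A \defeq D\mu$. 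Letting $\lambda_1,\lambda_2$ be its roots, spectral decomposition of $A$ gives $z_{ab}(p) = a_{ab}\lambda_1^p + b_{ab}\lambda_2^p$ for suitable constants.

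Next, to verify condition~\eqref{eq:designyi}, a direct calculation gives $\det(A) = c(1-c)\det(\mu)$. The link-lineage, viewed with $R(r_{k-1}), R(r_k)$ as the two distinguished variables, is connected: since $Q$ is unsafe some $S_j$ occurs in both a left and a right clause, yielding a path $R(r_{k-1}) \text{--} S_j(r_{k-1},t_k) \text{--} T(t_k) \text{--} S_j(r_k,t_k) \text{--} R(r_k)$ through the link lineage. By Lemma~\ref{lemma:determinant:connected} $\det(\mu) \not\equiv 0$ as a polynomial in the atom probabilities, and it remains non-identically-zero after specializing them all to $c$, so $\lambda_1\lambda_2 \neq 0$ for all $c$ outside a finite set. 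The conditions $\lambda_1 \neq \pm\lambda_2$ reduce to the polynomial non-vanishing statements $\mathrm{tr}(A) \neq 0$ and $\mathrm{tr}(A)^2 - 4\det(A) \neq 0$, each of which is a non-trivial polynomial in $c$ (as can be checked by exhibiting a single value of $c$ where it holds, or by a small-case computation on the link lineage of $Q$).

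Finally, writing $(a_i, b_i)$ as a non-singular linear function of the initial values $(z_i(1), z_i(2))$ determined by $\lambda_1,\lambda_2$, conditions~\eqref{eq:nonZerocondition} and~\eqref{eq:designcondition} translate into: none of the three pairs $(z_i(1), z_i(2))$ for $i \in \set{00,10,11}$ lies on the $\lambda_1$-line $\set{(t, \lambda_1 t) : t \in \R}$, and the three pairs are pairwise non-proportional. Each such condition is a polynomial non-vanishing condition in $c$, whose non-triviality follows from the structural distinctness of the three link lineages $Y_{00}, Y_{10}, Y_{11}$: finality of $Q$ prevents any atom substitution from collapsing the three behaviors, so the three lineages depend on $c$ in genuinely different ways. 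Choosing $c$ to avoid the finite union of bad values from all the above conditions yields a valid block design. The main obstacle is this last step: proving that the non-proportionality conditions are polynomial non-identities in $c$ requires a careful structural comparison of the three link lineages $Y_{00}, Y_{10}, Y_{11}$ and a delicate use of finality to produce the required distinctions, and is where the argument needs the full strength of the ``final query'' hypothesis rather than merely unsafety.
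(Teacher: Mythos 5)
Your transfer-matrix skeleton matches the paper: the paper's Lemma~\ref{lem:MatrixPower} establishes exactly $A^{(p)}=[A^{(1)}C]^{p-1}A^{(1)}$ with $C=\diag(1-c,c)$, and the closed form \eqref{eq:yiDesign} then falls out of diagonalization. But your verification of conditions \eqref{eq:designyi}--\eqref{eq:designcondition} by ``choosing $c$ to avoid a finite union of bad values'' is a genuine gap, for two reasons. First, it is not available: the theorem feeds into hardness of $\fomc$, where the probability is \emph{fixed} to $c=1/2$ (and more generally the paper claims hardness for \emph{any} fixed $c\in(0,1)$), so $c$ cannot be chosen generically. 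Second, even granting a free choice of $c$, your step ``$\det(\mu)\not\equiv 0$ as a polynomial in the atom probabilities, and it remains non-identically-zero after specializing them all to $c$'' is unjustified: specializing a nonzero multivariate polynomial by identifying all its variables can produce the zero polynomial (e.g.\ $x-y$). The paper closes precisely this hole with its Theorem~\ref{thm:DetANon0}: using \emph{finality} (Lemma~\ref{lem:LemforbiddenLineage}: setting any atom variable other than $R(u),R(v)$ to $0$ or $1$ disconnects the lineage), it shows $f_A=\det(A^{(1)})=\alpha\prod_i u_i(1-u_i)$ with $\alpha\neq 0$ (Corollary~\ref{corr:fADet}), hence $f_A\neq 0$ at \emph{every} interior point, in particular at $(c,\dots,c)$. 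This is where finality is actually consumed --- not, as you suggest, in the non-proportionality conditions.

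For the remaining conditions the paper needs no genericity at all: $\lambda_1\neq\pm\lambda_2$ and $\lambda_1,\lambda_2\neq 0$ follow from exact positivity --- the trace $(1-c)z_{00}+cz_{11}>0$ and the discriminant $(cz_{11}-(1-c)z_{00})^2+4c(1-c)z_{01}z_{10}>0$ hold for every $c\in(0,1)$ because $0<z_{00}<z_{01}=z_{10}<z_{11}$ (strict monotonicity of the lineage, Proposition~\ref{prop:lineageProps}). Likewise \eqref{eq:nonZerocondition} and \eqref{eq:designcondition} are derived exactly, from the initial condition $A^{(0)}=\mathbb{I}$ (giving $a_{00}+b_{00}=a_{11}+b_{11}=1$, $a_{10}+b_{10}=0$) combined with the one-step recurrence and the same strict inequalities; e.g.\ $b_{00}=0$ would force $z_{00}(p)=\lambda_1^p$, contradicting $z_{00}(p)>z_{00}(p-1)z_{00}(1)$. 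A minor additional flaw: your connectedness argument assumes some $S_j$ occurs in both a left and a right clause, which fails for unsafe queries of length $>1$; the paper's Lemma~\ref{lem:lemLineageConnected} instead argues by induction along the left-to-right path and must also rule out clauses becoming redundant after grounding.
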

Let $\mb p=\set{p_1,p_2}$. We now describe how to create the block  $B_{\mb p}(u,v)$ that has probability $y_i(\mb p)$ (as in~\eqref{eq:yi}) for every $i\in \set{00,10,11}$.
To do so, we construct two disjoint blocks between nodes $u$ and $v$ with parameters $p_1,p_2$ respectively (see Figure~\ref{fig:doubleBuv}).
The lineage of this block is $Y_{i}^{(p_1)}(u,v)\wedge Y_{i}^{(p_2)}(u,v)$.
Since the blocks are disjoint, then $\var(Y^{(p_1)}(u,v))\cap \var(Y^{(p_2)}(u,v)) =\set{R(u),R(v)}$. In particular, the Boolean functions $Y_i^{(p_1)}(u,v)$ and $Y_{i}^{(p_{2})}(u,v)$ are disjoint for every $i\in \set{00,10,11}$. So, we get that:
\begin{align}
\label{eq:overlineProbab}
y_{ab}(\mb p)=\Pr(\bigwedge_{j=1}^2Y_{ab}^{(p_j)}(u,v))=\prod_{j=1}^2\Pr(Y_{ab}^{(p_j)}(u,v))=y_{ab}(p_1)y_{ab}(p_2)
\end{align}
Consequently, once we prove Theorem~\ref{thm:designSection}, the conditions of \eqref{eq:conditionLambda}-\eqref{eq:conditionCoefficients} follow from~\eqref{eq:overlineProbab}.
\eat{ 
	For a natural number $p\geq 1$, we show that $y_{i}^{(p)}=(a_i\lambda_1^{p-1}+b_i\lambda_2^{p-1})$ where $i\in \set{00,10,11}$, thus proving~\eqref{eq:yi}. We also show that conditions~\eqref{eq:conditionCoefficients} and~\eqref{eq:conditionLambda} hold, thus proving the claim. 
}

To prove Theorem~\ref{thm:designSection} we first show that
$Y^{(p)}(u,v)$ is a connected Boolean function. Consequently, we can
apply Lemma~\ref{lemma:determinant:connected} to argue that the
determinant of the ``small matrix'' associated with $y(p)$
(see~\eqref{eq:smallMatrix}) is not identically zero, and, by
Lemma~\ref{lemma:three:values}, there are  probability values in
$\set{0,c,1}$ for which the small matrix is non-singular.
In fact, we prove something stronger: if $Q$ is final, then the matrix
remains nonsingular even if we set all probabilities to $c$.  Some
further analysis of this matrix, along with two simple properties of
the lineage $Y^{(p)}(u,v)$ allow us to prove
Theorem~\ref{thm:designSection}.

We recall that $Q(s,t)$ is the Boolean formula that results from mapping $x \mapsto s$ and $y\mapsto t$. We note that $Q(s,t)$ is identically $\true$ whenever $(s,t){\notin} \set{(u,t_1),(v,t_p),(r_k,t_k),(r_k,t_{k+1})\mid k{\in}[1,p{-}1]}$.
Therefore:
\begin{equation}
Y^{(p)}(u,v)=\bigwedge_{i=1}^{p}Q(r_{i-1},t_i)\wedge Q(r_i,t_i)
\end{equation}
where $r_0{=}u$ and $r_p{=}v$. In particular, when $p=1$ then $Y^{(1)}(u,v)=Q(u,t_1)\wedge Q(v,t_1)$.

\eat{
We note that for every pair $(u,v) \in V_1 {\times} V_2$ exactly one of the following holds: Either all binary atoms $S(u,v)$ have a probability of $1$,
or they all have a probability $\frac{1}{2}$.
We provide a graphical interpretation to $\pdb_{E,\p}$, denoted $G_{\pdb_{E,\p}}$, that will make the exposition clearer. 
The values $V_1\cup V_2$ are the nodes of $G_{\pdb_{E,\p}}$ (i.e., $\nodes(G_{\pdb_{E,\p}})=V_1\cup V_2$) and $\edges(G_{\pdb_{E,\p}})=\set{(u,v): \forall S\in \calR, \Pr(S(u,v))=\frac{1}{2}}$. This means that every pair of nodes $(u,v)\notin \edges(G_{\pdb_{E,\p}})$ indicates that $\Pr(S(u,v))=1$  for all binary symbols in $\calR$. 
}

\begin{figure}
\centering
\includegraphics[width=0.4\textwidth]{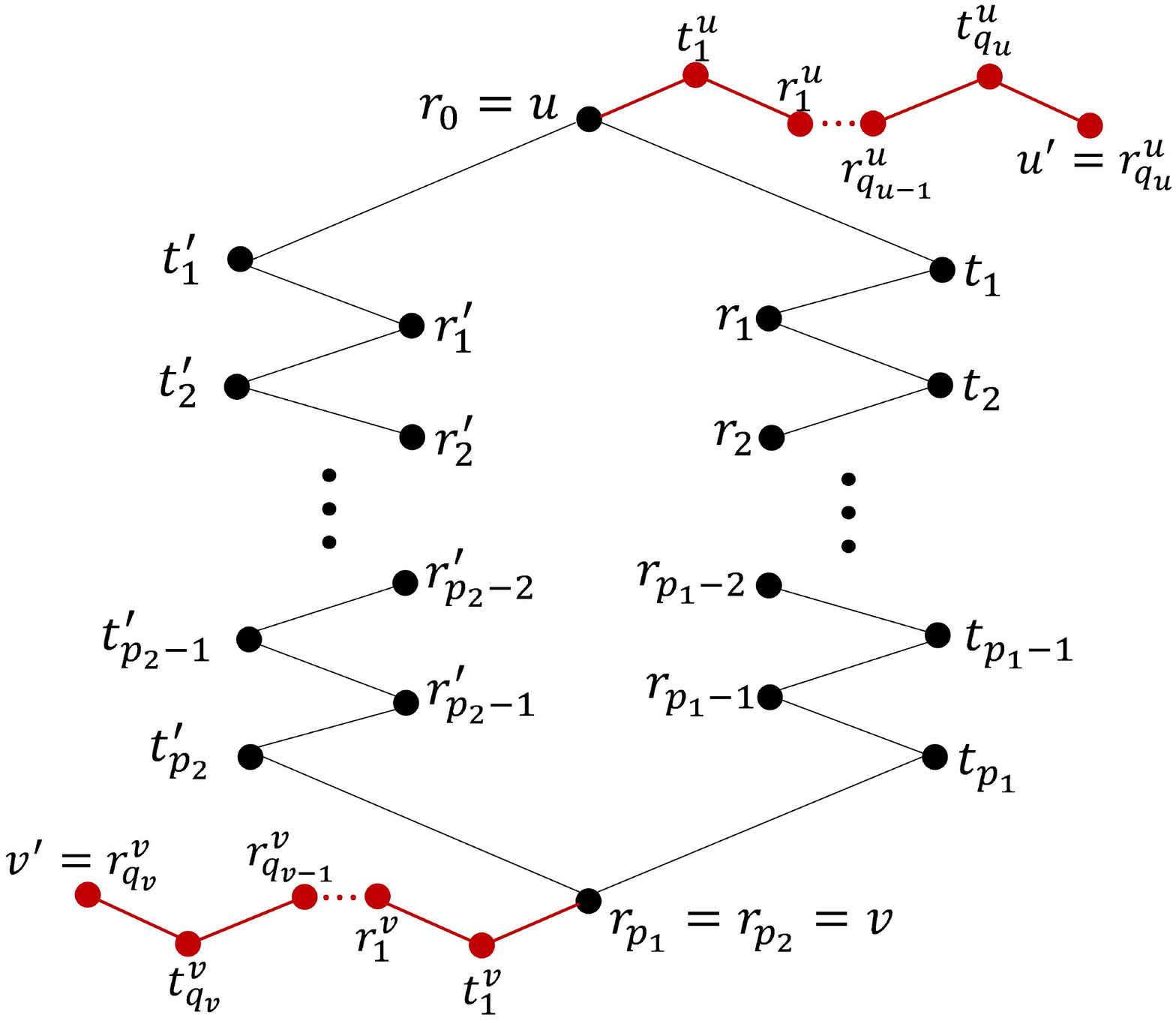}
\caption{Block $B_{\bp}(u,v)$ where $\bp=\set{p_1,p_2}$\blue{, and blocks $B_{q_u}(u,u')$ and $B_{q_v}(v,v')$}.}
\label{fig:doubleBuv}		
\end{figure}

\eat{
\begin{figure*}
	\begin{subfigure}[t]{0.3\textwidth}
		\centering
		\includegraphics[width=0.7\textwidth]{BlockBuv.pdf}
		\caption{Block $B_p(u,v)$.}
		\label{fig:Buv}
	\end{subfigure}		
	\hfill
	\begin{subfigure}[t]{0.4\textwidth}
		\centering
		\includegraphics[width=0.8\textwidth]{DoubleBlockuv.pdf}
		\caption{Block $B_{\bp}(u,v)$ where $\bp=\set{p_1,p_2}$. }
		\label{fig:doubleBuv}		
	\end{subfigure}
\caption{Disjoint block TID.}
\end{figure*}
}
We say that a Boolean function $f$ is \e{disconnected} if $f=f_1 \wedge f_2$ where $f_1$ and $f_2$ are non-constant, disjoint Boolean functions (i.e., $\var(f_1){\cap}\var(f_2){=}\emptyset$). Otherwise, we say that $f$ is \e{connected}. 
\def\lemLineageConnected{
	If $Q$ is an unsafe type-I query, then $Y^{(p)}(u,v)$ is connected.
}
\begin{lemma}\label{lem:lemLineageConnected}
		\lemLineageConnected
\end{lemma}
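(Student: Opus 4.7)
The plan is to expose $Y^{(p)}(u,v)$ as an explicit CNF, to show that the resulting clause/variable hypergraph is connected, and to derive from that the connectedness of the Boolean function itself. As a warmup I would first collapse the lineage to the zig-zag: every pair $(s,t)$ off the path has all binary atoms $S_j(s,t)$ at probability $1$, so every ground clause of $Q(s,t)$ for such $(s,t)$ is deterministically satisfied and drops out, leaving
\[
Y^{(p)}(u,v)\equiv\bigwedge_{i=1}^{p} Q(r_{i-1},t_i)\wedge Q(r_i,t_i).
\]
The non-deterministic variables are exactly the path unary atoms $R(r_i)$, $T(t_j)$ and the path binary atoms $S_j(r_{i-1},t_i)$, $S_j(r_i,t_i)$.

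Next I would show that each single link $Q(s,t)$ is already connected. Since the section fixes $Q$ to be final, the paragraph after Definition~\ref{def:finalQueries} gives that $Q$ is connected as a $\forall$CNF, i.e.\ its clause/symbol hypergraph has a single component. Grounding at $(x,y):=(s,t)$ sends $R\mapsto R(s)$, $T\mapsto T(t)$, $S_j\mapsto S_j(s,t)$ and preserves which clauses touch which symbols, so the clause/atom hypergraph of $Q(s,t)$ is also connected; because $Q$ is of type~I, both $R(s)$ and $T(t)$ actually appear as variables and sit in the unique component together with every $S_j(s,t)$. I would then chain the links: the atom $T(t_i)$ appears in both $Q(r_{i-1},t_i)$ and $Q(r_i,t_i)$, while the atom $R(r_i)$ appears in both $Q(r_i,t_i)$ and $Q(r_i,t_{i+1})$. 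Walking along $R(u){=}R(r_0)$, $T(t_1)$, $R(r_1)$, $T(t_2)$, $\dots$, $T(t_p)$, $R(r_p){=}R(v)$ shows that the clause/variable hypergraph of the whole lineage is connected.

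The step I expect to be most delicate is upgrading ``hypergraph connected'' to ``Boolean function connected'', since a CNF whose hypergraph is connected can in principle simplify to a decomposable function if some clauses are redundant. I would handle it by contrapositive: assume $Y^{(p)}(u,v)\equiv F\wedge G$ with $\vars(F)\cap\vars(G)=\emptyset$ and both sides non-constant; for any ground clause $C$ whose literals straddle the partition, one of $F\implies C^F$ or $G\implies C^G$ must hold, where $C^F$, $C^G$ are the restrictions of $C$ to $\vars(F)$, $\vars(G)$ (otherwise independent satisfying assignments of $F$ and $G$ would jointly violate $C$). Since $Q$ is minimized and each grounded clause of $Q(s,t)$ is genuinely non-redundant on the path variable set, no such one-sided implication can occur for a truly straddling clause, which forces every clause to sit entirely inside $\vars(F)$ or entirely inside $\vars(G)$; this contradicts the chain built in the previous step and completes the proof.
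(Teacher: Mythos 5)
Your route is essentially the paper's: both write $Y^{(p)}(u,v)$ as the explicit ground CNF $\bigwedge_{i=1}^{p} Q(r_{i-1},t_i)\wedge Q(r_i,t_i)$, chain consecutive links through the shared atoms $T(t_i)$ and $R(r_i)$, and then argue that this clause-level connectivity survives at the level of the Boolean function (the paper packages the chaining as an induction on $p$, you do it in one pass via a contrapositive, but the content is the same).

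The one genuine gap is your final step, where you write that ``since $Q$ is minimized and each grounded clause of $Q(s,t)$ is genuinely non-redundant \dots\ no such one-sided implication can occur.'' Irredundancy of the \emph{ground} CNF is exactly where the hypotheses on $Q$ must be used, and it does not follow from minimality of $Q$ without an argument. Concretely: a one-sided implication $F\Rightarrow C^F$ makes $C^F$ an implicate of $Y^{(p)}$, hence forces $C$ to strictly contain another ground clause (this uses the fact, which you also invoke silently, that every implicate of a monotone CNF contains one of its clauses); so what you must rule out is that some ground clause strictly contains another. The dangerous case is a grounded right clause $T(t)\vee S_{J'}(s,t)$ --- the only kind of clause carrying the connecting atom $T(t)$ --- being absorbed by a grounded middle clause $S_J(s,t)$ with $J\subseteq J'$. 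This is precisely what the paper's proof checks: such an absorption would lift to a homomorphism from the middle clause to the right clause in $Q$ itself, contradicting that $Q$ has no redundant clauses; absorption by a grounded left clause is impossible because of the atom $R(s)$, and absorption across distinct pairs $(s,t)\neq(s',t')$ is impossible because every clause contains a binary atom private to its own pair. With these observations (and the symmetric check for the atoms $R(r_k)$) spelled out, your argument closes and coincides with the paper's.
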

\begin{proof}
By induction on $\p$.
Since $Q$ is unsafe, it is connected. Therefore, it must hold that $h_1\eqdef Q(u,t_1)$ and $h_2\eqdef Q(v,t_1)$ are connected.
Observe that when $\p=1$ then $Y^{(1)}(u,v)=h_1\wedge h_2$. Since $T(t_1)\in \var(h_1){\cap \var(h_2)}$ then $h_1$ and $h_2$ are not disjoint.
Therefore, $Y^{(1)}(u,v)$ is disconnected only if there is some homomorphism $C(u,t_1)\rightarrow C_R(u,t_1)$ (or $C(v,t_1)\rightarrow C_R(v,t_1)$) where $C$ is a clause in $Q_{\texttt{left}}{\wedge}Q_{\texttt{middle}}$ and $C_R$ is a right clause. We note that $C \notin Q_{\texttt{left}}$ because all left clauses, and only left clauses, contain the unary symbol $R$. If $C \in Q_{\texttt{middle}}$, it means that $C_R$ is a redundant clause, which is a contradiction (we assume that all clauses of $Q$ are non-redundant).
Symmetrically, for any $k\in[1,p{-}1]$ it holds that $Q(r_k,t_k){\wedge}Q(r_{k},t_{k+1})$ are connected via the unary atom $R(r_{k})$, and thus connected.

Let $\p{>}1$. 
Observe that
$Y^{(p)}(u,v){=}Y^{(p{-}1)}(u,r_{p-1})\wedge Y^{(1)}(r_{\p-1},v)
$
 where  $Y^{(1)}(r_{\p-1},v)=Q(r_{\p-1},t_{\p}){\wedge} Q(v,t_{\p})$. By the induction hypothesis, $Y^{(p{-}1)}(u,r_{p-1})$ and $Y^{(1)}(r_{\p-1},v)$ are both connected. In addition, $Y^{(p{-}1)}(u,r_{p-1})$ and $Y^{(1)}(r_{\p-1},v)$ are  connected via the unary left atom $R(r_{\p-1})$ that appears in both Boolean formulas. By the previous reasoning none of the clauses containing $R(r_{\p-1})$ become redundant and hence, $Y^p(u,v)$ is connected. \eat{ a connected Boolean formula.}
\end{proof}
Note, in particular, that Lemma~\ref{lem:lemLineageConnected} holds for final queries.

Let $N$ be the number of RVs in $y_{ab}(1)$. We define the matrix of polynomials:
\begin{equation}\label{eq:A1}
  A^{(1)}\eqdef \begin{bmatrix} y_{00}(1) & y_{01}(1) \\
    y_{10}(1)  & y_{11}(1)\end{bmatrix}
\end{equation}
and the polynomial $f_A{:}[0,1]^N{\rightarrow}[0,1]$ associated with
the determinant of $A^{(1)}$:
\begin{equation}
f_A(u_1,\dots,u_N)\eqdef \det(A^{(1)})=y_{00}(1)y_{11}(1){-}y_{01}(1)y_{10}(1)
\label{eq:fA}
\end{equation}
and observe that $f_A$ is a degree-2 polynomial.  The previous lemma,
and Lemma~\ref{lemma:determinant:connected} from the introduction,
imply that $f_A \not\equiv 0$.  Therefore, by
Lemma~\ref{lemma:three:values}, there exists an assignment $\bu$ of the
variables of $f_A$ with probabilities in $\set{0,c,1}$ such that
$f_A(\bu) \neq 0$.  Next, we show a stronger result: if $Q$ is a {\em
  final} query, then $f_A(\bu)\neq 0$ for \e{any} assignment
$\bu{:}\set{u_1,\dots,u_N}\rightarrow (0,1)^N$, and in particular,
$f_A(c,\dots,c)\neq 0$.
\def\thmDetANon0{
	If $u_i{\in}(0,1)$ for all $i{\in}[1,N]$ then 
	 $f_A(u_1,\dots,u_N){\neq} 0$.
}
\begin{theorem}\label{thm:DetANon0}
\thmDetANon0
\end{theorem}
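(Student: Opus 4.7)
The plan is to derive a closed-form factorization of $\det A^{(1)}$ that makes non-negativity manifest, and then to use the finality of $Q$ to rule out interior zeros.

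\textbf{Step 1 (decomposition of the lineage).} At $p=1$, the block $B_1(u,v)$ is the short path $u - t_1 - v$, so the lineage factors as $Y^{(1)}(u,v) = Q(u,t_1) \wedge Q(v,t_1)$, with the two conjuncts sharing only the single atom $T(t_1)$. Writing $\tau := \Pr(T(t_1))$ and, for each $a \in \{0,1\}$ and $w \in \{u,v\}$,
\[
\alpha_a^{(w)} := \Pr\bigl(Q(w,t_1)[R(w){:=}a] \mid T(t_1){=}1\bigr), \ \ \beta_a^{(w)} := \Pr\bigl(Q(w,t_1)[R(w){:=}a] \mid T(t_1){=}0\bigr),
\]
conditioning on $T(t_1)$ and using the independence of the two halves yields $y_{ab}(1) = \tau\, \alpha_a^{(u)} \alpha_b^{(v)} + (1-\tau)\, \beta_a^{(u)} \beta_b^{(v)}$.

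\textbf{Step 2 (closed form and reduction to strict FKG).} Direct expansion of $y_{00} y_{11} - y_{01} y_{10}$ makes the $\tau^2$ and $(1-\tau)^2$ terms cancel identically, leaving
\[
\det A^{(1)} = \tau(1-\tau)\bigl(\alpha_0^{(u)} \beta_1^{(u)} - \alpha_1^{(u)} \beta_0^{(u)}\bigr)\bigl(\alpha_0^{(v)} \beta_1^{(v)} - \alpha_1^{(v)} \beta_0^{(v)}\bigr).
\]
Unwinding the conditional probabilities through the left, middle, and right clauses of $Q$, each parenthesized factor equals $\Pr(L \wedge C)\Pr(C \wedge R) - \Pr(C)\Pr(L \wedge C \wedge R)$ evaluated at the atoms $S_j(w,t_1)$, where $L, C, R$ are the monotone $S_J$-conjunctions arising from the left, middle, and right clauses of $Q$ respectively. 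By FKG each factor is $\leq 0$, so $\det A^{(1)} \geq 0$ pointwise. At any $\bu \in (0,1)^N$ we have $\tau(1-\tau) > 0$, so the theorem reduces to showing that the single polynomial $g(\bs) := \Pr(L \wedge C)\Pr(C \wedge R) - \Pr(C)\Pr(L \wedge C \wedge R)$ is strictly negative on the open cube of the $S_j$-probabilities.

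\textbf{Step 3 (main obstacle: strict FKG via finality).} The strict-FKG claim is the crux and is where the finality hypothesis enters. The plan is contrapositive: suppose $g(\bu_0) = 0$ at some interior point. For monotone positive Boolean formulas $L, R, C$ over an independent product measure, strict FKG fails only when $L$ and $R$ admit, modulo $C$, a decomposition onto disjoint sets of $S_j$-atoms. Such a structural decomposition identifies a binary symbol $S_j$ whose role in connecting the left clauses of $Q$ to the right clauses is inessential: setting that $S_j$ to $0$ or to $1$ would yield a strictly simpler query that is nevertheless still unsafe, directly contradicting the definition of a final query. The principal technical difficulty is (i) rigorously converting the pointwise algebraic vanishing $g(\bu_0) = 0$ (which, by the one-sided FKG bound, forces $-g$ to attain a minimum of $0$ in the interior and hence forces all first-order perturbation terms to vanish as well) into a Boolean-level factorization of $L, R, C$, and (ii) matching that factorization against the exact catalogue of simplifications forbidden by finality.
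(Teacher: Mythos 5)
Your Step 1 and the algebraic factorization in Step 2 are correct and genuinely useful: conditioning on the single shared variable $T(t_1)$ does yield
\begin{align*}
\det A^{(1)} \;=\; \tau(1-\tau)\,\bigl(\alpha_0^{(u)}\beta_1^{(u)}-\alpha_1^{(u)}\beta_0^{(u)}\bigr)\,\bigl(\alpha_0^{(v)}\beta_1^{(v)}-\alpha_1^{(v)}\beta_0^{(v)}\bigr),
\end{align*}
and each factor is (up to sign) the determinant of the small matrix of Lemma~\ref{lemma:determinant:connected} for the single grounding $Q(w,t_1)$ with respect to the pair $R(w),T(t_1)$. But the sign claim that launches Step 3 is false. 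FKG gives $\Pr(L\wedge C)\Pr(C\wedge R)\le \Pr(L\wedge C\wedge R)$; your expression subtracts $\Pr(C)\,\Pr(L\wedge C\wedge R)$ instead, and the extra factor $\Pr(C)\le 1$ can flip the sign. Concretely, for the final type-I query $\forall x\forall y\,(R\vee S_1)\wedge(S_1\vee S_2)\wedge(S_2\vee T)$ with $\Pr(S_1)=\Pr(S_2)=\tfrac12$ you get $L=S_1$, $C=S_1\vee S_2$, $R=S_2$, and the factor equals $\tfrac12\cdot\tfrac12-\tfrac34\cdot\tfrac14=+\tfrac1{16}>0$ (in fact for this query the factor is $s_1(1-s_1)s_2(1-s_2)$, strictly positive on the open cube). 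So the factors do not have a uniform sign across final queries, the pointwise bound $\det A^{(1)}\ge 0$ does not come from FKG, and the whole Step 3 strategy of characterizing the equality case of a ``strict FKG'' inequality collapses, because the inequality you would need to saturate does not hold.

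Independently of the sign issue, Step 3 is a plan rather than a proof: the two assertions you defer as ``the principal technical difficulty''---that an interior zero forces a Boolean-level disconnection, and that such a disconnection contradicts finality---are exactly the content of the theorem. Without a one-sided bound, a single interior zero of a polynomial forces nothing (no first-order conditions, no identical vanishing), so there is no evident repair of the local argument. The way to close the gap is global and purely algebraic: each factor $g(\mb s)$ has degree $\le 2$ in every variable $s_j$; because $Q$ is final, setting any $S_j(w,t_1)$ to $0$ or to $1$ disconnects $R(w)$ from $T(t_1)$ in $Q(w,t_1)$ (this is Lemma~\ref{lem:LemforbiddenLineage}), so by Lemma~\ref{lemma:determinant:connected} the factor vanishes identically on every face $s_j\in\set{0,1}$; hence $g$ is divisible by $\prod_j s_j(1-s_j)$ and, by the degree bound, equals $\beta\prod_j s_j(1-s_j)$ for a constant $\beta$, which is nonzero because unsafety makes $Q(w,t_1)$ connect $R$ and $T$, so $g\not\equiv 0$. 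This yields nonvanishing on the entire open cube at once, which is what the theorem requires.
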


To prove Theorem~\ref{thm:DetANon0}, we use the fact that $Q$ is
final, and prove:
%


\def\type1Dichotomy{
	Let $Q$ be a $\typea$ query. Then $f_A\equiv 0$ if and only if $\lin{B_\p(i,j)}{Q}$ is reducible.
}\eat{
\begin{theorem}\label{thm:type1Dichotomy}
	\type1Dichotomy
\end{theorem}
}


\def\LemforbiddenLineage{
Let $Q$ be a final Type-I query, and let $X$ be a Boolean RV in $Y^{(p)}(u,v)$ corresponding to any atom other than $R(u)$ and $R(v)$.
Then $Y^{(p)}[X\asn a]$is disconnected for $a\in \set{0,1}$.
}
\begin{lemma}\label{lem:LemforbiddenLineage}
\LemforbiddenLineage	
\end{lemma}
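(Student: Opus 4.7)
The plan is to proceed by cases on the type of Boolean variable $X$, leveraging the chain structure of the block TID $B_p(u,v)$. Recall that $Y^{(p)}(u,v)=\bigwedge_{i=1}^p Q(r_{i-1},t_i)\wedge Q(r_i,t_i)$ is a chain of $2p$ cells, where adjacent cells share a single unary atom ($T(t_k)$ between $(r_{k-1},t_k)$ and $(r_k,t_k)$, or $R(r_k)$ between $(r_k,t_k)$ and $(r_k,t_{k+1})$). The endpoint atoms $R(u)=R(r_0)$ and $R(v)=R(r_p)$ sit at the two ends of the chain and are excluded from the lemma.

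For the unary cases $X=T(t_k)$ (any $k\in[1,p]$) or $X=R(r_k)$ (intermediate $k\in[1,p-1]$), the claim follows by pure structural analysis, without invoking finality. Setting $X\asn a$ removes $X$ from the formula: right clauses containing $T(t_k)$ become trivially true ($a=1$) or collapse to middle-clause-like forms ($a=0$), and similarly for $R(r_k)$ in left clauses. Since $X$ is the unique bridge between its two adjacent cells, the chain cleanly splits into a left sub-chain (containing $R(u)$ and all cells up to and including the modified one on the left) and a right sub-chain (containing $R(v)$ and all cells from the modified one on the right). No atom is shared across the split, because binary atoms $S_*(r_i,t_j)$ are confined to a single cell, and every other bridging atom $R(r_{k'}),T(t_{k'})$ lives entirely in one half.

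The binary case $X=S_j(r_i,t_k)$ is the heart of the proof and is where finality comes in. Setting $X\asn a$ affects only cell $(r_i,t_k)$, and its local effect there is identical to evaluating $Q[S_j\asn a]$ at that point: if $a=1$, every clause of $Q$ containing $S_j$ becomes trivially true at $(r_i,t_k)$; if $a=0$, the disjunct $S_j(r_i,t_k)$ is erased. Finality gives that both $Q[S_j\asn 0]$ and $Q[S_j\asn 1]$ are safe bipartite Type-I queries, i.e.\ no left clause reaches a right clause via a path of shared relation symbols. Instantiated at a single point, this syntactic safety translates to the semantic statement that $R(r_i)$ and $T(t_k)$ lie in different connected components of the modified cell's clause-graph. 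Thus the cell splits into an ``$R$-side'' and a ``$T$-side,'' which, grafted onto the rest of the chain through $R(r_i)$ and $T(t_k)$ respectively, yield a global decomposition of $Y^{(p)}[X\asn a]$ into two disjoint pieces, one holding $R(u)$ and the other $R(v)$.

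The main obstacle will be the boundary cells $(r_0,t_1)$ and $(r_p,t_p)$, where the substitution can cause $R(u)$ or $R(v)$ to drop out of the lineage entirely (for instance, if every left clause of $Q$ contains $S_j$, setting $S_j(r_0,t_1)\asn 1$ eliminates $R(r_0)$). In such degenerate cases the lineage should be read as ``disconnected'' in the weaker sense compatible with Lemma~\ref{lemma:determinant:connected}: the arithmetization $y^{(p)}[X\asn a]$ no longer depends on the vanished endpoint variable, so it factorizes as a constant in that variable times a function of the surviving one, which is exactly what is needed downstream for the $2\times 2$ determinant $\det(\mb y)$ (resp.\ $f_A$) to vanish. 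The nontriviality of both pieces will be maintained because $Q$ is unsafe, so at least one non-trivial clause survives on each side of every split; carefully tracking this for all four boundary subcases is the last piece of bookkeeping.
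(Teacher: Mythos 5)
Your proof is correct and follows essentially the same route as the paper's: the unary bridge variables $R(r_k)$, $T(t_k)$ split the chain structurally since each is the sole variable shared by its two adjacent cells, and the binary case invokes finality of $Q$ to split the single affected cell, with the sub-possibility that safety arises from a left (resp.\ right) clause becoming redundant being reduced to the unary cases / the endpoint-vanishing situation you flag at the end. The paper phrases that last sub-case as ``equivalent to setting $R(r_k)\asn 1$ or $T(t_k)\asn 1$'' rather than as a boundary degeneracy, but the content is the same.
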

\begin{proof}
For $i{\geq}1$, we define: $f(i){\eqdef}Q(r_{i{-}1},t_{i}){\wedge}Q(r_i,t_i)$.

Case 1: $X{=}R(r_k)$ for some $k{\in}[1,p{-}1]$ (we let $r_0{=}u$, and $r_p{=}v$).
We note that $Y^{(\p)}{=}h_1{\wedge}h_2$ where
$h_1{\eqdef}\bigwedge_{i=1}^{k}f(i)$ and
$h_2{\eqdef}\bigwedge_{i={k+1}}^{\p}f(i)$.
Since $\var(h_1){\cap}\var(h_2){=}\set{X}$, the result follows.

Case 2:  $X{=}T(t_k)$ for some $k{\in}[1,p]$.
We express $Y^{(\p)}{=}h_1{\wedge}h_2$ where $h_1{\eqdef}\bigwedge_{i=1}^{k{-}1}f(i){\wedge}Q(r_{k{-}1},t_k)$ and
$h_2{\eqdef}Q(r_{k},t_k) \bigwedge_{i={k{+}1}}^{\p}f(i)$. The result follows from noting that 
$\var(h_1){\cap}\var(h_2){=}\set{X}$.

Case 3: $X{=}S(r_k,t_k)$ for some binary symbol $S{\in}\symb(Q)$ where $k{\in}[1,\p]$. Consider the Boolean function $Q(r_k,t_k)$.  
Since $Q$ is final, then setting a value to $S(r_k,t_k)$ makes it safe. Hence, it does one of the following: (1) Makes $Q(r_k,t_k)$ disconnected, (2) Makes $Q_{\texttt{left}}(r_k,t_k)$ redundant or, (3) Makes $Q_{\texttt{right}}(r_k,t_k)$ redundant.
If (1) then we are done because if $Q(r_k,t_k)[S(r_k,t_k)]$ is disconnected, then clearly so is $Y^{(\p)}$. Otherwise,
since (2) is equivalent to setting $R(r_k){\gets}1$ and since (3) is equivalent to setting $T(t_k){\gets}1$ then the result follows from cases 1 and 2 respectively.
\end{proof}

\begin{corollary} There exists some constant $\alpha \neq 0$ such that:
\label{corr:fADet}
$f_A=\alpha\prod_{i=1}^Nu_i(1-u_i)$
\end{corollary}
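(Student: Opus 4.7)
\textbf{Proof Proposal for Corollary~\ref{corr:fADet}.}
The plan is to show that $f_A$ is divisible by $u_i(1-u_i)$ for each of the $N$ variables $u_1, \ldots, u_N$, and then use a degree count to conclude that the quotient is a (nonzero) constant. The two ingredients we need are already in place: Lemma~\ref{lem:LemforbiddenLineage} tells us that substituting any non-endpoint variable with $\{0,1\}$ disconnects $R(u)$ from $R(v)$ in the lineage $Y^{(1)}(u,v)$, and Lemma~\ref{lemma:determinant:connected} converts such a disconnection into the vanishing of the corresponding $2 \times 2$ determinant.

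First I would observe that each entry $y_{ab}(1)$ of the matrix $A^{(1)}$ is a multilinear polynomial in $u_1, \ldots, u_N$, so the determinant $f_A = y_{00}(1)y_{11}(1) - y_{01}(1)y_{10}(1)$ is a polynomial of degree at most $2$ in each $u_i$ and of total degree at most $2N$. Next, fix any variable $u_i$, corresponding to an atom $X \ne R(u), R(v)$. By Lemma~\ref{lem:LemforbiddenLineage}, both $Y^{(1)}[X \asn 0]$ and $Y^{(1)}[X \asn 1]$ disconnect $R(u)$ from $R(v)$. The resulting $2\times 2$ matrix is precisely $A^{(1)}[u_i := a]$, whose determinant equals $f_A[u_i := a]$. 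Applying Lemma~\ref{lemma:determinant:connected} to each of these disconnected lineages gives $f_A[u_i := 0] \equiv 0$ and $f_A[u_i := 1] \equiv 0$.

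Viewing $f_A$ as a polynomial in the single variable $u_i$ over the ring of polynomials in the remaining variables, it has degree at most $2$ and vanishes at both $u_i = 0$ and $u_i = 1$. Hence $u_i(1-u_i)$ divides $f_A$. Since the polynomials $u_1(1-u_1), \ldots, u_N(1-u_N)$ lie in pairwise disjoint variables, they are pairwise coprime, so their product $\prod_{i=1}^N u_i(1-u_i)$ divides $f_A$. This product has total degree exactly $2N$, matching the upper bound on the total degree of $f_A$ from the first step. The quotient $\alpha \defeq f_A / \prod_{i=1}^N u_i(1-u_i)$ is therefore a constant. Finally, Theorem~\ref{thm:DetANon0} guarantees that $f_A$ is not identically zero (indeed it is nonzero at every interior point), so $\alpha \neq 0$, completing the proof.

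The main obstacle is purely bookkeeping: one must be careful that $f_A$, as a polynomial in each single variable, really has degree at most $2$ (so that having the two roots $0$ and $1$ forces divisibility by $u_i(1-u_i)$), and that the total degree bound $2N$ is tight enough for the final degree comparison to pin $\alpha$ down to a scalar. Both facts follow immediately from the multilinearity of each $y_{ab}(1)$, so no hidden difficulty is expected.
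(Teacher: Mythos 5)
Your argument follows the paper's proof almost exactly: Lemma~\ref{lem:LemforbiddenLineage} plus Lemma~\ref{lemma:determinant:connected} give $f_A[u_i:=0]=f_A[u_i:=1]=0$, hence divisibility by each $u_i(1-u_i)$, and a degree count (you use total degree $2N$; the paper uses degree $\leq 2$ per variable, which is equivalent here) pins the quotient down to a constant. All of these steps are correct.

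The one genuine flaw is the last sentence: you justify $\alpha\neq 0$ by invoking Theorem~\ref{thm:DetANon0}, but in the paper's logical development that theorem is \emph{derived from} Corollary~\ref{corr:fADet} (the text immediately after the corollary reads ``Theorem~\ref{thm:DetANon0} follows from Corollary~\ref{corr:fADet} because $f_A(\mathbf{u})\neq 0$ for any assignment \dots''). Citing it here is circular, and there is no independent proof of the theorem available at this point. The correct justification is the weaker fact, already established just before Theorem~\ref{thm:DetANon0} is stated, that $f_A\not\equiv 0$: this follows from connectedness of the lineage (Lemma~\ref{lem:lemLineageConnected}) together with the direction $(2)\Rightarrow(1)$ of Lemma~\ref{lemma:determinant:connected}, read contrapositively. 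Since $f_A=\alpha\prod_{i=1}^N u_i(1-u_i)$ and $f_A\not\equiv 0$, the constant $\alpha$ must be nonzero. With that substitution your proof is complete and coincides with the paper's.
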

\begin{proof}
  From Lemma~\ref{lemma:determinant:connected} it follows that $f_A[u_i:=0]=f_A[u_i:=1]=0$. Therefore, it follows that $f_A$ is
  divisible by $u_i(1-u_i)$, for every $i$.  Hence
  $f_A = \alpha\prod_{i=1}^Nu_i(1-u_i)$ and, since each variable has degree
  $\leq 2$ in $f_A$, it follows that $\alpha$ is a constant.
%
\end{proof}
Theorem~\ref{thm:DetANon0} follows from Corollary~\ref{corr:fADet} because $f_A(\bu)\neq0$ for any assignment $\bu:\var(f_A)\rightarrow (0,1)^N$.
\def\lemfAZero
{
	Let $f_A(u_1,\dots,u_N)$ denote the polynomial associated with $\det(A^{(1)})$ where $u_i$ represents the probability $\Pr(b_i)$ for every binary variable $b_i\in \var(\lin{B_\p(i,j)}{Q})$. 
	We define:
	\begin{align}
	g_1(b_1,\dots, b_N)&=\neg b_1\wedge \cdots \wedge \neg b_N\\
	g_2(b_1,\dots, b_N)&= b_1\wedge \cdots \wedge b_N
	\end{align}
	Then $f_A(\sigma(u_1),\dots,\sigma(u_N))=0$ iff $M_{g_1}(\sigma(u_1),\dots,\sigma(u_N))=0$ or $M_{g_2}(\sigma(u_1),\dots,\sigma(u_N))=0$
	where $\sigma{:}\set{u_1,\dots,u_N}{\rightarrow}[0,1]^N$. \eat{ is an assignment of probabilities to the binary random variables $\set{b_1,\dots,b_N}$.}
}
Thus, we have established that $f_A(c,\dots,c)\neq 0$. That is, $\det(A^{(1)})\neq 0$ when the real variables in $y(1)$ have a uniform value of $c$. In other words, we have shown that the matrix:
\begin{equation}\label{eq:A1z}
A^{(1)}\eqdef \begin{bmatrix} z_{00}(1) & z_{01}(1) \\
z_{10}(1)  & z_{11}(1)\end{bmatrix}
\end{equation}
is invertible.
\eat{
In what follows, $z_{ab}^{(1)}$ represents the valuation of the polynomial $y_{ab}^{(1)}$ when all variables are assigned $\frac{1}{2}$. That is:
\begin{equation}
\label{eq:zab}
z_{ab}^{(1)}\eqdef y_{ab}^{(1)}(\frac{1}{2},\dots,\frac{1}{2})
\end{equation}
}

So far, we have discussed only the ``small'' matrix $A^{(1)}$ that
corresponds to one step of the zig-zag block $B_p(u,v)$; in other
words, this is the matrix that we have for $B_1(u,v)$, when $p=1$.
Next, we show how to compute $A^{(p)}$.
We define:
\begin{equation} \label{eq:Ap}
A^{(p)}\eqdef \begin{bsmallmatrix} z_{00}(p) & z_{01}(p) \\
z_{10}(p)  & z_{11}(p)\end{bsmallmatrix}
\end{equation}

\eat{
\def\MatrixPowerthm{
	$A^{(\p)}=\begin{bsmallmatrix}z_{00}^{(\p)} & z_{01}^{(\p)} \\
	z_{10}^{(\p)} & z_{11}^{(\p)}\end{bsmallmatrix}=
	\frac{1}{2^{p-1}}\begin{bsmallmatrix}z_{00}^{(1)} & z_{01}^{(1)} \\
	z_{10}^{(1)} & z_{11}^{(1)}\end{bsmallmatrix}^p=\frac{1}{2^{p-1}}\left(A^{(1)}\right)^p$
}
}
\def\MatrixPowerlem{
	When $p\geq 1$ then $A^{(\p)}=[A^{(1)}\cdot C]^{p-1}\cdot A^{(1)}$ where $C=\begin{bsmallmatrix}1-c & 0 \\
		0 & c\end{bsmallmatrix}$. When $p=0$ then $A^{(\p)}= \mathbb{I}$ where $\mathbb{I}$ is the identity matrix.
	\eat{
		\begin{equation}
			\begin{bsmallmatrix}z_{00}(1) & z_{01}(1) \\
				z_{10}(1) & z_{11}(1)\end{bsmallmatrix}^p=\left[\begin{bsmallmatrix}z_{00}(1) & z_{01}(1) \\
				z_{10}(1) & z_{11}(1)\end{bsmallmatrix}\begin{bsmallmatrix}1-c & 0 \\
				0 & c\end{bsmallmatrix} \right]^{p-1}\begin{bsmallmatrix}z_{00}(1) & z_{01}(1) \\
				z_{10}(1) & z_{11}(1)\end{bsmallmatrix} \\		
		\end{equation}
	}
}
\begin{lemma}\label{lem:MatrixPower}
\MatrixPowerlem
\end{lemma}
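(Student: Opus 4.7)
The plan is a straightforward induction on $p$. The base case $p=1$ is trivial: with the convention $M^0 = \mathbb{I}$, the right-hand side $[A^{(1)}\cdot C]^{0}\cdot A^{(1)}$ reduces to $A^{(1)}$, which matches~\eqref{eq:A1z}.

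For the inductive step I would use the decomposition $Y^{(p)}(u,v) = Y^{(p-1)}(u, r_{p-1}) \wedge Y^{(1)}(r_{p-1}, v)$ already exploited in the proof of Lemma~\ref{lem:lemLineageConnected}, where $r_{p-1}$ is the left-domain point immediately before $v$ on the zig-zag. By the construction of $B_p(u,v)$ in Section~\ref{sec:blocks} together with the disjointness notion from Definition~\ref{eq:BlockTID}, the two sub-blocks $B_{p-1}(u, r_{p-1})$ and $B_1(r_{p-1}, v)$ share only the single unary tuple $R(r_{p-1})$; all other binary tuples and unary tuples on internal points belong to exactly one of the two sub-blocks. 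Conditioning on the value $e \in \{0,1\}$ of $R(r_{p-1})$, which equals $1$ with probability $c$ and $0$ with probability $1-c$, and noting that after this further conditioning the subformulas $Y^{(p-1)}_{ae}(u, r_{p-1})$ and $Y^{(1)}_{eb}(r_{p-1}, v)$ are over disjoint variable sets (hence independent as events in the TID), the law of total probability gives
\[
z_{ab}(p) \;=\; (1-c)\, z_{a0}(p-1)\, z_{0b}(1) \;+\; c\, z_{a1}(p-1)\, z_{1b}(1).
\]
This is precisely the $(a,b)$-entry of the matrix product $A^{(p-1)}\cdot C\cdot A^{(1)}$, so $A^{(p)} = A^{(p-1)}\cdot C\cdot A^{(1)}$. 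Plugging in the induction hypothesis $A^{(p-1)} = [A^{(1)}\cdot C]^{p-2}\cdot A^{(1)}$ then yields $A^{(p)} = [A^{(1)}\cdot C]^{p-1}\cdot A^{(1)}$, as required.

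The only step that is not purely mechanical is the variable-disjointness claim underlying the independence step, and this is immediate from Definition~\ref{eq:BlockTID} together with the explicit list of tuples used in $B_p(u,v)$ (Example~\ref{ex:zig:zag}). I do not expect a genuine obstacle: the lemma is essentially a transfer-matrix identity for the zig-zag path, packaged in exactly the form later needed to get expressions of the shape $a_i\lambda_1^p + b_i\lambda_2^p$ demanded by Theorem~\ref{thm:designSection}, which will follow by diagonalizing $A^{(1)}\cdot C$.
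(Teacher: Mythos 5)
Your proof is correct and follows essentially the same route as the paper: induction on $p$ via the decomposition $Y^{(p)}(u,v)=Y^{(p-1)}(u,r_{p-1})\wedge Y^{(1)}(r_{p-1},v)$, conditioning on the shared variable $R(r_{p-1})$ to obtain $z_{ab}(p)=(1-c)z_{a0}(p-1)z_{0b}(1)+c\,z_{a1}(p-1)z_{1b}(1)$, i.e.\ $A^{(p)}=A^{(p-1)}CA^{(1)}$. The only thing you omit is the (trivial) $p=0$ clause of the statement, which the paper justifies by noting the empty lineage gives $Y^{(0)}_{00}=Y^{(0)}_{11}=1$ and $Y^{(0)}_{01}=Y^{(0)}_{10}=0$.
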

\begin{proof}
	For the case where $\p\geq 1$, the proof is by induction on $\p$. The base case for $\p{=}1$ is immediate. So, we assume correctness for $\p{-}1$ and prove for $\p$.
	\begin{align*}
	Y^{(p)}(u,v)&\eqdef \bigwedge_{(s,t)\in B_p(u,v)}Q(s,t)\\
	&=\underbrace{\bigwedge_{(s,t)\in B_{p-1}(r_0,r_{\p{-}1})}Q(s,t)}_{Y^{(p-1)}(r_0,r_{\p{-}1})}\underbrace{\bigwedge_{(s,t)\in B_1(r_{\p{-}1},r_{\p})}Q(s,t)}_{Y^{(1)}(r_{\p{-}1},r_\p)}
	\end{align*}
	We note that the only atom common to both $Y^{(p-1)}(r_0,r_{\p{-}1})$ and $Y^{(1)}(r_{\p{-}1},r_\p)$ is $R(r_{\p-1})$. 
	Since $\Pr(R(r_{\p{-}1}){=}1)=c$, we have that:
	\begin{align}
	z_{ab}(p)&=(1{-}c)\left(z_{a0}(p{-}1)\right)\left(z_{0b}(1)\right)+c\left(z_{a1}(p{-}1)\right)\left(z_{1b}(1)\right)
	\label{eq:MatrixProof}
	\end{align}	
	Writing~\eqref{eq:MatrixProof} in matrix terms and applying the induction hypothesis, we get that:
	\begin{align}
		A^{(p)}&=\begin{bsmallmatrix} z_{00}(p) & z_{01}(p) \\
			z_{10}(p)  & z_{11}(p)\end{bsmallmatrix} \nonumber \\
			   &=\begin{bsmallmatrix} z_{00}(p-1) & z_{01}(p-1) \\
			   	z_{10}(p-1)  & z_{11}(p-1)\end{bsmallmatrix}\begin{bsmallmatrix}1-c & 0 \\ 0 & c\end{bsmallmatrix}\begin{bsmallmatrix} z_{00}(1) & z_{01}(1) \\
			   	z_{10}(1)  & z_{11}(1)\end{bsmallmatrix} \label{eq:writeInMatrixTerms} \\
		   	   &=\left[\begin{bsmallmatrix}z_{00}(1) & z_{01}(1) \\
		   	   	z_{10}(1) & z_{11}(1)\end{bsmallmatrix}\begin{bsmallmatrix}1-c & 0 \\
		   	   	0 & c\end{bsmallmatrix} \right]^{p-2}\begin{bsmallmatrix}z_{00}(1) & z_{01}(1) \\
		   	   	z_{10}(1) & z_{11}(1)\end{bsmallmatrix} \nonumber \\
	   	   	    &~~~\cdot \begin{bsmallmatrix}1-c & 0 \\ 0 & c\end{bsmallmatrix}\begin{bsmallmatrix}z_{00}(1) & z_{01}(1) \\
	   	   	    	z_{10}(1) & z_{11}(1)\end{bsmallmatrix} \label{eq:applyInductionHypothesis}\\
   	   	    	&=\left[\begin{bsmallmatrix}z_{00}(1) & z_{01}(1) \\
   	   	    		z_{10}(1) & z_{11}(1)\end{bsmallmatrix}\begin{bsmallmatrix}1-c & 0 \\
   	   	    		0 & c\end{bsmallmatrix} \right]^{p-1}\begin{bsmallmatrix}z_{00}(1) & z_{01}(1) \\
   	   	    		z_{10}(1) & z_{11}(1)\end{bsmallmatrix} \label{eq:simplify}
	\end{align}
where~\eqref{eq:writeInMatrixTerms} follows from writing $z_{ab}(p)$ in matrix terms, ~\eqref{eq:applyInductionHypothesis} follows from the induction hypothesis, and~\eqref{eq:simplify} from simplification. 

When $p=0$, we are basically looking at an empty lineage expression $Y^{(0)}(u,u)$, that is trivially $\true$ for $R(u)\in \set{0,1}$. In other words, $Y_{00}^{(0)}=Y^{(0)}_{11}=1$ and $Y_{01}^{(0)}=Y^{(0)}_{10}=0$.
\eat{
	\begin{align}
	A^{(p)} &= 
	\frac{1}{2}
	\begin{bsmallmatrix}
	z_{00}(\p{-}1) & z_{01}(\p{-}1)\\
	z_{10}(\p{-}1) & z_{11}(\p{-}1)
	\end{bsmallmatrix} 
	\begin{bsmallmatrix}
	z_{00}(1) & z_{01}(1)\\
	z_{10}(1) & z_{11}(1)
	\end{bsmallmatrix} \nonumber
	=\frac{1}{2^{p-1}}\begin{bsmallmatrix}
	z_{00}(1) & z_{01}(1) \\
	z_{10}(1) & z_{11}(1)\end{bsmallmatrix}^p \nonumber
	\end{align}	
}
\end{proof}
An immediate corollary of Lemma~\ref{lem:MatrixPower} is that $A^{(\p)}=[A^{(1)} C]^{p} C^{-1}$. It is easy to see that $C^{-1}=\begin{bsmallmatrix} \frac{1}{1-c} & 0 \\
	0  & \frac{1}{c}\end{bsmallmatrix}$.

Let $\lambda_1$ and $\lambda_2$ be the eigenvalues of $A^{(1)}$
(defined in~\eqref{eq:A1z}).  We prove now
condition~\eqref{eq:designyi}, which is the first of the three conditions
that we need to establish. We require the following simple proposition.
\begin{proposition}\label{prop:lineageProps}
	The following hold: 
	(1) $z_{00}(1)< z_{01}(1)=z_{10}(1)< z_{11}(1)$, and
	(2) $0<z_{ab}(1)\leq 1$ for all $a,b\in \set{0,1}$
\end{proposition}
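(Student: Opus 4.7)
My plan is to argue directly from the fact that $Y^{(1)}(u,v) = Q(u,t_1) \wedge Q(v,t_1)$ is a positive (monotone) Boolean CNF, combined with one small combinatorial witness per strict inequality.

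Part (2) is routine. The bound $z_{ab}(1) \leq 1$ holds because $z_{ab}(1)$ is a probability. For $z_{ab}(1) > 0$, the all-ones assignment to the $N$ remaining Boolean variables satisfies the positive CNF $Y^{(1)}_{ab}$, so $z_{ab}(1) \geq c^N > 0$.

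For part (1), the equality $z_{01}(1)=z_{10}(1)$ is immediate from Proposition~\ref{prop:symmetry}, which gives $Y^{(1)}_{ab}(u,v) \equiv Y^{(1)}_{ba}(u,v)$; and the weak inequalities $z_{00}(1) \leq z_{01}(1) \leq z_{11}(1)$ follow from monotonicity of $Y^{(1)}$ in $R(u)$ and $R(v)$, applied pointwise to the remaining variables before averaging. The substantive step, which I expect to be the main obstacle, is upgrading these two weak inequalities to strict ones. My plan is to exhibit explicit Boolean assignments $\tau$ to the remaining variables witnessing $Y^{(1)}_{10} \wedge \neg Y^{(1)}_{00}$ and $Y^{(1)}_{11} \wedge \neg Y^{(1)}_{10}$; since $c \in (0,1)$ places strictly positive mass on every Boolean assignment, any such witness directly yields a strict probability gap.

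To construct $\tau$ for the first gap I fix a left clause $\forall x\forall y(R(x) \vee S_{J_i}(x,y))$ of $Q$ (one exists because $Q$ is of Type I, cf.~\eqref{eq:fTypeIQ}) and set $T(t_1) := 1$, $S_j(u,t_1) := 0$ for $j\in J_i$ and $S_j(u,t_1) := 1$ otherwise, and $S_j(v,t_1) := 1$ for all $j$. The verification that $\tau$ together with $R(u):=1$ satisfies every clause of $Y^{(1)}_{10}$ hinges on minimization of $Q$: for every middle clause $\forall x\forall y S_{J'}(x,y)$ we must have $J' \not\subseteq J_i$, since otherwise the identity map would be a homomorphism from the middle clause to the left clause, making the latter redundant; hence some $j \in J' \setminus J_i$ ensures $S_j(u,t_1)=1$, and the middle clause holds. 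The left clauses on $u$ are satisfied by $R(u)$, the right clauses by $T(t_1)$, and the whole $v$-side by the all-ones choice. Under $R(u):=0$, however, the ground clause $R(u) \vee S_{J_i}(u,t_1)$ collapses to $S_{J_i}(u,t_1)$, which is false by construction, so $Y^{(1)}_{00}$ is falsified. The mirror construction on the $v$-side, using a left clause ground at $v$, yields $z_{10}(1) < z_{11}(1)$ and completes the proof.
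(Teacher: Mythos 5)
Your proof is correct and follows essentially the same route as the paper's: symmetry of the block for $z_{01}(1)=z_{10}(1)$, monotonicity of the positive lineage for the weak inequalities, and the fact that every assignment carries positive mass for strictness and for $z_{ab}(1)>0$. The only difference is that the paper simply asserts that $Y^{(1)}$ ``depends on both $R(u)$ and $R(v)$,'' whereas you actually prove this dependence by exhibiting the witness assignments and invoking non-redundancy of the clauses --- a detail the paper leaves implicit, and which you fill in correctly.
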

\begin{proof}
	We note that $Y^{(1)}(u,v)$ and $Y^{(1)}(v,u)$ are identical. Consequently, $Y_{ab}^{(1)}\equiv Y_{ba}^{(1)}$ up to the renaming of the variables. Therefore, $z_{01}(1)=z_{10}(1)$.
	Since $Q$ does not contain negations, then its lineage is a monotonic Boolean function. Further, since the Boolean function $Y^{(1)}$ depends on both atoms $R(u)$ and $R(v)$, then (1) follows.	
	Item (2) follows by noting that $Y_{00}^{(1)}$ is satisfiable for any final type-I query, thus $z_{00}(1)> 0$.
\end{proof}
\blue{
In what follows, we define 
\begin{equation}
	\label{eq:B}
	B\eqdef [A^{(1)}\cdot C]=\begin{bsmallmatrix}
		(1-c)z_{00}(1) & cz_{01}(1) \\
		(1-c)z_{10}(1) & cz_{11}(1)\end{bsmallmatrix}
\end{equation}
}
\def\DetBNon0{ Let $\lambda_1,\lambda_2$ be the eigenvalues of
  $B$ (see~\eqref{eq:B}). Then: $\lambda_1\neq 0$, $\lambda_2\neq 0$, and
  $\lambda_1 {\neq} \pm \lambda_2$.  }
\begin{lemma}\label{lem:DetBNon0}
	\DetBNon0
\end{lemma}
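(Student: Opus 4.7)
My plan is to reduce all three statements to standard facts about a $2{\times}2$ matrix via its trace and determinant, then use the positivity properties in Proposition~\ref{prop:lineageProps} together with Theorem~\ref{thm:DetANon0}. Recall that if $\lambda_1,\lambda_2$ are the eigenvalues of $B$ then $\lambda_1\lambda_2=\det(B)$ and $\lambda_1+\lambda_2=\operatorname{tr}(B)$, and $\lambda_1=\lambda_2$ iff the discriminant $\operatorname{tr}(B)^2-4\det(B)$ vanishes.

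First I would handle $\lambda_1,\lambda_2\neq 0$. Since $B=A^{(1)}\cdot C$, the determinant factors as $\det(B)=\det(A^{(1)})\cdot\det(C)=\det(A^{(1)})\cdot c(1-c)$. As $c\in(0,1)$, the factor $c(1-c)>0$, and by Theorem~\ref{thm:DetANon0} we have $\det(A^{(1)})=f_A(c,\ldots,c)\neq 0$. Hence $\det(B)\neq 0$, so both eigenvalues are nonzero.

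Next, for $\lambda_1\neq -\lambda_2$ I would show $\operatorname{tr}(B)\neq 0$. Directly from the definition~\eqref{eq:B}, $\operatorname{tr}(B)=(1-c)z_{00}(1)+c\,z_{11}(1)$. By Proposition~\ref{prop:lineageProps} (item (2) together with item (1)) both $z_{00}(1)$ and $z_{11}(1)$ are strictly positive, and the coefficients $1-c,c$ are both positive, so $\operatorname{tr}(B)>0$.

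Finally, for $\lambda_1\neq\lambda_2$ I would compute the discriminant and rewrite it as a sum of non-negative terms with at least one strictly positive. Expanding,
\begin{align*}
\operatorname{tr}(B)^2-4\det(B)
&=\bigl((1-c)z_{00}(1)+c\,z_{11}(1)\bigr)^2\\
&\quad -4c(1-c)\bigl(z_{00}(1)z_{11}(1)-z_{01}(1)z_{10}(1)\bigr)\\
&=\bigl((1-c)z_{00}(1)-c\,z_{11}(1)\bigr)^2+4c(1-c)\,z_{01}(1)z_{10}(1).
\end{align*}
The first summand is $\geq 0$, and by Proposition~\ref{prop:lineageProps} we have $z_{01}(1)=z_{10}(1)>z_{00}(1)>0$, while $c(1-c)>0$, so the second summand is strictly positive. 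Thus the discriminant is strictly positive, which gives $\lambda_1\neq\lambda_2$ (and as a bonus that both eigenvalues are real). Combined with the previous step, $\lambda_1\neq\pm\lambda_2$, completing the proof. The only nontrivial ingredient is the appeal to Theorem~\ref{thm:DetANon0} for $\det(A^{(1)})\neq 0$; everything else is purely arithmetic using positivity of $c,1-c$ and of the $z_{ab}(1)$'s.
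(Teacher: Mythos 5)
Your proof is correct and follows essentially the same route as the paper's own (elementary) argument: trace $>0$ rules out $\lambda_1=-\lambda_2$, $\det(B)=c(1-c)\det(A^{(1)})\neq 0$ (via Theorem~\ref{thm:DetANon0}) rules out zero eigenvalues, and the discriminant identity $\operatorname{tr}(B)^2-4\det(B)=((1-c)z_{00}(1)-cz_{11}(1))^2+4c(1-c)z_{01}(1)z_{10}(1)>0$ rules out $\lambda_1=\lambda_2$. No gaps.
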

\begin{proof}  This follows immediately from $\lambda_1+\lambda_2 =
  \text{Tr}(B)=(1-c)z_{00}+cz_{11}>0$ because $c\in (0,1)$ and by Proposition~\ref{prop:lineageProps} $0<z_{00}<z_{11}$. Also, $\lambda_1\lambda_2 =
  \det(B)=\det(A^{(1)})\det(C)\neq 0$, and the fact that a matrix where all
  eigenvalues are equal is a diagonal matrix (which $B$
  obviously is not).
  
   For a more elementary argument, recall that
	the characteristic polynomial of $B$ (see~\eqref{eq:B}) is (we drop the parameter $(1)$):
	\begin{align}
	\det(\lambda I{-}B)&=\lambda^2-\lambda(cz_{11} + (1{-}c)z_{00})+ c(1{-}c)(z_{00}z_{11}-z_{01}z_{10}) \label{eq:DetBNon0_1}
	\end{align}
	From~\eqref{eq:DetBNon0_1} we see that $\lambda{=}0$ is a root of the characteristic polynomial iff $\det(A^{(1)})=z_{00}z_{11}{-}z_{01}z_{10}{=}0$. By Theorem~\ref{thm:DetANon0}, this cannot be the case. Therefore, $\lambda_1,\lambda_2 {\neq} 0$. Also from~\eqref{eq:DetBNon0_1} we get that the two roots of $\det(\lambda I-A^{(1)})$ are: $$\lambda_{12}{=}\frac{(cz_{11} + (1{-}c)z_{00})\pm \sqrt{(cz_{11}- (1{-}c)z_{00})^2+4c(1{-}c) z_{01}z_{10}}}{2}.$$
	Since $z_{01}{=}z_{10}{>} 0$ (Proposition~\ref{prop:lineageProps})) and since $c\in (0,1)$, it follows that $\lambda_1 {\neq} \lambda_2$. Since, by Proposition~\ref{prop:lineageProps}, $(1{-}c)z_{00}+cz_{11}> 0$ it follows that $\lambda_1 {\neq} \pm \lambda_2$.
\end{proof}
\blue{
	\begin{corollary}
		\label{corr:DetApNonZero}
	For every $p\geq 1$ it holds that $\det(A^{(p)})\neq 0$. Or, that $y_{00}^{(p)}y_{11}^{(p)}-y_{01}^{(p)}y_{10}^{(p)}\neq 0$.
	\end{corollary}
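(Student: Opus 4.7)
The plan is to reduce the statement to the two non-vanishing facts we have already established, namely Theorem~\ref{thm:DetANon0} and Lemma~\ref{lem:DetBNon0}, by exploiting the factorization of $A^{(p)}$ given in Lemma~\ref{lem:MatrixPower}.

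More concretely, Lemma~\ref{lem:MatrixPower} says that for $p \geq 1$,
\begin{equation*}
A^{(p)} \;=\; B^{\,p-1}\cdot A^{(1)},
\end{equation*}
where $B = A^{(1)}\cdot C$ is the $2\times 2$ matrix defined in~\eqref{eq:B}. Applying multiplicativity of the determinant,
\begin{equation*}
\det\bigl(A^{(p)}\bigr) \;=\; \det(B)^{\,p-1}\cdot \det\bigl(A^{(1)}\bigr).
\end{equation*}
It therefore suffices to show that both factors on the right are nonzero.

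For the factor $\det(A^{(1)})$: by Theorem~\ref{thm:DetANon0}, the polynomial $f_A$ is nonzero whenever all variables are evaluated in $(0,1)$; in particular $f_A(c,\dots,c)\neq 0$, so $\det(A^{(1)}) = z_{00}(1)z_{11}(1)-z_{01}(1)z_{10}(1) \neq 0$. For the factor $\det(B)$: Lemma~\ref{lem:DetBNon0} shows that the two eigenvalues $\lambda_1,\lambda_2$ of $B$ are both nonzero, and since $\det(B)=\lambda_1\lambda_2$ we conclude $\det(B)\neq 0$. Combining the two, $\det(A^{(p)}) = \det(B)^{p-1}\det(A^{(1)})\neq 0$ for every $p\geq 1$, which gives the required inequality $y_{00}^{(p)}y_{11}^{(p)}-y_{01}^{(p)}y_{10}^{(p)}\neq 0$. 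No obstacle is expected: the entire argument is a direct bookkeeping calculation on top of the two prior results, and the only subtle point (hidden in Theorem~\ref{thm:DetANon0}) was the use of finality of $Q$, which has already been handled.
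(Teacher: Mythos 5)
Your proof is correct and takes essentially the same route as the paper: both use the factorization of $A^{(p)}$ from Lemma~\ref{lem:MatrixPower} together with multiplicativity of the determinant, and then invoke the already-established non-vanishing of $\det(B)$ (equivalently, of its eigenvalues) via Lemma~\ref{lem:DetBNon0}. The only cosmetic difference is that you write $A^{(p)}=B^{p-1}A^{(1)}$ and separately cite Theorem~\ref{thm:DetANon0} for $\det(A^{(1)})\neq 0$, while the paper uses the equivalent form $A^{(p)}=B^{p}C^{-1}$ and reduces everything to $\det(B)\neq 0$.
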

	\begin{proof}
		By Lemma~\ref{lem:MatrixPower}, we have that $A^{(p)}=B^pC^{-1}$. Hence $\det(A^{(p)})=0$ iff $\det(B)=0$. By Lemma~\ref{lem:DetBNon0}, the two eigenvalues of $B$ are non-zero. Hence, $\det(B)\neq 0$.
	\end{proof}
}
\blue{The following Lemma is required to establish the fact that $\mathcal{N}$ is invertible (Proposition~\ref{prop:NInvertible}). We recall the definition of $y_a^{(t)}$ in~\eqref{eq:Ya}.
\begin{lemma}
	\label{lem:NVandermondeQuotient}
	For every $a\in \set{0,1}$ and $t\geq 1$ it holds that $y_a^{(t)}>0$.
	Also, for every pair of integers $t_2>t_1 \geq 1$ it holds that $\frac{y_1^{(t_1)}}{y_0^{(t_1)}}\neq \frac{y_1^{(t_2)}}{y_0^{(t_2)}}$.
\end{lemma}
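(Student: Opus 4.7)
The plan is to reduce both claims to the spectral behavior of the linear recursion $y^{(t+1)}=By^{(t)}$ inherited from Lemma~\ref{lem:MatrixPower}, where $B=A^{(1)}C$ and $y^{(t)}\defeq (y_0^{(t)},y_1^{(t)})^T$. By the monotonicity remark following~\eqref{eq:Ya}, $y_a^{(t)}=z_{a1}(t)$, so $y^{(t)}$ is the second column of $A^{(t)}$; the identity $A^{(t+1)}=BA^{(t)}$ (immediate from Lemma~\ref{lem:MatrixPower}) then yields $y^{(t)}=B^{t-1}y^{(1)}$ with $y^{(1)}=(z_{01}(1),z_{11}(1))^T$. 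Positivity $y_a^{(t)}>0$ will come cheaply: every clause of the Type-I query~\eqref{eq:fTypeIQ} contains at least one binary atom, so the assignment that sets all binary atoms, all $T$-atoms, and all $R$-atoms at internal nodes to true satisfies $Y_{a1}^{(t)}$ regardless of the value of $a$; since every tuple probability lies in $\set{c,1}\subseteq(0,1]$ with $c\in(0,1)$, this world has strictly positive weight.

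For the ratio claim I will diagonalize $B=PDP^{-1}$ using the eigenvalues $\lambda_1,\lambda_2$ produced by Lemma~\ref{lem:DetBNon0} (real, nonzero, distinct, with $\lambda_1\neq\pm\lambda_2$), decompose $y^{(1)}=\alpha_1 v_1+\alpha_2 v_2$ in the eigenbasis of $B$, and argue that both $\alpha_i\neq 0$. Suppose for contradiction that $y^{(1)}$ were a scalar multiple of a single eigenvector, so that $By^{(1)}=\lambda y^{(1)}$ for some $\lambda$. Since $y^{(1)}=A^{(1)}e_2$ and $A^{(1)}$ is invertible by Theorem~\ref{thm:DetANon0} (instantiated at the uniform point $c$), the identity $A^{(1)}CA^{(1)}e_2=\lambda A^{(1)}e_2$ simplifies, after cancelling $A^{(1)}$ on the left, to $Cy^{(1)}=\lambda e_2$. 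Reading off the first coordinate forces $(1-c)z_{01}(1)=0$, contradicting Proposition~\ref{prop:lineageProps}. This is the step I expect to be the main obstacle, because it is exactly where \emph{finality} of $Q$ enters (through invertibility of the small matrix $A^{(1)}$, equivalently via Lemma~\ref{lemma:determinant:connected} applied to the lineage): without finality there is no reason for $y^{(1)}$ to be misaligned with the eigenbasis of $B$.

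With $\alpha_1\alpha_2\neq 0$ in hand, $y_a^{(t)}=\mu_a\lambda_1^t+\nu_a\lambda_2^t$ for coefficients $\mu_a,\nu_a$ proportional to the components of $v_1,v_2$, and a direct expansion yields
\[
y_1^{(t_1)}y_0^{(t_2)}-y_0^{(t_1)}y_1^{(t_2)} = (\mu_1\nu_0-\mu_0\nu_1)\cdot(\lambda_1\lambda_2)^{t_1}\cdot\bigl(\lambda_2^{t_2-t_1}-\lambda_1^{t_2-t_1}\bigr).
\]
All three factors are nonzero: the first is a nonzero multiple of $\det[v_1\mid v_2]$, using $\alpha_1\alpha_2\neq 0$ together with the linear independence of $v_1,v_2$ (distinct eigenvalues); the second because $\lambda_1,\lambda_2\neq 0$; and the third because $\lambda_2/\lambda_1\in\R\setminus\set{-1,0,1}$ rules out any real $(t_2-t_1)$th root of unity for $t_2>t_1\geq 1$. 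Dividing by the positive quantity $y_0^{(t_1)}y_0^{(t_2)}$ gives $y_1^{(t_1)}/y_0^{(t_1)}\neq y_1^{(t_2)}/y_0^{(t_2)}$, as required.
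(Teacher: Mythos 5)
Your proof is correct, but it takes a genuinely different route from the paper's. The paper never diagonalizes: it splits the length-$t_2$ block at an articulation point into a length-$t_1$ piece and a length-$d$ piece ($d=t_2-t_1$), expands $y_a^{(t_1+d)}$ bilinearly in the entries of $A^{(t_1)}$ and $A^{(d)}$, and, after invoking the symmetry $y_{01}^{(t)}=y_{10}^{(t)}$, arrives at the closed form
\begin{equation*}
y_1^{(t_1+d)}y_0^{(t_1)}-y_1^{(t_1)}y_0^{(t_1+d)}=\bigl(y_{11}^{(d)}-y_{00}^{(d)}\bigr)\bigl(y_{00}^{(t_1)}y_{11}^{(t_1)}-y_{01}^{(t_1)}y_{10}^{(t_1)}\bigr),
\end{equation*}
which is nonzero because the first factor is strictly positive by monotonicity and non-redundancy (Proposition~\ref{prop:lineageProps}) and the second is $\det(A^{(t_1)})\neq 0$ (Corollary~\ref{corr:DetApNonZero}). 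That argument is shorter and, notably, never needs the fact you correctly flag as the crux of your version, namely that $y^{(1)}$ has nonzero components along both eigenvectors of $B$. Your misalignment step (cancelling $A^{(1)}$ in $A^{(1)}CA^{(1)}e_2=\lambda A^{(1)}e_2$ to force $(1-c)z_{01}(1)=0$, contradicting Proposition~\ref{prop:lineageProps}) is clean, isolates exactly where finality enters (through $\det(A^{(1)})\neq 0$), and buys an explicit representation $y_a^{(t)}=\mu_a\lambda_1^{t}+\nu_a\lambda_2^{t}$ with both exponential modes genuinely present --- precisely the shape that conditions~\eqref{eq:conditionLambda}--\eqref{eq:conditionCoefficients} and Theorem~\ref{th:non-singular} consume downstream; your factorization $(\mu_1\nu_0-\mu_0\nu_1)(\lambda_1\lambda_2)^{t_1}\bigl(\lambda_2^{t_2-t_1}-\lambda_1^{t_2-t_1}\bigr)$ and the three non-vanishing claims all check out. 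One caveat: you read \eqref{eq:Ya} literally as $y_a^{(t)}=z_{a1}(t)$ (the second column of $A^{(t)}$), whereas the paper's own proof silently uses the row sum $y_a^{(t)}=y_{a0}^{(t)}+y_{a1}^{(t)}$; under that convention your $e_2$ becomes $\mathbf{1}$ and the misalignment contradiction is no longer the immediate $z_{01}(1)=0$, so if you adopt the row-sum reading you should fall back on the direct block-splitting computation, which works under either convention.
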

\begin{proof}
		The claim that $y_a^{(t)}>0$ follows by noting that $Y_{00}^{(1)}$ is satisfiable for any final type-I query and that $y_a^{(t)} =y_{a0}^{(t)}+y_{a1}^{(t)}\geq y_{00}^{(t)}$ by Proposition~\ref{prop:lineageProps}.
		
	Now, we show that for any $d>0$ it holds that $y_1^{(t+d)}y_0^{(t)}\neq y_1^{(t)}y_0^{(t+d)}$. We observe that for any $a\in \set{0,1}$ it holds that:
	\begin{align*}
		y_a^{(t+d)}&=y_{a0}^{(t+d)}+y_{a1}^{(t+d)} \\
		&=y_{a0}^{(t)}y_{00}^{(d)}+y_{a1}^{(t)}y_{10}^{(d)}+y_{a0}^{(t)}y_{01}^{(d)}+y_{a1}^{(t)}y_{11}^{(d)}
	\end{align*}
	Hence, after simplifying the expression and using the fact that $y_{01}^{(t)}=y_{10}^{(t)}$ (Proposition~\ref{prop:symmetry}), we get that 
	\begin{equation*}
		y_1^{(t+d)}y_0^{(t)}- y_1^{(t)}y_0^{(t+d)}=
		(y_{11}^{(d)}-y_{00}^{(d)})(y_{00}^{(t)}y_{11}^{(t)}-y_{01}^{(t)}y_{10}^{(t)})
	\end{equation*}
	By monotonicity and non-redundancy we get that for any $d\geq 1$ it holds that $(y_{11}^{(d)}-y_{00}^{(d)})>0$. By Corollary~\ref{corr:DetApNonZero} we have that $(y_{00}^{(t)}y_{11}^{(t)}-y_{01}^{(t)}y_{10}^{(t)})\neq 0$ for all $t\geq 1$. This completes the proof.
\end{proof}
}

Since $B$ has two distinct, non-zero eigenvalues, then it has two linearly independent eigenvectors.
In other words, $B$ is \e{diagonizable}, and thus $B^p=PD^kP^{-1}$ where $P$ is the matrix whose columns are the eigenvectors of $B$, and $D$ is the diagonal matrix of its eigenvalues. 
Let $v_1=(c_{11}, c_{21})^T$, and  $v_2=(c_{12}, c_{22})^T$ be the two linearly independent eigenvectors corresponding to eigenvalues $\lambda_1$ and $\lambda_2$ respectively. 
By Lemma~\ref{lem:MatrixPower}, we have that $A^{(p)}=B^p\cdot C^{-1}$, and hence has the following form.
\eat{
So, by Lemma~\ref{lem:MatrixPower}, $A^{(p)}=(A^{(1)})^{\p}$ will have the following form. 
}
\eat{
We now show how the result of Lemma~\ref{lem:MatrixPower} can be materialized or, in other words, how $A^{(1)}$ can be raised to the power $\p$ to compute $A^{(p)}$ (see Lemma~\ref{lem:MatrixPower}).
We recall, from Linear Algebra, that an $n{\times}n$ square matrix $A$ is \e{diagonalizable} if it can be written in the form $A=PDP^{-1}$ where $D$ is a diagonal $n{\times}n$ matrix with the eigenvalues of $A$ as it's entries, and $P$ is a non-singular $n{\times}n$  matrix consisting of the eigenvectors corresponding to the eigenvalues in $D$. A diagonalizable matrix $A$ can easily be raised to a power because $A^k=PD^kP^{-1}$ and raising a diagonal matrix to a power is simply a matter of taking the power of the values on its diagonal. The \e{Diagonalizational Theorem} from Linear Algebra tells us that an $n{\times} n$ square matrix is diagonalizable if and only if $A$ has $n$ linearly independent eigenvectors.
Furthermore, the set of eigenvectors of a square matrix $A$, each corresponding to a different eigenvalue of $A$ are linearly independent.
By Lemma~\ref{lem:DetBNon0} we get that $\lambda_1\neq\lambda_2$, and thus that $A^{(1)}$ has two linearly independent eigenvectors. In other words, $A^{(1)}$ is diagonalizable, and $(A^{(1)})^{\p}$ can be computed efficiently for any $\p \geq 0$. Let $v_1=(c_{11}, c_{21})^T$, and  $v_2=(c_{12}, c_{22})^T$ be the two linearly independent eigenvectors corresponding to eigenvalues $\lambda_1$ and $\lambda_2$ respectively. So, by Lemma~\ref{lem:MatrixPower}, $A^{(p)}=(A^{(1)})^{\p}$ will have the following form. 
}

\begin{align}
A^{(p)}&{=}B^pC^{-1}{=}(PD^\p P^{-1})C^{-1}{=}(P)(D^\p)(P^{-1}C^{-1}) \\
&{=}\begin{bsmallmatrix} c_{11} & c_{12} \\c_{21} & c_{22}\end{bsmallmatrix}\begin{bsmallmatrix} \lambda_1^{\p} & 0 \\0 & \lambda_2^{\p}\end{bsmallmatrix}\begin{bsmallmatrix} b_{11} & b_{12} \\b_{21} & b_{22}\end{bsmallmatrix} \nonumber
\\
&{=}\begin{bsmallmatrix}c_{11}b_{11}\lambda_1^{\p}{+}c_{12}b_{21}\lambda_2^{\p} & c_{11}b_{12}\lambda_1^{\p}{+}c_{12}b_{22}\lambda_2^{\p}\\
c_{21}b_{11}\lambda_1^{\p}{+}c_{22}b_{21}\lambda_2^{\p} & c_{21}b_{12}\lambda_1^{\p}{+}c_{22}b_{22}\lambda_2^{\p}\end{bsmallmatrix}\label{eq:BMatrix_1} 
\\
&{=}\begin{bsmallmatrix}a_{00}\lambda_1^{\p}{+}b_{00}\lambda_2^{\p} & a_{01}\lambda_1^{\p}{+}b_{01}\lambda_2^{\p} \\
a_{10}\lambda_1^{\p}{+}b_{10}\lambda_2^{\p} & a_{11}\lambda_1^{\p}{+}b_{11}\lambda_2^{\p} 
\end{bsmallmatrix}
\label{eq:BMatrix_2}
\\
&{=}\begin{bsmallmatrix}
a_{00}\lambda_1^{\p}{+}b_{00}\lambda_2^{\p} & a_{10}\lambda_1^{\p}{+}b_{10}\lambda_2^{\p} \\
a_{10}\lambda_1^{\p}{+}b_{10}\lambda_2^{\p} & a_{11}\lambda_1^{\p}{+}b_{11}\lambda_2^{\p} 
\end{bsmallmatrix}
\label{eq:BMatrix_3} 
\end{align}
where the transition from~\eqref{eq:BMatrix_1} to~\eqref{eq:BMatrix_2} is by defining $a_{00}{\eqdef}c_{11}b_{11}$, $b_{00}{\eqdef}c_{12}b_{21}$ etc. 
The transition from~\eqref{eq:BMatrix_2} to~\eqref{eq:BMatrix_3} follows from the fact that,\blue{
by construction, $A^{(p)}$ is symmetric (see Proposition~\ref{prop:symmetry}).}
Notice that this establishes Eq.\eqref{eq:yiDesign}, which we need to
prove as part of Theorem ~\ref{thm:designSection}.
\eat{
For every $i{\in} \set{00,10,11}$ we define $c_i{\eqdef} \frac{a_{i}}{b_i}$, and let $w_p{\eqdef} \left(\frac{\lambda_2}{\lambda_1}\right)^{\p}$. This allows us to write the matrix of~\eqref{eq:BMatrix_3} as follows:
\begin{equation}
A^{(\p)}=\lambda_1^{\p}\begin{bsmallmatrix}
b_{00}(c_{00}+w_p) & b_{10}(c_{10}+w_p) \\ b_{10}(c_{10}+w_p) & b_{11}(c_{11}+w_p)
\end{bsmallmatrix}
\label{eq:PowerpMatrix}
\end{equation}
}
In particular, we have that:
\begin{equation}
\label{eq:A1NewExpression}
A^{(1)}=\begin{bsmallmatrix}
a_{00}\lambda_1+ b_{00}\lambda_2 & 
a_{10}\lambda_1+ b_{10}\lambda_2 \\
a_{10}\lambda_1+ b_{10}\lambda_2  & 
a_{11}\lambda_1+ b_{11}\lambda_2
\end{bsmallmatrix}
\end{equation}
\eat{
To complete the proof of Theorem~\ref{thm:designSection} we require the following lemma.
\begin{lemma}
\label{lem:conjunctionProb}
Let $f$ and $g$ be non-disjoint monotonic Boolean functions. Then $\Pr(f \wedge g) > \Pr(f)\Pr(g)$.
\end{lemma}
\begin{proof}
Let $I=|\var(f) \cap \var(g)|$, where $x \in I$, and $\Pr(x=1)=c$. Since $x \in I$ then $c <1$.
Let $\theta: I\setminus \set{x} \rightarrow \set{0,1}^{|I|-1}$ be an assignment to all variables in $I\setminus \set{x}$.
Denote by $f_\theta$ the Boolean function $f[\theta]$, and by $f_{0\theta}, f_{1\theta}$ the Boolean functions $f[\theta,x=0]$ and $f[\theta, x=1]$ respectively.
\begin{align*}
\Pr(f_\theta \wedge  g_\theta) &= c \Pr(f_{1\theta} \wedge g_{1\theta})+(1-c) \Pr(f_{0\theta} \wedge g_{0\theta})\\
&=c\Pr(f_{1\theta})\Pr(g_{1\theta})+(1-c) \Pr(f_{0\theta})\Pr(g_{0\theta}) \\
\Pr(f_\theta)\Pr(g_\theta)&=(c\Pr(f_{1\theta})+(1-c)\Pr(f_{0\theta}))(c\Pr(g_{1\theta})+(1-c)\Pr(g_{0\theta}))\\
&=c^2f_{1\theta}g_{1\theta}+(1-c)^2f_{0\theta}g_{0\theta}+c(1-c)(f_{1\theta}g_{0\theta}+f_{0\theta}g_{1\theta}) 
\end{align*}
Therefore:
\begin{align*}
\Pr(f_\theta \wedge g_\theta)-Pr(f_\theta)\Pr(g_\theta)
\eat{
&=c(1-c)f_{1\theta}g_{1\theta}+c(1-c)f_{0\theta}g_{0\theta}-c(1-c)(f_{1\theta}g_{0\theta}+f_{0\theta}g_{1\theta}) \nonumber \\}
=c(1-c)(f_{1\theta}-f_{0\theta})(g_{1\theta}-g_{0\theta}) 
\end{align*}
Since, by assumption, $f_{1\theta}>f_{0\theta}$, $g_{1\theta}>g_{0\theta}$, then $\Pr(f_\theta \wedge g_\theta)>Pr(f_\theta)\Pr(g_\theta)$.
The result follows by taking the sum over all assignments $\theta$.
\end{proof}
}
\subsubsection*{Proof of Theorem~\ref{thm:designSection}}
From~\eqref{eq:BMatrix_3} it follows that $z_i(p)$ can be written in the form of~\eqref{eq:yiDesign} for all $i\in \set{00,10,11}$. 
In Lemma~\ref{lem:DetBNon0}, we have shown that $\lambda_1,\lambda_2 \neq 0$ , and that $\lambda_1\neq \pm \lambda_2$, thus proving condition ~\eqref{eq:designyi}.

By Lemma~\ref{lem:MatrixPower} we have that $A^{(0)}=(A^{(1)})^0=\mathbb{I}$ where $\mathbb{I}$ is the identity matrix.
Therefore, from~\eqref{eq:BMatrix_3}, we get the following three equations:  
\begin{align}
\label{eq:getToContradiction}
a_{00}+b_{00}=1 && a_{11}+b_{11}=1 && a_{10}+b_{10}=0
\end{align}
We show that $b_i\neq 0$ for all $i\in \set{00,10,11}$.
If $b_{10}=0$ then, since $a_{10}=-b_{10}$ it follows that  $z_{10}(1)=0$, which is a contradiction (Proposition~\ref{prop:lineageProps}). Thus, $b_{10}\neq 0$.
Assume that $b_{00}=0$. This means that $a_{00}=1$, and that for any $p \geq 1$, we have that $z_{00}(p)=\lambda_1^p$ (see~\eqref{eq:BMatrix_3}).
Recall that $z_{00}(p)=(1{-}c)z_{00}(p-1)z_{00}(1)+cz_{01}(p-1)z_{10}(1)$. By proposition~\ref{prop:lineageProps} we have that $z_{01}(p-1)>z_{00}(p-1)$, and that $z_{10}(1)>z_{00}(1)$. Therefore, $z_{00}(p)> z_{00}(p-1)z_{00}(1)= \lambda_1^{p-1}\lambda_1=\lambda^p$, and we arrive at a contradiction. 
Similarly, if $b_{11}=0$ then $a_{11}=1$, and $z_{11}(p)=\lambda_1^p$. Now, since $z_{11}(p)=(1{-}c)z_{10}(p-1)z_{01}(1)+cz_{11}(p-1)z_{11}(1)$, then since by Proposition~\ref{prop:lineageProps} $z_{11}(p)>z_{10}(p)=z_{01}(p)$ for all $p\geq 1$, we have that  $z_{11}(p)<\lambda_1^p$, which is a contradiction.

Finally, we show that $a_ib_j\neq a_jb_i$ for $i\neq j$. 
Assume, by contradiction, that $a_{00}b_{11}=a_{11}b_{00}$. Substituting $b_{11}=(1-a_{11})$ and $b_{00}=(1-a_{00})$, 
this implies that $a_{00}=a_{11}$ and thus $b_{00}=b_{11}$. But, by~\eqref{eq:A1NewExpression}, this means that $z_{00}(1)=z_{11}(1)$ which, by Proposition~\ref{prop:lineageProps}, is a contradiction.
Now, assume,  by contradiction, that $a_{00}b_{10}=a_{10}b_{00}$. From~\eqref{eq:getToContradiction} we have that $-a_{10}=b_{10}$. Substituting, this gives us that $a_{00}b_{10}=-b_{10}(1-a_{00})$ or, that $b_{10}=0$. But then $z_{10}(1)=0$, which, by Proposition~\ref{prop:lineageProps}, is a contradiction.
Symmetrically, it is shown that $a_{11}b_{10}\neq a_{10}b_{11}$. This completes the proof.

\section{Conclusions}

One can think of the model counting problem as: given a set of tuples
to {\em exclude}, compute the number of models of a sentence that do
not use any of the excluded tuples.  In this paper we studied the {\em
  generalized} model counting problem, where we are also given a set
of tuples to {\em include}, and need to count only those models that
contain all these tuples, and none of the excluded ones.  We have
established a dichotomy for Unions of Conjunctive Queries or,
equivalently, for $\forall$CNF formulas.  For a special case, called
{\em final queries of type I} we have also established a dichotomy for
the model counting problem; this complements a result by Amarilli et
al.~\cite{DBLP:conf/icalp/AmarilliBS15} that prove a dichotomy for
model counting for conjunctive queries without self-joins.  We leave
open the question whether UCQs admit a dichotomy for model counting.

\balance 
\bibliography{bib}

\newpage
\onecolumn	
\appendix
\section{Proof of Lemma~\ref{lemma:long}}

\label{appendix:proof:lemma:long}

\begin{figure}
  \centering
      \includegraphics[width=0.5\textwidth]{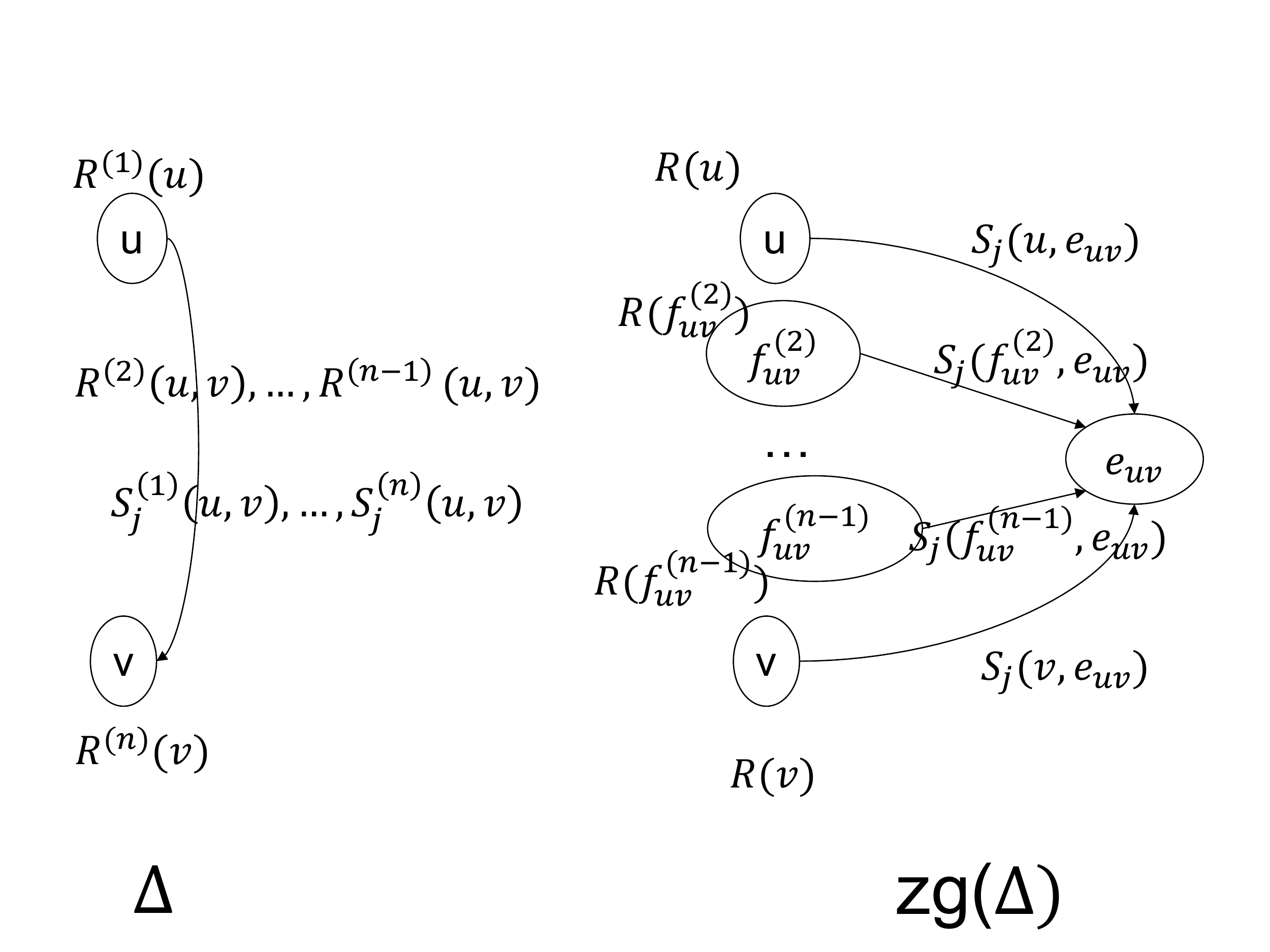}
      \caption{Illustration for a reduction
        $\pqe(\zg{Q}) \leq^P_m \pqe(Q)$ where $Q$ is a Type I-II
        query.  Given a query $Q$ we construct the query $\zg{Q}$.
        Then, given the database $\Delta$ for $\zg{Q}$ on the left, we
        construct the database $\zg{\Delta}$ for $Q$ on the right.
        Here $n$ is one plus the longest right clause of $Q$; for
        example if
        $Q_{\text{right}}= \forall y (\forall x S_1(x,y) \vee \forall
        x S_2(x,y)$, then $n=3$ and there is a single node
        $f_{uv}^{(2)}$, so that in total there are three edges
        incoming to $e_{uv}$.}
  \label{fig:reduction:to:long}
\end{figure}

Let $Q$ be an unsafe, bipartite query of length $k$, and let
$C_0, C_1, \ldots, C_k$ be a minimal left-right path.  Let $\calR$
denote its vocabulary.  We describe (a) a new unsafe, bipartite query
$\zg{Q}$ of length $2k$, over a new vocabulary $\zg{\calR}$ and (b) a
polynomial-time mapping that takes a bipartite TID $\pdb$ over the
vocabulary $\zg{\calR}$ and returns bipartite TID $\zg{\pdb}$ over the
vocabulary $\calR$ such that $\Pr_{\pdb}(Q)=\Pr_{\zg{\pdb}}(\zg{Q})$
and $\zg{Q}$ is long.  The probabilities values in $\pdb$ and
$\zg{\pdb})$ are the same, which proves that
$\gmc_{\text{bi}}(\zg{Q}) \leq^P_m \gmc_{\text{bi}}(Q)$.  Moreover, if
$Q$ is of type $A-B$, then $\zg{Q}$ is of type $A-A$.  These
properties prove Lemma~\ref{lemma:long}.

Define a number $n$ as follows.  If $Q_{\texttt{right}}$ is of Type I,
then $n=2$.  Otherwise, recall from Def.~\ref{def:types:of:clauses}
that the right clauses of type II have the form
$\forall y \left(\bigvee_{i=1}^\ell (\forall x
  S_{J_\ell}(x,y))\right)$.  We define $n$ to be the maximum between
3, and the largest value of $\ell$ of all right clauses.  Thus, by
definition $n \geq 3$.
%

{\bf The vocabulary $\zg{\calR}$.} We start by describing the mapping
$\texttt{zg}$ between the vocabularies.  The vocabulary $\calR$
contains a set of binary symbols $S_1,\dots,S_t$. In addition, it may
contain a single unary symbol $R$, and a single unary symbol $T$.  The
new vocabulary $\zg{\calR}$ consists of $m+n$ disjoint copies of
$\calR$, as follows:
\begin{itemize}
	\item If $R \in \calR$ then $\zg{\calR}$ contains symbols
	$R^{(1)}, R^{(2)}, \ldots, R^{(n)}$, where the first and last one
	are unary, and all others are binary.
	\item For every binary symbol $S \in \calR$, $\zg{\calR}$ contains the following binary symbols:	 
	$S^{(1)}, \ldots, S^{(n)}$.	
	\item If $T \in \calR$, then $\zg{\calR}$ contains a binary
	symbol $T^{(12)}$.
\end{itemize}

Notice that the only unary symbols are $R^{(1)}$ and $R^{(n)}$; the
symbol $T$ became a binary symbol $T^{(12)}$.  These will be the unary
symbols ``$R$ and $T$'' of the new query $\zg{Q}$.  Thus, if $Q$ was
of type I-I or I-II then $\zg{Q}$ will be of type I-I because it has
the two unary symbols $R^{(1)}$ and $R^{(n)}$.  If $Q$ was of type
II-I or II-II, then it has no $R$-symbol, hence $\zg{Q}$ has no unary
symbols, i.e. it will be of type II-II.

{\bf The database $\zg{\Delta}$.}  Next, given a bipartite
probabilistic database $\pdb = (\Dom,p)$ over vocabulary $\zg{\calR}$,
we describe how to construct $(\zg{\pdb},p')$.  Since $\pdb$ is
bipartite, is domain is $\Dom = V_1 \cup V_2$ and the only tuples $t$
with $p(t) \neq 1$ are those of the form $S_j(u,v)$,
$u \in V_1, v \in V_2$.  Define the following bipartite database,
$\zg{\pdb} = (\Dom', p')$ over the vocabulary $\calR$.  Its domain
$\Dom' = U_1 \cup U_2$ consists of the following constants:
\begin{itemize}
	\item For every $u \in V_1$,  $u$ is in $U_1$.
	\item For every $v \in V_2$,  $v$ is in $U_1$.
	\item For every pair $u\in V_1, v\in V_2$, there is a fresh element
	denoted $e_{uv}$, which is in $U_2$.
	\item For every pair $u\in V_1, v\in V_2$, there are $n-2$ fresh
	elements $f^{(i)}_{uv}$, $i=2,\ldots,n-1$, all of which are in
	$U_1$.
\end{itemize}

To define the tuple probabilities $p'$ in $\pdb$, we establish a
1-to-1 correspondence between the tuples in $\pdb$ and those in
$\zg{\pdb}$, which, in turn, defines $p'$ in terms of $p$:
\begin{itemize} 
	\item $p'(R(u)) \defeq p(R^{(1)}(u))$ for all $u \in V_1$.
	\item $p'(R(f^{(i)}_{uv})) = p(R^{(i)}(u,v))$ for all
	$u \in V_1, v \in V_2$, $i\in [2,n-1]$.
	\item $p'(R(v)) \defeq p(R^{(n)}(v))$ for all $v \in V_2$.
	\item $p'(S(u,e_{uv})) \defeq p(S^{(1)}(u,v))$ for all $u \in V_1, v \in V_2$.
	\item $p'(S(f^{(i)}_{uv},e_{uv})) \defeq p(S^{(i)}(u,v))$ for all  $u \in V_1, v \in V_2$, $i=2,\ldots,n-1$.
	\item $p'(S(v,e_{uv})) \defeq p(S^{(n)}(u,v))$ for all $u \in V_1, v \in V_2$.   
	\item $p'(T(e_{uv})) \defeq p(T^{(12)}(u,v))$
\end{itemize}
All other tuples have probability $p'(t) = 1$.

{\bf The querry $\zg{Q}$.}  Finally, we define the zig-zag query
$\zg{Q}$.

\begin{itemize}
	\item For every left clause $C$ in $Q$ there are $n$ clauses in $\zg{Q}$,
	of which the first is a left clause, the last is a right clause, and
	the rest are middle clauses.  More precisely, if $C$ is of type I clause,
	$C = R(x) \vee S_J(x,y)$ then $\zg{Q}$ contains the following clauses:
	\begin{align}
	\forall x \forall y (R^{(1)}(x) \vee & S_J^{(1)}(x,y)) \label{eq:new:left:1} \\
	\forall x \forall y (R^{(i)}(x,y) \vee & S_J^{(i)}(x,y)) &	i=2,\ldots,n-1\nonumber\\
	\forall x \forall y (R^{(n)}(y) \vee & S_J^{(n)}(x,y)) \label{eq:new:right:1} 
	\end{align}
	If $C$ is of type II,
	$C = \forall x \left(\bigvee_{i=1}^m \forall y(S_{J_i}(x,y))\right)$
	then $\zg{Q}$ contains the following clauses:  
	\begin{align}
	& \forall x \left(\bigvee_{i=1}^m\forall y(S_{J_i}^{(1)}(x,y))\right)\label{eq:new:left:2:beg}
	\\
	& \forall x \forall y \left(\bigvee_{i=1}^m S_{J_i}^{(j)}(x,y)\right) & i=2,\ldots,n-1\nonumber\\ 	
	& \forall y \left(\bigvee_{i=1}^m\forall x(S_{J_i}^{(n)}(x,y))\right)	  \label{eq:new:right:2} 
	\end{align}
	\item Every middle clause $C = S_J(x,y)$ in $Q$ becomes $n$ middle
	clauses in $\zg{Q}$:
	\begin{align}
	&\forall x \forall y S^{(i)}_J(x,y) & i=1,\dots,n \label{eq:middle-beg}
	\end{align}
      \item Every right clause becomes several middle clauses.  There
        are two cases.  If the right part of $Q$ is of Type I, then
        every right clause has the form
        $C = \forall x \forall y(S_J(x,y) \vee T(y))$.  In that case
        recall that $n=2$ and there will be exactly two middle clauses
        in $\zg{Q}$:
	\begin{align}
	& \forall x \forall y(S_J^{(1)}(x,y) \vee T^{(12)}(x,y)) \label{eq:new:middle:1}\\
	& \forall x \forall y(S_J^{(2)}(x,y) \vee T^{(12)}(x,y)) \label{eq:new:middle:2}
	\end{align}
	If the right part of $Q$ is of Type II, then every right
        clause has the form
        $C = \forall y(\bigvee_{i=1}^\ell \forall x S_{J_i}(x,y))$.
        In this case, we create $n^\ell$ middle clauses in $\zg{Q}$,
        as follows.  For every function $\phi: [\ell] \rightarrow [n]$
        there will be one middle clause (for a total of $n^\ell$
        middle clauses):
	\begin{align}
	C^{(\phi)} \defeq &\forall x \forall y \left(S_{J_1}^{(\phi(1))}(x,y) \vee \cdots \vee S_{J_\ell}^{(\phi(l))}(x,y)\right) &\label{eq:new:middle} 
	\end{align}  
\end{itemize}
$\zg{Q}$ is defined as the conjunction of all clauses above.  Recall
that, by assumption, we minimize every $\forall$CNF expression, that
means that the clauses described above need to be minimized, and
redundant ones need to be removed. 

We prove several properties of $\zg{Q}$:

\begin{lemma}\label{lem:shortToLong}
	$\Pr_{\pdb}(\zg{Q}) = \Pr_{\zg{\pdb}}(\zg{Q})$
\end{lemma}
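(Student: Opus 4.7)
The equality is an instance of a standard ``reduction identity'' for probabilistic databases: the two sides agree because there is a probability-preserving bijection between possible worlds of $\pdb$ and possible worlds of $\zg{\pdb}$ under which satisfaction of $\zg{Q}$ is invariant, where on $\zg{\pdb}$ the query $\zg{Q}$ is read through the tuple-level correspondence $\Theta$ that the construction already exhibits (so that $\zg{Q}$ on $\zg{\pdb}$ is the semantic pull-back of $Q$ via $\Theta$). Concretely, I will prove the identity by exhibiting a single chain $\Pr_\pdb(\zg{Q}) = \Pr_{\zg{\pdb}}(Q) = \Pr_{\zg{\pdb}}(\zg{Q})$, where the first equality is the substantive combinatorial content and the second is an immediate consequence of the fact that $Q$ and $\zg{Q}$ are two equivalent surface forms of the same event under $\Theta$.

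\textbf{Step 1 (measure-preserving bijection).} Using the seven bullet-point clauses that define $p'$, I pair each tuple $t \in \Tup(\pdb)$ with the unique tuple $\Theta(t) \in \Tup(\zg{\pdb})$ listed on the right-hand side of that bullet. Since $p'(\Theta(t)) = p(t)$ for every such tuple, and since every tuple of $\zg{\pdb}$ that is \emph{not} in the image of $\Theta$ has $p' = 1$ (hence is present in every possible world), $\Theta$ extends to a measure-preserving bijection between possible worlds $W \subseteq \Tup(\pdb)$ and relevant possible worlds $\Theta(W) \subseteq \Tup(\zg{\pdb})$ (the latter is completed by adding all forced tuples).

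\textbf{Step 2 (satisfaction equivalence, clause by clause).} The heart of the proof is the verification that for every world $W$, $W \models_\pdb \zg{Q}$ iff $\Theta(W) \models_{\zg{\pdb}} Q$. I would proceed clause-type by clause-type. For a Type~I left clause $R(x)\vee S_J(x,y)$ of $Q$, its grounding over $\zg{\pdb}$ at a pair $(x,y)$ is nontrivial only when $y = e_{uv}$ and $x \in \set{u, f^{(2)}_{uv},\ldots,f^{(n-1)}_{uv},v}$ (otherwise some $S(x,e_{uv})$-tuple has $p'=1$ and the clause is trivially satisfied), and in each of these $n$ cases the clause collapses exactly to one of the $n$ clauses~\eqref{eq:new:left:1}--\eqref{eq:new:right:1} of $\zg{Q}$ evaluated at $(u,v)$. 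The analogous reduction holds for middle clauses via~\eqref{eq:middle-beg} and, in the Type~I right case, via~\eqref{eq:new:middle:1}--\eqref{eq:new:middle:2} after using $T(e_{uv}) \leftrightarrow T^{(12)}(u,v)$.

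\textbf{Step 3 (Type~II distributivity, the main obstacle).} The most delicate bookkeeping is for Type~II clauses of $Q$ of the form $\forall y(\bigvee_{i=1}^\ell \forall x\, S_{J_i}(x,y))$. Grounding $y$ at $e_{uv}$ in $\zg{\pdb}$, each inner $\forall x$ ranges effectively over the $n$-element set $X_{uv} \defeq \set{u, f^{(2)}_{uv},\ldots,f^{(n-1)}_{uv},v}$, because outside $X_{uv}$ the relevant $S$-tuples have probability $1$. A standard distributivity step, pushing $\forall x$ inside the disjunction, rewrites this as $\bigwedge_{\phi:[\ell]\to[n]} \bigvee_{i=1}^\ell S_{J_i}(\phi(i), e_{uv})$, which under $\Theta$ is precisely the conjunction over $\phi$ of the middle clauses $C^{(\phi)}$ of~\eqref{eq:new:middle} at $(u,v)$. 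The analogous argument handles Type~II left clauses~\eqref{eq:new:left:2:beg}--\eqref{eq:new:right:2}. Summing $\Pr_\pdb(W)\mathbf{1}[W \models \zg{Q}]$ over $W$ and using the bijection of Step~1 yields the first equality; the second equality, $\Pr_{\zg{\pdb}}(Q) = \Pr_{\zg{\pdb}}(\zg{Q})$, is immediate because, interpreted via $\Theta$, $\zg{Q}$ on $\zg{\pdb}$ denotes the same Boolean event on $\Tup(\zg{\pdb})$ as $Q$ (the distributivity and per-pair trivialization arguments of Step~2/3 are equalities of Boolean lineages, not merely of probabilities). The main obstacle I anticipate is the careful handling of the Type~II right case, where one must verify that the $n^\ell$-fold product over choice functions $\phi$ is indeed exhaustive and non-redundant after the minimization step that $\zg{Q}$ undergoes.
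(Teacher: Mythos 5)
Your proposal is correct and takes essentially the same route as the paper: the paper likewise proves the stronger statement that the lineage of $Q$ on $\zg{\pdb}$ is equivalent, clause by clause under the tuple correspondence, to the lineage of $\zg{Q}$ on $\pdb$, using the same observation that groundings are nontrivial only at $y=e_{uv}$ and $x\in\set{u,f^{(2)}_{uv},\ldots,f^{(n-1)}_{uv},v}$, and the same distributivity step yielding the $n^\ell$ clauses $C^{(\phi)}$ for Type II right clauses. Your final anticipated obstacle is a non-issue for this lemma: whether some $C^{(\phi)}$ become redundant under minimization does not affect the probability, since removing redundant clauses preserves logical equivalence of the lineage (non-redundancy only matters for the subsequent lemma that $\zg{Q}$ is unsafe).
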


\begin{proof}
  We prove a stronger statement: the lineage of $Q$ on $\zg{\pdb}$ is
  equivalent to the lineage of $\zg{Q}$ on $\pdb$, up to the 1-to-1
  correspondence between the tuples described above.  For that, we
  will show that for each clause $C$ of $Q$, it's lineage is
  equivalent to the conjunction of lineages of clauses derived from
  $C$ in $\zg{Q}$.  There are several cases. In all cases we exploit
  the fact that $x$ in $C$ can be mapped only to constants of the form
  $u, f^{(2)}_{uv}, \ldots, f^{(n-1)}_{uv}, v$, and $y$ can be mapped
  only to constants of the form $e_{uv}$, where
  $u \in V_1, v \in V_2$:
	\begin{itemize}
		\item Suppose $C$ is a left clause.  If it is of Type I,
		$C = R(x) \vee S_J(x,y)$, then its lineage (in $\zg{\pdb}$) is the conjunction
		(over $u \in V_1$ and $v \in V_2$) of:
		\begin{align*}
		&  \left(R(u) \vee S_J(u,e_{uv})\right) 
		\wedge \bigwedge_{i=2}^{n-1}(R(f^{(i)}_{uv}) \vee S_J(f^{(i)}_{uv},e_{uv}))
		\wedge  \left(R(v) \vee S_J(v,e_{uv})\right) &		
		\end{align*}
		This is precisely the lineage of the clauses
		(\ref{eq:new:left:1})-(\ref{eq:new:right:1}).  If $C$ is of type
		II,
		$C = \bigvee_{i=1}^{\ell} \forall y S_{J_i}(x,y)$ then its lineage is the conjunction of the following Boolean formulas:
		\begin{align*}
		& \bigwedge_{u\in V_1}\left(\bigvee_{i=1}^{\ell}
		\left(
		\bigwedge_{v \in V_2}S_{J_i}(u,e_{uv})     
		\right)\right) & \\
		& \bigwedge_{v\in V_2}\left(\bigvee_{i=1}^{\ell}
		\left(
		\bigwedge_{u \in V_1}S_{J_i}(v,e_{uv})     
		\right)\right) & \\	
		&\bigwedge_{j=2}^{n-1}\left(\bigvee_{i=1}^{\ell} S_{J_i}(f_{uv}^{(j)},e_{uv})\right)	
		\end{align*}
		which is precisely the lineage of the clauses
		(\ref{eq:new:left:2:beg})-(\ref{eq:new:right:2}).
		\item If $C$ is a middle clause, $C = S_J(x,y)$, then it's lineages in $\zg{\pdb}$ is the Boolean expression:
		\begin{align*}
		S_J(u,e_{uv}) \wedge 
		\bigwedge_{j \in [2,n-1]} S_J(f^{(j)}_{uv},e_{uv})  
		\wedge S_J(v,e_{uv})
		\end{align*}
		This is precisely the lineage of clause~\eqref{eq:middle-beg}.	
		\item If $C$ is a right clause, then we distinguish two cases.  If
		it is of Type I, $C = S_J(x,y) \vee T(y)$, then recall that $n=2$,
		and its lineage is the conjunction over all $u, v$ of the
		expression:
		\begin{align*}
		(S_J(u,e_{uv}) \vee T(e_{uv})) \wedge (S_J(v,e_{uv}) \vee T(e_{uv}))
		\end{align*}
		which is precisely the lineage of the expressions
		(\ref{eq:new:middle:1})-(\ref{eq:new:middle:2}).  If it is of Type
		II,
		$C = \forall y \left(\bigvee_{i=1}^\ell \forall x (S_{J_i}(x,y)) \right)$, then we note that $y$ must be mapped to some
		value $e_{uv}$. Fixing $y$ to $e_{uv}$ implies exactly $n$ possibilities for
		each $x$. Namely, $\set{u,v,f_{uv}^{(2)},\dots,f_{uv}^{(n)}}$.  The lineage of $C$ is the conjunction, over all
		$u \in U_1, v \in V_2$, of the Boolean formula:
		\begin{align*}
		&  (S_{J_1}(u,e_{uv}) \wedge S_{J_1}(f^{(2)}_{uv},e_{uv}) \wedge
		\cdots S_{J_1}(f^{(n-1)}_{uv},e_{uv}) \wedge S_{J_1}(v,e_{uv}))  \\
		& \cdots \\
		&  (S_{J_\ell}(u,e_{uv}) \wedge S_{J_\ell}(f^{(2)}_{uv},e_{uv}) \wedge
		\cdots S_{J_\ell}(f^{(n-1)}_{uv},e_{uv}) \wedge S_{J_\ell}(v,e_{uv}))
		\end{align*}
		which is equivalent to:
		\begin{align*}
		& (\bigwedge_{i \in \ell} S^{(i)}_{J_1}(u,v) \vee \cdots \vee \bigwedge_{i \in \ell} S^{(i)}_{J_\ell}(u,v))
		\end{align*}
		We apply the distributivity law of $\vee$ over $\wedge$ to convert
		this expression into a CNF expression, and obtain the conjunction
		of all lineages of clauses (\ref{eq:new:middle}).
	\end{itemize}
	This completes the proof.
\end{proof}

\begin{lemma}
	If $Q$ is an unsafe query, then $\zg{Q}$ is also unsafe.
\end{lemma}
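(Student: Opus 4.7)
The plan is to lift a left-to-right path in $Q$ to a left-to-right path in $\zg{Q}$, thereby verifying $\zg{Q}$'s unsafety via Definition~\ref{def:unsafeQueries}. Since $Q$ is unsafe, fix a minimal such path $C_0, C_1, \ldots, C_k$ in $Q$, with $C_0$ a left clause and $C_k$ a right clause. The crucial observation driving the construction is that under the $\texttt{zg}$ mapping, the \emph{left} clauses of $Q$ are the only clauses producing a left clause in $\zg{Q}$ (at superscript level $1$, see~\eqref{eq:new:left:1} or~\eqref{eq:new:left:2:beg}) as well as the only clauses producing a right clause in $\zg{Q}$ (at level $n$, see~\eqref{eq:new:right:1} or~\eqref{eq:new:right:2}); middle and right clauses of $Q$ produce only middle clauses in $\zg{Q}$. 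The candidate endpoints of the desired path in $\zg{Q}$ are therefore $C_0^{[1]}$ and $C_0^{[n]}$, where $C^{[i]}$ denotes a clause of $\zg{Q}$ derived from $C$ in which the atoms carry superscript $i$ (for a right clause of Type~II we take $C^{[i]}$ to mean the constant clause $C^{(\phi \equiv i)}$).

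These endpoints will be connected by a path of the form
\[
C_0^{[1]}, C_1^{[1]}, \ldots, C_{k-1}^{[1]}, \; B_1, B_2, \ldots, B_s,\; C_{k-1}^{[n]}, \ldots, C_1^{[n]}, C_0^{[n]},
\]
consisting of a forward sweep at level~$1$, a bridge $B_1,\ldots,B_s$ through derivatives of $C_k$, and a backward sweep at level~$n$. The sweeps work because whenever $C_j$ and $C_{j+1}$ share a symbol $\sigma \in \{R\} \cup \{S_1, \ldots, S_p\}$, the level-$i$ derivatives $C_j^{[i]}$ and $C_{j+1}^{[i]}$ share the atom $\sigma^{(i)}$ for every~$i$; the special case $\sigma = T$ is possible only when both $C_j$ and $C_{j+1}$ are right-Type-I clauses of $Q$, and is handled by $T^{(12)}$, which appears in every middle-clause derivative of a right-Type-I clause.

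The main obstacle is the bridge, which must transition from level~$1$ atoms to level-$n$ atoms through derivatives of $C_k$. If $C_k$ is right-Type-I, then $n=2$ and the two middle clauses~\eqref{eq:new:middle:1} and~\eqref{eq:new:middle:2} obtained from $C_k$ already share $T^{(12)}$, so a single bridge edge suffices, with the $S$-atom shared between $C_{k-1}$ and $C_k$ (or $T^{(12)}$ if $C_{k-1}$ is itself right-Type-I) providing the attachments to the sweeps. If $C_k$ is right-Type-II, then $\ell \geq 2$ and its derivatives are the clauses $C_k^{(\phi)}$ for $\phi:[\ell]\to[n]$; two such clauses share an atom whenever $\phi$ and $\phi'$ agree on at least one coordinate $b$ (the shared atom being $S_j^{(\phi(b))}$ for any $j \in J_b$). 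Since $\ell \geq 2$, one can walk from $\phi \equiv 1$ to $\phi \equiv n$ by altering one coordinate at a time, obtaining a chain in which any two consecutive $\phi$'s agree on at least one position and whose endpoints $C_k^{(\phi \equiv 1)}$ and $C_k^{(\phi \equiv n)}$ attach to $C_{k-1}^{[1]}$ and $C_{k-1}^{[n]}$ respectively via shared $S^{(1)}$- and $S^{(n)}$-atoms. This completes a left-to-right path in $\zg{Q}$, so $\zg{Q}$ is unsafe; the corner case $k=1$ is absorbed by taking both sweeps to be empty, and minimization does not remove any of the constructed clauses since their distinct superscripted symbols preclude redundancy.
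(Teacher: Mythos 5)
Your overall strategy (sweep forward at level $1$, bridge through derivatives of $C_k$, sweep back at level $n$, with $C_0^{(1)}$ and $C_0^{(n)}$ as the endpoints) matches the paper's. The Type~I bridge via $T^{(12)}$ is also fine. But the Type~II bridge has a genuine gap: you walk from the constant function $\phi\equiv 1$ to the constant function $\phi\equiv n$, changing one coordinate at a time, and every clause $C_k^{(\phi)}$ on that walk — in particular both endpoints — comes from a \emph{non-injective} $\phi$ (since $\ell\geq 2$). Such clauses can be redundant in $\zg{Q}$ and hence removed by minimization, in which case they cannot serve as links of a left-to-right path. This is exactly what Example~\ref{ex:why:we:need:dead:ends} exhibits: there, for $\phi(1)=\phi(2)=1$, $\phi(3)=2$, the clause $C^{(\phi)}$ absorbs $S_1^{(1)}\vee S_2^{(1)}\vee S_3^{(1)}$ and is subsumed by the middle clause $D^{(1)}$ derived from a middle clause $D$ of $Q$; the fully constant $\phi\equiv 1$ collapses to $U^{(1)}\vee S_1^{(1)}\vee S_2^{(1)}\vee S_3^{(1)}$ and is likewise redundant. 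So your closing assertion that ``their distinct superscripted symbols preclude redundancy'' is false precisely for the clauses your bridge relies on: when $\phi$ repeats a value, several subclauses land on the same branch and a middle clause of $\zg{Q}$ can map into the union.

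The paper avoids this by using only \emph{injective} $\phi$ in the bridge (non-redundancy of $C^{(\phi)}$ is proved exactly for injective $\phi$, using $n\geq\max(3,\ell)$): it takes two injective functions $\phi_1,\phi_2$ with $\phi_1(1)=1$, $\phi_2(1)=n$ and $\phi_1(2)=\phi_2(2)=2$, so that $C_k^{(\phi_1)}$ attaches to $C_{k-1}^{(1)}$ via $S^{(1)}$, $C_k^{(\phi_2)}$ attaches to $C_{k-1}^{(n)}$ via $S^{(n)}$, and the two bridge clauses share all symbols of $S_{J_2}^{(2)}$ — a two-step bridge rather than a coordinate-by-coordinate walk. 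To repair your argument you would need to replace your chain of constant/near-constant $\phi$'s with such injective ones (and, more generally, supply the non-redundancy verification for every clause on your path, which you currently only assert).
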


\begin{proof}
	We start by observing that the following clauses in $\zg{Q}$ are both
	minimized and non-redundant:
	\begin{itemize}
        \item Every left clause of the form (\ref{eq:new:left:1}) or
          (\ref{eq:new:left:2:beg}) is minimized and non-redundant.
        \item If $C$ is a middle clause in $Q$, then the middle clause
          $C^{(i)}$ in $\zg{Q}$ is minimized and non-redundant.
        \item If
          $C =\forall y \bigvee_{i=1}^\ell \forall x S_{J_i}(x,y)$ is
          a right clause in $Q$, then every middle clause $C^{(\phi)}$
          of the form (\ref{eq:new:middle}) where $\phi$ is injective
          (i.e. the indices $\phi(1), \phi(2), \ldots, \phi(\ell)$ are
          mapped to distinct elements of $[n]$) is minimized and
          non-redundant.  Indeed, if there was some homomorhism
          $C_0' \rightarrow (S_{J_1}^{(\phi(1))}(x,y) \vee
          S_{J_2}^{(\phi(2))}(x,y) \vee \cdots
          S_{J_\ell}^{(\phi(\ell))}(x,y))$, where $C_0'$ is a clause
          in $\zg{Q}$ derived from some clause $C_0$ in $Q$, then we
          can construct a homorphism
          $C_0 \rightarrow (\forall x S_{J_1}(x,y) \vee \forall x
          S_{J_2}(x,y) \vee \ldots)$, implying that $C$ was redundant
          in $Q$ , which is a contradiction.  To see why, note that
          such a homomorphism would imply that
          $\symb(C_0) \subseteq J_i$ for some $i \in [\ell]$.  We
          remark here that, if $\phi$ is not injective, then
          $C^{(\phi)}$ may be redundant.  Since we have chosen $n$
          such that $n \geq \ell$, for for every right clause $C$
          there exists some; injective $\phi$, and in that case
          $C^{(\phi)}$ is non-redundant.  
	\end{itemize}
	
        \begin{example}  \label{ex:why:we:need:dead:ends}
          We justify here the reason for introducing the ``dead end''
          branches $f^{(i)}_{uv}$.  Consider $\zg{Q}$ where $Q$ is the
          following query:
            \begin{align*}
              Q&=\forall x\forall y (R(x) \vee S_0(x,y))
              \\&\wedge \forall x \forall y(S_0(x,y) {\vee}
                  S_1(x,y)) \wedge \underbrace{(S_1(x,y) {\vee}  S_2(x,y) {\vee} S_3(x,y))}_{\defeq D}
              \\&\wedge 
                  \underbrace{\forall y (\forall x (U(x,y) {\vee}
                  S_1(x,y) {\vee} S_2(x,y)) {\vee} \forall x (U(x,y)
                  {\vee} S_1(x,y) {\vee} S_3(x,y)) {\vee} \forall x
                  (U(x,y) {\vee} S_2(x,y) {\vee} S_3(x,y)))}_{\defeq C}
		\end{align*}
The middle clause $D$ generates two clauses in $\zg{Q}$:
\begin{align*}
  D^{(1)} \defeq &  (S_1^{(1)}(x,y) \vee  S_2^{(1)}(x,y) \vee  S_3^{(1)}(x,y)) &
  D^{(2)} \defeq &  (S_1^{(2)}(x,y) \vee  S_2^{(2)}(x,y) \vee  S_3^{(2)}(x,y))
\end{align*}
Assuming $n=2$ (i.e. no ``dead end'' branches $f_{uv}^{(i)}$), the
right clause $C$ is mapped to the conjunction of $2^3=8$ clauses
$C^{(\phi)}$ and {\em all} become redundant.  For example, when
$\phi(1)=\phi(2)=1$, $\phi(3)=2$ then:
\begin{align*}
C^{(\phi)} = & \forall x \forall y (U^{(1)}(x,y) \vee S_1^{(1)}(x,y) \vee S_2^{(1)}(x,y) \vee S_3^{(1)}(x,y)
\vee U^{(2)}(x,y) \vee S_2^{(2)}(x,y)  \vee S_3^{(2)}(x,y))
\end{align*}
And this is redundant because of $D^{(1)}$.  It is easy to check that
all 8 clauses $C^{(\phi)}$ are made redundant by either $D^{(1)}$ or
$D^{(2)}$.  We notice that the query $Q$ is even a {\em forbidden
  query}, to be discussed in Sec.~\ref{sec:background:factorization},
where $U$ is called a {\em ubiquitous symbol}.  Thus, the dead end
branches cannot be avoided even if $Q$ were guaranteed to be a
forbidden query.
\end{example}

Next, we prove that $\zg{Q}$ is an unsafe bipartite query, by showing
the existence of a left-to-right path.  By assumption $Q$ is an
unsafe, bipartite query of length $k$, hence there exists a
left-to-right path $C_0, C_1, \ldots, C_k$ in $Q$.  We define a
left-to-right path in $\zg{Q}$ as follows.  The path starts with:
\begin{align*}
  \underbrace{C_0^{(1)}}_{\mbox{\eqref{eq:new:left:1} or \eqref{eq:new:left:2:beg}}}, \underbrace{C_1^{(1)}, \ldots, C_{k-1}^{(1)}}_{\mbox{\eqref{eq:middle-beg}}}
\end{align*}
that is, we start with the translation of $C_0$ into a left clause
$C_0^{(1)}$ using expression \eqref{eq:new:left:1} if $C_0$ is of type
I, or \eqref{eq:new:left:2:beg} if $C_0$ is of type II, then continue
with the translations of the middle clauses, all using branch $i=1$.
The path ends with the following:
\begin{align*}
  \underbrace{C_{k-1}^{(n)}, \ldots, C_1^{(n)}}_{\mbox{\eqref{eq:middle-beg}}}, \underbrace{C_0^{(n)}}_{\mbox{\eqref{eq:new:right:1} or \eqref{eq:new:right:2}}}
\end{align*}
That is, it ends at the translation of $C_0$ into a right clause
$C_0^{(n)}$, as per \eqref{eq:new:right:1} or \eqref{eq:new:right:2}.
So far, all clauses we have used are non-redundant.  It remains to
connect $C_{k-1}^{(1)}$ with $C_{k-1}^{(n)}$, and for that we use the
right clause $C_k$.  Assume first that $C_k$ is of type I, i.e.
$C_k = \forall x \forall y (S_J(x,y) \vee T(y))$; in that case $n=2$.
By assumption $C_{k-1}$ and $C_k$ have some common symbol, call it
$S$; obviously $S$ must be among the symbols $S_J$, since it cannot be
$T$ (because $C_{k-1}$ is not a right clause).  Define
$C_k^{(1)}, C_k^{(2)}$ to be the middle clauses
\eqref{eq:new:middle:1} and \eqref{eq:new:middle:2} respectively.  The
left-to-right path in $\zg{Q}$ is:
\begin{align*}
  &  C_0^{(1)}, C_1^{(1)}, \ldots, C_{k-1}^{(1)}, C^{(1)}_k, C^{(2)}_k, C_{k-1}^{(2)}, \ldots, C_0^{(2)}
\end{align*}
The clauses $C_{k-1}^{(1)}, C^{(1)}_k$ share the common symbol
$S^{(1)}$; the clauses $C^{(1)}_k, C^{(2)}_k$ share the common symbol
$T^{(12)}$, and the clauses $C_k^{(2)}, C^{(2)}_{k-1}$ share the
common symbol $S^{(2)}$. Thus, the path above is a left-right path of
length $2k+1$ in $\zg{Q}$.  Assume now that $C_k$ is of type II, i.e.
$C_k = \forall x S_{J_1}(x,y) \vee \cdots \forall x S_{J_\ell}(x,y)$
and let $S \in \symb(C_{k-1}) \cap \symb(C_k)$; assume wlog that
$S \in \symb(S_{J_1})$.  We consider two derived middle clause
$C^{(\phi_1)}_k, C^{(\phi_2)}_k$ of the form (\ref{eq:new:middle}), as
follows.  Both $\phi_1, \phi_2$ will be injective, ensuring that both
clauses are non-redundant, and are defined as follows:
\begin{align*}
  \phi_1(1) \defeq & 1 & \phi_1(2) \defeq & 2 & \phi_1 : \set{3,\ldots,\ell}\rightarrow \set{3,\ldots,n} \mbox{ any injective function}\\
  \phi_2(1) \defeq & n & \phi_2(2) \defeq & 2 & \phi_2 : \set{3,\ldots,\ell}\rightarrow \set{1,3,\ldots,n-1} \mbox{ any injective function}\\
\end{align*}
Recall that we have defined $n \geq 3$, henc $\phi_2$ is injective.
Thus,
$C_k^{(\phi_1)} = (S_{J_1}^{(1)} \vee S_{J_2}^{(2)} \vee \cdots)$ and
$C_k^{(\phi_2)}= (S_{J_1}^{(n)} \vee S_{J_2}^{(2)} \vee \cdots)$.  Now
we observe that $C^{(1)}_{k-1}, C^{(\phi_1)}_k$ share the common
symbol $S^{(1)}$, the clauses $C_k^{(\phi_1)},C_k^{(\phi_2)}$ share
all common symbols in $S_{J_2}^{(2)}$, and
$C_k^{(\phi_2)},C_{k-1}^{(2)}$ share the common symbols $S^{(n)}$.
Thus, the following is a left-to-right path in $\zg{Q}$:
	\begin{align*}
	& C_0^{(1)}, C_1^{(1)}, \ldots, C_{k-1}^{(1)}, C^{(\phi_1)}_k, C^{(\phi_2)}_k, C_{k-1}^{(n)}, \ldots, C_0^{(n)}
	\end{align*}
\end{proof}

This proves that $\zg{Q}$ is an unsafe, bipartite query of length
$\leq 2k+1$.

Finally, it remains to prove that every left-to-right path in $\zg{Q}$
has length $\geq 2k$, where $k$ is the length of the unsafe query $Q$.
Consider any left-ot-right path in $\zg{Q}$:
$C_0', C_1', \ldots, C_m'$.  Since $C_0'$ is a left clause it must be
of the form \eqref{eq:new:left:1} or \eqref{eq:new:left:2:beg}, hence
all its symbols are from the branch 1, like $S^{(1)}$.  Let
$C_{\ell}'$ be the first clause that contains an index of a branch
other than $1$, i.e. either some $S^{(i)}$, $i>1$ or $T^{(12)}$.
$C_\ell'$ must also have a symbol on branch 1, because it shares a
symbol with $C_{\ell-1}'$ whose symbols are all on branch 1, hence
$C_\ell'$ can only be of the form \eqref{eq:new:middle:1} or
\eqref{eq:new:middle}.  Therefore, the fragment
$C_0', C_1', \ldots, C_\ell'$ of the path in $\zg{Q}$ corresponds to a
left-to-right path $C_0, C_1, \ldots, C_\ell$ in $Q$, namely
consisting of the clauses in $Q$ that generated the clauses
$C_0', C_1', \ldots, C_\ell'$ in $\zg{Q}$.  Since $k$ is the length of
the shortest left-right path in $Q$, we have $\ell \geq k$.  We reason
similarly about the suffix of the path.  Since $C_m'$ is a right
clause, it is of the form \eqref{eq:new:right:1} or
\eqref{eq:new:right:2} (note that it originates from a {\em left}
clause $C_m$), hence all its symbols are on branch $n$,
i.e. $S^{(n)}$.  Let $C_p'$ be the last clause that has some symbol on
a branch other than $n$.  Using the same argument as above, we
conclude that $C_p'$ originates from a right clause $C_p$, hence the
fragment $C_p', C_{p+1}', \ldots, C_m'$ of the path in $\zg{Q}$
corresponds to a left-to-right path in $Q$ (in reverse order):
$C_m, C_{m-1}, \ldots, C_{p+1}, C_p$.  Therefore, $m-p \geq k$.  Since
$\ell \leq p$ we conclude that $m \geq \ell + (m-p) \geq 2k$.

\section{Background on Factorization and Independence}

\label{sec:background:factorization}

Every multivariate polynomial $f$ admits a unique decomposition into
irreducible factors: $f = g_1^{k_1} g_2^{k_2} \cdots g_m^{k_m}$.  In
this paper we use repeatedly the following:

\begin{theorem} \label{th:rank:1:matrix:of:polynomials}
  Let $f_{00}, f_{01}, f_{10}, f_{11}$ be four multivariate
  polynomials, such that the following determinant is identically
  zero:
  \begin{align*}
\det\left(
      \begin{array}{cc}
        f_{00} & f_{01} \\
        f_{10} & f_{11}
      \end{array}
\right) \equiv & 0
  \end{align*}
  Then, there exists polynomials $g_0, g_1$ and $h_0, h_1$ such that
  the following identity holds:
  \begin{align*}
    \left[
    \begin{array}{c}
      g_0 \\ g_1
    \end{array}
\right] \cdot
    \left[
    \begin{array}{cc}
      h_0 & h_1
    \end{array}
\right] \equiv &
\left[
      \begin{array}{cc}
        f_{00} & f_{01} \\
        f_{10} & f_{11}
      \end{array}
\right]
  \end{align*}
\end{theorem}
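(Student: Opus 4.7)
The plan is to work in the polynomial ring $k[x_1,\ldots,x_n]$, which is a unique factorization domain (UFD). The argument is essentially a generalization of the (2)~$\Rightarrow$~(1) direction in Lemma~\ref{lemma:determinant:connected}, but without the convenience of multilinearity: since irreducible factors may now appear with arbitrary multiplicities, I must track valuations carefully rather than just divisibility.

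First, I would reduce to the coprime case. Let $d \defeq \gcd(f_{00}, f_{01}, f_{10}, f_{11})$ and write $f_{ij} = d \cdot f'_{ij}$. The determinant hypothesis transfers verbatim: $f'_{00} f'_{11} \equiv f'_{01} f'_{10}$. If $f'_{ij} \equiv g'_i h'_j$, then setting $g_i \defeq d \cdot g'_i$ and $h_j \defeq h'_j$ solves the original problem. So henceforth assume $\gcd(f_{00}, f_{01}, f_{10}, f_{11}) = 1$. Next, I define the candidate factors as row-gcds and column-gcds:
\begin{align*}
    g_0 &\defeq \gcd(f_{00}, f_{01}), & g_1 &\defeq \gcd(f_{10}, f_{11}), \\
    h_0 &\defeq \gcd(f_{00}, f_{10}), & h_1 &\defeq \gcd(f_{01}, f_{11}).
\end{align*}
The claim to prove is $f_{ij} \equiv g_i h_j$ for all $i,j \in \set{0,1}$ (gcds in a UFD are defined up to a unit, which can be absorbed into $g_i$ at the end).

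The main technical step, and the only real obstacle, is this final verification. Fix any irreducible polynomial $p$ and let $a,b,c,e$ denote its valuations in $f_{00}, f_{01}, f_{10}, f_{11}$ respectively. The determinant equation gives
\[
  a + e = b + c,
\]
and the coprimality reduction gives $\min(a,b,c,e) = 0$. I would then do a direct case analysis: e.g.\ for the entry $f_{00}$ I need $\min(a,b) + \min(a,c) = a$, and if $a > 0$ then at least one of $b, c, e$ vanishes. If $b=0$ then $c = a+e \geq a$, so the sum is $0 + a = a$; if $c=0$ the situation is symmetric; and if $e=0$ then $a = b+c$ with $b,c \leq a$, so the sum is $b + c = a$. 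The same bookkeeping, using at each step the identity $a+e=b+c$ together with the fact that the minimum of $\set{a,b,c,e}$ is zero, handles the other three cells $f_{01}, f_{10}, f_{11}$. Since $v_p(g_i h_j) = v_p(f_{ij})$ for every irreducible $p$, the UFD property yields $f_{ij} \equiv g_i h_j$ up to a unit, completing the proof.
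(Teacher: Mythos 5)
Your proof is correct, but it takes a genuinely different route from the paper's. The paper proves this theorem by induction on the total degree of $f_{00}f_{11}$: it picks an irreducible factor $p$ of $f_{00}f_{11}\equiv f_{01}f_{10}$, observes that $p$ must then divide both entries of some row or column, cancels it from that row or column, applies the induction hypothesis to the reduced matrix, and reinstates the factor $p$ into the corresponding $g_i$ or $h_j$. You instead give a closed-form construction — row-gcds for $g_0,g_1$ and column-gcds for $h_0,h_1$ after extracting the global gcd — and verify $f_{ij}\equiv g_ih_j$ prime-by-prime via the valuation identity $\min(a,b)+\min(a,c)=a$ under the constraints $a+e=b+c$ and $\min(a,b,c,e)=0$. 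I checked your case analysis and it is sound (and correctly symmetric across the four cells). Your version is in fact the natural generalization of the paper's own proof of Lemma~\ref{lemma:determinant:connected} to non-multilinear polynomials; what it buys is an explicit formula for the factors, at the cost of the valuation bookkeeping, whereas the paper's induction is shorter to state but non-constructive in flavor. Two small points you should patch for completeness. First, your construction implicitly assumes all four $f_{ij}$ are nonzero: if some entry vanishes, the determinant identity forces an entire row or column to vanish, $\gcd$'s with $0$ and $p$-adic valuations degenerate, and the rank-one factorization should just be exhibited directly in that case. Second, "absorbing the units" is not entirely free: you get $g_ih_j = u_{ij}f_{ij}$ for nonzero constants $u_{ij}$, and to normalize all four simultaneously you need $u_{00}u_{11}=u_{01}u_{10}$, which fortunately follows from the determinant hypothesis; one line making this explicit would close the argument.
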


\begin{proof}
  By induction on the total degree of $f_{00}f_{11}$.  Let $p$ be any
  irreducible factor of $f_{00}f_{11} \equiv f_{01}f_{10}$.  Assume
  that $p | f_{00}$ and $p | f_{01}$ (the other three cases are
  similar and omitted).  Then the polynomials
  $f_{00}/p, f_{01}/p, f_{10}, f_{11}$ also satisfy the condition of
  the theorem, therefore, by induction hypothesis, there exists
  $g_0, g_1, h_0, h_1$ such that:
  \begin{align*}
    \left[
    \begin{array}{c}
      g_0 \\ g_1
    \end{array}
\right] \cdot
    \left[
    \begin{array}{cc}
      h_0 & h_1
    \end{array}
\right] \equiv &
\left[
      \begin{array}{cc}
        f_{00}/p & f_{01} \\
        f_{10}/p & f_{11}
      \end{array}
\right]
  \end{align*}
  Then the polynomials $g_0, g_1, ph_0, h_1$ satisfy the condition of
  the theorem:
  \begin{align*}
    \left[
    \begin{array}{c}
      g_0 \\ g_1
    \end{array}
\right] \cdot
    \left[
    \begin{array}{cc}
      ph_0 & h_1
    \end{array}
\right] \equiv &
\left[
      \begin{array}{cc}
        f_{00} & f_{01} \\
        f_{10} & f_{11}
      \end{array}
\right]
  \end{align*}
\end{proof}

The next two results concern variations on
Lemma~\ref{lemma:three:values}.  We are given four monotone Boolean
functions $F, G, H, K$ over $n$ variables, with arithmetizations
$f, g, h, k$, and seek a valuation $\varphi \in \set{0,1/2,1}^n$ such
that, on one hand $f[\varphi]g[\varphi] \neq h[\varphi]k[\varphi]$, on
the other hand $\varphi$ avoids the value 0 as much as possible.  We
start with a simple case, when $\varphi$ avoids the value 0
completely.

If $\theta$ is a truth assignment of the Boolean variables of $F$,
then we denote by $F_\theta$ the Boolean function obtained from $F$ by
setting $X:=1$ for all variables where $\theta(X)=1$, and leaving the
other variables free.  For example, if $F(X_1,X_2,X_3,X_4)$ and
$\theta(X_1)=\theta(X_3)=1$, $\theta(X_2)=\theta(X_4)=0$ then
$F_\theta \defeq F(1, X_2, 1, X_4)$. We denote by $\#F_\theta$ the number of models of the residual formula $F_\theta$.

\begin{lemma}\label{lemma:values:half:or:one}
  Let $F, G, H, K$ be four monotone Boolean functions with $n$
  variables and $f,g,h,k$ their arithmetization. If
  $F \wedge G \not\equiv H \wedge K$ then there exists
  $\varphi \in \set{1/2,1}^n$ such that
  $f[\varphi]g[\varphi] \neq h[\varphi]k[\varphi]$.
\end{lemma}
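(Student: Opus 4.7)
The plan is to take a Boolean assignment $\sigma$ that distinguishes $F\wedge G$ from $H\wedge K$, and use it to prescribe which variables to set to $1$ and which to set to $1/2$. Specifically, since $F\wedge G\not\equiv H\wedge K$, there exists $\sigma \in \set{0,1}^n$ where the two sides disagree. By symmetry (swapping the pairs $(F,G)$ and $(H,K)$), I may assume $(F\wedge G)(\sigma)=1$ and $(H\wedge K)(\sigma)=0$. Let $S\eqdef \setof{i}{\sigma(X_i)=1}$ and define $\varphi \in \set{1/2,1}^n$ by $\varphi_i \eqdef 1$ if $i\in S$ and $\varphi_i\eqdef 1/2$ otherwise.

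The key observation uses monotonicity. Write $F_S$ for the residual Boolean function on the variables $\bar S$ obtained from $F$ by substituting $X_i:=1$ for $i\in S$; and similarly for $G_S,H_S,K_S$. Because $\sigma$ restricted to $\bar S$ is identically $0$, we have $F_S(0,\dots,0)=F(\sigma)=1$, and by monotonicity $F_S(\tau)=1$ for every $\tau\in\set{0,1}^{|\bar S|}$; hence $F_S\equiv 1$ and likewise $G_S\equiv 1$. Since the arithmetization of a function after the substitution $X_i:=1$ is precisely $f[X_i:=1]$, I obtain $f[\varphi]=\Pr(F_S)=1$ and $g[\varphi]=\Pr(G_S)=1$, so $f[\varphi]g[\varphi]=1$.

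On the other hand, $(H\wedge K)(\sigma)=0$ forces at least one of $H(\sigma),K(\sigma)$ to be $0$; say $H(\sigma)=0$. Then $H_S(0,\dots,0)=0$, so $H_S$ is not identically $1$, and therefore
\begin{align*}
h[\varphi] \;=\; \frac{\#H_S}{2^{|\bar S|}} \;\leq\; \frac{2^{|\bar S|}-1}{2^{|\bar S|}} \;<\; 1.
\end{align*}
Combined with the trivial bound $k[\varphi]\leq 1$ (since $K$ is a monotone Boolean function and probabilities lie in $[0,1]$), this gives $h[\varphi]k[\varphi]<1=f[\varphi]g[\varphi]$, establishing the desired inequality.

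The main obstacle I anticipated was ensuring the witness avoids the value $0$ entirely (which is not immediate from Lemma~\ref{lemma:three:values}); this is resolved cleanly by the monotonicity of $F,G,H,K$, which lets one use the value $1$ to absorb all ``true'' coordinates of $\sigma$ while leaving the ``false'' coordinates set to $1/2$. No case analysis on the arithmetic degree of the product polynomial $fg-hk$ is needed. The argument is essentially a pigeon-hole style reduction from Boolean inequivalence to a strict numerical inequality, leveraging monotonicity in a single clean step.
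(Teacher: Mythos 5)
Your proof is correct and follows essentially the same route as the paper's: take a Boolean assignment witnessing $F\wedge G\not\equiv H\wedge K$, lift it to $\set{1/2,1}^n$ by replacing each $0$ with $1/2$, and use monotonicity to make the product on the ``true'' side equal to $1$ (all $2^m$ residual assignments satisfy it) while the ``false'' side contributes a factor strictly less than $1$. The only difference is cosmetic — you place the satisfied pair on $(F,G)$ whereas the paper places it on $(H,K)$ — so nothing further is needed.
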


\begin{proof} 
  Let $\theta$ be an assignment that distinguishes the two. Assume
  w.l.o.g. that:
  \begin{align*}
    F[\theta]= &0, \ \ G[\theta]\in \set{0,1}, \ \  H[\theta]= K[\theta]=1
  \end{align*}
  If $m$ is the number of variables in
  $F_\theta, G_\theta, H_\theta, K_\theta$, then $\#F_\theta < 2^m$
  because $F[\theta]=0$, $\#G_\theta \leq 2^m$, and
  $\#H_\theta=\#K_\theta = 2^m$, which implies
  $\#H_\theta\#K_\theta - \#F_\theta\#G_\theta = 2^{2m} - \#F_\theta
  \#G_\theta > 0$.
\end{proof}

\begin{corollary}\label{cor:values:half:or:one}
  With the notations in Lemma~\ref{lemma:values:half:or:one}, if there
  exists any assignment $\varphi \in \set{0,1/2,1}$ such that
  $f[\varphi]g[\varphi]\neq h[\varphi]k[\varphi]$, then there exists
  an assignment $\varphi' \in \set{0,1/2,1}$ such that
  $0 \neq f[\varphi']g[\varphi']\neq h[\varphi']k[\varphi']\neq 0$.
\end{corollary}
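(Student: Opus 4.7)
My plan is a case analysis on whether $F \wedge G$ and $H \wedge K$ coincide as Boolean functions. First, using the symmetry of the statement in the two pairs $(F,G)$ and $(H,K)$, I would reduce to the case $f[\varphi]g[\varphi] > h[\varphi]k[\varphi] \geq 0$ at the given $\varphi$. This forces $F$ and $G$ to be satisfiable (otherwise $fg \equiv 0$); together with the implicit nondegeneracy of $H, K$ (otherwise $h[\varphi']k[\varphi']>0$ is unreachable and the conclusion fails trivially), it follows that for every $\psi \in \set{1/2,1}^n$ all four values $f[\psi], g[\psi], h[\psi], k[\psi]$ are strictly positive, since monotone satisfiable formulas evaluate positively whenever every coordinate is positive.

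In the easy case $F \wedge G \not\equiv H \wedge K$, I would invoke Lemma~\ref{lemma:values:half:or:one} to produce a distinguishing $\varphi' \in \set{1/2,1}^n$. Re-inspecting the construction in that proof, the chosen $\varphi'$ comes from a Boolean witness $\theta$ (with $H[\theta]=K[\theta]=1$) by setting $\varphi'_i = 1$ on the support of $\theta$ and $\varphi'_i = 1/2$ elsewhere, so that $\#H_\theta = \#K_\theta = 2^m$ and therefore $h[\varphi']k[\varphi'] = 1$. Combined with the positivity observation above, $f[\varphi']g[\varphi']>0$ as well, so both products are strictly positive and distinct.

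The delicate case is $F \wedge G \equiv H \wedge K$. Here $P \defeq fg - hk$ vanishes identically on $\set{0,1}^n$ (as the two Boolean functions agree at all Boolean points), but by hypothesis $P$ is nonzero at some $\varphi \in \set{0,1/2,1}^n$, which must therefore have at least one coordinate equal to $1/2$. The plan is to iteratively replace the zero coordinates of $\varphi$ by values in $\set{1/2,1}$ while preserving nonvanishing of $P$, eventually reaching a point $\varphi' \in \set{1/2,1}^n$; at such a point both products are strictly positive by the reduction above, and the two differ. Concretely, at a zero coordinate $i$ the univariate polynomial $x_i \mapsto P(x_i,\varphi_{-i})$ has degree at most $2$ and is nonzero at $x_i=0$; if it is nonzero at $x_i=1/2$ or $x_i=1$, I lift the coordinate accordingly, and otherwise it has the rigid form $c(2x_i-1)(x_i-1)$ with $c\neq 0$, in which case I would perturb another coordinate $j$ of $\varphi$ using Lemma~\ref{lemma:three:values} applied to the bivariate slice $P(x_i,x_j,\varphi_{-ij})$ (degree $\leq 2$ per variable, nonzero at $(0,\varphi_j)$) to escape the rigidity.

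The main obstacle is exactly this Case 2 argument: ruling out the possibility that $P$ vanishes identically on $\set{1/2,1}^n$ while being nonzero somewhere in $\set{0,1/2,1}^n$. The raw dimension count ($|\set{0,1}^n \cup \set{1/2,1}^n| = 2^{n+1}-1$ versus $3^n$ coefficients of a degree-2-per-variable polynomial) does not suffice, so the argument must exploit the specific structure of $P$ as a difference of products of multilinear polynomials — in particular that $P$ lies in the ideal $(x_1(x_1-1),\ldots,x_n(x_n-1))$ with multilinear quotients — combined with repeated applications of Lemma~\ref{lemma:three:values} on low-dimensional slices.
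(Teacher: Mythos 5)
Your Case 1 ($F \wedge G \not\equiv H \wedge K$) is correct and matches the paper. The genuine gap is Case 2, which you yourself flag as the ``main obstacle'' and leave unresolved; moreover, the coordinate-lifting strategy you sketch there is both unnecessary and aimed at the wrong target. The idea you are missing is a short rounding argument that makes Case 2 collapse. Suppose that at the given $\varphi \in \set{0,1/2,1}^n$ one of the two products vanishes, say $f[\varphi]=0$ while $h[\varphi]k[\varphi]\neq 0$. Round $\varphi$ to the Boolean point $\theta$ defined by $\theta(X)=0$ when $\varphi(X)=0$ and $\theta(X)=1$ otherwise. If $m$ is the number of coordinates where $\varphi=1/2$, then $f[\varphi]$ equals $2^{-m}$ times the model count of the residual of $F$ obtained by fixing the $0/1$ coordinates of $\varphi$; so $f[\varphi]=0$ forces that residual to be unsatisfiable, and in particular $F[\theta]=0$. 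Meanwhile $h[\varphi]\neq 0$ means the residual of $H$ is satisfiable and hence, by monotonicity, satisfied by its all-ones extension, so $H[\theta]=K[\theta]=1$; likewise $K$. Thus $(F\wedge G)[\theta]=0\neq 1=(H\wedge K)[\theta]$, i.e.\ $F\wedge G\not\equiv H\wedge K$. Contrapositively, in your Case 2 both products are automatically nonzero at any point of $\set{0,1/2,1}^n$ where they differ, and you may simply take $\varphi'=\varphi$. (This is in fact how the paper argues: the case split is not on logical equivalence but on whether some factor vanishes at $\varphi$, and the vanishing case is reduced to Lemma~\ref{lemma:values:half:or:one} via exactly this rounding.)

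A secondary problem is that your Case 2 plan insists on producing $\varphi'\in\set{1/2,1}^n$, which is stronger than the corollary asserts: the conclusion only constrains the two product values and deliberately permits $\varphi'$ to retain zero coordinates. That stronger target is unattainable in general --- Example~\ref{ex:no:good:varphi} exhibits four distinct monotone functions for which $f[\varphi]g[\varphi]=h[\varphi]k[\varphi]$ at \emph{every} point of $\set{1/2,1}^n$ --- which is precisely why the corollary is stated over $\set{0,1/2,1}^n$ and why the later Lemma~\ref{lemma:values:half:or:one:part2} requires a separate, more delicate argument. So even setting aside the unresolved ``rigid form'' subcase of your lifting step, the strategy of eliminating all zero coordinates cannot be the right route to the general statement.
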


\begin{proof}
  If $f[\varphi]\neq 0, g[\varphi]\neq 0$, and
  $h[\varphi], k[\varphi] \neq 0$, then we are done.	
  Otherwise, assume w.l.o.g. that $f[\varphi]=0$ while
  $h[\varphi], k[\varphi] \neq 0$.  Let $\theta$ be the following
  assignment: $\theta(X)=0$ if $\varphi(X)=0$, and $\theta(X)=1$.  
  We claim that $F[\theta]=0$. To see this, let $\theta'$ be $\theta$ except the variables  that are assigned $1/2$ in $\varphi$ remain unassigned. 
  We now observe that $F_{\theta'}(1/2,\dots,1/2)=c\cdot \#F_{\theta'}$ where $c>0$. Since $\#F_{\theta'}=0$  then no assignment to the variables in the residual formula $F_{\theta'}$ satisfies it, including the assignment that sets all of its variables to $1$.
  
  So, we have that $F[\theta]=0$, and
   $H[\theta]=K[\theta]=1$, implying
  that $F\wedge G \not\equiv H\wedge K$, and we can use
  Lemma~\ref{lemma:values:half:or:one}.
\end{proof}

When $F,G$ are equivalent to $H,K$ then obviously no $\varphi$ exists
that satisfies Lemma~\ref{lemma:values:half:or:one}.  But even when
$F,G$ are different from $H,K$, such a $\varphi$ may still not exists,
as illustrated by the following:

\begin{example} \label{ex:no:good:varphi}
Consider the following four functions:
  \begin{align*}
    F = & (X_1 \vee X_3) \vee (Y_1 \vee Y_3) \vee T &  H = & (X_1 \vee X_3) \vee (Y_2 \vee Y_3) \vee T \\
    G = & (X_2 \vee X_3) \vee (Y_2 \vee Y_3) \vee T &  K = & (X_2 \vee X_3) \vee (Y_1 \vee Y_3) \vee T\\
    T = & (Y_1\vee Y_2 \vee Y_3)(X_1\vee X_2 \vee X_3)
  \end{align*}
  They are distinct, yet for any $\varphi \in \set{1/2,1}^6$,
  $f[\varphi]g[\varphi]=h[\varphi]k[\varphi]$.  Indeed, if we set any
  variable $X_i$ to $1$ then $F=H$ and $G=K$, if we set any $Y_j$ to 1
  then $F=K$ and $G=H$.  Finally, if assign probabilities $1/2$ to all
  variables, then this also doesn't help because $\#F=\#G=\#H=\#K$ by
  symmetry.
\end{example}

%
%
%

\begin{lemma} \label{lemma:values:half:or:one:part2} Let $F, G, H, K$
  be monotone Boolean functions with $n$ variables, and let
  $U_1, \ldots, U_k$ be some Boolean variables with the property that
  $F[U_i:=1] \equiv G[U_i:=1] \equiv H[U_i:=1] \equiv K[U_i:=1]$, for
  all $i=1,k$.  Assume that there exists $\varphi_0 \in \set{0,1/2,1}$
  such that $f[\varphi_0]g[\varphi_0]\neq h[\varphi_0]k[\varphi_0]$.
  Then there exists $\varphi \in \set{0,1/2,1}$ such that
  $0 \neq f[\varphi]g[\varphi]\neq h[\varphi]k[\varphi]\neq 0$ and
  $\varphi(U_i) \in \set{1/2,1}$ for all $i=1,k$.
\end{lemma}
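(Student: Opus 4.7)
The plan is to mimic the proof of Lemma~\ref{lemma:values:half:or:one}: choose a Boolean truth assignment $\theta$ that distinguishes $F \wedge G$ from $H \wedge K$, and then define $\varphi$ by $\varphi(X) = 1$ when $\theta(X) = 1$ and $\varphi(X) = 1/2$ when $\theta(X) = 0$. The new ingredient, which I will use to enforce $\varphi(U_i) \in \set{1/2,1}$, is the observation that any such distinguishing $\theta$ is forced to satisfy $\theta(U_i) = 0$ for every $i$.

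First, the hypothesis $f[\varphi_0]g[\varphi_0] \neq h[\varphi_0]k[\varphi_0]$ implies that $fg$ and $hk$ are not identically equal as polynomials, hence $F \wedge G \not\equiv H \wedge K$ as Boolean functions. I would pick a truth assignment $\theta$ on which they disagree; swapping $(F,G)$ with $(H,K)$ if necessary, I may assume $F[\theta] \wedge G[\theta] = 0$ and $H[\theta] = K[\theta] = 1$, with $F[\theta] = 0$. The crucial step is this: if $\theta(U_i) = 1$ for some $i$, then the hypothesis $F[U_i{:=}1] \equiv H[U_i{:=}1]$ would force $F[\theta] = H[\theta]$, contradicting $F[\theta] = 0 \neq 1 = H[\theta]$. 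Consequently $\theta(U_i) = 0$ for all $i$, and the corresponding $\varphi$ has $\varphi(U_i) = 1/2 \in \set{1/2,1}$, as required.

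Next, I will verify the inequality and nonzero conditions on $\varphi$ using monotonicity. Let $m = |\set{X : \theta(X) = 0}|$. Since $H[\theta] = K[\theta] = 1$ and every Boolean extension of $\theta$ (obtained by flipping some $\theta$-zeros up to one) lies pointwise above $\theta$, monotonicity gives $\#H_\theta = \#K_\theta = 2^m$, so $h[\varphi] = k[\varphi] = 1$ and hence $h[\varphi]k[\varphi] = 1 \neq 0$. Meanwhile, $F[\theta] = 0$ rules out the all-zero extension of $F_\theta$, so $\#F_\theta < 2^m$, giving $f[\varphi] < 1$ and thus $f[\varphi]g[\varphi] < 1 = h[\varphi]k[\varphi]$.

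The main obstacle I anticipate is establishing $f[\varphi]g[\varphi] \neq 0$: this requires $\#F_\theta \geq 1$ and $\#G_\theta \geq 1$, or equivalently (again by monotonicity) that $F$ and $G$ are not identically false. In the intended usage within this paper, $F, G, H, K$ arise as lineages of non-trivial satisfiable subqueries, so each is satisfied at the all-ones assignment (which is an extension of $\theta$), yielding $\#F_\theta, \#G_\theta \geq 1$ and hence $f[\varphi]g[\varphi] > 0$. Under this mild non-triviality assumption, the construction above yields the desired $\varphi$ and completes the proof.
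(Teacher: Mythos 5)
There is a genuine gap, and it sits at the very first step. You claim that $f[\varphi_0]g[\varphi_0]\neq h[\varphi_0]k[\varphi_0]$ for some $\varphi_0\in\set{0,1/2,1}^n$ implies $F\wedge G\not\equiv H\wedge K$ as Boolean functions. This inference is false: $f\cdot g$ is not the arithmetization of $F\wedge G$ (it is not multilinear, and $\Pr(F\wedge G)\neq \Pr(F)\Pr(G)$ when $F,G$ share variables), so $fg$ and $hk$ can agree on all of $\set{0,1}^n$ yet differ at assignments using the value $1/2$. A minimal example: $F=G=H=X$ and $K=\texttt{true}$ gives $F\wedge G\equiv H\wedge K\equiv X$, but $fg=x^2\neq x=hk$ at $x=1/2$. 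The case you dismiss is precisely the content of this lemma — when $F\wedge G\not\equiv H\wedge K$ the conclusion already follows essentially from Lemma~\ref{lemma:values:half:or:one}, whose assignment lies in $\set{1/2,1}^n$ and hence automatically satisfies the constraint on the $U_i$. The paper's proof therefore assumes w.l.o.g.\ that $F\wedge G\equiv H\wedge K$ and splits on whether $F\vee G\equiv H\vee K$; the hardest sub-case (both equivalences hold) is resolved by expanding each polynomial over the $U_i$'s as $\prod_i(1-u_i)f(0,\dots,0)+\Delta$ with a common $\Delta$, and using the identity $f+g\equiv h+k$ to show that replacing $\varphi_0(U_i)=0$ by $1/2$ scales the difference $hk-fg$ by $1/2^{2k}$ without killing it. None of that machinery appears in your argument, and your distinguishing assignment $\theta$ simply does not exist in that case.

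A secondary issue: even in the case you do treat, the conclusion $f[\varphi]g[\varphi]\neq 0$ does not follow from the lemma's hypotheses, and you patch it by appealing to a satisfiability assumption on $F,G$ that is not part of the statement. The paper avoids this by arranging, in each of its two cases, that all four values are provably nonzero (e.g., in its first case every factor equals either $1$ or $1-1/2^{k}>0$). As written, your proof establishes only the special case already covered by Lemma~\ref{lemma:values:half:or:one} and Corollary~\ref{cor:values:half:or:one}.
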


\begin{proof}
  The multilinear polynomial $f$ admits the following expansion in
  terms of the variables $u_1, \ldots, u_k$: 
  \begin{align*}
    f = & \sum_{\tau \in \set{0,1}^k} \mb u^\tau f[\tau]
  \end{align*}
  where
  $\mb u^\tau \defeq \prod_{i: \tau(u_i)=0}(1-u_i) \cdot \prod_{i:
    \tau(u_i)=1} u_i$.  By assumption, for every
  $\tau\neq (0,0,\ldots,0)$, we have
  $f[\tau]=g[\tau]=h[\tau]=k[\tau]$, thus we can write:
  \begin{align*}
    f = & \prod_i(1-u_i)f(0,\ldots,0) + \Delta & h = & \prod_i(1-u_i)h(0,\ldots,0) + \Delta\\
    g = & \prod_i(1-u_i)g(0,\ldots,0) + \Delta & k = & \prod_i(1-u_i)k(0,\ldots,0) + \Delta
  \end{align*}
  where $\Delta$ is the same quantity for all four polynomials.  

  We can assume w.l.o.g. that $F \wedge G \equiv H \wedge K$,
  otherwise the claim follows immediately from
  Lemma~\ref{lemma:values:half:or:one}.  We consider two cases.
  First, if $F \vee G \not\equiv H \vee K$.  In that case we can
  assume w.l.o.g. that there exists a truth assignment $\theta$ such
  that $F[\theta]=G[\theta]=H[\theta]=0$ and $K[\theta]=1$; notice that
  $\theta(U_i)=0$ for all $U_i$ (otherwise $F[\theta]=K[\theta]$).
  Define $\varphi$ as follow: $\varphi(U_i)=1/2$ and
  $\varphi(X) = \theta(X)$ when $X\not\in\set{U_1,\ldots,U_k}$. 
  We first note that since $K$ is monotonic, then $k[\varphi]=1$.  
   We
  compute $f[\varphi]$ using the formula above, noting that for
  $\tau=(0,0,\ldots,0)$, $f=0$, and for any other $\tau$,
  $f[\tau]=k[\tau]=1$ because $K[\theta]=1$.  Therefore,
  $v \defeq \prod(1-u_i) = 1/2^k$ and $\Delta = (2^k-1)/2^k = 1 - v$.
  We obtain:
  \begin{align*}
    f[\varphi] = & (1-v) > 0 & h[\varphi] = & (1-v) > 0 \\
    g[\varphi] = & (1-v) > 0 & k[\varphi] = & 1
  \end{align*}
  and we obtain $h[\varphi]k[\varphi]-f[\varphi]g[\varphi]>(1-v) -
  (1-v)^2 > 0$.

  Second, assume that both identities hold:
  $F \wedge G \equiv H \wedge K$ and $F \vee G \equiv H \vee K$.
  Then $f+g \equiv h+k$ because:
  \begin{align*}
    \pr(F \vee G) = & f+g-\pr(F \wedge G) = \pr(H \vee K) = h+k-\pr(H \wedge K)
  \end{align*}
  Let $\varphi_0$ be an assignment such that
  $f[\varphi_0]g[\varphi_0]\neq h[\varphi_0]k[\varphi_0]$.  Assume
  w.l.o.g. that $\varphi_0(U_i)=0$ for all $i$; otherwise, if
  $\varphi_0(U_i)=1/2$ then we simply remove the variable $U_i$ from
  the list $U_1,\ldots, U_k$ and decrease $k$. We define $\varphi$ as
  follows: $\varphi(U_i)=1/2$ for all variables $U_i$, and
  $\varphi(X) = \varphi_0(X)$ for all other variables.  To prove the
  claim of the lemma, let $f_0,g_0,h_0,k_0$ be the polynomials in the
  variables $u_1, \ldots, u_k$ obtained by substituting all other
  variables $X$ with the value $\varphi_0(X)$.  Thus,
  $f_0(0,0,\ldots,0)=f[\varphi_0]$, and similarly for $g,h,k$,
  therefore,
  \begin{align*}
    k[\varphi]h[\varphi]-f[\varphi]g[\varphi] = 
    & (\frac{1}{2^k}h_0(0,\ldots,0) + \Delta)(\frac{1}{2^k}k_0(0,\ldots,0) + \Delta)- (\frac{1}{2^k}f_0(0,\ldots,0) + \Delta)(\frac{1}{2^k}g_0(0,\ldots,0) + \Delta)\\
= & \frac{1}{2^{2k}}\left(h[\varphi_0]k[\varphi_0]-f[\varphi_0]g[\varphi_0]\right)+\frac{\Delta}{2^k}\left(h[\varphi_0]+k[\varphi_0]-f[\varphi_0]-g[\varphi_0]\right)\\
= & \frac{1}{2^{2k}}\left(h[\varphi_0]k[\varphi_0]-f[\varphi_0]g[\varphi_0]\right)\neq 0
  \end{align*}
  because $f+g=h+k$.  Also, $\Delta>0$, because at least one of the
  four values $f[\varphi_0], g[\varphi_0], h[\varphi_0], k[\varphi_0]$
  is $>0$, hence setting any $u_i:=1$ can only increase that value.
  This proves that $f[\varphi], g[\varphi], h[\varphi], k[\varphi]$
  are $\neq 0$.
\end{proof}

Next, we discuss tight connections between Boolean formulas, their
arithmetizations, and conditional independence statements, of possible
independent interest.

\begin{definition} \label{def:properties:of:boolean:functions} Fix a
  Boolean formula $F$.

  \begin{itemize}
  \item We say that $F$ is {\em connected} if, whenever
    $F \equiv F_1 \wedge F_2$ where $F_1, F_2$ do not share any common
    Boolean variables, then either $F_1 \equiv \texttt{true}$ or
    $F_2 \equiv \texttt{true}$.  Otherwise we say that $F$ decomposes
    into $F_1, F_2$.
  \item We say that $F$ {\em disconnects} two sets of variables
    $\mb U, \mb V$, if $F\equiv F_1 \wedge F_2$ where $F_1, F_2$ do
    not share any common Boolean variables,
    $\mb V \cap \vars(F_1) = \emptyset$,
    $\mb U \cap \vars(F_2) = \emptyset$.  Otherwise, we say that
    $\mb U, \mb V$ are connected.
  \item A Boolean variable $X$ in $F$ {\em disconnects}
    $\mb U, \mb V$ if both $F[X:=0]$ and $F[X:=1]$ disconnect
    $\mb U, \mb V$.
  \end{itemize}
\end{definition}

Notice that, if $F$ does not depend on $\mb U$, then it trivially
disconnects $\mb U,\mb V$, by writing $F = \texttt{true} \wedge F$.
We describe now the equivalent notions on multi-variate polynomials
$f$.

\begin{definition}  Let $f$ be a multivariate polynomial.
  \begin{itemize}
  \item We say that $f$ is {\em irreducible} if, whenever
    $f \equiv f_1 \cdot f_2$, then either $f_1$ or $f_2$ is a constant
    polynomial.
  \item We say that $f$ disconnects two sets of variables
    $\mb u,\mb v$ if $f \equiv f_1 \cdot f_2$ and
    $\mb v \cap \vars(f_1) = \mb u \cap \vars(f_2)=\emptyset$.
  \item We say that $x$ disconnects $\mb u,\mb v$ if both $f[x:=0]$
    and $f[x:=1]$ disconnect $\mb u,\mb v$.
  \end{itemize}
\end{definition}

We prove that, if the  arithmetization $f$ factorizes  $f = g\cdot h$,
then the associated Boolean function decomposes.

\begin{lemma} \label{lemma:from:polynomial:to:boolean} Let $F$ be a
  Boolean formula, and assume its arithmetization factorizes as
  $f = g \cdot h$, where both $g,h$ are non-constant multi-linear
  polynomials.  Then, there exists two Boolean formulas $G,H$ such
  that $F = G \wedge H$ and $g,h$ are, up to a constant, the
  arithmetization of $G,H$; in other words, there exists some constant
  $c \neq 0$ such that $\pr(G) = c\cdot g$ and $\pr(H)= h/c$.
\end{lemma}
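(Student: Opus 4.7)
The plan is to first locate the variables of $g$ and $h$, then pin down the values they take on Boolean inputs, and finally read off the Boolean formulas $G, H$ from those values.

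First, I would show that $g$ and $h$ involve disjoint sets of variables. This is because $f$ is multilinear by construction (it is the arithmetization of a Boolean function, so every variable has degree at most $1$), while in the product $g \cdot h$ any variable appearing in both factors would acquire degree $2$ in $f$. Let $S = \vars(g)$ and $T = \vars(h)$, so $S \cap T = \emptyset$ and every assignment $\theta \in \{0,1\}^{S \cup T}$ splits as $\theta = (\theta_S, \theta_T)$.

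The key step is to show that $g$ takes only two values on $\{0,1\}^S$, namely $0$ and some fixed nonzero constant $c$, and dually that $h$ takes only the values $0$ and $1/c$ on $\{0,1\}^T$. Since $F$ is Boolean and $f = gh$ agrees with $F$ on $\{0,1\}^{S\cup T}$, we have $g(\theta_S) h(\theta_T) \in \{0,1\}$ for all $\theta_S, \theta_T$. Because $h$ is non-constant and multilinear, there exists $\theta_T^\star$ with $h(\theta_T^\star) \neq 0$; symmetrically there exists $\theta_S^\star$ with $g(\theta_S^\star) \neq 0$. Suppose, toward contradiction, that $g$ took two distinct nonzero values $c_1, c_2$ at points $\theta_S^{(1)}, \theta_S^{(2)}$. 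Then for every $\theta_T$ we would need $h(\theta_T) \in \{0, 1/c_1\}$ and $h(\theta_T) \in \{0, 1/c_2\}$, forcing $h(\theta_T) = 0$ on all of $\{0,1\}^T$, which contradicts multilinearity together with $h(\theta_T^\star) \neq 0$. Hence all nonzero values of $g$ equal a common constant $c \neq 0$. The same argument applied to $h$ shows that its nonzero values all equal $1/c$.

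From here the Boolean functions appear for free. Define $G$ on variables $S$ by $G(\theta_S) = 1$ iff $g(\theta_S) = c$ (equivalently, $g(\theta_S) \neq 0$), and $H$ on variables $T$ by $H(\theta_T) = 1$ iff $h(\theta_T) = 1/c$. Since the arithmetization of a Boolean function is the unique multilinear polynomial agreeing with it on $\{0,1\}$, the arithmetization of $G$ is $(1/c)\, g$ and that of $H$ is $c\, h$. Writing $c' \defeq 1/c$, we have $\pr(G) = c' \cdot g$ and $\pr(H) = h/c'$, as required. Finally, for every $(\theta_S, \theta_T) \in \{0,1\}^{S \cup T}$,
\[
F(\theta_S,\theta_T) = f(\theta_S,\theta_T) = g(\theta_S)\, h(\theta_T) = 1
\ \Longleftrightarrow\ g(\theta_S)=c \text{ and } h(\theta_T)=1/c,
\]
i.e.\ $F \equiv G \wedge H$, completing the proof.

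The main obstacle is the middle step: ruling out that $g$ takes three or more distinct values on $\{0,1\}^S$. Everything else is routine, but that step is what forces the factorization of $f$ as polynomials to refine into a genuine logical decomposition of $F$.
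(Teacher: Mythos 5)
Your proof is correct and follows essentially the same route as the paper's: both identify a single scaling constant $c$ from a point where $f=1$, show that $c\cdot g$ (resp.\ $h/c$) is $\set{0,1}$-valued on the Boolean cube, and read off $G,H$ as the corresponding Boolean functions on the disjoint variable sets of $g$ and $h$. The only cosmetic difference is that the paper obtains the two-valuedness of $g$ directly by evaluating $f$ at $(\theta_1',\theta_2)$ with $\theta_2$ held fixed, whereas you derive it by contradiction from the assumption of two distinct nonzero values; your explicit observation that multilinearity of $f$ forces $\vars(g)\cap\vars(h)=\emptyset$ is a point the paper leaves implicit.
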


\begin{proof}
  We can assume w.l.o.g. that , $F \not\equiv \texttt{false}$, and let
  $\theta$ be any assignment such that $F[\theta]=\texttt{true}$.
  Then $f[\theta] = g[\theta_1] h[\theta_2] = 1$, where
  $\theta_1, \theta_2$ are the restrictions of $\theta$ to $\vars(g)$
  and $\vars(h)$ respectively.  Denote $c=h[\theta_2]\neq 0$, then
  $g[\theta_2]=1/c$.  Define $G,H$ as follows.  $\vars(G)=\vars(g)$,
  $\vars(H)=\vars(h)$, and for any assignment $\theta_1'$ of
  $\vars(G)$, define $G[\theta_1'] \defeq c\cdot g[\theta_1']$
  (similarly for $H$).  We claim that this is correct, i.e.
  $c \cdot g[\theta_1'] \in \set{0,1}$.  Indeed, consider the full
  assignment $\theta' = \theta_1' \cup \theta_2$.  If
  $F[\theta']=\texttt{false}$ then
  $0 = f[\theta'] = g[\theta_1']h[\theta_2]=c \cdot g[\theta_1']$, and
  if $F[\theta']=\texttt{true}$ then
  $1 = f[\theta'] = g[\theta_1']h[\theta_2]=c \cdot g[\theta_1']$,
  proving the claim.  Thus, $\Pr(G)= c\cdot g$ and similarly
  $\Pr(H) = h/c$.
\end{proof}

The following are easy to check:

\begin{lemma}
  Let $F$ be a Boolean formula and $f$ be the multilinear polynomial
  representing its arithmetization.  Then:
  \begin{itemize}
  \item $F$ is connected iff $f$ is irreducible.
  \item $F$ disconnects $\mb U, \mb V$ iff $f$ disconnects
    $\mb u, \mb v$.
  \item $X$ disconnects $\mb U, \mb V$ iff $x$ disconnects
    $\mb u, \mb v$.
  \end{itemize}
\end{lemma}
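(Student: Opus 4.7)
The plan is to dispatch the three items one after another, using the previous Lemma~\ref{lemma:from:polynomial:to:boolean} as the main crank: it converts a multiplicative factorization of an arithmetization into a conjunctive decomposition of the underlying Boolean formula, which is precisely the direction that is not immediate. The opposite direction is in each case trivial, since if $F \equiv F_1 \wedge F_2$ with $\vars(F_1)\cap\vars(F_2)=\emptyset$, the two events are independent, so $\pr(F)=\pr(F_1)\pr(F_2)$, giving $f \equiv f_1\cdot f_2$ with matching variable partition.

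For item~1, the forward direction is immediate from independence as above, noting that if each $F_i$ is non-trivial then each $f_i$ is non-constant. For the converse, assume $f$ factors as $f=g\cdot h$ with $g,h$ both non-constant. Since $f$ is multilinear and the product of $g$ and $h$ is multilinear, the variable sets of $g$ and $h$ must be disjoint. Lemma~\ref{lemma:from:polynomial:to:boolean} then yields Boolean formulas $G,H$ on the variable sets of $g,h$ with $F \equiv G \wedge H$, and neither is trivially \texttt{true} since their arithmetizations are non-constant; hence $F$ is disconnected.

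For item~2, the forward direction is again immediate. For the converse, suppose $f \equiv f_1\cdot f_2$ with $\mb v\cap\vars(f_1)=\mb u\cap\vars(f_2)=\emptyset$. If both $f_1,f_2$ are non-constant, Lemma~\ref{lemma:from:polynomial:to:boolean} produces $F \equiv F_1\wedge F_2$ with $\vars(F_i)=\vars(f_i)$, which is exactly the required Boolean disconnection. If one of $f_1,f_2$ is a constant, then $F$ does not depend on one of $\mb U,\mb V$ at all, in which case the trivial decomposition $F \equiv \texttt{true}\wedge F$ (or symmetrically) satisfies the definition. Item~3 is then an immediate corollary of item~2 applied to each of the two restrictions $F[X:=0]$ and $F[X:=1]$, whose arithmetizations are exactly $f[x:=0]$ and $f[x:=1]$.

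The only place I expect to need any care is the degenerate case in item~2 where one factor on the polynomial side is constant; this is the boundary situation that Lemma~\ref{lemma:from:polynomial:to:boolean} does not directly cover, but it is handled cleanly by observing that a constant arithmetization forces the corresponding Boolean restriction to be independent of the variables in question, so the trivial Boolean decomposition suffices. Beyond this, everything reduces to the already-proven bridge between polynomial factorizations and Boolean conjunctive decompositions, consistent with the author's remark that the statements are \emph{easy to check}.
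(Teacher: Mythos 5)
Your proof is correct and follows exactly the route the paper intends: the paper omits the argument ("easy to check"), but the preceding Lemma~\ref{lemma:from:polynomial:to:boolean} is clearly the intended bridge from polynomial factorizations back to Boolean decompositions, with the easy direction coming from independence. Your handling of the degenerate case in item~2 (a constant factor, covered by the trivial decomposition $F \equiv \texttt{true} \wedge F$) and the observation that multilinearity forces the factors' variable sets to be disjoint are precisely the details that make the "easy to check" claim go through.
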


In this paper we are concerned only with monotone Boolean formulas
$F$, which admit a unique, canonical CNF representation, where no
clause is redundant (i.e. is not a superset of some other clause).  Then,
connectedness can be viewed as a graph-theoretic property, since it is
equivalent to saying that there exists clauses $C_0, C_1, \ldots, C_k$
such that $\mb U \cap \vars(C_0)\neq \emptyset$,
$\mb V \cap \vars(C_k)\neq \emptyset$ and
$\vars(C_{i-1})\cap \vars(C_i) \neq \emptyset$ for $i=1,k$.  Define
the {\em distance}, $d(\mb U, \mb V)$, to be the minimum such $k$.
Notice that we may have $d(U, V)=0$ even if $U, V$ are single
variables.    Connectedness is also an algebraic property, and related
to polynomial factorization.  We will make use repeatedly of these
equivalent formulations.  

Fix a set of variables $\mb U$.  For any number $m \geq 0$, define the
ball $B(\mb U, m) \defeq \setof{Z}{d(\mb U, Z) \leq m}$.  The
following is easy to check:

\begin{lemma} \label{lemma:disconnect:balls} Fix a monotone Boolean
  formula $F$.
  \begin{enumerate}
  \item If $X$ disconnects $\mb U_1, \mb V$ and $X$ also disconnects
    $\mb U_2, \mb V$, then it disconnects $\mb U_1 \cup \mb U_2, \mb V$.
  \item \label{item:disconnect:balls} If $X$ disconnects
    $\mb U, \mb V$ and $m = d(\mb U, X)$, $n = d(\mb V, X)$, then $X$
    disconnects $B(\mb U, m-2)$ and $B(\mb V, n-2)$.
  \end{enumerate}
\end{lemma}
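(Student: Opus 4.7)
The plan is to exploit the canonical connected-component structure of a non-redundant monotone CNF. Concretely, for any monotone CNF $F$, the clauses partition into connected components under the ``share-a-variable'' relation, and every decomposition $F \equiv F_1 \wedge F_2$ into variable-disjoint conjuncts corresponds to a grouping of these components. Hence $F$ disconnects $\mb U$ from $\mb V$ iff no connected component of $F$ contains variables from both $\mb U$ and $\mb V$, and a variable $Z$ lies in the same component as $\mb U$ iff there is a path $C_0, C_1, \ldots, C_k$ of clauses of $F$ witnessing $d(\mb U, Z) \leq k$. I would record this equivalence at the outset and use it for both parts.

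For part (1), I apply the equivalence to each of $F[X:=0]$ and $F[X:=1]$. For each $a \in \set{0,1}$, the hypothesis that $X$ disconnects $\mb U_1$ from $\mb V$ says that no connected component of $F[X:=a]$ meets both $\mb U_1$ and $\mb V$, and similarly no component meets both $\mb U_2$ and $\mb V$. Consequently no component meets both $\mb U_1 \cup \mb U_2$ and $\mb V$, so $F[X:=a]$ disconnects $\mb U_1 \cup \mb U_2$ from $\mb V$. Taking the union of the $\mb V$-free components on one side and the rest on the other yields the required decomposition.

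For part (2), the key observation is that any path $D_0, D_1, \ldots, D_k$ in $F$ witnessing $d(\mb U, Z) \leq k \leq m-2$ can contain no clause that involves $X$: if some $D_i$ did contain $X$, then the truncated prefix $D_0, \ldots, D_i$ would witness $d(\mb U, X) \leq i \leq m-2 < m$, contradicting the definition of $m = d(\mb U, X)$. Therefore every such path survives intact in $F[X:=1]$ (where clauses not containing $X$ are preserved verbatim) and in $F[X:=0]$ (where the relevant clauses again do not involve $X$ and are unchanged). Hence every $Z \in B(\mb U, m-2)$ lies in the connected component of $\mb U$ in both $F[X:=0]$ and $F[X:=1]$, and symmetrically every $Z' \in B(\mb V, n-2)$ lies in the component of $\mb V$. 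Since $X$ disconnects $\mb U, \mb V$, these components are distinct in each residual, so grouping the $\mb V$-containing components against the rest produces a decomposition that separates $B(\mb U, m-2)$ from $B(\mb V, n-2)$.

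The main subtlety I expect is the interaction between the distance $d$, which is measured in the original formula $F$, and the connectivity structure of the residual formulas $F[X:=0]$ and $F[X:=1]$. Some clauses of $F$ are modified or deleted under these substitutions, and one has to check that the specific paths certifying membership in the balls are preserved. The ``$-2$'' slack in the ball radii is precisely what buys this preservation, since a path of length $\leq m-2$ cannot reach a clause incident to $X$ without contradicting $d(\mb U, X) = m$. Once this point is pinned down, the rest of the argument is bookkeeping with the canonical component decomposition.
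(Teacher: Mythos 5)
The paper itself gives no proof of this lemma (it is introduced with ``the following is easy to check''), so there is nothing to compare against; your component-based argument on the canonical CNF is the natural way to make it precise, and part (1) is complete as written.

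In part (2) there is one point your write-up does not cover. You argue that a path $D_0,\ldots,D_k$ with $k\le m-2$ witnessing $d(\mb U,Z)\le m-2$ contains no clause involving $X$, and conclude that the path ``survives intact'' in both residuals. For $F[X:=1]$ this is right: the canonical CNF of $F[X:=1]$ is exactly the set of clauses of $F$ not containing $X$ (a subset of an antichain is an antichain), so the path persists verbatim. For $F[X:=0]$, however, a clause $D_i$ that does not contain $X$ can nevertheless disappear from the canonical CNF of the residual: if $F$ has a clause $C\ni X$ with $C\setminus\set{X}\subseteq D_i$, then after the substitution the shortened clause $C\setminus\set{X}$ absorbs $D_i$, and replacing $D_i$ by this smaller clause need not preserve adjacency with $D_{i-1}$ and $D_{i+1}$ (the shared variables may lie outside $C\setminus\set{X}$). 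Since disconnection is a semantic property characterized by the components of the \emph{minimal} CNF, such an absorption could genuinely break your path. The repair uses the remaining unit of slack in the ``$-2$'': an absorbing clause $C$ shares a variable with $D_i$, so $D_0,\ldots,D_i,C$ would witness $d(\mb U,X)\le i+1\le m-1<m$, a contradiction. Hence no clause within distance $m-2$ of $\mb U$ is even \emph{adjacent} to a clause containing $X$, which is the statement you actually need; your closing remark about paths that ``cannot reach a clause incident to $X$'' gestures at this, but the explicit argument you give only excludes the case where a path clause \emph{contains} $X$, not the adjacency case. With that addition the proof is complete.
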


A third characterization uses conditional independence.  Let $\Pr(-)$
denote the distribution where each random variable $X$ is set to
$\texttt{true}$ independently, with probability $\Pr(X) = x$.  For a
fixed Boolean formula $F$, define $\Pr_F$ the probability space
$\Pr_F \defeq \Pr(-|F)$: that is, its outcomes are assignments that
satisfy $F$.  We write $\mb U \perp_F \mb V$ when $\mb U, \mb V$ are
independent in the probability space $\Pr_F$.  We prove the following:

\begin{lemma} \label{lemma:migration:independence}
  A Boolean variable $X$  disconnects $\mb U, \mb V$, iff
  $\mb U \perp_F \mb V | X$.
\end{lemma}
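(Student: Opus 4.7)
The plan is to prove both directions by translating between the probabilistic conditional independence and the polynomial factorization of the arithmetization $f_a$ of $F[X\asn a]$.

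For the forward direction, I would assume $X$ disconnects $\mb U, \mb V$, so that for each $a \in \set{0,1}$ we have $F[X\asn a] \equiv F_1^{(a)} \wedge F_2^{(a)}$ with $\vars(F_1^{(a)})$ and $\vars(F_2^{(a)})$ disjoint, $\mb U$-variables confined to the first factor and $\mb V$-variables to the second. Because the two factors have disjoint variable sets, probabilities multiply, and the formulas for $\Pr(F\mid X{=}a)$, $\Pr(\mb U{=}\mb u, F\mid X{=}a)$, $\Pr(\mb V{=}\mb v, F\mid X{=}a)$, and the joint probability all factor into products of an $F_1^{(a)}$-piece and an $F_2^{(a)}$-piece. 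Dividing by $\Pr(F\mid X{=}a)$ then yields the CI identity by a short calculation.

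For the backward direction, I would translate the CI condition into a polynomial identity. Let $\mb W$ denote the remaining variables. Each probability in the CI equation is a polynomial evaluation of $f_a$, with some variables set to Boolean assignments and the rest substituted by symbolic probability parameters; since the parameters are free and $f_a$ is multilinear in every argument, the CI becomes the polynomial identity
\begin{align*}
f_a(\mb u, \mb v, \mb w)\cdot f_a(\mb u', \mb v', \mb w) \equiv f_a(\mb u, \mb v', \mb w)\cdot f_a(\mb u', \mb v, \mb w).
\end{align*}
This is a rank-$1$ condition on the matrix of evaluations of $f_a$ indexed by pairs $(\mb u, \mb v)$. Applying Theorem~\ref{th:rank:1:matrix:of:polynomials} (generalized from the $2\times 2$ scalar case to tuple-indexed rows and columns, or iteratively one pair $(U_i, V_j)$ at a time) would then yield a polynomial factorization $f_a \equiv g(\mb u, \mb w)\cdot h(\mb v, \mb w)$. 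Multilinearity forces $g$ and $h$ to have disjoint support in $\mb w$: any common variable $w_k$ would produce a $w_k^2$ term in the product, contradicting multilinearity of $f_a$. Finally, Lemma~\ref{lemma:from:polynomial:to:boolean} converts this into a Boolean decomposition $F[X\asn a] \equiv G \wedge H$ with $\vars(G) \cap \vars(H) = \emptyset$, $\mb V \cap \vars(G) = \emptyset$, and $\mb U \cap \vars(H) = \emptyset$, proving that $X$ disconnects $\mb U, \mb V$.

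The hard part will be the passage from the tuple-indexed rank-$1$ identity to a single global polynomial factorization, since Theorem~\ref{th:rank:1:matrix:of:polynomials} as stated covers only the $2\times 2$ scalar case. One approach is a direct tuple-level generalization: if the $2^{|\mb U|}\times 2^{|\mb V|}$ matrix of polynomials in $\mb w$ obtained by evaluating $f_a$ at all Boolean assignments of $\mb u, \mb v$ has every $2\times 2$ minor identically zero, then (using multilinearity and the irreducible-factor argument underlying Theorem~\ref{th:rank:1:matrix:of:polynomials}) the entire matrix has polynomial rank one. An alternative is an induction on $|\mb U|+|\mb V|$, peeling off one variable at a time and invoking the theorem to glue the factorizations. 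Once this global factorization is secured, the disjoint-support step from multilinearity and the final appeal to Lemma~\ref{lemma:from:polynomial:to:boolean} are routine.
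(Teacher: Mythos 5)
Your forward direction matches the paper's. For the converse, you take a genuinely different and substantially heavier route than the paper does, and the "hard part" you flag is precisely the work the paper's argument is designed to avoid. You evaluate $f_a$ at \emph{all} Boolean assignments of $\mb U$ and $\mb V$, obtaining a $2^{|\mb U|}\times 2^{|\mb V|}$ matrix of polynomials in the remaining variables $\mb w$, and then need a tuple-indexed rank-one factorization theorem that Theorem~\ref{th:rank:1:matrix:of:polynomials} does not directly supply. The paper instead uses the CI identity in the form $f\cdot f[\theta_1,\theta_2]\equiv f[\theta_1]\cdot f[\theta_2]$ for a \emph{single} pair of assignments $\theta_1$ of $\mb U$ and $\theta_2$ of $\mb V$, keeping the unevaluated $f$ itself as the $(0,0)$-entry of a $2\times 2$ matrix of polynomials. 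One application of Theorem~\ref{th:rank:1:matrix:of:polynomials} then yields $f\equiv g_0h_0$, $f[\theta_1]\equiv g_1h_0$, $f[\theta_2]\equiv g_0h_1$, and the variable placement is read off from divisibility: $h_0$ divides $f[\theta_1]$, in which every $\mb U$-variable has been substituted away, so no $\mb U$-variable occurs in $h_0$; symmetrically no $\mb V$-variable occurs in $g_0$; and $\vars(g_0)\cap\vars(h_0)=\emptyset$ by multilinearity. This gives the disconnecting factorization immediately, with no interpolation back from the evaluation matrix and no appeal to Lemma~\ref{lemma:from:polynomial:to:boolean}.

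Your route is not wrong — the exponential-size rank-one statement is true over a polynomial ring (e.g., by fixing a row/column through a nonzero entry and running the gcd argument underlying Theorem~\ref{th:rank:1:matrix:of:polynomials}), your recovery of $f_a$ from the factored matrix via multilinear interpolation does factor as a product, and the disjoint-support and Boolean-decomposition steps are fine. What you would buy is nothing beyond what the paper already gets, at the cost of proving a nontrivial auxiliary lemma. One small point in your favor: you work explicitly with $F[X\asn a]$ for each $a\in\set{0,1}$, which is more faithful to the conditional statement than the paper's proof, which silently treats the unconditioned case and leaves the restriction to $F[X:=a]$ implicit. If you adopt the paper's trick of keeping one row and one column of the matrix symbolic, the rest of your argument collapses to a few lines.
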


\begin{proof}
  We start by establishing the connection between $\Pr_F$ and the
  arithmetization $f$.  Consider some partial assignment $\theta$ of
  the variables in $F$ with values in $\set{0,1}$.  Then
  $\Pr_F(\theta)=f[\theta]/f$, where the polynomial $f$ is evaluated
  over values $x = \Pr(X=1), y=\Pr(Y=1), \ldots$ For example, if
  $F= X \vee Y$, then $\Pr(F)= f(x,y) = x+y-xy$ and
  $\Pr_F(X=1) = \Pr(X=1|F) = f(1,y)/f(x,y) = 1/(x+y-xy)$.  Assume
  first that $X$ disconnects $\mb U, \mb V$. Then $F= G \wedge H$ and
  $f=g\cdot h$, where
  $\mb u \subseteq \vars(g), \mb v \subseteq \vars(h)$.  For any
  assignments $\theta_1, \theta_2$ of the variables $\mb U, \mb V$
  respectively, with values $\set{0,1}$ we write $\mb U = \theta_1$
  for the event that the outcomes of $\mb U$ are those given by
  $\theta_1$.  Then we have
  $\Pr_F(\mb U=\theta_1, \mb V=\theta_2) = f[\mb U = \theta_1, \mb V =
  \theta_2]/f = g[\mb U = \theta_1]\cdot h[\mb V = \theta_2]/(g\cdot
  h) = \Pr_G(\mb U=\theta_1) \cdot \Pr_H(\mb V=\theta_2)$.
  Conversely, assume
  $\Pr_F(\mb U = \theta_1, \mb V = \theta_2) = \Pr_F(\mb U = \theta_1)
  \Pr_F(\mb V = \theta_2)$.  Then
  $f[\theta_1,\theta_2]/f = (f[\theta_1]/f) \cdot (f[\theta_2]/f)$,
  or, equivalently,
  $f[\theta_1,\theta_2] \cdot f = f[\theta_1] f[\theta_2]$.  By
  Theorem~\ref{th:rank:1:matrix:of:polynomials} there exists
  polynomials $g_0, g_1, h_0, h_1$ such that:
  \begin{align*}
    f \equiv & g_0 h_0 & f[\theta_1] \equiv & g_1 h_0 & f[\theta_2] \equiv & g_0 h_1  & f[\theta_1 \theta_2]\equiv & g_1 h_1
  \end{align*}
  Since $f$ is multi-linear, $\vars(g_0) \cap \vars(h_0)=\emptyset$.
  From the first two identities we conclude that
  $\mb u \subseteq \vars(g_0)$ (since, recall, $\theta_1$ assigns the
  variables $\mb u$ to $0,1$), and from identities one and three we
  conclude that $\mb v \subseteq \vars(h_0)$.  Thus, the factorization
  $f \equiv g_0 h_0$ disconnects $\mb u, \mb v$, which proves the
  lemma.
\end{proof}

Recall that, for every joint distribution of random variables, if
$U_1 U_2 \perp V | X$ then $U_1 \perp V | X$ and $U_2 \perp V | X$.
The converse does not hold in general, but it holds for $\perp_F$, by
Lemma~\ref{lemma:disconnect:balls}.

\begin{definition}
  Let $F$ be a monotone Boolean formula, where $X$ disconnects
  $\mb U, \mb V$, and let $Y$ be any other variable in $F$.  If $F$
  does not disconnect either $\mb UY, \mb V$, nor $\mb U, \mb V Y$,
  then we say that $Y$ is a {\em migrating} variable w.r.t.
  $X, \mb U, \mb V$.
\end{definition}

Suppose $X$ disconnects $\mb U, \mb V$ in $F$, in other words
$F[X=0] = G_0 \wedge H_0$ and $F[X=1] = G_1 \wedge H_1$, where
$\vars(G_0)\cap \vars(H_0)=\emptyset$,
$\vars(G_1)\cap \vars(H_1)=\emptyset$ and $\mb U$ occurs only in
$G_0, G_1$ while $\mb V$ occurs only in $H_0, H_1$.  Consider where
$Y$ occurs.  If it occurs in $G_0, G_1$ then $X$ separates $Y$ and
$\mb V$; if it occurs in $H_0, H_1$ then $X$ separates $\mb U$ and
$Y$.  If none of these hold, then we say that it is migrating.

\begin{lemma}
	\label{lemma:migrating:d}
  Assuming $m = d(\mb U, X)$, $n = d(\mb V, X)$, if $Y$ migrates
  w.r.t. $X, \mb U, \mb V$, then $d(\mb U,Y) \geq m-1$, and
  $d(\mb V, Y) \geq n-1$.
\end{lemma}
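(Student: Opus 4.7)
The plan is to argue by contrapositive: I will show that if $d(\mb U, Y) \leq m-2$, then $Y$ cannot be migrating with respect to $X, \mb U, \mb V$; the inequality $d(\mb V, Y) \geq n-1$ will then follow by an entirely symmetric argument with the roles of $\mb U$ and $\mb V$ exchanged. First, I will handle the degenerate cases $m \in \set{0,1\}$ separately, where the conclusion $d(\mb U, Y) \geq m-1$ is either vacuous or is a triviality since distances are always nonnegative.

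The workhorse will be Lemma~\ref{lemma:disconnect:balls}(\ref{item:disconnect:balls}). Assuming $m \geq 2$ and $d(\mb U, Y) \leq m-2$, I observe that $Y \in B(\mb U, m-2)$, while trivially $\mb U \subseteq B(\mb U, 0) \subseteq B(\mb U, m-2)$ and $\mb V \subseteq B(\mb V, 0) \subseteq B(\mb V, n-2)$. Since $X$ disconnects $\mb U, \mb V$ with $m = d(\mb U, X)$ and $n = d(\mb V, X)$, the lemma yields that $X$ disconnects $B(\mb U, m-2)$ from $B(\mb V, n-2)$. Containment then gives that $X$ disconnects $\mb U \cup \set{Y}$ from $\mb V$.

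Finally, I will translate this conclusion back into the language of the definition of a migrating variable: the paragraph preceding that definition spells out that $Y$ is migrating exactly when, in the canonical decompositions $F[X{:=}0] = G_0 \wedge H_0$ and $F[X{:=}1] = G_1 \wedge H_1$, the variable $Y$ does not end up on the same side as $\mb U$ in both (and symmetrically for $\mb V$) — equivalently, when $X$ disconnects neither $\mb U\cup\set{Y}, \mb V$ nor $\mb U, \mb V\cup\set{Y}$. The disconnection derived in the previous paragraph is precisely the negation of the first condition, so $Y$ does not migrate, completing the contrapositive.

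I do not expect any real obstacle: after unpacking the notational identification of ``migration'' with ``$X$ does not disconnect $\mb UY, \mb V$ (nor its symmetric counterpart)'' — which is essentially built into the setup via the canonical decompositions discussed before the definition — the lemma reduces to a one-line application of Lemma~\ref{lemma:disconnect:balls}(\ref{item:disconnect:balls}) plus the trivial set inclusion $\mb U \cup \set{Y} \subseteq B(\mb U, m-2)$. The only care needed is to handle the boundary values of $m$ and $n$ where the ball $B(\cdot, m-2)$ is not well-defined.
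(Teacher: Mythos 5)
Your proposal is correct and takes exactly the route the paper intends: the paper's proof is the single line ``Follows from Lemma~\ref{lemma:disconnect:balls}, item~\ref{item:disconnect:balls}'', and your contrapositive argument (place $Y$ in $B(\mb U,m-2)$, apply that item, use monotonicity of disconnection under taking subsets, and conclude $Y$ does not migrate) is the intended unpacking, with the boundary cases $m\in\set{0,1}$ handled appropriately.
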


\begin{proof}
  Follows from Lemma~\ref{lemma:disconnect:balls},
  item~\ref{item:disconnect:balls}.
\end{proof}

\begin{example} \label{ex:migrating} Consider the following Boolean function:
  \begin{align*}
    F =  (U \vee Z_0) \wedge \underbrace{\overbrace{(Z_0 \vee Z_1 \vee Z_2 \vee Z_3)}^{\defeq C_1}\wedge
    \overbrace{(Z_3 \vee X \vee Y)}^{\defeq C_2} \wedge
    \overbrace{(X \vee Y \vee Z_4)}^{\defeq C_3} \wedge
    (X\vee Z_1) \wedge (Y \vee Z_2)}_{\defeq C}\wedge (Z_4 \vee V)
  \end{align*}
  $X$ disconnects $U,V$, and we have:
  \begin{align*}
    F[X:=0] = & \overbrace{(U \vee Z_0) \wedge Z_1}^{\defeq G_0} \wedge \overbrace{(Z_3 \vee Y) \wedge (Y \vee Z_4) \wedge (Y \vee Z_2) \wedge (Z_4 \vee V)}^{\defeq H_0}\\
    F[X:=1] = & \underbrace{(U \vee Z_0) \wedge (Z_0 \vee Z_1 \vee Z_2 \vee  Z_3) \wedge (Y \vee Z_2)}_{\defeq G_1} \wedge \underbrace{ (Z_4 \vee V)}_{\defeq H_1}
  \end{align*}
  Here $Y, Z_2$, and $Z_3$ migrate from the right to the left.
\end{example}

As we shall see, migrating variables add complexity to our proof.
However, we prove that the migration property is symmetric: if $X$
causes $Y$ to migrate, then $Y$ causes $X$ to migrate.  To prove this
we use the following result of independent interest.

\begin{lemma} \label{lemma:independence:binary}
  Let $X,Y,\mb U,\mb V$ be jointly distributed random variable, and
  assume that $Y$ is binary (i.e. it has only two outcomes).  Then,
  the following implication holds: if $(\mb U \perp \mb V | X)$ and
  $(\mb UX \perp \mb V | Y)$ then either $(\mb V \perp Y)$ or
  $(\mb U \perp Y | X)$.
\end{lemma}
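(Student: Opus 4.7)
The plan is to combine the two conditional independence assumptions into a single identity that is \emph{linear} in $\Pr(Y=1 \mid \mb U, X)$, and then exploit the fact that $Y$ takes only two values to force one of the two alternatives in the conclusion. The use of binarity is essential: it reduces a weighted average over $Y$ to an affine expression in a single scalar.

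First I would rewrite the two hypotheses in conditional form. From $\mb U \perp \mb V \mid X$ one obtains $\Pr(\mb v \mid \mb u, x) = \Pr(\mb v \mid x)$. From $\mb U X \perp \mb V \mid Y$ one obtains, by weak union, $\mb U \perp \mb V \mid X, Y$; equivalently, $\Pr(\mb v \mid \mb u, x, y) = \Pr(\mb v \mid y)$. Conditioning on $(\mb u, x)$ and summing out $y$ then gives
\begin{align*}
\Pr(\mb v \mid x) \;=\; \Pr(\mb v \mid \mb u, x)
\;=\; \sum_{y} \Pr(\mb v \mid y)\,\Pr(y \mid \mb u, x).
\end{align*}
Because $Y$ is binary, writing $a(\mb u, x) \defeq \Pr(Y=1 \mid \mb u, x)$ and $q_i(\mb v) \defeq \Pr(\mb v \mid Y=i)$ for $i \in \set{0,1}$, this becomes
\begin{align*}
\Pr(\mb v \mid x) \;=\; q_0(\mb v) \;+\; \big(q_1(\mb v) - q_0(\mb v)\big)\,a(\mb u, x).
\end{align*}

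Next I would perform the case analysis. The left-hand side does not depend on $\mb u$, so for every fixed pair $(x, \mb v)$ (with positive probability) either the coefficient $q_1(\mb v) - q_0(\mb v)$ vanishes, or $a(\mb u, x)$ is constant in $\mb u$. If for \emph{every} $\mb v$ one has $q_0(\mb v) = q_1(\mb v)$, then $\Pr(\mb v \mid Y=0) = \Pr(\mb v \mid Y=1)$ for all $\mb v$, which is exactly $\mb V \perp Y$. Otherwise there is some $\mb v_0$ with $q_1(\mb v_0) \neq q_0(\mb v_0)$; fixing this $\mb v_0$ and solving the affine equation for $a(\mb u, x)$ forces $a(\mb u, x)$ to be independent of $\mb u$ at every $x$. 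Since $Y$ is binary, $a(\mb u, x) = a(x)$ determines the full conditional law, so $\Pr(Y \mid \mb U, X) = \Pr(Y \mid X)$, i.e. $\mb U \perp Y \mid X$.

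The main obstacle I anticipate is not the algebra but the bookkeeping around null events: the identities above are a.s.\ statements, so one must pick the representative pair $(\mb v_0, x)$ among events of positive probability and verify that the conclusion about $a(\mb u, x)$ extends to all $\mb u$ with $\Pr(\mb u \mid x) > 0$, which is exactly what is needed for the conditional independence $\mb U \perp Y \mid X$. The other subtlety is that binarity of $Y$ is genuinely used: if $Y$ took three or more values, the right-hand side would be a nontrivial convex combination rather than an affine function of a single scalar, and the dichotomy would fail.
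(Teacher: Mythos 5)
Your proof is correct and takes essentially the same route as the paper's: both expand over the two outcomes of $Y$ and conclude that the product of a factor measuring the dependence of $\mb V$ on $Y$ with a factor measuring the dependence of $Y$ on $\mb U$ given $X$ must vanish, which yields the dichotomy. Your conditional formulation, affine in $\Pr(Y{=}1\mid \mb u,x)$, is a cleaner presentation of the cancellation the paper carries out on joint probabilities (note only that $\Pr(\mb v\mid\mb u,x,y)=\Pr(\mb v\mid y)$ is just the definition of $\mb U X\perp\mb V\mid Y$, so the appeal to weak union is unnecessary --- and weak union alone would in fact only give $\Pr(\mb v\mid\mb u,x,y)=\Pr(\mb v\mid x,y)$).
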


This implication does not hold in general, but it holds when $Y$ is
binary.  A similar, but different example is given by Geiger and
Pearl~\cite{DBLP:journals/amai/GeigerP90}, in Corollary 8.

\begin{proof}
  Let $\pr(-)$ denote the joint distribution.  As usual we write
  $\pr(X)$ or $\pr(XY)$ etc, for the marginal distribution.  The first
  condition, $(\mb U \perp \mb V | X)$ says that
  $\pr(\mb UX) \cdot \pr(\mb VX) - \pr(\mb U\mb VX) \cdot \pr(X)=0$.
  We use the fact that $Y$ has only two outcomes, and expand each
  probability, using the formula
  $\pr(-) = \pr(-|Y=0)(1-y) + \pr(-|Y=1)y$, where $y \defeq \pr(Y=1)$,
  and further abbreviate $p_0(-) = \Pr(- | Y=0)$ and
  $p_1(-) = \Pr(- | Y=1)$, thus, for example
  $\pr(\mb U) = p_0(\mb U)(1-y)+p_1(\mb U)y$.  We also use the second
  condition, $(\mb UX \perp \mb V | Y)$, which implies
  $\pr(\mb UX\mb V|Y)= \pr(\mb UX|Y) \cdot \pr(\mb V|Y)$ and
  $\pr(\mb VX|Y)=\pr(\mb V|Y)\cdot\pr(X|Y)$, and derive:
  \begin{align*}
0 = &  \pr(\mb UX) \cdot \pr(\mb VX) -  \pr(\mb U\mb VX) \cdot \pr(X) \\
 = & \big(p_0(\mb UX)(1-y)+p_1(\mb UX)y\big)\cdot \big(p_0(\mb VX)(1-y)+p_1(\mb VX)y\big)
 -  \big(p_0(\mb U\mb VX)(1-y)+p_1(\mb U\mb VX)y\big)\cdot \big(p_0(X)(1-y)+p_1(X)y\big)\\
 = & \big(p_0(\mb UX)(1-y)+p_1(\mb UX)y\big)\cdot \big(p_0(\mb V)p_0(X)(1-y)+p_1(\mb V)p_1(X)y\big)
 -  \big(p_0(\mb UX)p_0(\mb V)(1-y)+p_1(\mb UX)p_1(\mb V)y\big)\cdot \big(p_0(X)(1-y)+p_1(X)y\big)
  \end{align*}
  We multiply out both products.  The term
  $p_0(\mb UX)p_0(\mb V)p_0(X)(1-y)^2$ occurs in both products,
  and cancels out, and similarly for the term
  $p_1(\mb UX)p_1(\mb V)p_1(X)y^2$, thus we obtain:
\begin{align*}
0 = & \big(p_0(\mb UX)p_1(\mb V)p_1(X)y(1-y) + p_1(\mb UX)p_0(\mb V)p_0(X)y(1-y)\big)
 -  \big(p_0(\mb UX)p_0(\mb V)p_1(X)y(1-y) + p_1(\mb UX)p_1(\mb V)p_0(X)y(1-y)\big)\\
= &  \big(p_0(\mb UX)p_1(X)-p_1(\mb UX)p_0(X)\big)
\cdot
\big(p_1(\mb V)-p_0(\mb V)\big) y (1-y)
\end{align*}
We can assume w.l.o.g. that $y(1-y)\neq 0$.  If
$(p_1(\mb V)-p_0(\mb V)) = 0$ then $\mb V \perp Y$.  Suppose
$p_1(\mb UX)p_0(X)-p_0(\mb UX)p_1(X)=0$, then:
\begin{align*}
  \frac{p_1(\mb UX)}{p_1(X)} = &   \frac{p_0(\mb UX)}{p_0(X)}
\end{align*}
This is equivalent to $\pr(\mb U|X, Y=1) = \pr(\mb U |X, Y=0)$, or
$\mb U\perp Y |X$.
\end{proof}

We prove:

\begin{corollary} \label{cor:migrating:symmetric}
  Let $F$ be a monotone, connected Boolean formula.  Suppose $X$
  disconnects $\mb U,\mb V$, and $Y$ also disconnects $\mb U,\mb V$.
  Then $Y$ is migrating w.r.t. $X, \mb U, \mb V$ iff $X$ is migrating
  w.r.t. $Y, \mb U, \mb V$.
\end{corollary}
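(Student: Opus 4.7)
The plan is to prove the contrapositive, using Lemma~\ref{lemma:independence:binary} in combination with Lemma~\ref{lemma:migration:independence} as the dictionary between disconnection and conditional independence. Assume $Y$ is not migrating w.r.t. $X, \mb U, \mb V$; by the symmetry that swaps $\mb U \leftrightarrow \mb V$, we may assume without loss of generality that $X$ disconnects $\mb U Y, \mb V$. The goal is to show that $Y$ disconnects $\mb U, \mb V X$, which is exactly the statement that $X$ is not migrating w.r.t. $Y, \mb U, \mb V$.

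Translating the hypotheses via Lemma~\ref{lemma:migration:independence}: ``$Y$ disconnects $\mb U, \mb V$'' becomes $\mb U \perp_F \mb V \mid Y$, and ``$X$ disconnects $\mb U Y, \mb V$'' becomes $\mb U Y \perp_F \mb V \mid X$. I would then invoke Lemma~\ref{lemma:independence:binary} with the identification $\mathbf{P} \mapsto \mb U$, $\mathbf{Q} \mapsto \mb V$, $A \mapsto Y$, $B \mapsto X$, which is legal because $X$ is Boolean, hence binary. Its conclusion is the disjunction $\mb V \perp_F X$ or $\mb U \perp_F X \mid Y$.

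The first disjunct can be ruled out using that $F$ is connected: by the same arithmetization argument used in Lemma~\ref{lemma:migration:independence} (applied with an empty conditioning set and appealing to Theorem~\ref{th:rank:1:matrix:of:polynomials} plus Lemma~\ref{lemma:from:polynomial:to:boolean}), the statement $\mb V \perp_F X$ is equivalent to saying that $F$ admits an unconditional factorization $F \equiv F_1 \wedge F_2$ with disjoint variable sets, $X \not\in \vars(F_1)$, and $\mb V \cap \vars(F_2) = \emptyset$; since $F$ is connected, one factor must be trivially true, forcing $F$ to depend on neither $X$ nor the whole of $\mb V$, contradicting the standing assumption that $X$ and every variable of $\mb V$ occur in $F$. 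We are therefore left with $\mb U \perp_F X \mid Y$, i.e. $Y$ disconnects $\mb U$ from $X$.

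The last step is to upgrade ``$Y$ disconnects $\mb U, \mb V$'' and ``$Y$ disconnects $\mb U, X$'' into ``$Y$ disconnects $\mb U, \mb V X$''. For each $b \in \set{0,1}$, the monotone Boolean formula $F[Y{:=}b]$ admits a unique decomposition into connected components: its arithmetization factors uniquely into irreducible polynomials, and this factorization transfers back to the Boolean side by Lemma~\ref{lemma:from:polynomial:to:boolean}. The two disconnection hypotheses assert that the components carrying $\mb U$ are disjoint from those carrying $\mb V$ and from those carrying $X$; hence they are disjoint from the components carrying $\mb V \cup \set{X}$, which is precisely the required disconnection. The main obstacle is this final step, because the naive intersection rule $(\mb U \perp \mb V \mid Y) \wedge (\mb U \perp X \mid Y) \implies \mb U \perp \mb V X \mid Y$ is invalid in arbitrary probability spaces and is recovered here only by exploiting the unique irreducible factorization that $\pr_F$ inherits from polynomials.
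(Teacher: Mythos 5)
Your proof is correct and follows essentially the same route as the paper's: pass to the contrapositive, translate disconnection into conditional independence via Lemma~\ref{lemma:migration:independence}, apply Lemma~\ref{lemma:independence:binary}, discard the disjunct $\mb V \perp_F X$ by connectedness, and then combine the two disconnections. The only differences are cosmetic: you prove the contrapositive in the opposite direction (equivalent by the $X\leftrightarrow Y$ symmetry of the statement), and your final composition step, which you justify from unique irreducible factorization, is exactly item (1) of Lemma~\ref{lemma:disconnect:balls}, which the paper cites instead.
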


\begin{proof} We prove the counterpositive: if $X$ does not migrate
  w.r.t. $Y, \mb U, \mb V$, then $Y$ does not migrate w.r.t.
  $X, \mb U, \mb V$.  Since $X$ disconnects $\mb U, \mb V$, then, by
  Lemma~\ref{lemma:migration:independence}, we have
  $\mb U\perp_F \mb V | X$.  If $X$ does not migrate w.r.t.
  $Y, \mb U, \mb V$, then $Y$ disconnects either $\mb U X, \mb V$ or
  $\mb U, \mb V X$.  Assuming the former, we have
  $\mb UX \perp_F \mb V | Y$, Lemma~\ref{lemma:independence:binary}
  implies that either $(\mb V \perp_F Y)$ or $(\mb U \perp_F Y | X)$
  holds.  The first is not possible because $F$ is connected, hence we
  have $(\mb U \perp_F \mb Y | X)$.  Then we also have
  $(\mb U \perp_F \mb V Y | X)$, proving that $Y$ does not migrate.
\end{proof}

\section{Proof of Theorem~\ref{th:dichotomy:main:lemma} (2)}

In this section we prove Theorem~\ref{th:dichotomy:main:lemma} (2): if
$Q$ is a bipartite, unsafe query of type II of length $\geq 5$, then
$\#PP2CNF \leq^P \gmc_{\text{bi}}(Q)$.  For type I queries we proved
that all probabilities in a block can be set to 1/2, and therefore we
obtained a symmetry (more precisely, we obtained $y_{01}=y_{10}$),
allowing us to reduce from \#P2CNF.  For type II queries we need to
use all three probability values $0,1/2,1$ and may not have
symmetries.  This makes the proof a bit more complicated, and our
reduction will be from \#PP2CNF rather than \#P2CNF.

Recall the definition of a bipartite query $Q$ in
Def.~\ref{def:types:of:clauses}.  In this section $Q$ is of type
II-II, and here we simply call it of type II.  Recall the definition
of a left-rigth path from Def.~\ref{def:unsafeQueries}: it is a
sequence of clauses $C_0, C_1, \ldots, C_k$ such that $C_0$ is a left
clause, $C_k$ is a right clause, and every consecutive clauses share a
common relational symbol.  We will use repeatedly this simple fact:

\begin{lemma} \label{lemma:everything:on:path}
  Fix a left-to-right path $C_0, \ldots, C_k$.  If $Q$ is a final
  query, then every symbol occurring in $Q$ must also occur in the
  left-to-right path.
\end{lemma}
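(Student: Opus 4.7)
The plan is to argue by contradiction: suppose some relational symbol $S$ occurring in $Q$ does not appear in any of the clauses $C_0, C_1, \ldots, C_k$. I will show that one of the simplifications $Q[S:=1]$ or $Q[S:=0]$ remains unsafe, contradicting the assumption that $Q$ is final.

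The key observation is that the rewriting $Q[S:=1]$ merely deletes every clause of $Q$ that mentions $S$ (since $S:=\mathtt{true}$ turns such a clause into $\mathtt{true}$ and so makes it disappear), while leaving every other clause of $Q$ syntactically unchanged. Since, by assumption, none of the clauses $C_0, \ldots, C_k$ contain $S$, all of them survive in $Q[S:=1]$, and consecutive clauses in this sequence still share the same relational symbols as before. Hence $C_0, \ldots, C_k$ is still a left-to-right path in $Q[S:=1]$: $C_0$ is still a left clause, $C_k$ is still a right clause (the types I/II of these clauses do not depend on $S$, since $S$ does not occur in them), and the consecutive-symbol condition is preserved. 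By Definition~\ref{def:unsafeQueries}, this makes $Q[S:=1]$ unsafe.

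This directly contradicts Definition~\ref{def:finalQueries}, which requires that $Q[S:=1]$ be safe for every symbol $S$ of a final $Q$. Therefore no such ``off-path'' symbol $S$ can exist, and every relational symbol of $Q$ must occur in the left-to-right path $C_0, \ldots, C_k$.

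I do not foresee any genuine obstacle here; the only minor care needed is to verify that the deletion of the $S$-clauses from $Q$ cannot accidentally destroy the path (it cannot, since the path's clauses do not mention $S$) and cannot change the left/right type of $C_0$ or $C_k$ (it cannot, for the same reason, as the type of a clause is determined by its own atoms). The argument uses $S:=1$ because deletion is the cleanest operation; the alternative argument using $S:=0$ would work identically, since replacing $S$ by $\mathtt{false}$ in clauses that do not mention $S$ is a no-op, so the path again survives unchanged in $Q[S:=0]$.
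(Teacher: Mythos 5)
Your proof is correct and is essentially identical to the one given in the paper: both argue that an off-path symbol $S$ would let $Q[S:=1]$ retain the entire left-to-right path $C_0,\ldots,C_k$ unchanged, making it unsafe and contradicting finality. The extra remarks about $S:=0$ and about redundancy not being introduced are fine but add nothing beyond the paper's one-line argument.
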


Indeed, if a symbol $S$ does not occur on the path, then the query
$Q[S:=1]$ still contains the path $C_0, \ldots, C_k$, since none of
these clauses contained $S$, and therefore $Q[S:=1]$ is an unsafe
query, contradicting the assumption that it is final.

\subsection{The Coloring Count Problem}
\label{sec:CCP}

For Type I queries, our reduction was not directly from \#P2CNF, but
from a more general problem, asking for all signature counts.  Here,
too, we need to consider a more general problem, which we define
formally, and call it the Coloring Count Problem.

Fix two numbers $m \geq 2, n \geq 2$.  For every bipartite graph
$G = (U, V, E \subseteq U \times V)$, a {\em coloring} is a pair of
functions $\sigma : U \rightarrow [m]$ and $\tau : V \rightarrow [n]$.
The coloring associates a color to each node.  Let
$M \defeq \max(|U|,|V|,|E|)$.  Given a coloring $\sigma, \tau$, we
denote by $k_{\alpha \beta}(\sigma, \tau)$ the number of edges whose
endpoints are colored with $\alpha$ and $\beta$ respectively; denote
by $k_{\alpha,\hat 1}(\sigma,\tau)$ the number of nodes $u$ colored
$\alpha$, and by $k_{\hat 1,\beta}(\sigma,\tau)$ the number of nodes
colored $\beta$.  (The choice for the notation $\hat 1$ will be come
clear in the next section.)  All these numbers are $\leq M$.  Taking
together, these numbers form the {\em signature} of a coloring
$\sigma, \tau$, which is a mapping
$\mb{k}(\sigma, \tau) : ([m] \cup \set{\hat 1}) \times ([n] \cup
\set{\hat 1}) \rightarrow \set{0,\ldots,M}$ where
$k_{\hat 1,\hat 1}(\sigma, \tau)\eqdef 0$; equivalently, the signature
is a vector with $(m+1)(n+1)$ dimensions and with values in $[0,M]$,
i.e.  $\in \set{0,\ldots,M}^{(m+1)(n+1)}$.  Formally:
\begin{align*}
\forall \alpha\in [m], \forall \beta \in [n]: 
&& k_{\alpha \beta}(\sigma, \tau) \defeq & |\setof{(u,v) \in E}{\sigma(u) = \alpha, \tau(v) = \beta}|\\
&& k_{\alpha \hat 1}(\sigma, \tau) \defeq & |\setof{u \in U}{\sigma(u) = \alpha}|\\
&& k_{\hat 1,\beta}(\sigma, \tau) \defeq & |\setof{v \in V}{\tau(v) = \beta}|\\
\mbox{{\em Signature} of }  \sigma, \tau:
&& \mb{k}(\sigma, \tau)  \defeq & (k_{\alpha, \beta}(\sigma,\tau))_{\alpha \in [m]\cup\set{\hat 1}, \beta\in [n]\cup\set{\hat 1}}\\
\end{align*}

Conversely, given such a vector
$\mb{k} :
([m]\cup\set{\hat 1})\times([n]\cup\set{\hat 1})\rightarrow\set{0,\ldots,M}$,
its {\em coloring count}, $\#\mb{k}$, is the number of colorings
$\sigma, \tau$ with signature $\mb{k}$:
\begin{align*}
\forall \mb{k} \in \set{0,\ldots,M}^{(m+1)(n+1)}:
&& \#\mb{k} \defeq & |\setof{(\sigma, \tau)}{\mb{k}(\sigma,\tau)=\mb{k}}|
\end{align*}

\begin{definition}[Coloring Counting Problem] \label{def:CCP}
	We assume
	$m \geq 2, n \geq 2$ to be fixed.  The Coloring Counting Problem,
	$\ccp(m,n)$, is the following: given a bipartite graph
	$(U,V,E\subseteq U \times V)$, compute all coloring counts:
	$\setof{\#\mb{k}}{\mb{k} : ([m]\cup \set{\hat 1})\times ([n]\cup\set{\hat 1})
		\rightarrow \set{0,\ldots,M}}$, where $M=\max(|U|,|V|,|E|)$
\end{definition}

Notice that the number of coloring counts is $(M+1)^{(m+1)(n+1)}$,
hence, under the assumption that $m$ and $n$ are constant (in other words, $m, n= O(1)$), then the size of the output is polynomial in the size of the graph.

\begin{theorem} \label{th:ccp:hard}
	For all $m, n \geq 2$, $\ccp(m,n)$ is \#P-hard.
\end{theorem}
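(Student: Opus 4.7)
The plan is to reduce \#PP2CNF directly to $\ccp(m,n)$ for every $m,n\geq 2$. Recall that \#PP2CNF takes a bipartite graph $G=(U,V,E)$, interprets it as the formula $\Phi=\bigwedge_{(u,v)\in E}(X_u\vee Y_v)$, and asks for the number of satisfying assignments. The key observation is that every truth assignment $(\theta_U,\theta_V)$ of the variables of $\Phi$ is naturally a $2$-coloring of the two sides of $G$, and the satisfying assignments are exactly those $2$-colorings in which no edge has both endpoints coloured ``false''. Thus \#PP2CNF is essentially one specific entry in the table of coloring counts of the bipartite graph $G$ under $2$-colorings.

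\textbf{Step 1: $\ccp(2,2)$ captures \#PP2CNF.} Assume first that $m=n=2$. Identify the Boolean value \texttt{false} with colour $1$ and \texttt{true} with colour $2$ on both sides of $G$. Under this identification, a $2$-coloring $(\sigma,\tau)$ of $(U,V)$ satisfies $\Phi$ iff it contains no edge both of whose endpoints are coloured $1$, that is, iff $k_{11}(\sigma,\tau)=0$. Since a call to the $\ccp(2,2)$ oracle on the graph $G$ returns the entire table of counts $\setof{\#\mb k}{\mb k}$, we can recover
\[
  \#\Phi \;=\; \sum_{\mb k\,:\,k_{11}=0}\#\mb k
\]
in polynomial time. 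This establishes \#PP2CNF $\leq^P \ccp(2,2)$.

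\textbf{Step 2: Extension to arbitrary $m,n\geq 2$.} For general $m,n\geq 2$, the reduction is the same, except we must restrict attention to those colorings of $G$ that use only colours $\{1,2\}$ on the $U$-side and only colours $\{1,2\}$ on the $V$-side. A signature $\mb k$ arises from such a restricted coloring iff $k_{\alpha,\hat 1}=0$ for every $\alpha\in\{3,\ldots,m\}$ and $k_{\hat 1,\beta}=0$ for every $\beta\in\{3,\ldots,n\}$ (which automatically forces all edge-counts involving those colours to vanish). Thus, after calling the $\ccp(m,n)$ oracle on $G$, we recover
\[
  \#\Phi \;=\; \sum_{\substack{\mb k\,:\,k_{11}=0\\ k_{\alpha,\hat1}=0,\,\alpha\geq 3\\ k_{\hat1,\beta}=0,\,\beta\geq 3}}\#\mb k,
\]
a polynomial-time computation given the oracle's output. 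Combined with Step~1, this gives \#PP2CNF $\leq^P \ccp(m,n)$, and since \#PP2CNF is \#P-hard~\cite{DBLP:journals/siamcomp/ProvanB83}, so is $\ccp(m,n)$.

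\textbf{Main obstacle.} There is no real technical obstacle at this stage; the reduction is essentially a bookkeeping observation about which signatures correspond to $2$-colorings of the graph using the designated colour pair. The point of isolating $\ccp(m,n)$ as a separate problem is not that its hardness is itself difficult, but that its richer output (the full signature table, not just one entry) will be needed downstream, when we reduce $\ccp(m,n)$ to $\gmc_{\text{bi}}(Q)$ for a type-II final query~$Q$; the real work lies in that subsequent reduction, where the matrix inversion machinery from Section~\ref{sec:BlockTID} must be adapted to the asymmetric setting.
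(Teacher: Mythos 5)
Your proposal is correct and follows essentially the same reduction as the paper: identify truth assignments with $2$-colorings, observe that satisfaction of $\Phi$ corresponds to $k_{11}=0$, and sum the oracle's counts over the signatures of colorings restricted to the two designated colors. (Your use of the node-count conditions $k_{\alpha,\hat 1}=0$, $k_{\hat 1,\beta}=0$ to characterize the restricted colorings is, if anything, slightly more careful than the paper's edge-count-only condition, since it also handles isolated vertices.)
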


\begin{proof}
	Assuming we have an oracle for $\ccp(m,n)$, we describe a PTIME
	algorithm for computing \#PP2CNF.  Let
	$\Phi = \bigwedge_{(u,v) \in E}(X_u \vee Y_v)$ be a PP2CNF, where
	$E \subseteq U \times V$, and let $M = |E|$.  Use the oracle to
	compute all coloring counts $\#\mb{k}$, for all vectors $\mb{k}$.
	Call a coloring $\sigma, \tau$ {\em valid} if it uses only two
	colors, i.e. $\sigma(U) \subseteq \set{1,2}$ and
	$\tau(V) \subseteq \set{1,2}$, and call a signature $\mb{k}$ {\em
		valid} if $k_{\alpha\beta}=0$ whenever $\alpha >2$ or $\beta >2$.
	A valid coloring corresponds to a truth assignment, e.g. by
	associating \texttt{false} to color 1 and \texttt{true} to color 2.
	The assignment satisfies $\Phi$ iff its signature satisfies
	$k_{11}=0$.  Thus, $\#\Phi$ is the sum of $\#\mb{k}$ over all valid,
	satisfying signatures $\mb{k}$. 
\end{proof}

For example, assume we have 3 colors for the left, $a,b,c$ and 3
colors for the right, $u,v,w$.  Then one can think of a vector $\mb k$
as a $4 \times 4$ matrix:
\begin{align*}
  \mb k = &
\left[
 \begin{array}{cccc}
  k_{au} & k_{av} & k_{aw} & k_{a\hat 1}\\
  k_{bu} & k_{bv} & k_{bw} & k_{b\hat 1}\\
  k_{cu} & k_{cv} & k_{cw} & k_{c\hat 1}\\
  k_{\hat 1u} & k_{\hat 1v} & k_{\hat 1w} & k_{\hat 1\hat 1}\\
 \end{array}
\right]
\end{align*}
Given a bipartite graph with $M=|E|$ edges (we assume no isolated
vertices), there are $M^{16}$ matrices $\mb k$.  An oracle for the
$\ccp(3,3)$ will compute all $M^{16}$ numbers $\#\mb k$.  To answer
the \#PP2CNF formula, we consider only colorings that use the colors
$a,b$ on the left, and $u,v$ on the right, e.g.
$k_{cu}=k_{cv}=\ldots=0$.  Using the counts $\#\mb k$ for these
matrices, we can obtain \#PP2CNF.

In the rest of this appendix we prove:

\begin{theorem} \label{th:ccp:type2}
  if $Q$ is a bipartite, unsafe query of type II-II of length
  $\geq 5$, then there exists numbers $m,n$ that depend only on $Q$
  such that $\ccp(m,n) \leq^P \gmc_{\text{bi}}(Q)$
\end{theorem}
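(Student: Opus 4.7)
The plan is to generalize the Type-I reduction of Section~\ref{sec:BlockTID} to Type-II queries, replacing the binary endpoint ``colors'' $\set{0,1}$ with $m$-valued colors on the left and $n$-valued colors on the right, and reducing from $\ccp(m,n)$ (which is \#P-hard by Theorem~\ref{th:ccp:hard}). The fundamental difference with Type-I is that a Type-II query has no unary symbols $R$ or $T$, so there is no single atom whose probability can be toggled to encode a color on an endpoint. Instead, for each Type-II left clause $\forall x \bigvee_\ell \forall y S_{J_\ell}(x,y)$, I would encode the color of a left endpoint $u$ by choosing which disjunct is witnessed at $u$, i.e.\ by setting all tuples $\setof{S_{J_\ell}(u,v)}{v\in V}$ to probability $1$ for the chosen $\ell$; colors on right endpoints are defined symmetrically. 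Taking $m,n$ equal to the resulting numbers of left and right color patterns (both depending only on $Q$) sets up the reduction from $\ccp(m,n)$.

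First I would design a zig-zag block $B_{\mb p}(u,v)$ analogous to Example~\ref{ex:zig:zag} but tailored to the Type-II vocabulary, so that its lineage restricted to a coloring $(\sigma(u),\tau(v))=(\alpha,\beta)$ yields a probability of the form $y_{\alpha\beta}(\mb p) = \prod_j (a_{\alpha\beta}\lambda_1^{p_j} + b_{\alpha\beta}\lambda_2^{p_j})$, matching the hypotheses of Theorem~\ref{th:non-singular}. I would also attach ``pendant'' blocks analogous to the $B(u,u')$ construction of Section~\ref{sec:BlockTID}, producing factors $y_\alpha$ and $y_\beta$ for the single-node signature components $k_{\alpha,\hat 1}$ and $k_{\hat 1,\beta}$. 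Given a graph $(U,V,E)$ instance of $\ccp(m,n)$, the resulting block-disjoint TID $\Delta$ would then decompose as in Theorem~\ref{th:disjointBlocks}:
\begin{align*}
\Pr_\Delta(Q) = \sum_{\mb k} \#\mb k \cdot \prod_{\alpha,\beta} y_{\alpha\beta}^{k_{\alpha\beta}} \cdot \prod_{\alpha} y_\alpha^{k_{\alpha,\hat 1}} \cdot \prod_{\beta} y_\beta^{k_{\hat 1,\beta}}.
\end{align*}
Varying the block parameters $\mb p$ over polynomially many tuples gives a linear system whose coefficient matrix is a Kronecker product of an edge-matrix and two node-matrices; applying Theorem~\ref{th:non-singular} to each factor allows us to recover all coloring counts $\#\mb k$ in polynomial time, thereby solving $\ccp(m,n)$.

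The hard part will be proving the Type-II analog of Theorem~\ref{thm:designSection}, specifically that the $m \times n$ ``small matrix'' $[z_{\alpha\beta}(1)]_{\alpha\in[m],\beta\in[n]}$ has rank large enough to distinguish all color pairs. For Type-I, Lemma~\ref{lemma:determinant:connected} converted logical connectedness into a non-zero $2 \times 2$ determinant, and Corollary~\ref{corr:fADet} then used finality of $Q$ to keep this determinant non-zero at every interior assignment. For Type-II we need a multi-way generalization of Lemma~\ref{lemma:determinant:connected}: if $Y(u,v)$ does not factor into a piece depending only on the left-color atoms of $u$ and a piece depending only on the right-color atoms of $v$, then the associated $m \times n$ matrix has full row and column rank after substituting all non-color atoms with probability $c$. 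The length-$\geq 5$ hypothesis enters precisely here: a minimum left-to-right path of length $5$ ensures that the zig-zag through a single block traverses enough non-trivial clauses that finality can rule out the degenerate factorizations which could otherwise arise from short paths and collapse distinct color patterns. Once the small matrix has the required rank, the power-of-matrix computation of Lemma~\ref{lem:MatrixPower} carries over to produce the eigenvalue structure $(a_i\lambda_1^p + b_i\lambda_2^p)$ needed to invoke Theorem~\ref{th:non-singular}, closing the reduction.
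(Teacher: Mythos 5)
Your overall architecture (disjoint blocks, two-eigenvalue form $a_{\alpha\beta}\lambda_1^{p}+b_{\alpha\beta}\lambda_2^{p}$, invoking Theorem~\ref{th:non-singular}) matches the paper's, but there is a central missing idea: you have no mechanism that actually produces the sum over colorings. In the Type-I case the sum $\sum_\theta$ comes from Shannon expansion on the shared unary atoms $R(u)$; a Type-II query has no unary atoms, and your substitute --- ``encode the color of $u$ by setting all tuples $S_{J_\ell}(u,\cdot)$ to probability $1$ for the chosen $\ell$'' --- fixes a single coloring inside the reduction rather than decomposing $\Pr_\Delta(Q)$ as a sum over all colorings. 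Your displayed identity $\Pr_\Delta(Q)=\sum_{\mb k}\#\mb k\prod y_{\alpha\beta}^{k_{\alpha\beta}}\cdots$ is asserted to follow ``as in Theorem~\ref{th:disjointBlocks}'', but that proof conditions on $R(u)$ and does not transfer. The paper's replacement is a M\"obius/inclusion--exclusion expansion of the left clause $\forall x\bigvee_i\forall y\,G_i(x,y)$ over the lattice of logically inequivalent conjunctions $G_\alpha\wedge C$: the ``colors'' are the elements of the lattice supports $L_0(\mb G)$, $L_0(\mb H)$ (so $m,n$ are $|L_0(\mb G)|,|L_0(\mb H)|$, not the number of disjuncts), the sum acquires coefficients $\prod_u\mu(\sigma(u))\prod_v\mu(\tau(v))$ that must be absorbed into the unknowns, and one must separately prove the map $(\alpha,\beta)\mapsto Q_{\alpha\beta}$ is invertible so that distinct colors give distinct coefficients. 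None of this is recoverable from ``choose a witnessed disjunct'', since the events ``disjunct $\ell$ holds at $u$'' are neither disjoint nor exhaustive.

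Two further steps in your plan would fail as stated. First, your ``multi-way generalization of Lemma~\ref{lemma:determinant:connected}'' presupposes that the residual queries $Q_{\alpha\beta}$ are connected for a \emph{final} Type-II query; the paper's Example~\ref{ex:running:query:type:2} exhibits a final Type-II query for which every $Q_{\alpha\beta}$ is disconnected, so one must first reduce further to \emph{forbidden} queries (Theorem~\ref{th:forbidden}), whose ubiquitous symbols guarantee connectedness. Second, the hypothesis of Theorem~\ref{th:non-singular} is not that the $m\times n$ matrix $[z_{\alpha\beta}(1)]$ has large rank (it has rank $2$ by the transfer-matrix structure you yourself posit) but the pairwise condition $a_{\alpha_1\beta_1}b_{\alpha_2\beta_2}\neq a_{\alpha_2\beta_2}b_{\alpha_1\beta_1}$ for all distinct pairs; in the paper this cannot be obtained from the zig-zag alone and requires attaching prefix/suffix blocks with many parallel branches, a product-combination argument (Theorem~\ref{th:products}), and a treatment of ``migrating'' variables that forces consistent probability assignments across isomorphic links. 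Your proposal does not anticipate any of these obstructions, and the length hypothesis is used precisely to place an articulation symbol far enough from both endpoints to control them, not merely to ``traverse enough non-trivial clauses''.
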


Theorem~\ref{th:ccp:hard} and Theorem~\ref{th:ccp:type2} prove
Theorem~\ref{th:dichotomy:main:lemma} (2).  In the rest of this
appendix we prove Theorem~\ref{th:ccp:type2}.

\subsection{M\"obius Inversion formula for Type II Queries}

The principle behind the formula for Type I queries was the Shannon
expansion formula: $\pr(F) = \pr(F[X:=0])(1-p) + \pr(F[X:=1])p$, where
$X$ is a boolean variable and $p= \Pr(X)$.  Essentially, we applied
this formula repeatedly, once for each Boolean variable associated to
a unary atom $R(u)$.  For a Type II query, we no longer have unary
atoms.  Instead, we will use a different principle: the
inclusion/exclusion formula,
$\pr(F \vee G) = \pr(F) + \pr(G) - \pr(F \wedge G)$.  We start by
expressing $Q$ as a union, so we can enable the inclusion exclusion
formula.

Recall the definition of a bipartite query $Q$ in
Def.~\ref{def:types:of:clauses}.   $Q$ is of type II-II, and here we
simply call it of type II.  We will rewrite $Q$ as follows:
\begin{align}
Q = & Q_{\text{left}} \wedge Q_{\text{middle}} \wedge Q_{\text{right}} \label{eq:q:left:middle:right} \\
Q_{\text{left}} = & \forall x (\forall y G_1(x,y) \vee \ldots \vee \forall y G_m(x,y)) \defeq \forall x G(x) \label{eq:q:left}\\
Q_{\text{middle}} = & \forall x \forall y C(x,y)\label{eq:q:middle} \\
Q_{\text{right}} = & \forall y(\forall x H_1(x,y) \vee \ldots \vee \forall x H_n(x,y)) \defeq \forall y H(y) \label{eq:q:right}
\end{align}
where $m \geq 2, n \geq 2$, and $G_i(x,y)$, $C(x,y)$, $H_j(x,y)$ are
CNF formulas, i.e. conjunctions of one or more clauses.  Indeed, each
type II query can be written this way, because $Q_{\text{left}}$ is a
conjunction of left clauses, each of the form
$\forall x \left(\bigvee_{\ell=1}^m \forall y S_{J_\ell}(x,y)\right)$,
hence we obtain \eqref{eq:q:left} by distributing $\wedge$ over
$\vee$, in other words converting from CNF to DNF.  We apply similar
reason to $Q_{\text{right}}$ and obtain \eqref{eq:q:right}

\begin{example}
	\label{ex:QLeft}
 We illustrate this transformation on a
  $Q_{\text{left}}$ with two left clauses:
  \begin{align*}
    Q_{\text{left}} = & \underbrace{\forall x(\forall y (S_1(x,y) \vee S_2(x,y)) \vee (\forall y (S_1(x,y) \vee S_3(x,y))))}_{\mbox{left clause 1}}
    \wedge \underbrace{\forall x((\forall y (S_1(x,y))) \vee \forall y (S_2(x,y) \vee S_3(x,y)))}_{\mbox{left clause 2}}\\
    =  & \forall x(\forall y(\underbrace{S_1(x,y)}_{\defeq G_1(x,y)}) \vee  \forall y(\underbrace{(S_1(x,y) \vee S_2(x,y))\wedge(S_2(x,y) {\vee} S_3(x,y))}_{\defeq G_2(x,y)})  \vee \forall y(\underbrace{(S_1(x,y) \vee S_3(x,y))\wedge(S_2(x,y) \vee S_3(x,y))}_{\defeq G_3(x,y)})) \\
    = & \forall x( \forall y G_1(x,y) \vee \forall y G_2(x,y) \vee \forall y G_3(x,y))  
  \end{align*}
\end{example}

Thus, we can write $Q$ as:
\begin{align}
Q = (\forall x \bigvee_i \forall y G_i) \wedge (\forall x  \forall y C) \wedge (\forall y \bigvee_j \forall x H_j)\label{eq:q:gi:hj}
\end{align}
which gets us closer to using the inclusion/exclusion formula.  Here,
each $G_i(x,y)$ and every $H_j(x,y)$ is a CNF formula, i.e. a
conjunction of clauses.  At this point we observe that some of the
terms in the inclusion/exclusion formula can be logically equivalent.
For example, given three Boolean formulas $F_1, F_2, F_3$, the
expansion of $\pr(F_1 \vee F_2 \vee F_3)$ has 7 terms, but some may be
logically equivalent, e.g. we may have
$F_1 \wedge F_2 \equiv F_1 \wedge F_2 \wedge F_3$.  In our proof it is
important to ensure that all terms are logically in-equivalent (we
make this precise in Lemma~\ref{lemma:simple:lattices:1} below), and
for that reason we consider next the lattice consisting of all
logically inequivalent conjunctions:

\begin{definition}\label{def:lattice}
  Let $\mb{F} = \set{F_1, \ldots, F_m}$ be a set of formulas; For each
  set $\alpha \subseteq [m]$ denote by
  $F_\alpha \defeq \bigwedge_{i \in \alpha} F_i$.  The {\em closure}
  of $\alpha$ is:
  $\bar \alpha \defeq \setof{i}{F_\alpha \Rightarrow F_i}$.  A set
  $\alpha$ is closed if $\alpha = \bar \alpha$.  The {\em lattice
    associated with $\mb{F}$} denoted $\hat L(\mb{F})$ consists of all
  closed sets ordered by reverse set inclusion
  $\bar \alpha \leq \bar \beta$ if $\bar \beta \subseteq \bar \alpha$.
  We denote by $\hat 1$ the top element of $\hat L(\mb{F})$ (that is,
  $\hat 1 = \emptyset$), and define
  $F_{\hat 1} \defeq F_1 \vee \ldots \vee F_m$.  The {\em M\"obius} function
  $\mu :\hat L(\mb{F}) \rightarrow \Z$ is defined as $\mu(\hat 1)=1$,
  $\mu(\alpha) = - \sum_{\beta > \alpha}\mu(\beta)$.  The {\em support
    of $\mb{F}$} is
  $L(\mb{F}) \defeq \setof{\alpha \in \hat L(\mb{F})}{\mu(\alpha)\neq
    0}$.
\end{definition}

Intuitively, the lattice is obtained as follows. Compute all $2^m$
conjunctions $F_\alpha$, then group them into equivalence classes
based on logical equivalence.  The lattice consists of all equivalence
classes.  In each class there exists a maximal $\alpha$ such that
$F_\alpha$ is in that class, and this $\alpha$ is closed; we use it as
representative for the class.  By convention, the top element of the
lattice is defined as $F_{\hat 1} = F_1 \vee \cdots \vee F_m$; this is
consistent to what we need in the rest of this section, and also
standard in the context of the M\"obius function,
see~\cite[pp.117]{10.5555/2124415}.  The {\em support} $L(\mb{F})$ is
obtained by removing all elements $\alpha$ where $\mu(\alpha)=0$.

The M\"obius inversion formula generalizes inclusion/exclusion, by
combining equivalent terms.  More precisely, if
$\mb{Y} = \set{Y_1, \ldots, Y_m}$ is a set of $m$ formulas, then,
M\"obius' inversion formula is:
\begin{align*}
\Pr(Y_1 \vee \ldots \vee Y_m) = & -\sum_{\alpha < \hat 1}\mu(\alpha)\Pr(Y_\alpha)
\end{align*}
Obviously it suffices to sum only over the support, less $\hat 1$,
i.e. $\alpha \in L(\mb F) - \set{\hat 1}$, a fact that we will exploit
later.  As before, we write $y$ for the arithmetization of $Y$,
i.e. the probability above expressed in terms of the probabilities of
the Boolean variables, $z_1, z_2, \ldots$ Then:
\begin{align*}
\pr(y_1 \vee \ldots \vee y_m) = & - \sum_{\alpha < \hat 1}\mu(\alpha)\cdot y_\alpha
\end{align*}

\begin{example}
	Consider the following set $\mb{Y}=\set{Y_1, Y_2, Y_3}$:
	\begin{align*}
	Y_1 = & Z_1 Z_2 & Y_2 = & Z_1 Z_3 & Y_3 = & Z_2 Z_3
	\end{align*}
	where $Z_1 Z_2$ means $Z_1 \wedge Z_2$.  We notice that
        $Y_{12} = Y_{13} = Y_{23} = Y_{123} = Z_1 Z_2 Z_3$ and
        therefore $\hat L(\mb{Y}) = \set{\emptyset,1,2,3,123}$. The
        M\"obius function is $\mu(\emptyset)=1$,
        $\mu(1)=\mu(2)=\mu(3)=-1$, $\mu(123) = 2$, thus the support is
        $\set{\emptyset,1,2,3,123}$, and we obtain:
	\begin{align*}
	\Pr(Z_1Z_2 \vee Z_1Z_3 \vee Z_2 Z_3) = & \Pr(Y_1) + \Pr(Y_2) + \Pr(Y_3) - 2\Pr(Y_1Y_2Y_3)
	\end{align*}
	For another example, consider $\mb{Y}=\set{Y_1, Y_2, Y_3}$ where:
	\begin{align*}
	Y_1 = & Z_1 Z_2 & Y_2 = & Z_2 Z_3 & Y_3 = & Z_3 Z_4
	\end{align*}
	Then $\hat L(\mb{Y}) = \set{\emptyset,1,2,3,12,23,123}$, and
        $\mu(\emptyset)=1$, $\mu(1)=\mu(2)=\mu(3)=-1$,
        $\mu(12)=\mu(23)=1$, $\mu(123)=0$. The support consists of
        $\emptyset, 1,2,3,12,23$ and thus:
	\begin{align*}
	\Pr(Y_1 \vee Y_2 \vee Y_3) =& \Pr(Y_1) + \Pr(Y_2) + \Pr(Y_3) - \Pr(Y_1Y_2) - \Pr(Y_2Y_3)
	\end{align*}
\end{example}

Given a bipartite query $Q$ of type II, we denote the following sets
of formulas:
\begin{align*}
\mb G \defeq & \set{G_1(x,y) \wedge C(x,y), \ldots, G_m(x,y) \wedge  C(x,y)}
& \mb H \defeq & \set{C(x,y) \wedge H_1(x,y), \ldots, C(x,y) \wedge  H_n(x,y)}
\end{align*}
where $G_1, \ldots, G_m, H_1, \ldots, H_n$ are the CNF formulas that
occur in~\eqref{eq:q:left}, \eqref{eq:q:middle},
and~\eqref{eq:q:right} respectively.  We define two lattices:

\begin{definition}
  Fix a bipartite, unsafe query query $Q$ of type II, as in
  Eq.\eqref{eq:q:left:middle:right}.  The {\em left and right lattice
    supports} of $Q$ are $L(\mb{G})$ and $L(\mb{H})$, where $\mb{G}$
  and $\mb{H}$ are the sets defined above.  The {\em strict} supports
  are $L_0(\mb G) \defeq L(\mb G) - \set{\hat 1}$ and
  $L_0(\mb H) \defeq L(\mb H) - \set{\hat 1}$, and we denote by
  $\bar m = |L_0(\mb{G})|$ and $\bar n = |L_0(\mb{H})|$ their sizes.
  Notice that $3 \leq \bar m \leq 2^m-1$ and
  $3 \leq \bar n \leq 2^n-1$.
\end{definition}

We define the following, where $\alpha  \in L(\mb G)$ and $\beta  \in L(\mb H)$:

\begin{align}
  G_\alpha(x) \defeq & \forall y G_\alpha(x, y)  \label{eq:g:alpha:u} \\
  H_\beta(y) \defeq & \forall x H_\beta(x, y)  \label{eq:h:beta:v}\\
  Q_{\alpha \beta}(x,y) \defeq & G_\alpha(x) \wedge Q \wedge  H_\beta(y)
\label{eq:q:alpha:beta}
\end{align}

We notice that, if $\alpha, \beta \neq \hat 1$, then
$\forall x \forall y Q_{\alpha \beta}(x,y)$ is equivalent to the
following expression (recall that
$Q_{\text{middle}}=\forall x \forall y C(x,y)$).
\begin{align}
  \forall x \forall y Q_{\alpha \beta}(x,y) = & \forall x \forall y (G_\alpha(x,y) \wedge C(x,y) \wedge H_\beta(x,y))
 \label{eq:q:alpha:beta:2}
\end{align}
because, for every $\alpha \neq \hat 1$,
$\forall x G_\alpha(x) \Rightarrow Q_{\texttt{left}}$.  Indeed, each
CNF expression $G_i$ in~\eqref{eq:q:gi:hj} is a conjunction of
subclauses $S_{J_k}$, one from each left clause.  Therefore, for any
left clause $\forall x (\bigvee_k \forall y S_{J_k}(x,y))$ in
$Q_{\text{left}}$, the logical implication
$G_i(x) \Rightarrow \bigvee_k \forall y S_{J_k}(x,y)$ holds, and
therefore,
$\forall x G_\alpha(x) \wedge Q_{\text{left}} \equiv \forall x
G_\alpha(x)$ (since $\alpha \neq \emptyset$).  On the other hand, if
$\alpha = \hat 1$, then:
\begin{align}
  \forall x \forall y Q_{\hat 1 \beta}(x,y) = & Q \wedge \forall y H_\beta(y)
&  \forall x \forall y Q_{\alpha \hat 1}(x,y) = & \forall x G_\alpha(x) \wedge Q
\label{eq:q:alpha:1:2}
\end{align}
because $Q_{\texttt{left}} \Rightarrow \forall x G_{\hat 1}(x)$ and
$Q_{\texttt{right}} \Rightarrow \forall y H_{\hat 1}(y)$.

\begin{example} \label{ex:running:query:type:2} Consider:
  \begin{align*}
    Q = & \forall x (\underbrace{\forall y S_1(x,y)}_{G_1(x,y)} \vee \underbrace{\forall y S_2(x,y)}_{G_2(x,y)})
\wedge \forall x \forall y (S_1(x,y) \vee  S_3(x,y))
\wedge \forall y (\underbrace{\forall x S_3(x,y)}_{H_1(x,y)} \vee \underbrace{\forall x S_4(x,y)}_{H_2(x,y)})
  \end{align*}
Then:
\begin{align*}
  G_1(x) = & \forall y S_1(x,y) & G_2(x) = & \forall y S_2(x,y) & G_{12}(x) = & \forall y (S_1(x,y)\wedge S_2(x,y))& G_{\hat 1}(x) = & \forall y (S_1(x,y) \vee S_2(x,y))\\
  H_1(y) = & \forall x S_3(x,y) & H_2(y) = & \forall x S_4(x,y) & H_{12}(x) = & \forall x (S_3(x,y)\wedge S_4(x,y)) & H_{\hat 1}(y) = & \forall x (S_3(x,y) \vee  S_4(x,y))
\end{align*}
We show now a few examples of $Q_{\alpha \beta}$:
\begin{align*}
  \forall x \forall y Q_{1,1}(x,y) = & \forall x \forall y \left(S_1(x,y) \wedge S_3(x,y)\right) 
& \forall x \forall y Q_{1,2}(x,y) = & \forall x \forall y \left(S_1(x,y) \wedge S_4(x,y)\right) \\
\forall x \forall y Q_{1,12}(x,y) = & \forall x \forall y \left(S_1(x,y) \wedge S_3(x,y) \wedge S_4(x,y)\right)
& \forall x \forall y Q_{2,2}(x,y) = & \forall x \forall y \left(S_2(x,y)\wedge (S_1(x,y) \vee S_3(x,y)) \wedge S_4(x,y)\right)
\end{align*}
Notice that the middle clause became redudant in all queries except
$Q_{2,2}$.
\end{example}

\subsection{Forbidden Queries of Type II}

In our hardness proof we need all queries $Q_{\alpha \beta}$ to be
connected: in example~\ref{ex:running:query:type:2} {\em none} of
these queries is connected.  To ensure this property, we need to
restrict our queries to a strict subclass of final queries of Type II,
which are called {\em forbidden queries} in
~\cite{DBLP:journals/jacm/DalviS12}.  Every final query of type II can
be simplified to a forbidden query, hence it suffices to prove
hardness for forbidden queries.  In this section we give the formal
definition of forbidden queries, prove the connectedness property,
then prove that every final query of type II can be simplified to a
forbidden query.  The key results in this section are based
on~\cite{DBLP:journals/jacm/DalviS12}.  Here we expand those results
and give a complete characterization of the forbidden queries.  Our
setting here is slightly simpler than that
in~\cite{DBLP:journals/jacm/DalviS12}, because we assume that the
bipartite query $Q$ is long, i.e. the shortest left-right path has
length $k \geq 2$.

Recall that a left clause is a union of subclauses
$C(x) = \forall y S_{J_1}(x,y) \vee \forall y S_{J_2}(x,y) \vee
\cdots$ A binary symbol $U(x,y)$ is {\em $C$-ubiquitous} if it occurs
in all its subclauses $S_{J_1}, S_{J_2}, \ldots$ A binary symbol
$U(x,y)$ is {\em left ubiquitous} if it is $C$-ubiquitous for all left
clauses $C$.  We define similarly right ubiquitous symbols, and denote
them with $V$.

\begin{definition} \label{def:forbidden}
  Let $Q$ be a query of type II.  $Q$ is called a {\em forbidden
    query} if it is a final query, and, for every left-right path
  $C_0, C_1, \ldots, C_k$ of minimal length, every symbol in $C_0$ is
  either ubiquitous, or occurs in $C_1$; similarly, every symbol in
  $C_k$ is either ubiquitous or occurs in $C_{k-1}$.
\end{definition}

The following was shown in~\cite{DBLP:journals/jacm/DalviS12}, and we
included its proof here, later in this section:

\begin{theorem} \label{th:forbidden} Let $Q$ be a final query of type
  II of length $k$, where $k\geq 2$.  Then there exists a query $Q'$
  such that $\gmc_{\text{bi}}(Q') \leq^P_m \gmc_{\text{bi}}(Q)$ and
  $Q'$ is either of type I, or $Q'$ is a forbidden query of type II of
  length $\geq k$.
\end{theorem}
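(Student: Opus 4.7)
I would proceed by strong induction on the number of distinct relation symbols appearing in $Q$. The base case is immediate: if $Q$ is already forbidden, set $Q' \defeq Q$. Otherwise, by negating Definition~\ref{def:forbidden}, there exists a minimal left-right path $C_0, C_1, \ldots, C_k$ in $Q$ and a symbol $S$ witnessing one of two violations (by left-right symmetry I focus on the left end): $S$ occurs in $C_0$, $S$ is not $C_0$-ubiquitous, and $S$ does not occur in $C_1$. The goal is to exhibit a rewriting that eliminates this bad occurrence of $S$, producing a query $Q'$ with strictly fewer symbols, and to verify the three required properties: (a) $\gmc_{\text{bi}}(Q') \leq^P_m \gmc_{\text{bi}}(Q)$, (b) $Q'$ is unsafe, and (c) the length of $Q'$ is $\geq k$ (with $Q'$ either still of type II, in which case induction applies, or else of type I, in which case we stop).

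First I would try the two obvious moves $Q[S:=0]$ and $Q[S:=1]$ from Lemma~\ref{lemma:supersimple}; since $Q$ is final, both produce safe queries, so these are ruled out. Instead, I would invoke the richer rewrite rules from Definition~4.13 of~\cite{DBLP:journals/jacm/DalviS12}, which include partial substitutions over the domain (fixing the value of $S(a,\cdot)$ or $S(\cdot,a)$ uniformly on a fresh constant $a$, and marginalizing). Because $S$ is not $C_0$-ubiquitous, the subclause decomposition of $C_0$ (as in Example~\ref{ex:QLeft}) contains at least one disjunct $\forall y S_{J_\ell}(x,y)$ that omits $S$; this gives a ``handle'' by which a partial rewriting can absorb $S$'s contribution to $C_0$ and eliminate the symbol globally, producing a reduced query $Q'$ with strictly fewer symbols than $Q$. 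Property (a) then follows from Lemma~4.17 of~\cite{DBLP:journals/jacm/DalviS12}, which, as noted in the proof of Theorem~\ref{th:proof:step:1}, continues to hold when all probabilities are restricted to $\set{0,1/2,1}$.

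For property (b) and the length part of (c), I would argue that the original path $C_0,C_1,\ldots,C_k$ persists in $Q'$: clauses $C_1,\ldots,C_k$ are untouched because $S$ does not appear in $C_1$ (and by Lemma~\ref{lemma:everything:on:path} applied to the symbol being removed, one controls where $S$ may appear elsewhere), and the residue of $C_0$ retains a subclause connecting into $C_1$. Thus $Q'$ remains unsafe with length at least $k$. For the length \emph{non-decrease}, I would verify that no shorter left-right path can appear in $Q'$: any such path would have to use a clause created by the rewriting, but the rewriting only shrinks left (or right) clauses, and the minimality of $k$ in $Q$ combined with the connectivity structure of $Q$ (together with the reflection symmetry between left and right) rules this out. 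If the rewriting happens to create a unary atom (by eliminating one variable from a remaining subclause), then $Q'$ is of type I and we are done; otherwise $Q'$ is still of type II with fewer symbols, and the inductive hypothesis finishes the argument.

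\textbf{The main obstacle} will be controlling property (c) — that the rewrite does not accidentally shorten the length below $k$ or trigger a cascade of new redundancies that collapse an otherwise long path. In particular, one must check that after minimizing clauses and removing redundant ones (as we always do), no previously long path becomes short. This is exactly the delicate part of the proofs of Theorems~6.3 and~7.3 in~\cite{DBLP:journals/jacm/DalviS12}, and I would handle it by a careful case split on whether $S$ is left-ubiquitous in other left clauses, whether the ``handle'' subclause of $C_0$ is unique or not, and whether the rewriting introduces unary atoms; each sub-case reduces, via one of the rules of Definition~4.13 of~\cite{DBLP:journals/jacm/DalviS12}, to a strictly simpler unsafe query satisfying (a)--(c), completing the induction.
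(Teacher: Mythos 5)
Your overall strategy---induct on the number of binary symbols, locate a symbol $S$ of $C_0$ that is neither ubiquitous nor shared with $C_1$, and eliminate it by a partial substitution on fresh domain constants---is exactly the skeleton of the paper's proof, which iterates a variant of Lemma~8.36 of~\cite{DBLP:journals/jacm/DalviS12}. The gap is in the step you leave as a black box. The elimination is not a generic rule from Definition~4.13 applied once to a single fresh constant: the paper introduces one fresh constant $b_q$ for \emph{each} subclause $S_{J_q}$ of $C_0$ that contains $S$, and for each symbol $S_j$ occurring in $S_{J_q}$ a fresh \emph{unary} symbol $R_j^{(q)}(x)$, with $p(S_j(a,b_q)) := p'(R_j^{(q)}(a))$ and $p(S(a,b)):=1$ off the fresh constants. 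Under this substitution every clause of $Q$---not only $C_0$---is rewritten (after converting back to CNF) into a conjunction of new clauses mixing the symbols $R_j^{(q)}$ with binary atoms, so the resulting $Q'$ is not yet a bipartite query of either type: its left clauses have the hybrid form $R_{j_1}^{(q_1)}(x)\vee\cdots\vee\forall y\, S_{J}(x,y)\vee\cdots$ and require the further simplifications of Propositions~8.6--8.7 of~\cite{DBLP:journals/jacm/DalviS12}. Your case split ``creates a unary atom $\Rightarrow$ type I, else type II'' therefore does not match what the rewriting produces: it \emph{always} creates unary atoms, and which type one lands in is decided only after those extra simplifications.

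The second gap is in your argument for unsafety and length preservation. It is not true that the rewriting ``only shrinks left (or right) clauses'': every clause of $Q$ whose symbols co-occur with $S$ in some subclause of $C_0$ spawns several new clauses, some containing the new unary symbols and admitting partial homomorphisms into the rewritten $C_0$. Showing that none of these renders the path $C_0', C_1',\ldots,C_k'$ redundant is the bulk of the paper's argument (a case analysis on homomorphisms $C'\rightarrow C_i'$, which lifts each such homomorphism back to one $C\rightarrow C_i$ in $Q$), and it depends delicately on which subclauses receive a conjunct $R^{(q)}_{J}(x)$ versus which collapse to $\texttt{true}$. There is also a smaller precondition mismatch: negating the definition of forbidden gives you a subclause of $C_0$ omitting $S$, but the elimination needs a subclause that omits $S$ \emph{and} contains a symbol shared with $C_1$; bridging the two requires the separate argument (the analogue of Lemma~8.38) that, once the eliminations have been exhausted, every subclause of $C_0$ meets $C_1$. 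None of this is fatal to your plan---it is the paper's plan---but as written it would not close without supplying the actual construction and the non-redundancy analysis.
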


We also give a complete syntactic characterization of the forbidden
queries, which is novel:

\begin{theorem}\label{th:forbidden:syntax}
  Let
  $Q=Q_{\text{left}} \wedge Q_{\text{middle}} \wedge Q_{\text{right}}$
  be a forbidden query, and let
  $\mb U=\set{U_1, \ldots, U_r}, \mb V = \set{V_1, \ldots, V_t}$
  denote the sets of left-ubiquitous and right-ubiquitous symbols
  respectively, and let $C_0, C_1, \ldots, C_k$ be a left-to-right
  path of minimal length.  Then, every clause in $Q$ has one of the
  following forms:
  \begin{itemize}
  \item A left clause, $\bigvee_i \forall y S_{J_i}(x,y)$, where each
    subclause $S_{J_i}$ is:
    \begin{align*}
      U_1(x,y) \vee \cdots \vee U_r(x,y) \vee S_{J_1}(x,y)\vee S_{J_2}(x,y) \vee\cdots
    \end{align*}
   where $S_{J_1}, S_{J_2}, \ldots \subseteq \symb(C_1)$.  Notice that the left clause $C_0$ 
   contains all left ubiquitous symbols.
 \item The middle clause $C_1$ does not contain any left-ubiquitous
   symbol.
 \item If $C$ is any middle clause that contains a left ubiquitous
   symbol, then $\symb(C) \subseteq \symb(C_0) \cup \symb(C_1)$.
 \item A middle clause without any ubiquitous symbols, of the form
   $S_{j_1}(x,y) \vee S_{j_2}(x,y) \vee \cdots$.
 \item Symmetric clauses on the right.
 \end{itemize}

 Furthermore, if $r > 1$, then for each $i=1,\ldots,r$ there exists at
 least one middle clause that contains $U_i$ and no other left
 ubiquitous symbol.  Similarly, on the right.
\end{theorem}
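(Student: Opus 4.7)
The proof is a detailed syntactic characterization that proceeds bullet by bullet, with three driving tools: (i) the forbidden property, which directly constrains $C_0$ (every symbol of $C_0$ is ubiquitous or lies in $\symb(C_1)$) and, symmetrically, $C_k$; (ii) the finality of $Q$, which guarantees that $Q[S{:=}0]$ and $Q[S{:=}1]$ are safe for every symbol $S$; and (iii) the minimality of the path length $k$, which rules out shortcuts through auxiliary atoms. Throughout the proof I would argue contrapositively: if some clause violates the claimed form, I would use $S{:=}0$ or $S{:=}1$ to isolate an unsafe residual subquery (contradicting finality), or build a strictly shorter left-to-right path (contradicting minimality of $k$).

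I would first dispatch the easy items: a middle clause without ubiquitous symbols is, by definition of the type II vocabulary, a disjunction of binary atoms $S_{j_1}(x,y)\vee S_{j_2}(x,y)\vee\cdots$, so Item~4 is immediate, and Item~5 follows by applying every left-side argument verbatim to $C_k$. For the structure of left clauses (Item~1), by definition the left-ubiquitous symbols $U_1,\ldots,U_r$ appear in every subclause $S_{J_i}$, so the remaining task is to show that every other atom appearing in any left clause $C'$ lies in $\symb(C_1)$. The forbidden property, applied to the minimal path that starts at $C'$ itself (which has length $k$ by minimality), puts $\symb(C')\setminus \mb U$ inside $\symb(C'_1)$ for the unique middle clause $C'_1$ adjacent to $C'$ on that path. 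The key subclaim is that $\symb(C'_1)\subseteq \symb(C_1)$: if some non-ubiquitous $S\in\symb(C')$ were outside $\symb(C_1)$, the path $C_0,C_1,\ldots,C_k$ would remain intact in $Q[S{:=}1]$, since none of these clauses contain $S$; the residual query must then be safe by finality, forcing a simplification that in turn forces $S\in\symb(C_1)$.

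Items~2 and~3 control the middle block. For Item~2, assume $U\in\symb(C_1)$ is left-ubiquitous. Because $U$ is ubiquitous, it appears in every subclause of $C_0$, so $C_0$ and $C_1$ share not only their ``connecting'' atom but also $U$; I would then use the atom $U$ in $C_0\cap C_1$ together with the sub-path $C_1,C_2,\ldots,C_k$ to exhibit a strictly shorter left-to-right path (building one via $U$-connectivity), contradicting the minimality of $k$. Alternatively, the simplification $Q[U{:=}1]$ deletes both the entire left block (by ubiquity) and $C_1$ (by assumption), and safety of the residual, combined with the persistence of $C_2,\ldots,C_k$, again yields a contradiction. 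For Item~3, suppose a middle clause $C$ contains a left-ubiquitous $U$ but has some atom outside $\symb(C_0)\cup\symb(C_1)$; I would chain $C_0$--$C$--$C_1$ through $U$ (which is in $C_0$ by ubiquity and in $C$ by assumption) to either shortcut the path or expose an unsafe residual after an appropriate $S{:=}1$ simplification.

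Finally, for the existence statement when $r>1$, I would argue contrapositively: if some $U_i$ appears in no middle clause by itself (always together with another $U_j$), then every occurrence of $U_i$ in the entire query is either in a left clause (where it is ubiquitous) or in a middle clause also containing $U_j$; using the symmetry between $U_i$ and $U_j$ that results, the query $Q[U_i{:=}0]$ should remain unsafe (the minimal path can still be reconstructed using $U_j$), contradicting finality. The main obstacle, by far, will be the proof of Item~2: ruling left-ubiquitous symbols out of $C_1$ requires simultaneously juggling the global ubiquity property, the local structure along the chosen minimal path, and the finality simplifications, and is exactly the place where the arguments of Sections~6--8 of~\cite{DBLP:journals/jacm/DalviS12} need to be adapted and carefully re-examined.
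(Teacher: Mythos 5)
Your overall strategy (playing the forbidden property, finality, and path minimality against one another via the rewritings $Q[S{:=}0]$ and $Q[S{:=}1]$) is the paper's strategy, and your treatment of Items~4--5 and the rough shape of Items~1 and~3 are reasonable. The genuine gap is exactly where you predicted it: Item~2, and both arguments you offer for it fail. First, an extra shared symbol $U\in\symb(C_0)\cap\symb(C_1)$ does not produce a shorter left-to-right path: the length of $C_0,C_1,\ldots,C_k$ is the number of clauses, and sharing one more symbol between two \emph{consecutive} clauses lets you skip nothing. Second, $Q[U{:=}1]$ is the wrong rewriting: since $U$ is left-ubiquitous it occurs in every subclause of every left clause, so $Q[U{:=}1]$ has \emph{no} left clauses at all and is vacuously safe; finality is satisfied and yields no contradiction. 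The rewriting that works is $Q[U{:=}0]$, which keeps $C_0$ as a genuine left clause (its subclauses stay nonempty because each contains a symbol of $C_1$), so that $C_0[U{:=}0], C_1[U{:=}0], C_2,\ldots,C_k$ is a candidate unsafe path in $Q[U{:=}0]$. The entire difficulty is then showing that none of these clauses becomes redundant: the paper rules out a homomorphism $C[U{:=}0]\rightarrow C_2$ by observing that $C_0, C, C_2,\ldots,C_k$ would then also be a minimal left-to-right path, so the forbidden property applied to \emph{that} path forces $C$ to contain the non-ubiquitous symbols of $C_0$ (which lie in $\symb(C_1)$ but not in $\symb(C_2)$ by minimality), contradicting the existence of the homomorphism into $C_2$. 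Since your Item~3 leans on Item~2 (a middle clause $C$ with a ubiquitous symbol and a symbol of $C_2$ makes $C_0,C,C_2,\ldots,C_k$ minimal, whence Item~2 must apply to $C$ in the role of $C_1$), the gap propagates.

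A smaller but real issue is in Item~1: your $Q[S{:=}1]$ argument only shows that every symbol of a left clause $C'$ occurs \emph{somewhere} on the path $C_0,\ldots,C_k$, not that it lies in $\symb(C_1)$ or is ubiquitous. To finish you must establish (i) that every left clause shares a symbol with $C_1$, so that $C',C_1,\ldots,C_k$ is itself a minimal left-to-right path to which the forbidden property applies, and (ii) that the $C'$-ubiquitous symbols coincide with the $C_0$-ubiquitous symbols (otherwise ``ubiquitous for $C'$'' does not mean ``left-ubiquitous''). The paper proves both as separate claims, using non-redundancy of the clauses of $Q$, before deriving the syntactic form; neither is automatic from the definition.
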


\begin{example} \label{ex:running:query:type:2:cont:2} We illustrate
  here a simple forbidden query:
  \begin{align*}
&\forall x \big(\forall y (U(x,y) \vee S_1(x,y)) \vee \forall y (U(x,y) \vee S_2(x,y))\big)
&\wedge & \forall x \forall y \big(S_1(x,y) \vee S_2(x,y) \vee S_3(x,y) \vee S_4(x,y)\big)
&\wedge & \forall y \big(\forall x (V(x,y) \vee S_3(x,y)) \vee \forall x (V(x,y)  \vee S_4(x,y))\big)
  \end{align*}
  Here $U$ is a left-ubiquitous symbol, and $V$ is a right-ubiquitous
  symbol.
\end{example}

\begin{example} \label{ex:many:ubiquitous:symbols} The ubiquitous
  symbols need not be unique, and may occur in middle clauses.
  Consider:
  \begin{align*}
    Q = & \forall x (\forall y (U(x,y) \vee U'(x,y) \vee S_1(x,y) \vee   S_2(x,y))\vee \forall y (U(x,y) \vee U'(x,y) \vee S_2(x,y)\vee S_3(x,y)) \vee \forall y (U(x,y) \vee U'(x,y) \vee S_1(x,y) \vee S_3(x,y))) \\
  \wedge & \forall x \forall y (S_1(x,y) \vee S_2(x,y) \vee S_3(x,y) \vee S_4(x,y) \vee S_5(x,y)) \\
\wedge & \forall y (\forall x (V(x,y) \vee S_4(x,y)) \vee \forall x  (V(x,y) \vee S_5(x,y)))\\
\wedge & \forall x \forall y (U(x,y) \vee S_1(x,y)\vee S_2(x,y) \vee S_3(x,y))\wedge \forall x  \forall y (U'(x,y) \vee S_1(x,y) \vee S_2(x,y) \vee S_3(x,y))
  \end{align*}
%
\end{example}

Before we prove the two theorems, we show the two consequences that
will need later in this paper.

\begin{lemma}[Connected] \label{lemma:simple:lattices:2} Let $Q$ be a forbidden
  query.  Then
  $\forall \alpha \in L(\mb G), \forall \beta \in L(\mb H)$, the
  queries $\forall x \forall y Q_{\alpha \beta}(x,y)$ are connected,
  and depend on all relational symbols in $Q$.
\end{lemma}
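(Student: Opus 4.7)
The plan is to split the argument according to whether $\alpha, \beta$ equal $\hat 1$.

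In the easy case when $\alpha = \hat 1$ (symmetrically for $\beta = \hat 1$, or both), equation~\eqref{eq:q:alpha:1:2} shows that $Q_{\alpha\beta}$ contains $Q$ as a conjunct. Since $Q$ is a forbidden query it is in particular final, and the paragraph right after Definition~\ref{def:finalQueries} shows that every final query is connected as a query and uses all of its relational symbols. Adding the extra conjunct $\forall x G_\alpha(x)$ or $\forall y H_\beta(y)$ introduces no new relational symbols and cannot disconnect an already connected query, so both claims pass through unchanged.

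The main case is $\alpha, \beta \neq \hat 1$, where by~\eqref{eq:q:alpha:beta:2} the query equals $\forall x \forall y (G_\alpha(x,y) \wedge C(x,y) \wedge H_\beta(x,y))$. For dependence on all symbols I would use Lemma~\ref{lemma:everything:on:path} to place each symbol $S$ of $Q$ somewhere on a minimal left-right path $C_0, C_1, \ldots, C_k$: symbols of $C_1, \ldots, C_{k-1}$ are already in $C$; left-ubiquitous symbols occur in every subclause of every left clause by Theorem~\ref{th:forbidden:syntax} and hence in every $G_i$, thus in $G_\alpha$, symmetrically for right-ubiquitous symbols; and any non-ubiquitous symbol of $C_0$ must occur in $C_1$ by Definition~\ref{def:forbidden}, hence in $C$, with the analogous statement on the right.

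For connectedness I would exhibit a spanning chain of shared relational symbols linking the three blocks $G_\alpha$, $C$, and $H_\beta$. The middle clauses $C_1, \ldots, C_{k-1}$ form a chain of shared symbols inside $C$ by minimality of $k$. To bridge $G_\alpha$ to $C_1$, I would argue that each conjunct $G_i$ of $G_\alpha = \bigwedge_{i \in \alpha} G_i$ contains at least one non-ubiquitous symbol from $\symb(C_1)$: by Theorem~\ref{th:forbidden:syntax} every subclause of a left clause has the form $U_1 \vee \cdots \vee U_r \vee S_{J_1} \vee \cdots$ with $S_{J_k} \in \symb(C_1)$, and a subclause that used \emph{only} ubiquitous symbols would entail every other subclause of its parent left clause, contradicting the $m > 1$ condition in Definition~\ref{def:types:of:clauses} together with the standing assumption that every clause is minimized and non-redundant. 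A symmetric argument bridges $H_\beta$ to $C_{k-1}$.

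The main obstacle is precisely this last verification: ruling out a $G_i$ that consists only of ubiquitous symbols. This requires carefully unwinding how the distributivity transformation in~\eqref{eq:q:left} produces the $G_i$'s from $Q_{\text{left}}$, and combining the structural description of Theorem~\ref{th:forbidden:syntax} with the minimality/non-redundancy conventions of the paper. Once this is in place, the whole proof is the concatenation of three bridges $G_\alpha \leftrightarrow C_1 \leftrightarrow \cdots \leftrightarrow C_{k-1} \leftrightarrow H_\beta$ through shared relational symbols, giving connectedness, while the dependence on all symbols follows directly from Lemma~\ref{lemma:everything:on:path} and the forbidden-query structure.
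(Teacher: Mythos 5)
Your skeleton matches the paper's (same case split on $\hat 1$, same use of the minimal path and of Lemma~\ref{lemma:everything:on:path}), and the sub-lemma you flag as the main obstacle --- that no clause of $G_i$ can consist solely of ubiquitous symbols --- is correct and is already available as Lemma~\ref{lemma:forbidden:properties}(3) (equivalently, Claim~\ref{claim:lemma:8:38}): an all-ubiquitous subclause would absorb every other subclause of its parent left clause, collapsing it to a middle clause. But there is a genuine gap elsewhere. A ``spanning chain of shared relational symbols'' certifies connectedness only for the \emph{minimized} query: connectedness is a semantic property, and if some clause on your chain becomes redundant in $Q_{\alpha\beta}$ it is deleted and the chain may break (e.g.\ $(A\vee B)\wedge(B\vee C)\wedge A\wedge C\equiv A\wedge C$ has a syntactic chain through $B$ yet is disconnected). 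You never verify that the middle clauses $C_1,\ldots,C_{k-1}$ and at least one clause of each $G_i$ survive minimization inside $G_\alpha\wedge C\wedge H_\beta$. This is precisely where the paper spends its effort, via the ubiquitous-symbol dichotomy: every clause of $G_\alpha$ contains \emph{all} left-ubiquitous symbols while no middle clause on the minimal path contains \emph{any}, so no $G_\alpha$-clause maps into a middle clause; and a homomorphism from a middle clause $C'$ into a subclause $S_{J_k}$ of $G_\alpha$ would extend to $C'\rightarrow C_0'$, contradicting non-redundancy of $Q$. Without this redundancy analysis both your connectedness claim and your ``depends on all symbols'' claim (which also requires the symbols to sit in surviving clauses) are unsupported.

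Separately, your easy case is too quick: the assertion that adding a conjunct over the same symbols ``cannot disconnect an already connected query'' is false in general (adding a conjunct can make clauses redundant and thereby disconnect the remainder). Indeed, in $Q_{\hat 1\beta}=Q\wedge\forall y\,H_\beta(y)$ the conjunct absorbs all of $Q_{\text{right}}$, so the reduced query is genuinely different from $Q$ and its connectedness must be re-established by the same redundancy argument as in the main case --- which is why the paper says only that ``the argument is similar'' rather than that it is free.
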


\begin{proof}
  Fix a left-to-right path $C_0, C_1, \ldots, C_k$ of minimal length;
  by Lemma~\ref{lemma:everything:on:path}, all symbols in $Q$ occur on
  this path.  Referring to the expressions in
  Eq.~\eqref{eq:q:alpha:beta:2}, all clauses $C_1, \ldots, C_{k-1}$
  occur in $C(x,y)$.  Assume first that $\alpha, \beta \neq \hat 1$,
  then
  $\forall x \forall y Q_{\alpha \beta}(x,y) = \forall x \forall y
  (G_\alpha(x,y) \wedge C(x,y) \wedge H_\beta(x,y))$, and we prove
  that none of the clauses $C_1, \ldots, C_{k-1}$ becomes redundant.
  Recall that $G_\alpha(x,y)$ is a conjunction CNF expressions
  $G_i(x,y)$, each of which is a conjunction of subclauses
  $S_{J_k}(x,y)$ of some left clause of $Q$ (see Example~\ref{ex:QLeft}).  Therefore every clause
  $S_{J_k}(x,y)$ of $G_{\alpha}(x,y)$ contains {\em all} ubiquitous
  symbols, while none of the clauses $C_1, \ldots, C_{k-1}$ contains
  {\em all} ubiquitous symbols, proving that none of the middle clauses on the minimal-length left-to-right path are not redundant. (In fact, by Theorem~\ref{th:forbidden:syntax}, the middle clause $C_1$ does not contain \e{any} ubiquitous symbol. Since the path is minimal then none of the middle clauses contain any ubiquitous symbol.) 
  Some clauses $S_{J_k}(x,y)$ of $G_{\alpha}(x,y)$ may become
  redundant, but the only homomorphisms $C' \rightarrow S_{J_k}$ must
  be from some other clause $C'$ of $G_{\alpha}(x,y)$: otherwise, if
  $C'$ is a middle clause, then we obtain a homomorphism
  $C' \rightarrow C_0'$, where $C_0'$ is the left clause that contains
  $S_{J_k}$.  Thus, at least one subclause of $G_{\alpha}(x,y)$ has to
  be non-redundant, proving that $Q_{\alpha \beta}$ depends on all
  relational symbols in $Q$.  It remains to consider the cases
  $\alpha = \hat 1$ or $\beta = \hat 1$; assuming $\alpha = \hat 1$,
  by Eq.~\eqref{eq:q:alpha:1:2}
  $\forall x \forall y Q_{\hat 1 \alpha}(x,y) = Q \wedge \forall y
  H_\beta(y)$ and the argument is similar.x 
\end{proof}

\begin{lemma}[Invertible] \label{lemma:simple:lattices:1} The mapping
  $(\alpha,\beta) \mapsto Q_{\alpha \beta}(x,y)$ is invertible.  More
  precisely: if the logical implication
  $\forall x \forall y Q_{\alpha_1 \beta_1}(x,y) \Rightarrow \forall x
  \forall y Q_{\alpha_2 \beta_2}(x,y)$ holds, then
  $\alpha_1 \leq \alpha_2$ and $\beta_1 \leq \beta_2$, in the lattices
  $\hat L(\mb G)$ and $\hat L(\mb H)$ respectively.
\end{lemma}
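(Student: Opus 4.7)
The plan is to prove the ``more precisely'' refinement by reducing the $\forall$CNF implication to a combinatorial condition on clauses, then exploiting the separation between left- and right-ubiquitous symbols granted by the forbidden-query structure (Theorem~\ref{th:forbidden:syntax}) together with the length hypothesis $k \geq 5$. By the reverse-inclusion convention of Definition~\ref{def:lattice}, the goal reduces to showing $\alpha_2 \subseteq \alpha_1$ and $\beta_2 \subseteq \beta_1$ as closed sets; the two halves are symmetric, so I focus on $\alpha$.

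I first handle the main case $\alpha_1, \alpha_2, \beta_1, \beta_2 \neq \hat 1$. By Eq.~\eqref{eq:q:alpha:beta:2}, each sentence $\forall x \forall y Q_{\alpha_i \beta_i}(x,y)$ is a $\forall$CNF on variables $\{x,y\}$ whose clauses split into three syntactic groups: subclauses of $G_{\alpha_i}(x,y)$ (each containing every symbol of $\mb U$), middle clauses of $C(x,y)$, and subclauses of $H_{\beta_i}(x,y)$ (each containing every symbol of $\mb V$, when $\mb V \neq \emptyset$). Because every atom has the form $S(x,y)$ with a fixed argument order, the only variable homomorphism $\{x,y\} \to \{x,y\}$ respecting arities is the identity; thus the standard homomorphism characterization of $\forall$CNF implication reduces the hypothesis to: every clause $D$ of the right-hand side contains, as a set of atoms, some clause $D'$ of the left-hand side.

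Now fix $i \in \alpha_2$ and any clause $D$ of $G_i(x,y)$. By Theorem~\ref{th:forbidden:syntax}, $\symb(D) \subseteq \mb U \cup \symb(C_1)$. The main technical step is to show that the only clauses $D' \subseteq D$ available in $Q_{\alpha_1 \beta_1}$ come from $F_{\alpha_1}(x,y) = G_{\alpha_1}(x,y) \wedge C(x,y)$, and never from $H_{\beta_1}(x,y)$. For a candidate $D'$ originating in $H_{\beta_1}$, Theorem~\ref{th:forbidden:syntax} forces $\symb(D') \subseteq \mb V \cup \symb(C_{k-1})$, so $\symb(D') \subseteq \symb(D)$ would require all symbols of $D'$ to lie in $(\mb V \cup \symb(C_{k-1})) \cap (\mb U \cup \symb(C_1))$. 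I would rule this out using four disjointness identities: $\mb U \cap \mb V = \emptyset$, $\symb(C_{k-1}) \cap \mb U = \emptyset$, $\mb V \cap \symb(C_1) = \emptyset$, and $\symb(C_{k-1}) \cap \symb(C_1) = \emptyset$. The last three all follow from $k \geq 5$ by short-circuit arguments: e.g., if some $V_j \in \symb(C_1)$ then Theorem~\ref{th:forbidden:syntax} gives $\symb(C_1) \subseteq \symb(C_{k-1}) \cup \symb(C_k)$, and since $\symb(C_1) \cap \symb(C_2) \neq \emptyset$ one obtains a left-to-right path $C_0, C_1, C_2, C_{k-1}(,C_k)$ of length at most $4$, contradicting minimality. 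Together these four identities force $\symb(D') \subseteq \emptyset$, impossible for a non-empty clause. Hence $D'$ is a clause of $F_{\alpha_1}$, which together with the trivial inclusion of $C$-clauses in $F_{\alpha_1}$ proves $F_{\alpha_1} \Rightarrow F_i$ as propositional CNFs. By definition of the closure, $i \in \bar \alpha_1 = \alpha_1$, so $\alpha_2 \subseteq \alpha_1$ as desired.

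Finally I dispose of the boundary cases. If $\alpha_2 = \hat 1$ the conclusion $\alpha_1 \leq \hat 1$ is automatic since $\hat 1$ is the top of the lattice. If $\alpha_1 = \hat 1$ but $\alpha_2 \neq \hat 1$, Eq.~\eqref{eq:q:alpha:1:2} reduces the left-hand side to $Q \wedge \forall y H_{\beta_1}(y)$, which contains $Q_{\text{left}}$ only as the weaker disjunction $\forall x (\bigvee_j \forall y G_j(x,y))$ rather than the conjunction $\forall x \forall y G_{\alpha_2}(x,y)$ demanded by $Q_{\alpha_2 \beta_2}$; leveraging the non-redundancy of the left clauses, one constructs a structure that satisfies $Q_{\text{left}}$ via a single branch $G_j$ while failing $G_k$ for some $k \in \alpha_2$, with the middle- and right-atom valuations arranged to make the rest of the left-hand side hold, contradicting the implication. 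The symmetric cases for $\beta$ are identical. The chief obstacle is the conjunction of four disjointness identities in the main case; three of these depend essentially on the minimum-length assumption $k \geq 5$, which is precisely why the theorem is restricted to queries of length at least five.
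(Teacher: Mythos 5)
Your argument is correct in substance, but it takes a genuinely different route from the paper's. The paper's proof is a three-line substitution trick: set every right-ubiquitous symbol $\mb V$ to $\texttt{true}$, which collapses $H_{\beta_1}$ and $H_{\beta_2}$ to $\texttt{true}$ (every subclause of every right clause contains all of $\mb V$), leaves $G_{\alpha_1},G_{\alpha_2}$ untouched, and after re-conjoining with $C$ (using $C[\mb V:=1]\wedge C\equiv C$) yields $F_{\alpha_1}\Rightarrow F_{\alpha_2}$ directly, whence $\alpha_1\leq\alpha_2$ by the definition of the closure. You instead reduce the $\forall$CNF implication to clause-wise atom containment via the homomorphism theorem and then exclude right-originating witnesses $D'$ by the four symbol-disjointness identities extracted from Theorem~\ref{th:forbidden:syntax} and path minimality. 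Both proofs funnel into the same conclusion $F_{\alpha_1}\Rightarrow F_i$ for each $i\in\alpha_2$; the paper's is shorter and hides the length hypothesis inside the (tacit) facts that $\mb V\neq\emptyset$ and $\mb V\cap\symb(Q_{\text{left}})=\emptyset$, while yours makes the dependence on $k$ explicit and, as a bonus, localizes the argument in a form that transfers verbatim to the lineage version (Lemma~\ref{lemma:type2:invertible}). One soft spot: your case $\alpha_1=\hat 1$, $\alpha_2\neq\hat 1$ is only sketched ("one constructs a structure\dots"). You can close it without a model construction by reusing your own machinery: any clause $D$ of $G_{i_0}$, $i_0\in\alpha_2$, would need a homomorphic preimage among the clauses of $Q\wedge\forall y\, H_{\beta_1}(y)$; right-side clauses are excluded by the same disjointness identities, a middle clause mapping into $D$ would make the left clause containing $D$ redundant, and a left clause mapping into $D$ would violate non-redundancy or minimization (since type II left clauses have more than one subclause), so no witness exists and the implication fails.
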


\begin{proof}
  We expand
  $\forall x \forall y Q_{\alpha_1 \beta_1} \Rightarrow \forall x
  \forall y Q_{\alpha_2 \beta_2}$ and drop the quantifiers, to obtain:
  \begin{align*}
  G_{\alpha_1}(x,y) \wedge C(x,y) \wedge H_{\beta_1}(x,y) \Rightarrow & G_{\alpha_2}(x,y) \wedge C(x,y) \wedge H_{\beta_2}(x,y)
  \end{align*}
  Let $\mb V$ be all right ubiquitous symbols.  If we set them to
  $\mb V:=1$, then $H_\beta[\mb V:=1]=1$ for every $\beta$, therefore
  we obtain:
  \begin{align*}
  G_{\alpha_1}(x,y) \wedge C[\mb V:=1](x,y) \Rightarrow & G_{\alpha_2}(x,y) \wedge C[\mb V:=1](x,y)
  \end{align*}
  We conjoin both terms with $C(x,y)$ and notice that $C[\mb V:=1]
  \wedge C \equiv C$ and therefore we obtain:
  \begin{align*}
  G_{\alpha_1}(x,y) \wedge C(x,y) \Rightarrow & G_{\alpha_2}(x,y) \wedge C(x,y)
  \end{align*}
  By definition this means $\alpha_1 \leq \alpha_2$.  We prove
  similarly that $\beta_1 \leq \beta_2$. 
\end{proof}

We will now give the proof of Theorem~\ref{th:forbidden}, but first
illustrate the basic idea on an example.

\begin{example} \label{ex:running:query:type:2:cont}
  The query $Q$ in Example~\ref{ex:running:query:type:2} is not
  forbidden; we repeat it here:
  \begin{align*}
    Q = & \forall x (\forall y S_1(x,y) \vee \forall y S_2(x,y))
\wedge \forall x \forall y (S_1(x,y) \vee  S_3(x,y))
\wedge \forall y (\forall x S_3(x,y) \vee \forall x S_4(x,y))
  \end{align*}
  Notice that $S_2$ does not occur in the middle clause.  Define the
  following query, obtained by replacing $S_2$ with a unary symbol
  $R(x)$:
  \begin{align*}
    Q' = & \forall x \forall y (S'_1(x,y) \vee R(x)) \wedge \forall x
           \forall y (S'_1(x,y) \vee S'_3(x,y)) \wedge \forall
           y(\forall x S_3'(x,y) \vee \forall x S_4'(x,y))
  \end{align*}
  We claim that $\gmc_{\text{bi}}(Q') \leq^P_m \gmc_{\text{bi}}(Q)$;
  since $Q'$ is a query of Type I-II, we have already shown that
  $\gmc(Q')$ is \#P-hard, and this implies that $\gmc(Q)$ is also
  \#P-hard.  To prove the claim, consider any probabilistic database
  $\Delta' = (\Dom', p')$ for $Q'$.  Define the following
  probabilistic database $\Delta= (\Dom, p)$ for $Q$, where
  $\Dom \defeq \Dom' \cup \set{b_1}$ for a fresh constant $b_1$, and
  where the probabilities are defined as follows, for all
  $a,b\in \Dom$:
  \begin{align*}
    p(S_2(a,b_1)) \defeq & p'(R(a)) & p(S_2(a,b)) \defeq & 1 \\
j=1,3,4:\ p(S_j(a,b_1)) \defeq & 1 & p(S_j(a,b)) \defeq & p'(S'_j(a,b))
  \end{align*}
  In $Q$, we have
  $\forall y S_1(x,y) \equiv (\forall y\neq b_1 S_1(x,y)) \wedge
  S_1(x,b_1) \equiv \forall y S_1'(x,y)$, where here the variable $y$
  in $\forall y S_1'(x,y)$ ranges over $\Dom'$, i.e. without $b_1$.
  Similarly,
  $\forall y S_2(x,y) \equiv (\forall y\neq b_1 S_2(x,y))\wedge
  (S_2(x,b_1)) \equiv R(x)$, etc, and the query $Q$ becomes:
\begin{align*}
  \begin{array}{rlll}
    Q= & \forall x(\forall y S_1(x,y) \vee \forall y S_2(x,y))\wedge
    & \forall x \forall y (S_1(x,y) \vee S_3(x,y)) \wedge 
    & \forall y (\forall x S_3(x,y) \vee \forall x S_4(x,y))\\
    \equiv & \forall x (R(x) \vee \forall y S_1'(x,y)) \wedge
    & \forall x \forall y(S_1'(x,y) \vee S_3'(x,y))
      \wedge
    & \forall y (\forall x S_3'(x,y)\vee \forall x S_4'(x,y))
  \end{array}
\end{align*}
which is equivalent to $Q'$, proving $\pr(Q) = \pr'(Q')$.
\end{example}

\begin{proof} (of Theorem~\ref{th:forbidden}) Fix a left-right path in
  $Q$, not necessarily of minimal length, denote it
  $C_0, C_1, \ldots, C_k$, and recall that $C_0$ is a union of of
  subclauses
  $C_0 = \forall x (\forall y S_{J_1}(x,y) \vee \forall y S_{J_2}(x,y)
  \vee \cdots)$.  We start with the following:

  \begin{claim}[Variant of Lemma 8.36
    in~\cite{DBLP:journals/jacm/DalviS12}] \label{claim:lemma:8:36}
    Suppose $S_1$ is a symbol that
    occurs in both $C_0, C_1$, $S_0$ is a symbol that occurs in $C_0$
    and does not occur in $C_1, C_2, \ldots, C_k$.  Then, if there
    exists a subclause $S_J(x,y)$ of $C_0$ that contains $S_1$ but not
    $S_0$, then there exists an unsafe query $Q'$ with strictly fewer
    binary symbols such that
    $\gmc_{\text{bi}}(Q') \leq^P_m \gmc_{\text{bi}}(Q)$.
  \end{claim}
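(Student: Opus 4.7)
The plan is to construct $Q'$ from $Q$ by eliminating the binary symbol $S_0$ in favor of a fresh unary symbol $R$, following the template of Example~\ref{ex:running:query:type:2:cont}. The reduction adjoins a single fresh anchor constant $b_1$ on the right side of the domain: at $b_1$ we make the clauses $C_1,\ldots,C_k$ vacuously true and we encode $R(x)$ through the probabilities of $S_0(x,b_1)$.

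Syntactically, I would define $Q'$ as follows. For each subclause of $C_0$ that contains $S_0$, replace $\forall y(S_0(x,y)\vee T(x,y))$ by $R(x)\vee \forall y T(x,y)$ (with $\forall y$ over the empty disjunction identified with $\texttt{false}$, so a subclause that was $\forall y S_0(x,y)$ becomes just $R(x)$). Every remaining occurrence of $S_0$ in $Q$ is eliminated by the substitution $S_0:=1$. By hypothesis the subclause $S_J(x,y)$ contains no $S_0$, so it is preserved verbatim in $C_0'$; hence $S_1\in\symb(C_0')\cap\symb(C_1)$ and the sequence $C_0',C_1,C_2,\ldots,C_k$ remains a left-to-right path in $Q'$. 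Thus $Q'$ is unsafe and has strictly fewer binary symbols than $Q$ (it has lost $S_0$ and gained only a unary symbol).

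Given a bipartite TID $\Delta'=(U\cup V',p')$ for $Q'$, I would build $\Delta=(U\cup V,p)$ with $V\defeq V'\cup\{b_1\}$ and let $\mathcal{P}\defeq \bigcup_{i=1}^k \symb(C_i)$; by hypothesis $S_0\notin\mathcal{P}$. The probabilities are: $p(S_0(a,b_1))\defeq p'(R(a))$ and $p(S_0(a,b))\defeq 1$ for $b\in V'$; for $S\in\mathcal{P}$, $p(S(a,b_1))\defeq 1$ and $p(S(a,b))\defeq p'(S(a,b))$ for $b\in V'$; for every other binary symbol $S$ of $Q$ not in $\mathcal{P}\cup\{S_0\}$, $p(S(a,b_1))\defeq 0$ and $p(S(a,b))\defeq p'(S(a,b))$ for $b\in V'$. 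I would then verify $\Pr_\Delta(Q)=\Pr_{\Delta'}(Q')$ by splitting each universal $\forall y$ into its contribution at $y\in V'$ versus $y=b_1$: on $V'$, $S_0$ is a deterministic disjunct of probability $1$, so every clause instantiation reduces exactly to its $Q'$-image; at $y=b_1$, the clauses $C_1,\ldots,C_k$ are vacuously satisfied (every atom lies in $\mathcal{P}$ and has probability $1$ at $b_1$), while inside $C_0$ each $S_0$-subclause reduces to $S_0(x,b_1)=R(x)$ and the subclause $S_J$ reduces to $1$ (since $S_1\in\mathcal{P}$ forces $S_1(x,b_1)=1$).

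The main obstacle will be the probability identity at $y=b_1$ for subclauses of $C_0$ in which $S_0$ cohabits with atoms of other binary symbols: non-$\mathcal{P}$ cohabiting atoms must vanish at $b_1$ while preserving their contributions at $y\in V'$, and if a cohabiting atom is itself in $\mathcal{P}$ then the subclause is already vacuously $1$ at $b_1$, which forces one to handle that subclause by the global substitution $S_0:=1$ rather than by the $R(x)$-replacement. Verifying that these case-dependent rewrites are simultaneously consistent across all subclauses of $C_0$ and all remaining clauses of $Q$ (possibly with auxiliary anchors when distinct cohabitation patterns occur) is the technical heart of the argument; the hypothesis that $S_J$ contains $S_1$ but not $S_0$ is precisely what guarantees that, after the rewrite, $C_0'$ retains a bridge symbol $S_1$ into $C_1$, so that $Q'$ is indeed unsafe.
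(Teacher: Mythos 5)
There is a genuine gap, and it sits exactly where you flag the ``main obstacle'': the single-anchor construction you actually write down does not work, and the case analysis you gesture at in the last paragraph is the entire content of the proof, not a loose end. Two concrete failure modes. First, assigning probability $0$ at $b_1$ to every binary symbol outside $\mathcal{P}\cup\set{S_0}$ falsifies the $b_1$-grounding of any clause or subclause built only from such symbols. The paper's own illustrative example exhibits this: with $C_0=\forall x(\forall y S_1\vee\forall y(S_0\vee S_2)\vee\forall y(S_0\vee S_3))$ and another left clause containing the subclause $\forall y(S_2(x,y)\vee S_3(x,y))$, your assignment makes that subclause false at $b_1$, so the clause collapses to $\forall x\forall y S_1(x,y)$ and renders $C_0'$ redundant, destroying unsafety (or, for a middle clause over non-$\mathcal{P}$ symbols, makes $\Pr_\Delta(Q)=0$ outright). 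Second, if some $S_0$-subclause of $C_0$ also contains a symbol of $\mathcal{P}$, your assignment makes that subclause identically true at $b_1$ and on $V'$, hence $C_0\equiv\texttt{true}$. Your suggested fix---``handle that subclause by the global substitution $S_0:=1$''---is not available: the probability assignment is per tuple, not per occurrence, so you cannot treat $S_0(x,b_1)$ as $R(x)$ in one subclause and as $\texttt{true}$ in another. The same inconsistency undermines your statement that ``every remaining occurrence of $S_0$ in $Q$ is eliminated by $S_0:=1$.''

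The paper's construction resolves both problems by using $t$ anchors $b_1,\ldots,b_t$, one for each subclause $S_{J_q}$ of $C_0$ that contains $S_0$, and by introducing a fresh unary symbol $R_j^{(q)}(x)$ for \emph{every} symbol $S_j\in\symb(S_{J_q})$ (not only for $S_0$). At $b_q$ the symbols of $S_{J_q}$ receive the probabilities of their unary surrogates and every other symbol receives probability $1$; probability $0$ is never used at an anchor. Consequently each subclause $S_{J_0}$ anywhere in $Q$ grounds at $b_q$ either to $\texttt{true}$ (if it mentions a symbol outside $S_{J_q}$) or to the disjunction $R^{(q)}_{J_0}(x)$ of unaries, and each $S_0$-subclause of $C_0$ becomes a nontrivial disjunction of unary atoms rather than $\texttt{true}$ or a bare $R(x)$. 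The resulting $Q'$ has left clauses with several unary symbols, which are then simplified separately. Your hypothesis about $S_J$ containing $S_1$ but not $S_0$ is indeed used only to keep a bridge into $C_1$, but without the per-subclause anchors and the full family of unary surrogates the reduction $\gmc_{\text{bi}}(Q')\leq^P_m\gmc_{\text{bi}}(Q)$ you propose does not preserve the probability, and the $Q'$ you construct need not be unsafe.
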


  The intuition is that, since $S_0$ only occurs in $C_0$, it should
  be a ubiquitous symbol, but fails to be one; then we can simplify
  $Q$ to $Q'$.  The query $Q'$ will have left clauses that are
  slightly more general than those introduced in
  Definition~\ref{def:types:of:clauses}: it may contain left clauses
  of the form:
  \begin{align*}
 & \forall x \big(R_1(x) \vee R_2(x)  \vee \cdots \vee \forall y S_{J_1}(x,y) \vee \forall y S_{J_2}(x,y) \vee \cdots\big)
  \end{align*}
  Its middle and right clauses are as given in
  Definition~\ref{def:types:of:clauses}.  That is, its left clauses
  may contain multiple unary symbols and/or multiple subclauses.  Such
  left clauses can be further simplified to either Type I left
  clauses, or Type II leff clauses, see Propositions 8.6 and 8.7
  in~\cite{DBLP:journals/jacm/DalviS12}.

  \begin{proof} (Of Claim~\ref{claim:lemma:8:36}) Let
    $S_{J_1}, \ldots, S_{J_t}$ be all the subclauses of $C_0$ that
    contain $S_0$; by assumption there exists at least one other
    sub-clause that contains $S_1$ and not $S_0$.  Define $Q'$ the
    query obtained from $Q$ as follows.  The vocabulary consists of
    (a) for every symbol $S_j$ in $Q$ other than $S_0$, there is a
    fresh binary symbol $S_j'(x,y)$ in $Q'$.  (b) for every
    $q=1,\ldots,t$ and every symbol $S_j$ that occurs in the subclause
    $S_{J_q}$ (including $S_0$) there is a fresh unary symbol denoted
    $R_j^{(q)}(x)$.  Notice that $Q'$ has one less binary symbol,
    since there is no $S_0'(x,y)$.  For any subclause
    $S_{J_0}(x,y) = S_{j_1}(x,y) \vee S_{j_2}(x,y) \vee \cdots \vee
    S_{j_m}(x,y)$ that occurs anywhere in $Q$, we denote the following
    expressions:
    \begin{align*}
      S'_{J_0} \defeq & S_{j_1}'(x,y) \vee \cdots \vee S_{j_m}'(x,y)
      &
        R^{(q)}_{J_0} = 
      &  \begin{cases}
        R_{j_1}^{(q)}(x) \vee \cdots \vee R_{j_m}^{(q)}(x) & \mbox{if $\symb(S_{J_0})\subseteq\symb(S_{J_q})$}\\
        \texttt{true} & \mbox{otherwise}
      \end{cases}
    \end{align*}
    We construct the query $Q'$ from $Q$ by replacing each subclause
    $S_{J_0}(x,y)$ with a new expression, according to the following
    two cases (the justification will become clear below, when we
    describe the mapping from $\Delta$ to $\Delta'$):
    \begin{description}
    \item[Case 1:] $S_0 \in \symb(S_{J_0})$.  Then replace
      $S_{J_0}(x,y)$ with $\bigwedge_{q=1,t} R_{J_0}^{(q)}(x)$.  (Note
      that this might be $\texttt{true}$.)
    \item[Case 2:] $S_0\not\in \symb(S_{J_0})$.  Then replace
      $S_{J_0}(x,y)$ with $\bigwedge_{q=1,t} R_{J_0}^{(q)}(x) \wedge S_{J_0}'(x,y)$
    \end{description}
    We show now that
    $\gmc_{\text{bi}}(Q') \leq^P_m \gmc_{\text{bi}}(Q)$.  Given a
    database $\Delta'=(\Dom', p')$, we define
    $(\Dom \defeq \Dom \cup \set{b_1, b_2, \ldots, b_t}, p)$, where
    $b_1, \ldots, b_t$ are fresh constants, and define:
    \begin{align*}
      \forall q=1,t:\ \ p(S_0(a,b_q))\eqdef & p'(R_0^{(q)}(a)) & p(S_0(a,b)) \defeq & 1 \\
      \forall q=1,t: \forall S_j \in \symb(S_{J_q}) - \set{S_0}\ \ p(S_j(a,b_q)) \defeq & p'(R_j^{(q)}(a)) & p(S_j(a,b)) \defeq & p'(S_j'(a,b))\\
      \forall q=1,t: \forall S_j \in \symb(Q) - \symb(S_{J_q})\ \  p(S_j(a,b_q)) \defeq & 1 & p(S_j(a,b)) \defeq & p'(S_j'(a,b))
    \end{align*}
    We check that the lineage of $Q$ on $\Delta$ is the same as that
    of $Q'$ on $\Delta'$.  Consider any subclause of $Q$ and assume
    first that it contains $S_0$:  $S_{J_0}(x,y) = S_0(x,y) \vee
    S_{j_1}(x,y) \vee \cdots$  When we substitute $y:=b_q$, then
    $S_0(x,b_q)\equiv R_0^{(q)}(x)$, and for every other symbol
    $S_{j_i}(x,b_q)$ is either $R_{j_i}^{(q)}(x)$ or $\texttt{true}$,
    hence the expression is equivalent to $R^{(q)}_{J_0}$; when we
    substitute $y:=b$ for some $b \in \Dom'$, then $S_0(x,b)\equiv
    \texttt{true}$ and entire expression vanishes.  This justifies
    Case 1 above.  Case 2 is justified similarly and ommitted.

    It remains to prove that $Q'$ is an unsafe query.  For that we
    prove that, once converted into CNF, $Q'$ has a left-right path of
    clauses that is non-redundant.  We start by observing that every
    clause $C_1, C_2, \ldots, C_k$ on our path (other than $C_0$) is
    converted into an isomorphic clause, where each symbol $S_j(x,y)$
    is replaced by $S_j'(x,y)$.  Indeed, by assumption, no subclause
    $S_{J_0}(x,y)$ on this path contains $S_0$, hence they are converted
    according to case 2.  Moreover,
    $R_{J_0}^{(q)}\equiv \texttt{true}$, because there must exist a
    symbol $S_j$ in $S_{J_0}$ that does not occur in $C_0$: indeed,
    when $S_{J_0}$ is any middle clause $C_i$, $i=1, \ldots, k-1$, and
    $\symb(C_i) = \symb(S_{J_0}) \subseteq \symb(S_J)$ then there is a
    homomorphism $C_i \rightarrow S_J \rightarrow C_0$, contradiction.
    Since the length of $Q$ is $\geq 2$, $S_J$ has no common symbols
    with the right clause $C_k$.  Consider any other clause $C$ of
    $Q$.  Each of its subclauses $S_{J_0}$ is converted to a
    conjunction of up to $t+1$ expressions, i.e.  either
    $\bigwedge_{q=1,t} R_{J_0}^{(q)}(x)$ or
    $\bigwedge_{q=1,t} R_{J_0}^{(q)}(x) \wedge S_{J_0}'(x,y)$; notice
    that some expressions $R_{J_0}^{(q)}$ may be
    $\equiv \texttt{true}$. We convert the resulting expression into
    CNF, thus from $C$ we obtain a conjunction
    $C' \wedge C'' \wedge \cdots$, each obtained by making one choice
    out of the up to $t+1$ choices for each subclause of $C$.
    Considering now $C_0$, we define $C_0'$ to be obtained as follows:
    (a) for each subclause $S_{J_q}(x,y)$ of $C_0$, choose
    $R_{J_0}^{(q)}(x)$ (actually one can check that no other choice
    exists here) (b) for each other subclause $S_{J_0}$, choose
    $S_{J_0}'$.  That is, $C_0$ looks like this:
    $R_{J_1}^{(1)}(x) \vee R_{J_1}^{(2)}(x) \vee \cdots \vee
    R_{J_t}^{(t)}(x) \vee \forall y S'_{J_0}(x,y) \vee \cdots$ By
    assumption there exists at least one subclause $S_{J_0}$ other
    than $S_{J_1}, \ldots, S_{J_t}$, hence $C_0'$ is a left clause,
    i.e. it does not degenerate to $\bigvee_q R_{J_q}^{(q)}(x)$.  (If
    we applied this construction to
    Example~\ref{ex:running:query:type:2:cont:2} trying to remove
    $U(x,y)$, then the left clause degenerates to
    $R_0^{(1)}(x) \vee R_1^{(1)}(x) \vee R_0^{(2)}(x) \vee
    R_2^{(2)}(x)$.)  Furthermore, the symbol $S_1'$ is common in
    $C_0'$ and $C_1'$, hence $C_0', C_1', \ldots, C_k'$ is a
    left-right path.  It remains to prove that it is not redundant.
    For that, assume the contrary, that there exists a homomorphism
    $f : C' \rightarrow C_i'$, for some clause
    $C' \not\equiv \texttt{true}$.  Let $C$ be the original clause
    from which we derived $C'$ (thus $C$ became
    $C' \wedge C'' \wedge \cdots$), and consider any subclause
    $S_{J_0}(x,y)$ of $C$.  We will construct a homomorphism
    $g : S_{J_0}(x,y) \rightarrow C_i$; by taking their union we
    obtain a homomorphism $C\rightarrow C_i$, contradicting the fact
    that $Q$ has no redundant clauses.  To construct $g$, we consider
    the two cases above.  Case 1: $S_0 \in \symb(S_{J_0})$, then $C'$
    must contain some expression $R^{(q)}_{J_0}(x)$ for some choice of
    $q=1,\ldots,t$, and since
    $R^{(q)}_{J_0}(x)\not\equiv\texttt{true}$ (otherwise
    $C' \equiv \texttt{true}$), we have that
    $\symb(S_{J_0}) \subseteq \symb(S_{J_q})$.  Given the homomorphism
    $f:C'\rightarrow C_i'$, we have that $C_i'$ contains
    $R^{(q)}_{J_0}(x)$, hence it must be that $i=0$.  We simply define
    $g$ to map $S_{J_0}$ to the subclause $S_{J_q}$ of $C_0$.  Case 2:
    $S_0 \not\in \symb(S_{J_0})$.  Its translation is
    $\bigwedge_{q=1,t} R_{J_0}^{(q)}(x) \wedge S_{J_0}'(x,y)$, hence
    $C'$ must contain either some $R_{J_0}^{(q)}(x)$, in which case we
    argue as in Case 1, or contains $S'_{J_0}(x,y)$.  In that case we
    use the homomorphism $f$: it maps $S'_{J_0}(x,y)$ to $C_i'$, we
    simply define $g$ similarly from $S'_{J_0}(x,y)$ to $C_i$.

    This concludes our proof of Claim~\ref{claim:lemma:8:36}.
  \end{proof}

  \begin{example}
    We illustrate with an example showing some of the subtleties of
    the proof of Claim~\ref{claim:lemma:8:36}.  Consider two clauses
    $C_0, C_1$ that form the beginning of a left-right path.  We apply
    Claim~\ref{claim:lemma:8:36} to the symbol $S_0$ in $C_0$, noting
    that it does not co-occur with $S_1$, and show their conversions
    to $C_0', C_1'$ below:
    \begin{align*}
      C_0 = & \forall x \big(\forall y S_1(x,y) \vee \forall y (S_0(x,y) \vee S_2(x,y))\vee \forall y (S_0(x,y) \vee S_3(x,y))\big)  & C_1 = & \forall x \forall y \big(S_1(x,y) \vee S_2(x,y) \vee S_4(x,y)\big)\\
      C_0' = & \forall x \big(\forall y S_1'(x,y) \vee R_0^{(1)}(x) \vee R_2^{(1)}(x) \vee R_0^{(2)}(x) \vee R_3^{(2)}(x)\big)  & C_1' = & \forall x \forall y \big(S'_1(x,y) \vee S'_2(x,y) \vee S'_4(x,y)\big)
    \end{align*}
    There are two subclauses containing $S_0$.  Thus, there are two
    unary symbols $R_0^{(1)}, R_0^{(2)}$ because $S_0$ occurs in both,
    and only one symbol $R_2^{(1)}$ and one symbol $R_3^{(2)}$ because
    $S_2, S_3$ occur only in one subclause respectively.  We
    illustrate why $C_0', C_1$' do not become redundant, by
    considering several other clauses, which are not part of the left
    right path, and thus we denote them $D_1, D_2, \ldots$:
    \begin{align*}
      D_1 = & \forall x\big(\forall y(S_0(x,y) \vee S_1(x,y)) \vee \forall y S_2(x,y)\big)
      & D_1' = & \forall x\big(\texttt{true} \vee \big(R_2^{(1)}(x)\wedge \forall y S_2'(x,y)\big)\big) \equiv \texttt{true}\\
      D_2 = & \forall x \big(\forall y (S_1(x,y) \vee S_2(x,y)) \vee \forall y S_3(x,y)\big)
      & D_2' = & \forall x\big(\forall y (S_1'(x,y) \vee S_2'(x,y)) \vee \big(R_3^{(2)}(x)\wedge\forall y S_3'(x,y)\big)\big)\\
      D_3 = & \forall x \big(\forall y S_1(x,y) \vee \forall y (S_2(x,y)  \vee S_3(x,y))\big)
      & D_3' = & \forall x \big(\forall y S_1'(x,y) \vee \forall y  (S_2'(x,y) \vee S_3'(x,y))\big)
    \end{align*}
    Clause $D_1$ rewrites to $\texttt{true}$ hence does not lead to
    any redundancies.  Clause $D_2$ rewrites to the expression $D_2'$,
    which further rewrites to two clauses:
    $\forall x (\forall y (S_1'(x,y) \vee S_2'(x,y)) \vee
    R_3^{(2)}(x))$ and
    $\forall x (\forall y (S_1'(x,y) \vee S_2'(x,y)) \vee \forall y
    S_3'(x,y))$.  Considering the first clause, we notice that there
    exists a homomorphism from $R_3^{(2)}(x)$ to $C_0'$, but this is
    because there was a homomorphism from $\forall y S_3(x,y)$ to
    $\forall y (S_0(x,y) \vee S_3(x,y))$; this does not extend to the
    entire clause (otherwise there would be a homomorphism
    $D_2 \rightarrow C_0$).  Finally, in $D_3$ we observe that the
    subclause $S_{J_0} \defeq S_2(x,y) \vee S_3(x,y)$ rewrites to
    $S'_2(x,y) \vee S'_3(x,y)$, because
    $R_{J_0}^{(1)}(x) \equiv R_{J_0}^{(2)}(x) \equiv \texttt{true}$
    since $S_2, S_3$ occur in separate clauses with $S_0$; one
    subtlety of the proof of the claim is precisely to ensure that and
    expression like this does not rewrite to
    $R_2^{(1)}(x) \wedge R_3^{(2)}(x) \wedge \forall y S_3'(x,y)$,
    because after converting to CNF it will render $C_0'$ redundant.
  \end{example}

  Continuing the proof of Theorem~\ref{th:forbidden}, we can now
  assume w.l.o.g. that for every $S_1$ common to $C_0, C_1$ and for
  every $S_0$ occurring only in $C_0$, every subclause of $C_0$
  containing $S_1$ must also contain $S_0$.  Next we show:

  \begin{claim} \label{claim:lemma:8:38}
    Every subclause $S_J(x,y)$ of $C_0$ has some common symbol with
    $C_1$ (this is Lemma 8.38 in~\cite{DBLP:journals/jacm/DalviS12}).
  \end{claim}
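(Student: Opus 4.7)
The plan is to prove the claim by contradiction. Suppose, for contradiction, that $S_J$ is a subclause of $C_0$ whose symbols are disjoint from those of $C_1$, i.e.\ $\symb(S_J) \cap \symb(C_1) = \emptyset$. Since $C_0$ and $C_1$ are consecutive on the left-to-right path, there exists at least one symbol $S_1 \in \symb(C_0) \cap \symb(C_1)$, and I will fix some subclause $S_{J'}$ of $C_0$ that contains $S_1$. Because $S_1 \in \symb(S_{J'}) \setminus \symb(S_J)$, the subclauses $S_J$ and $S_{J'}$ are distinct.

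The core of the argument is a case analysis on where a symbol $S \in \symb(S_J)$ can re-appear along the path. By Lemma~\ref{lemma:everything:on:path}, $S$ must occur in some clause of the minimal path $C_0, C_1, \ldots, C_k$, and by assumption it does not occur in $C_1$. If $S$ occurs in some $C_i$ with $i \geq 2$, then $S$ is a common symbol of $C_0$ and $C_i$, so $C_0, C_i, C_{i+1}, \ldots, C_k$ is a left-to-right path of length $k-i+1 \leq k-1$, contradicting the minimality of $k$. Therefore the only remaining possibility for every $S \in \symb(S_J)$ is that it occurs exclusively in $C_0$ among the clauses of the path.

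At this point I invoke the w.l.o.g.\ assumption derived from Claim~\ref{claim:lemma:8:36}: for every $S_1$ common to $C_0$ and $C_1$ and every $S_0$ that occurs only in $C_0$, every subclause of $C_0$ containing $S_1$ must also contain $S_0$. Applied to $S_{J'}$ and every $S \in \symb(S_J)$, this yields $\symb(S_J) \subseteq \symb(S_{J'})$, hence $J \subseteq J'$. Combined with $S_J \neq S_{J'}$, we obtain $J \subsetneq J'$, from which it follows that $\forall y S_J(x,y) \Rightarrow \forall y S_{J'}(x,y)$, so that the subclause $\forall y S_J(x,y)$ is subsumed by $\forall y S_{J'}(x,y)$ inside the disjunction that defines $C_0$. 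This contradicts our standing minimization hypothesis on clauses of $Q$, completing the proof.

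The main obstacle to executing this plan cleanly is verifying the last step of the minimization argument: type~II clauses are in prenex form $\forall x \forall y_1 \forall y_2 \cdots (S_{J_1}(x, y_1) \vee \cdots)$ with separate $y$-variables per subclause, so one must check that the homomorphism-based notion of minimization used in the paper (every self-homomorphism must be a bijection) indeed eliminates the subsumed subclause when $J \subsetneq J'$; this amounts to producing the homomorphism that maps $y_{J'}$ to $y_J$ and sends the atoms of $S_{J'}$ to atoms of $S_J$, which is straightforward once $J \subseteq J'$ is established. The earlier parts of the argument, based on path minimality and Lemma~\ref{lemma:everything:on:path}, are routine by comparison.
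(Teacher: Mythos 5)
Your proof is correct and takes essentially the same route as the paper's, which likewise fixes a subclause $S_{J'}$ of $C_0$ sharing a symbol with $C_1$, invokes the w.l.o.g.\ consequence of Claim~\ref{claim:lemma:8:36} to get $\symb(S_J)\subseteq\symb(S_{J'})$, and then contradicts the non-redundancy of the subclauses of $C_0$; you are if anything more careful, since you justify via minimality of the path (and Lemma~\ref{lemma:everything:on:path}) that every symbol of $S_J$ occurs only in $C_0$ — the hypothesis needed to apply the w.l.o.g.\ assumption — a step the paper's one-line proof leaves implicit. One directional slip in your closing remark: since $J\subsetneq J'$, the witnessing self-homomorphism of $C_0$ maps $y_J$ to $y_{J'}$ (sending the atoms $S_j(x,y_J)$, $j\in J$, into atoms of $S_{J'}$), not $y_{J'}$ to $y_J$, which would require $J'\subseteq J$; your conclusion that $\forall y\, S_J(x,y)$ is the absorbed subclause is nonetheless the right one.
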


  Indeed, if $S_J'$ is any other subclause of $C_0$ that does have
  some common symbol $S_1$ with $C_1$, then {\em all} symbols $S_0$ of
  $S_J$ must occur in $S_J'$, therefore
  $\forall y S_J(x,y) \vee \forall y S_J'(x,y) \equiv \forall y
  S_J(x,y)$, contradicting the assumption that $C_0$ has non redundant
  subclauses, proving claim~\ref{claim:lemma:8:38}.

  At this point we will restrict the left-right path $C_0, C_1,
  \ldots, C_k$ to be of minimal length.

  \begin{claim} \label{claim:x}
    Every symbol in $C_0$ is either $C_0$-ubiquitous, or occurs in $C_1$.
  \end{claim}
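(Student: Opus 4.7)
The plan is to show that any symbol $S_0$ occurring in $C_0$ but not in $C_1$ must occur only in $C_0$, and then invoke Claims~\ref{claim:lemma:8:36} and~\ref{claim:lemma:8:38} (which at this point in the proof we may assume already applied, so that their conclusions hold) to conclude that $S_0$ is $C_0$-ubiquitous. The dichotomy between these two cases is forced by the minimality of the left-to-right path.

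First, let $S_0$ be a symbol of $C_0$ not appearing in $C_1$. By Lemma~\ref{lemma:everything:on:path}, $S_0$ must occur somewhere on the minimal left-to-right path $C_0, C_1, \ldots, C_k$, since $Q$ is final. Suppose, toward a contradiction, that $S_0$ occurs in some clause $C_i$ with $i \geq 2$. Then $C_0$ and $C_i$ share the symbol $S_0$, so $C_0, C_i, C_{i+1}, \ldots, C_k$ is itself a left-to-right path of length $k - i + 1 \leq k - 1$, contradicting the minimality of the chosen path. Hence $S_0$ appears only in the clause $C_0$.

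Now I invoke the structural results already obtained in the proof of Theorem~\ref{th:forbidden}. By Claim~\ref{claim:lemma:8:36} (which at this stage we may assume applied, since otherwise we could replace $Q$ by a simpler unsafe query $Q'$ with strictly fewer binary symbols and recurse), for every symbol $S_1$ common to $C_0$ and $C_1$, every subclause of $C_0$ that contains $S_1$ must also contain $S_0$. By Claim~\ref{claim:lemma:8:38}, every subclause $S_J$ of $C_0$ shares at least one symbol with $C_1$. Combining these two facts, any subclause $S_J$ of $C_0$ contains some symbol $S_1 \in \symb(C_0)\cap\symb(C_1)$, and therefore must also contain $S_0$. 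Thus $S_0$ appears in every subclause of $C_0$, which is exactly the definition of being $C_0$-ubiquitous.

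The main (minor) subtlety in the plan is that we rely on the ``we may assume'' clause in applying Claim~\ref{claim:lemma:8:36}: if the hypothesis of the claim were ever triggered, we would have replaced $Q$ by a simpler unsafe query, and an induction on the number of binary symbols (together with $Q$ being finite) guarantees termination, so we can indeed assume no further simplification of this kind is possible. Apart from that, the argument is a direct shortest-path-plus-counting argument, with no calculation to carry out.
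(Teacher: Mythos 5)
Your proof is correct and follows essentially the same route as the paper: rule out $S_0$ occurring in $C_1,\ldots,C_k$ by the shortcut/minimality argument, then combine Claim~\ref{claim:lemma:8:38} (every subclause of $C_0$ meets $C_1$) with the w.l.o.g.\ consequence of Claim~\ref{claim:lemma:8:36} (every subclause containing such a shared symbol also contains $S_0$) to conclude $S_0$ is $C_0$-ubiquitous. The only cosmetic difference is your extra appeal to Lemma~\ref{lemma:everything:on:path}, which is harmless but unnecessary, since the minimality argument alone already shows $S_0$ avoids $C_1,\ldots,C_k$, which is all that Claim~\ref{claim:lemma:8:36} requires.
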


  \begin{proof}
    Let $S_0 \in \symb(C_0) - \symb(C_1)$.  By minimality, $S_0$ does
    not occur in $C_1, C_2, \ldots, C_k$.  We prove that $S_0$ is
    $C_0$-ubiquitous.  If it is not, then there exists a subclause
    $S_J$ that does not contain $S_0$.  By
    Claim~\ref{claim:lemma:8:38}, $S_J$ contains some symbol $S_1$
    common with $C_1$, thus, by Claim~\ref{claim:lemma:8:36} $S_J$
    must contain $S_0$, contradiction.
  \end{proof}

  \begin{claim} \label{claim:at:least:one:u}
  There exists at least one $C_0$-ubiquitous symbol.
  \end{claim}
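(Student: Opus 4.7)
The plan is to argue by contradiction. Suppose no symbol occurring in $C_0$ is $C_0$-ubiquitous. Then Claim~\ref{claim:x} forces every symbol in $C_0$ to appear in $C_1$, so $\symb(C_0) \subseteq \symb(C_1)$. In particular, for every subclause $\forall y\, S_{J_\ell}(x,y)$ of $C_0$, we have $\symb(S_{J_\ell}) \subseteq \symb(C_1)$. Since both $S_{J_\ell}(x,y)$ and $C_1(x,y)$ are disjunctions of binary atoms over the same two variables $x,y$, this inclusion of symbols lifts to an inclusion of atoms: every disjunct of $S_{J_\ell}(x,y)$ appears verbatim as a disjunct of $C_1(x,y)$, giving the pointwise implication $S_{J_\ell}(x,y) \Rightarrow C_1(x,y)$.

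I would then upgrade this to the clause-level implication $C_0 \Rightarrow C_1$, which is the crux of the argument. Fix any interpretation in which $C_0$ holds, and let $a$ be any value of the outer variable $x$. By $C_0$, some index $\ell$ witnesses $\forall y\, S_{J_\ell}(a,y)$; combining with $S_{J_\ell}(a,y) \Rightarrow C_1(a,y)$ for every $y$ yields $\forall y\, C_1(a,y)$. Since $a$ was arbitrary, we conclude $\forall x \forall y\, C_1(x,y)$, i.e., $C_1$ holds. Therefore $C_0 \wedge C_1 \equiv C_0$, which makes $C_1$ a redundant clause of $Q$, contradicting our standing assumption that $Q$ is minimized and contains no redundant clauses.

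I do not foresee any real obstacle here: the argument is a short bookkeeping consequence of Claim~\ref{claim:x} together with the minimality of $Q$. The only subtlety worth flagging is that a subclause of $C_0$ has the form $\forall y\, S_{J_\ell}(x,y)$, living inside the disjunction over $\ell$ which is itself bound by the outer $\forall x$. Accordingly, the implication $C_0 \Rightarrow C_1$ must be established one value of $x$ at a time, as above, rather than by trying to compare the two sentences in a single quantifier-free step. Once that scoping is handled carefully, the contradiction with non-redundancy of $C_1$ drops out immediately, establishing the existence of at least one $C_0$-ubiquitous symbol.
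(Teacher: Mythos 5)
Your proof is correct and follows essentially the same route as the paper: assume no $C_0$-ubiquitous symbol exists, invoke Claim~\ref{claim:x} to get $\symb(C_0)\subseteq\symb(C_1)$, and conclude that $C_1$ is redundant, contradicting minimality. The paper phrases the final step as the existence of a homomorphism $C_0\rightarrow C_1$ (possible because $C_1$ is a middle clause), which is exactly the implication $C_0\Rightarrow C_1$ that you unpack semantically.
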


  Indeed, otherwise, all symbols of $C_0$ occur in $C_1$, which
  implies that there exists a homomorphism $C_0 \rightarrow C_1$
  (because $C_1$ is a middle clause), contradicting the fact that $Q$
  has no redundant clauses; this proves the claim.  Finally, we prove:

  \begin{claim}
  	\label{claim:sameUbiquitousSymbls}
    If $C_0, C_0'$ are two left clauses then the $C_0$-ubiquitous
    symbols are the same as the $C_0'$-ubiquitous symbols.
  \end{claim}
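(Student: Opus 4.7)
My plan is to prove the claim by contradiction, and it suffices by symmetry to show $\mathrm{Ubi}(C_0) \subseteq \mathrm{Ubi}(C_0')$. So suppose there is a symbol $U$ that is $C_0$-ubiquitous but not $C_0'$-ubiquitous. Then there is at least one subclause $S_{K_0}$ of $C_0'$ that does not contain $U$. I would consider the substituted query $Q[U:=1]$ and exploit the key structural observation that $C_0$ is a type-II left clause $\forall x (\forall y S_{J_1}(x,y) \vee \cdots \vee \forall y S_{J_m}(x,y))$: because $U$ lies in every subclause, every disjunct $\forall y S_{J_i}(x,y)$ collapses to $\texttt{true}$ under $U:=1$, so $C_0$ itself becomes $\texttt{true}$ and disappears from $Q[U:=1]$. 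By the finality of $Q$, the residual query $Q[U:=1]$ must be safe.

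Next I would track what survives. The subclause $S_{K_0}$ is preserved verbatim since it contains no $U$, so a non-trivial residual of $C_0'$ remains as a left clause in $Q[U:=1]$. I would then split into two cases. \emph{Case 1}: $U$ occurs in at least one subclause of $C_0'$; then $C_0'$ also collapses to $\texttt{true}$, and I would need to invoke some third left clause together with the earlier observation that, in a final query, every left clause must be connected to a right clause. \emph{Case 2}: $U$ occurs in no subclause of $C_0'$; then $C_0'$ is preserved intact in $Q[U:=1]$. I would apply Lemma~\ref{lemma:everything:on:path} to $Q$ to conclude that every symbol of $C_0'$ must appear somewhere on the minimum path $C_0, C_1, \ldots, C_k$. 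By the minimality of $k$, any symbol of $C_0'$ that occurs in some $C_i$ with $i \geq 1$ must in fact occur in $C_1$ (otherwise the path $C_0', C_i, \ldots, C_k$ would be strictly shorter than $k$). Since $C_0'$ must connect to some right clause in $Q$, I would argue that some subclause of $C_0'$ shares a symbol with $C_1$, and indeed that the surviving subclause $S_{K_0}$ itself shares a symbol with $C_1$ after invoking Claim~\ref{claim:lemma:8:38} applied to $C_0'$ along its own minimum-length path. Then $S_{K_0}, C_1, C_2, \ldots, C_k$ would exhibit a left-to-right path in $Q[U:=1]$, contradicting safety.

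The main obstacle I anticipate is proving that the middle clauses $C_1, \ldots, C_{k-1}$ and the right clause $C_k$ do not contain $U$, so that the entire tail of the path survives intact in $Q[U:=1]$. If $U$ were in $C_1$, for example, then $C_1$ would collapse and the path would break at that point. To rule this out I plan to invoke Claim~\ref{claim:lemma:8:36} with $S_1 := U$ (common to $C_0$ and $C_1$) together with the $C_0$-ubiquity of $U$: every subclause of $C_0$ contains $U$, so the claim would force every $C_0$-only symbol to be $C_0$-ubiquitous as well, and then a careful inspection of $S_{K_0}$ (whose symbols lie in $\symb(C_0) \cup \symb(C_1)$) combined with finality applied to one such additional ubiquitous symbol produces the contradiction. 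Chasing this case analysis cleanly is the technical heart of the argument; the symmetric inclusion $\mathrm{Ubi}(C_0') \subseteq \mathrm{Ubi}(C_0)$ follows by swapping the roles of $C_0$ and $C_0'$ (using the minimum-length left-to-right path that starts at $C_0'$), yielding the claimed equality.
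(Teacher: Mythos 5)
Your route is genuinely different from the paper's---you try to force a contradiction out of the safety of $Q[U:=1]$ (finality), whereas the paper's argument for this claim is a short homomorphism/minimality argument that never appeals to finality---but as written it has a fatal gap. Your Case~1 is exactly the case your method cannot close: if $U$ occurs in \emph{some} but not all subclauses of $C_0'$ (which is precisely what ``not $C_0'$-ubiquitous'' permits), then $C_0'[U:=1]\equiv\texttt{true}$ as well, so \emph{both} left clauses vanish and your surviving subclause $S_{K_0}$ does not survive as part of any left clause. ``Invoke some third left clause'' is not an argument: $Q$ may have only these two left clauses, in which case $Q[U:=1]$ has no left clause at all and may be genuinely safe, so no contradiction is available from this direction. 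A second hole is your premise that the tail $C_1,\dots,C_k$ survives, i.e.\ that $U\notin\symb(C_1)$. Claim~\ref{claim:x} is an inclusive ``either/or,'' so a $C_0$-ubiquitous symbol could a priori also occur in $C_1$; the fact that it cannot is Lemma~\ref{lemma:forbidden:properties}\,(2), which is proved \emph{after} this claim and uses the forbidden-query property, so importing it here risks circularity, and your sketch (Claim~\ref{claim:lemma:8:36} with $S_1:=U$ ``plus finality'') only re-derives Claim~\ref{claim:x} rather than excluding $U$ from $C_1$. You would additionally need to check non-redundancy of the surviving clauses after minimizing $Q[U:=1]$, and to justify the closing ``by symmetry,'' which presupposes that $C_0'$ itself heads a minimal-length left-to-right path.

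The paper avoids all of this. Its key step is: if $C_0'$ shared no symbol with $C_1$, then (since every symbol of $Q$ lies on the minimal path) minimality would force every symbol of $C_0'$ into $\symb(C_0)\setminus\symb(C_1)$, hence each would be $C_0$-ubiquitous by Claim~\ref{claim:x}, hence contained in every subclause of $C_0$---yielding a homomorphism $C_0'\rightarrow C_0$ and contradicting non-redundancy. Therefore $C_0',C_1,\dots,C_k$ is itself a minimal left-to-right path, the earlier claims apply to it verbatim, and minimality forces every $C_0'$-ubiquitous symbol into $\symb(C_0)\setminus\symb(C_1)$, i.e.\ into the $C_0$-ubiquitous set; only then is the symmetry legitimate. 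You partially rediscover this combinatorial core inside your Case~2 (the minimality argument confining $\symb(C_0')$ to $\symb(C_0)\cup\symb(C_1)$); that core is essentially the whole proof, and the detour through $Q[U:=1]$ is what creates the unclosable Case~1.
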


  We first prove that $C_0'$ shares some common symbols with $C_1$.
  Assuming otherwise, since all symbols in $Q$ must occur on the path
  $C_0, C_1, \ldots, C_k$, and this path has minimal length, it
  follows that all symbols of $C_0'$ occur in $C_0$, none occur in
  $C_1$ (by the assumption in the claim), thus they are $C_0$-ubiquitous.  In particular,
  $\symb(C_0')\subseteq\symb(S_J)$ for any subclause $S_J$ of $C_0$,
  which implies that there exists homomorphisms
  $C_0' \rightarrow S_J \rightarrow C_0$ contradicting the assumption
  that $C_0$ is non-redundant.  Thus, $C_0'$ shares some common
  symbols with $C_1$, and therefore $C_0', C_1, C_2, \ldots, C_k$ is
  also a left-right path of minimal length.  Since the previous
  argument applies to this path as well, symbols in $C_0'$ are also
  partitioned into those common with $C_1$ and symbols ubiquitous in
  $C_0'$.  Consider a symbol ubiquitous in $C_0'$: it must occur on
  the path $C_0, C_1, \ldots, C_k$ and, since it doesn't occur in
  $C_1$, it must occur in $C_0$ (since the path is of minimal length),
  implying that it is ubiquitous in $C_0$.  By symmetry, every
  ubiquitous symbol in $C_0$ is also a ubiquitous symbol in $C_0'$.
  This proves the claim.
\end{proof}

Finally, we prove Theorem~\ref{th:forbidden:syntax}, which follows
from the next technical lemma.

\begin{lemma} \label{lemma:forbidden:properties} Let $Q$ be a
  forbidden query and let $C_0, C_1, \ldots, C_k$ be a left-right path
  of minimal length.  Then (1) There exists at least one left
  ubiquitous symbol that does not occur in $C_1$; in particular, the
  query $Q$ has at least one left ubiquitous symbol $U$.  (2) More: no
  ubiquitous symbol occurs in $C_1$.  (3) for every left clause
  $C = \forall y S_{J_1}(x,y) \vee \forall y S_{J_2}(x,y) \vee
  \cdots$, every subclause $S_{J_i}$ has some common symbol with
  $C_1$. (4) If there are more than one left ubiquitous symbols, then
  each of them occurs in some middle clause.  (5) For any clause $C$
  other than $C_1$, if $\symb(C) \subseteq \symb(C_0)\cup \symb(C_1)$
  then $C$ contains some ubiquitous symbol.  (6)  If $C$ is a middle
  clause containing some left ubiquitous symbol $U$, then $\symb(C)
  \cap \symb(C_2) = \emptyset$; equivalently, $\symb(C) \subseteq
  \symb(C_0) \cup \symb(C_1)$.
\end{lemma}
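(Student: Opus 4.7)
The plan is to derive the six items sequentially from the structural claims already proved inside Theorem~\ref{th:forbidden}---Claim~\ref{claim:x} (every symbol of $C_0$ is either $C_0$-ubiquitous or occurs in $C_1$), Claim~\ref{claim:at:least:one:u} (at least one $C_0$-ubiquitous symbol exists), and Claim~\ref{claim:sameUbiquitousSymbls} (ubiquitous symbols coincide across all left clauses)---together with Lemma~\ref{lemma:everything:on:path}, non-redundancy of clauses, minimality of the path $C_0,C_1,\ldots,C_k$, and the finality of $Q$.

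For item~(1), I would argue that if every $C_0$-ubiquitous symbol lay in $C_1$, then Claim~\ref{claim:x} would force $\symb(C_0)\subseteq\symb(C_1)$, and collapsing the inner $y$-variables of $C_0$'s prenex form onto $C_1$'s unique $y$ would yield a homomorphism $C_0\to C_1$, contradicting non-redundancy.

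For item~(2), which is the crux of the lemma, I would assume for contradiction that a ubiquitous symbol $U$ lies in $C_1$ and derive a contradiction in two steps. First, minimality gives $U\notin C_j$ for every $j\ge 2$, since otherwise $C_0,C_j,C_{j+1},\ldots,C_k$ would be a shorter left-right path via $U\in C_0\cap C_j$. Second, I would exclude the degenerate subcase $\symb(C_0)\cap\symb(C_1)=\set{U}$: Claim~\ref{claim:x} would then force every non-$U$ symbol of $C_0$ to be ubiquitous, so every subclause of $C_0$ would have to equal the full set of ubiquitous symbols, and after minimization $C_0$ would collapse to a single subclause, contradicting the Type-II requirement $m>1$. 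Hence a non-$U$ symbol $S_1'\in\symb(C_0)\cap\symb(C_1)$ exists, so after $U{:=}0$ the edge $(C_0,C_1)$ survives via $S_1'$ and the remaining edges $(C_j,C_{j+1})$, $j\ge 1$, are unaffected (since $U\notin C_j$ for $j\ge 2$). Thus $Q[U{:=}0]$ retains the minimal left-right path and is unsafe, contradicting the finality of $Q$.

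Items~(3)--(6) then follow by variations on the same template. For item~(3) I would extend Claim~\ref{claim:lemma:8:38} to every left clause $C_0'$, using Claim~\ref{claim:sameUbiquitousSymbls} to equate ubiquitous symbols with those of $C_0$, using item~(2) together with Claim~\ref{claim:x} on a minimal path through $C_0'$ to confine each subclause's non-ubiquitous symbols to $\symb(C_1)$, and invoking the Type-II requirement $m>1$ to force that non-ubiquitous part to be non-empty. For item~(4), assuming $U_i$ is left-ubiquitous, appears in no middle clause, and $r>1$, item~(2) implies no connecting symbol of the minimal path equals $U_i$, so $Q[U_i{:=}0]$ leaves every middle clause untouched while every left subclause retains the other ubiquitous $U_j$, preserving unsafety and again contradicting finality. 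For item~(6), a middle clause $C$ containing both a left-ubiquitous $U$ and a symbol shared with $C_2$ would make $C_0,C,C_2,\ldots,C_k$ a minimal path whose second clause $C$ carries a ubiquitous symbol, contradicting~(2); hence $\symb(C)\cap\symb(C_2)=\emptyset$, and Lemma~\ref{lemma:everything:on:path} then gives $\symb(C)\subseteq\symb(C_0)\cup\symb(C_1)$. For item~(5) I would apply a symmetric splicing argument: any clause $C\ne C_1$ with $\symb(C)\subseteq\symb(C_0)\cup\symb(C_1)$ and no ubiquitous symbol could replace $C_1$ on a minimal path, producing a second clause that violates~(2). The main obstacle throughout is item~(2), specifically the exclusion of the degenerate case $\symb(C_0)\cap\symb(C_1)=\set{U}$, which is precisely where the Type-II requirement $m>1$ on left clauses enters.
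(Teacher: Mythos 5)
Your overall template matches the paper's: items (1), (3) and (6) are argued essentially as in the paper, and your exclusion of the degenerate case $\symb(C_0)\cap\symb(C_1)=\set{U}$ in item (2) is a reasonable (if not strictly necessary) addition. However, there is a genuine gap at the crux, item (2). Showing that after the rewriting $U:=0$ every consecutive pair $C_i,C_{i+1}$ still shares a symbol is \emph{not} enough to conclude that $Q[U:=0]$ is unsafe: by the paper's standing convention the rewritten query is reduced, so a clause on the path may be deleted as redundant if some other residual clause $C[U:=0]$ admits a homomorphism into it, and then the path is broken. The paper's proof of (2) is entirely devoted to ruling this out, clause by clause. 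The hardest case --- a clause $C$ containing $U$ whose residual $C[U:=0]$ maps into $C_2$ --- is exactly where the full forbidden-query condition (not mere finality) is used: one observes that $C_0, C, C_2, \ldots, C_k$ is itself a minimal left-right path, so by Definition~\ref{def:forbidden} the non-ubiquitous symbol $S\in\symb(C_0)\cap\symb(C_1)$ must occur in $C$, while minimality gives $S\notin\symb(C_2)$, blocking the homomorphism. Your argument never invokes the path condition of Definition~\ref{def:forbidden} in item (2), which is a red flag, since item (2) fails for merely final queries (e.g.\ Example~\ref{ex:running:query:type:2} has no ubiquitous symbol at all). The same redundancy issue is silently skipped in your item (4), where the paper again explicitly verifies that no clause on the path becomes redundant in $Q[U_1:=0]$ (using that every left clause retains $U_2$, which does not occur in $C_1,\ldots,C_k$).

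Your item (5) also does not work as stated. A clause $C$ with $\symb(C)\subseteq\symb(C_0)\cup\symb(C_1)$ and no ubiquitous symbol need not share any symbol with $C_2$, so it cannot in general ``replace $C_1$ on a minimal path''; and even if it could, that would not contradict item (2), which forbids \emph{ubiquitous} symbols in the second clause --- precisely what $C$ lacks. The correct argument is shorter: by Claim~\ref{claim:x} every non-ubiquitous symbol of $C_0$ lies in $C_1$, so $\symb(C)\subseteq\symb(C_1)$, which yields a homomorphism $C\rightarrow C_1$ and contradicts non-redundancy of $C_1$ in $Q$.
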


\begin{proof} (1) Assuming otherwise, then by Claim~\ref{claim:x}, every symbol of $C_0$ must occur
  in $C_1$, $\symb(C_0) \subseteq \symb(C_1)$ which implies that there
  exists a homomorphism $C_0\rightarrow C_1$.  (2) We strengthen the
  claim by showing that no ubiquitous symbol can occur in $C_1$.  Let
  $U$ occur in $C_1$.  We claim that $Q[U:=0]$ is an unsafe query,
  contradicting the assumption that $Q$ is final. For that we
  prove that none of the clauses
  $C_0[U:=0], C_1[U:=0], C_2, \ldots, C_k$ is redundant in $Q[U:=0]$.
  It is easy to see that there is no homomorphism
  $C[U:=0] \rightarrow C_i[U:=0]$ for $i=0,1$, because both $C_0, C_1$
  contain $U$ and that would imply the existence of a homomorphism
  $C \rightarrow C_i$.  Assuming there exists a homomorphism
  $C[U:=0] \rightarrow C_i$ for $i \geq 2$; we must have $i=2$,
  because the path has minimal length.  Let $S$ be a non-ubiquitous
  symbol in $\symb(C_0) \cap \symb(C_1)$, thus $S \not\in \symb(C_2)$
  since the path is of minimal length.  Consider now the left-right
  path $C_0, C, C_2, C_3, \ldots, C_k$ in $Q$.  This is also of
  minimal length, hence by the definition of a forbidden query, $S$
  must occur in $C$, but this contradicts the existence of a
  homomorphism $C[U:=0] \rightarrow C_2$. (3) From Claim~\ref{claim:sameUbiquitousSymbls}, we know that the $C_0$-ubiquitous symbols are the same as the $C$-ubiquitous symbols. Since $C_0,C_1,\dots,C_k$ is of minimal length and contains all relational symbols of $Q$, then $\symb(C) \subseteq \symb(C_0)\cup \symb(C_1)$.
  Now, suppose that $S_{J_i}$ is a
  sub-clause of $C$ that does not contain any symbol from $C_1$.  
  This means that $S_{J_i}$ contains only left ubiquitous symbols. Since the query is minimized, then $S_{J_i}$ is the only subclause of $C$. But then, $C\rightarrow C_0$ because every subclause of $C_0$ contains all ubiquitous symbols (i.e., $\symb(S_{J_i})$), which is a contradiction.
  
  \eat{
  therefore
  $S_{J_i}$ is redundant, because there exists at least one other
  subclause. \batya{Shouldn't it be that there is a homomorphism from
    $S_{J_i}$ to every other left clause (because $S_{J_i}$ contains
    only ubiquitous symbols) ?} \dan{correct, we must argue that.  We
    must also argue that $S_{J_i}$ cannot be the single subclause in
    $C$; perhaps that's obvious, maybe we don't need it?}
}
 (4) Suppose there are at least two left ubiquitous
  symbols $U_1, U_2$, and suppose $U_1$ does not occur in any middle
  clause.  We claim that $Q[U_1 := 0]$ is an unsafe query,
  contradicting the fact that $Q$ is final.  To prove the claim, we
  show that none of the clauses $C_0[U_1 := 0], C_1, \ldots, C_k$ in
  $Q[U_1:=0]$ is redundant.  Assume that there exists a homomorphism
  $C[U_1:=0] \rightarrow C_i$ for $i \geq 1$; then $C$ must contain
  $U_1$ (otherwise there exists a homomorphism $C \rightarrow C_i$),
  hence $C$ is a left clause, but in that case it also contains $U_2$,
  hence the homomorphism is not possible.  Similarly, if there exists
  a homomorphism $C[U_1:=0] \rightarrow C_0[U_1:=0]$, then there
  exists a homomorphism $C \rightarrow C_0$, because $C_0$ contains
  $U_1$ in every subclause.  (5) Suppose
  $\symb(C) \subseteq \symb(C_0) \cup \symb(C_1)$.  If $C$ has no
  ubiquitous symbols, then we claim that
  $\symb(C) \subseteq \symb(C_1)$, but this implies that there exists
  a homomorphism $C \rightarrow C_1$ which is a contradiction.  To
  prove the claim, let $S$ be any symbol in $C$.  Since $S$ is not
  ubiquitous, if $S \in \symb(C_0)$ then it must also belong to $C_1$,
  proving the claim.  (6) Assume otherwise, then $C_0, C, C_2, \ldots,
  C_k$ is also a left-right path of minimal length, hence by item (2)
  $C$ cannot contain any ubiquitous symbol, contradiction.  
\end{proof}

\subsection{Computing $Q$ on a Block-database}

Recall that $\Phi_\pdb(Q)$ denotes the lineage of $Q$ on the TID
$\pdb$.  As before, a block $B(u,v)$ is a bipartite TID with two
distinguished constants $u, v$.  We denote by $U(B), V(B)$ the two
partitions of the domain of $B$, that is $\Dom(B) = U(B) \cup V(B)$.
By definition, $u \in U(B), v \in V(B)$.  We define:
\begin{align}
    Y_{\alpha \beta}(u,v) \defeq & \Phi_{B(u,v)}(Q_{\alpha \beta}(u,v))  \label{eq:y:alpha:beta}
\end{align}
where $Q_{\alpha \beta}$ was introduced in
Eq.~\eqref{eq:q:alpha:beta}.  We prove the following theorem, which is
the analogous of Theorem~\ref{th:disjointBlocks} for type II queries:
Recall that $L(\mb G), L(\mb H)$ are the supports of the left lattice,
and the right lattice of the query, and this includes $\hat 1$.  In
this section we want to remove $\hat 1$, and denote $L_0(\mb G) \defeq
L(\mb G)-\set{\hat 1}$, $L_0(\mb H) \defeq L(\mb H) - \set{\hat 1}$.

\begin{theorem} \label{th:pq:mobius} Let $U,V$ be two disjoint sets,
  and let $\pdb = \bigcup_{u\in U, v \in V} B(u,v)$ be a disjoint
  union of blocks (that is, no two blocks share any tuple or any
  constant, expect for endpoints $u,v$).  Then:
  \begin{align}
    \Pr(Q) &=  (-1)^{|U|+|V|}\sum_{\substack{\sigma : U \rightarrow L_0(\mb{G}), \\\tau: V  \rightarrow L_0(\mb{H})}}
    \left(\prod_{u\in U} \mu(\sigma(u))\right) \cdot \left(\prod_{v\in V} \mu(\tau(v))\right) \cdot  
    \prod_{u \in U, v \in V}  \Pr(Y_{\sigma(u) \tau(v)}(u,v)) \label{eq:p:q:type:2}
  \end{align}
\end{theorem}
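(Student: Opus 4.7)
The plan is to combine two independent mechanisms: a M\"obius inversion that replaces the outer disjunctions in $Q_{\text{left}}$ and $Q_{\text{right}}$ by signed sums of conjunctions (one inversion per endpoint), and the block-disjointness of $\pdb$ which makes the remaining probability factor over pairs $(u,v)\in U\times V$. The M\"obius step will produce the outer sign $(-1)^{|U|+|V|}$ and the product of $\mu(\sigma(u))\cdot\mu(\tau(v))$ weights, while the factorization step will produce the product of block probabilities $\Pr(Y_{\sigma(u)\tau(v)}(u,v))$.

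I would first rewrite the lineage as
\begin{align*}
\Phi_\pdb(Q) \equiv \bigwedge_{u\in U} G(u) \;\wedge\; \Phi_\pdb(Q_{\text{middle}}) \;\wedge\; \bigwedge_{v\in V} H(v),
\end{align*}
where $G(u)=\bigvee_{i=1}^m \forall y\, G_i(u,y)$ evaluated on $\pdb$, and analogously $H(v)$. Applied pointwise to the indicator $\mathbb{1}[G(u)]$, the lattice identity behind Def.~\ref{def:lattice} gives $\mathbb{1}[G(u)] = -\sum_{\alpha\in L_0(\mb G)} \mu(\alpha)\,\mathbb{1}[G_\alpha(u)]$, and similarly for $H(v)$. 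Substituting these into $\Pr(Q)=\mathbb{E}\bigl[\prod_u\mathbb{1}[G(u)]\cdot\mathbb{1}[\Phi_\pdb(Q_{\text{middle}})]\cdot\prod_v\mathbb{1}[H(v)]\bigr]$ and interchanging sum and expectation yields
\begin{align*}
\Pr(Q)=(-1)^{|U|+|V|}\!\!\sum_{\sigma,\tau}\Bigl(\prod_u\mu(\sigma(u))\Bigr)\Bigl(\prod_v\mu(\tau(v))\Bigr)\cdot\Pr\!\Bigl[\bigwedge_u G_{\sigma(u)}(u)\wedge Q_{\text{middle}}\wedge\bigwedge_v H_{\tau(v)}(v)\Bigr],
\end{align*}
the sum ranging over $\sigma:U\to L_0(\mb G)$ and $\tau:V\to L_0(\mb H)$.

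Next I would show that the inner probability equals $\prod_{u,v}\Pr(Y_{\sigma(u)\tau(v)}(u,v))$. Since every tuple outside $\bigcup_{u,v} B(u,v)$ has probability $1$, we may write $G_{\sigma(u)}(u) \equiv \bigwedge_{v\in V} \bigwedge_{y\in V(B(u,v))} G_{\sigma(u)}(u,y)$, and analogously $H_{\tau(v)}(v)$. Grouping $\Phi_\pdb(Q_{\text{middle}})=\bigwedge_{u,v}\bigwedge_{(a,b)\in B(u,v)} C(a,b)$ by block, the inner conjunction becomes $\bigwedge_{u,v} \Phi_{B(u,v)}\bigl(G_{\sigma(u)}(u)\wedge Q_{\text{middle}}\wedge H_{\tau(v)}(v)\bigr)$. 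Because $\sigma(u),\tau(v)\in L_0$ are $\neq\hat 1$, Eq.~\eqref{eq:q:alpha:beta:2} gives $Q_{\sigma(u)\tau(v)}(x,y)\equiv G_{\sigma(u)}(x,y)\wedge C(x,y)\wedge H_{\tau(v)}(x,y)$ under the universal quantifiers, so the per-block formula is exactly $Y_{\sigma(u)\tau(v)}(u,v)$.

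Finally, since disjoint blocks share no tuple, the Boolean formulas $Y_{\sigma(u)\tau(v)}(u,v)$ are over pairwise disjoint sets of random variables and are therefore mutually independent under $\Pr_\pdb$; this justifies $\Pr\bigl[\bigwedge_{u,v} Y_{\sigma(u)\tau(v)}(u,v)\bigr]=\prod_{u,v}\Pr(Y_{\sigma(u)\tau(v)}(u,v))$, completing the derivation of~\eqref{eq:p:q:type:2}. The only delicate step is the regrouping of the big conjunction by block in the second paragraph: one has to be careful that the quantifier $\forall y$ inside $G_{\sigma(u)}(u)$ ranges over the entire right-hand domain of $\pdb$ (not just $V$), and that all $y$ outside $\bigcup_{v}V(B(u,v))$ contribute only tautological tuples; once this is checked, the remaining combinatorics is mechanical.
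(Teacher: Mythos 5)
Your overall strategy (one M\"obius inversion per endpoint, then factorization over disjoint blocks) is the same as the paper's, but as written the derivation has a genuine bookkeeping error. Your opening identity $\Phi_\pdb(Q)\equiv\bigwedge_{u\in U}G(u)\wedge\Phi_\pdb(Q_{\text{middle}})\wedge\bigwedge_{v\in V}H(v)$ drops the groundings of $Q_{\text{left}}$ at the \emph{internal} left constants of the blocks (and symmetrically for $Q_{\text{right}}$): the lineage $\Phi_\pdb(Q_{\text{left}})$ is a conjunction over all $a$ in the left partition of $\Dom(\pdb)$, not only over $a\in U$. Those dropped conjuncts never reappear, so at the end you obtain the per-block formula $\Phi_{B(u,v)}(G_{\sigma(u)}(u)\wedge Q_{\text{middle}}\wedge H_{\tau(v)}(v))$, which is strictly weaker than $Y_{\sigma(u)\tau(v)}(u,v)=\Phi_{B(u,v)}(G_{\sigma(u)}(u)\wedge Q\wedge H_{\tau(v)}(v))$. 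Your appeal to Eq.~\eqref{eq:q:alpha:beta:2} to identify the two does not apply here: that identity rests on $\forall x\,G_\alpha(x)\Rightarrow Q_{\text{left}}$, i.e.\ it needs $G_\alpha$ asserted at \emph{every} left constant, whereas in the block lineage $G_{\sigma(u)}(u)$ constrains only the endpoint $u$ and says nothing about the block's internal points. The fix is exactly what the paper does: keep $\bigwedge_{a\in\bar U\setminus U}\Phi_\pdb(\bigvee_i\forall y\,G_i(a,y))=\bigwedge_{u,v}\Phi_{B(u,v)}(Q_{\text{left}})$ from the start, so that it is absorbed into $\Phi_{B(u,v)}(Q)$ and hence into $Y_{\sigma(u)\tau(v)}(u,v)$.

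A second, smaller imprecision: the pointwise identity you state for the indicator of $G(u)$ uses the M\"obius function of the lattice $\hat L(\mb G)$, which is defined by the formulas $G_i\wedge C$, not by the $G_i$ alone. The closure $\bar S$ satisfies $G_{\bar S}\wedge C\equiv G_S\wedge C$ but in general not $G_{\bar S}\equiv G_S$, so the regrouped inclusion--exclusion identity is false for $G(u)$ in isolation; it only becomes valid once multiplied by the indicator of $\Phi_\pdb(Q_{\text{middle}})$ (equivalently, after conjoining the redundant $\forall y\,C(u,y)$, which is precisely why the paper inserts that conjunct before inverting). Since you do perform the substitution inside an expectation where $\Phi_\pdb(Q_{\text{middle}})$ is present, this step is repairable in one line, but it should be stated as an identity of the products, not of the individual indicators.
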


The sum above has exponentially many terms, namely
$2^{|L_0(\mb G)| + |L_0(\mb H)|}$; we will show later how to use this formula.

\begin{proof}
  The TID $\pdb$ is bipartite, hence its domain is partitioned into
  $\Dom(\pdb) = \bar U \cup \bar V$.  We have $U \subseteq \bar U$,
  $V \subseteq \bar V$, and the inclusions may be strict since each
  block $B(u,v)$ may have its own left and right constants, other than
  $u,v$.

  By definition,
  $\Phi_\pdb(Q)=\Phi_\pdb(Q_{\text{left}})\wedge
  \Phi_\pdb(Q_{\text{middle}}) \wedge \Phi_\pdb(Q_{\text{right}})$.
  Since
  $Q_{\text{left}} =\forall x(\bigvee_{i=1,m} \forall y G_i(x,y))$,
  its lineage is given by
  $\Phi_\pdb(Q_{\text{left}}) = \bigwedge_{a \in \bar U}
  \Phi_\pdb(\bigvee_{i=1,m}\forall y G_i(a,y))$.  When
  $a \in \bar U - U$, then the lineage
  $\Phi_\pdb(\bigvee_{i=1,m}\forall y G_i(a,y))$ is equal to the
  lineage restricted to the unique block $B(u,v)$ that contains $a$,
  i.e.
  $\Phi_\pdb(\bigvee_{i=1,m}\forall y G_i(a,y)) =
  \Phi_{B(u,v)}(\bigvee_{i=1,m}\forall y G_i(a,y))$, because, by
  construction, $a$ can only be connected to constants in the same
  block.  When $a \in U$, then the lineage may span multiple blocks.
  Using this observation, and repeating it for $Q_{\text{right}}$, we
  derive the following expressions for the lineages:
  \begin{align}
    \Phi_\pdb(Q_{\text{left}}) = & \bigwedge_{a \in \bar U} \Phi_\pdb(\bigvee_i\forall y(G_i(a,y)))
= \left(\bigwedge_{u \in  U} \Phi_\pdb(\bigvee_i\forall y(G_i(u,y)))\right)\wedge \left(\bigwedge_{a \in \bar U - U}\Phi_\pdb(\bigvee_i\forall y(G_i(a,y))) \right)\nonumber \\
= & \left(\bigwedge_{u \in U} \Phi_\pdb(\bigvee_i\forall y(G_i(u,y)))\right)\wedge \left(\bigwedge_{u\in U, v \in V}\Phi_{B(u,v)}(Q_{\text{left}})\right)\nonumber \\
    \Phi_\pdb(Q_{\text{right}}) = & \left(\bigwedge_{v \in V} \Phi_\pdb(\bigvee_j\forall x(H_j(x,v)))\right)\wedge \left(\bigwedge_{u\in U, v \in V}\Phi_{B(u,v)}(Q_{\text{right}})\right)\nonumber \\
    \Phi_\pdb(Q_{\text{middle}}) = & \bigvee_{u\in U, v \in V} \Phi_{B(u,v)}(Q_{\text{middle}})\nonumber\\
    \Phi_\pdb(Q) = & \Phi_\pdb(Q_{\text{left}}) \wedge \Phi_\pdb(Q_{\text{middle}}) \wedge \Phi_\pdb(Q_{\text{right}})\nonumber \\
 = & \left(\bigwedge_{u \in U} \Phi_\pdb(\bigvee_i\forall y(G_i(u,y) \underbrace{\wedge C(u,y)}_{\mbox{part of $Q$}}))\right)\wedge\bigwedge_{u\in U, v \in V} \Phi_{B(u,v)}(Q) \wedge \left(\bigwedge_{v \in V} \Phi_\pdb(\bigvee_j\forall x(H_j(x,v) \underbrace{\wedge C(x,v)}_{\mbox{part of $Q$}}))\right)\label{eq:change:1}
  \end{align}
  In the last line we added the redundant terms $C(u,y)$ and $C(x,v)$
  (recall that $Q_{\texttt{middle}}=\forall x \forall y C(x,y)$).  Now
  we will apply M\"obius' inversion formula on the expression for
  $\Phi_\pdb(Q)$ above, repeatedly, once for each $u \in U$, then once
  for each $v \in V$.  We show how to do it for one fixed constant
  $u_1 \in U$.  First, we separate $u_1$ from the conjunction
  $\bigwedge_{u \in U}$ above:

\begin{align*}
    \Phi_\pdb(Q) = & \Phi_\pdb(\bigvee_i\forall y(G_i(u_1,y)\wedge C(u_1,y)))\\
\wedge  & \underbrace{\left(\bigwedge_{u \in U-\set{u_1}} \Phi_\pdb(\bigvee_i\forall y(G_i(u,y)\wedge C(u,y)))\right)\wedge\bigwedge_{u\in U, v \in V} \Phi_{B(u,v)}(Q) \wedge \left(\bigwedge_{v \in V} \Phi_\pdb(\bigvee_j\forall x(H_j(x,v)\wedge C(x,v)))\right)}_{\defeq\text{REST}}\\
 = & \Phi_\pdb(\bigvee_i\forall y(G_i(u_1,y)\wedge C(u_1,y)))\wedge \text{REST} =  \bigvee_i \left(\Phi_\pdb(\forall y G_i(u_1,y)\wedge C(u_1,y)) \wedge \text{REST}\right) \\
\pr(Q) = \pr(\Phi_\pdb(Q)) =  & - \sum_{\alpha_1 \in L_0(\mb G)}\mu(\alpha_1) \Pr\left(\Phi_\pdb(\forall y(G_{\alpha_1}(u_1,y)\wedge C(u_1,y)))\wedge\text{REST}\right)
\end{align*}
Recall that the lattice $\hat L_0(\mb G)$ was defined by the formulas
$G_1\wedge C, G_2 \wedge C, \ldots$, and not by $G_1, G_2, \ldots$ We
introduced earlier the term $C(u_1,y)$ in order to be able to apply
the M\"obius formula for this lattice.  Next, we consider formula
$\text{REST}$, which contains the conjunction
$\bigwedge_{u \in U-\set{u_1}}$.  We separate a second
$u_2 \in U-\set{u_1}$, and repeat this argument for
$u_2, u_3, \ldots \in U$, reducing the $\text{REST}$ formula, until we
arrive at:

\begin{align}
  \pr(Q) = & (-1)^{|U|} \sum_{\sigma: U \rightarrow L_0(\mb G)}\left(\prod_{u \in U}\mu(\sigma(u))\right) \pr(\bigwedge_{u \in U}\Phi_\pdb(\forall y(G_{\sigma(u)}(u,y)\wedge C(u,y))) \wedge \text{REST})\nonumber\\
\mbox{where } \text{REST} \defeq & \bigwedge_{u\in U, v \in V} \Phi_{B(u,v)}(Q) \wedge \left(\bigwedge_{v \in V} \Phi_\pdb(\bigvee_j\forall x(H_j(x,v)\wedge C(x,v)))\right)\label{eq:rest:right}
\end{align}

We repeat the same process on the right clauses $\bigwedge_{v \in V} \Phi_\pdb\left(\bigvee_j\forall x(H_j(x,v)\wedge C(x,v))\right)$, and obtain:

\begin{align}
  \pr(Q) = & (-1)^{|U|+|V|} \sum_{\sigma: U \rightarrow L_0(\mb G),  \tau: V \rightarrow L_0(\mb H)}
\left(\prod_{u \in U}\mu(\sigma(u))\right)\left(\prod_{v \in V}\mu(\tau(v))\right)\nonumber\\
  & \pr(\bigwedge_{u \in U}\Phi_\pdb(\forall y(G_{\sigma(u)}(u,y)\wedge C(u,y))) \wedge \bigwedge_{u\in U, v \in V} \Phi_{B(u,v)}(Q)\wedge \bigwedge_{v \in V} \Phi_\pdb(\forall x(H_{\tau(v)}(x,v)\wedge C(x,v))))\label{eq:change:2}
\end{align}

Next, we observe that:
\begin{align}
\Phi_\pdb(\forall y(G_{\sigma(u)}(u,y)\wedge C(u,y)))= & \bigwedge_{v \in V} \Phi_{B(u,v)}(\forall y(G_{\sigma(u)}(u,y)\wedge C(u,y)))\label{eq:change:4}\\
\Phi_\pdb(\forall x(H_{\tau(v)}(x,v)\wedge C(x,v))) = & \bigwedge_{u \in U} \Phi_{B(u,v)}(\forall x(H_{\tau(v)}(x,v)\wedge C(x,v)))\nonumber
\end{align}
Since $C$ is the middle part of $Q$, we have  $\Phi_{B(u,v)}(C(u,y)
\wedge Q) \equiv \Phi_{B(u,v)}(Q)$ and therefore we can eliminate
$C(u,v)$ and obtain:
\begin{align}
  \pr(Q) = & (-1)^{|U|+|V|} \sum_{\sigma: U \rightarrow L_0(\mb G),  \tau: V \rightarrow L_0(\mb H)}
\left(\prod_{u \in U}\mu(\sigma(u))\right)\left(\prod_{v \in V}\mu(\tau(v))\right)
  \pr(\bigwedge_{u\in U, v \in V} \Phi_{B(u,v)}(\forall y(G_{\sigma(u)}(u,y)))\wedge \Phi_{B(u,v)}(Q) \wedge \Phi_{B(u,v)}(\forall x(H_{\tau(v)}(x,v))))\nonumber\\
  = &(-1)^{|U|+|V|} \sum_{\sigma: U \rightarrow L_0(\mb G),  \tau: V \rightarrow L_0(\mb H)}
\left(\prod_{u \in U}\mu(\sigma(u))\right)\left(\prod_{v \in V}\mu(\tau(v))\right)
  \pr(\bigwedge_{u\in U, v \in V} \Phi_{B(u,v)}(\forall y(G_{\sigma(u)}(u,y))\wedge Q \wedge \forall x(H_{\tau(v)}(x,v))))\label{eq:change:3}\\
  = &(-1)^{|U|+|V|} \sum_{\sigma: U \rightarrow L_0(\mb G),  \tau: V \rightarrow L_0(\mb H)}
\left(\prod_{u \in U}\mu(\sigma(u))\right)\left(\prod_{v \in V}\mu(\tau(v))\right)
  \prod_{u\in U, v \in V}\pr(\Phi_{B(u,v)}(\forall y(G_{\sigma(u)}(u,y))\wedge Q \wedge \forall x(H_{\tau(v)}(x,v))))\nonumber
\end{align}
This completes the proof of Theorem~\ref{th:pq:mobius}.
\end{proof}

Let $G(U,V,E)$ be a bipartite graph, defining a $\ccp$ problem.  As
for Type I queries (Eq.~\eqref{eq:fTypeIQ}) we construct a
TID that is a union of disjoint blocks
$\bigcup_{u\in U, v \in V} B(u,v)$, where for every non-edge
$(a,b) \not\in E$ we define $B(a,b)$ to be the trivial block where all
tuples have probability 1.  However, unlike the
formula~\eqref{eq:fTypeIQ} for type I queries, now we need
to cope the products of M\"obius functions in
Eq.~\eqref{eq:p:q:type:2}.  To remove those, we extend the graph by
adding, for each node $u\in U$,  one outgoing edge $(u,u')$ where $u'$ is a
fresh node with no other incoming edges; similarly, we add edges
$(v',v)$, one for each node $v \in V$.  More precisely, the new graph
is $(U\cup V', V \cup U', E \cup E')$, where $V' = \setof{v'}{v \in
  V}$, $U' = \setof{u'}{u \in U}$ and $E' = \setof{(u,u')}{u \in U}
\cup \setof{(v,v')}{v \in V}$.  Then, we have:
\begin{corollary} \label{cor:pq:mobius}  Given the notations above:
	\begin{align*}
	\Pr(Q) =  \sum_{\substack{\sigma : U \rightarrow L_0(\mb{G}), \\ \tau: V  \rightarrow L_0(\mb{H})}}
	& \left(\prod_{u\in U} \mu(\sigma(u))\right) \cdot \left(\prod_{v\in V} \mu(\tau(v))\right) \cdot
	  \prod_{(u,v)\in E}  \Pr(Y_{\sigma(u) \tau(v)}(u,v)) \cdot
	  \prod_{u \in  U} \Pr(Y_{\sigma(u),\hat 1}(u,u')) \cdot
	  \prod_{v \in V} \Pr(Y_{\hat 1 \tau(v)}(v',v)) \\
	\mbox{where }   Y_{\alpha \beta}(u,v) \defeq & \Phi_{B(u,v)}(G_\alpha(u) \wedge Q \wedge H_\alpha(v))\\
          Y_{\alpha \hat 1}(u,u') \defeq & \Phi_{B(u,u')}(G_\alpha(u) \wedge Q)=\Phi_{B(u,u')}(G_\alpha(u) \wedge Q \wedge H_{\hat 1}(u'))\ \ \ \ \mbox{see Eq.~\eqref{eq:q:alpha:1:2}}\\
          Y_{\hat 1 \beta}(v',v) \defeq & \Phi_{B(v',v)}(Q \wedge H_\alpha(v))= \Phi_{B(v',v)}(G_{\hat 1}(v') \wedge Q \wedge H_\alpha(v))
	\end{align*}
\end{corollary}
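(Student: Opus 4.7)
The plan is to derive the corollary as a direct application of Theorem~\ref{th:pq:mobius} to the extended bipartite graph $(U\cup V',\,V\cup U',\,E\cup E')$, followed by contracting the sums over the auxiliary vertices $V'$ and $U'$ using M\"obius inversion. Setting $\tilde U \eqdef U\cup V'$ and $\tilde V \eqdef V\cup U'$, the theorem gives an expression for $\pr(Q)$ indexed by colorings $\tilde\sigma:\tilde U \to L_0(\mb G)$, $\tilde\tau:\tilde V\to L_0(\mb H)$; its prefactor $(-1)^{|\tilde U|+|\tilde V|}$ is $+1$ since $|\tilde U|+|\tilde V|=2(|U|+|V|)$. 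First I observe that every pair $(a,b)\notin E\cup E'$ is assigned a trivial block in which all tuples have probability $1$, so $\pr(Y_{\alpha\beta}(a,b))=1$ for every $\alpha,\beta$. Consequently the product $\prod_{(w_1,w_2)\in\tilde U\times\tilde V}\pr(Y_{\tilde\sigma(w_1)\tilde\tau(w_2)}(w_1,w_2))$ collapses to a product over edges of $E\cup E'$ only, which splits into the three displayed products in the corollary.

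Second, I contract the summations over the ``leaf'' colors, one per auxiliary vertex. By construction each $u'\in U'$ lies on a unique edge $(u,u')\in E'$, so the variable $\tilde\tau(u')$ appears in exactly one M\"obius factor $\mu(\tilde\tau(u'))$ and one probability $\pr(Y_{\tilde\sigma(u),\tilde\tau(u')}(u,u'))$. I pull it out of the sum and apply the M\"obius inversion identity
\[
\sum_{\beta\in L_0(\mb H)}\mu(\beta)\,\pr\!\bigl(Y_{\tilde\sigma(u),\beta}(u,u')\bigr) \;=\; -\pr\!\bigl(Y_{\tilde\sigma(u),\hat 1}(u,u')\bigr),
\]
and symmetrically at each $v'\in V'$ I obtain $-\pr(Y_{\hat 1,\tilde\tau(v)}(v',v))$. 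Performing all $|U'|+|V'|=|U|+|V|$ contractions and collecting the resulting sign with the leading $+1$ from the theorem assembles the stated identity.

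The main obstacle is justifying the contraction step above. The M\"obius function $\mu$ of Definition~\ref{def:lattice} is built from the abstract lattice $L(\mb H)$ of the formulas $H_1(y),\ldots,H_n(y)$, whereas the family $\{Y_{\tilde\sigma(u),\beta}(u,u')\}_\beta$ is obtained by also conjoining $G_{\tilde\sigma(u)}(u)\wedge Q$ and evaluating the lineage on $B(u,u')$; these operations may introduce additional equivalences, so \emph{a priori} the M\"obius function of the $Y$-lattice need not coincide with $\mu$. The resolution is that the lifted identity $\pr(A\wedge F_{\hat 1})=-\sum_{\beta\in L_0(\mb F)}\mu(\beta)\,\pr(A\wedge F_\beta)$ is a purely algebraic consequence of inclusion-exclusion over the nonempty subsets of $[m]$, regrouped by the closure relation of $L(\mb F)$: closure-equivalent subsets $\alpha$ yield equivalent conjunctions $F_\alpha$ and hence equivalent conjunctions $A\wedge F_\alpha$ for any common factor $A$, so the coefficients $\mu(\beta)$ of $L(\mb F)$ transfer verbatim even when $L(\{A\wedge F_\beta\})$ is strictly coarser. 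Instantiating this with $A\equiv G_{\tilde\sigma(u)}(u)\wedge Q$ and $\mb F=\mb H$ (and symmetrically with $\mb F=\mb G$ at vertices $v'$) validates the M\"obius contraction and completes the derivation.
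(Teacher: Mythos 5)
Your derivation is correct in substance, but it takes a genuinely different route from the paper's. The paper does not invoke Theorem~\ref{th:pq:mobius} as a black box on the extended graph; instead it re-runs that theorem's proof on the extended database, applying M\"obius inversion only at the vertices of $U$ and $V$ (never at $U'$ or $V'$), and then observes that the leftover lineage residuals $\bigwedge_{v'\in V'}\Phi_\Delta(\bigvee_i \forall y\, G_i(v',y))$ are absorbed by $\bigwedge_{v\in V}\Phi_{B(v',v)}(Q)$ because $\forall x(\bigvee_i\forall y\,G_i(x,y))$ is a conjunct of $Q$ (and symmetrically on the right). You instead apply the theorem verbatim to $(\tilde U,\tilde V)=(U\cup V',\,V\cup U')$, collapse the non-edge factors to $1$, and then \emph{contract} the sums over the auxiliary colors by running M\"obius inversion in reverse at each $u'$ and $v'$. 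Your route buys modularity --- the theorem is reused rather than reproved --- at the cost of having to justify the lifted identity $\sum_{\beta\in L_0(\mb H)}\mu(\beta)\Pr(Y_{\alpha\beta}(u,u'))=-\Pr(Y_{\alpha\hat 1}(u,u'))$, where the common conjunct $\Phi_{B(u,u')}(G_\alpha(u)\wedge Q)$ could in principle coarsen the lattice. You correctly isolate this as the one nontrivial point, and your resolution is sound: regroup the raw inclusion--exclusion over nonempty $S\subseteq[n]$ by the closure operator of $L(\mb H)$, using that $A\wedge F_S\equiv A\wedge F_{\bar S}$ for any common factor $A$, so the coefficients $\mu(\beta)$ of $L(\mb H)$ apply verbatim even when the lattice generated by the conjoined family is strictly coarser. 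This is exactly the observation that makes the contraction legitimate, and the paper's absorption argument is the mirror image of it.

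One caveat on signs. Each of your $|U'|+|V'|=|U|+|V|$ contractions contributes a factor $-1$, so your derivation terminates with a prefactor $(-1)^{|U|+|V|}$, not the $+1$ you claim when you say the signs ``assemble the stated identity.'' Note, however, that the paper's own proof (which applies M\"obius inversion exactly $|U|+|V|$ times) produces the same prefactor, which the displayed statement of the corollary silently omits; so this discrepancy is inherited from the paper rather than introduced by your argument, and it is immaterial to the downstream reduction since the global sign is known and can be divided out. You should simply state the prefactor as $(-1)^{|U|+|V|}$ rather than asserting that it cancels.
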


\begin{proof}
  (Sketch) The proof consists of a straightforward extension of the
  proof of Theorem~\ref{th:pq:mobius}.  We begin the proof similarly,
  but in Eq.\eqref{eq:change:1} we replace the sets $U$ and $V$ by
  $U\cup V'$ and $V \cup U'$ respectively.  Next, we apply the
  M\"obius inversion formula repeatedly, once to each $u \in U$ as
  before, but we do not apply it to any $v' \in V'$.  Therefore, there
  are two changes to the expression $\text{REST}$ in
  \eqref{eq:rest:right}.  First, it will have the following residual:
  \begin{align*}
   & \bigwedge_{v' \in V'} \Phi_\Delta(\bigvee_i \forall y G_i(v',y))    
  \end{align*}
  Since $v'$ has a single outgoing edge $(v',v)$, this residual is
  restricted to the lineage in  the block $B(v',v)$, i.e. it is
  equivalent to:
  \begin{align}
   \bigwedge_{v \in V} \Phi_{B(v',v)}(\bigvee_i \forall y G_i(v',y)) \label{eq:residual:1}
  \end{align}
  The second change in~\eqref{eq:rest:right} is that we need to expand
  the expressions $\bigwedge_{u\in U, v \in V} \Phi_{B(u,v)}(Q)$ with
  $\bigwedge_{u\in U} \Phi_{B(u,u')}(Q)$ and
  $\bigwedge_{v \in V} \Phi_{B(v',v)}(Q)$.  The latter absorbs the
  residual~\eqref{eq:residual:1}, because
  $(\bigvee_i \forall y G_i(x,y))$ is implied by $Q$ (i.e. it is one
  of the conjuncts of $Q$):
  \begin{align*}
    \forall v \in V: && \Phi_{B(v',v)}(\bigvee_i \forall y G_i(v',y)) \wedge\Phi_{B(v',v)}(Q)\equiv&\Phi_{B(v',v)}(Q)
  \end{align*}
  Similarly, when we apply M\"obius inversion formula to each
  $v \in V$ we do not apply it to any $u' \in U'$, and are left with a
  similar residual on the right, which also gets absorbed.  Thus, the
  only change to Eq.~\eqref{eq:change:2} is the addition of the
  boolean formulas $\bigwedge_{u\in U} \Phi_{B(u,u')}(Q)$ and
  $\bigwedge_{v \in V} \Phi_{B(v',v)}(Q)$.  Next, we need to modify
  Eq.~\eqref{eq:change:4} from $\bigwedge_{v \in V}$ to
  $\bigwedge_{v \in (V \cup \set{u'})}$, in effect adding the conjunct
  $\bigwedge_{u \in U} \Phi_{B(u,u')}(\forall y G_{\sigma(u)}(u,y))$,
  and similarly for the line below Eq.~\eqref{eq:change:4}.  The
  Boolean formula under $\pr(\cdots)$ in Eq.~\eqref{eq:change:3}
  becomes:
  \begin{align*}
    & \bigwedge_{u\in U, v \in V} \Phi_{B(u,v)}(\forall y(G_{\sigma(u)}(u,y))\wedge Q \wedge \forall x(H_{\tau(v)}(x,v)))
    \wedge \bigwedge_{u\in U} \Phi_{B(u,u')}(\forall y(G_{\sigma(u)}(u,y))\wedge Q)
    \wedge \bigwedge_{v \in V} \Phi_{B(v',v)}(\wedge Q \wedge \forall x(H_{\tau(v)}(x,v)))
  \end{align*}
  Since distinct blocks do not share any tuples, the conjuncts above
  are independent, hence, as before, the probability is their product,
  $\pr(Q) =$
  \begin{align*}
     & \prod_{u\in U, v \in V} \pr(\Phi_{B(u,v)}(\forall y(G_{\sigma(u)}(u,y))\wedge Q \wedge \forall x(H_{\tau(v)}(x,v))))
    \cdot \prod_{u\in U} \pr(\Phi_{B(u,u')}(\forall y(G_{\sigma(u)}(u,y))\wedge Q))
    \cdot \prod_{v \in V} \pr(\Phi_{B(v',v)}(\wedge Q \wedge \forall x(H_{\tau(v)}(x,v))))
  \end{align*}
  Finally, we notice that $u \in U, v \in V$ in the first product can
  be replaced by $(u,v) \in E$, because all non-edges are trivially
  true.
\end{proof}

We can now describe the reduction from $\ccp$ to $\gmc(Q)$.  Let $Q$
be any bipartite, unsafe type II query (meaning: type II-II), and let
$\bar m, \bar n$ be the sizes of its lattice supports $L_0(\mb{G})$ and
$L_0(\mb{H})$.  Since $Q$ is unsafe, we have $\bar m, \bar n \geq 3$.
Our goal is to prove $\ccp(\bar m, \bar n) \leq^P \gmc(Q)$.  Fix an
instance of a $\ccp(\bar m, \bar n)$ problem $(U,V,E)$.  We extend the
graph as before to $(U\cup V', V \cup U', E \cup E')$.  We define the
TID $\pdb$ to be a union of blocks $B(u,v)$ for all
$(u,v) \in E \cup E'$; as before, $B(a,b)$ is trivially true when
$(a,b)$ is not an edge.  The blocks (to be defined in the next
section) will be isomorphic, and therefore, the following quantities
do not depend on $u,v$:
\begin{align}
  y_{\alpha\beta} \defeq & \Pr(Y_{\alpha,\beta}(u,v)) && u \in U, v \in V \label{def:y:alpha:beta}\\
  y_{\alpha*} \defeq & \Pr(Y_{\alpha,\hat 1}(u,u')) && u \in U\nonumber\\
  y_{\hat 1 \beta} \defeq & \Pr(Y_{\hat 1 \beta}(v',v)) && v \in V\nonumber
\end{align}

By Corollary~\ref{cor:pq:mobius}, we obtain:

\begin{align*}
\Pr(Q)  = & \sum_{\substack{\sigma : U \rightarrow L_0(\mb{G}) \\ \tau: V  \rightarrow L_0(\mb{H})}}
\left(\prod_{u\in U} \mu(\sigma(u))\right) \cdot \left(\prod_{v\in V} \mu(\tau(v))\right)\cdot 
\left(\prod_{(u,v) \in E} y_{\sigma(u),\tau(v)}\right) \cdot
\left(\prod_{u \in U} y_{\sigma(u),\hat 1}\right) \cdot
\left(\prod_{v \in V} y_{\hat 1 \tau(v)}\right)
\end{align*}

Recall that $L_0(\mb G) = L(\mb G) - \set{\hat 1}$ is the strict
support of the left lattice, and $\bar m$ is its size. Consider now
the coloring counting problem given by the the graph $(U,V,E)$.  Every
pair $\sigma, \tau$ defines a coloring of the graph $(U,V,E)$.
Denoting
$\mb{k}\eqdef\set{k_{\alpha,\beta},k_{\alpha,\hat 1},k_{\hat 1,\beta}
  \mid \alpha \subseteq [\bar{m}], \beta \subseteq [\bar{n}]}$ its
signature, the factor $y_{\alpha \beta}$ will occur precisely
$k_{\alpha \beta}$ times, i.e. with the exponent $k_{\alpha \beta}$
where $k_{\alpha \beta}$ is the number of edges in the bipartite graph
whose edges are colored $\alpha$ and $\beta$ respectively.  The factor
$y_{\hat 1,\beta}$ will occur $k_{\hat 1,\beta}$ times where
$k_{\hat 1,\beta}$ represents the number of nodes in $V$ colored
$\beta$. Likewise, $k_{\alpha,\hat 1}$ represents the number of nodes
in $U$ colored $\alpha$. Therefore:
\begin{align}
  \Pr(Q) = \sum_{\mb{k}} \#\mb{k} \left(\prod_{\alpha} (\mu(\alpha))^{k_{\alpha,\hat 1}}\right)\cdot
\left(\prod_{\beta} (\mu(\beta))^{k_{\hat 1,\beta}}\right) \cdot 
\left(\prod_{\alpha, \beta} y_{\alpha \beta}^{k_{\alpha \beta}}  \right) \cdot 
\left(\prod_{\beta} y_{\hat 1,\beta}^{k_{\hat 1,\beta}}\right) \cdot
\left(\prod_{\alpha} y_{\alpha,\hat 1}^{k_{\alpha,\hat 1}}\right)
\label{eq:look:here}
\end{align}
where $\alpha, \beta$ range over the strict supports $L_0(\mb G)$,
$L_0(\mb H)$ (i.e. are $\neq \hat 1$).  Our unknowns are $\#\mb{k}$;
there is one unknown for every coloring signature $\mb{k}$.  We
introduce new variables $x_{\mb{k}}$, indexed by the signatures
$\mb{k}$, as follows:
\begin{align*}
x_{\mb{k}} \defeq \#\mb{k} \cdot \left(\prod_{\alpha} (\mu(\alpha))^{k_{\alpha,\hat 1}}\right)\cdot
\left(\prod_{\beta} (\mu(\beta))^{k_{\hat 1,\beta}}\right)
\end{align*}
Thus, one call to the oracle for $\Pr(Q)$ computes the following linear
combination of the unknowns $x_{\mb{k}}$:
\begin{align*}
  \Pr(Q) = \sum_{\mb{k}} \left(\prod_{\substack{\alpha \in L(\mb{G}) \\ \beta \in L(\mb{H}) \\ (\alpha,\beta)\neq (\hat 1, \hat 1)}} y_{\alpha \beta}^{k_{\alpha \beta}}  \right)x_{\mb{k}}
\end{align*}
Notice that here $\alpha, \beta$ range over the {\em entire} support,
i.e. including $\hat 1$, except of the combination
$\alpha=\hat 1, \beta =\hat 1$, because that does not occur in
$\Pr(Q)$, Eq.~\eqref{eq:look:here}.

Let $h \defeq (\bar m + 1)(\bar n+1) = O(1)$.  The equation above has
$(M+1)^{h}$ unknowns $x_{\mb{k}}$ because for every pair
$\alpha \in L(\mb G)$ and $\beta \in L(\mb H)$ there can be between
$0$ and $M$ blocks with the $\alpha\beta$ configuration. Accordingly,
there are $(M+1)^{h}$ coefficients
$y_{\alpha \beta}^{k_{\alpha \beta}}$.  To simplify the notation, lets
denote the pair $\alpha \beta$ by a single index $i$, where
$i=1,\ldots,h$.  The equation becomes:
\begin{align}
  \Pr(Q) = & \sum_{k_1, \ldots, k_h \in \set{0,\ldots,M}} \left(\prod_{i=1,h} y_i^{k_i}\right) x_{k_1 k_2 \cdots k_h}
\label{eq:p:q:type:2:reference}
\end{align}

Let $\bp=\set{p_1, \ldots, p_h}$ be a set of $h$ natural numbers where
$p_i\geq 1$.  We will show in the next several sections how to
construct a block $B(u,v)$ that depends on $\bp$, hence we denote it
$B^{(\bp)}(u,v)$, where all tuples have probabilities in $\set{0,1/2,1}$,
such that:
\begin{align}
y_{i}=\prod_{j=1}^hy_i^{(p_j)} \mbox{ where }  y_i^{(p_j)}\eqdef a_i\lambda_1^{p_j} + b_i\lambda_2^{p_j} && \forall i\in [1,h]
\label{eq:yi_2}
\end{align}
where $\lambda_1$, $\lambda_2$ and $a_i,b_i$, $i\in [1,h]$ are constants independent of $p_1, \ldots, p_h$ satisfying the following.
\begin{align}
\lambda_1\neq \pm\lambda_2 && \mbox{ and } && \lambda_1\neq 0, \lambda_2 \neq 0 \label{eq:conditionLambda2}\\
b_i \neq 0&& && \forall i\in [h] \label{eq:conditiononZero2}\\
a_ib_j{\neq}a_jb_i&&  && i\neq j \label{eq:conditionCoefficients2}
\end{align}

By Theorem~\ref{th:non-singular}, if we set the values of
$p_1, \ldots, p_h$ independently to $1,2,\ldots, (M+1)$, then we
obtain a system with $(M+1)^h$ equations whose matrix is non-singuar,
from which we can compute the unknowns $\#k$ in polynomial time (by
Gaussian elimination), and thus sholve the instance of the
$\ccp(\bar m, \bar n)$ problem.  In the next section we describe how
to construct the block $B^\bp(u,v)$ to ensure that the
probability~\eqref{eq:yi_2} is given by an expression of the
type~\eqref{eq:yi_2}.  We notice that $B^\bp(u,v)$ has the same
structure for all $u,v$; in what follows we only discuss a single
block $B^\bp(u,v)$.


\subsection{Designing the Block $B^{(\bp)}(u,v)$}

Fix $\bp = (p_1, \ldots, p_h)$ a vector of $h$ natural numbers
$\geq 1$.  We describe here the block $B^{(\bp)}(u,v)$; its tuples and
probabilities are the same for all choices of $u,v$, thus our
discussion below does not depend on $u, v$.  The block
$B^{(\bp)}(u,v)$ will consists of a union of $h$ blocks, each
corresponding to one of the parameters $p_1, \ldots, p_h$:
\begin{align*}
  B^{(\bp)}(u,v) = & \bigcup_{j=1,h} B_j^{p_j}(u,v)
\end{align*}
Since the blocks $B_1^{p_1}, \ldots, B_h^{p_h}$ have disjoint sets of tuples, we
have:
\begin{align*}
\forall \alpha \in L(\mb G), \beta \in L(\mb H): \ \ \  Y^{(\bp)}_{\alpha \beta}(u,v) \defeq & \Phi_{B^{(\bp)}(u,v)}(G_\alpha(u) \wedge Q \wedge H_\beta(v))= \bigwedge_{j=1,h}\Phi_{B_j^{p_j}(u,v)}(G_\alpha(u) \wedge Q \wedge H_\beta(v))
\defeq \bigwedge_{j=1,h} Y^{(p_j)}_{\alpha \beta, j}
\end{align*}
Indeed, each clause of the lineage of $G_\alpha(u)$ lies entirely
within one block $B_j^{p_j}$, because
$G_\alpha(u) = \forall y G_\alpha(u,y)$ has a single variable
$\forall y$ (see Eq.~\eqref{eq:g:alpha:u}).  Consider now the query
$Q$.  The only clauses whose lineage may span multiple blocks are
those in
$Q_{\text{left}} \equiv \forall x \left(\bigvee_{j=1,m} \forall y
  G_j(x,y)\right)$, and only those clauses obtained by mapping $x$ to
$u$.  But that formula is absorbed by $G_\alpha(u)$, in other words:
\begin{align*}
  (\forall y G_\alpha(u,y)) \wedge (\bigvee_{j=1,m} \forall y G_j(u,y))\equiv & \forall y G_\alpha(u,y)
\end{align*}
Therefore, the probability $y_{\alpha \beta} = \pr(Y_{\alpha \beta}(u,v))$ is a product of probabilities
one for each block:
\begin{align*}
\forall \alpha \in L(\mb G), \beta \in L(\mb H): \ \ \   y^{(\bp)}_{\alpha \beta} = & \prod_{j=1,h} \pr(\Phi_{B^{(p_j)}_j(u,v)}(G_\alpha(u) \wedge Q \wedge H_\beta(v))) \defeq \prod_{j=1,h} y^{(p_j)}_{\alpha \beta, j}
\end{align*}
We will design the blocks $B_j^{(p_j)}(u,v)$ to be similar, and differ
only in their parameter $p_j$.  To simplify the notation, we drop the
index $j$: thus, the expressions
$p_j, B_j^{(p_j)}(u,v), y_{\alpha \beta, j}, Y^{(p_j)}_{\alpha \beta,
  j}$, etc become
$p, B^{(p)}(u,v), y_{\alpha \beta}, Y^{(p)}_{\alpha \beta}$.  Our goal
is to design the block $B^{(p)}(u,v)$, where $p \geq 1$ is a natural
number, such that, for all $\alpha, \beta$ (including $\hat 1$), we
have:
\begin{align}
  y^{(p)}_{\alpha \beta} = & \pr(Y_{\alpha \beta}) = a_{\alpha \beta} \lambda_1^{p} + b_{\alpha  \beta} \lambda_2^{p}
\label{eq:y:lambda:single}
\end{align}
where the parameters
$\lambda_1, \lambda_2, a_{\alpha \beta}, a_{\alpha, \hat 1}$, etc, are
independent of $p$ and satisfy the
conditions~\eqref{eq:conditionLambda2}-\eqref{eq:conditionCoefficients2}.

We describe now the block $B^{(p)}(u,v)$, and will refer to
Fig.~\ref{fig:block:type:2}.  Let $m$ be the maximum
number of subclauses in any left or right clause; notice that
$m \geq 2$.  An {\em elementary block} $B(a,b)$ is the set of tuples
$B(a,b) \defeq \set{S_1(a,b),S_2(a,b), \ldots}$, i.e. there is exactly
one tuple $S(a,b)$ for each binary symbol $S \in \calR$. 

\begin{definition}
  The block $B^{(p)}(u,v)$ is the disjoint union of the following blocks:
  \begin{itemize}
  \item A prefix block $B_{\text{pref}}(u,r_0)$, which, in turn, is
    the disjoint union of $r$ parallel blocks:
    \begin{align*}
      B_{\text{pref}}(u,r_0) = & \bigcup_{i=1,r}(B(u,t_{\text{pref},i}) \cup B(r_0,t_{\text{pref},i}))
    \end{align*}
    where $B(u,t_{\text{pref},i}), B(r_0,t_{\text{pref},i})$ are
    elementary blocks.  The number $r$ will be chosen later.
  \item A zig-zag part, which is a union  of $2p+1$  elementary
    blocks:
    \begin{align*}
    B(r_0,t_0) \cup  B(r_1,t_0) \cup B(r_1,t_1)\cup  \ldots\cup B(r_p,t_{p-1})\cup B(r_p,t_p)
    \end{align*}    
  \item A suffix block $B_{\text{suff}}(t_p,v)$, which is the union of
    $r$ parallel blocks (same $r$ as for the prefix):
    \begin{align*}
      B_{\text{suff}}(t_p,v) = & \bigcup_{i=1,r}(B(r_{\text{suff},i},t_p)\cup B(r_{\text{suff},i},v))
    \end{align*}
   where 
    $B(r_{\text{suff},i},t_p), B(r_{\text{suff},i},v)$ are elementary
    blocks.
  \item For each constant $r_i$ (including $r_{i,\text{suff}}$)
    introduced above there are $m-2$ dead-end branches of elementary
    blocks: $B(r_i, e^{(1)}_i)\cup \ldots \cup B(r_i, e^{(m-2)}_i)$.
  \item For each constant $t_i$ introduced above (including
    $t_{\text{pref},i}$) there are $m-2$ dead-end branches of elementary
    blocks: $B(f_i^{(1)},t_i) \cup \ldots \cup B(f^{(m-2)}_i,t_i)$.
  \item For any other pairs of constants $a,b$ not explicitly
    mentioned above, there is a trivial elementary block $B(a,b)$
    where all tuples have probability $1$.
  \end{itemize}

We denote by $B(r_0,t_p)$ the {\em zig-zag} portion of the block:
\begin{align}
  B(r_0,t_p) = & B(r_0, t_0) \cup \bigcup_{i=1,p} \left(B(r_i,t_{i-1}) \cup B(r_i,t_i) \cup \bigcup_j \underbrace{(B(r_i,e^{(j)}_i) \cup B(f_{i-1}^{(j)},t_{i-1}))}_{\mbox{dead ends}}\right)
\label{eq:zig:zag:block}
\end{align}
Therefore the entire block is:
\begin{align*}
  B^{(p)}(u,v) = & B_{\text{pref}}(u,r_0) \cup \left(\underbrace{\bigcup_j B(r_0,e^{(j)}_0)}_{\mbox{dead end}}\right) \cup B(r_0, t_p) \cup \left(\underbrace{\bigcup_j B(f_i^{(j)},t_p)}_{\mbox{dead end}}\right)
\end{align*}
\end{definition}

\begin{figure}
  \centering
      \includegraphics[width=0.7\textwidth]{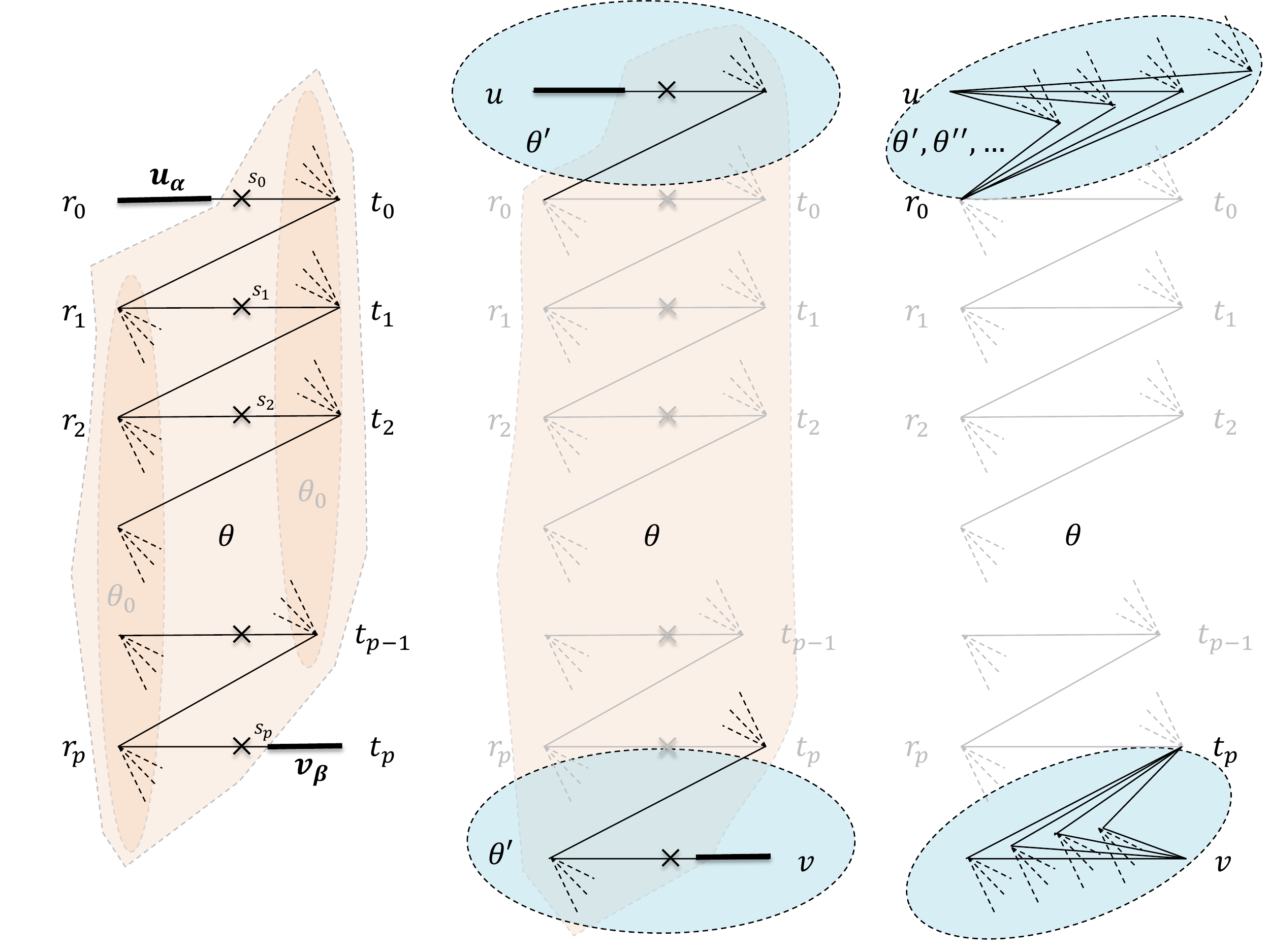}
\hfill
\parbox[t]{0.3\textwidth}{
\centerline{(a)}
Choose $\theta_0 \subseteq \theta$ such that:
\newline $0 < |\lambda_1| < \lambda_2$
\newline $\forall \alpha, \beta: b_{\alpha \beta}>0$
}
\hfill
\parbox[t]{0.3\textwidth}{
\centerline{(b)}
Fix one $(\alpha_1, \beta_1)\neq (\alpha_2,\beta_2)$.
\newline Choose $\theta'$ such that:
\newline $a_{\alpha_1\beta_1}b_{\alpha_2\beta_2}\neq  a_{\alpha_2\beta_2}b_{\alpha_1\beta_1}$
}
\hfill
\parbox[t]{0.3\textwidth}{
\centerline{(b)}
Parallel branches of $\theta'$ such that:
\newline For all $(\alpha_1, \beta_1)\neq (\alpha_2,\beta_2)$:
\newline $a_{\alpha_1\beta_1}b_{\alpha_2\beta_2}\neq  a_{\alpha_2\beta_2}b_{\alpha_1\beta_1}$
}
\hfill
\null
\caption{Illustration of the Block $B^{(p)}(u,v)$ and the progression
  of assignments $\theta_0$, $\theta$, $\theta'$ that, together,
  satisfy conditions~\eqref{eq:conditionLambda2},
  \eqref{eq:conditiononZero2}, and
  \eqref{eq:conditionCoefficients2}. Fig. (a) shows only the zig-zag
  part $B^{(p)}(r_0,t_p)$, where we choose $\theta$ to satisfy
  conditions~\eqref{eq:conditionLambda2} and
  \eqref{eq:conditiononZero2}.  The assignment $\theta$ is independent
  of $\alpha, \beta$ and leaves unassigned some Boolean variables at
  the beginning and that the end ($\mb u_\alpha, \mb v_\beta$).
  Fig. (b) extends the block $B^{(p)}(r_0,t_p)$ with a prefix and a
  suffix consisting of a single branch, thus $B^{(p)}(u,v)$ is
  isomorphic to $B^{(p+2)}(r_0,t_{p+2})$.  Here $\theta's$ extends the
  assignment $\theta$ to the remaining variables in the prefix and
  suffix in order to ensure {\em one} condition
  $a_{\alpha_1\beta_1}b_{\alpha_2\beta_2}\neq
  a_{\alpha_2\beta_2}b_{\alpha_1\beta_1}$.  Fig. (c) extends the
  prefix/suffix with parallel branches in order to satisfy all
  conditions
  $a_{\alpha_1\beta_1}b_{\alpha_2\beta_2}\neq
  a_{\alpha_2\beta_2}b_{\alpha_1\beta_1}$.}
  \label{fig:block:type:2}
\end{figure}


\subsection{Two Properties of $Y^{(p)}_{\alpha \beta}$}

\label{subsec:y:is:connected}

In this section we prove two properties of $Y^{(p)}_{\alpha \beta}$.
First, the mapping $(\alpha, \beta) \mapsto Y^{(p)}_{\alpha \beta}$ is
invertible (see Lemma~\ref{lemma:simple:lattices:1}).  Second, if $Q$
is a forbidden query, then $Y^{(p)}_{\alpha \beta}$ is a connected
Boolean formula, for every $\alpha \in L(\mb G), \beta \in L(\mb H)$
(see Lemma~\ref{lemma:simple:lattices:2}).  To simplify the
discussion, we consider $Y^{(p)}_{\alpha \beta}$ to be the lineage on
the zig-zag block $B^{(p)}(r_0,t_p)$~\eqref{eq:zig:zag:block}, and
will not consider the prefix and the suffix blocks of $B^{(p)}(u,v)$;
the proofs of both properties extend immediately to the complete block
$B^{(p)}(u,v)$.  Thus, $Y^{(p)}_{\alpha \beta}$ means the the lineage
restricted to the zig-zag block $B^{(p)}(r_0,t_p)$:
\begin{align}
  Y^{(p)}_{\alpha \beta} \defeq & \Phi_{B(r_0,t_p)}(G_\alpha(r_0) \wedge Q \wedge H_\beta(t_p))  & y^{(p)}_{\alpha \beta} \defeq & \Pr(Y^{(p)}_{\alpha \beta}) \label{eq:yab:on:zigzag}
\end{align}
As usual, each tuple in $B^{(p)}(r_0,t_p)$ becomes a Boolean variable.

\begin{lemma}
  \label{lemma:type2:invertible} The mapping
  $(\alpha, \beta) \mapsto Y^{(p)}_{\alpha \beta}$ is invertible.
  More precisely, if the logical implication
  $\forall x \forall y Y^{(p)}_{\alpha_1 \beta_1}(x,y) \Rightarrow
  \forall x \forall y Y^{(p)}_{\alpha_2 \beta_2}(x,y)$ holds, then
  $\alpha_1 \leq \alpha_2$ and $\beta_1 \leq \beta_2$, in the lattices
  $\hat L(\mb G)$ and $\hat L(\mb H)$ respectively.
\end{lemma}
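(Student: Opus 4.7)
The plan is to reduce the lemma to the first-order analogue already proved in Lemma~\ref{lemma:simple:lattices:1}. By symmetry, it suffices to derive $\alpha_1 \leq \alpha_2$ in $\hat L(\mb G)$ from the propositional implication $Y^{(p)}_{\alpha_1 \beta_1} \Rightarrow Y^{(p)}_{\alpha_2 \beta_2}$; the inequality $\beta_1 \leq \beta_2$ will follow by the dual argument, using left-ubiquitous symbols in place of right-ubiquitous ones.

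First, I would mirror the substitution trick used in the proof of Lemma~\ref{lemma:simple:lattices:1} at the lineage level. Let $\mb V$ denote the set of Boolean variables corresponding to all tuples $V(a,b)$ where $V$ is a right-ubiquitous symbol of $Q$. By right-ubiquity, every right clause of $Q$ and every clause of $H_\beta$ is satisfied once $\mb V := \texttt{true}$, so $\Phi_{B(r_0,t_p)}(H_\beta(t_p))[\mb V{:=}1]$ is identically $\texttt{true}$ for every $\beta \in L(\mb H)$. Thus the hypothesized implication collapses to $\Phi_{B(r_0,t_p)}(G_{\alpha_1}(r_0) \wedge Q)[\mb V{:=}1] \Rightarrow \Phi_{B(r_0,t_p)}(G_{\alpha_2}(r_0) \wedge Q)[\mb V{:=}1]$. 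Since every formula involved is monotone, we have $F \wedge F[\mb V{:=}1] \equiv F$, so conjoining both sides with $\Phi_{B(r_0,t_p)}(Q)$ recovers $\Phi_{B(r_0,t_p)}(G_{\alpha_1}(r_0) \wedge Q) \Rightarrow \Phi_{B(r_0,t_p)}(G_{\alpha_2}(r_0) \wedge Q)$, without the substitution.

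Second, I would lift this propositional implication on the block back to the first-order implication $\forall y (G_{\alpha_1}(r_0,y) \wedge C(r_0,y)) \Rightarrow \forall y (G_{\alpha_2}(r_0,y) \wedge C(r_0,y))$, from which the defining closure property of $\alpha_2$ in $\hat L(\mb G)$ (Definition~\ref{def:lattice}) immediately yields $\alpha_1 \leq \alpha_2$. The lift is possible because the block's structure around $r_0$ is sufficiently generic: each neighbor of $r_0$ (namely $t_0$ together with the dead-end constants $e^{(j)}_0$) supplies an independent witness for the universally quantified $y$, and the tuples $S(r_0,\cdot)$ at each such neighbor form a fresh block of Boolean variables that can be assigned arbitrarily. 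Any truth assignment to the predicate symbols that falsifies the first-order implication can therefore be realized at one of these neighbors as a truth assignment to the corresponding tuples of $B^{(p)}(r_0,t_p)$, contradicting the propositional implication.

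The main obstacle is executing this second step rigorously: we must verify that setting the tuples outside the $r_0$-neighborhood to $\texttt{true}$ (in order to isolate the local implication around $r_0$) does not interfere with the evaluation of $G_{\alpha_i}(r_0)$ or of the middle clauses $C(r_0, \cdot)$, and that no spurious logical coupling between the neighbors of $r_0$ arises through the middle portion of the zig-zag. This is where the dead-end branches $e^{(j)}_0$ pay off, since they decouple the $r_0$-neighborhood from the rest of $B^{(p)}(r_0, t_p)$, and where monotonicity of $Q$ helps again: raising distant tuples to $\texttt{true}$ can only satisfy further $Q$-clauses and never introduces fresh constraints at $r_0$. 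Once this decoupling is established, Lemma~\ref{lemma:simple:lattices:1} (applied to the resulting first-order implication) closes the proof, and the dual argument with left-ubiquitous symbols gives $\beta_1 \leq \beta_2$.
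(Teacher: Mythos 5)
Your first two steps are sound and reproduce the substitution trick of Lemma~\ref{lemma:simple:lattices:1} at the lineage level, but the ``lift'' in your second paragraph has a genuine gap. After conjoining with $\Phi_{B(r_0,t_p)}(Q)$ you hold the implication $\Phi_{B(r_0,t_p)}(G_{\alpha_1}(r_0)\wedge Q)\Rightarrow\Phi_{B(r_0,t_p)}(G_{\alpha_2}(r_0)\wedge Q)$, in which the lineage of $Q_{\text{right}}$ is still present. When you set every tuple outside the $r_0$-neighborhood to $\texttt{true}$, the right clause of $Q$ grounded at $y=t_0$ does not disappear: it collapses to $H_{\hat 1}(r_0,t_0)=\bigvee_j H_j(r_0,t_0)$, a constraint on exactly the tuples $S(r_0,t_0)$ you use to realize the witness. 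So a fresh constraint at $r_0$ \emph{does} arise, contrary to what you assert. A truth assignment $\theta$ witnessing $G_{\alpha_1}\wedge C\not\Rightarrow G_{\alpha_2}\wedge C$ need not satisfy $\bigvee_j H_j$; in that case your realization makes the left-hand lineage \emph{false} and yields no contradiction, and repairing it by raising the right-ubiquitous tuples is a monotone increase that may turn $G_{\alpha_2}\wedge C$ true and destroy the witness. The dead-end branches do not help with this, since the same grounded right clauses sit at every right-domain neighbor of $r_0$.

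The fix is small, and it is what the paper means by declaring the proof ``identical to Lemma~\ref{lemma:simple:lattices:1}'': do the restriction first and the substitution last. Restricting the hypothesized implication to the single elementary block $B(r_0,t_0)$ (all other tuples $\texttt{true}$) already kills $\Phi(H_{\beta_i}(t_p))$ for $p\geq 1$ and leaves the purely propositional implication $G_{\alpha_1}\wedge C\wedge H_{\hat 1}\Rightarrow G_{\alpha_2}\wedge C\wedge H_{\hat 1}$ over the variables $S_j(r_0,t_0)$, which is isomorphic to the symbol-level statement. Now substitute $\mb V:=1$ on \emph{both} sides to erase $H_{\hat 1}$, and conjoin with $C$ (not $C[\mb V:=1]$) to recover $G_{\alpha_1}\wedge C\Rightarrow G_{\alpha_2}\wedge C$, i.e.\ $\alpha_1\leq\alpha_2$; argue symmetrically at $B(r_p,t_p)$ with the left-ubiquitous symbols for $\beta_1\leq\beta_2$. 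This route is pure implication manipulation and never requires realizing a falsifying assignment inside the block.
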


\begin{proof}
  The proof is identical to that of
  Lemma~\ref{lemma:simple:lattices:1} and omitted.
\end{proof}

\begin{lemma} \label{lemma:type2:connected} 
  Let $Q$ be a forbidden query of type II.  Then, for all
  $\alpha \in L(\mb G), \beta \in L(\mb H)$, the Boolean formulas
  $Y^{(p)}_{\alpha \beta}$ are connected, and depend on all Boolean
  variables in the block $B^{(p)}(r_0,t_0)$.
\end{lemma}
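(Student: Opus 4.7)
The plan is to prove this lemma by induction on $p$, paralleling the argument in Lemma~\ref{lem:lemLineageConnected} for Type I queries but replacing the role of the shared unary atom $R(r_i)$ (which glued together consecutive links of the zig-zag in the Type I case) by Boolean clauses arising from the Type II right clause $Q_{\text{right}}$ grounded at an intermediate $y$-value. Concretely, when $Q_{\text{right}}=\forall y \bigvee_{j=1}^n \forall x H_j(x,y)$ is grounded at $y=t_i$, its Boolean lineage is $\bigvee_j \bigwedge_{x} H_j(x,t_i)$; after distributing $\vee$ over $\wedge$, the resulting CNF clauses contain atoms at all the non-trivial $x$-positions adjacent to $t_i$, namely $r_i$, $r_{i+1}$ and $f_i^{(k)}$. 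These are the clauses that will bridge consecutive links of the block.

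For the base case $p=0$, the lineage is the grounding of $G_\alpha(r_0)\wedge Q\wedge H_\beta(t_0)$ on the single elementary block $B(r_0,t_0)$ together with the dead-end branches at $r_0$ and $t_0$. I will invoke Lemma~\ref{lemma:simple:lattices:2}, which gives that $\forall x \forall y Q_{\alpha\beta}(x,y)$ is a connected query for any forbidden $Q$; hence its single-edge lineage at $(r_0,t_0)$ is a connected Boolean formula. The dead-end atoms at $(r_0,e_0^{(k)})$ and $(f_0^{(k)},t_0)$ are pulled into the same connected component by the expansions $G_\alpha(r_0)=\forall y\, G_\alpha(r_0,y)$ and $H_\beta(t_0)=\forall x\, H_\beta(x,t_0)$, which share the (ubiquitous) symbols that appear in $G_\alpha$ and $H_\beta$ with the atoms on the dead-end branches.

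For the inductive step, decompose $B^{(p)}(r_0,t_p)$ as the union of the old part $B^{(p-1)}(r_0,t_{p-1})$ and the new part $B'=B(r_p,t_{p-1})\cup B(r_p,t_p)\cup(\text{dead-ends on }r_p)$; the two parts share only the constant $t_{p-1}$ and no Boolean variables. I then partition the clauses of $Y^{(p)}_{\alpha\beta}$ into (a) clauses lying entirely in the old part, (b) clauses lying entirely in the new part, and (c) bridge clauses with atoms in both parts. Part (a) can be identified with $Y^{(p-1)}_{\alpha,\hat 1}$ (the right boundary is ``open'', i.e.\ $H_{\hat 1}$), which is connected by the induction hypothesis. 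Part (b) is connected by the base case applied to the short block $B(r_p,t_{p-1})\cup B(r_p,t_p)$ with endpoints opened appropriately. Part (c) consists of CNF clauses of the form $\bigvee_j D_{c_j}^{(j)}(x_j,t_{p-1})$ obtained from grounding $Q_{\text{right}}$ at $y=t_{p-1}$, where $x_j\in\{r_{p-1},r_p,f_{p-1}^{(k)}\}$; these share variables both with atoms at $(r_{p-1},t_{p-1})$ in the old part and with atoms at $(r_p,t_{p-1})$ in the new part, thereby gluing the two components together. Grounding $Q_{\text{left}}$ at $x=r_p$ symmetrically links $t_{p-1}$, $t_p$ and the $e_p^{(k)}$ dead-ends inside $B'$.

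The main obstacle will be to ensure that the bridge clauses in part (c) survive the minimisation of the lineage and are not subsumed by other clauses. For this I will pick an \emph{injective} function $\phi\colon[n]\to\{r_{p-1},r_p,f_{p-1}^{(1)},\ldots,f_{p-1}^{(m-2)}\}$; the codomain has size $m\ge n$ precisely because the block contains $m-2$ dead-end branches $f_{p-1}^{(k)}$. The non-redundancy argument for the resulting clauses mirrors the role of the dead-end branches in the zig-zag construction from the proof of Lemma~\ref{lemma:long} (see Example~\ref{ex:why:we:need:dead:ends}), and crucially uses the structural characterisation of forbidden queries from Theorem~\ref{th:forbidden:syntax}: each subclause of a left or right clause contains at least one non-ubiquitous symbol, the ubiquitous symbols on each side are fixed by the minimal left-right path, and the middle clause $C_1$ on the path contains no ubiquitous symbol. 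Finally, the claim that $Y^{(p)}_{\alpha\beta}$ depends on every Boolean variable of the block follows from Lemma~\ref{lemma:everything:on:path} together with the fact that $Q$ is forbidden: each non-trivial atom $S_j(x,y)$ appears in a middle clause at $(x,y)$ when $S_j\in\symb(Q_{\text{middle}})$, and otherwise in a left or right clause grounded at $x$ or $y$ that remains non-redundant by the same injective-$\phi$ argument.
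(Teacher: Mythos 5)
Your proposal is essentially the paper's proof: the paper likewise builds the connected component by grounding the Type II left (resp.\ right) clauses at each shared constant $r_i$ (resp.\ $t_j$), mapping their subclauses injectively to the two adjacent links and to the $m-2$ dead-end branches so that the resulting bridge clause is non-redundant, and it leans on the same structural facts about forbidden queries (every subclause shares a symbol with $C_1$, no ubiquitous symbol occurs in $C_1$, and the $\alpha=\hat 1$ case is handled via Lemma~\ref{lemma:forbidden:properties}). The only differences are organizational --- the paper walks the entire zig-zag in one pass, exhibiting an explicit chain of non-redundant ground clauses $C_1(r_i,t_j),\ldots,C_{k-1}(r_i,t_j)$, rather than inducting on $p$ --- plus one small inaccuracy you would need to repair in a full write-up: the clauses of $Y^{(p)}_{\alpha\beta}$ lying in the old part are not literally $Y^{(p-1)}_{\alpha,\hat 1}$, because the groundings of $Q_{\text{right}}$ at $y=t_{p-1}$ now range over the new constants $r_p$ and $f_{p-1}^{(k)}$ and therefore become bridge clauses rather than old-part clauses.
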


\begin{proof}
  The proof extends the proof idea in
  lemma~\ref{lemma:simple:lattices:2}, and it is slightly simpler
  because $Q$ is forbidden.  Let $C_0, C_1, \ldots, C_k$ be a
  left-to-right path in $Q$, of minimal length; we will use it to
  construct a long path of clauses in $Y^{(p)}_{\alpha \beta}$ that
  goes through all zig-zag blocks.  We start with one block in the
  zig-zag fragment, say $B(r_i, t_{i-1})$.  The middle clauses
  $C_1, \ldots, C_{k-1}$ have only two logical variables $x,y$, e.g.
  $\forall x \forall y (S_1(x,y) \vee S_2(x,y) \vee \cdots)$, hence
  their groundings
  $C_1(r_i,t_{i-1}), C_2(r_i,t_{i-1}), \ldots C_{p-1}(r_i,t_{i-1})$
  are non-redundant clauses (since they were non-redundant in $Q$) and
  are connected.  Similarly, on the next block in the zig-zag,
  $C_1(r_i,t_i), C_2(r_i,t_i), \ldots C_{k-1}(r_i,t_i)$ are
  non-redundant and connected.  It remains to connect
  $C_1(r_i,t_{i-1})$ and $C_1(r_i,t_i)$ and for that we use the left
  clause $C_0 = \forall x \bigvee_{k=1,\ell} \forall y S_{J_k}(x,y)$.
  Consider the first two sub-clauses $S_{J_1}, S_{J_2}$ in $C_0$:
  since $Q$ is forbidden (see Def.~\ref{def:forbidden}), each has a
  common symbol with $C_1$.  We consider the grounding of $C_0$ that
  maps $S_{J_1}$ to the block $B(r_i,t_{i-1})$, maps $S_{J_2}$ to the
  block $B(r_i,t_i)$, and maps each all other subclauses to distinct
  dead-end branches $B(r_i, e^{(j)}_i)$.  This is possible because
  there are $m-2$ dead-end branches, and $m \geq \ell$, and it is also
  necessary, in order to ensure that the grounded clause is not
  redundant, see Example~\ref{ex:why:we:need:dead:ends} in
  Appendix~\ref{appendix:proof:lemma:long}.  Therefore, this grounding
  is a clause in $Y^{(p)}_{\alpha \beta}$ that is connected to
  $C_1(r_i,t_{i-1})$ via $S_{J_1}(r_i,t_{i-1})$ and is also connected
  to $C_1(r_i,t_i)$ via $S_{J_2}(r_i,t_i)$.  By repeating this for all
  links of the zig-zag chain, we obtain a long sequence of connected
  clauses that start with $C_1(r_0,t_0)$, end with $C_{k-1}(r_p,t_p)$.
  With the same argument we can extend this connected component along
  all dead-end branches, i.e extend it with
  $C_1(r_i,e_i^{(j)}), \ldots, C_{k-1}(r_i,e_i^{(j)})$ for all $i,j$,
  and similarly with
  $C_{k-1}(f_i^{(j)},t_i), \ldots, C_1(f_i^{(j)},t_i)$.  Thus, this
  giant connected component has a zig-zag, with many spikes.  It
  remains to prove that the clauses at the end of the zig-zag and at
  the end of the spikes are also connected.  Now consider the
  beginning of the zig-zag, which is a grounding of
  $G_\alpha(r_0,t_0)$.  When $\alpha \neq \hat 1$, then, as we argued
  in the proof of Lemma~\ref{lemma:simple:lattices:2}, no clause of
  $G_\alpha(r_0,t_0)$ can make $C_1(r_0,t_0)$ redundant, no middle
  clause can make any clause in $G_\alpha(r_0,t_0)$ redundant, and,
  while some clauses within $G_\alpha(r_0,t_0)$ might have
  homomorphisms to others, any remaining  non-redundant clause  of
  $G_\alpha(r_0,t_0)$ contains all ubiquitous symbols.  The case when
  $\alpha = \hat 1$ differs a little from
  Lemma~\ref{lemma:simple:lattices:2}, and here we consider two
  cases.  If some ubiquitous symbol of $Q$ occurs in a middle clause
  $C$, then all ubiquitous symbols of $Q$ occur in some middle clauses
  by Lemma~\ref{lemma:forbidden:properties} (4).  Since each such  middle
  clauses is non-redundant in $C(r_0,t_0)$, and is  connected to the
  path $C_1, C_2, \ldots$ (otherwise it consists only of ubiquitous
  symbols, hence there exists a homomorphism $C \rightarrow C_0$), the
  entire Boolean formula is connected and has all Boolean variables.
  If no ubiquitous symbol occurs in a middle clause, then every clause
  in $G_{\hat 1}(r_0,t_0)$ is non-redundant, because if there were a
  homomorphism from some middle clause $C(r_0,t_0)$, then $C$ must
  contain only non-ubiquitous symbols, hence all are in $C_1$, by the
  Definition~\ref{def:forbidden} of forbidden queries, which implies a
  homomorphism $C \rightarrow C_1$, contradiction.
\end{proof}

Recall from Section~\ref{sec:background:factorization} that the {\em
  distance} of two Boolean variables in a monotone formula is the
smallest number of clauses that connects them.  Fix a left-right path
$C_0, C_1, \ldots, C_k$ in $Q$, of minimal length.  Let $U$ be any
left-ubiquitous symbol in $C_0$; recall that, by
Lemma~\ref{lemma:forbidden:properties}, $U$ does not occur in $C_1$.
Similarly, let $V$ be a right-ubiquitous symbol in $C_k$ and note that
that it does not occur in $C_{k-1}$.  We think of the Boolean
variables (tuples) in the zig-zag block $B^{(p)}(r_0,t_p)$ as being
partially ordered: $U(r_0,t_0)$ is the very ``first'' tuple, and $X$
comes before $Y$ if $d(U(r_0,t_),X) < d(U(r_0,t_0), Y)$ in the Boolean
formula $Y^{(p)}_{\alpha \beta}$.  This is only a pre-order, because
we may have equal distances, it helps understanding the structure of
the block $B^{(p)}(r_0,t_p)$.  The next lemma shows that this
pre-order is independent of the choice of $\alpha, \beta$.

\begin{lemma}
  Let $X$ denote the Boolean variable associated to any tuple of the
  zig-zag block $B(r_0,t_p)$.  Then the distance between $U(r_0,t_0)$
  and $X$ is the same in all formulas $Y^{(p)}_{\alpha \beta}$,
  i.e. it does not depend on $\alpha, \beta$.  Similarly, the distance
  from $X$ to $V(r_p,t_p)$ is the same.
\end{lemma}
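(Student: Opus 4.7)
The plan is to prove that $d_{Y^{(p)}_{\alpha\beta}}(U(r_0,t_0),X)$ is independent of $\alpha,\beta$ by showing it equals the distance in the reference formula $Y^{(p)}_{\hat 1 \hat 1}=\Phi_{B^{(p)}(r_0,t_p)}(Q)$, which does not depend on $\alpha,\beta$ since $G_{\hat 1}=Q_{\text{left}}$ and $H_{\hat 1}=Q_{\text{right}}$. The symmetric statement for $V(r_p,t_p)$ will follow by interchanging the roles of left and right, using that $V$ is right-ubiquitous.

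First, I would decompose the canonical CNF of $Y^{(p)}_{\alpha\beta}$ into three groups according to which part of $G_\alpha(r_0)\wedge Q\wedge H_\beta(t_p)$ produced them: \emph{left-boundary} clauses arising from the grounding of $G_\alpha(r_0,y)$ over the block, \emph{right-boundary} clauses from $H_\beta(x,t_p)$, and \emph{bulk} clauses from the remaining groundings of $Q$. By construction of $B^{(p)}(r_0,t_p)$, every left-boundary clause involves only Boolean variables of the form $S(r_0,y)$ with $y\in\{t_0\}\cup\{e_0^{(j)}\}$, and a symmetric statement holds for right-boundary clauses; the bulk clauses whose support avoids the constants $r_0,t_p$ are identical across all $(\alpha,\beta)$.

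Second, I would prove the core rerouting claim: any shortest clause-path $D_0,D_1,\ldots,D_k$ from $U(r_0,t_0)$ to $X$ in $Y^{(p)}_{\alpha\beta}$ can be converted into one using only bulk clauses without increasing its length. Since a left-boundary clause $D_i$ is supported only on variables of the form $S(r_0,y)$, both its entry and exit shared variables with its neighbours on the path have this form; using that $U$ is left-ubiquitous, together with Lemma~\ref{lemma:forbidden:properties} and Theorem~\ref{th:forbidden:syntax} (which guarantee the existence of middle clauses of $Q$ containing $U$ that ground at $(r_0,t_0)$), the clause $D_i$ can be replaced by a bulk clause containing $U(r_0,t_0)$ together with the requisite $S(r_0,y)$ variable; a symmetric argument handles right-boundary clauses near $t_p$. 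This yields $d_{Y^{(p)}_{\alpha\beta}}(U(r_0,t_0),X)\geq d_{Y^{(p)}_{\hat 1 \hat 1}}(U(r_0,t_0),X)$. For the reverse inequality I would use that $Y^{(p)}_{\alpha\beta}\Rightarrow Y^{(p)}_{\hat 1 \hat 1}$: each clause of the canonical CNF of $Y^{(p)}_{\hat 1 \hat 1}$ is implied in $Y^{(p)}_{\alpha\beta}$, and either appears unchanged in $Y^{(p)}_{\alpha\beta}$'s canonical CNF, or is replaced by a stronger clause on a subset of its variables --- in the latter case, a careful ``unfolding'' via bulk middle clauses grounded at $(r_0,t_0)$ provides a matching path of the same length.

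The main obstacle is the rerouting step and the bookkeeping around CNF minimization. Conjoining $G_\alpha(r_0)\wedge H_\beta(t_p)$ to the lineage of $Q$ absorbs some clauses of the canonical CNF into stronger ones, and one must ensure that these replacements neither shorten distances spuriously nor break the variable adjacencies needed by the rerouted path. The leverage comes from the structural constraints on forbidden queries in Lemma~\ref{lemma:forbidden:properties}: part (2) guarantees that no ubiquitous symbol appears in $C_1$, so middle clauses touching $U(r_0,t_0)$ survive the minimization regardless of $\alpha$; part (6) ensures that any middle clause containing $U$ has all its symbols within $\symb(C_0)\cup\symb(C_1)$, so the replacement clauses exist with exactly the connectivity needed to bridge between $U(r_0,t_0)$ and arbitrary $S(r_0,y)$ exit variables without relying on the boundary clauses themselves.
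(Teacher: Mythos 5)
Your overall strategy is in the right spirit --- the only $\alpha,\beta$-dependence sits in the clauses touching the endpoints --- and indeed the paper's own proof is just the two-sentence observation that every clause of $Y^{(p)}_{\alpha\beta}$ containing $U(r_0,t_0)$ shares a $C_1$-symbol with the $\alpha$-independent clause $C_1(r_0,t_0)$ and connects to nothing deeper, so all distances measured from $U(r_0,t_0)$ are governed by the $\alpha,\beta$-independent interior. But your rerouting step has a genuine gap: it hinges on ``middle clauses of $Q$ containing $U$ that ground at $(r_0,t_0)$'', and neither Theorem~\ref{th:forbidden:syntax} nor Lemma~\ref{lemma:forbidden:properties} guarantees these exist. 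Item (4) of Lemma~\ref{lemma:forbidden:properties} only produces a middle clause containing $U_i$ when there is \emph{more than one} left-ubiquitous symbol; for a forbidden query with a single ubiquitous symbol $U$ occurring only in left clauses (the paper's own Example~\ref{ex:running:query:type:2:cont:2} is such a query), the bulk clause you want to substitute for a boundary clause simply does not exist, and the rerouting collapses. Even when such clauses do exist, a middle clause grounded at $(r_0,t_0)$ need not contain both the entry and the exit variable of the boundary clause it replaces, so exact length preservation is not ensured.

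There is a second problem with the exact equality $d_{Y^{(p)}_{\alpha\beta}}=d_{Y^{(p)}_{\hat 1\hat 1}}$ that your two-inequality scheme targets: it fails right at the boundary. When $\alpha=\hat 1$ the grounding of a left clause at $r_0$ is the single clause $\bigvee_\ell S_{J_\ell}(r_0,t_0)$, which puts every symbol of $C_0$ at distance $0$ from $U(r_0,t_0)$; for a proper $\alpha$ only the selected subclauses survive, and a non-selected symbol $S\in\symb(C_0)$ is then reachable only through $C_1(r_0,t_0)$, i.e.\ at distance $1$. So a length-preserving comparison to the reference formula $Y^{(p)}_{\hat 1\hat 1}$ cannot work for variables adjacent to the endpoint clauses. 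The paper sidesteps this by arguing directly from $C_1(r_0,t_0)$ outward (and only ever applies the lemma to variables away from the endpoints); to salvage your scheme you would have to restrict $X$ accordingly, and you would still need a substitute for the nonexistent ``$U$-containing bulk clause''.
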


\begin{proof}
  Every clause in  $Y^{(p)}_{\alpha \beta}$ that contains $U(r_0,t_0)$
  is connected to the clause $C_1(r_0,t_0)$ (and is not connected to
  $C_2(r_0,t_0)$).  From there, the distance to any variable $X$ is
  the same, regardless of the choice of $\alpha, \beta$.  The same
  argument applies to $Y^{(p)}_{\alpha \hat 1}$ and $Y^{(p)}_{\hat 1 \beta}$.
\end{proof}

\begin{lemma}
  Let $X$ be the Boolean variable associated to a tuple on the main
  branch of the zig-zag block $B(r_0,t_p)$; that is, $X$ has the form
  $X=S(r_i,t_i)$ or $X=S(r_i, t_{i-1})$, but not on a dead-end branch
  like $S(r_i, e_i^{(j)})$.  Assume that $d(U(r_0,t_0), X) \geq 2$ and
  $d(X,V(r_p,t_p)) \geq 2$.  Then $X$ disconnects $U, V$ in
  $Y^{(p)}_{\alpha \beta}$, for all $\alpha, \beta$ (see
  Definition~\ref{def:properties:of:boolean:functions}).
\end{lemma}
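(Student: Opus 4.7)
The plan is to show, for each fixed $a \in \{0, 1\}$, that the lineage $Y^{(p)}_{\alpha \beta}[X := a]$ admits a decomposition $F_L \wedge F_R$ with $\vars(F_L) \cap \vars(F_R) = \emptyset$, $U(r_0, t_0) \in \vars(F_L)$, and $V(r_p, t_p) \in \vars(F_R)$. The decomposition will be dictated geometrically by the position of $X$ in the zig-zag, following the high-level pattern of Lemma~\ref{lem:LemforbiddenLineage} for type I queries, but with the more delicate analysis required because type II left (resp.\ right) clauses ground at a fixed $x = r_i$ (resp.\ $y = t_j$) straddle two consecutive zig-zag nodes.

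First, I would enumerate exactly the ground clauses of $Y^{(p)}_{\alpha \beta}$ in which the Boolean variable $X = S(r_i, t_j)$ appears. Any such clause must arise from grounding one of: (i) a middle clause of $Q$ at $(x, y) = (r_i, t_j)$, (ii) a left clause of $Q$ at $x = r_i$ whose subclause containing $S$ is mapped to $B(r_i, t_j)$, or (iii) a right clause of $Q$ at $y = t_j$ whose subclause containing $S$ is mapped to $B(r_i, t_j)$. The distance hypotheses $d(U(r_0,t_0), X) \geq 2$ and $d(X, V(r_p,t_p)) \geq 2$ rule out any interaction with the boundary clauses coming from $G_\alpha(r_0)$ or $H_\beta(t_p)$, so the argument is uniform in $\alpha, \beta$. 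The key structural observation is that every other ground clause is local in the zig-zag: middle clauses are confined to a single block $B(r_k, t_\ell)$, and left (resp.\ right) clause groundings at a fixed $x = r_k$ (resp.\ $y = t_\ell$) only involve tuples in at most two adjacent main-branch blocks together with their dead-end branches. This lets me define a partition of the tuples into $T_L$ (the half up to the boundary at $X$) and $T_R$ (the strict right half), with the property that every clause not containing $X$ lies entirely in $T_L$ or entirely in $T_R$, so the only candidate bridges are the clauses containing $X$.

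For the case $a = 1$, all bridging clauses vanish and the decomposition $Y^{(p)}_{\alpha \beta}[X := 1] = F_L \wedge F_R$ follows immediately. The case $a = 0$ is the main challenge: a bridging clause containing $X$ turns into a residual clause without $X$ but still mentioning tuples in both $T_L$ and $T_R$. The plan is to argue, using the finality of $Q$ (Definition~\ref{def:finalQueries}), that every such residual is logically implied by clauses already in $F_L \cup F_R$ and hence redundant. Indeed, $Q[S := 0]$ is safe, and safety for a bipartite query forces a clean decomposition into left-only and right-only parts; at the ground level, the residual of a left-clause grounding at $x = r_i$ after $X := 0$ must therefore be absorbed by pre-existing ground clauses localized at $(r_i, t_{i-1})$ or $(r_i, t_i)$, and symmetrically for right-clause groundings. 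The detailed absorption appeals to Theorem~\ref{th:forbidden:syntax}, which guarantees the structural features of forbidden queries one needs (ubiquitous symbols carried by middle clauses, subclauses of $C_0$ sharing symbols with $C_1$, and the containment $\symb(C) \subseteq \symb(C_0) \cup \symb(C_1)$ for any middle clause touching a ubiquitous symbol), and to the distance hypotheses, which ensure both that the absorbing clauses exist and that the partition does not sever $U(r_0,t_0)$ from its local clauses nor $V(r_p,t_p)$ from its.

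The main obstacle is this absorption step for $a = 0$: it requires a careful case analysis over the source of each bridging clause (middle ground clause versus left- or right-clause grounding) and over the position of the symbol $S$ within the subclause structure of $Q$, in order to certify that no residual clause genuinely bridges $T_L$ and $T_R$. The role of items (4)--(6) of Theorem~\ref{th:forbidden:syntax} is to supply, at every intermediate zig-zag node $(r_k, t_\ell)$, enough parallel groundings of the left clause together with middle clauses carrying single ubiquitous symbols so that any residual from $X := 0$ is subsumed locally, thereby completing the decomposition $F_L \wedge F_R$ and proving that $X$ disconnects $U, V$.
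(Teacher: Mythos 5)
There is a genuine gap, and it sits in what you call the ``key structural observation.'' You claim that after partitioning the tuples into $T_L$ and $T_R$, every ground clause not containing $X$ lies entirely on one side, so the only candidate bridges are the clauses containing $X$; from this you conclude that the case $a=1$ ``follows immediately'' and that only the $a=0$ residuals need work. This is false. The disconnection asserted by the lemma must separate $U(r_i,t_i)$ from $V(r_i,t_i)$, and both of these tuples live inside the \emph{single} elementary block $B(r_i,t_i)$ that also contains $X=S(r_i,t_i)$. Any admissible cut therefore passes through the middle of that block, and the groundings at $(r_i,t_i)$ of the middle clauses of $Q$ that do not mention $S$ (as well as the left/right clause groundings touching that block) will in general contain symbols on both sides of the cut. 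These clauses bridge $T_L$ and $T_R$ without containing $X$, they survive both substitutions $X:=0$ and $X:=1$, and nothing in your argument removes them. In particular the $a=1$ case is \emph{not} immediate: after the $X$-clauses become true, the surviving clauses at $(r_i,t_i)$ still connect $U(r_i,t_i)$ to $V(r_i,t_i)$ unless $Q[S:=1]$ is itself a disconnected (safe) query.

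The paper's proof uses finality in exactly the place your proposal omits it. Since $Q$ is final, \emph{both} $Q[S:=0]$ and $Q[S:=1]$ are safe, and safety of a bipartite query means either some middle clause on the path $C_1,\ldots,C_{k-1}$ becomes true or redundant, or a left/right clause degenerates; within block $B(r_i,t_i)$ the symbol $S$ has the unique grounding $X$, so setting $X:=a$ reproduces the grounding of $Q[S:=a]$ at that one link of the zig-zag and severs the chain there (either inside the block, between $U(r_i,t_i)$ and $V(r_i,t_i)$, or at the left-clause junction between $C_1(r_i,t_{i-1})$ and $C_1(r_i,t_i)$). Your absorption analysis for the $a=0$ residuals of $X$-clauses is aimed at the wrong obstruction, and the appeal to the forbidden-query syntax of Theorem~\ref{th:forbidden:syntax} is unnecessary machinery: the lemma needs only that $Q$ is final. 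To repair your argument, define the partition $T_L,T_R$ from the decomposition of $Q[S:=a]$ guaranteed by its safety (rather than from the position of $X$ in the zig-zag), so that \emph{every} ground clause at $(r_i,t_i)$, not just those containing $X$, respects the cut by construction; at that point you have reproduced the paper's proof.
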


\begin{proof}
  Let $S$ be the relational symbol of the tuple $X$, and assume that
  $X = S(r_i,t_i)$; the case $X=S(r_i,t_{i-1})$ is similar.  Since $Q$
  is final, both $Q[S:=0]$ and $Q[S:=1]$ are safe, and this can happen
  in one of two ways.  The first is that the left clause $C_0$ and
  right clause $C_k$ remain left/right clauses in $Q[S:=0]$ (or
  $Q[S:=1]$), but the path $C_1, \ldots, C_{k-1}$ is disconnected,
  i.e. one of the clauses becomes $\texttt{true}$ or becomes
  redundant.  In that case setting $X:=0$ or $X:=1$ also disconnects
  the Boolean variables $U(r_i,t_i)$ from $V(r_i,t_i)$.  The second
  case is when $Q[S:=0]$ or $Q[S:=1]$ has no left clause, or no right
  clause: this happens either because they become $\texttt{true}$, or
  redundant, or they degenerate to middle clauses (e.g. setting
  $S_1:=0$ in
  $\forall y (U(x,y) \vee S_1(x,y)) \vee \forall y (U(x,y) \vee
  S_2(x,y))$).  In that case the connection between $C_1(r_i,t_{i-1})$
  and $C_1(r_i,t_i)$ is broken, again disconnecting $U(r_0,t_0)$ from
  $V(r_p,t_p)$.
\end{proof}

\subsection{Consistent Assignments}

\label{subsec:consistent:assignment}

We want to assign the same probabilities to similar tuples in similar
blocks, e.g. the grounding of $S_3$ should have the same probability
in all blocks $B(r_0,t_0), B(r_1, t_1), B(r_2,t_2), \ldots$ We make
this precise below.  We denote with lower case $s, s', \ldots$ the
real variables representing the probabilities of the ground tuples,
e.g. $S(r_i,t_i)$, $S'(r_i, t_i)$, $\ldots$ We group the Boolean
variables and their associated real variables into equivalence classes
defined follows:

\begin{itemize}
\item For each symbol $S\in \calR$ the {\em odd} equivalence class is
  the set of Boolean variables corresponding to its grounding in the
  odd-numbered zig-zag blocks:
  $S(r_0,t_0), S(r_1,t_1), S(r_2,t_2), \ldots, S(r_p,t_p)$.  
\item For each symbol $S\in \calR$ the {\em even} equivalence class is
  the set of real variables corresponding to its grounding in the
  even-numbered zig-zag blocks:
  $S(r_1,t_0), S(r_2,t_1), S(r_3,t_2), \ldots, S(r_p,t_{p-1})$.
\item For each symbol $S\in \calR$ the $j$'th left dead-end
  equivalence class is the set of variables corresponding to the
  $S(r_1,e_1^{(j)}), \ldots, S(r_p,e_p^{(j)})$; similarly for the
  $j$'th right dead-end equivalence class,
  $S(f_0^{(j)},t_0), S(f_1^{(j)},t_1), \ldots, S(f_{p-1}^{(j)},t_p)$.
\item We will write $\ldots, S_{i-1}, S_i, S_{i+1}, \ldots$ for
  variables in the same equivalence class derived from some symbol
  $S$.  We denote similarly the real variables,
  $\ldots, s_{i-1}, s_i, s_{i+1}, \ldots$ This notation is with some
  abuse, because does not specify whether we mean the odd equivalence
  class, or the even class, or one of the dead end classes.  Depending
  on the type of class, it may contain either $p$ or $p+1$ variables.
\item Finally, we notice that the distance between two consecutive
  variables in the same odd equivalence class or the same even
  equivalence class is exactly $2k$, i.e.  $d(S_{i-1}, S_i) =2k$,
  $\forall i$.  The distance between two consecutive variables in a
  dead-end equivalence class is $\geq 2k$, but in general depends on
  the symbol $S$.
\end{itemize}

\begin{definition}
  Let $\theta$ be any partial assignment from the Boolean variables in
  the zig-zag block $B(r_0, t_p)$ to $\set{0,1/2,1}$.  We say that
  $\theta$ is {\em consistent} if, for every equivalence class $E$ the
  following holds.  If $X, Y$ are two variables in that equivalence
  class and $\theta(X)$ is defined, then either $\theta(Y)$ is also
  defined as $\theta(X)=\theta(Y)$, or $\theta(Y)$ is undefined and
  $Y = S(r_0,t_0)$ or $Y = S(r_p,t_p)$ for some symbol $S$.
\end{definition}

In other words, $\theta$ must act in the same way on the entire
equivalence class, with only exception that it may leave some
variables in the first block $B(r_0, t_0)$ and the last block
$B(r_p,t_p)$ undefined.

We define now a partial, consistent assignment $\theta_0$ as follows,
see also Fig.~\ref{fig:block:type:2} (a).  Let $X$ be a tuple on a
dead-end branch, e.g. $X=S(r_i,e_i^{(j)})$.  Setting $X:=0$ or $X:=1$
may disconnect $U(r_0,t_0)$ from $V(r_p,t_p)$, for example, if $X$
occurs in one grounding of $C_0$ that connects $C_1(r_i,t_{i-1})$ with
$C_1(r_i,t_i)$, then setting $X:=0$ or $X:=1$ may disconnect the main
chain from $U(r_0,t_0)$ to $V(r_p,t_p)$.  If setting $X:=0$ or $X:=1$
does not disconnect $U(r_0,t_0)$ from $V(r_p,t_p)$, then we define
$\theta_0(X):=0$, or $1$ respectively, and do this for all tuples the
equivalence class of $X$; it holds that the tuples $U(r_0,t_0)$ and
$V(r_p,t_p)$ remain connected in all Boolean formulas
$Y^{(p)}_{\alpha \beta}[\theta_0]$, for all $\alpha, \beta$.  We
proceed similarly with the dead-end tuples on the right, i.e. we set
an entire equivalence class to 0 or to 1 if $U(r_0,t_0)$ and
$V(r_p,t_p)$ remain connected.  After this process, $\theta_0$ has the
following property:

\begin{definition} Let $U, V$ be the left- and right-ubiquitous
  symbols introduced above.  We say that the consistent partial
  assigning $\theta_0$ is {\em final} if, forall $\alpha, \beta$, the
  Boolean function $Y^{(p)}_{\alpha \beta}[\theta_0]$ is connected,
  and, for any tuple $X$ in $B(r_0,t_p)$, setting $X:=0$ or $X:=1$
  disconnects $U(r_0,t_0)$ from $V(r_p,t_p)$.
\end{definition}

We will fix $\theta_0$ from now on.

\subsection{{The Eigenvalues of the Zig-zag Block}}

We will now give a closed form formula for the probability of
$Y^{(p)}_{\alpha \beta}$, which, recall, we are using temporarily to
denote the lineage on the zig-zag block $B(r_0,t_p)$, see
Eq.~\eqref{eq:yab:on:zigzag}, i.e. without the prefix/suffix
blocks. Since $\alpha \in L(\mb G)$ and $\beta \in L(\mb H)$, the
probabilities $y^{(p)}_{\alpha \beta}$ form a
$(\bar m+1) \times (\bar n+1)$ matrix.  However, we show that this
matrix has rank 2, hence it can be described by a $2 \times 2$ matrix,
and, as a consequence, $y^{(p)}_{\alpha \beta}$ is given in terms of
two eigenvalues:
$y^{(p)}_{\alpha \beta} = a_{\alpha \beta} \lambda_1^{p} + b_{\alpha
  \beta} \lambda_2^{p}$ for $\lambda_1,\lambda_2 \neq 0$ and
$\lambda_1 \neq \pm \lambda_2$.

Let $C_0, C_1, \ldots, C_k$ be a left-right path in $Q$, of minimal
length, and recall that $U, V$ are two ubiquitous symbols that do not
occur in $C_1, C_{k-1}$.  Fix an index $j$ such that
$3 \leq j \leq k-5$, and fix a symbol
$S \in \symb(C_j) \cap \symb(C_{j+1})$.  Since $Q$ is final, both
$Q[S:=0]$ and $Q[S:=1]$ are safe queries, which implies that $S$
disconnects $Q_{\text{left}}$ from $Q_{\text{right}}$ in both
$Q[S:=0]$ and $Q[S:=1]$.  We will call $S$ an {\em articulation
  symbol}.  We denote by $S_0, S_1, \ldots, S_p$ the Boolean variables
forming the odd equivalence class $S(r_0,t_0), \ldots, S(r_p,t_p)$,
and denote by $s_0, s_1, \ldots, s_p$ their associated real
variables.  In Fig.~\ref{fig:block:type:2} (a) the articulation
variables are shown with an $x$. 

Denote by $B(r_0,r_i)$ and $B(t_i,t_p)$ the following  subsets of the
zig-zag block $B(r_0,t_p)$:
\begin{align*}
  B(r_0,r_i) \defeq & \bigcup_{\ell=1,i} \left(B(r_{\ell-1},t_{\ell-1}) \cup B(r_\ell,t_{\ell-1}) \cup \bigcup_j \left(B(r_\ell,e^{(j)}_\ell) \cup B(f^{(j)}_{\ell-1},t_{\ell-1})\right)\right)\\
  B(t_i,t_p) \defeq & \bigcup_{\ell=i+1,p} \left(B(r_\ell,t_{\ell-1}) \cup B(r_\ell,t_\ell) \cup \bigcup_j \left(B(f^{(j)}_{\ell-1},t_{\ell-1})\cup B(r_\ell,e^{(j)}_\ell)\right)\right)
\end{align*}
These are precisely the two connected components of $B(r_0,t_p)$ after
we remove the single block $B(r_i,t_i)$.

\begin{lemma}
  The Boolean variable $S_i$ disconnects $B(r_0,r_i)$ and $B(t_i,t_p)$
  (see Def.~\ref{def:properties:of:boolean:functions}) in the Boolean
  function $Y^{(p)}_{\alpha \beta}$, for any $\alpha, \beta$.
\end{lemma}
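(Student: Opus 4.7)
The plan is to exploit the articulation property of $S$, namely that $Q[S:=a]$ is safe for $a\in\set{0,1}$, to obtain a partition of the relational symbols of $Q$ that, in turn, induces a factorization of the lineage after substituting $S_i:=a$. At the end, we verify that this factorization is exactly of the form required by Def.~\ref{def:properties:of:boolean:functions}.

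First, I would translate safety of $Q[S:=a]$ into a syntactic partition. Since $Q$ is a forbidden type-II query of length $k$, $S\in\symb(C_j)\cap\symb(C_{j+1})$ for some $j$ with $3\leq j\leq k-5$, and $Q[S:=a]$ is safe, I claim the remaining relational symbols split as $\calR\sm\set{S}=\mathcal{L}_a\cup\mathcal{R}_a$ (disjoint union) such that every (non-redundant) clause of $Q[S:=a]$ uses symbols from only one side; moreover the surviving left clauses of $Q$, together with the residues of the middle clauses $C_1,\ldots,C_j[S:=a]$, land in $\mathcal{L}_a$, while the surviving right clauses, together with the residues of $C_{j+1}[S:=a],\ldots,C_{k-1}$, land in $\mathcal{R}_a$. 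The hypothesis $3\leq j\leq k-5$ ensures that after setting $S:=a$ the minimal-length path still has a ``genuine'' left portion and right portion, so Theorem~\ref{th:forbidden:syntax} (applied to both halves of what remains of the path) together with Lemma~\ref{lemma:forbidden:properties} yields a clean assignment of every symbol to exactly one of $\mathcal{L}_a$, $\mathcal{R}_a$.

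Next, I would classify the clauses of $Y^{(p)}_{\alpha\beta}$ according to which Boolean variables they touch. Middle-clause groundings $C(a,b)$ of $Q$ confine themselves to a single elementary block $B(a,b)$; for $(a,b)\neq (r_i,t_i)$ these lie entirely inside $B(r_0,r_i)$ or entirely inside $B(t_i,t_p)$ because of the zig-zag layout (all other pairs give trivially-true blocks). The $G_\alpha(r_0,\cdot)$-clauses lie in $B(r_0,r_i)$ and the $H_\beta(\cdot,t_p)$-clauses lie in $B(t_i,t_p)$. The only clauses that can involve variables of the articulation block $B(r_i,t_i)$ are the middle-clause groundings $C(r_i,t_i)$, the grounding of each left clause of $Q$ at $x:=r_i$, and the grounding of each right clause of $Q$ at $y:=t_i$. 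A direct inspection of the nontrivial neighbours of $r_i$ and $t_i$ in $B^{(p)}(r_0,t_p)$ shows that the left-clause grounding at $r_i$ uses only variables of $B(r_0,r_i)\cup B(r_i,t_i)$ (never of $B(t_i,t_p)$), and symmetrically the right-clause grounding at $t_i$ uses only variables of $B(r_i,t_i)\cup B(t_i,t_p)$. Hence \emph{no single clause} of $Y^{(p)}_{\alpha\beta}$ touches both $B(r_0,r_i)$ and $B(t_i,t_p)$; the two halves can only communicate through the shared variables in $B(r_i,t_i)$.

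The main step is then to split the remaining variables $\set{T(r_i,t_i):T\neq S}$ of $B(r_i,t_i)$ between the two sides using the partition $\mathcal{L}_a\cup \mathcal{R}_a$. After substituting $S_i:=a$: (i) every middle-clause grounding $C(r_i,t_i)[S_i:=a]$ is either a tautology or an instance of a clause of $Q[S:=a]$ and therefore uses only symbols from $\mathcal{L}_a$ or only from $\mathcal{R}_a$; (ii) the left-clause groundings at $r_i$ become left clauses of $Q[S:=a]$ grounded at $r_i$, hence only mention atoms $T(r_i,y)$ with $T\in\mathcal{L}_a$, and so when restricted to $y=t_i$ they involve only ``$\mathcal{L}_a$-variables'' of $B(r_i,t_i)$; (iii) symmetrically for right-clause groundings at $t_i$. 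Assigning each variable $T(r_i,t_i)$ ($T\neq S$) to $\vars(F_1)$ if $T\in\mathcal{L}_a$ and to $\vars(F_2)$ if $T\in\mathcal{R}_a$, we obtain the decomposition
\[
Y^{(p)}_{\alpha\beta}[S_i:=a]\equiv F_1\wedge F_2,
\]
where $F_1$ uses no variable of $B(t_i,t_p)$ and $F_2$ uses no variable of $B(r_0,r_i)$, for both $a=0$ and $a=1$. This is precisely the disconnection condition of Def.~\ref{def:properties:of:boolean:functions}.

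The main obstacle I anticipate is the first step: verifying rigorously that safety of $Q[S:=a]$ yields a \emph{clean} symbol-level partition with no shared symbols, even after minimization and removal of redundant clauses. This requires a careful case analysis distinguishing ubiquitous symbols (which, by Lemma~\ref{lemma:forbidden:properties}, are confined to one end of the path and, critically, absent from $C_1$ and $C_{k-1}$) from non-ubiquitous path symbols, and relies on the lower bound $j\geq 3$ and upper bound $j\leq k-5$ to ensure that neither the left nor the right half of the minimal path can collapse after setting $S:=a$. The other two steps are essentially bookkeeping, once the partition is in place, together with a routine check that the consistent assignment $\theta_0$ fixed in Section~\ref{subsec:consistent:assignment} only touches dead-end branches and hence respects the split.
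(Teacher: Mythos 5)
Your proposal is correct and follows essentially the same route as the paper, which disposes of this lemma in one sentence by observing that safety of $Q[S:=0]$ and $Q[S:=1]$ forces $S$ to disconnect $Q_{\text{left}}$ from $Q_{\text{right}}$ at the query level, and that this symbol partition lifts block-by-block to a factorization of the grounded lineage. The "main obstacle" you flag — that the partition is clean because $S$, lying in $\symb(C_j)\cap\symb(C_{j+1})$ with $3\leq j\leq k-5$ on a minimal path, cannot be ubiquitous or occur in any left or right clause, so those clauses survive the substitution intact — is exactly the fact the paper takes for granted, and your bookkeeping of which grounded clauses touch $B(r_i,t_i)$ matches the intended argument.
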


In other words, we have:
\begin{align}
 Y_{\alpha \beta}^{(p)}[S_i := 0] = & A_0 \wedge B_0 &
 Y_{\alpha \beta}^{(p)}[S_i := 1] = & A_1 \wedge B_1 \label{eq:articulation:one:step}
\end{align}
where all Boolean variables from $B(r_0,r_i)$ are in $A_0, A_1$, and
all Boolean variables from $B(t_i,t_p)$ are in $B_0, B_1$.  The proof
follows immediately from the fact that $S$ disconnects
$Q_{\text{left}}$ from $Q_{\text{right}}$ in both $Q[S:=0]$ and
$S[S:=1]$.

Since $S_\ell$ is in $A_0, A_1$ for $\ell < i$, we can repeat this
process and use $S_\ell$ to disconnect $A_0, A_1$, etc.  More
generally, consider any values $v_0, v_1, \ldots, v_p \in \set{0,1}$.
Then, assigning these values to the variables of the articulation
symbol splits $Y_{\alpha \beta}^{(p)}$ into $p+2$ formulas with no
common Boolean variables.
\begin{align}
  Y_{\alpha \beta}^{(p)}[S_0:=v_0, S_1:=v_1, \cdots, S_p:=v_p] = & U_\alpha^{(v_0)} \wedge Z_1^{(v_0v_1)} \wedge \cdots \wedge Z_{p}^{(v_{p-1}v_p)} \wedge V_\beta^{(v_p)}\label{eq:articulate:big:y}
\end{align}
The only expressions that depend on the choice of $\alpha, \beta$ are
$U_\alpha^{(v_0)}$ and $V_\beta^{(v_p)}$.  Since all Boolean
expressions on the RHS in~\eqref{eq:articulate:big:y} have disjoint
Boolean variables, their probabilities are independent, hence their
arithmetization is a product of multilinear polynomials:
\begin{align*}
  y_{\alpha \beta}^{(p)}[s_0:=v_0, s_1:=v_1, \cdots, s_p:=v_p] = & u_\alpha^{(v_0)} \cdot z_1^{(v_0v_1)} \cdots z_p^{(v_{p-1}v_p)} \cdot v_\beta^{(v_p)}
\end{align*}
We express the result in matrix form, where $\diag(a,b)$, denotes the
diagonal matrix
$\left[\begin{array}{cc}a & 0 \\ 0 & b\end{array}\right]$.  This is an
important expression which we define formally:

\begin{definition} \label{def:polynomial:y:alpha:beta} The polynomial $y^{(p)}_{\alpha \beta}$ is defined
  as follows:
\begin{align*}
  y^{(p)}_{\alpha, \beta} =
& \left[
  \begin{array}{cc}
    u_\alpha^{(0)}  & u_\alpha^{(1)}
  \end{array}
\right]
\cdot
\diag(1-s_0, s_0)
\cdot
\left[
 \begin{array}{cc}
  z_1^{(00)} &   z_1^{(01)}\\
  z_1^{(10)} &   z_1^{(11)}
 \end{array}
\right]
\cdot
\diag(1-s_1, s_1)
\cdot
\left[
 \begin{array}{cc}
  z_2^{(00)} &   z_2^{(01)}\\
  z_2^{(10)} &   z_2^{(11)}
 \end{array}
\right]
\cdots
\left[
 \begin{array}{cc}
  z_p^{(00)} &   z_p^{(01)}\\
  z_p^{(10)} &   z_p^{(11)}
 \end{array}
\right]
\cdot
\diag(1-s_p, s_p)
\left[
 \begin{array}{c}
  v_\beta^{(0)}\\
  v_\beta^{(1)}
 \end{array}
\right]
\end{align*}
\end{definition}

Denote by $\mb z_i$ the $2 \times 2$ matrix:
\begin{align*}
\mb z_i \defeq &
\left[
 \begin{array}{cc}
  z_i^{(00)} &   z_i^{(01)}\\
  z_i^{(10)} &   z_i^{(11)}
 \end{array}
\right]
\end{align*}
We view it as matrix of multilinear polynomials, where the variables
represent the (yet unchosen) probabilities of the tuples in all our
blocks.

Our next task is to define a consistent assignment $\theta$ that
extends $\theta_0$ from Sec.~\ref{subsec:y:is:connected}, such that
$\det(\mb z_i[\theta]) \neq 0$.  For that we will use
Lemma~\ref{lemma:determinant:connected} in the introduction. However,
we have a problem: the matrix $\mb z_i$ may contain two variables from
the same equivalence class, and the assignment $\theta$ from
Lemma~\ref{lemma:determinant:connected} might assign them different
values, making $\theta$ is inconsistent.  This happens whenever the
decomposition in Eq.~\eqref{eq:articulation:one:step} has migrating
variables.  Since we chose the articulation variable $S$ in
$\symb(C_j) \cap \symb(C_{j+1})$, the Boolean variable
$S_i = S(r_i,t_i)$ disconnects
$U_i \defeq U(r_i,t_i), V_i \defeq V(r_i,t_i)$, i.e. the
left-ubiquitous symbol and the right-ubiquitous symbol in block
$B(r_i,t_i)$.  The distance from $U_i$ to $S_i$ is $j$, and the
distance from $S_i$ to $V_i$ is $k-j-1$, hence, by
Lemma~\ref{lemma:disconnect:balls} item~\ref{item:disconnect:balls},
$S_i$ also separates all the symbols in $C_0,\ldots,C_{j-2}$ from
$C_{j+2}, \ldots, C_k$ (more precisely: their groundings in the block
$B(r_i,t_i)$), hence the only possible migrating variables are of the
form $S'_i \defeq S'(r_i,t_i)$, with
$S' \in \symb(C_{j-1})\cup \cdots \cup \symb(C_{j+1})$.  In other
words, $d(S'_i, S_i) \leq 1$.  Suppose that $S'_i$ migrates from left
to right,  i.e. it occurs in $A_0$ and in $B_1$ in
Eq.~\eqref{eq:articulation:one:step}.  Then:
\begin{itemize}
\item $s_i'$ occurs in the first column of $\mb z_i$, because this is
  part of $A_0$.
\item $s_i'$ occurs in the second row of $\mb z_{i+1}$, because this
  is part of $B_0$; then it's equivalent variable $s_{i-1}'$ occurs in
  the second row of $\mb z_i$.
\end{itemize}
If $S'_i$ migrates from right to left, i.e. from $B_1$ to $A_0$, then:
\begin{itemize}
\item $s_i'$ occurs in the first row of $\mb z_{i+1}$, because this is
  part of $B_0$; hence $s'_{i-1}$ occurs in the first row of $\mb
  z_i$.
\item $s_i'$ occurs in the second column of $\mb z_i$, because this is
  part of $A_1$.
\end{itemize}
Pictorially, these two cases are illustrated as follows:
\begin{align}
  \mb z_i =
&
\left[
  \begin{array}{ll}
    z_i^{(00)}(s_i') & z_i^{(01)} \\
    z_i^{(10)}(s_{i-1}', s_i') & z_i^{(11)}(s_{i-1}')
  \end{array}
\right]
&
  \mb z_i =
&
\left[
  \begin{array}{ll}
    z_i^{(00)}(s'_{i-1}) & z_i^{(01)}(s'_{i-1},s_i') \\
    z_i^{(10)} & z_i^{(11)}(s_i')
  \end{array}
\right] \label{eq:how:variables:migrate}
\end{align}

We seek a consistent assignment $\theta$, hence we need
$\theta(s_{i-1}')=\theta(s_i')$, and for that we will simply
substitute the real variable $s'_{i-1}$ with the variable $s'_i$.  To
do that, we need the following technical lemma.

\begin{lemma} \label{lemma:degree:2:connected}
  Let $F$ be a connected Boolean function, and
  $\mb A = (A_1, \ldots, A_k)$, $\mb B=(B_1, \ldots, B_k)$ be two
  tuples of $k$ Boolean variables each.  Let $X$ be a variable that
  disconnects $\mb A, \mb B$, such that  $d(\mb A, X) \geq 3$ and
  $d(\mb B, X) \geq 3$.  Let $f$ be the arithmetization of $F$.  Then
  the polynomial $f[\mb b := \mb a]$, where each variable $b_i$ is
  substituted by the variable $a_i$, is irreducible.
\end{lemma}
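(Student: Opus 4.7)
My plan is to prove the lemma by contradiction, following a Shannon-expansion plus irreducible-factor analysis, where the main delicate point is to rule out a ``mixed'' case via the distance hypothesis and the clause structure of monotone formulas. I sketch the steps below.

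\textbf{Setup.} Since $F$ is connected, the multilinear polynomial $f$ is irreducible by Lemma~\ref{lemma:from:polynomial:to:boolean} (in the monotone setting of the paper, the correspondence is tight because common factors would force shared variables to vanish by multilinearity). Apply Shannon expansion: $f=(1-x)f_0+xf_1$. Since $X$ disconnects $\mb A$ from $\mb B$, for each $i\in\{0,1\}$ we have $F[X{:=}i]\equiv F_i\wedge G_i$ with disjoint variables, $\mb A\subseteq\vars(F_i)$ and $\mb B\subseteq\vars(G_i)$, so $f_i=f_i^{(A)}\cdot g_i^{(B)}$. Let $\mb M_{A\to B}$ (resp.\ $\mb M_{B\to A}$) denote the migrating variables from the $\mb A$-side to the $\mb B$-side (resp.\ reverse). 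By Lemma~\ref{lemma:migrating:d} and the hypothesis $d(\mb A,X),d(\mb B,X)\geq 3$, every migrating variable sits at distance $\geq 2$ from both $\mb A$ and $\mb B$; this is the only place the distance-$3$ assumption is used.

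\textbf{Degree argument and four cases.} Assume for contradiction $\tilde f\defeq f[\mb b{:=}\mb a]=p\cdot q$ with $p,q$ non-constant. Since $\tilde f$ has degree $1$ in $x$, WLOG $x\notin\vars(q)$, so $q$ divides both $\tilde f[x{=}0]=f_0^{(A)}\tilde g_0$ and $\tilde f[x{=}1]=f_1^{(A)}\tilde g_1$, where $\tilde g_i\defeq g_i^{(B)}[\mb b{:=}\mb a]$. For any irreducible factor $q_i$ of $q$, unique factorization places $q_i$ in exactly one of $\{f_0^{(A)},\tilde g_0\}$ and one of $\{f_1^{(A)},\tilde g_1\}$, giving four cases. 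If $q_i\mid f_0^{(A)}$ and $q_i\mid f_1^{(A)}$, then $q_i\mid f$, contradicting irreducibility. If $q_i\mid\tilde g_0$ and $q_i\mid\tilde g_1$, the variables of $q_i$ lie in $\mb a\cup\vars(g_0^{(B)})\cap\vars(g_1^{(B)})$; lifting by the substitution $\mb a\mapsto\mb b$ gives $\hat q_i\mid g_0^{(B)}$ and $\hat q_i\mid g_1^{(B)}$, hence $\hat q_i\mid f$, contradiction.

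\textbf{The mixed case (main obstacle).} The delicate case is $q_i\mid f_0^{(A)}$ and $q_i\mid\tilde g_1$ (the symmetric case is analogous). Here $\vars(q_i)\subseteq\mb a\cup\mb M_{A\to B}$. Apply Lemma~\ref{lemma:from:polynomial:to:boolean} to the factorization $f_0^{(A)}=q_i\cdot h$ (multilinearity forces $\vars(q_i)\cap\vars(h)=\emptyset$), obtaining a Boolean decomposition $F_0\equiv Q\wedge H$ with $\vars(Q)\subseteq\mb a\cup\mb M_{A\to B}$ disjoint from $\vars(H)$. The clauses of $Q$ are a subset of the clauses of $F_0$, and each such clause of $F_0$ is either a clause of $F$ not containing $X$ (``type~(i)'') or a clause of $F$ containing $X$ with $X$ removed (``type~(ii)''). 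The distance hypothesis $d(\mb A,X)\geq 3$ forbids any type~(ii) clause of $Q$ from containing an $\mb a$-variable (else $d(\mb A,X)\leq 1$ via this clause); so type~(ii) clauses of $Q$ are pure in $\mb M_{A\to B}$. The type~(i) clauses of $Q$ are also clauses of $F_1\wedge G_1$, and since $\vars(F_1)\cap\vars(G_1)=\emptyset$ while $\mb a\subseteq\vars(F_1)$ and $\mb M_{A\to B}\subseteq\vars(G_1)$, no type~(i) clause of $Q$ can mix $\mb a$ with $\mb M_{A\to B}$. Since $Q$ is connected, this forces $\vars(Q)\subseteq\mb a$ or $\vars(Q)\subseteq\mb M_{A\to B}$.

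\textbf{Closing the mixed case.} In the sub-case $\vars(Q)\subseteq\mb a$, all clauses of $Q$ are type~(i) (type~(ii) ones would need $\mb a$-variables together with $X$, already excluded), and since $\vars(Q)$ is disjoint from every other variable of $F$ appearing in a clause with it (using that $\vars(Q)\cap\vars(H\wedge G_0)=\emptyset$ in $F_0$), the same clauses appear as a disjoint factor of $F_1$; thus $q_i\mid f_1^{(A)}$, reducing to case~(a), contradiction. In the sub-case $\vars(Q)\subseteq\mb M_{A\to B}$, the polynomial $q_i$ involves no $\mb a$-variable, so the substitution $\mb b:=\mb a$ is vacuous on $q_i$ and we obtain $q_i\mid g_1^{(B)}$ directly; combined with $q_i\mid f_0^{(A)}$ this gives $q_i\mid f$, contradicting the irreducibility of $f$. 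The main obstacle is precisely this structural argument linking the polynomial factorization of $f_0^{(A)}$ to the clause structure of $F$ and exploiting the distance-$3$ gap to force a rigid A-side/B-side partition of the clauses of $Q$.
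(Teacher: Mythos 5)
Your proof is correct in its essentials but follows a genuinely different route from the paper's. The paper avoids any analysis of the clause structure of the offending factor: it picks two auxiliary ``buffer'' variables $C,D$ with $d(\mb A,C)=d(\mb B,D)=1$, notes that $X$ still disconnects $\mb A C$ from $\mb B D$ and that all of $\mb a, c$ (resp.\ $\mb b,d$) sit inside a single irreducible factor of each side polynomial, and then runs a three-way case split on whether the irreducible factor $k$ of $f[\mb b:=\mb a]$ contains $c$, contains $d$, or neither; in every case $k$ (or its lift $k[\mb a:=\mb b]$) divides both $f[x{:=}0]$ and $f[x{:=}1]$ and hence $f$. Your four-way split on which side of each Shannon branch $q_i$ divides reaches the same contradiction, but you resolve the mixed case by passing to the Boolean formula $Q$ behind $q_i$, confining $\vars(Q)$ to $\mb a\cup\mb M_{A\to B}$, and using connectivity of $Q$ plus the fact that no clause can join $\mb a$ to a migrating variable (which is exactly Lemma~\ref{lemma:migrating:d}) to force $\vars(Q)\subseteq\mb a$ or $\vars(Q)\subseteq\mb M_{A\to B}$. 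Your version makes more explicit where the distance-$3$ hypothesis is consumed; the paper's buffer-variable trick is shorter because it never needs the migrating-variable dichotomy.

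One step of yours needs tightening: in the sub-case $\vars(Q)\subseteq\mb a$, the inference ``the same clauses appear as a disjoint factor of $F_1$'' does not follow merely from $\vars(Q)\cap\vars(H\wedge G_0)=\emptyset$ in $F[X{:=}0]$. A type-(i) clause of $F$ touching $\vars(Q)$ could fail to appear in the minimized CNF of $F[X{:=}0]$ because it is subsumed there by a type-(ii) clause $D\setminus\{X\}$ with $X\in D$; such a clause would then reconnect $\vars(Q)$ to the rest of $F[X{:=}1]$ and block the conclusion $q_i\mid f_1^{(A)}$. The repair is available from your own hypotheses: subsumption forces the subsuming clause $D\ni X$ to share a variable with a clause containing some $a_j$, giving $d(\mb A,X)\le 1$ and contradicting $d(\mb A,X)\ge 3$; combined with the observation that for positive CNF the prime implicates of $F[X{:=}1]$ are exactly the clauses of $F$ not containing $X$, this closes the gap. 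With that addition the argument is complete.
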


Notice that, while $f$ is a multilinear polynomial, each variable
$a_i$ has degree 2 in the polynomial $f[\mb b := \mb a]$.  In general,
if $f$ is irreducible, then equating variables does not keep it
irreducible.  For example, if $F=A \vee B$ then $f = a+b-ab$ and
$f[b:=a] = 2a - a^2$ factorizes as $a(2-a)$.  The lemma gives
sufficient conditions for which $f[\mb B := \mb A]$ is irreducible.

\begin{proof}
  By Lemma~\ref{lemma:disconnect:balls} (\ref{item:disconnect:balls})
  $X$ disconnects a ball around of $\mb A$ from a ball around $\mb B$.
  Let $C, D$ be two variables s.t. $d(\mb A, C)=d(\mb B,D)=1$, thus,
  $X$ disconnects $\mb A C, \mb B D$.  Decompose the multilinear
  polynomial according to $X$:
  \begin{align*}
    f[x:=0] = & g_0(\mb a,c) \cdot h_0(\mb b,d) & f[x:=1] =  & g_1(\mb a,c) \cdot h_1(\mb b,d)\\
    f[\mb b:=\mb a, x:=0] = & g_0(\mb a,c) \cdot h_0(\mb a,d) & f[\mb b:=\mb a, x:=1] =  & g_1(\mb a,c) \cdot h_1(\mb a,d)
  \end{align*}
  where we indicated that $\mb a,c$ occurs only in $g_0, g_1$, while
  $\mb b, d$ occurs only in $h_0, h_1$.  While $g_0(\mb a,c)$ may
  further factorize, all variables $\mb a, c$ are in the same
  irreducible factor, because their clauses are connected in
  $F[X:=0]$; similarly for the other three expressions.  Assume now
  that $f[\mb b := \mb a]$ factorizes; since the degree of $X$ in
  $f[\mb b := \mb a]$ is $\leq 1$, there exists an irreducible factor
  $k$ of $f[\mb b := \mb a]$ that does not contain $X$.  The $k$
  divides both expressions in the second line above.  We consider
  three cases.

  \begin{description}
  \item[Case 1:] If $k$ contains the variable $c$, then it must divide
    both $g_0(\mb a, c)$ and $g_1(\mb a, c)$.  It follows that $k$
    divides $f = g_0\cdot h_0 \cdot (1-x) + g_1 \cdot h_1 \cdot x$.
    This is a contradiction because $f$ is irreducible.
  \item[Case 2:] If $k$ contains the variable $d$, the it must divide
    $h_0(\mb a, d)$ and $h_1(\mb a, d)$.  In particular $k$ contains
    the variables $\mb a$, and the degrees of all $\mb a$-variables is
    $1$, because their degree in $h_0(\mb a, d), h_1(\mb a, d)$ is 1.
    Substitute $\mb b$ for $\mb a$ in $k$, we a conclude that
    $k[\mb a:= \mb a]$ divides both $h_0(\mb b, d)$ and
    $h_1(\mb b, d)$. Hence it divides $f$ by the argument in Case 1,
    and we reached a contradiction.
  \item[Case 3] If $k$ contains neither $c$ nor $d$, then by the
    argument above it cannot contain any of $\mb a$.  Since $k$
    divides $g_0(\mb a, c) \cdot h_0(\mb a, d)$, it also divides
    $g_0(\mb a, c) \cdot h_0(\mb b, d)$, and it similarly divides
    $g_1(\mb a, c) \cdot h_1(\mb b, d)$.  This, again, implies that
    $k$ divides $f$, which is a contradiction.
  \end{description}
\end{proof}

We will now prove the existence of a consistent assignment $\theta$
that extends $\theta_0$ such that $\det(\mb z_i[\theta]) \neq 0$.
Recall that $\theta_0$, introduced in the previous section, has the
property that all Boolean functions $Y^{(p)}_{\alpha \beta}[\theta_0]$
are connected, and every variable $X$ disconnects $U(r_0,t_0)$ from
$V(r_p,t_p)$.

\begin{theorem} \label{th:assignment:for:lambdas} Let
  $\mb V \defeq \bigcup_i \vars(\mb z_i) \cup \set{s_0, s_1, \ldots,
    s_p}$, and define $\theta$ the following assignment: if
  $\theta_0(x)$ is defined then $\theta(x) \defeq \theta_0(x)$,
  otherwise $\theta(x) \defeq 1/2$.  Then, for every $i$,
  $\det(\mb z_i[\theta]) \neq 0$.
\end{theorem}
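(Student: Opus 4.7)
The plan is to reduce the claim to the polynomial identity $\det(\mb z_i)\not\equiv 0$ for the matrix of polynomials $\mb z_i$, then progressively evaluate, first along the equivalence-class identifications forced by consistency, and finally at $\theta$. Each entry $z_i^{(ab)}$ is the arithmetization of the Boolean formula obtained from the link lineage $Z_i$ by conditioning the articulation variables $S_{i-1}:=a, S_i:=b$. Because $S$ is chosen so that $S\in\symb(C_j)\cap\symb(C_{j+1})$ with $3\le j\le k-5$, and because $Q$ is a forbidden query (so, via Lemma~\ref{lemma:type2:connected} applied to $Z_i$), the link is connected and depends on both $S_{i-1}$ and $S_i$. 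Lemma~\ref{lemma:determinant:connected} then gives $\det(\mb z_i)\not\equiv 0$ as a formal polynomial in all its variables.

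Next I would incorporate migrating variables. Following the pattern in~\eqref{eq:how:variables:migrate}, a migrating variable appears simultaneously in the ``$A$-side'' under one conditioning and in the ``$B$-side'' of the adjacent block under the opposite conditioning; consistency forces the substitution $s'_{i-1}:=s'_i$ in $\det(\mb z_i)$, after which each such identified variable has degree exactly $2$ in the determinant. To verify that the substituted polynomial remains nonzero, I would invoke Lemma~\ref{lemma:degree:2:connected}. The hypotheses are met: the articulation symbol $S$ sits $j\ge 3$ steps from the left-ubiquitous tuple and $k-j-1\ge 4$ steps from the right-ubiquitous tuple, so the balls around $\mb A$ and $\mb B$ are well separated from the $S_i$-cut. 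The conclusion is that the polynomial $\det(\mb z_i)$, after all consistency-induced identifications, is irreducible, hence still not identically zero.

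The final phase is the evaluation at $\theta$. First apply $\theta_0$: by construction it assigns only those dead-end variables whose value does not disconnect $U(r_0,t_0)$ from $V(r_p,t_p)$, so the link's Boolean function remains connected between $S_{i-1}$ and $S_i$, and Lemma~\ref{lemma:determinant:connected} still gives $\det(\mb z_i)[\theta_0]\not\equiv 0$ as a polynomial in the remaining variables. Now I use the finality of $\theta_0$: for any unassigned variable $u$, setting $u:=0$ or $u:=1$ disconnects $U(r_0,t_0)$ from $V(r_p,t_p)$, hence disconnects the link between $S_{i-1}$ and $S_i$, which by Lemma~\ref{lemma:determinant:connected} forces $\det(\mb z_i)[\theta_0,u:=0]\equiv 0$ and $\det(\mb z_i)[\theta_0,u:=1]\equiv 0$. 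Since $u$ has degree $\le 2$ in $\det(\mb z_i)[\theta_0]$ (multilinear for non-migrating $u$, quadratic for migrating $u$ after identification), vanishing at both $0$ and $1$ forces divisibility by $u(1-u)$. Extracting $u(1-u)$ for each remaining variable and counting degrees, the quotient is a nonzero constant $c$. Therefore $\det(\mb z_i)[\theta]=c\cdot\prod_u\tfrac{1}{4}\neq 0$.

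The main obstacle is Step 2: carefully tracking how migrating variables identify entries of $\mb z_i$ and then verifying that the distance hypotheses of Lemma~\ref{lemma:degree:2:connected} really hold in our setting. This is exactly where the requirement $3 \le j\le k-5$ is used, which in turn is why the theorem needs $Q$ to have length $k \ge 5$ (and ultimately is one reason Lemma~\ref{lemma:long} was applied three times to pass from an arbitrary Type II query to one of length $\ge 8$). The rest of the proof is a rigorous but routine calculus of divisibility and degree bounds.
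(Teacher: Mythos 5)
Your overall strategy --- prove $\det(\mb z_i)\not\equiv 0$, then show it is divisible by $u(1-u)$ for every remaining variable $u$ and evaluate everything at $1/2$ --- is a direct transplant of the Type~I argument (Theorem~\ref{thm:DetANon0} and Corollary~\ref{corr:fADet}), and it breaks precisely at the point where Type~II differs from Type~I: migrating variables. Your key step asserts that, because setting any unassigned $u$ to $0$ or $1$ disconnects $U(r_0,t_0)$ from $V(r_p,t_p)$ (finality of $\theta_0$), it must disconnect $S_{i-1}$ from $S_i$ inside the link, so that $\det(\mb z_i)[\theta_0,u:=0]\equiv\det(\mb z_i)[\theta_0,u:=1]\equiv 0$ for the \emph{specific} $i$ under consideration. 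That ``hence'' is not justified. A migrating real variable occurs, after the consistency identification, in two consecutive matrices $\mb z_i$ and $\mb z_{i+1}$, and the cut produced by setting it to $0$ or $1$ may land entirely on one side of the articulation tuple $S_i$ --- i.e., it may separate $S_i$ from $S_{i+1}$ while leaving $S_{i-1}$ and $S_i$ connected. The paper's own version of this step (the last Claim in its proof) is deliberately stated as ``there exists $i$ such that $\det(\mb z_i[\theta])=0$,'' and its proof concludes only that $\det(\mb z_i[X:=0,\theta_0])\equiv 0$ \emph{or} $\det(\mb z_{i+1}[X:=0,\theta_0])\equiv 0$. Without divisibility by $u(1-u)$ for \emph{every} variable of $\mb z_i$, your degree count collapses: the quotient after extracting the factors you can justify is a nonconstant polynomial, and nothing prevents it from vanishing at $1/2$.

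Because of this, the paper does not attempt your product formula. It argues indirectly: (i) the full lineage polynomial $y^{(p)}_{\alpha\beta}[\mb b:=\mb a,\theta_0]$ is irreducible --- this is where Lemma~\ref{lemma:degree:2:connected} is actually used; note that it applies to the arithmetization of a connected Boolean function, not to $\det(\mb z_i)$, which is not an arithmetization, and the non-vanishing of $\det(\mb z_i[\mb b:=\mb a,\theta_0])$ is then deduced from that irreducibility via the rank-one factorization Theorem~\ref{th:rank:1:matrix:of:polynomials}, a link your write-up skips in both Step~1 and Step~2; (ii) by Lemma~\ref{lemma:three:values}, each determinant admits \emph{some} $\set{0,1/2,1}$-valued extension of $\theta_0$ making it nonzero, and these are glued into one consistent assignment; (iii) any consistent extension that uses $0$ or $1$ outside $\Dom(\theta_0)$ kills \emph{some} determinant (the weaker, provable statement, which is where the hypothesis $p\geq 3$ enters). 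Steps (ii) and (iii) together force the surviving assignment to be the all-$1/2$ one. To repair your proof you would either have to establish the stronger per-$i$ disconnection claim for migrating variables --- which does not appear to hold in general --- or switch to this existence-plus-exclusion scheme.
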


\begin{proof}
  Choose arbitrary $\alpha, \beta$, and recall that the polynomial
  $y^{(p)}_{\alpha \beta}[\theta_0]$ is irreducible
  (Lemma~\ref{lemma:type2:connected}).  Fix $i = 1,\ldots,p$: we will
  first define some consistent $\theta$ that extends $\theta_0$ such
  that $\det(\mb z_i[\theta]) \neq 0$.  Denote by $\mb a$ the set of
  migrating variables $s'_i$, and denote by $\mb b$ the set of the
  predecessor equivalent variables $s'_{i-1}$; as we have seen, both
  $s'_i$ and $s'_{i-1}$ belong to $\mb z_i$.  Let $\mb A, \mb B$ their
  corresponding Boolean variables.  To apply
  Lemma~\ref{lemma:degree:2:connected}, we need to find some variable
  $X$ that disconnects $\mb A, \mb B$ and that is far from both.
  Using the left-right path $C_0, C_1, \ldots, C_k$, choose any symbol
  $S'' \in C_j$ for any $1 < j < k-1$, and define
  $X \defeq S''(r_i, t_{i-1})$.  Since all $\mb B$-variables are in
  the block $B(r_{i-1},t_{i-1})$ and all $\mb A$-variables are in
  $B(r_i, t_i)$, we have $d(\mb B, X) \geq 2$ and
  $d(\mb A, X) \geq 2$. Therefore, by the lemma, the polynomial
  $y^{(p)}_{\alpha \beta}[\mb b := \mb a, \theta_0]$ is irreducible.

  \begin{claim}
    $\det(\mb z_i[\mb b := \mb a, \theta_0]) \not\equiv 0$.
  \end{claim}

  \begin{proof}
    Assume the contrary, that $\det(\mb z_i[\mb b := \mb a, \theta_0]) \equiv
    0$. Then, by Theorem~\ref{th:rank:1:matrix:of:polynomials}, the
    matrix $\mb y_i[\mb b := \mb a, \theta_0]$ has rank 1, more
    precisely there exists polynomials $g_0, g_1, h_0, h_1$, such that:
  \begin{align*}
    \left[
    \begin{array}{c}
      g_0 \\ g_1
    \end{array}
\right] \cdot
    \left[
    \begin{array}{cc}
      h_0 & h_1
    \end{array}
\right] \equiv & \mb z_i[\mb b := \mb a, \theta_0]
  \end{align*}
  Substituting this expression in
  $y^{(p)}_{\alpha \beta}[\mb b := \mb a, \theta_0]$ we obtain a
  factorization:

\begin{align*}
  y^{(p)}_{\alpha_0 \beta_0}[\mb b := \mb a, \theta_0] = 
& 
\underbrace{
\left[
  \begin{array}{cc}
    u_{\alpha_0}^{(0)}  & u_{\alpha_0}^{(1)}
  \end{array}
\right]
\cdots
\left[
  \begin{array}{c}
    g_0 \\ g_1
  \end{array}
\right]}_{\mbox{factor 1}}
\cdot
\underbrace{
\left[
    \begin{array}{cc}
      h_0 & h_1
    \end{array}
\right]
\cdots
\left[
 \begin{array}{c}
  v_\beta^{(0)}\\
  v_\beta^{(1)}
 \end{array}
\right]}_{\mbox{factor 2}}
\end{align*}
This contradicts the fact that
$y^{(p)}_{\alpha \beta}[\mb b := \mb a, \theta_0]$ is irreducible.
This completes the proof of the claim.
  \end{proof}

  \begin{claim}
    There exists an assignment $\theta_i$ of the variables of
    $\mb z_i[\mb b := \mb a]$ with values in $\set{0,1/2,1}$
    s.t. $\theta_i$ extends $\theta_0$ and
    $\det(\mb z_i[\mb b: := \mb a, \theta_i]) \neq 0$.
  \end{claim}

  \begin{proof}
    We use Lemma~\ref{lemma:determinant:connected} in the
    introduction.  To apply it we must verify that
    $\det(\mb z_i[\mb b := \mb a, \theta_0])$ is a polynomial of degree
    $\leq 2$ in each variable.  This follows immediately by inspecting
    Eq.~\eqref{eq:how:variables:migrate}.  When $s_i'$ migrates from
    left to right then the variable $s_i'=s'_{i-1}$ has degree 2 in
    $y_i^{(10)}$ and degree 0 in $y_i^{(01)}$, hence it has degree 2
    in $\det(\mb z_i)$, and similarly for variables that migrate from
    right to left. Therefore, by
    Lemma~\ref{lemma:determinant:connected}, there exists an
    assignment $\theta_i$ of the variables in
    $\mb z_i[\mb b := \mb a, \theta_0]$ (hence: an extension of
    $\theta_0$) such that
    $\det(\mb z_i[\mb b := \mb a, \theta_i]) \neq 0$.
  \end{proof}

  From here we derive immediately:

  \begin{claim}
    There exists a consistent assignment $\theta$ of the variables
    $\mb V$ that extends $\theta_0$ and $\det(\mb z_i[\theta]) \neq 0$
    for all $i=1,p$.
  \end{claim}

  \begin{proof}
    Take $\theta = \theta_1 \cup \theta_2 \cup \cdots \cup \theta_p$.
    While $\theta_{i-1}, \theta_i$ are defined on some common
    variables (the migrating variables) they have the same values.
    Finally, assign
    $\theta(s_0) = \theta(s_1) = \cdots = \theta(s_p)=1/2$ (since the
    separator variables are not part of any matrix $\mb z_i$).
  \end{proof}

  Finally, we prove that $\theta$ assigns the value $1/2$ to every
  variable not in the domain of $\theta_0$.

  \begin{claim}
    Assume $p \geq 3$, and let $\theta$ be any consistent assignment
    of the variables $\mb V$ that extends $\theta_0$.  Suppose that
    there exists some variable $X \not \in \Dom(\theta_0)$, such
    $\theta(X)=0$ or $\theta(X)=1$.  Then there exists $i$ such that
    $\det(\mb z_i[\theta])= 0$.
  \end{claim}

  \begin{proof}
    Let $X$ be any variable $\not \in \Dom(\theta_0)$.  Consider the
    product of matrices in Def.~\ref{def:polynomial:y:alpha:beta} that
    defines the polynomial $y^{(p)}_{\alpha, \beta}$; then
    $y^{(p)}_{\alpha, \beta}[\theta_0]$ is the same product, where
    $\theta_0$ is applied to each matrix, i.e. $\mb z_i[\theta_0]$.
    $X$ is a variable that occurs in either one, or at most two
    consecutive matrices (when it migrates).  Assuming it occurs in
    $\mb z_i, \mb z_{i+1}$, and we split the expression in
    Def.~\ref{def:polynomial:y:alpha:beta} into three parts
    \begin{align*}
      y^{(p)}_{\alpha, \beta}[X:=0,\theta_0]= & \mb a \cdot \diag(1-s_{i-1},s_{i-1})  \cdot \mb w \cdot  \diag(1-s_{i+1},s_{i+1})\cdot \mb b
    \end{align*}
Where:
    \begin{align*}
      \mb a \defeq & \mb u \cdot \prod_{\ell=1,i-1}\diag(1-s_{\ell-1}, s_{\ell-1})\cdot \mb  z_\ell[\theta_0]
      & \mb w \defeq & \mb z_i[X:=0,\theta_0] \cdot \diag(1-s_i)\cdot \mb z_{i+1}[X:=0,\theta_0]
      & \mb b \defeq & \prod_{\ell=i+2,p} \mb z_\ell[\theta_0] \cdot  \diag(1-s_\ell,s_\ell) \cdot \mb v
    \end{align*}
    We prove that, if $y^{(p)}_{\alpha, \beta}[X:=0,\theta_0]$
    factorizes, then $\det(\mb w) \equiv 0$.  This implies that either
    $\det(\mb z_i[X:=0,\theta_0]) \equiv 0$ or
    $\det(\mb z_{i+1}[X:=0,\theta_0]) \equiv 0$.  We will denote
    $s \defeq s_{i-1}$ and $s' \defeq s_{i+2}$ to reduce clutter,
    thus:
    \begin{align*}
      y^{(p)}_{\alpha, \beta}[X:=0,\theta_0]=  & a_0 b_0 w_{00} (1-s)(1-s') + a_0 b_1 w_{01} (1-s)s' + a_1 b_0 w_{10} s (1-s') + a_1 b_1 w_{01} ss' 
    \end{align*}
    Since $y^{(p)}_{\alpha, \beta}[X:=0,\theta_0]$ is reducible, let
    $f$ be an irreducible factor $f$ that contains the variable
    $U(r_0,t_0)$, where $U$ is a left ubiquitous symbol.  That is, $f$
    contains a ``far left'' variable, which only occurs in $a_0$ and
    $a_1$.  In particular, it does not share any variables with
    $b_0, b_1$, because $X$ separates them in the Boolean formula
    $Y^{(p)}_{\alpha, \beta}[\theta_0]$.  We have:
    \begin{align*}
      f \big| & a_0 b_0 w_{00} (1-s)(1-s') + a_0 b_1 w_{01} (1-s)s' + a_1 b_0 w_{10} s (1-s') + a_1 b_1 w_{01} s' 
    \end{align*}
    Consider now the variables $s, s'$: $f$ may contain neither, or
    just $s$, or both $s, s'$ (since $U(r_0,t_0)$ is closer to
    $S_{i-1}$ than to $S_{i+2}$) in the latter case we switch the
    roles of $\mb a, \mb b$, i.e. start with some irreducible factor
    that contains some ``far right'' variable $V(r_p,t_p)$. Hence we
    will assume w.l.o.g. that $f$ does not contain $s'$.  Then, can
    set separately $s'=0$ and $s'=1$ and obtain:
    \begin{align}
      f \big| & (a_0  w_{00} (1-s) + a_1 w_{10} s)b_0 \label{eq:out:of:names}\\
      f \big| & (a_0  w_{01} (1-s) + a_1 w_{11} s)b_1 \nonumber
    \end{align}
    Consider first the case when $f$ does not contain $s$ either.  In
    that case we repeat the argument, and obtain:
    \begin{align*}
       f \big|& a_0b_0 w_{00} & f \big|& a_0b_1 w_{01} & f \big|& a_1b_0 w_{10} &  f \big|& a_1b_1 w_{00}
    \end{align*}
    It follows that $f | a_0$ and $f | a_1$, which implies
    $f | y^{(p)}_{\alpha, \beta}[\theta_0]$ (without setting $X:=0$),
    which contradicts our assumption.  Consider now the case when $f$
    contains $s$.  Notice that we can remove $b_0, b_1$
    from~\eqref{eq:out:of:names}, since $f$ shares no variables with
    them.  We eliminate $a_1$ by multiply the first line by $w_{11}$
    and the second by $-w_{10}$, then eliminate $a_0$ similarly, and
    obtain:
    \begin{align*}
      f \big| & a_0 (1-s) (w_{00}w_{11} - w_{01}w_{10}) \\
      f \big| & a_1 s (w_{00}w_{11} - w_{01}w_{10})
    \end{align*}
    If $(w_{00}w_{11} - w_{01}w_{10})\neq 0$, then $f$ divides both
    $a_0(1-s)$ and $a_1s$, again implying that
    $f | y^{(p)}_{\alpha, \beta}[\theta_0]$, which is a contradiction.
  \end{proof}
\end{proof}

In summary, we have

\begin{align}
\forall i: \ \ \  \mb z_i[\theta] \defeq \mb z =
&
\left[
 \begin{array}{cc}
  z_{00} &   z_{01}\\
  z_{10} &   z_{11}
 \end{array}
\right] \label{eq:the:z}
\end{align}

\begin{lemma}  \label{lemma:zij:not:zero}
  $z_{00}, z_{01}, z_{10}, z_{11} > 0$.
\end{lemma}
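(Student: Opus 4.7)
The plan is to exploit that each $Z_i^{(ab)}$ is a monotone Boolean formula obtained from the positive UCQ dual $Q$. For a monotone formula, $z_{ab} = \Pr(Z_i^{(ab)}[\theta]) > 0$ iff $Z_i^{(ab)}[\theta]$ is satisfiable, which in turn is equivalent to saying that extending $\theta_0$ and the articulation substitution $S(r_i,t_{i-1}){:=}a$, $S(r_i,t_i){:=}b$ by setting every remaining (i.e. $\theta$-assigned $1/2$) Boolean variable to $1$ does not falsify any ground clause of $Z_i^{(ab)}$. I would then inspect the three kinds of ground clauses and rule out falsification in each case; crucially, $\theta_0$ by construction sets only dead-end tuples, so main-chain variables other than the articulation $S$-atoms are always free.

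First I would handle middle-clause groundings. A grounding $C(r_i,y)$ for $y \in \set{t_{i-1},t_i}$ has all its atoms on the main chain, so the only atom that can be forced to $0$ is the articulation atom $S(r_i,y)$ itself. The articulation symbol $S$ was chosen in $\symb(C_j)\cap\symb(C_{j+1})$ for a minimal left-right path of length $\geq 5$, so every middle clause on this path is non-redundant and contains at least one symbol $S' \neq S$ (otherwise some middle clause would be subsumed by the trivial clause $\forall x\forall y\,S(x,y)$, forcing $Q[S{:=}0]\equiv\texttt{false}$ and contradicting finality). Hence every middle-clause grounding has a free non-articulation literal, which is set to $1$ and satisfies the clause.

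Next I would handle left-clause groundings at $x=r_i$ (right-clause groundings are symmetric). Such a grounding distributes its subclauses over $y$-positions in $\set{t_{i-1},t_i,e_i^{(1)},\ldots,e_i^{(m-2)}}$. By Lemma~\ref{lemma:forbidden:properties} (3), every subclause $S_{J_k}$ shares a symbol with $C_1$; since $S \in \symb(C_j)\cap\symb(C_{j+1})$ with $j \geq 3$, minimality of the path gives $S \notin \symb(C_1)$, so each $S_{J_k}$ contains some symbol $S' \neq S$. If any subclause is placed at $y_k \in \set{t_{i-1},t_i}$, the atom $S'(r_i,y_k)$ is a free main-chain literal set to $1$ and the grounded clause is satisfied. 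If instead every subclause is placed at a dead end, then falsification of the grounded clause would require $\theta_0$ to have set to $0$ every atom $S'(r_i,e_i^{(j)})$ with $S'$ in the corresponding subclause; but then that very grounded clause is identically false already in $Y^{(p)}_{\alpha\beta}[\theta_0]$, so $Y^{(p)}_{\alpha\beta}[\theta_0]\equiv\texttt{false}$, contradicting the property of $\theta_0$ (see Lemma~\ref{lemma:type2:connected}) that $Y^{(p)}_{\alpha\beta}[\theta_0]$ is connected and depends on every variable of the block, in particular on $U(r_0,t_0)$ and $V(r_p,t_p)$.

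The main obstacle is precisely this final dead-end case; the cleanest route is to turn it into the contradiction with the finality of $\theta_0$ sketched above, using that an emptied ground clause would force the whole conjunction $Y^{(p)}_{\alpha\beta}[\theta_0]$ to be identically false, which is inconsistent with its connectedness and with the occurrence of the Boolean variables $U(r_0,t_0),V(r_p,t_p)$. Once this case is disposed of, the conclusion $z_{ab}>0$ follows uniformly for all $(a,b)\in\set{0,1}^2$, since neither the middle-clause nor the left/right-clause argument uses any specific property of the chosen values of $a$ and $b$.
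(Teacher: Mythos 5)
Your overall strategy is the paper's: reduce $z_{ab}>0$ to satisfiability of the monotone residual formula (all free variables carry probability $1/2$, so positive probability is equivalent to satisfiability, witnessed by setting every free variable to $1$). The paper, however, gets there without any clause-type case analysis: $Y^{(p)}_{\alpha\beta}[\theta_0]$ is connected by the choice of $\theta_0$, the two articulation variables are at distance $2k$ so no clause contains both and neither is a unit clause, hence substituting $0$ for both leaves every clause with a free literal; satisfiability of the conjunction $A_0\wedge Z_i^{(00)}[\theta_0]\wedge B_0$ then gives satisfiability of the factor $Z_i^{(00)}[\theta_0]$. Your case analysis reproves this by hand; it works, but most of the machinery you invoke (Lemma~\ref{lemma:forbidden:properties}~(3), the placement of subclauses over dead ends, etc.) becomes unnecessary once you observe that the only variables set to $0$ outside $\Dom(\theta_0)$ are the two articulation atoms and that they never co-occur in a ground clause.

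Two concrete slips should be fixed. First, you identify the articulation substitution for $Z_i^{(ab)}$ as $S(r_i,t_{i-1}):=a$, $S(r_i,t_i):=b$. The articulation variables are the \emph{odd} equivalence class $S_\ell=S(r_\ell,t_\ell)$, and $Z_i^{(ab)}$ is cut out by $S_{i-1}=S(r_{i-1},t_{i-1}):=a$ and $S_i=S(r_i,t_i):=b$ (Eq.~\eqref{eq:articulate:big:y}); the atom $S(r_i,t_{i-1})$ is never substituted. Your middle-clause case at $(r_i,t_{i-1})$ therefore analyzes the wrong atom, and you omit the groundings around $r_{i-1}$, $t_{i-1}$ that actually belong to $Z_i^{(ab)}$; the argument transfers after reindexing, but as written it establishes satisfiability of a different formula. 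Second, the justification that every middle clause contains a symbol other than $S$ --- ``otherwise $Q[S:=0]\equiv\texttt{false}$, contradicting finality'' --- does not work: an identically false query is trivially safe, so finality is not violated. The correct reason is non-redundancy: a clause $\forall x\forall y\,S(x,y)$ in $Q$ would admit a homomorphism into $C_j$, making $C_j$ redundant and contradicting the standing assumption that $Q$ is minimized.
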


\begin{proof} Consider $z_{00}$: this is the probability of the
  Boolean expression $Z_i^{(00)}[\theta_0]$
  in~\eqref{eq:articulate:big:y}, for any choice of $i$.  We claim
  that $Z_i^{(00)}[\theta_0]\not\equiv \texttt{false}$.  By
  definition, $Z_i^{(00)}[\theta_0]$ is the formula obtained by
  factorizing:
  \begin{align*}
    Y_{\alpha \beta}^{(p)}[S_{i-1}:=0, S_i:=0, \theta_0] = & A_0 \wedge Z_i^{(00)}[\theta_0] \wedge B_0
  \end{align*}
  We claim that this expression is not $\equiv \texttt{false}$.
  Indeed, by definition of $\theta_0$,  $Y_{\alpha
    \beta}^{(p)}[\theta_0]$ is a connected monotone Boolean function,
  and by our choice of the variables $S_i$, their distance is
  $d(S_{i-1}, S_i)= 2k$, hence they are neither prime implicants, nor
  do they occur together in a clause.  It follows that by setting both
  to $0$ we not make $Y_{\alpha \beta}^{(p)}[S_{i-1}:=0, S_i:=0,
  \theta_0] \equiv \texttt{false}$. Since all Boolean variables in
  $Z_i^{(00)}[\theta_0]$ have been assigned probability $1/2$, it
  follows that its probability is $> 0$, proving $z_{00} > 0$.  The
  other cases are similar and omitted.
\end{proof}

Denote by $\lambda_1, \lambda_2$ the eigenvalues of the matrix
$\mb z$.    We also assign probabilities $1/2$ to all the articulation points:
$s_1 = s_2 = \cdots = s_p = 1/2$, and obtain:

\begin{align}
  y^{(p)}_{\alpha \beta}[\theta] = &
  \frac{1}{2^{p+1}} 
\left[
  \begin{array}{cc}
    u_\alpha^{(0)}  & u_\alpha^{(1)}
  \end{array}
\right]
\cdot
\left[
 \begin{array}{cc}
  z_{00} &   z_{01}\\
  z_{10} &   z_{11}
 \end{array}
\right]^p
\cdot
\left[
 \begin{array}{c}
  v_\beta^{(0)}\\
  v_\beta^{(1)}
 \end{array}
\right] =
\frac{1}{2}\left(a_{\alpha \beta}(\lambda_1/2)^p + b_{\alpha \beta} (\lambda_2/2)^p\right)
\label{eq:y:alpha:beta:lambda:p}
\end{align}
where $a_{\alpha \beta}$, $b_{\alpha \beta}$ are coefficients that are
independent of $p$.  We prove now
Condition~\eqref{eq:conditionLambda2}:

\begin{theorem} \label{th:lambda:type2} The following hold:
  $0 < |\lambda_1| < \lambda_2$.
\end{theorem}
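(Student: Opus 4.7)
}
The plan is to read off both inequalities from the $2\times 2$ characteristic polynomial of $\mb z$, using only (a) positivity of every entry, $z_{ij}>0$, which is Lemma~\ref{lemma:zij:not:zero}, and (b) non-singularity, $\det(\mb z)\neq 0$, which follows from Theorem~\ref{th:assignment:for:lambdas} (take any matrix $\mb z_i[\theta]$; under our consistent assignment all $\mb z_i[\theta]$ coincide with $\mb z$, see Eq.~\eqref{eq:the:z}). The characteristic polynomial is
\begin{align*}
  \det(\lambda I - \mb z) \;=\; \lambda^2 - (z_{00}+z_{11})\,\lambda + (z_{00}z_{11} - z_{01}z_{10}),
\end{align*}
so $\lambda_{1,2} = \tfrac{1}{2}\bigl((z_{00}+z_{11}) \pm \sqrt{D}\bigr)$ with discriminant $D=(z_{00}-z_{11})^2 + 4 z_{01}z_{10}$.

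First I would handle realness and non-vanishing. Since $z_{01},z_{10}>0$ we have $D>0$, so both eigenvalues are real and distinct. Since $\lambda_1\lambda_2 = \det(\mb z)\neq 0$ (by Theorem~\ref{th:assignment:for:lambdas} together with Lemma~\ref{lem:MatrixPower}-style reasoning that $\mb z$ inherits non-singularity), neither eigenvalue is $0$; in particular $|\lambda_1|>0$.

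Next I would split on the sign of $\det(\mb z)$ and order the eigenvalues so that $|\lambda_1|\le |\lambda_2|$. In the case $\det(\mb z)>0$, both roots have the same sign as the trace $z_{00}+z_{11}>0$ (again by Lemma~\ref{lemma:zij:not:zero}), so $0<\lambda_1<\lambda_2$, and the strict inequality $\lambda_1\ne \lambda_2$ follows from $D>0$. In the case $\det(\mb z)<0$, the roots have opposite signs, say $\lambda_1<0<\lambda_2$; then $|\lambda_1|<\lambda_2$ is equivalent to $\lambda_1+\lambda_2>0$, which is exactly the positivity of the trace $z_{00}+z_{11}$.

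Combining the two cases gives $0<|\lambda_1|<\lambda_2$, completing the argument. I do not expect any real obstacle here: the statement is essentially an exercise in $2\times 2$ Perron-style reasoning, and the only slightly delicate point is to be sure we may invoke both $z_{ij}>0$ and $\det(\mb z)\neq 0$ simultaneously at the assignment $\theta$ constructed in Section~\ref{subsec:consistent:assignment}, which is precisely the content of Lemma~\ref{lemma:zij:not:zero} and Theorem~\ref{th:assignment:for:lambdas}.
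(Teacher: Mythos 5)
Your proposal is correct and follows essentially the same route as the paper: the paper likewise deduces $\lambda_1,\lambda_2\neq 0$ from $\det(\mb z)\neq 0$, distinctness of the eigenvalues from $z_{01},z_{10}>0$ (positive discriminant), and $|\lambda_1|<\lambda_2$ from the positive trace $z_{00}+z_{11}>0$. Your explicit case split on the sign of $\det(\mb z)$ just spells out the last step that the paper leaves implicit.
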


\begin{proof}
  $\det(\mb z) \neq 0$ implies $\lambda_1, \lambda_2 \neq 0$.
  $\lambda_1 \neq \lambda_2$, because both $z_{01}$ and $z_{10}$ are
  $\neq 0$, since they represent probabilities.
  $\lambda_1 + \lambda_2 > 0$, because the trace of the matrix is
  $z_{00} + z_{11} > 0$ as a sum of two probabilities.
\end{proof}

Next, prove condition~\eqref{eq:conditiononZero2}.

\begin{lemma}
  Assume $\lambda_1 < \lambda_2$, then, for all $\alpha, \beta$,
  $b_{\alpha \beta} > 0$.
\end{lemma}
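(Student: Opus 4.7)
The plan is to invoke the Perron--Frobenius theorem on the $2\times 2$ matrix $\mb z$ of~\eqref{eq:the:z}, whose four entries are strictly positive by Lemma~\ref{lemma:zij:not:zero}. Since $|\lambda_1|<\lambda_2$ by Theorem~\ref{th:lambda:type2}, $\lambda_2$ is the Perron eigenvalue: it is simple, strictly positive, and admits a right eigenvector $(w_0,w_1)^{T}$ and a left eigenvector $(\eta_0,\eta_1)$ with strictly positive coordinates, with $\eta_0 w_0+\eta_1 w_1>0$.

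Next I would write the spectral decomposition of $\mb z^{p}$ as the sum of two rank-one projectors weighted by $\lambda_1^{p}$ and $\lambda_2^{p}$, plug it into the matrix product appearing in~\eqref{eq:y:alpha:beta:lambda:p}, and match it against $y^{(p)}_{\alpha\beta}[\theta]=\frac{1}{2}(a_{\alpha\beta}(\lambda_1/2)^{p}+b_{\alpha\beta}(\lambda_2/2)^{p})$. This yields
\[
b_{\alpha\beta}\;=\;\frac{\bigl(u^{(0)}_\alpha w_0+u^{(1)}_\alpha w_1\bigr)\bigl(\eta_0 v^{(0)}_\beta+\eta_1 v^{(1)}_\beta\bigr)}{\eta_0 w_0+\eta_1 w_1}.
\]
The denominator is strictly positive. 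Each $u^{(v)}_\alpha$ and $v^{(v)}_\beta$ is the probability of a boundary Boolean formula appearing in~\eqref{eq:articulate:big:y} and is therefore non-negative, while each $w_i,\eta_i>0$. Hence each of the two bracketed factors in the numerator is $\geq 0$, and it suffices to rule out the degenerate case where either one vanishes.

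Suppose, for contradiction, that $u^{(0)}_\alpha=u^{(1)}_\alpha=0$. Then both $U^{(0)}_\alpha[\theta]$ and $U^{(1)}_\alpha[\theta]$ are unsatisfiable, so expanding~\eqref{eq:articulate:big:y} over every assignment of the articulation variables $S_0,\ldots,S_p$ forces $y^{(p)}_{\alpha\beta}=0$. On the other hand, by Lemma~\ref{lemma:type2:connected} and the ``final'' property of $\theta_0$, the Boolean function $Y^{(p)}_{\alpha\beta}[\theta_0]$ is connected and in particular not identically false; being monotone it is therefore satisfiable, and since $\theta$ extends $\theta_0$ by assigning probability $1/2$ to all remaining Boolean variables, we obtain $y^{(p)}_{\alpha\beta}>0$, a contradiction. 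The symmetric argument forces at least one of $v^{(0)}_\beta,v^{(1)}_\beta$ to be positive, giving $b_{\alpha\beta}>0$.

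The main obstacle is the somewhat pedantic verification that both boundary vectors are non-zero for every $(\alpha,\beta)\in L(\mb G)\times L(\mb H)$; once that is discharged using Lemma~\ref{lemma:type2:connected} and the construction of the final assignment $\theta_0$, the positivity of $b_{\alpha\beta}$ is a clean consequence of Perron--Frobenius applied to the positive matrix $\mb z$.
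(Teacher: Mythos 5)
Your proof is correct, but it reaches the conclusion by a genuinely different route from the paper. The paper never invokes Perron--Frobenius: it writes the entries of $\mb z^p$ as $a_1\lambda_1^p+a_2\lambda_2^p$, etc., pins down the coefficients by evaluating at $p=0$ (where $\mb z^0=\mathbb{I}$ forces $a_1+a_2=1$, $b_1+b_2=0$, \ldots) and at $p=1$ (to rule out $\lambda_2\in\set{z_{00},z_{11}}$ and hence $a_1=0$ or $d_1=0$), and then lets $p\rightarrow\infty$: since every entry of $\mb z^p$ is a probability and hence positive, and is asymptotically dominated by $\lambda_2^p$, the four $\lambda_2$-coefficients must be positive. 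This is, in effect, a hands-on, entry-by-entry proof that the spectral projector onto the $\lambda_2$-eigenspace is a positive matrix -- exactly what you obtain in one stroke from Perron--Frobenius applied to the strictly positive matrix $\mb z$ of Lemma~\ref{lemma:zij:not:zero}. Your rank-one formula $b_{\alpha\beta}=(\mb u_\alpha^T w)(\eta^T\mb v_\beta)/(\eta^T w)$ is cleaner and makes the sign structure transparent. You also make explicit a step the paper leaves implicit: the coefficient of $\lambda_2^p$ is only \emph{nonnegative} a priori, and one must rule out the degenerate case $u_\alpha^{(0)}=u_\alpha^{(1)}=0$ (resp.\ for $\mb v_\beta$); your argument that this would force $y^{(p)}_{\alpha\beta}=0$, contradicting the connectedness (hence satisfiability) of the monotone formula $Y^{(p)}_{\alpha\beta}[\theta_0]$ together with the fact that $\theta$ assigns $1/2$ to all remaining variables, is the same reasoning the paper uses in Lemma~\ref{lemma:zij:not:zero} and is valid here. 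One cosmetic remark: $u_\alpha^{(i)}=0$ does not literally mean the formula $U_\alpha^{(i)}$ is unsatisfiable (a satisfiable formula can have probability $0$ if a needed variable is assigned $0$), but your contradiction only uses the implication $u_\alpha^{(0)}=u_\alpha^{(1)}=0\Rightarrow y^{(p)}_{\alpha\beta}=0$, which follows directly from the matrix-product expression in Definition~\ref{def:polynomial:y:alpha:beta}, so nothing is lost.
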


\begin{proof}
  We first expand the power $p$ of the matrix $\mb z$:
  \begin{align*}
    \mb z^p = &
\left[
 \begin{array}{cc}
  a_1\lambda_1^p + a_2\lambda_2^p &   b_1\lambda_1^p + b_2\lambda_2^p\\
  c_1\lambda_1^p + c_2\lambda_2^p &   d_1\lambda_1^p + d_2\lambda_2^p
 \end{array}
\right]
  \end{align*}
We prove that $a_2, b_2, c_2, d_2 > 0$.  We start by observing that
$\lambda_1, \lambda_2$ are the solutions to:
\begin{align*}
  \lambda^2 - (z_{00}+z_{11})\lambda + (z_{00}z_{11} - z_{01}z_{10}) =  & 0
\end{align*}
We use the fact that the quantities $z_{00}, z_{01}, z_{10}, z_{11}$
represent proabilities, hence they are $> 0$.  It follows that
$\lambda_1 + \lambda_2 = \texttt{tr}(\mb z) = z_{00}+z_{11} > 0$.
Since
$\lambda_1\lambda_2 = \det(\mb z) = (z_{00}z_{11} - z_{01}z_{10})$, we
cannot have $\lambda_1 = z_{00}$, because then
$\lambda_2=\texttt{tr}(\mb z) - \lambda_1 = z_{11}$ and
$z_{00}z_{11} = \lambda_1\lambda_2 = \det(\mb z) =
z_{00}z_{11}-z_{10}z_{01}$ implying $z_{01}=0$ or $z_{10}=0$, which is
impossible by Lemma~\ref{lemma:zij:not:zero}.  Thus,
$\lambda_{1,2} \neq z_{00}, z_{11}$.

Next, since $\mb z^0$ is the identity matrix, we obtain:
\begin{align*}
  a_1+a_2 = & 1 & b_1 + b_2 = & 0 \\
  c_1+c_2 = & 0 & d_1 + d_2 = & 1
\end{align*}
It follows that $b_1 = -b_2$ and $c_1 = -c_2$.  Similarly, we examine
$\mb z^1 = z$, and deduce $a_1 \lambda_1 + a_2 \lambda_2 =z_{00}$,
$d_1 \lambda_1 + d_2 \lambda_2 =z_{11}$.  If $a_1 = 0$ then
$a_2 = 1- a_1 = 1$ which implies $\lambda_2=z_{00}$, contradiction.
Thus, $a_1, a_2, d_1, d_2 \neq 0$.  Finally, we consider the
expression for $\mb z^p$:
  \begin{align*}
    \mb z^p = &
\left[
 \begin{array}{cc}
  a_1\lambda_1^p + a_2\lambda_2^p &   b_2(\lambda_2^p - \lambda_1^p)\\
  c_2(\lambda_2^p - \lambda_2^p) &   d_1\lambda_1^p + d_2\lambda_2^p
 \end{array}
\right]
  \end{align*}
We notice that all entries in $\mb z^p$ are $>0$, because they are
probabilities.  When $p \rightarrow \infty$ then all terms above are
dominated by $\lambda_2^p$, which implies $a_2, b_2, c_2, d_2 > 0$.
Finally, the lemma follows from computing $y^{(p)}_{\alpha \beta}$
using~\eqref{eq:y:alpha:beta:lambda:p}, and obtain:
\begin{align*}
  y^{(p)}_{\alpha \beta} = & \frac{1}{2}
\left(
(
u_\alpha^{(0)}a_1v_\beta^{(0)} +
u_\alpha^{(0)}b_1v_\beta^{(1)} +
u_\alpha^{(1)}c_1v_\beta^{(0)} +
u_\alpha^{(1)}d_1v_\beta^{(1)})\lambda_1^p
+
(
u_\alpha^{(0)}a_2v_\beta^{(0)} +
u_\alpha^{(0)}b_2v_\beta^{(1)} +
u_\alpha^{(1)}c_2v_\beta^{(0)} +
u_\alpha^{(1)}d_2v_\beta^{(1)})\lambda_2^p
\right)
\end{align*}
and the lemma follows by observing that the factor of $\lambda_2^p$ is $>0$.
\end{proof}

Finally, it remains to prove
condition-\eqref{eq:conditionCoefficients2}:
$a_{\alpha_1\beta_1}b_{\alpha_2\beta_2}\neq
a_{\alpha_2\beta_2}b_{\alpha_1\beta_1}$.  To enforce this condition we
need to design carefully the prefix and suffix blocks.  We do this in
the next sections.


\subsection{One Condition  $a_{\alpha_1\beta_1}b_{\alpha_2\beta_2}\neq  a_{\alpha_2\beta_2}b_{\alpha_1\beta_1}$}

\label{sec:not:proportional}

Our end goal is to define the block $B^{(p)}(u,v)$, and its
probabilities, such as to satisfy all three
conditions~\eqref{eq:conditionLambda2}, \eqref{eq:conditiononZero2},
and \eqref{eq:conditionCoefficients2}.  In the previous sections we
have denoted by $Y^{(p)}_{\alpha \beta}$ and $y^{(p)}_{\alpha \beta}$
the lineage on the zig-zag block only~\eqref{eq:yab:on:zigzag}, and
satisfied the first two conditions; in this and the next section we
extend it with the prefix/suffix, and satisfy the third condition,
\eqref{eq:conditionCoefficients2}.  Recall that the complete block
$B^{(p)}(u,v)$ is:
\begin{align}
  B^{(p)}(u,v) = B(u,r_0) \cup \left(\bigcup_j B(r_0, e^{(j)}_0)\right) \cup B(r_0,t_p) \cup\left(\bigcup_j B(f_p^{(j)},t_p) \right)\cup B(t_p,v)
\label{eq:all:blocks}
\end{align}
We will denote by:
\begin{align*}
  Y^{(p)}_{\alpha \beta}(r_0,t_p) \defeq & \Phi_{B(r_0,t_p)}(G_\alpha(r_0) \wedge Q \wedge H_\beta(t_p))\\
  Y^{(p)}_{\alpha \beta}(u,v) \defeq & \Phi_{B^{(p)}(u,v)}(G_\alpha(u) \wedge Q \wedge H_\beta(v))
\end{align*}
and similarly for their probabilities,
$y^{(p)}_{\alpha \beta}(r_0,t_p), y^{(p)}_{\alpha \beta}(u,v)$.  We
have already defined a partial assignment $\theta$ of Boolean
variables in $B^{(p)}(r_0,t_p)$ to probabilities in $\set{0,1/2, 1}$
to satisfy Condition~\eqref{eq:conditionLambda2}
(Condition~\eqref{eq:conditiononZero2} came almost for free).  Now we
will extend $\theta$ to a total assignment, i.e. to all variables in
the block $B^{(p)}(u,v)$, to also satisfy
condition~\eqref{eq:conditionCoefficients2}.

More precisely, let $\mb V \subset B(r_0,t_p)$ be the set of all
Boolean variables that occur in the matrices
$\mb z_1, \ldots, \mb z_p$.  Add to it the articulation variables,
$\mb V' \defeq \mb V \cup \set{s_0, s_1, \ldots, s_p}$.  Then $\theta$
is the assignment of the variables $\mb V'$ given by
Theorem~\ref{th:assignment:for:lambdas}.  We have:

\begin{align}
  y^{(p)}_{\alpha \beta}(r_0,t_p)[\theta] = &
  \frac{1}{2^p} 
\left[
  \begin{array}{cc}
    u_\alpha^{(0)}(r_0,t_0)[\theta]  & u_\alpha^{(1)}(r_0,t_0)[\theta]
  \end{array}
\right]
\cdot
\left[
 \begin{array}{cc}
  z_{00} &   z_{01}\\
  z_{10} &   z_{11}
 \end{array}
\right]^p
\cdot
\left[
 \begin{array}{c}
  v_\beta^{(0)}(r_p,t_p)[\theta]\\
  v_\beta^{(1)}(r_p,t_p)[\theta]
 \end{array}
\right] =
a_{\alpha \beta}(r_0,t_p)(\lambda_1/2)^p + b_{\alpha \beta}(r_0,t_p)(\lambda_2/2)^p\label{eq:y:again:r0:t0}\\
  y^{(p)}_{\alpha \beta}(u,v)[\theta] = &
  \frac{1}{2^p} 
\left[
  \begin{array}{cc}
    u_\alpha^{(0)}(u,t_0)[\theta]  & u_\alpha^{(1)}(u,t_0)[\theta]
  \end{array}
\right]
\cdot
\left[
 \begin{array}{cc}
  z_{00} &   z_{01}\\
  z_{10} &   z_{11}
 \end{array}
\right]^p
\cdot
\left[
 \begin{array}{c}
  v_\beta^{(0)}(r_p,v)[\theta]\\
  v_\beta^{(1)}(r_p,v)[\theta]
 \end{array}
\right] =
a_{\alpha \beta}(t_p,v)(\lambda_1/2)^p + b_{\alpha \beta}(t_p,v)(\lambda_2/2)^p\label{eq:y:again:u:v}
\end{align}

We are interested in the latter expression, where we haven't yet
defined the structure and/or probabilities of the two polynomial
matrices $\mb u_\alpha, \mb v_\beta$.  Notice that, whatever our
choice, conditions~\eqref{eq:conditionLambda2},
\eqref{eq:conditiononZero2} continue to apply, since we proved them
for any polynomials $\mb u_\alpha, \mb v_\beta$.  Now, we will design
the prefix/suffix blocks and assign probabilities to the variables in
$\mb u_\alpha, \mb v_\beta$ to also satisfy
condition~\eqref{eq:conditionCoefficients2}.  We will do this in two
steps.  The first step, described in this section, will satisfy the
condition for {\em one} pair
$(\alpha_1, \beta_1) \neq (\alpha_2, \beta_2)$.  For that we need a
single branch in the prefix and a single branch in the suffix, thus
$B^{(p)}(u,v)$ will be like $B^{(p+2)}(r_0,t_{p+2})$, as illustrated
in Fig.~\ref{fig:block:type:2} (b).  The second step, described in the
next sections, uses multiple parallel branches to satisfy the
condition for {\em all} pairs, illustrated in
Fig.~\ref{fig:block:type:2} (c).

We will start by satisfying a single condition: for a fixed pair
$(\alpha_1, \beta_1) \neq (\alpha_2, \beta_2)$,our goal is to satisfy.
$a_{\alpha_1\beta_1}b_{\alpha_2\beta_2}\neq
a_{\alpha_2\beta_2}b_{\alpha_1\beta_1}$.  We will use a single branch
for the prefix/suffix blocks, hence $B^{(p)}(u,v)$ is isomorphic to
$B^{(p+2)}(r_0,t_{p+2})$, and, $y^{(p)}_{\alpha \beta}$ is given by
\eqref{eq:y:again:r0:t0}.  It suffices to prove how complete the
assignment $\theta$ to all variables in $B^{(p)}(r_0,t_p)$ in order to
satisfy
$a_{\alpha_1\beta_1}b_{\alpha_2\beta_2}\neq
a_{\alpha_2\beta_2}b_{\alpha_1\beta_1}$.  Our construction is
independent of $p$ because, on one hand, the condition that we need to
satisfy,
$a_{\alpha_1\beta_1}b_{\alpha_2\beta_2}\neq
a_{\alpha_2\beta_2}b_{\alpha_1\beta_1}$, is independent of $p$, on the
other hand we can assume w.l.o.g. that the polynomials
$\mb v_\beta(r_p,t_p)[\theta]$ and
$\mb v_\beta(r_{p+1},t_{p+1})[\theta]$ have the same variables, since
the blocks $B(r_p,t_p)$ and $B(r_{p+1},t_{p+1})$ are isomorphic.  In
other words, we assume that the variables of the polynomials
$y_{\alpha \beta}^{(p)}(r_0,t_p)[\theta]$ are the same for all choices
of $p$ (since $\theta$ already assigns values to all variables in the
zig-zag portion of $B^{(p)}(r_0,t_p)$, which depends on $p$).

When there are no migrating variables, then the polynomials
$u_\alpha^{(0)}, u_\alpha^{(1)}, v_\beta^{(0)}, v_\beta^{(1)}$
in~\eqref{eq:y:again:r0:t0} do not contain any variables with
$\mb V'$, thus we can drop the argument $\theta$; then we are free to
assign their probabilities as we need.  However, if a symbol $S'$
migrates from right to left, then variables from its equivalence class
occur in both $\mb u_\alpha$ and $\mb v_\beta$.  The reason is that
$\theta$ assigns the same value to all variables in an equivalence
class, hence all the tuples
$S'(r_0,t_0), S'(r_1,t_1), \ldots, S'(r_p,t_p)$ are associated with
the same real variable $s'$.  Assuming $S'$ migrates from right to
left, then $S'(r_p,t_p)$ appears in $v_\beta^{(0)}$ (on the ``right'')
and $S'(r_0,t_0)$ appears in $u_\alpha^{(1)}$ (on the ``left'').
Similarly, $u_\alpha^{(0)}$ and $v_\beta^{(1)}$ may share common
variables.  The notation
$u_\alpha^{(0)}[\theta], \ldots, v_\beta^{(1)}[\theta]$ indicates that
we apply $\theta$ to all migrating variables $s'$, and recall that
$\theta(s')=1/2$.  Recall that $\theta$ assigns probabilities $1/2$ to
these variables, hence, we extend it to a complete assignment
$\theta'$ we must do it consistently with this assignment.

We start by restating the condition
$a_{\alpha_1\beta_1}b_{\alpha_2\beta_2}\neq
a_{\alpha_2\beta_2}b_{\alpha_1\beta_1}$.

\begin{lemma} \label{lemma:from:matrix:to:chains} Let $\mb z$ be a
  $2 \times 2$ matrix with eigenvalues
  $0 \neq \lambda_1 \neq \lambda_2 \neq 0$, and let
  $\mb u_i, \mb v_i$, $i=1,2$ be four vectors.  Define the following
  two sequences, $y^{(p)}_i$, $p \geq 0$, $i=1,2$:
  \begin{align*}
    y^{(p)}_1 \defeq & \mb u_1 \cdot \mb z^p \cdot \mb v_1 \defeq a_1 (\lambda_1)^p + b_1 (\lambda_2)^p&
    y^{(p)}_2 \defeq & \mb u_2 \cdot \mb z^p \cdot \mb v_2 \defeq a_2 (\lambda_1)^p + b_2 (\lambda_2)^p
  \end{align*}
For any $p \geq 0$, consider the following matrix:
  \begin{align*}
  \mb{D}^{(p)} \defeq &
  \left[
  \begin{array}{cc}
    y_1^{(p)}& y_2^{(p)}\\
    y_1^{(p+1)}& y_2^{(p+1)}
  \end{array}
  \right]
  \end{align*}
Then the following statements are equivalent:
\begin{itemize}
\item $a_1b_2 \neq a_2b_1$,
\item there exists $p\geq 0$ such that  $\det(\mb D^{(p)}) \neq 0$,
\item for all  $p\geq 0$,   $\det(\mb D^{(p)}) \neq 0$.
\end{itemize}
\end{lemma}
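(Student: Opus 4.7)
The plan is to prove the lemma by a direct, explicit computation of $\det(\mb D^{(p)})$, which will reveal a clean factorization revealing all three equivalences simultaneously.

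First, I would expand $y_i^{(p)} = a_i \lambda_1^p + b_i \lambda_2^p$ (for $i=1,2$) into the determinant formula $\det(\mb D^{(p)}) = y_1^{(p)} y_2^{(p+1)} - y_2^{(p)} y_1^{(p+1)}$. The cross-multiplication yields four terms in each product. The key observation is that the pure terms $a_1 a_2 \lambda_1^{2p+1}$ and $b_1 b_2 \lambda_2^{2p+1}$ appear identically in both $y_1^{(p)} y_2^{(p+1)}$ and $y_2^{(p)} y_1^{(p+1)}$, so they cancel. Only the mixed terms survive after subtraction, and after collecting them one obtains the factored form
\begin{align*}
\det(\mb D^{(p)}) = (a_1 b_2 - a_2 b_1)(\lambda_1 \lambda_2)^p (\lambda_2 - \lambda_1).
\end{align*}

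From this identity all three equivalences follow at once. By the hypotheses $\lambda_1, \lambda_2 \neq 0$ and $\lambda_1 \neq \lambda_2$, the factor $(\lambda_1 \lambda_2)^p (\lambda_2 - \lambda_1)$ is nonzero for every $p \geq 0$. Hence $\det(\mb D^{(p)})$ vanishes for some $p$ iff it vanishes for every $p$ iff $a_1 b_2 = a_2 b_1$. This gives (1) $\Leftrightarrow$ (2) $\Leftrightarrow$ (3).

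There is no real obstacle here: the proof is essentially a one-line computation once the determinant is expanded symbolically, and the factorization makes the role of $\lambda_1 \neq \lambda_2 \neq 0$ transparent. The only care needed is to record that this lemma is the bridge that lets us verify condition~\eqref{eq:conditionCoefficients2} for a single pair $(\alpha_1, \beta_1), (\alpha_2, \beta_2)$ by checking nondegeneracy of a concrete $2 \times 2$ determinant built from values $y^{(p)}_{\alpha_i \beta_i}$, rather than manipulating the hidden coefficients $a_{\alpha \beta}, b_{\alpha \beta}$ directly.
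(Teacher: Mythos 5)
Your computation is correct and matches the paper's own proof exactly: the paper also expands $\det(\mb D^{(p)})$ directly and arrives at the same factorization $\lambda_1^p\lambda_2^p(\lambda_2-\lambda_1)(a_1b_2-a_2b_1)$, from which all three equivalences follow since $\lambda_1,\lambda_2\neq 0$ and $\lambda_1\neq\lambda_2$. No gaps.
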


\begin{proof}
  We compute $\det(\mb D)$ directly:
\begin{align*}
  \det\left[
  \begin{array}{cc}
    a_1\lambda_1^{p} + b_1\lambda_2^{p} &     a_2\lambda_1^{p} + b_2\lambda_2^{p}\\
    a_1\lambda_1^{p+1} + b_1\lambda_2^{p+1} &     a_2\lambda_1^{p+1} + b_2\lambda_2^{p+1}
  \end{array}
\right] = & 
  \lambda_1^p\lambda_2^p(\lambda_2 - \lambda_1) (a_1 b_2 - a_2 b_1)
\end{align*}
and using the fact that $0 \neq \lambda_1 \neq \lambda_2 \neq 0$.
\end{proof}

Therefore, in order to find an assignment $\theta'$ that satisfies
$a_{\alpha_1\beta_1}b_{\alpha_2\beta_2}\neq
a_{\alpha_2\beta_2}b_{\alpha_1\beta_1}$, we need to construct {\em
  two} blocks, $B^{(p)}(r_0,t_p)$ and $B^{(p+1)}(r_0,t_{p+1})$:
$\theta$ already assigns probabilities to the variables in their
zig-zag part, hence we only need to extend it with $\theta'$ that
assigns probabilities to the remaining variables (which we assumed are
the same in $B^{(p)}(r_0,t_p)$ and $B^{(p+1)}(r_0,t_{p+1})$) such that
$\det(\mb{D}^{(p)}_{\alpha_1\beta_1\alpha_2\beta_2})[\theta']
\neq 0$, where:
\begin{align}
  \mb{D}^{(p)}_{\alpha_1\beta_1\alpha_2\beta_2} \defeq & \left[
  \begin{array}{cc}
    y_{\alpha_1\beta_1}^{(p)}[\theta]& y_{\alpha_2,\beta_2}^{(p)}[\theta]\\
    y_{\alpha_1\beta_1}^{(p+1)}[\theta]& y_{\alpha_2,\beta_2}^{(p+1)}[\theta]
  \end{array} \right] \label{eq:2:2:matrix:d}
\end{align}
%
Furthermore, by lemma~\ref{lemma:from:matrix:to:chains} we can check
the condition for any $p$, so we choose conveniently to check it for
$p=0$, in other words use the blocks $B^{(0)}(r_0,t_0)$ and
$B^{(1)}(r_0,t_1)$.  To find $\theta'$ such that
$\det(\mb{D}^{(p)}_{\alpha_1\beta_1\alpha_2\beta_2})[\theta']
\neq 0$, we proceed as usual: we first prove that the polynomial
$\det(\mb{D}_{\alpha_1\beta_1\alpha_2\beta_2}^{(0)})
\not\equiv 0$, then use this to argue that there exists an assignment
$\theta'$ of its variables such that
$\det(\mb{D}_{\alpha_1\beta_1\alpha_2\beta_2}^{(0)}[\theta']) \neq 0$.

\begin{lemma} \label{lemma:alpha0:beta0:alpha:beta} If
  $(\alpha_1,\beta_1) \not\eq (\alpha_2,\beta_2)$, then
  $\det(\mb{D}^{(0)}_{\alpha_1\beta_1\alpha_2\beta_2})\not\equiv 0$.
\end{lemma}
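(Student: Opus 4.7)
The plan is to argue by contradiction: suppose $\det(\mb D^{(0)}_{\alpha_1\beta_1\alpha_2\beta_2}) \equiv 0$ as a polynomial in the remaining free variables (those of the prefix and suffix blocks not fixed by $\theta$). By Definition~\ref{def:polynomial:y:alpha:beta} specialized to $p=0,1$, each $y^{(p)}_{\alpha\beta}[\theta]$ is a bilinear form $c_p \cdot \mb u_\alpha \cdot \mb z^p \cdot \mb v_\beta$ where $\mb u_\alpha = (u_\alpha^{(0)}, u_\alpha^{(1)})$ and $\mb v_\beta = (v_\beta^{(0)}, v_\beta^{(1)})^T$, and $\mb z$ is the matrix from~\eqref{eq:the:z}. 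By Theorem~\ref{th:lambda:type2}, $\mb z$ has distinct non-zero eigenvalues $\lambda_1,\lambda_2$, so I would diagonalize $\mb z = P\,\mathrm{diag}(\lambda_1,\lambda_2)\,P^{-1}$; setting $\mb u'_\alpha = \mb u_\alpha P$ and $\mb v'_\beta = P^{-1}\mb v_\beta$ and computing directly, in the same way as in the proof of Lemma~\ref{lemma:from:matrix:to:chains}, yields
\[
  \det(\mb D^{(0)}_{\alpha_1\beta_1\alpha_2\beta_2}) \;=\; \gamma(\lambda_2-\lambda_1)\Bigl(u'^{(0)}_{\alpha_1}u'^{(1)}_{\alpha_2}\,v'^{(0)}_{\beta_1}v'^{(1)}_{\beta_2} \;-\; u'^{(0)}_{\alpha_2}u'^{(1)}_{\alpha_1}\,v'^{(0)}_{\beta_2}v'^{(1)}_{\beta_1}\Bigr)
\]
for some nonzero constant $\gamma$.

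After $\theta$ sets every migrating variable to $1/2$, the prefix variables appearing in $\mb u'_\alpha$ and the suffix variables appearing in $\mb v'_\beta$ become disjoint, so vanishing of the above polynomial forces, for some scalar $c\ne 0$, the two separate multiplicative identities $u'^{(0)}_{\alpha_1}u'^{(1)}_{\alpha_2} \equiv c\cdot u'^{(0)}_{\alpha_2}u'^{(1)}_{\alpha_1}$ and $v'^{(0)}_{\beta_1}v'^{(1)}_{\beta_2} \equiv c^{-1}\cdot v'^{(0)}_{\beta_2}v'^{(1)}_{\beta_1}$. Assume WLOG $\alpha_1\ne \alpha_2$ and focus on the $u$-identity. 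By Theorem~\ref{th:rank:1:matrix:of:polynomials} the $2\times 2$ matrix with columns $\mb u'_{\alpha_1},\mb u'_{\alpha_2}$ admits a rank-$1$ factorization, and transforming back through $P^{-1}$ gives polynomials $\psi_0,\psi_1$ and scalars $k_1,k_2\ne 0$ with $u^{(i)}_{\alpha_1} = k_1\psi_i$ and $u^{(i)}_{\alpha_2} = k_2\psi_i$ for $i=0,1$. Because each $u^{(i)}_{\alpha_j}$ is the arithmetization of a satisfiable monotone Boolean formula $U^{(i)}_{\alpha_j}$ (guaranteed by Lemma~\ref{lemma:type2:connected}), evaluating at the all-ones assignment gives $u^{(i)}_{\alpha_j}[\mathbf 1]=1$, forcing $k_1 = k_2$; hence $u^{(i)}_{\alpha_1}\equiv u^{(i)}_{\alpha_2}$ as polynomials, and therefore $U^{(i)}_{\alpha_1}\equiv U^{(i)}_{\alpha_2}$ as Boolean formulas, for both $i=0,1$.

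The analogous argument on the suffix (trivial if $\beta_1=\beta_2$) yields $V^{(j)}_{\beta_1}\equiv V^{(j)}_{\beta_2}$ for $j=0,1$. Substituting these equivalences into the articulation decomposition~\eqref{eq:articulate:big:y} gives $Y^{(p)}_{\alpha_1\beta_1}\equiv Y^{(p)}_{\alpha_2\beta_2}$ for every $p$, since the zig-zag middle is unchanged. Applying Lemma~\ref{lemma:type2:invertible} (whose proof extends verbatim to the complete block $B^{(p)}(u,v)$, since its only ingredient is that setting ubiquitous symbols to $1$ erases the left or right part) then forces $\alpha_1=\alpha_2$ and $\beta_1=\beta_2$, contradicting the hypothesis $(\alpha_1,\beta_1)\ne(\alpha_2,\beta_2)$.

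The main obstacle I anticipate is the polynomial-to-Boolean lift in the third paragraph: the rank-$1$ factorization provided by Theorem~\ref{th:rank:1:matrix:of:polynomials} only yields abstract polynomial factors, and I must use both the normalization at the all-ones assignment and the fact that the $u^{(i)}_{\alpha}$ are arithmetizations of satisfiable monotone formulas to promote a polynomial identity into a logical equivalence. A secondary obstacle is verifying that Lemma~\ref{lemma:type2:invertible} genuinely extends from the stripped zig-zag block on which it is stated to the complete block $B^{(p)}(u,v)$ with prefix/suffix used here; this should follow by the same substitution argument, but it must be checked carefully for the type-II forbidden structure.
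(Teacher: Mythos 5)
Your overall route is the paper's: diagonalize $\mb z$, factor $\det(\mb D^{(0)})$ into $(\lambda_2-\lambda_1)$ times a difference of products, split that difference into a prefix identity and a suffix identity using disjointness of variables, and then use a rank-one factorization plus the invertibility of $(\alpha,\beta)\mapsto Q_{\alpha\beta}$ to force $\alpha_1=\alpha_2$ and $\beta_1=\beta_2$. But there is a genuine gap at the hand-off between your second and third paragraphs, and it is exactly where the paper's proof does most of its work. The identities you extract from the vanishing of the determinant are identities of the polynomials \emph{after} $\theta$ has set every migrating variable to $1/2$, i.e.\ of $u^{(i)}_{\alpha_j}[\theta]$, $v^{(j)}_{\beta_i}[\theta]$. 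You then apply Theorem~\ref{th:rank:1:matrix:of:polynomials} and the all-ones normalization as if these were the original arithmetizations of monotone Boolean formulas. They are not: a multilinear polynomial with some variables frozen at $1/2$ is no longer $\set{0,1}$-valued on Boolean inputs, the frozen variables cannot be "evaluated at $\mathbf 1$," and, most importantly, a non-identity of the full polynomials need not survive the specialization (Example~\ref{ex:no:good:varphi} exists in the paper precisely to show that distinct monotone formulas can become multiplicatively indistinguishable after such a specialization). The paper closes this gap by first proving the non-identities \eqref{eq:not:equiv:alpha:beta:here} for the \emph{unspecialized} polynomials, and then spending the second half of the proof showing that a witnessing assignment can be chosen to extend $\theta$: when $\alpha_1\neq\alpha_2$ and $\beta_1\neq\beta_2$ it applies Lemma~\ref{lemma:three:values} to the degree-$2$ product \eqref{eq:f:with:migrating:variables} and then proves (via the identities \eqref{eq:migrating:0}--\eqref{eq:migrating:3}, which rest on finality and the symmetry of migration, Corollary~\ref{cor:migrating:symmetric}) that any assignment making \emph{both} inequalities strict must give every migrating variable the value $1/2$; when $\beta_1=\beta_2$ it needs a separate divisibility argument showing the relevant polynomial factors as a product of terms $s'$ or $1-s'$ over the migrating variables. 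None of this is present or even sketched in your proposal, and your stated plan for the "main obstacle" (all-ones normalization plus satisfiability) does not supply it.

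A secondary, smaller gap: you assert that the rank-one factorization returns \emph{scalars} $k_1,k_2$ times common polynomials $\psi_i$. Theorem~\ref{th:rank:1:matrix:of:polynomials} only gives $u^{(i)}_{\alpha_j}=g_j h_i$ for abstract polynomials $g_j,h_i$; to conclude that the $g_j$ are constants you must observe that a non-constant $g_1$ would divide both $u^{(0)}_{\alpha_1}$ and $u^{(1)}_{\alpha_1}$ and hence divide $y^{(0)}_{\alpha_1\beta_1}=u^{(0)}_{\alpha_1}v^{(0)}_{\beta_1}(1-s_0)+u^{(1)}_{\alpha_1}v^{(1)}_{\beta_1}s_0$, contradicting its irreducibility (Lemma~\ref{lemma:type2:connected}). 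The all-ones evaluation only pins the constant down to $1$ \emph{after} scalarity is established; it cannot establish it.
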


\begin{proof}
  For arbitrary $\alpha, \beta$, denote the following matrices of
  polynomials (appearing in Eq.~\eqref{eq:y:again:r0:t0}):
  \begin{align*}
    \mb u_{\alpha} = &\left[
              \begin{array}{c}
                u^{(0)}_{\alpha} \\
                u^{(1)}_{\alpha}
              \end{array}
              \right] &
    \mb v_{\beta} = &\left[
              \begin{array}{c}
                v^{(0)}_{\beta} \\
                v^{(1)}_{\beta}
              \end{array}
              \right]
  \end{align*}
Notice that these are polynomials that {\em include} the migrating
variables.  By Eq.~\eqref{eq:y:again:r0:t0}, we have:
\begin{align*}
  y^{(0)}_{\alpha \beta} = & \mb u^T_\alpha \mb v_\beta 
&  y^{(1)}_{\alpha \beta} = & \mb u^T_\alpha \mb z \mb v_\beta
\end{align*}
For the values $\alpha_1, \beta_1, \alpha_2, \beta_2$ given in the
lemma, define the following matrix:
\begin{align*}
  \mb D \defeq &
\left[
  \begin{array}{cc}
    y_{\alpha_1\beta_1}^{(0)}& y_{\alpha_2,\beta_2}^{(0)}\\
    y_{\alpha_1\beta_1}^{(1)}& y_{\alpha_2,\beta_2}^{(1)}
  \end{array} \right] 
= \left[
 \begin{array}{cc}
\mb u_{\alpha_1}^T \mb v_{\beta_1}   & \mb u_{\alpha_2}^T \mb v_{\beta_2}\\
\mb u_{\alpha_1}^T \mb z \mb v_{\beta_1}   & \mb u_{\alpha_2}^T \mb z \mb v_{\beta_2}
 \end{array}
\right]
\end{align*}
Then $\mb D[\theta] = \mb{D}^{(0)}_{\alpha_1\beta_1\alpha_2\beta_2}$,
i.e. $\mb D$ is obtained by exposing the migrating variables, which in
$\mb{D}^{(0)}_{\alpha_1\beta_1\alpha_2\beta_2}$ are assigned by
$\theta$ (all are assigned the value $1/2$).  Thus, it suffices to
prove $\det(\mb D[\theta])\not\equiv 0$.

We denote by $\mb T$ the non-singular matrix that diagonalizes the
matrix $\mb z$, that is $\mb z = \mb T^{-1} \mb \Lambda \mb T$, where
$\Lambda = \diag(\lambda_1, \lambda_2)$,  and define:
\begin{align*}
\mb x_{\alpha_1}^T \defeq & \mb u_{\alpha_1}\mb T & \mb y_{\beta_1} \defeq & \mb T^{-1} \mb v_{\beta_1}\\
\mb x_{\alpha_2}^T \defeq & \mb u_{\alpha_2}\mb T & \mb y_{\beta_2} \defeq & \mb T^{-1} \mb v_{\beta_2}
\end{align*}
The matrix $\mb T$ is a matrix of numbers, while
$\mb x_{\alpha_1}, \ldots, \mb y_{\beta_2}$ are vectors of multilinear
polynomials. We denote the components of the vector $\mb x_{\alpha_1}$
by $x_{\alpha_1}^{(0)}$ and $x_{\alpha_1}^{(1)}$, similar to the
components of the vector $\mb u_{\alpha_1}$.  Notice that
$\vars(x_{\alpha_1}^{(0)}) \subseteq \vars(u_{\alpha_1}^{(0)})\cup
\vars(u_{\alpha_1}^{(1)})$ and
$\vars(x_{\alpha_1}^{(1)})\subseteq \vars(u_{\alpha_1}^{(0)})\cup
\vars(u_{\alpha_1}^{(1)})$.  These variables include the migrating
variables.  Similarly for the other three vectors.  Therefore, we
obtain:
\begin{align*}
 \det(\mb{D}) = &
\left|
 \begin{array}{cc}
\mb u_{\alpha_1}^T \mb v_{\beta_1}   & \mb u_{\alpha_2}^T \mb v_{\beta_2}\\
\mb u_{\alpha_1}^T \mb z \mb v_{\beta_1}   & \mb u_{\alpha_2}^T \mb z \mb v_{\beta_2}
 \end{array}
\right|=
\left|
 \begin{array}{cc}
\mb x_{\alpha_1}^T \mb y_{\beta_1}   & \mb x_{\alpha_2}^T \mb y_{\beta_2}\\
\mb x_{\alpha_1}^T \mb \Lambda \mb y_{\beta_1}   & \mb x_{\alpha_2}^T \mb \Lambda \mb y_{\beta_2}
 \end{array}
\right|\\
= &
\left|
  \begin{array}{cc}
x_{\alpha_1}^{(0)}y_{\beta_1}^{(0)}+x_{\alpha_1}^{(1)}y_{\beta_1}^{(1)}&x_{\alpha_2}^{(0)}y_{\beta_2}^{(0)}+x_{\alpha_2}^{(1)}y_{\beta_2}^{(1)}\\
x_{\alpha_1}^{(0)}\lambda_1y_{\beta_1}^{(0)}+x_{\alpha_1}^{(1)}\lambda_2y_{\beta_1}^{(1)}&x_{\alpha_2}^{(0)}\lambda_1y_{\beta_2}^{(0)}+x_{\alpha_2}^{(1)}\lambda_2y_{\beta_2}^{(1)}
  \end{array}
\right| =
\left(x^{(0)}_{\alpha_1}y^{(0)}_{\beta_1}x^{(1)}_{\alpha_2}y^{(1)}_{\beta_2}
-
x^{(1)}_{\alpha_1}y^{(1)}_{\beta_1}x^{(0)}_{\alpha_2}y^{(0)}_{\beta_2}
\right)\cdot (\lambda_2 - \lambda_1) \defeq f \cdot (\lambda_2 - \lambda_1)
\end{align*}
Since $\lambda_1 \neq \lambda_2$, in order to prove
$\det(\mb D[\theta])\not\equiv 0$, we need to show that
$f[\theta] \not\equiv 0$.  Assuming the contrary, we have the
following identity of polynomials:

\begin{align}
\underbrace{x^{(0)}_{\alpha_1}[\theta]x^{(1)}_{\alpha_2}[\theta]}_{\vars(\mb u_{\alpha_1}[\theta], \mb u_{\alpha_2}[\theta])}
\underbrace{y^{(0)}_{\beta_1}[\theta]y^{(1)}_{\beta_2}[\theta]}_{\vars(\mb v_{\beta_1}[\theta], \mb v_{\beta_2}[\theta])}
  \equiv & 
\underbrace{x^{(1)}_{\alpha_1}[\theta]x^{(0)}_{\alpha_2}[\theta]}_{\vars(\mb u_{\alpha_1}[\theta], \mb u_{\alpha_2}[\theta])}
\underbrace{y^{(1)}_{\beta_1}[\theta]y^{(0)}_{\beta_2}[\theta]}_{\vars(\mb v_{\beta_1}[\theta], \mb v_{\beta_2}[\theta])}
\label{eq:identity:above}
\end{align}
We have indicate above the set of variables that occur in these
multilinear polynomials.  At this point we observe that
$\vars(\mb u_{\alpha_i}[\theta])$ and $\vars(\mb v_{\beta_j}[\theta])$
are disjoint sets of variables, for any $i,j=1,2$.  Indeed, the only
variables shared by $\mb u_{\alpha_i}$ and $\mb v_{\beta_j}$ are the
migrating variables, but these have been replaced by the constant
$1/2$ by $\theta$.  Therefore, assuming the identity
\eqref{eq:identity:above} holds, then {\em both} the following
identities hold too:
\begin{align}
  x^{(0)}_{\alpha_1}[\theta]x^{(1)}_{\alpha_2}[\theta] \equiv &x^{(1)}_{\alpha_1}[\theta]x^{(0)}_{\alpha_2}[\theta]&
  y^{(0)}_{\beta_1}[\theta]y^{(1)}_{\beta_2}[\theta] \equiv & y^{(1)}_{\beta_1}[\theta]y^{(0)}_{\beta_2}[\theta]
\label{eq:the:two:identities}
\end{align}
Now we return to the definition of the vectors $\mb x_{\alpha_i}$,
$\mb y_{\beta_j}$ in terms of $\mb u_{\alpha_i}$, $\mb v_{\beta_j}$
and the non-singular matrix $\mb T$.  Assuming the entries in $\mb T$
are $a,b,c,d$, where $ad-bc\neq 0$, we have:
\begin{align*}
  x_{\alpha_i}^{(0)} = & a u_{\alpha_i}^{(0)} + c u_{\alpha_i}^{(1)}\\
  x_{\alpha_i}^{(1)} = & b u_{\alpha_i}^{(0)} + d u_{\alpha_i}^{(1)}
\end{align*}
and we obtain:
\begin{align*}
  (x^{(0)}_{\alpha_1}x^{(1)}_{\alpha_2} -  x^{(1)}_{\alpha_1}x^{(0)}_{\alpha_2}) =&
\left(
(a u_{\alpha_1}^{(0)} + c u_{\alpha_1}^{(1)})
(b u_{\alpha_2}^{(0)} + d u_{\alpha_2}^{(1)})
-
(b u_{\alpha_1}^{(0)} + d u_{\alpha_1}^{(1)})
(a u_{\alpha_2}^{(0)} + c u_{\alpha_2}^{(1)})
\right)
= (ad-bc)(u_{\alpha_1}^{(0)}u_{\alpha_2}^{(1)} - u_{\alpha_1}^{(1)}u_{\alpha_2}^{(0)})
\end{align*}
We apply a similar change of base from $\mb y_{\beta_i}$ to
$\mb v_{\beta_i}$, and the identities~\eqref{eq:the:two:identities}
become:
\begin{align}
  u_{\alpha_1}^{(0)}[\theta]u_{\alpha_2}^{(1)}[\theta] \equiv &u_{\alpha_1}^{(1)}[\theta]u_{\alpha_2}^{(0)}[\theta]&
  v_{\beta_1}^{(0)}[\theta]v_{\beta_2}^{(1)}[\theta] \equiv &v_{\beta_1}^{(1)}[\theta]v_{\beta_2}^{(0)}[\theta]
\label{eq:both:identities}
\end{align}
We have shown that, if $\det(\mb D[\theta])\equiv 0$, then {\em both}
identities~\eqref{eq:both:identities} hold.  We prove that this is a
contradiction.  For that we show that there exists a total assignment
$\theta'$ of all variables in the polynomials
$\mb u_{\alpha_i}, \mb v_{\beta_j}$ that (1) extends $\theta$, and (2)
make at least one of the quantities in \eqref{eq:both:identities}
$\neq$.  Notice that when $\alpha_1=\alpha_2$ then the first identity
{\em does} hold, but in that case $\beta_1 \neq \beta_2$ and then we
show that the second identity implies a contradiction.  When both
$\alpha_1=\alpha_2$ and $\beta_1=\beta_2$ then both identities
\eqref{eq:both:identities} hold, but we have assumed that
$(\alpha_1,\beta_1)\neq (\alpha_2,\beta_2)$.

To prove our claim, we remove the assignment $\theta$ and start from
the polynomials $\mb u_{\alpha_i}, \mb v_{\beta_j}$.  We claim that
the following non-indentities hold:
\begin{align}
\alpha_1\neq \alpha_2 \Rightarrow &  u_{\alpha_1}^{(0)}u_{\alpha_2}^{(1)} \not\equiv u_{\alpha_1}^{(1)}u_{\alpha_2}^{(0)}&
\beta_1 \neq \beta_2  \Rightarrow &  v_{\beta_1}^{(0)}v_{\beta_2}^{(1)} \not\equiv v_{\beta_1}^{(1)}v_{\beta_2}^{(0)}\label{eq:not:equiv:alpha:beta:here}
\end{align}
Indeed, suppose the first identity holds.  Then by
Theorem~\ref{th:rank:1:matrix:of:polynomials} we can factorize the
polynomials as follows:
\begin{align*}
  \left[
  \begin{array}{cc}
    u_{\alpha_1}^{(0)} & u_{\alpha_1}^{(1)}\\
    u_{\alpha_2}^{(0)} & u_{\alpha_2}^{(1)}
  \end{array}
\right] \equiv
  \left[
  \begin{array}{cc}
    f\cdot h & f \cdot k \\
    g \cdot h & g \cdot k
  \end{array}
\right]
\end{align*}
If $f$ is not a constant polynomial, then $f$ divides both
$u_{\alpha_1}^{(0)}$ and $u_{\alpha_1}^{(1)}$, and therefore it
divides
$y_{\alpha_1 \beta_1}^{(0)} =
u_{\alpha_1}^{(0)}v_{\beta_1}^{(0)}(1-s_0)+u_{\alpha_1}^{(1)}v_{\beta_1}^{(1)}s_0$
(where $s_0$ is the articulation variable), which contradicts the fact
that $y_{\alpha_1 \beta_1}^{(0)}$ is irreducible.  This proves that
$f$ must be a constant.  Similarly, $g$ must be a constant, by the
same argument.  It follows that
$u_{\alpha_1}^{(0)} \equiv c u_{\alpha_2}^{(0)}$ for some constant
$c$.  Since both $u_{\alpha_1}^{(0)}$ and $u_{\alpha_2}^{(0)}$ are
arithmetizations of monotone Boolean functions, when all variables are
set to $1$ then both quantities are $=1$, hence the constant is $c=1$.
Thus, $u_{\alpha_1}^{(0)} \equiv u_{\alpha_2}^{(0)}$, and by the same
argument , $u_{\alpha_1}^{(1)} \equiv u_{\alpha_2}^{(1)}$ which
implies that the two polynomials are identical,
$u_{\alpha_1} \equiv u_{\alpha_2}$, which implies
$\alpha_1 = \alpha_2$ by Lemma~\ref{lemma:type2:invertible}.

At this point we need to treat separately the cases when
$\alpha_1\neq \alpha_2$ and $\beta_1 \neq \beta_2$ and when one of
them is equal.  Assume first that both are different, hence both
Eq.~\eqref{eq:not:equiv:alpha:beta:here} hold.  In that case the
following product of polynomials is not identically zero:
\begin{align}
  f \defeq & \left(u_{\alpha_1}^{(0)}u_{\alpha_2}^{(1)}  - u_{\alpha_1}^{(1)}u_{\alpha_2}^{(0)}\right)\cdot
\left(v_{\beta_1}^{(0)}v_{\beta_2}^{(1)} - v_{\beta_1}^{(1)}v_{\beta_2}^{(0)}\right)
\not\equiv 0 \label{eq:f:with:migrating:variables}
\end{align}
We claim that each variable in $f$ has degree $\leq 2$.  Indeed, the
only variables common in both factors are the migrating variables. Let
$s'$ be a migrating variable, and assume it migrates from right to
left.  Then it occurs only in
$v_{\beta_1}^{(0)}, v_{\beta_2}^{(0)}, u_{\alpha_1}^{(1)},
u_{\alpha_2}^{(1)}$, and therefore it has degree 1 in each of the
factors, hence it total degree in $f$ is 2.  Therefore, by
Lemma~\ref{lemma:three:values}, there exists an assignment $\theta'$
with values in $\set{0,1/2,1}$ such that $f[\theta'] \neq 0$.  In
other words, we have proven that {\em both} the following inequalities
hold:
\begin{align}
  u_{\alpha_1}^{(0)}[\theta']u_{\alpha_2}^{(1)}[\theta'] \neq &u_{\alpha_1}^{(1)}[\theta']u_{\alpha_2}^{(0)}[\theta']&
  v_{\beta_1}^{(0)}[\theta']v_{\beta_2}^{(1)}[\theta'] \neq &v_{\beta_1}^{(1)}[\theta']v_{\beta_2}^{(0)}[\theta']
\label{eq:both:must:hold}
\end{align}
It remains to prove that $\theta'$ assigns $1/2$ to each migrating
variable $s'$.  

Let $s'$ be a variable migrating from right to left.  We claim that
either the following two identities hold:
\begin{align}
  u_{\alpha_1}^{(0)}[s':=0]u_{\alpha_2}^{(1)}[s':=0] = &u_{\alpha_1}^{(1)}[s':=0]u_{\alpha_2}^{(0)}[s':=0]\label{eq:migrating:0}\\
  v_{\beta_1}^{(0)}[s':=1]v_{\beta_2}^{(1)}[s':=1] = &v_{\beta_1}^{(1)}[s':=1]v_{\beta_2}^{(0)}[s':=1]\label{eq:migrating:1}
\end{align}
or the following two identities hold (obtained by switching $s':=0$
and $s':=1$):
\begin{align}
  u_{\alpha_1}^{(0)}[s':=1]u_{\alpha_2}^{(1)}[s':=1] = &u_{\alpha_1}^{(1)}[s':=1]u_{\alpha_2}^{(0)}[s':=1]\label{eq:migrating:2}\\
  v_{\beta_1}^{(0)}[s':=0]v_{\beta_2}^{(1)}[s':=0] = &v_{\beta_1}^{(1)}[s':=0]v_{\beta_2}^{(0)}[s':=0]\label{eq:migrating:3}
\end{align}
The claim completes the proof, because, assuming the first two
equalities hold,~\eqref{eq:migrating:0} and ~\eqref{eq:migrating:1},
then, if $\theta'(s')=0$ then the left inequality in
~\eqref{eq:both:must:hold} becomes an equality, while if
$\theta'(s')=1$, then the right inequality in
~\eqref{eq:both:must:hold} becomes an equality, which is a
contradiction because we have chosen $\theta'$ such that
both~\eqref{eq:both:must:hold} hold.  Similarly for the case when
~\eqref{eq:migrating:2} and ~\eqref{eq:migrating:3} hold.  In either
case, $\theta'(s')$ cannot be either 0 or 1, hence it must be $1/2$
and, since $s'$ was an arbitrary migrating variable, we have that
$\theta'$ is an extension of $\theta$.  

Thus, it remains to prove the claim.  Recall that, for any
$\alpha, \beta$, $y_{\alpha \beta}^{(0)}$ is {\em final}, meaning that
for every symbol $s'$, setting it to $0$ or to $1$ decomposes the
polynomial.\footnote{There is no need for partial assignment
  $\theta_0$ in Sec.~\ref{subsec:consistent:assignment} because when
  $p=0$ then $\theta_0$ is empty; in fact, when $p=0$ then
  $y^{(0)}_{\alpha \beta}$ is isomorphic to $Q_{\alpha \beta}$, and it
  is final because $Q$ is final.}  Thus, for any $\alpha, \beta$:
\begin{align}
   y_{\alpha \beta}^{(0)}[s':=0] = & a_{\alpha} \cdot b_{\beta} \label{eq:y:split:by:migrating:var}
\end{align}
where $a_\alpha$ and $b_\beta$ are polynomials that depend only on
$\alpha$ and $\beta$ respectively.  By assumption, $s_0$, causes $s'$
to migrate, hence, by Corollary~\ref{cor:migrating:symmetric}, $s'$
will cause $s_0$ to migrate.  Assume that $s_0$ migrates from right to
left: that is $s_0$ occurs in $b_{\beta}$, and when we decompose
$y^{(0)}_{\alpha \beta}[s':=1]$ then it occurs on the left.  Then, we
apply~\eqref{eq:y:split:by:migrating:var} to
$y^{(0)}_{\alpha_1 \beta}$ and $y^{(0)}_{\alpha_2 \beta}$ where
$\alpha_1,\alpha_2$ are the values given by the lemma, and $\beta$ is
arbitrary, and obtain:
\begin{align*}
   y_{\alpha_1 \beta}^{(0)}[s':=0,s_0:=0] = & \overbrace{a_{\alpha_1} \cdot b_0}^{u_{\alpha_1}^{(0)}[s':=0]}\cdot \overbrace{c_{0\beta}}^{v_{\beta}^{(0)}[s':=0]}&
   y_{\alpha_1 \beta}^{(0)}[s':=0,s_0:=1] = & \overbrace{a_{\alpha_1} \cdot b_1}^{u_{\alpha_1}^{(1)}[s':=0]}\cdot \overbrace{c_{1\beta}}^{v_{\beta}^{(1)}[s':=0]}\\
   y_{\alpha_2 \beta}^{(0)}[s':=0,s_0:=0] = & \underbrace{a_{\alpha_2} \cdot b_0}_{u_{\alpha_2}^{(0)}[s':=0]}\cdot \underbrace{c_{0\beta}}_{v_{\beta}^{(0)}[s':=0]}&
   y_{\alpha_2 \beta}^{(0)}[s':=0,s_0:=1] = & \underbrace{a_{\alpha_2} \cdot b_1}_{u_{\alpha_2}^{(1)}[s':=0]}\cdot \underbrace{c_{1\beta}}_{v_{\beta}^{(1)}[s':=0]}
\end{align*}
and both sides of~\eqref{eq:migrating:0} become equal to
$a_{\alpha_1}a_{\alpha_2}b_0b_1$, thus we have proven the
identity~\eqref{eq:migrating:0}.  Applying the same reasoning to the
decomposition $y_{\alpha \beta}^{(0)}[s':=1]$ (where $s_0$ occurs on
the left) we deduce the identity~\eqref{eq:migrating:1}.  Thus, when
$s_0$ migrates from right to left, then both
identities~\eqref{eq:migrating:0} and~\eqref{eq:migrating:1} hold.
Similarly, when it migrates from right to left then
~\eqref{eq:migrating:0} and~\eqref{eq:migrating:1} hold, proving the claim.

Next, assume that $\alpha_1 \neq \alpha_2, \beta_1 = \beta_2$.   In
that case only the first condition
in~\eqref{eq:not:equiv:alpha:beta:here} holds, thus we have:
\begin{align*}
    f \defeq & u_{\alpha_1}^{(0)}u_{\alpha_2}^{(1)}  -   u_{\alpha_1}^{(1)}u_{\alpha_2}^{(0)}\not\equiv 0
\end{align*}
We prove that $f[\theta]\not\equiv 0$, where $\theta(s')=1/2$ for all
migrating variables.  Here we notice that every migrating variable in
$f$ has degree 1, because it occurs either only in
$u_{\alpha_1}^{(0)}$ and $u_{\alpha_2}^{(0)}$ or only in
$u_{\alpha_1}^{(1)}$ and $u_{\alpha_2}^{(1)}$.  We prove that either
$f[s':=0]\equiv 0$ or $f[s':=1]\equiv 0$.  Using the same argument as
before, we derive that either~\eqref{eq:migrating:0} holds or
~\eqref{eq:migrating:2} holds; equations~\eqref{eq:migrating:1}
and~\eqref{eq:migrating:3} hold vacuously because $\beta_1=\beta_2$.
In the first case, when ~\eqref{eq:migrating:0} holds, then
$f[s':=0]\equiv 0$; in the second case $f[s':=1]\equiv 0$.  Thus, $f$
is divisible by either $s'$ or by $1-s'$.  It follows that $f$ is a
product of the form $s' (1-s'') s''' \cdots$ i.e. there is one factor
for each migrating variable $s'$, and that factor is either $s'$ or
$1-s'$.  It follows that, if $\theta$ assigns values $1/2$ to all
migrating variables, then $f[\theta]\not\equiv 0$, completing the
proof.
\end{proof}

\begin{corollary} \label{cor:no:idea:what:name:to:give:here} If
  $(\alpha_1, \beta_1) \neq (\alpha_2,\beta_2)$ then there exists an
  assignment $\theta'$ to all variables in
  $\mb{D}^{(0)}_{\alpha_1\beta_1\alpha_2\beta_2}$ such that
  $\det(\mb{D}^{(0)}_{\alpha_1\beta_1\alpha_2\beta_2}[\theta'])\neq
  0$.  Notice that $\theta'$ depends on the choices of
  $\alpha_1, \beta_1, \alpha_2, \beta_2$.
\end{corollary}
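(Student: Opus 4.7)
The plan is to derive the corollary directly from Lemma~\ref{lemma:alpha0:beta0:alpha:beta} combined with Lemma~\ref{lemma:three:values}. Since Lemma~\ref{lemma:alpha0:beta0:alpha:beta} already establishes that the polynomial $\det(\mb{D}^{(0)}_{\alpha_1\beta_1\alpha_2\beta_2})$ is not identically zero, we only need to locate an assignment of the remaining free variables, taking values in $\set{0,1/2,1}$, where this polynomial does not vanish.

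First, I would invoke Lemma~\ref{lemma:alpha0:beta0:alpha:beta} to conclude that $\det(\mb{D}^{(0)}_{\alpha_1\beta_1\alpha_2\beta_2})\not\equiv 0$ as a multivariate polynomial in the variables of the prefix/suffix blocks (the variables in the zig-zag part have already been fixed by $\theta$). Then I would verify that this polynomial has degree at most $2$ in every variable. Indeed, from the computation in the proof of Lemma~\ref{lemma:alpha0:beta0:alpha:beta}, the determinant factors (up to the nonzero scalar $\lambda_2-\lambda_1$) as a product of two expressions of the form $u_{\alpha_1}^{(0)}u_{\alpha_2}^{(1)}-u_{\alpha_1}^{(1)}u_{\alpha_2}^{(0)}$ and $v_{\beta_1}^{(0)}v_{\beta_2}^{(1)}-v_{\beta_1}^{(1)}v_{\beta_2}^{(0)}$, each of which is a multilinear polynomial (so degree $1$ in each variable), and the only shared variables between the two factors are the migrating variables, which therefore appear with total degree at most $2$.

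Second, with the degree bound in hand, I would apply Lemma~\ref{lemma:three:values} to obtain an assignment $\theta'$ taking values in $\set{0,1/2,1}$ for all the remaining variables such that $\det(\mb{D}^{(0)}_{\alpha_1\beta_1\alpha_2\beta_2}[\theta'])\neq 0$. The existence is essentially immediate once the degree bound is established, and the assignment $\theta'$ naturally depends on $\alpha_1,\beta_1,\alpha_2,\beta_2$ since the polynomial itself does.

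The main (minor) technical point is the degree verification: one must carefully track which variables are shared between the $u$-block and the $v$-block. The shared variables are exactly the migrating variables, and by the analysis already carried out in the proof of Lemma~\ref{lemma:alpha0:beta0:alpha:beta} (in particular the reasoning around Eq.~\eqref{eq:f:with:migrating:variables}), each such variable occurs in only one of the $u$-factors and one of the $v$-factors, contributing degree $1$ to each factor and degree $2$ overall. Non-migrating variables occur in only one of the two factors and have degree $1$ total. Thus every variable has degree $\le 2$ and Lemma~\ref{lemma:three:values} applies.
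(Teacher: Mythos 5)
Your proposal is correct and follows essentially the same route as the paper: the paper's own proof is a one-line appeal to Lemma~\ref{lemma:alpha0:beta0:alpha:beta} (for $\det(\mb{D}^{(0)}_{\alpha_1\beta_1\alpha_2\beta_2})\not\equiv 0$) together with Lemma~\ref{lemma:three:values}, using the fact that the determinant is a degree-$2$ multivariate polynomial. Your extra care in verifying the degree bound via the factorization into the two multilinear differences sharing only the migrating variables (cf.\ Eq.~\eqref{eq:f:with:migrating:variables}) is exactly the justification the paper leaves implicit.
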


\begin{proof}
  This is an immediate consequence of the previous
  Lemma~\ref{lemma:alpha0:beta0:alpha:beta} and of
  Lemma~\ref{lemma:three:values}, because
  $\det(\mb{D}^{(0)}_{\alpha_1\beta_1\alpha_2\beta_2})$ is a degree-2
  multivariate polynomial that is $\not\equiv 0$.
\end{proof}

This completes our goal for this subsection. For any fixed
$(\alpha_1, \beta_1) \neq (\alpha_2,\beta_2)$, we can construct a
block $B^{(p)}(u,v)$ isomorphic to $B^{(p+2)}(r_0,t_{p+2})$ and define
an assignment $\theta'$ of its variables such that the polynomials
$y^{(p)}_{\alpha \beta}(u,v)$ satisfy
condition-\eqref{eq:conditionCoefficients2} for the given pair.
Importantly, while $\theta'$ depends on
$\alpha_1, \beta_1, \alpha_2, \beta_2$, its restriction to the zig-zag
block $B^{(p)}(r_0,t_p)$ agrees with $\theta$, and is thus independent
on $\alpha_1, \beta_1, \alpha_2, \beta_2$, see
Fig.~\ref{fig:block:type:2} (b). 


\subsection{All Conditions  $a_{\alpha_1\beta_1}b_{\alpha_2\beta_2}\neq  a_{\alpha_2\beta_2}b_{\alpha_1\beta_1}$}

In the second part, we show that, if we modify the prefix/suffix
blocks by constructing many parallel branches, in order to satisfy
condition-\eqref{eq:conditionCoefficients2} for {\em all} pairs
$(\alpha_1, \beta_1) \neq (\alpha_2, \beta_2)$.  

Consider a block $B^{(p)}(u,v)$ defined by Eq.~\eqref{eq:all:blocks}.
Let $\mb V_{\text{pref}}$, $\mb V_{\text{suff}}$ be the set of
variables (tuples) in the prefix block $B(u,r_0)$ and suffix block
$B(t_0, v)$ respectively.  These sets will depend on how many branches
we choose for these blocks.  Let $\mb V_{\text{zigzag}}$ be the
remaining variables, in the zig-zag block $B^{(p)}(r_0,t_p)$ and the
two remaining sets of dead-end branches at $r_0$ and $t_p$
respectively.  We will fix the following assignment $\theta$ on $\mb
V_{\text{zigzag}}$.  Consider a prefix with a single branch, and a
suffix with a single branch, thus $B^{(p)}(u,v)$ is isomorphic to
$B^{(p+2)}(r_0,t_{p+2})$, then apply
Corollary~\ref{cor:no:idea:what:name:to:give:here}.  This gives us an
assignment $\theta'$ to all variables in $B^{(p)}(u,v)$ such that
$\det(\mb{D}^{(p)}_{\alpha_1\beta_1\alpha_2\beta_2})[\theta']\neq 0$,
where $\mb{D}^{(p)}_{\alpha_1\beta_1\alpha_2\beta_2}$ is defined by
Eq.~\eqref{eq:2:2:matrix:d} w.r.t. the entire block $B^{(p)}(u,v)$.
While $\theta'$ depends on the choices of
$\alpha_1\beta_1\alpha_2\beta_2$, its restriction to
$B^{(p)}(r_0,t_p)$ is independent of
$\alpha_1\beta_1\alpha_2\beta_2$.  Let $\theta$ be that restriction.
See Fig.~\ref{fig:block:type:2} (c) for an illustration.

With the assignment to $\mb V_{\text{zigzag}}$ fixed, we recompute the
probabilities $y^{(p)}_{\alpha \beta}$, by separating the quantities
that depend on the prefix/suffix from the rest.  For that we use
M\"obius inversion formula applied to the points $r_0$ and $t_p$,
which separate the two sets of blocks, and obtain:
\begin{align}
  y^{(p)}_{\alpha \beta}(u,v) = &
\sum_{\gamma \in L_0(\mb G),\delta \in L_0(\mb H)} p_{\alpha \gamma} c_{\gamma \delta}^{(p)} q_{\delta \beta}\label{eq:prefix:zigzag:suffix}
\end{align}
where:
\begin{align}
  p_{\alpha \gamma} \defeq  & \Pr(Y_{\alpha \gamma}(u,r_0))&
  c_{\gamma \delta}^{(p)} \defeq  & \mu(\gamma)\mu(\delta)y^{(p)}_{\gamma \delta}(r_0,t_p)
\prod_jy^{(0)}_{\gamma \hat 1}(r_0,e^{(j)}_0) \prod_jy^{(0)}_{\hat 1 \delta}(f^{(j)}_0,t_0) &
  q_{\delta \beta}  \defeq  & \Pr(Y_{\delta \beta}(t_p,v)) \label{eq:p:and:q:vars}
\end{align}
We think of these quantities as follows.  The values
$c_{\gamma \delta}$ are constants, since they are defined by the
assignment $\theta$ to $\mb V_{\text{zigzag}}$, and do not depend on
the pair $(\alpha_1, \beta_1) \neq (\alpha_2, \beta_2)$.  The
quantities $p_{\alpha \gamma}$ are defined by the prefix block
$B(u,r_0)$.  The value $p_{\alpha \gamma}$ is defined exactly like
$y_{\alpha \beta}$, the only difference is that it goes from
left-to-left, thus $\alpha, \gamma$ come from the same lattice
$L(\mb G)$, with the only restriction that $\gamma\neq \hat 1$.
Similarly, the values $q_{\delta \beta}$ are defined by the suffix
block.

We compute $\det(\mb{D}_{\alpha_1\beta_1\alpha_1\beta_2})$ in terms of
these new variables $p_{\alpha \gamma}$ and $q_{\delta \beta}$, and
denote it by $f_{\alpha_1\beta_1\alpha_1\beta_2}$:

\begin{align}
f_{\alpha_1\beta_1\alpha_1\beta_2} =&
\left|
    \begin{array}{cc}
    y_{\alpha_1\beta_1}^{(0)}& y_{\alpha_2,\beta_2}^{(0)}\\
    y_{\alpha_1\beta_1}^{(1)}& y_{\alpha_2,\beta_2}^{(1)}
  \end{array}
\right| =
\left|
    \begin{array}{cc}
    \sum_{\gamma_1 \in L_0(\mb G),\delta_1 \in L_0(\mb H)} p_{\alpha_1 \gamma_1} c_{\gamma_1 \delta_1}^{(0)} q_{\delta_1 \beta_1}&
    \sum_{\gamma_2 \in L_0(\mb G),\delta_2 \in L_0(\mb H)} p_{\alpha_2 \gamma_2} c_{\gamma_2 \delta_2}^{(0)} q_{\delta_2 \beta_2}\\
    \sum_{\gamma_3 \in L_0(\mb G),\delta_3 \in L_0(\mb H)} p_{\alpha_1 \gamma_3} c_{\gamma_3 \delta_3}^{(1)} q_{\delta_3 \beta_1}&
    \sum_{\gamma_4 \in L_0(\mb G),\delta_4 \in L_0(\mb H)} p_{\alpha_2 \gamma_4} c_{\gamma_4 \delta_4}^{(1)} q_{\delta_4 \beta_2}\\
  \end{array}
\right|\nonumber\\
= & \sum_{\gamma_1,\delta_1,\ldots,\gamma_4,\delta_4}
\left(
c_{\gamma_1 \delta_1}^{(0)} c_{\gamma_4 \delta_4}^{(1)} p_{\alpha_1  \gamma_1} q_{\delta_1 \beta_1} p_{\alpha_2 \gamma_4}  q_{\delta_4 \beta_2}-
c_{\gamma_2 \delta_2}^{(0)} c_{\gamma_3 \delta_3}^{(1)} p_{\alpha_1 \gamma_2} q_{\delta_2 \beta_1} p_{\alpha_2 \gamma_3} q_{\delta_3 \beta_2}
\right) \nonumber \\
= & \sum_{\gamma_1,\delta_1,\ldots,\gamma_4,\delta_4}
\left(
c_{\gamma_1 \delta_1}^{(0)} c_{\gamma_4 \delta_4}^{(1)} p_{\alpha_1  \gamma_1} q_{\delta_1 \beta_1} p_{\alpha_2 \gamma_4}  q_{\delta_4 \beta_2}
\right)
-
\sum_{\gamma_1,\delta_1,\ldots,\gamma_4,\delta_4}
\left(
c_{\gamma_2 \delta_2}^{(0)} c_{\gamma_3 \delta_3}^{(1)} p_{\alpha_1 \gamma_2} q_{\delta_2 \beta_1} p_{\alpha_2 \gamma_3} q_{\delta_3 \beta_2}
\right) \nonumber \\
= & \bar m \bar n\sum_{\gamma_1,\delta_1,\gamma_4,\delta_4}
\left(
c_{\gamma_1 \delta_1}^{(0)} c_{\gamma_4 \delta_4}^{(1)} p_{\alpha_1  \gamma_1} q_{\delta_1 \beta_1} p_{\alpha_2 \gamma_4}  q_{\delta_4 \beta_2}
\right)
-
\bar m \bar n \sum_{\gamma_2,\delta_2,\gamma_3,\delta_3}
\left(
c_{\gamma_2 \delta_2}^{(0)} c_{\gamma_3 \delta_3}^{(1)} p_{\alpha_1 \gamma_2} q_{\delta_2 \beta_1} p_{\alpha_2 \gamma_3} q_{\delta_3 \beta_2}
\right) \nonumber \\
= & \bar m \bar n \sum_{\gamma_1, \delta_1, \gamma_2, \delta_2}
\left(
c^{(0)}_{\gamma_1 \delta_1}c^{(1)}_{\gamma_2\delta_2}-c^{(0)}_{\gamma_1\delta_2}c^{(1)}_{\gamma_2 \delta_1}
\right)
p_{\alpha_1 \gamma_1} p_{\alpha_2 \gamma_2} q_{\delta_1 \beta_1} q_{\delta_2 \beta_2} \nonumber \\
\defeq & \bar m \bar n \sum_{\gamma_1, \delta_1, \gamma_2, \delta_2}
\left(
\Gamma_{\gamma_1 \gamma_2 \delta_1 \delta_2}
\right)
p_{\alpha_1 \gamma_1} p_{\alpha_2 \gamma_2} q_{\delta_1 \beta_1} q_{\delta_2 \beta_2} 
\label{eq:ugly:polynomial}
\end{align}

We make the following observations about this polynomial.
\begin{itemize}
\item The quantities $p_{\alpha \gamma}$ and $q_{\delta \beta}$ are
  defined in~\eqref{eq:p:and:q:vars}, and thus depend on the
  assignment $\theta$ on the variables $\mb V_{\text{pref}} \cup \mb
  V_{\text{suff}}$.
\item On the other hand, we can view the quantities $p_{\alpha
    \gamma}$ and $q_{\delta \beta}$ as independent variables, and
  thus,   $f_{\alpha_1\beta_1\alpha_1\beta_2}$ in
  Eq.~\eqref{eq:ugly:polynomial} is a multivariate polynomial in these
  variables.
\item The polynomial $f_{\alpha_1\beta_1\alpha_1\beta_2}$ is not
  identically 0.  This follows from
  Corollary~\ref{cor:no:idea:what:name:to:give:here} since we can
  construct prefix/suffix blocks consisting of a single branch and
  extend $\theta$ to a total assignment $\theta'$ such that
  $f_{\alpha_1\beta_1\alpha_1\beta_2} =
  \det(\mb{D}_{\alpha_1\beta_1\alpha_1\beta_2})[\theta']\neq 0$.
\item The coefficients of the polynomial
  $f_{\alpha_1\beta_1\alpha_1\beta_2}$ are
  $\Gamma_{\gamma_1 \gamma_2 \delta_1 \delta_2}$.  The total number of
  variables is $2\bar m + 2\bar n$, where $\bar m = |L_0(\mb G)|$,
  $\bar n = |L_0(\mb H)|$.  
\item When $\alpha_1 = \alpha_2=\alpha$, $\beta_1 \neq \beta_2$, then
  the polynomial has only $\bar m + 2\bar n$ variables, and it
  simplifies to:
  \begin{align*}
    & \bar m \bar n \sum_{\gamma, \delta_1, \delta_2}
      \left(
      c^{(0)}_{\gamma \delta_1}c^{(1)}_{\gamma\delta_2}-c^{(0)}_{\gamma\delta_2}c^{(1)}_{\gamma \delta_1}
      \right)
      p_{\alpha \gamma}^2 q_{\delta_1 \beta_1} q_{\delta_2 \beta_2} 
  \end{align*}
  The reason is that in Eq.~\eqref{eq:ugly:polynomial}, the monomials
  $p_{\alpha \gamma_1}p_{\alpha \gamma_2}$ and
  $p_{\alpha \gamma_2}p_{\alpha \gamma_1}$ are of course the same, but
  the coefficients given by $\gamma_1, \gamma_2$ and
  $\gamma_2, \gamma_1$ cancel out, except when $\gamma_1= \gamma_2$.
  This polynomial is also not identically zero,
  $\det(\mb{D}_{\alpha\beta_1\alpha\beta_2})[\theta']\neq 0$, by the
  same argument.
\item Similarly, when $\alpha_1 \neq \alpha_2$,
  $\beta_1 = \beta_2=\beta$ the polynomial has only $2\bar m + \bar n$
  variables, simplifies similarly, and is not $\equiv 0$.
\item If both $\alpha_1 = \alpha_2$ and $\beta_1 = \beta_2$, then
  there are only $\bar m + \bar n$ variables, but the polynomial is
  $\equiv 0$.  We are not interested in this case.
\item Suppose we have built two separate blocks $B_1^{(p)}(u,v)$ and
  $B_2^{(p)}(u,v)$, with the same $p$, but different
  prefixes/suffixes, i.e. different number of parallel branches, and
  different probability assignments.  The values of the
  quantities~\eqref{eq:p:and:q:vars} are $p^{(1)}_{\alpha \gamma}$ and
  $q^{(1)}_{\delta \beta}$ in the first block, and
  $p^{(2)}_{\alpha \gamma}$ and $q^{(2)}_{\delta \beta}$ in the second
  block; the coefficients $c_{\gamma \delta}^{(p)}$ are the same,
  since we use the same assignment $\theta$ for the variables
  $\mb V_{\text{zigzag}}$ in blocks.  Construct a new block
  $B^{(p)}(u,v)$ whose prefix consists of the union of all parallel
  branches in $B^{(p)}_1(u,v)$ and $B^{(p)}_2(u,v)$, and similarly its
  suffix consists of the union of all parallel branches of the
  suffices of the two blocks; see Fig.~\ref{fig:block:type:2} (c).
  Then the quantities~\eqref{eq:p:and:q:vars} of the new block are
  given by the products, i.e.
  $p^{(1)}_{\alpha \gamma}p^{(2)}_{\alpha \gamma}$ and
  $q^{(1)}_{\delta \beta}q^{(2)}_{\delta \beta}$ respectively.  This
  follows immediately from the fact that, in the M\"obius expansion,
  the formulas for the parallel branches in the suffix/postfix become
  independent; this is in fact a special case of
  Theorem~\ref{th:pq:mobius}.
\end{itemize}

Next, we state a theorem, of possible independent interest, which we
prove in the next section.

\begin{theorem} \label{th:products} Let $f_1, \ldots, f_m$ be
  multivariate polynomials in the variables
  $\mb{x} = (x_1, \ldots, x_n)$.  Suppose that, for each $i=1,m$,
  there exists a set of values
  $\mb{v}_i = (v_{i1}, \ldots, v_{in}) \in \R^n$, $v_{ij} > 0$, such
  that $f_i(\mb{v}_i) \neq 0$.  Then there exists natural numbers
  $k_1, k_2, \ldots, k_n \geq 1$ such that, setting
  $\mb{v} = (v_1, v_2, \ldots, v_n)$ where
  $v_j \defeq v_{i1}^{k_1} v_{i2}^{k_2} \cdots v_{in}^{k_n}$ we have
  $f_1(\mb{v}) \neq 0, \ldots, f_m(\mb{v}) \neq 0$.
\end{theorem}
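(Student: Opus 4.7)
The plan is to view each $f_i(\mb v)$, after the substitution $v_j = \prod_{\ell=1}^{m} v_{\ell j}^{k_\ell}$, as an exponential polynomial in the $m$ integer variables $\mb k = (k_1,\ldots,k_m)$, and then argue that a nonzero exponential polynomial (with positive real bases) cannot vanish on all of $\Z_{\geq 1}^m$.

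First, I would expand $f_i$ in monomials $\prod_j x_j^{d_j}$ and observe that
\[
h_i(\mb k) \defeq f_i(\mb v(\mb k)) \;=\; \sum_{\mb d} a^{(i)}_{\mb d} \prod_{\ell=1}^{m} \rho_{\ell,\mb d}^{k_\ell}, \qquad \rho_{\ell,\mb d} \defeq \prod_{j=1}^{n} v_{\ell j}^{d_j} > 0,
\]
which extends to a real-analytic function of $\mb k \in \R^m$. At the $i$-th standard basis vector $\mb e_i$ (so $k_i=1$ and the rest are $0$) one has $v_j = v_{ij}$, hence $h_i(\mb e_i) = f_i(\mb v_i) \neq 0$; in particular $h_i \not\equiv 0$, and the product $H(\mb k) \defeq \prod_{i=1}^{m} h_i(\mb k)$, itself an exponential polynomial of the form $\sum_\nu c_\nu \prod_\ell r_{\nu,\ell}^{k_\ell}$ with positive real bases, is also not identically zero. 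It therefore suffices to prove the following claim by induction on $m$: if $H(\mb k) = \sum_\nu c_\nu \prod_\ell r_{\nu,\ell}^{k_\ell}$ is a nonzero exponential polynomial with $r_{\nu,\ell} > 0$, then there exist $k_1,\ldots,k_m \in \Z_{\geq 1}$ with $H(\mb k) \neq 0$.

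For the base case $m=1$, I would invoke the classical fact that a nontrivial exponential sum $p(k) = \sum_\nu c_\nu r_\nu^{k}$ with finitely many distinct positive bases has only finitely many real zeros; this is proved by a short induction on the number of distinct terms, dividing through by the smallest $r_\nu^k$, differentiating, and applying Rolle's theorem to reduce the number of terms by one. Thus some positive integer $k$ avoids the (finitely many) real zeros of $p$. For the inductive step, I would group the terms of $H$ according to the first $m-1$ coordinates of the base vectors and write
\[
H(\mb k) \;=\; \sum_{\bm\sigma} A_{\bm\sigma}(k_m)\, \prod_{\ell<m} \sigma_\ell^{k_\ell}, \qquad A_{\bm\sigma}(k_m) \defeq \sum_{\nu : (r_{\nu,1},\ldots,r_{\nu,m-1}) = \bm\sigma} c_\nu\, r_{\nu,m}^{k_m}.
\]
Since $H \not\equiv 0$, some $A_{\bm\sigma}$ is a nonzero one-variable exponential polynomial, so by the base case there is $k_m \in \Z_{\geq 1}$ with $A_{\bm\sigma}(k_m) \neq 0$. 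By the linear independence of the distinct characters $\prod_{\ell<m} \sigma_\ell^{k_\ell}$, fixing this $k_m$ makes the function $(k_1,\ldots,k_{m-1}) \mapsto H(k_1,\ldots,k_{m-1},k_m)$ a nonzero exponential polynomial in $m-1$ variables, and the inductive hypothesis produces the remaining coordinates.

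The only subtle ingredient is the base case — the fact that a nonzero one-variable exponential polynomial with positive real bases has only finitely many real zeros — which is the main obstacle and is handled by the Rolle-type reduction mentioned above. The rest is bookkeeping: keeping track of the grouping into $A_{\bm\sigma}$'s, verifying that the bases $\rho_{\ell,\mb d}$ are indeed positive (which uses the hypothesis $v_{ij} > 0$), and noting that linear independence of multiplicative characters gives the "nonzero $A_{\bm\sigma}$ implies nonzero restricted polynomial" step.
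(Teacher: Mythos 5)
Your proof is correct, but it takes a genuinely different route from the paper's. The paper proceeds by induction on the number of polynomials: it merges $f_1\cdots f_{m-1}$ into a single polynomial $p$ with a good point $\mb v$ obtained inductively, sets $\mb w=\mb v_m$, and then (Lemma~\ref{lemma:polly:xy}) groups the monomials of $p(\mb x\mb y)$ by their values $u_i=\mb v^{\mb e}$, $z_j=\mb w^{\mb f}$, picks a rational direction $(s,t)$ avoiding finitely many linear conditions on $\log u_i,\log z_j$ so that the products $u_i^s z_j^t$ are pairwise distinct, and finishes with a dominant-term limit argument as $k\to\infty$ applied to $p(\mb v^{ks}\mb w^{kt})$. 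You instead reformulate the whole problem at once: each $f_i(\mb v(\mb k))$ becomes an exponential polynomial $h_i(\mb k)$ with positive bases, nonvanishing at $\mb e_i$, and you reduce to showing that the nonzero exponential polynomial $H=\prod_i h_i$ does not vanish on all of $\Z_{\geq 1}^m$, which you prove by induction on the number of variables with a Rolle-type zero-counting argument in the base case. Both arguments rest on the same underlying principle (distinct positive bases behave independently), but yours is more self-contained, avoids the nested ``for $k$ large enough'' constructions, and even yields explicit bounds on the $k_\ell$ in terms of the number of monomials; the paper's version is tailored to the incremental way the blocks are assembled in Section~H.8, where the exponents are naturally built up pair by pair. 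The one step you should spell out is that $H\not\equiv 0$: each $h_i\not\equiv 0$ does not by itself give a common evaluation point, so you need that the ring of exponential polynomials with positive real bases (equivalently, real-analytic functions on the connected set $\R^m$, or the group algebra of the torsion-free ordered group $\R_{>0}^m$) has no zero divisors --- a standard fact, provable in one line by comparing leading terms with respect to a generic linear order on the exponent vectors, but worth stating.
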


If $\mb v_1, \mb v_2$ are two vectors, then we write $\mb v_1 \mb v_2$
for their element-wise product.  Thus, the theorem states that, if
$f_i(\mb v_i) \neq 0$, then there exists exponents such that, defining
$\mb v \defeq \prod_i \mb v_i^{k_i}$, then for all $i=1,\ldots,m$,
$f_i(\mb v) \neq 0$. We prove the theorem in the next section.  

We use this theorem as follows.  We need to satisfy several conditions
$a_{\alpha_1\beta_1}b_{\alpha_2\beta_2}\neq
a_{\alpha_2\beta_2}b_{\alpha_1\beta_1}$.  Let $m$ be the number of
such conditions, we will refer to them using an index
$i=1,2,\ldots,m$.  For each condition $i$ we know that we can satisfy
that condition, using a single prefix/suffix branch and some
assignment $\theta_i$ of the variables $\mb V_{\text{pref}}$,
$\mb V_{\text{suff}}$; this follows from
Corollary~\ref{cor:no:idea:what:name:to:give:here}.  The assignment
$\theta_i$ results in numerical values $p_{\alpha \gamma}[\theta_i]$,
$q_{\delta \beta}[\theta_i]$, and, in particular, on these values,
$f_{\alpha_1\beta_1\alpha_1\beta_2}\neq 0$. Next, construct parallel
branches of the prefix/suffix.  In the new block, the quantities
$p_{\alpha \gamma}$ are products
$\prod_i (p_{\alpha \gamma}[\theta_i])^{k_i}$ where $k_i$ is the
number of copies of the branch with assignment $\theta_i$, and
similarly for $q_{\delta \beta}$.  To apply the
Theorem~\ref{th:products} we need to prove that
$p_{\alpha \gamma}[\theta_i]>0$ for all $\alpha, \gamma, \theta_i$.
Corollary~\ref{cor:no:idea:what:name:to:give:here} seems insufficient
for that.  Instead, we will use the corollary only to prove that the
polynomials $f_{\alpha_1\beta_1\alpha_1\beta_2}$ are not identically
0, then prove that we can satisfy each condition $i$ with an
assignment that guarantees $p_{\alpha \gamma}[\theta_i]>0$ and
$q_{\delta \gamma}[\theta_i] > 0$ for all
$\alpha,\beta,\gamma,\delta$.

\begin{lemma} \label{lemma:two:monomials}
  Consider two distinct monomials
  $p_{\alpha_1 \gamma_1}p_{\alpha_2 \gamma_2}q_{\delta_1 \beta_1} q_{\delta_2 \beta_2}$ and
  $p_{\alpha_1 \gamma_3}p_{\alpha_2 \gamma_4}q_{\delta_3 \beta_1}q_{\delta_4 \beta_2}$
  occurring in the polynomial
  $f_{\alpha_1 \alpha_2 \beta_1 \beta_2}$~\eqref{eq:ugly:polynomial}.
  Then there exists an assignment $\theta$ to the variables in
  $\mb V_{\text{pref}}\cup \mb V_{\text{suff}}$ such that:
  \begin{enumerate}[(a)]
  \item The following inequality holds
      $p_{\alpha_1 \gamma_1}[\theta]p_{\alpha_2 \gamma_2}[\theta]q_{\delta_1 \beta_1}[\theta]q_{\delta_2 \beta_2}[\theta]\neq 
       p_{\alpha_1 \gamma_3}[\theta]p_{\alpha_2 \gamma_4}[\theta]q_{\delta_3 \beta_1}[\theta]q_{\delta_4 \beta_2}[\theta]$
  \item For every $\alpha, \gamma$, $p_{\alpha \gamma}[\theta] > 0$
    and for every $\delta, \beta$, $q_{\delta \beta}[\theta] > 0$.
  \end{enumerate}
\end{lemma}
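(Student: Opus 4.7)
The plan is to prove the lemma in two stages: first establish positivity of all the $p_{\alpha \gamma}$ and $q_{\delta \beta}$ at a distinguished assignment, then argue that $g \defeq M_1 - M_2$ is not identically zero as a polynomial in the tuple probabilities of $\mb V_{\text{pref}} \cup \mb V_{\text{suff}}$, and finally combine the two observations via a genericity argument.

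For condition (b), I would start with the assignment $\theta_{1/2}$ that sets every tuple probability in $\mb V_{\text{pref}} \cup \mb V_{\text{suff}}$ to $1/2$. Viewing the prefix block $B(u,r_0)$ as a ``left-to-left'' zig-zag rooted at its two left endpoints, the argument of Lemma~\ref{lemma:type2:connected} transfers verbatim to show that each $Y_{\alpha \gamma}(u,r_0)$ is a connected, satisfiable, monotone Boolean formula, so $p_{\alpha \gamma}[\theta_{1/2}] > 0$; symmetrically, $q_{\delta \beta}[\theta_{1/2}] > 0$. By continuity, the set $\Theta_+$ of assignments satisfying (b) is a non-empty open subset of the hypercube $(0,1)^{|\mb V_{\text{pref}}|+|\mb V_{\text{suff}}|}$.

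For condition (a), I would prove that $g$ is a non-zero polynomial. Because the prefix and suffix blocks are disjoint, we can write $M_i = P_i \cdot Q_i$ where $P_i$ depends only on prefix tuple probabilities and $Q_i$ only on suffix tuple probabilities. If $M_1 \equiv M_2$ as polynomials, then the decoupling $P_1 Q_1 \equiv P_2 Q_2$ in disjoint variable sets forces $P_1 \equiv c P_2$ and $Q_2 \equiv c Q_1$ for some positive constant $c$. Evaluating at the all-ones prefix assignment (where each arithmetization of a monotone Boolean formula equals $1$) gives $c=1$, so $P_1 \equiv P_2$, and uniqueness of arithmetizations upgrades this to the Boolean identity $Y_{\alpha_1 \gamma_1}(u,r_0) \wedge Y_{\alpha_2 \gamma_2}(u,r_0) \equiv Y_{\alpha_1 \gamma_3}(u,r_0) \wedge Y_{\alpha_2 \gamma_4}(u,r_0)$, and an analogous identity on the suffix. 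The main obstacle is ruling these out: I plan to rewrite these conjunctions using the lattice join in $L_0(\mb G)$ (respectively $L_0(\mb H)$), so that each side collapses to a single $Y_{\alpha^*, \gamma^*}$ formula, then apply the invertibility of Lemma~\ref{lemma:type2:invertible} to conclude $\gamma_1 \vee \gamma_2 = \gamma_3 \vee \gamma_4$ and $\delta_1 \vee \delta_2 = \delta_3 \vee \delta_4$ in the respective lattices; combining with the assumption that $M_1$ and $M_2$ appear as \emph{distinct} monomials in~\eqref{eq:ugly:polynomial} (where pairs yielding identical joins have already been consolidated when writing $f_{\alpha_1 \alpha_2 \beta_1 \beta_2}$ in reduced form) gives the contradiction.

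Having shown $g \not\equiv 0$, the zero locus $\{\theta : g[\theta]=0\}$ is a proper Zariski-closed subset of the ambient affine space, in particular of Lebesgue measure zero, so it cannot contain the non-empty open set $\Theta_+$. Picking any $\theta \in \Theta_+$ with $g[\theta] \neq 0$, which exists because the complement of a measure-zero set meets every non-empty open set, yields an assignment satisfying both (a) and (b). Concretely, one may start at $\theta_{1/2}$ and slightly perturb the tuple probabilities in a direction where $g$ does not vanish; the perturbation can be made small enough that all $p_{\alpha \gamma}$ and $q_{\delta \beta}$ remain positive by continuity. The delicate step throughout is the lattice-level argument in the second paragraph — everything else is standard polynomial genericity and an invocation of the monotone-satisfiable positivity principle already used elsewhere in the paper.
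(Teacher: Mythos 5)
There are two genuine gaps here, and the first one is fatal to the whole enterprise. Your final step picks $\theta$ by perturbing $\theta_{1/2}$ off the zero locus of $g$, which yields tuple probabilities that are arbitrary reals near $1/2$. But this lemma feeds into the reduction to $\gmc_{\text{bi}}(Q)$, where every tuple probability must lie in $\set{0,1/2,1}$; an assignment obtained by ``slight perturbation'' is exactly the kind of thing the paper is built to avoid (it would collapse back to the many-distinct-probabilities argument of the earlier hardness proof). The paper instead shows that at least one of the two identities $p_{\alpha_1\gamma_1}p_{\alpha_2\gamma_2}\equiv p_{\alpha_1\gamma_3}p_{\alpha_2\gamma_4}$, $q_{\delta_1\beta_1}q_{\delta_2\beta_2}\equiv q_{\delta_3\beta_1}q_{\delta_4\beta_2}$ fails, sets all variables on the other side to $1$, and then produces a $\set{0,1/2,1}$-valued witness via Lemma~\ref{lemma:three:values}, Corollary~\ref{cor:values:half:or:one} and Lemma~\ref{lemma:values:half:or:one:part2} --- together with a case analysis showing that such a witness never assigns $0$ to a ubiquitous variable (which is what simultaneously guarantees condition (b)). Your route gives no control of this kind, and Example~\ref{ex:no:good:varphi} shows that getting a discrete-valued witness is not automatic.

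The second gap is in your argument that $g\not\equiv 0$. From $P_1\equiv P_2$ you cannot ``upgrade to the Boolean identity $Y_{\alpha_1\gamma_1}\wedge Y_{\alpha_2\gamma_2}\equiv Y_{\alpha_1\gamma_3}\wedge Y_{\alpha_2\gamma_4}$'': the formulas $Y_{\alpha_1\gamma_1}(u,r_0)$ and $Y_{\alpha_2\gamma_2}(u,r_0)$ share Boolean variables, so the product $p_{\alpha_1\gamma_1}p_{\alpha_2\gamma_2}$ is not multilinear and is not the arithmetization of their conjunction. And even granting a passage to lattice joins, equality of joins $\gamma_1\vee\gamma_2=\gamma_3\vee\gamma_4$ does not imply equality of the monomials (e.g.\ $\set{\gamma_1,\gamma_2}$ and $\set{\gamma_3,\gamma_4}$ can be different multisets with the same join), and the polynomial~\eqref{eq:ugly:polynomial} is \emph{not} written with such pairs consolidated. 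The correct argument, which the paper uses, is unique factorization: each $p_{\alpha\gamma}$ is an irreducible polynomial (because the corresponding lineage is connected, by the forbidden-query structure), so $P_1\equiv P_2$ forces the irreducible factors to match pairwise up to constants that are pinned to $1$ by evaluating at the all-ones point; the invertibility of $(\alpha,\gamma)\mapsto Y_{\alpha\gamma}$ (as in Lemma~\ref{lemma:type2:invertible}) then shows the two prefix monomials coincide as formal monomials, and likewise for the suffix, contradicting distinctness. Your positivity observation at $\theta_{1/2}$ and the product decomposition into prefix and suffix parts are both correct, but the two central steps above need to be replaced.
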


\begin{proof}
  We view the two monomials as multilinear polynomials over the
  variables $\mb V_{\text{pref}}\cup \mb V_{\text{suff}}$.  Consider
  any polynomials $p_{\alpha \gamma}$, $q_{\delta \beta}$, defined by
  Eq.~\eqref{eq:p:and:q:vars}.  The Boolean formula
  $Y_{\alpha \gamma}$ is similar to $Y_{\alpha \beta}$ in
  Eq.~\eqref{eq:y:alpha:beta}, the only difference is that we are now
  moving from left to right to left.  In particular,
  $p_{\alpha \gamma}$ is an irreducible polynomial.  If $U$ is a
  left-ubiquitous symbol in $Q$, then we write $U', U''$ for the
  Boolean variables $U' = U(u,t_{\text{pref}})$ and
  $U''=U(r_0,t_{\text{pref}})$.  We say that $U'$ is a ubiquitous
  variable at the ``start'' and $U''$ is at the ``end'', since their
  distance in $Y_{\alpha \gamma}$ is $2k$.  We write
  $Y_{\alpha \gamma}$ as:
  \begin{align*}
    Y_{\alpha \gamma} = & G'_\alpha \wedge Y \wedge G''_\gamma
  \end{align*}
  where $G'_\alpha$ consists of  clauses that contain some
  ubiquitous variable $U'$ at the start, $G''_\gamma$ consists of
  clauses with some ubiquitous variable $U''$ at the end, and $Y$
  contains all clauses without ubiquitous symbols.  Notice that $Y$ is
  the same formula for all choices of  $\alpha$ and $\gamma$.  

  We will use the following simple fact.  If $F$ is a monotone Boolean
  function in CNF  with $n$ variables, and $f$ is its arithmetization, then
  for any $\theta \in [0,1]^n$, if $f[\theta]=0$ then there exists a
  clause $X_1 \vee X_2 \vee \cdots$ in $F$ such that
  $\theta(X_1)=\theta(X_2)=\ldots = 0$.  In this case, with some
  abuse, we write $F[\theta]= 0$, although $F[\theta]$ is not a well
  defined value since $\theta$ assigns real values to Boolean
  variables.

  To prove the lemma, we will prove that there exists an assignment
  $\theta$ that satisfies item (a) of the lemma, and also satisfies
  the following properties: (1) $Y[\theta]\neq 0$ (2) for any start-
  or end-ubiquitous symbol $U$, $\theta(U) \neq 0$.  This implies
  that, for any $\alpha, \gamma$, $p_{\alpha \gamma}[\theta]\neq 0$,
  because the clauses $Y$ will be the same, while each clause in
  $G'_\alpha$ or $G''_\gamma$ has at least one ubiquitous symbol,
  which is not assigned the value 0.  We enforce similar properties on
  $Y_{\delta \beta}$, and therefore these imply condition (b) of the
  lemma.

  We start by proving that the two monomials, when viewed as
  multilinear polynomials over the variables
  $\mb V_{\text{pref}}\cup \mb V_{\text{suff}}$ are not identical
  polynomials.  Suppose otherwise, then:
  \begin{align}
  p_{\alpha_1 \gamma_1}p_{\alpha_2 \gamma_2}q_{\delta_1 \beta_1}  q_{\delta_2 \beta_2}\equiv p_{\alpha_1 \gamma_3}p_{\alpha_2 \gamma_4}q_{\delta_3 \beta_1} q_{\delta_4 \beta_2}
\label{eq:four:nonequivalent}
  \end{align}
  Since the polynomials $p_{\alpha \gamma}$ depend only on variables
  from the prefix while $q_{\delta \beta}$ depend only on variables
  from the suffix, we obtain that both two identities below must hold:
  \begin{align}
    p_{\alpha_1 \gamma_1}p_{\alpha_2 \gamma_2}\equiv &p_{\alpha_1 \gamma_3}p_{\alpha_2 \gamma_4}&
q_{\delta_1 \beta_1}  q_{\delta_2 \beta_2} \equiv & q_{\delta_3 \beta_1} q_{\delta_4 \beta_2}\label{eq:last:equation}
  \end{align}
  We claim that if the first identity holds, then, when viewed as
  monomial terms in $f_{\alpha_1 \beta_1 \alpha_2 \beta_2}$, the
  monomials $p_{\alpha_1 \gamma_1}p_{\alpha_2 \gamma_2}$ and
  $p_{\alpha_1 \gamma_3}p_{\alpha_2 \gamma_4}$ are the same.  If the
  second identity holds as well, then the other two monomials
  $q_{\delta_1 \beta_1} q_{\delta_2 \beta_2}$ and
  $q_{\delta_3 \beta_1} q_{\delta_4 \beta_2}$ are also identical, but
  both cannot hold by the assumption of the lemma.  To prove the
  claim, assume that the first identity above holds.  Since each of
  the four polynomials is irreducible, there are two cases.  The first
  is when $p_{\alpha_1 \gamma_1}\equiv p_{\alpha_1 \gamma_3}$ and
  $p_{\alpha_2 \gamma_2}\equiv p_{\alpha_2 \gamma_4}$.  In that case,
  using the argument in Lemma~\ref{lemma:simple:lattices:1}, we
  conclude that $\gamma_1=\gamma_3$ and $\gamma_2=\gamma_4$, meaning
  that monomials $p_{\alpha_1 \gamma_1}p_{\alpha_2 \gamma_2}$ and
  $p_{\alpha_1 \gamma_3}p_{\alpha_2 \gamma_4}$ are the same.  The
  second case is $p_{\alpha_1 \gamma_1}\equiv p_{\alpha_2 \gamma_4}$
  and $p_{\alpha_2 \gamma_2}\equiv p_{\alpha_1 \gamma_3}$, in which
  case $\alpha_1=\alpha_2$ and $\gamma_1=\gamma_4$ and
  $\gamma_2=\gamma_3$, and, again, the two monomials are identical.

  Therefore, at least one of the identities
  in~\eqref{eq:last:equation} does not hold.  Assume w.l.o.g. that it
  is the first identity.  We will construct below a certain assignment
  $\theta$ on $\mb V_{\text{pref}}$ such that:
  \begin{align}
    p_{\alpha_1 \gamma_1}[\theta]p_{\alpha_2 \gamma_2}[\theta]\neq
    p_{\alpha_1 \gamma_3}[\theta]p_{\alpha_2 \gamma_4}[\theta] \label{eq:p:alpha:gamma}
  \end{align}
  and satisfies conditions (1) and (2) above.  Then we will extend it
  by setting $\theta(X)=1$ for all variables $X$ in
  $\mb V_{\text{suff}}$; this assignment satisfies both conditions (a)
  and (b) of the lemma.  We distinguish two cases:

  \begin{description}
  \item[Case 1:] $(\gamma_1,\gamma_2)=(\gamma_4,\gamma_3)$.  By
    Lemma~\ref{lemma:three:values} in the introduction there exists
    that $\theta\in \set{0,1/2,1}$ that satisfies:
    \begin{align}
      p_{\alpha_1 \gamma_1}[\theta]p_{\alpha_2 \gamma_2}[\theta]\neq &
      p_{\alpha_1 \gamma_2}[\theta]p_{\alpha_2 \gamma_1}[\theta] \label{eq:really:last}
    \end{align}
    By Corollary~\ref{cor:values:half:or:one}, we may assume
    w.l.o.g. that neither side $=0$, which implies that
    $Y[\theta]\neq 0$.  We prove that for any ubiquitous symbol $U'$
    at the start, $\theta(U') \in \set{1/2,1}$.  Since the query is
    final, $Q[U':=0]$ is a safe query.  Since all left clauses remain
    intact in $Q[U':=0]$ (none becomes redundant and no subclause
    becomes redundant), it follows that in $Q_{\text{left}}[U:=0]$ all
    left clauses are disconnected from the right clauses; referring to
    a left-to-right path $C_0,C_1,\ldots,C_k$ of minimal length, the
    clause $C_1$ becomes redundant in $Q[U:=0]$.  It follows that, for
    any $\alpha, \gamma$,
    \begin{align*}
      Y_{\alpha \gamma}[U':=0] = & G'_\alpha[U':=0] \wedge Y_0 \wedge G''_\gamma
&   \vars(G'_\alpha[U':=0]) \cap \vars(Y_0 \wedge G''_\gamma) = & \emptyset
    \end{align*}
    where $Y_0$ are clauses of $Y$ that are not redundant in
    $Y_{\alpha \gamma}[U':=0]$.  Therefore, the polynomials
    $p_{\alpha \gamma}$ factorize as
    $p_{\alpha \gamma}[U' \asn 0] = f_\alpha \cdot g_\gamma$, and the following
    equivalence holds
    $p_{\alpha_1 \gamma_1}[U' \asn 0] p_{\alpha_2 \gamma_2}[U' \asn 0] \equiv p_{\alpha_1
      \gamma_2}[U' \asn 0] p_{\alpha_2 \gamma_1}[U' \asn 0]$.  Thus, if $\theta(U')=0$ then
    \eqref{eq:really:last} becomes an equality, which is a
    contradiction.  If $U''$ is a ubiquitous symbol at the end, then
    we prove similarly that $\theta(U'') \neq 0$.  This completes the
    proof of Case 1.

  \item[Case 2:] $(\gamma_1, \gamma_2) \neq (\gamma_4,\gamma_3)$; by
    our earlier discussion we also have
    $(\gamma_1, \gamma_2) \neq (\gamma_3,\gamma_4)$.  Here we will
    define $\theta(U')=1$ for all ubiquitous symbols at the start.
    The four Boolean formulas become:
    \begin{align*}
      F_{\gamma_1} \defeq & Y_{\alpha_1 \gamma_1}[\mb U':=1] = Y \wedge G''_{\gamma_1}
&     F_{\gamma_3} \defeq & Y_{\alpha_1 \gamma_3}[\mb U':=1] = Y \wedge G''_{\gamma_3}\\
      F_{\gamma_2} \defeq & Y_{\alpha_2 \gamma_2}[\mb U':=1] = Y \wedge G''_{\gamma_2}
&     F_{\gamma_4} \defeq & Y_{\alpha_2 \gamma_4}[\mb U':=1] = Y \wedge G''_{\gamma_4}\\
    \end{align*}
    Since all these Boolean formulas are connected, their
    arithmetizations are irreducible polynomials.  It follows that
    $f_{\gamma_1}f_{\gamma_2} \not\equiv f_{\gamma_3}f_{\gamma_4}$,
    since otherwise we have $(\gamma_1,\gamma_2)=(\gamma_3,\gamma_4)$
    or $(\gamma_1,\gamma_2)=(\gamma_4,\gamma_3)$.  By
    Lemma~\ref{lemma:three:values} there exists an assignment $\theta$
    in $\set{0,1/2,1}$ such that
    $f_{\gamma_1}[\theta]f_{\gamma_2}[\theta] \not\equiv
    f_{\gamma_3}[\theta]f_{\gamma_4}[\theta]$.  We check that these
    four functions, together with ubiquitous symbols at the end,
    $U''_1, \ldots, U''_m$, satisfy the conditions in
    Lemma~\ref{lemma:values:half:or:one:part2}.  Indeed, by setting
    any $U''_i := 1$, all the subclausess in $G''_\gamma$ that depend
    on the choice of $\gamma$ become 1, since they contain {\em all}
    ubiquitous symbols, and therefore
    $G''_{\gamma_1}[U''_i:=1] \equiv G''_{\gamma_2}[U''_i:=1] \equiv
    G''_{\gamma_3}[U''_i:=1] \equiv G''_{\gamma_4}[U''_i:=1]$.
    Therefore, we can apply Lemma~\ref{lemma:values:half:or:one:part2}
    and obtain an assignment $\theta$ such that
    $f_{\gamma_1}[\theta]f_{\gamma_2}[\theta] \not\equiv
    f_{\gamma_3}[\theta]f_{\gamma_4}[\theta]$ and
    $\theta(U'') \in \set{1/2,1}$ for all $U''$.  This concludes the
    proof.
  \end{description}
\end{proof}

We now proceed to re-prove
Corollary~\ref{cor:no:idea:what:name:to:give:here}.

\begin{lemma} \label{lemma:strictly:positive}
  Fix any $(\alpha_1, \beta_1) \neq (\alpha_2, \beta_2)$, and consider
  a   prefix/suffix block with a single branch.   Then there exists an
  assignment $\theta$ of the variables in $\mb V_{\text{pref}} \cup
  \mb V_{\text{suff}}$ such that (1) $f_{\alpha_1, \beta_1, \alpha_2,
    \beta_2}[\theta]\neq 0$ and (2) for any $\alpha, \gamma$,
  $p_{\alpha \gamma}[\theta]>0$ and for any $\beta,
  \delta$, $q_{\delta \beta}[\theta]>0$. 
\end{lemma}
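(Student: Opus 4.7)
The plan is to reformulate $f_{\alpha_1\beta_1\alpha_2\beta_2}$ into the ``$fg-hk$'' shape required by Lemma~\ref{lemma:values:half:or:one:part2} and then invoke that lemma with a carefully chosen list of forbidden-zero variables.

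First, I would observe that the sum in Eq.~\eqref{eq:ugly:polynomial} is a $2 \times 2$ determinant. Using Eq.~\eqref{eq:prefix:zigzag:suffix} to reassemble the sums over $\gamma_i,\delta_i$ back into products of $y^{(p)}_{\alpha\beta}(u,v)$, one can write
\[
f_{\alpha_1\beta_1\alpha_2\beta_2} \;=\; \bar m\,\bar n\cdot\bigl(F\cdot G \,-\, H\cdot K\bigr),
\]
where $F,G,H,K$ are the arithmetizations of four monotone Boolean lineages of the form $Y^{(p)}_{\alpha,\beta}(u,v)$ for appropriate choices of $p\in\set{0,1}$ and of the pair $(\alpha,\beta)$ drawn from $\set{(\alpha_1,\beta_1),(\alpha_1,\beta_2),(\alpha_2,\beta_1),(\alpha_2,\beta_2)}$. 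By Corollary~\ref{cor:no:idea:what:name:to:give:here}, $f_{\alpha_1\beta_1\alpha_2\beta_2}\not\equiv 0$ as a polynomial in the tuple variables, so $fg\not\equiv hk$; combined with Lemma~\ref{lemma:from:polynomial:to:boolean}, this yields $F\wedge G\not\equiv H\wedge K$ as Boolean formulas.

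Second, I would apply Lemma~\ref{lemma:values:half:or:one:part2} to the quadruple $(F,G,H,K)$ with the set $\set{U_1,\ldots,U_k}$ taken to be exactly the tuples whose assignment to $0$ could force some $p_{\alpha\gamma}$ or $q_{\delta\beta}$ to vanish. Concretely, for each left-ubiquitous symbol $U'$ of $Q$ we include every ``far-left'' tuple $U'(u,\cdot)$ of the prefix and every ``far-right'' tuple $U'(r_0,\cdot)$ at the prefix endpoint, together with the symmetric family on the suffix; and we include every tuple occurring only in the ubiquity-free ``middle'' portion of the prefix/suffix lineage. The hypothesis $F[U_i{:=}1]\equiv G[U_i{:=}1]\equiv H[U_i{:=}1]\equiv K[U_i{:=}1]$ I would verify by observing that setting any such $U_i$ to $1$ trivializes the $\alpha$- or $\beta$-dependent caps $G'_\alpha$, $G''_\gamma$ (and their right-side analogues) uniformly across the four choices of $(\alpha,\beta)$, so the surviving formulas coincide. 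The lemma then outputs an assignment $\theta^{\star}\in\set{0,1/2,1}$ with $f[\theta^{\star}]g[\theta^{\star}]\neq h[\theta^{\star}]k[\theta^{\star}]$, both $fg$ and $hk$ strictly nonzero, and every $U_i$ assigned a value in $\set{1/2,1}$. Restricting $\theta^{\star}$ to $\mb V_{\text{pref}}\cup\mb V_{\text{suff}}$ yields the desired $\theta$: part (1) is immediate from $fg-hk\neq 0$, and part (2) follows by the same monotonicity argument used in Proposition~\ref{prop:lineageProps}, since every clause of $Y_{\alpha\gamma}$ (resp.\ $Y_{\delta\beta}$) contains either a ubiquitous tuple that lies among the $U_i$ or a middle-portion tuple that also lies among the $U_i$, and so retains at least one variable of strictly positive value under $\theta$.

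The main obstacle will be the equivalence check in Step~2: the formulas $F,H$ live on $B^{(0)}(u,v)$ while $G,K$ live on $B^{(1)}(u,v)$, so they depend on disjoint zigzag variables and the identity $F[U_i{:=}1]\equiv G[U_i{:=}1]$ cannot hold literally. The way I would handle this is to pad each of $F,G,H,K$ with tautological conjuncts over the foreign zigzag, apply Lemma~\ref{lemma:values:half:or:one:part2} on the resulting common variable set, and discard the assignment on the padding variables at the end; because the $\alpha$- and $\beta$-caps live entirely in prefix/suffix, the substitution equivalences then reduce to pure Boolean tautologies restricted to prefix/suffix. This mirrors Case~2 in the proof of Lemma~\ref{lemma:two:monomials}, and once the padding is made precise the remaining bookkeeping is routine.
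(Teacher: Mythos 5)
Your reduction of $f_{\alpha_1\beta_1\alpha_2\beta_2}$ to a $2\times 2$ determinant $FG-HK$ is fine as algebra, but the step that carries all the weight --- applying Lemma~\ref{lemma:values:half:or:one:part2} to the quadruple $(F,G,H,K)$ --- does not go through, for two reasons. First, that lemma is a statement about four \emph{monotone Boolean functions} and their arithmetizations, whereas here the zig-zag variables have already been frozen to the assignment of Theorem~\ref{th:assignment:for:lambdas} (which uses the value $1/2$); after that substitution $y^{(0)}_{\alpha\beta}$ and $y^{(1)}_{\alpha\beta}$ are polynomials in the prefix/suffix variables with non-$0/1$ coefficients, not lineages, and padding with tautological conjuncts does not restore them to Boolean form. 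If you instead apply the lemma to the un-substituted lineages, the assignment it returns covers \emph{all} variables and nothing forces it to agree with the prescribed zig-zag assignment, which the surrounding construction (the eigenvalue conditions and the constants $c^{(p)}_{\gamma\delta}$) requires. Second, the hypothesis $F[U_i:=1]\equiv G[U_i:=1]\equiv H[U_i:=1]\equiv K[U_i:=1]$ fails for your chosen $U_i$: setting a middle-portion (non-ubiquitous) tuple to $1$ does not erase the dependence on $\alpha$ or $\beta$, so the four formulas stay inequivalent; and even for a ubiquitous tuple, $F$ lives on $B^{(0)}(u,v)$ while $G$ lives on $B^{(1)}(u,v)$, so they differ in their zig-zag bodies no matter what you set. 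The paper's own invocation of Lemma~\ref{lemma:values:half:or:one:part2} (inside Lemma~\ref{lemma:two:monomials}, Case 2) is restricted precisely to four formulas over a \emph{single} prefix block that share a common body $Y$ and differ only in an end-cap $G''_\gamma$, with the $U_i$ ranging only over end-ubiquitous tuples --- that is the only situation in which the ``all four collapse to a common formula'' hypothesis is true.

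This is why the paper takes the longer route: it first factors the prefix and suffix out of the determinant via M\"obius inversion at $r_0,t_p$, so that $f_{\alpha_1\beta_1\alpha_2\beta_2}$ becomes a polynomial in the quantities $p_{\alpha\gamma},q_{\delta\beta}$ with the zig-zag entering only through the constants $c^{(p)}_{\gamma\delta}$ (Eq.~\eqref{eq:ugly:polynomial}); it then separates the monomials of that polynomial \emph{pairwise} (Lemma~\ref{lemma:two:monomials}), where the Boolean-function lemmas do apply, while keeping every $p_{\alpha\gamma},q_{\delta\beta}$ strictly positive; and it finally combines the pairwise-separating assignments via Theorem~\ref{th:products} and a Vandermonde argument over parallel copies to make the whole sum nonzero. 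Your monotonicity argument for condition (2) (every clause retains a positively-weighted variable, hence the monotone CNF has positive probability) is the right idea and is essentially what the paper does inside Lemma~\ref{lemma:two:monomials}, but as written it leans on including the middle-portion tuples among the $U_i$, which is exactly what breaks the hypothesis of Lemma~\ref{lemma:values:half:or:one:part2}. To repair your proof you would need, at a minimum, a replacement for that lemma that tolerates a prescribed partial assignment and four formulas that do not collapse to a common residue.
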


\begin{proof}
  Denote by $m_1, m_2, \ldots, m_n$ the monomials occurring in {\em
    all} polynomials $f_{\alpha_1, \beta_1, \alpha_2, \beta_2}$.
  That, each $m_i$ has the form
  $m_i =p_{\alpha_1 \gamma_1}p_{\alpha_2 \gamma_2}q_{\delta_1 \beta_1}
  q_{\delta_2 \beta_2}$ for some choice of
  $\gamma_1, \gamma_2, \delta_1, \delta_2$, and we can write
  $f_{\alpha_1, \beta_1, \alpha_2, \beta_2} = \sum_i \Gamma_i m_i$,
  where $\Gamma_i$ is the coefficient of the $i$'th monomial.  Recall
  that the polynomial is not identically 0 (this follows from
  Corollary~\ref{cor:no:idea:what:name:to:give:here}).  For each
  $i\neq j$, let $\theta_{ij}$ be the assignment given by
  Lemma~\ref{lemma:two:monomials} for this pair of monomials.  Denote
  by $\mb p_{ij}$ be the vector consisting of all values
  $p_{\alpha \gamma}[\theta_{ij}]$ and
  $q_{\delta \beta}[\theta_{ij}]$: many do not occur in the polynomial
  $f_{\alpha_1, \beta_1, \alpha_2, \beta_2}$, but we include all of
  them in these vectors.  Notice that all components of all these
  vectors are $>0$.
  By Theorem~\ref{th:products}, we can construct prefix/suffix blocks
  consisting of multiple parallel copies of these blocks, such that
  (1) the new vectors $\mb p$ are element-wise products of the vectors
  $\mb p_{ij}$, and similarly for $\mb q$; in particular all their
  components are $>0$, and (2) {\em all} pairs of monomials in
  $f_{\alpha_1, \beta_1, \alpha_2, \beta_2}$ are distinct:
  $m_i \neq m_j$.  Call this block the {\em starting} block; notice
  that it has several parallel branches in both the prefix and the
  suffix (the same number of branches in the prefix and suffix).
  Thus, we have:
  \begin{align*}
    f_{\alpha_1, \beta_1, \alpha_2, \beta_2}(\mb p, \mb q) = & \sum_{i=1,n} \Gamma_i m_i
  \end{align*}
  Next, we make  $k$ parallel copies of the starting block; on this
  even larger block, the vector $\mb p$ is replaced by $\mb p^k$,
  i.e. each component is raised to the power $k$, and similarly for
  $\mb q$.  Thus, in the new block, the polynomial is:
  \begin{align*}
    f_{\alpha_1, \beta_1, \alpha_2, \beta_2}(\mb p^k, \mb q^k) = & \sum_{i=1,n} \Gamma_i m_i^k
  \end{align*}
  If this value is $=0$ for all $k = 1, 2, \ldots, n+1$, then
  $\Gamma_1 = \cdots = \Gamma_n=0$, because the values $m_i$ are
  distinct and thus the matrix of the system of linear equation is
  non-singular (it is a Vandermonde matrix).  But that implies that
  the polynomial is identically 0, which is a contradiction.  Thus,
  there exists $k$ such that the value of this polynomial is $\neq 0$.
  Since all components of $\mb p^k, \mb q^k$ are $>0$, this proves the
  claim.
\end{proof}

Finally, we pove:

\begin{corollary}
  There exists a choice of the prefix/suffix blocks such that, for
  every pair $(\alpha_1, \beta_1) \neq (\alpha_2, \beta_2)$, the
  polynomial~\eqref{eq:ugly:polynomial} is $\neq 0$.
\end{corollary}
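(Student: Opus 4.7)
The plan is to combine Lemma~\ref{lemma:strictly:positive} with Theorem~\ref{th:products} in a straightforward way. Enumerate the finitely many pairs $(\alpha_1,\beta_1)\neq(\alpha_2,\beta_2)$ as $i=1,\dots,N$, and for each $i$ let $f_i$ denote the corresponding polynomial $f_{\alpha_1\beta_1\alpha_2\beta_2}$ from Eq.~\eqref{eq:ugly:polynomial}, viewed as a multivariate polynomial in the abstract variables $\mb x=(p_{\alpha\gamma},q_{\delta\beta})_{\alpha,\gamma,\delta,\beta}$. By Lemma~\ref{lemma:strictly:positive}, for each $i$ there is an assignment $\theta_i$ on the variables $\mb V_{\text{pref}}\cup\mb V_{\text{suff}}$ of a single-branch prefix/suffix block such that (a) $f_i$ evaluates to a nonzero value at the resulting point $\mb v_i\defeq(p_{\alpha\gamma}[\theta_i],q_{\delta\beta}[\theta_i])$, and (b) every coordinate of $\mb v_i$ is strictly positive.

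Next I apply Theorem~\ref{th:products} to the family $f_1,\dots,f_N$ together with the points $\mb v_1,\dots,\mb v_N\in\R_{>0}^n$. The theorem produces natural numbers $k_1,\dots,k_N\ge 1$ and a single point $\mb v$ with $v_j=\prod_{i=1}^N v_{ij}^{k_i}$ such that $f_i(\mb v)\neq 0$ for all $i$ simultaneously. Realising $\mb v$ as the vector of actual prefix/suffix probabilities is then immediate: build the final prefix by taking $\sum_i k_i$ parallel branches, namely $k_i$ disjoint copies of the single-branch prefix used for $\theta_i$, and symmetrically for the suffix. As observed in the bullet just before Theorem~\ref{th:products}, taking parallel copies of a prefix multiplies the quantities $p_{\alpha\gamma}$ across the copies (this is a consequence of the M\"obius expansion of Theorem~\ref{th:pq:mobius}, where disjoint parallel branches contribute independent factors to each $p_{\alpha\gamma}$ and each $q_{\delta\beta}$); hence the vector of prefix/suffix probabilities of the combined block is precisely $\mb v$.

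Consequently, for every pair $(\alpha_1,\beta_1)\neq(\alpha_2,\beta_2)$ we have $f_{\alpha_1\beta_1\alpha_2\beta_2}(\mb v)\neq 0$, which is exactly the statement of the corollary. The only delicate aspect is the strict positivity condition required by Theorem~\ref{th:products}: if any coordinate of some $\mb v_i$ were zero, then element-wise products could vanish, destroying the construction. This is exactly the reason that Lemma~\ref{lemma:strictly:positive} was strengthened, earlier in the section, to supply strictly positive $p_{\alpha\gamma}[\theta_i]$ and $q_{\delta\beta}[\theta_i]$ (rather than merely the nonvanishing of $f_i[\theta_i]$ given by Corollary~\ref{cor:no:idea:what:name:to:give:here}), so that the hypotheses of Theorem~\ref{th:products} are satisfied verbatim.

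With this corollary, condition~\eqref{eq:conditionCoefficients2} holds for every pair simultaneously; combined with Theorem~\ref{th:lambda:type2} (which already secures \eqref{eq:conditionLambda2}) and the positivity of $b_{\alpha\beta}$ (which secures \eqref{eq:conditiononZero2}), Theorem~\ref{th:non-singular} then applies and gives the desired reduction $\ccp(\bar m,\bar n)\leq^P\gmc_{\text{bi}}(Q)$, completing the proof of Theorem~\ref{th:ccp:type2} and hence of Theorem~\ref{th:dichotomy:main:lemma}~(2). The main work was already done in Lemma~\ref{lemma:strictly:positive} and Theorem~\ref{th:products}; the step here is purely an assembly.
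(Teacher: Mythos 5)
Your proposal is correct and follows essentially the same route as the paper's own (very terse) proof: invoke Lemma~\ref{lemma:strictly:positive} to get, for each pair, a strictly positive point at which the corresponding polynomial~\eqref{eq:ugly:polynomial} is nonzero, then apply Theorem~\ref{th:products} and realize the resulting element-wise product point by taking parallel copies of the prefix/suffix branches. Your write-up just makes explicit the assembly that the paper compresses into two sentences, including the correct observation that the strict positivity in Lemma~\ref{lemma:strictly:positive} is exactly what Theorem~\ref{th:products} requires.
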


\begin{proof}
  By the previous lemma we can construct a prefix/suffix block that
  satisfies one polynomial $\neq 0$.  By Theorem~\ref{th:products}, we
  can construct parallel branches of these prefix/suffix blocks to
  satisfy all polynomials $\neq 0$, as required.
\end{proof}

The corollary completes the proof: with this choice of prefix/suffix
blocks we have $a_{\alpha_1\beta_1}b_{\alpha_2\beta_2}\neq
a_{\alpha_2\beta_2}b_{\alpha_1\beta_1}$ for all  $(\alpha_1, \beta_1)
\neq (\alpha_2, \beta_2)$, and this we proved
condition-\eqref{eq:conditionCoefficients2}.

It remains to prove Theorem~\ref{th:products}.

\subsection{Proof of Theorem~\ref{th:products}}

Thus, in the rest of this section it remains to prove
Theorem~\ref{th:products}.  Here we will refer to the polynomials
$f_1, f_2, \ldots$ as $p_1, p_2, \ldots$

We write a multivariate polynomial as
\begin{align}
  p(x_1, \ldots, x_n) = p(\mb x) = & \sum_{\mb e: \mb e \leq d} a_{\mb e} \mb x^{\mb e}
\label{eq:poly} 
\end{align}
Here $\mb e = (e_1, \ldots, e_n) \in \N^n$ denotes a vector of
exponents, $\mb x^{\mb e} \defeq \prod_i x_i^{e_i}$, and
$\mb e \leq d$ means $e_i \leq d$ for all $i$; in other words, each
variable $x_i$ has degree $\leq d$. We assume that the coefficients
$a_{\mb e}$ are real numbers.

We will consider vectors of values
$\mb v = (v_1, \ldots, v_n) \in \R^n_+$, where $v_i > 0$ for $i=1,n$.
We denote by $\mb v^k \defeq (v_1^k, \ldots, v_n^k)$, and denote by
$\mb v \mb w \defeq (v_1w_1, \ldots, v_nw_n)$, where
$\mb w = (w_1, \ldots, w_n)$.  Then, Theorem~\ref{th:products} says
that: if $p_1(\mb v_1)\neq 0$, $\ldots$, $p_m(\mb v_m)\neq 0$, then
there exists $\mb u \defeq \mb v_1^{k_1} \cdots \mb v_m^{k_m}$
s.t. $p_1(\mb u)\neq 0$, $\ldots$, $p_m(\mb u)\neq 0$.

We prove the theorem through a sequence of lemmas.  Let $p(\mb x)$ be
a multivariate polynomial in $n$ variables, of degree $d$, and let
$\mb v \in \R_+^n$ be a vector of non-zero values.  To compute
$p(\mb v^k)$ we will group the terms of \eqref{eq:poly} as follows.
Let:
\begin{align*}
  U \defeq & \setof{\mb v^{\mb e}}{\mb e \leq d} \subseteq \R_+
\end{align*}
That is, $U$ is the set of all {\em distinct} values $\mb v^{\mb e}$
that will occur in the expansion of $p(\mb v)$.  Assume $U$ has $m$
distinct values, $U = \set{u_1, \ldots, u_m}$.  For all $u_i \in U$,
define:

\begin{align*}
  E_{\mb v, i} \defeq & \setof{\mb e}{\mb e\leq d, \mb v^{\mb e} = u_i}
\end{align*}

Thus, $E_{\mb v, 1} \cup E_{\mb v, 2} \cup \ldots \cup E_{\mb v, m}$
forms a partition of the set of all exponents occurring in $p(\mb x)$,
and we can write it as a sum of $m$ polynomials:

\begin{align*}
  p(\mb x) = & \sum_{i=1,m} (\sum_{\mb e \in E_{\mb v, i}} a_{\mb e} \mb x^{\mb e})\defeq  \sum_{i=1,m} p_{\mb v, i}(\mb x)
\end{align*}

Then, for all $i$, $p_{\mb v, i}(\mb v^k) = u_i^k p_{\mb v, i}(\mb 1)$,
because all terms $(\mb v^k)^{\mb e}$ in $p_{\mb v, i}(\mb v^k)$ are equal
to $u_i^k$.  Therefore,
\begin{align}
p(\mb v^k) = \sum_{i=1,m} p_{\mb v, i}(\mb v^k) = & \sum_{i=1,m} u_i^k p_{\mb v, i}(\mb 1)  \label{eq:balanced}
\end{align}
Notice that $p_{\mb v, i}(\mb 1)=p_{\mb v, i}(1,1,\ldots,1)$ are just
the sum of all coefficients of the polynomial $p_{\mb v, i}(\mb x)$.

Let's call a polynomial $p(\mb x)$ {\em balanced} if $p(\mb 1)=0$;
otherwise it is {\em imbalanced}.  We prove:

\begin{lemma}
  Let $\mb v \in \R^n_+$ be such that $p(\mb v) \neq 0$.  Then there
  exists $i$ such that $p_{\mb v, i}(\mb x)$ is imbalanced.
\end{lemma}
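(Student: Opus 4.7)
The plan is to prove this lemma by a one-line contrapositive argument that directly leverages identity~\eqref{eq:balanced}, which the paper has just derived. Specializing that identity to the exponent $k = 1$, since $\mb v^1 = \mb v$, yields
\[
p(\mb v) \;=\; \sum_{i=1}^{m} u_i \cdot p_{\mb v, i}(\mb 1),
\]
and this is the only formula I will need.

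The argument then proceeds by contradiction. Suppose, toward a contradiction, that every piece $p_{\mb v, i}$ is balanced, i.e., $p_{\mb v, i}(\mb 1) = 0$ for all $i = 1, \ldots, m$. Then every summand on the right-hand side of the display vanishes, forcing $p(\mb v) = 0$, which contradicts the hypothesis $p(\mb v) \neq 0$. Hence at least one index $i$ must satisfy $p_{\mb v, i}(\mb 1) \neq 0$, meaning that the corresponding piece $p_{\mb v, i}$ is imbalanced, which is the desired conclusion.

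There is essentially no technical obstacle to carry out this proof: all of the setup --- the partition $\{E_{\mb v, i}\}_{i=1}^m$ of the exponent set, the definition of the pieces $p_{\mb v, i}$, and the factorization $p_{\mb v, i}(\mb v^k) = u_i^k\, p_{\mb v, i}(\mb 1)$ --- was already established in the paragraphs immediately preceding the lemma. The only remark worth making is that the positivity hypothesis $\mb v \in \R^n_+$ plays no role in this particular step; it is used upstream to ensure that each monomial $\mb v^{\mb e}$ is a well-defined positive real (so that the values $u_i$ are positive and distinct among themselves by definition of the partition), and it will matter downstream in the proof of Theorem~\ref{th:products}, where one varies $k$ and exploits a Vandermonde-type non-singularity on the vectors $(u_1^k, \ldots, u_m^k)_k$ to pin down the imbalanced piece.
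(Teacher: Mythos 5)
Your proposal is correct and is essentially identical to the paper's own proof: both take the contrapositive and apply Eq.~\eqref{eq:balanced} (the paper leaves $k$ general, you specialize to $k=1$, which changes nothing). No gaps.
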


\begin{proof}
  We prove the converse: if each $p_{\mb v, i}(\mb x)$ is balanced,
  then $p(\mb v)=0$.  This follows immediately from
  Eq.~\eqref{eq:balanced}.
\end{proof}

\begin{lemma}
  If at least one of the polynomials $p_{\mb v, i}(\mb x)$ is
  imbalanced, then there exists $k_0\geq 0$ such that forall
  $k \geq k_0$, $p(\mb v^k)\neq 0$.  In other words,
  $p(\mb v^k)\neq 0$, for all $k$ that are ``large enough''.
\end{lemma}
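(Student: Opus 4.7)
The plan is to use a dominating term argument on the expansion \eqref{eq:balanced}. Recall that $u_1,\ldots,u_m$ are the \emph{distinct} positive values in the set $U=\{\mb v^{\mb e}: \mb e\leq d\}$, and set $c_i\defeq p_{\mb v,i}(\mb 1)$ so that
\[
p(\mb v^k)=\sum_{i=1}^{m} c_i\, u_i^k.
\]
The hypothesis that at least one $p_{\mb v,i}$ is imbalanced says exactly that at least one $c_i$ is nonzero. Since the $u_i$ are pairwise distinct and strictly positive (recall $\mb v\in\R_+^n$), I can reorder the indices so that $u_1>u_2>\cdots>u_m>0$.

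Next, let $i^\ast$ be the smallest index with $c_{i^\ast}\neq 0$, so $u_{i^\ast}$ is the largest of the values that actually appears with a nonzero coefficient. Factor out $u_{i^\ast}^k$:
\[
p(\mb v^k)\;=\;u_{i^\ast}^k\Bigl(c_{i^\ast}+\sum_{i>i^\ast} c_i\,(u_i/u_{i^\ast})^k\Bigr).
\]
Because $u_i/u_{i^\ast}\in(0,1)$ for every $i>i^\ast$, the tail sum is bounded in absolute value by $\bigl(\sum_{i>i^\ast}|c_i|\bigr)\cdot \rho^k$, where $\rho\defeq\max_{i>i^\ast} u_i/u_{i^\ast}<1$, and hence tends to $0$ as $k\to\infty$.

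Therefore there exists $k_0\geq 0$ such that for every $k\geq k_0$ the quantity in parentheses differs from $c_{i^\ast}$ by less than $|c_{i^\ast}|/2$ and is in particular nonzero; since $u_{i^\ast}^k>0$, we conclude $p(\mb v^k)\neq 0$ for all $k\geq k_0$, as required. The only subtlety to check carefully is that the ordering of the $u_i$ is strict, which is immediate from their definition as distinct positive reals; no other step involves more than elementary real analysis.
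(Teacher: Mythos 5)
Your proof is correct and follows essentially the same route as the paper's: both expand $p(\mb v^k)=\sum_i c_i u_i^k$ via Eq.~\eqref{eq:balanced}, isolate the largest $u_i$ carrying a nonzero coefficient, and observe that the remaining terms vanish geometrically as $k\to\infty$. Your explicit choice of $i^\ast$ and the bound via $\rho<1$ is just a slightly more spelled-out version of the paper's ``drop the balanced pieces and factor out $u_1^k$'' argument.
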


\begin{proof}
  Assume w.l.o.g. that $m > 0$ and all polynomials $p_{\mb v, i}(\mb
  x)$ are imbalanced (otherwise we simply remove the balanced
  polynomials and corresponding values $u_i$).  Also assume $u_1 > u_2
  > \cdots > u_m$.  Then   Eq.~\eqref{eq:balanced} becomes:
  \begin{align*}
    p(\mb v^k) = & \sum_{i=1,m} u_i^k p_{\mb v, i}(\mb 1) =
u_1^k \left(p_{\mb v,1}(\mb 1) + \underbrace{\sum_{i = 2,m} \left(\frac{u_i}{u_1}\right)^kp_{\mb  v,i}(\mb 1)}_{\rightarrow 0}\right)
  \end{align*}
  When $k\rightarrow \infty$, then the expression under $\sum_{i=2,m}$
  goes to 0, hence when $k$ is large enough,
  $p_{\mb v, 1}(\mb 1) + \sum_{i=2,n}(\cdots) \neq 0$.
\end{proof}

For a simple example, consider $f(x_1,x_2,x_3) = 2x_1^2x_2-x_3^2$, and
assume $\mb v = (3, 2, 6)$.  Then $f(3,2,6) =2\cdot 18 - 36=0$.
However, the reader may verify that, for every $k \geq 2$,
$f(3^k,2^k,6^k) > 0$.

Next, we will examine combinations of the form $\mb v^s \mb w^t$ for
natural numbers $s, t \geq 1$.  We will require some simple inequality
constraints of these pairs $s,t$, which we define next.

\begin{definition}
  An {\em inequality constraint} is a pair of real numbers
  $(\alpha, \beta)$, such that $(\alpha,\beta) \neq (0,0)$ (i.e. not
  both $\alpha,\beta$ can be 0).  We say that two natural numbers
  $s, t \in \N$ {\em satisfy} the constraint if $s, t \geq 1$ and
  $s \alpha + t \beta \neq 0$.  If $\Gamma$ is a set of inequality
  constraints, then we write $\Gamma \models (s,t)$ when $s,t$ satisfy
  every constraint in $\Gamma$.
\end{definition}

The intuition behind an inequality constraint is the following.  We
have two numbers $v,w$, and want to find exponents $s,t$ such that
$v^s \neq w^t$. Any pair $(s,t)$ satisfying the constraint
$(\log v, -\log w)$ will also satisfy the inequality $v^s \neq w^t$.
We need two very simple facts:

\begin{lemma} \label{lemma:trivial1} If $\Gamma$ is a finite set of
  inequality constraints, then there exists infinitely many pairs of
  natural numbers $s,t$ s.t. $s \geq 1, t \geq 1$ that satisfy all
  constraints in $\Gamma$.
\end{lemma}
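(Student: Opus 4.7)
The plan is to reduce the lemma to a simple geometric/counting observation: each inequality constraint excludes at most a line through the origin in the $(s,t)$-plane, while the natural-number grid is two-dimensional, so finitely many lines cannot cover infinitely many pairs.

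More concretely, I would first analyze a single constraint $(\alpha,\beta) \neq (0,0)$. The failure set $\{(s,t)\in \R^2 : s\alpha+t\beta=0\}$ is a line through the origin (it is a proper line, not all of $\R^2$, precisely because $(\alpha,\beta)\neq (0,0)$). If $\beta=0$, this line is $\{s=0\}$, which contains no pair with $s\geq 1$, so the constraint is automatically satisfied. If $\beta\neq 0$, the failure line has slope $-\alpha/\beta$, and any pair $(s,t)$ with $s,t\geq 1$ that fails the constraint must lie on this single ray.

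The second step is to exhibit infinitely many good pairs. The simplest device is to restrict to the shifted diagonal $t=s+1$: for a constraint $(\alpha,\beta)$ the failure condition becomes $s(\alpha+\beta)+\beta = 0$. If $\alpha+\beta\neq 0$, this equation has at most one solution in $s$; if $\alpha+\beta=0$, then $\beta\neq 0$ (else $\alpha=\beta=0$), so the left-hand side is the nonzero constant $\beta$ and the constraint is never violated on this diagonal. Thus each constraint excludes at most one value of $s$ on the diagonal $t=s+1$, and the finite set $\Gamma$ excludes at most $|\Gamma|$ values of $s$ altogether. Every remaining $s\geq 1$ then yields a pair $(s,s+1)$ satisfying all constraints in $\Gamma$, and there are infinitely many such $s$.

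There is no real obstacle here: the main thing to be careful about is the degenerate case $\alpha+\beta=0$ (where the ``line'' meets the shifted diagonal nowhere) and the requirement $s,t\geq 1$, both of which are easily handled as above. If one prefers an even more symmetric argument, one can instead note that every constraint excludes only pairs whose ratio $t/s$ equals a specific rational (or no pairs at all when $\alpha=0$ or $\beta=0$), so picking any infinite sequence of pairs $(s_i,t_i)$ with pairwise distinct ratios $t_i/s_i$ avoids, after discarding at most $|\Gamma|$ initial terms, every failure line simultaneously.
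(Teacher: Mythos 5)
Your proof is correct and takes essentially the same approach as the paper: the paper selects pairs whose ratio $t/s$ avoids the finite set $\{-\alpha_i/\beta_i : \beta_i \neq 0\}$, and your shifted-diagonal argument is a concrete instance of this (the pairs $(s,s+1)$ have pairwise distinct ratios $1+1/s$, so each constraint can kill at most one of them), while the alternative you sketch in your last sentence is the paper's argument almost verbatim. One cosmetic remark: since $\alpha,\beta$ are arbitrary reals, the excluded ratio $-\alpha/\beta$ is a specific real rather than a specific rational, but this only makes it easier for a rational $t/s$ to avoid it and does not affect your argument.
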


\begin{proof} Let
  $\Gamma = \set{(\alpha_1,\beta_1), \ldots, (\alpha_m, \beta_m)}$,
  and define the set
  $S \defeq \setof{-\alpha_i/\beta_i}{(\alpha_i,\beta_i) \in \Gamma,
    \beta_i\neq 0}$.  This is a finite set of real numbers.  Then, any
  pair of natural numbers $s,t$ such that $s, t \geq 1$ and
  $t/s \in \Q - S$ satisfies all constraints in $\Gamma$: indeed, if
  $\beta_i\neq 0$ then $t/s \neq -\alpha_i/\beta_i$ implies
  $s \alpha_i + t \beta_i \neq 0$, and if $\beta_i=0$ then
  $s\alpha_i + t \beta_i= s \alpha_i \neq 0$ because $s > 0$.
\end{proof}

\begin{lemma} \label{lemma:trivial2} Let
  $\alpha_1> \alpha_2 > \cdots > \alpha_m$ and
  $\beta_1 > \beta_2 > \cdots > \beta_n$ be two sequences of distinct
  real values.  Then there exists a finite set of inequality
  constraints $\Gamma$ such that, for any numbers $s,t$, if
  $\Gamma \models (s,t)$, then the $m\cdot n$ values
  $q\alpha_i + t \beta_j$, $i=1,m$, $j=1,n$, are distinct.
\end{lemma}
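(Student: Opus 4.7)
The plan is to translate each distinctness condition into exactly one inequality constraint and then take $\Gamma$ to be the finite collection of all such constraints. Specifically, two values $s\alpha_i + t\beta_j$ and $s\alpha_{i'} + t\beta_{j'}$ coincide iff $s(\alpha_i-\alpha_{i'}) + t(\beta_j-\beta_{j'}) = 0$. So I would define
\[
\Gamma \defeq \setof{(\alpha_i-\alpha_{i'},\ \beta_j-\beta_{j'})}{(i,j),(i',j')\in[m]\times[n],\ (i,j)\neq(i',j')}.
\]
This is a finite set, with at most $mn(mn-1)$ elements (fewer after removing duplicates), so it qualifies as a finite set of inequality constraints once we check that no element is $(0,0)$.

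The key step is precisely this check: if $(\alpha_i-\alpha_{i'},\beta_j-\beta_{j'})=(0,0)$, then $\alpha_i=\alpha_{i'}$ and $\beta_j=\beta_{j'}$, which, since the $\alpha_\ell$'s are pairwise distinct and the $\beta_\ell$'s are pairwise distinct, forces $i=i'$ and $j=j'$, contradicting $(i,j)\neq(i',j')$. Hence every pair in $\Gamma$ is a legitimate inequality constraint in the sense of the definition preceding the lemma.

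It then remains only to unwind the definitions: assuming $\Gamma \models (s,t)$, for every $(i,j)\neq(i',j')$ we have $s(\alpha_i-\alpha_{i'}) + t(\beta_j-\beta_{j'}) \neq 0$, i.e.\ $s\alpha_i + t\beta_j \neq s\alpha_{i'} + t\beta_{j'}$, which is exactly the claim that the $mn$ values are pairwise distinct. There is no genuine obstacle here; the only thing to be careful about is the case analysis that guarantees each constraint is nondegenerate, which is handled uniformly by the distinctness hypotheses on the $\alpha$'s and $\beta$'s. Combined with Lemma~\ref{lemma:trivial1}, this also gives infinitely many concrete $(s,t)$ realizing the distinctness, which is the form in which the lemma is used in the preceding proof.
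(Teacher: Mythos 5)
Your proof is correct and follows essentially the same route as the paper: one nondegenerate constraint $(\alpha_i-\alpha_{i'},\beta_j-\beta_{j'})$ per pair of distinct index-pairs, with the distinctness of the $\alpha$'s and of the $\beta$'s guaranteeing that no constraint degenerates to $(0,0)$. If anything, your indexing over all ordered pairs $(i,j)\neq(i',j')$ is slightly more careful than the paper's restriction to $i_1\leq i_2$ and $j_1\leq j_2$, which as written omits pairs where the $i$-order and $j$-order disagree.
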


\begin{proof} For all tuples $i_1, i_2, j_1, j_2$ such that
  $1 \leq i_1 \leq i_2 \leq m$ and $1 \leq j_1 \leq j_2 \leq n$ and
  $(i_1,j_1) \neq (i_2,j_2)$, define
  $\gamma_{i_1i_2j_1j_2} = \alpha_{i_1} - \alpha_{i_2}$ and
  $\delta_{i_1i_2j_1j_2} = \beta_{j_1} - \beta_{j_2}$.  Let $\Gamma$
  be the set of constraints
  $(\gamma_{i_1i_2j_1j_2},\delta_{i_1i_2j_1j_2})$.  If
  $\Gamma \models (s,t)$, then
  $s \gamma_{i_1i_2j_1j_2} + t \delta_{i_1i_2j_1j_2} \neq 0$ for all
  $i_1,i_2,j_1,j_2$, which implies
  $s\alpha_{i_1} + t \beta_{j_1} \neq s \alpha_{i_2} + t \beta_{j_2}$.
\end{proof}

Let
$p(\mb x, \mb y) = \sum_{\mb e, \mb f} a_{\mb e, \mb f}\mb x^{\mb e} \mb y^{\mb f}$ 
be a polynomial in two sets of variables, and let
$\mb v, \mb w$ be two sequences of positive real numbers.  We define:

\begin{align*}
  U \defeq &\setof{\mb v^{\mb e}}{\mb e \leq d} = \set{u_1 > u_2 >  \cdots > u_m (> 0)} \\
  Z \defeq &\setof{\mb w^{\mb e}}{\mb e \leq d} = \set{z_1 > z_2 > \cdots > z_s (> 0)}\\
  E_{\mb v,\mb w, i,j} \defeq & \setof{(\mb e,\mb f)}{\mb e\leq d, \mb f \leq d, \mb v^{\mb e} = u_i, \mb w^{\mb f} = z_j}
\end{align*}

As before, for any two sequences of positive real numbers
$\mb v, \mb w$, the sets $E_{\mb v,\mb w, i,j}$ partition the set of
exponents occurring in $p$, and we can write:

\begin{align*}
  p(\mb x, \mb y) = & 
\sum_{i,j}\left( \sum_{(\mb e, \mb f) \in E_{\mb v,\mb w, i,j}} a_{\mb e, \mb f}\mb x^{\mb e} \mb y^{\mb f}\right)
\defeq
\sum_{i=1,m; j=1,s} p_{\mb v, \mb w, i, j}(\mb x, \mb y)
\end{align*}

\begin{lemma} \label{lemma:polly:xy}
  Fix $p(\mb x, \mb y)$, and let $\mb v \in \R_+^n$ be a vector of
  values $>0$, such that $p(\mb v, \mb 1) \neq 0$.  Let
  $\mb w \in \R_+^n$ be any other vector of values $>0$.  Then there
  exists a finite set of inequality constraints $\Gamma$, such that,
  forall $s, t$, if $\Gamma \models (s,t)$ then there exists
  $k_0 \geq 0$ such that forall $k \geq k_0$,
  $p(\mb u^{kq}, \mb v^{kt}) \neq 0$.
\end{lemma}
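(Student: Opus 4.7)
The plan is to mimic the one-vector argument given just above, but now partitioning the exponent vectors of $p(\mb x,\mb y)$ by the pair of values $(\mb v^{\mb e},\mb w^{\mb f})$, and then using Lemma~\ref{lemma:trivial2} to force the resulting ``rates'' $u_i^s z_j^t$ to be pairwise distinct for generic $(s,t)$.

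Concretely, I would first write
\begin{align*}
p(\mb v^{ks},\mb w^{kt}) = \sum_{i=1}^{m}\sum_{j=1}^{s} (u_i^s z_j^t)^{k}\; p_{\mb v,\mb w,i,j}(\mb 1,\mb 1),
\end{align*}
exactly as in Eq.~\eqref{eq:balanced}, exploiting that every monomial inside $p_{\mb v,\mb w,i,j}$ evaluates to $u_i^{ks}z_j^{kt}$ at the substitution point. Next I would check that the hypothesis $p(\mb v,\mb 1)\neq 0$ forces at least one $p_{\mb v,\mb w,i,j}(\mb 1,\mb 1)\neq 0$: by the same evaluation trick, $p(\mb v,\mb 1)=\sum_{i,j}u_i\,p_{\mb v,\mb w,i,j}(\mb 1,\mb 1)$, so if every coefficient vanished so would $p(\mb v,\mb 1)$.

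To construct $\Gamma$, take the finite sequences $\alpha_i\defeq \log u_i$ ($i=1,\ldots,m$) and $\beta_j\defeq \log z_j$ ($j=1,\ldots,s$), both strictly decreasing, and apply Lemma~\ref{lemma:trivial2} to obtain a finite set $\Gamma$ of inequality constraints such that whenever $\Gamma\models(s,t)$ the $ms$ real numbers $s\alpha_i+t\beta_j$ are pairwise distinct; equivalently, the positive reals $u_i^s z_j^t$ are all distinct. Fix such $(s,t)$ (which exist by Lemma~\ref{lemma:trivial1}), let $(i^*,j^*)$ index the pair that maximizes $u_i^s z_j^t$ among those with $p_{\mb v,\mb w,i,j}(\mb 1,\mb 1)\neq 0$, and factor out $(u_{i^*}^s z_{j^*}^t)^k$. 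The remaining sum has the form
\begin{align*}
p_{\mb v,\mb w,i^*,j^*}(\mb 1,\mb 1) + \sum_{(i,j)\neq(i^*,j^*)} \Bigl(\tfrac{u_i^s z_j^t}{u_{i^*}^s z_{j^*}^t}\Bigr)^{k} p_{\mb v,\mb w,i,j}(\mb 1,\mb 1),
\end{align*}
where every ratio is strictly less than $1$ (by distinctness and maximality), so the tail tends to $0$ and the leading term survives for all $k\geq k_0$ with $k_0$ sufficiently large.

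The only delicate point — and the place where the lemma could fail without the extra constraint set $\Gamma$ — is the ``distinct rates'' step: without pairwise distinct $u_i^s z_j^t$, cancellations among several dominant terms could make $p(\mb v^{ks},\mb w^{kt})$ vanish for infinitely many $k$. Lemma~\ref{lemma:trivial2} is tailored precisely to rule this out, so once $\Gamma$ is chosen as above the dominant-term argument from the one-variable proof goes through verbatim. Everything else (the partition of exponents, the evaluation identity, the reduction of ``imbalanced'' to ``non-vanishing $p(\mb v,\mb 1)$'') is routine bookkeeping.
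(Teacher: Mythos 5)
Your proposal is correct and follows essentially the same route as the paper's own proof: the same partition of exponent pairs by the values $(\mb v^{\mb e},\mb w^{\mb f})$, the same use of Lemma~\ref{lemma:trivial2} on the sequences $\log u_i$, $\log z_j$ to make the rates $u_i^s z_j^t$ pairwise distinct, the same observation that $p(\mb v,\mb 1)\neq 0$ forces some $p_{\mb v,\mb w,i,j}(\mb 1,\mb 1)\neq 0$, and the same dominant-term limit argument. The only differences are cosmetic (notation for the exponent pair and the harmless remark about ratios attached to zero coefficients).
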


\begin{proof} Let
  $U, Z, E_{\mb v,\mb w, i,j}, p_{\mb v, \mb w, i, j}(\mb x, \mb y)$
  be defined as above.  We start by noticing that, for any numbers
  $a, b \geq 0$, we have
  $p_{\mb v, \mb w, i, j}(\mb v^a, \mb w^b) = u_i^az_j^bp_{\mb v, \mb w, i, j}(\mb 1, \mb 1)$.
  For any three numbers $q,t,k \geq 0$, we have:
  \begin{align*}
    p(\mb v^{kq}, \mb w^{kt}) = & \sum_{i=1,m; j=1,s} p_{\mb v, \mb w, i, j}(\mb v^{kq}, \mb w^{kt})\\
= & \sum_{i=1,m; j=1,s} u_i^{kq}z_j^{kt} p_{\mb v, \mb w, i, j}(\mb 1,  \mb 1)\\
= & \sum_{i=1,m; j=1,s} r_{ij}^k p_{\mb v, \mb w, i, j}(\mb 1,  \mb 1)
  \end{align*}
  where $r_{ij} \defeq u_i^qz_j^t$.  We notice that there exists $i,j$
  such that $p_{\mb v, \mb w, i, j}(\mb 1, \mb 1)\neq 0$.  Indeed, if
  we choose $k=q=1, t=0$, then the quantity above becomes
  $p(\mb v^{kq}, \mb w^{kt}) = p(\mb v, \mb 1)$, which, by assumption
  of the lemma is $\neq 0$, proving that at least one quantity
  $p_{\mb v, \mb w, i, j}(\mb 1, \mb 1)\neq 0$.

  We will define a set of constraints $\Gamma$ such that
  $\Gamma \models (q,t)$ implies that all values $r_{ij} = u_i^qz_j^t$
  are distinct or, equivalently, the quantities
  $q \log u_i + t \log z_j$ are distinct.  To obtain such a $\Gamma$,
  we apply Lemma~\ref{lemma:trivial2} to the sequences $\log u_i$ and
  $\log z_j$ respectively.  Considering only those values $r_{ij}$ for
  which $p_{\mb v, \mb w, i, j}(\mb 1, \mb 1)\neq 0$, let $r_{i_0j_0}$
  be the largest number.  Then we have:
  \begin{align*}
     p(\mb v^{kq}, \mb w^{kt}) = & \sum_{i=1,m; j=1,s} r_{ij}^k p_{\mb v, \mb w, i, j}(\mb 1,  \mb 1)\\
 = & r_{i_0j_0} \left(p_{\mb v, \mb w, i_0, j_0}(\mb 1,  \mb 1)+
\sum_{(i,j)\neq (i_0,j_0)} \left(\frac{r_{ij}}{r_{i_0j_0}}\right)^k p_{\mb v, \mb w, i, j}(\mb 1,  \mb 1)
\right)
  \end{align*}
  Since
  $\lim_{k \rightarrow \infty}
  \left(\frac{r_{ij}}{r_{i_0j_0}}\right)^k =0$ it follows that, for
  $k$ large enough, $p(\mb v^{kq}, \mb w^{kt})\neq 0$, as required.
\end{proof}

Finally, we can prove Theorem~\ref{th:products}.

\begin{proof}
  (Of Theorem~\ref{th:products}) We proceed by induction on $m$.  When
  $m=1$ then the theorem holds trivially.  Assume $m \geq 2$, and
  denote
  $p(\mb x) \defeq p_1(\mb x) \cdot p_2(\mb x) \cdots p_{m-1}(\mb x)$.
  By induction hypothesis, there exists
  $\mb v = \mb{v}_1^{k_1}\cdots\mb{v}_{m-1}^{k_{m-1}}$ such that
  $p(\mb v)\neq 0$, and there exists $\mb w$ such that
  $p_m(\mb w) \neq 0$.

Define the following polynomials $f(\mb x, \mb y)$ and $g(\mb x, \mb y)$:

\begin{align*}
  f(x_1, \ldots, x_n, y_1, \ldots, y_n) \defeq & p(x_1y_1, \ldots, x_ny_n)\\
  g(x_1, \ldots, x_n, y_1, \ldots, y_n) \defeq & p_m(x_1y_1, \ldots, x_ny_n)
\end{align*}

We apply Lemma~\ref{lemma:polly:xy} to the polynomial $f$ and the
sequences $\mb v, \mb w$: the assumption $f(\mb v,\mb 1) \neq 0$ holds
because $f(\mb v, \mb 1)= p(\mb v) \neq 0$.  Therefore, we obtain a
finite set of constraints $\Gamma_1$ s.t. for all $s,t$, if $\Gamma_1 \models
(s,t)$, then $f(\mb v^{ks},\mb w^{kt})\neq 0$ for all $k$ ``large
enough''.

Similarly, we apply Lemma~\ref{lemma:polly:xy} to the polynomial $g$
and the same sequences $\mb v, \mb w$.  The condition in the lemma
holds, because $g(\mb 1, \mb w)=p_m(\mb w) \neq 0$.  Therefore, there
exists a finite set of constraints $\Gamma_2$ s.t. for all $s,t$, if
$\Gamma_2 \models (s,t)$ then $g(\mb v^{ks},\mb w^{kt}) \neq 0$ for
all $k$ ``large enough''.  

Let $(s,t)$ be any pair that satisfies both $\Gamma_1$ and $\Gamma_2$:
such a pair exists by Lemma~\ref{lemma:trivial1}, because
$\Gamma_1 \cup \Gamma_2$ is a finite set.  Therefore, if $k$ is large
enough, then, denoting $\mb u \defeq \mb{v}^{ks} \mb{w}^{kt}$, we
notice that this has the required form of the theorem, i.e.
$\mb u = \prod_i \mb{v}_i^{k_i}$ for appropriate exponents $k_i$, and
we prove that it satisfies the conditions of the theorem.  Indeed, on
one hand
$p(\mb{u}) = p(\mb{v}^{ks}\mb{w}^{kt})=f(\mb{v}^{ks},\mb{w}^{kt})\neq
0$ which implies $p_i(\mb{u}) \neq 0$ for all $i=1,m-1$, and on the
other hand
$p_m(\mb{u})=p_m(\mb{v}^{ks}\mb{w}^{kt}) = g(\mb{v}^{ks},\mb{w}^{kt})
\neq 0$, proving the theorem.
\end{proof}

\end{document}